\documentclass[10pt,reqno]{amsart}

\usepackage[margin=1in]{geometry}
\usepackage{amsmath}
\usepackage{amsthm, amsfonts,amssymb,euscript}
\usepackage{mathtools}
\usepackage{enumitem}
\usepackage{bm}
\usepackage{amssymb}
\usepackage{graphicx}
\usepackage{caption}
\usepackage{subcaption}
\usepackage{amsfonts}
\usepackage{float}
\usepackage{color}
\usepackage{slashed}
\usepackage[affil-it]{authblk}
\usepackage{mathrsfs}
\usepackage{upgreek}
\usepackage{pifont}
\usepackage{bbm}
\usepackage[colorlinks=true]{hyperref}

\usepackage[usenames,dvipsnames,svgnames,table]{xcolor}
\hypersetup{linkcolor=BrickRed, urlcolor=green, citecolor=blue, linktoc=page}

\numberwithin{equation}{section}

\newtheorem*{proposition*}{Proposition}
\newtheorem*{theorem*}{Theorem}
\newtheorem*{conjecture*}{Conjecture}
\newtheorem*{claim*}{Claim}
\newtheorem*{lemma*}{Lemma}
\newtheorem*{corollary*}{Corollary}
\newtheorem{theorem}{Theorem}[section]
\newtheorem{proposition}[theorem]{Proposition}

\newtheorem{lemma}[theorem]{Lemma}
\newtheorem{corollary}[theorem]{Corollary}

\newtheorem*{definition*}{Definition}
\newtheorem{definition}{Definition}[section]
\newtheorem*{assumption*}{Assumption}

\newtheorem*{remark*}{Remark}
\newtheorem{remark}{Remark}[section]
\newtheorem*{condition*}{Condition}
\newtheorem{condition}{Condition}

\newcommand{\la}{\langle}
\newcommand{\ra}{\rangle}
\newcommand{\R}{\mathbb{R}}
\newcommand{\s}{\mathbb{S}}
\newcommand{\C}{\mathbb{C}}

\newcommand{\Z}{\mathbb{Z}}
\newcommand{\N}{\mathbb{N}}

\DeclareMathOperator{\supp}{\textnormal{supp}}
\newcommand{\snabla}{\slashed{\nabla}}

\newcommand{\Lbar}{\underline{L}}

\newcommand{\tomega}{\widetilde{\omega}}
\newcommand{\homega}{\widehat{\omega}}

\newcommand{\h}{\mathbbm{h}}
\newcommand{\re}{\textnormal{Re}\,}
\newcommand{\im}{\textnormal{Im}\,}
\newcommand{\tV}{\widetilde{V}}
\newcommand{\sign}{\,\textnormal{sign}\,}
\newcommand{\q}{q}

\usepackage[colorinlistoftodos,prependcaption,textsize=tiny]{todonotes}

\DeclareFontFamily{U}{mathx}{}
\DeclareFontShape{U}{mathx}{m}{n}{<-> mathx10}{}
\DeclareSymbolFont{mathx}{U}{mathx}{m}{n}
\DeclareMathAccent{\widecheck}{0}{mathx}{"71}

\allowdisplaybreaks

\usepackage[foot]{amsaddr}

\begin{document}

\author{Dejan Gajic$^*$$^1$}
\address{$^{1}$\small Institut f\"ur Theoretische Physik, Universit\"at Leipzig, Br\"uderstrasse 16, 04103 Leipzig, Germany }
\email{$^*$dejan.gajic@uni-leipzig.de}
\title[Charged scalar fields on Reissner--Nordstr\"om I: integrated energy estimates]{Charged scalar fields on Reissner--Nordstr\"om spacetimes I: integrated energy estimates}

\maketitle

\begin{abstract} This is the first part of a series of papers deriving the precise, late-time behaviour and (in)stability properties of charged scalar fields on near-extremal Reissner--Nordstr\"om spacetimes via energy estimates. In this paper, we establish global, weighted integrated energy decay and energy boundedness estimates for solutions to the charged scalar field equation on (near-)extremal Reissner--Nordstr\"om(--de Sitter) spacetimes. These estimates extend to Reissner--Nordstr\"om spacetimes away from extremality under the assumption of mode stability on the real axis. 

Together with the companion paper \cite{gaj26b}, this paper forms the first global  quantitative analysis of the charged scalar field equation on asymptotically flat black hole spacetimes, without a smallness assumption on the scalar field charge. Due to a coupling of the degeneration of the red-shift effect with the presence of superradiance at the linearized level, charged scalar fields on Reissner--Nordstr\"om spacetimes also probe some of the main difficulties encountered when studying the (neutral) wave equation on extremal Kerr spacetimes.\end{abstract}

\tableofcontents
\section{Introduction}
Integrated energy estimates are a core ingredient for deriving global decay results for nonlinear, geometric wave equations. Schematically, they take the following form:
\begin{equation}
\label{eq:introschemied}
	\int_0^{\infty}\int_{\Sigma_{\tau}}w_{1} |\partial \phi|^2+w_2 |\phi|^2\,d\mu_{\Sigma_{\tau}} d\tau\leq  \int_{\Sigma_{0}}\tilde{w}_{1}|\partial \phi|^2+\tilde{w}_2|\phi|^2\,d\mu_{\Sigma_0}+\int_0^{\infty}\int_{\Sigma_{\tau}}(\hat{w}_{1} |\partial \phi|+\hat{w}_2|\phi|)\cdot |\mathcal{L} \phi|\,\,d\mu_{\Sigma_{\tau}} d\tau,
\end{equation}
with $\mathcal{L}$ a linearized wave operator, $\{\Sigma_{\tau}\}_{\tau\in[0,\infty)}$ a foliation by spacelike hypersurfaces, $w_i,\tilde{w}_i,\hat{w}_i:\Sigma_{\tau}\to \R$ weight functions with possible degeneracies.

Estimates of the above type with $\mathcal{L}=\square$, the standard wave operator, originate in the work of Morawetz \cite{mor2}, where they were used to study nonlinear wave equations on the Minkowski spacetime. 

In the context of asymptotically flat black hole spacetimes, the nature and validity of \eqref{eq:introschemied} is intimately related to the following obstructions to decay that each have an analogue in the dynamics of null geodesics:
\begin{enumerate}[label=(\Alph*)]
	\item degeneration of energy control near trapped null geodesics,
	\item superradiance,
	\item degeneration of red-shift near the event horizon,
	\item degeneration of red-shift near null infinity.\footnote{This may be interpreted as the degeneration of the red-shift effect along the future cosmological horizon $\mathcal{C}^+$ as the cosmological constant $\Lambda$ goes to $0$, with future null infinity $\mathcal{I}^+$ playing the role of $\mathcal{C}^+$ in the limit. The existence of hierarchies of $r^p$-weighted energy estimates in the far-away region, see \cite{newmethod}, may be thought of as a ``degenerate red-shift effect''. }
\end{enumerate}
The above mechanisms all act as obstructions for proving decay, and they may be (strongly) coupled. In particular, if there exist trapped null geodesics that remain trapped under generic perturbations in phase space (stably trapped null geodesics), see \cite{molog, bur98}, or if there exists superradiance without an event horizon, see \cite{mo18}, then an estimate of the form \eqref{eq:introschemied} fails to hold.

Extremal black holes feature a coupling of (a subset) of (A)--(D), which is delicately balanced so that the validity of \eqref{eq:introschemied} cannot immediately be ruled out, in contrast with the examples in the above paragraph, but it is significantly harder to establish compared to the sub-extremal black hole setting, where only (A), (B) and (D) are present and they are moreover decoupled. This is closely related to the critical nature of extremal black holes as members of families of sub-extremal stationary black holes. 

Not all extremal black holes feature a coupling between all the phenomena (A)--(D). For example, in the case of the wave equation on extremal Kerr black holes (A)--(C) are coupled, but (D) is decoupled. 

In the case of the (neutral) wave equation on extremal Reissner--Nordstr\"om, (B) is not present and (A), (C) and (D) are all decoupled.
 
In the present paper, we will investigate a setting which features all the difficulties (A)--(D) and in which (B)--(D) are moreover coupled.\footnote{While trapping of null geodesics is present and more complicated than in the extremal Reissner--Nordstr\"om case, it can be decoupled from the remaining difficulties.}

We consider the conformally covariant, inhomogeneous charged scalar field equation on Reissner--Nordstr\"om spacetime backgrounds:
\begin{align}
\label{eq:CSFintro}
	(g^{-1}_{M,Q})^{\mu \nu}\:(^AD)_{\mu}\:(^AD)_{\nu}\phi=G_A,
\end{align}
where $g_{M,Q}$ is the Reissner--Nordstr\"om metric and $^AD=\nabla-i\mathfrak{q}A\otimes(\cdot)$ is the electromagnetic gauge derivative, with $A$ a 1-form and $dA=F_Q=-Q^2r^{-2}dt\wedge dr$ the Reissner--Nordstr\"om Faraday tensor. 

While our main interest are the \emph{extremal} Reissner--Nordstr\"om ($|Q|=M$) and \emph{near}-extremal Reissner--Nordstr\"om spacetimes ($0<1-\frac{|Q|}{M}\ll 1$), we also establish conditional results for the remaining members of the Reissner--Nordstr\"om family. Furthermore, nearby members of the sub-extremal Reissner--Nordstr\"om--\emph{de Sitter} family also play a role in the analysis. 

The present paper is the first part of a series of papers concerning charged scalar fields on Reissner--Nordstr\"om black holes. A precise late-time analysis of the decay behaviour of solutions to \eqref{eq:CSFintro} is carried out in the companion paper \cite{gaj26b}, using the results of the present paper as a starting point.

The main motivation for considering \eqref{eq:CSFintro} is two-fold:
\begin{itemize}
\item 	With the restriction of spherical symmetry, equation \eqref{eq:CSFintro} is the linearization of the spherically symmetric Einstein--Maxwell--charged scalar field system of equations. A robust analysis of its linearization is the first step towards an analysis of 1) the nonlinear coupled Maxwell--charged scalar field equations, and then 2) the full Einstein--Maxwell--charged scalar field system. Already in spherical symmetry, we expect that the tools developed in the present paper and the companion paper \cite{gaj26b} will play an important role when studying the Einstein--Maxwell--charged scalar field system, in particular near extremality.
\item The difficulties encountered in \eqref{eq:CSFintro} resemble closely the difficulties encountered in the study of the wave equation on extremal Kerr. In future work, the methods introduced in the present paper will be used to sharpen the instability results in \cite{gaj23} and upgrade them to a full understanding of the late-time asymptotic behaviour of fixed azimuthal modes. This leaves open an understanding of the infinite sum of azimuthal modes, which is also expected to make use of the ideas developed in the present paper.
\end{itemize}

In the present paper, we prove energy boundedness and (global) integrated energy estimates in the following form:
\begin{theorem}[Energy boundedness and integrated energy estimates; informal version of Theorem \ref{thm:main}]
\label{thm:intromain}
Let $\epsilon>0$. Assume that $0\leq 1-\frac{|Q|}{M}\ll 1$ or $|\mathfrak{q}Q|\ll 1$. Write:
	\begin{equation*}
		G_A=(g^{-1}_{M,Q})^{\mu \nu}\:(^AD_{\mu})\:(^AD_{\nu})(r^{-1}\psi).
	\end{equation*}
Then there exists a uniform constant $C>0$ such that for all $\tau_1\leq \tau_2$:
	\begin{multline}
		\label{eq:introied}
		\sup_{\tau\in[\tau_1,\tau_2]}\int_{\Sigma_{\tau}} \mathcal{E}_{1-\epsilon}[\psi]\, r^{-2}d\mu_{\Sigma_{\tau}}+\int_{\tau_1}^{\tau_2}\int_{\Sigma_{\tau}} \left[\upzeta (\rho_++\kappa_+)r^{-1}\mathcal{E}_{1-\epsilon}[\psi]+(1-\upzeta)r^{-2}|\psi|^2\right]\,r^{-2}d\mu_{\Sigma_{\tau}}d\tau\\
		\leq C\int_{\Sigma_{\tau_1}} \mathcal{E}_{1+\epsilon}[\psi]\, r^{-2}d\mu_{\Sigma_{\tau}}\\
		+C\int_{\tau_1}^{\tau_2}\int_{\Sigma_{\tau}}\left[ (\Omega^{-1}r)^{1+\epsilon}\rho_+^{1-2\epsilon}r^{-1+2\epsilon}|r^2G_A|^2+(1-\upzeta)|^AD_{\tau}G_A|^2\right]r^{-2}\,d\mu_{\Sigma_{\tau}} \,d\tau,
		\end{multline}
		where:
		\begin{itemize}
			\item $\Sigma_{\tau}$ are horizon-intersecting, asymptotically hyperboloidal level sets of the time function $\tau$ with induced volume for $d\mu_{\tau}$, equipped with coordinates $(r,\theta,\varphi)$, where $r$ is the standard radius function,
			\item $\Omega^2=1-\frac{2M}{r}+\frac{Q^2}{r^2}$,
			\item $r_+$ is the event horizon radius and $\rho_+(r)=r_+^{-1}-r^{-1}$,
			\item $\kappa_+$ is the surface gravity of the event horizon,
			\item $\mathcal{E}_p[\psi]$ are weighted energy densities, which can be uniformly estimated follows:
			\begin{equation*}
				\mathcal{E}_{p}[\psi]\sim  (\Omega^{-1}r)^{p}\Omega^2|^AD_{r}\psi|^2+r^{-2}(|^AD_{\tau}\psi|^2+|^AD_{\s^2}\psi|^2+|\psi|^2),
			\end{equation*}
			\item $\upzeta$ is a cut-off function that is identically 1 outside of a small neighbourhood of $\{r=r_{\sharp}\}$, the photon sphere, and is identically zero inside an even smaller neighbourhood of $\{r=r_{\sharp}\}$.
		\end{itemize}
		The estimate holds also when $1-\frac{|Q|}{M}\gtrsim 1$ and $|\mathfrak{q}Q|\gtrsim 1$, under the assumption of \emph{mode stability on the real axis}.
\end{theorem}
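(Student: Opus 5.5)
The proof will rest on a mode decomposition combined with vector field multipliers. After fixing a gauge in which $A=-\frac{Q}{r}\,dt$ shifted by a constant (so that $A_t$ vanishes at $r=r_+$, which makes the horizon regular), I will use the $U(1)$ and $SO(3)$ symmetries to project $\psi$ onto spherical harmonics $\psi_{\ell m}$. Each mode then satisfies a $1+1$ equation of the form
\[
-\,{}^AD_t{}^AD_t\psi_\ell+\partial_{r_*}^2\psi_\ell-V_\ell\psi_\ell=r^3G_A,
\qquad V_\ell=\Omega^2\frac{\ell(\ell+1)}{r^2}+\Omega^2\frac{2M}{r^3}-\Omega^2\frac{2Q^2}{r^4}.
\]
Since all the weights in \eqref{eq:introied} are spherically symmetric, it suffices to prove the estimate for each $\ell$ with constants independent of $\ell$ and then sum.

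The first step is the conserved charged energy associated with $T=\partial_t$. In the gauge regular at the horizon, its flux is not coercive: the term $\mathfrak{q}A_t$ produces superradiance, whose size is of order $|\mathfrak{q}Q|$ times $\rho_+$ near $r_+$. I will pass to the time-frequency picture by cutting off in $\tau$ and taking a Fourier transform, so that the problem reduces to ODE estimates for
\[
u''+\big((\omega-\mathfrak{q}A_t)^2-V_\ell\big)u=H .
\]
I then split into frequency regimes.

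\emph{Large $\ell$ or $|\omega|$.} Here $\mathfrak{q}A_t$ is a lower-order perturbation. A Morawetz multiplier $f\partial_{r_*}$ with $f$ changing sign at the (perturbed) photon sphere $r_\sharp$, together with a Lagrangian correction, gives the integrated estimate. The estimate degenerates only at $r_\sharp$, which is why the cutoff $\upzeta$ appears and why the commuted term $|{}^AD_\tau G_A|^2$ is needed near $r_\sharp$.

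\emph{Bounded frequencies.} I will build a current of the form $y(r)\,\mathrm{Re}(\overline{u'}u)+h|u|^2$ by integrating from both ends. The $r^p$ hierarchy near infinity, with $p=1\pm\epsilon$, and its horizon analogue, which uses the weight $(\Omega^{-1}r)^p\Omega^2$, produce the $\mathcal{E}_{1\pm\epsilon}$ energies. The loss of $\epsilon$ between the left and right sides reflects the degenerate red-shift: when $\kappa_+\to0$ one only has a degenerate hierarchy and not exponential gain, so the factor $(\rho_++\kappa_+)$ appears in the bulk term.

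In the subextremal large-charge case, mode stability on the real axis, combined with compactness of the bounded-frequency region, upgrades to a quantitative resolvent bound. This is the conditional statement.

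The main obstacle is the near-extremal, near-superradiant-threshold regime where $\omega\approx\mathfrak{q}A_t(r_+)=0$. There the red-shift, the superradiance and the Aretakis-type degeneracy are all coupled. My first attempt will use the scaling $x=\rho_+/\kappa_+$-type variables near the horizon to compare the equation with the near-horizon extremal model, whose potential behaves like $(\mathfrak{q}Q)^2-\ell(\ell+1)$ in $AdS_2$ terms. I would then show that the effective potential has a definite sign after adding a carefully chosen $h$ with $\rho_+$-weights; this is where either $1-|Q|/M\ll1$ (via continuity from extremal RN, where a direct computation is possible) or $|\mathfrak{q}Q|\ll1$ enters.

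Uniformity as $\kappa_+\to0$, including in the de Sitter perturbations used to regularize near $\mathcal{I}^+$, requires the weights to be chosen compatibly in $\kappa_+$. I expect this to be the most delicate part. Finally, I will sum over $\ell$ and undo the time cutoffs using energy boundedness, which itself follows from the $T$-energy estimate plus the integrated estimate controlling the nonpositive superradiant flux.
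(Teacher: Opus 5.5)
There are genuine gaps. The central one is the bounded-frequency regime, which is exactly where the large-$|\mathfrak{q}Q|$ difficulty lives.

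\textbf{Bounded frequencies.} A current $y\,\re(\overline{u'}u)+h|u|^2$ made coercive by a well-chosen $h$ cannot work there, for two reasons.

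First, consider bounded superradiant frequencies away from the thresholds ($\homega\tomega<0$ with $|\homega|,|\tomega|\gtrsim 1$). There is no small parameter, and the $T$- and $K$-energy identities carry boundary terms $\homega\tomega|u|^2(\mp\infty)$ of the wrong sign. What is needed is quantitative mode stability on the real axis, $|\mathfrak{W}|^{-1}\leq K$. Near-extremality is what lets the paper prove this unconditionally. It transplants Teixeira da Costa's integral transformation of the doubly-confluent Heun equation to extremal RN (Theorem~\ref{thm:modestab}). It then perturbs to small $\kappa_\pm$ by comparing homogeneous solutions (Lemma~\ref{lm:estubyexthom}, Proposition~\ref{prop:wronskestFsim}). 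You invoke mode stability only as the hypothesis of the far-from-extremal statement and offer nothing in its place near extremality.

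Second, consider the thresholds $\tomega\to0$ and $\homega\to0$. The latter is present for every $\Lambda=0$ black hole, so your ``compactness'' argument fails there as well. Near them one has $|\mathfrak{W}|^{-1}\sim(|\tomega|+\kappa_+)^{-\frac12+\frac12\re\beta_\ell}(|\homega|+\kappa_c)^{-\frac12+\frac12\re\beta_\ell}$ up to logarithms, with $\beta_\ell=\sqrt{(2\ell+1)^2-4(\mathfrak{q}Q)^2}$; this blows up whenever $\re\beta_\ell<1$.
\begin{itemize}
\item When $(\mathfrak{q}Q)^2>(\ell+\frac12)^2$, the near-horizon model $U''=(\ell(\ell+1)-(\mathfrak{q}Q)^2)s_+^{-2}U$ lies below the sharp Hardy threshold $-\frac14$. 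Its solutions $s_+^{\frac12\pm\frac12\beta_\ell}$ oscillate infinitely often and the associated quadratic form is indefinite, so no choice of $h$ or $y$ yields a definite sign.
\item Even for real $\beta_\ell$, absorbing the charge cross term by Hardy only closes for $|\mathfrak{q}Q|<\frac14+\frac12\ell$ (Proposition~\ref{prop:rpestphysspace}).
\end{itemize}
The paper instead uses Green's formula. It approximates $u_+$ and $u_\infty$ by exponentials, hypergeometric and Whittaker functions on overlapping regions and matches them (\S\ref{sec:matchasymptomega}) to obtain two-sided Wronskian bounds. For imaginary $\beta_\ell$ the lower bound needs the Gamma-function/cosh-ratio argument giving $|1-g|\geq b$. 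These bounds dictate the range $\max\{0,1-\re\beta_\ell\}<p<\min\{2,1+\re\beta_\ell\}+\epsilon$, and hence the $\epsilon$-loss. That loss is not caused by the degenerate red-shift as such: for $\mathfrak{q}=0$ there is no loss (cf.\ Theorem~\ref{thm:smallqboundfreq}).

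\textbf{Future integrability.} Your Fourier transform presupposes that the time-cut-off solution lies in $L^2_t$. You propose to undo the cutoffs with an energy bound derived from the integrated estimate itself; since superradiance rules out an a priori coercive energy, this is circular. The paper breaks the circle on Reissner--Nordstr\"om--de Sitter with $\kappa_+,\kappa_c>0$ by a continuity argument in $\mathfrak{q}$ (\S\ref{sec:verifyfutureint}):
\begin{itemize}
\item future integrability holds for small $\mathfrak{q}$;
\item openness follows by switching the charge on with a time cutoff, treating the difference as an inhomogeneity, and absorbing it with red-shift estimates whose constants depend on $\kappa_\pm$ (Proposition~\ref{prop:contargopen});
\item closedness follows by a Cauchy-sequence argument.
\end{itemize}
Only then are $\kappa_+,\kappa_c$ sent to $0$, using that the frequency-space constants are uniform. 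This is the actual role of the de Sitter regularization you mention.

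\textbf{High frequencies.} At large $\ell$ the charge is not a lower-order perturbation near a degenerate horizon. There $\ell(\ell+1)\Omega^2r^{-2}\sim\ell^2\rho_+^2$, while the cross term is $2\mathfrak{q}Q\,\tomega\rho_+$. So for one sign of $\mathfrak{q}Q\,\tomega$ the potential acquires a local minimum at $\rho_+\sim|\mathfrak{q}Q\,\tomega|\ell^{-2}$ (and similarly near $\rho_c=0$), where the Morawetz bulk $-fV'$ has the wrong sign. The paper compensates by coupling to frequency-space $(\Omega^{-1}r)^p$-currents for $v$ and $w$ (Propositions~\ref{prop:rweightestfreq} and \ref{prop:rmin1pweightestfreq}). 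The superradiant frequencies all lie in this $\ell$-dominated regime. The paper treats them with localized $T$/$K$-currents whose cutoff errors are absorbed because superradiant frequencies are not trapped, $V(r_{\rm max})-\homega^2\gtrsim\ell(\ell+1)$ (Proposition~\ref{prop:srhighfreqest}).
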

Theorem \ref{thm:intromain} may be thought of as a \emph{weak} energy decay statement, which forms a crucial ingredient in the proof of \emph{strong} energy decay statements, precise late-time asymptotics and instabilities in the companion paper \cite{gaj26b}; see \cite{gaj26b}[Theorems 1.1 and 1.2]. A precise version of Theorem \ref{thm:intromain} can be found in the statement of Theorem \ref{thm:main}.

We now provide some remarks.
\begin{remark}[Coupling between Morawetz and $(\Omega^{-1}r)^{p}$-weighted energy estimates]
	The integrated estimate \eqref{eq:introied} should be interpreted as a combination of a local (in space) integrated energy estimate, or Morawetz estimate, with additional $(\Omega^{-1}r)^{p}$-weighted energy estimates near the horizons/null infinity, in the spirit of \cite{newmethod} and \cite{paper4}.
	
	Two related difficulties that are particular to the charged scalar field setting are: 1) the need for an energy flux $\mathcal{E}_p[\psi]$ on the right-hand side with $p>0$ (even in the small $|\mathfrak{\q}Q|$ setting), 2) for sufficiently large $|\mathfrak{q}Q|$, we can only establish $(\Omega^{-1}r)^p$-weighted energy estimates near the horizons that are uniform in $\kappa_+$ and $\kappa_c$ as $\kappa_+\downarrow 0$ and $\kappa_c\downarrow 0$ as part of a global analysis in frequency space. In this sense, the difficulties (C) and (D) introduced above become strongly coupled with (A) and (B) when $|\mathfrak{\q}Q|$ is sufficiently large.
\end{remark}

\begin{remark}[Loss of weights in energy boundedness]
	Note the loss of an $\epsilon$-power in the $\mathcal{E}_{p}[\psi]$-integrals along $\Sigma_{\tau_i}$ when comparing the two sides of \eqref{eq:introied}. This loss can be removed by restricting to $|\mathfrak{q}Q|<\frac{1}{4}$; see the precise statement of Theorem \ref{thm:main}, using the fact that (C) and (D) decouple from (A) and (B) in that case. Alternatively, one can fix an arbitrarily $|\mathfrak{q}|$, but restrict to $\psi$ supported on sufficiently large angular frequencies, $\psi_{\geq \ell}$, to remove the $\epsilon$-loss (and also increase the range of $|p|$).
	
	Note also that an analogue of \eqref{eq:introied} can be derived without the above $\epsilon$-loss, using purely physical-space-based methods, but \underline{including} spacetime integrals of zeroth order terms on the right-hand side (that are moreover supported away from the photon sphere); see Corollary \ref{cor:iedmoduleo0th}. This reflects the fact that the $\epsilon$-loss may be thought of as a \emph{low-frequency} phenomenon. 
	
	Integrated energy estimates (modulo lower-order terms) obtained purely from an application of the divergence theorem with appropriate energy currents in physical space are relevant when extending the analysis from linear wave equations to quasilinear wave equations, and are therefore relevant for the Einstein equations. We refer the reader to the philosophy introduced in \cite{dhrt22, dhrt24} for more details.
	\end{remark}

\begin{remark}[Small $|\mathfrak{q}|$]
	In the special case that $\mathfrak{q}=0$, \eqref{eq:introied} follows straightforwardly from by generalizing the methods valid in the $G_A=0$ case (see for example \cite{redshift} for the sub-extremal setting and \cite{aretakis1} for the extremal setting). In the case $|\mathfrak{q}Q|\ll 1$ we moreover prove an analogue of \eqref{eq:introied} without resorting to Fourier transforms in time; see Corollary \ref{cor:iedsmallq}. This can be interpreted as the statement that the coupling between (B), (C) and (D) is weak for very small $|\mathfrak{q}|$. \textbf{The bulk of the analysis in the present paper is required to treat the case where $|\mathfrak{q}Q|$ is \underline{not} small!}
\end{remark}

\begin{remark}[Mode stability]
Note that Theorem \ref{thm:intromain} only applies \underline{unconditionally} to Reissner--Nordstr\"om solutions with $0\leq 1-\frac{|Q|}{M}\ll 1$ (or alternatively, with $\frac{|Q|}{M}\leq 1$, but $|\mathfrak{q}Q|\ll 1$). Outside of this parameter range, the estimate becomes \underline{conditional} on the validity of \emph{mode stability on the real axis} (away from $\omega=0$). Qualitatively, this corresponds to the non-existence of purely oscillating, finite energy solutions, which remans an open problem on the full sub-extremal Reissner--Nordstr\"om family.\footnote{Note that, under the assumption of mode stability on the real axis, Theorem \ref{thm:intromain} implies mode stability in the upper-half complex plane, as it excludes the existence of exponentially growing modes.}

Indeed, currently available methods in the literature for establishing mode stability \cite{whiting,sr15,costa20, cate22} apply only in a restricted frequency range when $\frac{|Q|}{M}$ is strictly bounded away from extremality, see Proposition \ref{thm:modestabsubext}, but they fail to hold for all superradiant frequencies\footnote{For non-superradiant frequencies, mode stability follows straightforwardly; see for example Lemma \ref{lm:nonsuperradwronsk}.}. Fortuitously, the methods of \cite{costa20} do remain applicable in the \underline{extremal} Reissner--Nordstr\"om case, which, as we will show, implies also mode stability for sufficiently small $\kappa_+$ (and also for Reissner--Nordstr\"om--de Sitter with sufficiently small $M^2\Lambda>0$).\footnote{Even if oscillating or exponentially growing modes are present, one could in principle apply the methods in \cite{gajwar19a, gw24} to consider a finite-codimension subset of initial data that \emph{excludes} these modes from being excited in the time evolution.}

Let us also that in \cite{beha21}, it was shown that there exist exponentially growing modes for the charged Klein--Gordon equation on Reissner--Nordstr\"om--de Sitter with $\frac{2}{3}\Lambda$ replaced by $\mathfrak{m}^2(\Lambda)$, and $\mathfrak{m}^2(\Lambda)$ sufficiently small depending on $\Lambda$ (together with a corresponding largeness assumption on $|\mathfrak{q}Q|$).
\end{remark}

\begin{remark}[The role of $\Lambda$]
Solutions to the conformally covariant charged scalar field equation:
\begin{equation*}
		(g^{-1}_{M,Q, \Lambda})^{\mu \nu}\:(^AD)_{\mu}\:(^AD)_{\nu}\phi-\frac{2\Lambda}{3}\phi=G_A,
	\end{equation*}
with respect to Reissner--Nordstr\"om--de Sitter metrics $g_{M,Q,\Lambda}$ with small $M^2\Lambda$ play an important role for establishing \eqref{eq:introied} in the $\Lambda=0$ case. More precisely, the positive surface gravity  of the cosmological horizon $\kappa_c>0$ (as well as $\kappa_+>0$) is used to derive future-integrability in time of solutions to \eqref{eq:CSFintro} with an appropriately decaying $G_A$. Future-integrability is a property that is initially \emph{assumed} to be able to carry out an analysis in Fourier space, but needs to be justified in order to complete the argument and conclude \eqref{eq:introied}. This is first done via a continuity argument in the charge parameter $\mathfrak{q}$ in the $\kappa_+,\kappa_c>0$ setting, making use of the red-shift effect. One may compare this with \cite{part3} where future-integrability is instead justified in the context of neutral scalar fields on sub-extremal Kerr via a continuity argument in the Kerr rotation parameter $a$.

After justifying the future-integrability assumption in the setting where $\kappa_+>0$ and $\kappa_c>0$, we use the uniformity of the estimates in $\kappa_+,\kappa_c$ to be able to take the limit $\kappa_+\downarrow 0$ and $\kappa_c\downarrow 0$ and conclude \eqref{eq:introied}. As part of this argument, we show that an analogue of Theorem \ref{thm:intromain} holds for $0<M^2\Lambda$ suitably small (uniformly in $\Lambda$). The cosmological constant $\Lambda$ therefore has a regularizing effect. It plays a similar regularizing role in the construction of quasinormal modes in \cite{gajwar19a}. See also the uniform-in-$\Lambda$ analysis in \cite{fst26}.
\end{remark}

\begin{remark}[Electromagnetic gauge]
	Note that \eqref{eq:introied} is \emph{gauge-invariant}, in the sense that it is independent of the particular choice of electromagnetic gauge $A$, provided we multiply $G_A$ by an appropriate phase factor; see \S \ref{sec:CSF}. It is convenient to consider the (conformally) irregular 1-form
	\begin{equation*}
		A=\widetilde{A}=-\frac{Q}{r}dt,
	\end{equation*}
	in the Fourier analysis, while the (conformally) smooth 1-form
	\begin{equation*}
		A=\widehat{A}=-\frac{Q}{r}d\tau.
	\end{equation*}
	is more convenient to establish local physical-space-based energy estimates.
\end{remark}

\subsection{Previous results}
In this section, we give an overview of related previous work concerning integrated energy estimates for the wave equation on black holes spacetimes. We provide further references pertaining to precise, late-time (pointwise) decay estimates in our companion paper \cite{gaj26b}.
\subsubsection{Integrated energy estimates on sub-extremal black holes}
In the context of Schwarzschild spacetimes, establishing \emph{local} integrated energy estimates (in a spatially bounded region, supported away from the event horizon) requires an understanding of the obstruction of (unstable) trapped null geodesics at the photon sphere $\{r=3M\}$. Integrated local energy estimates for the wave equation on Schwarzschild were first obtained in \cite{redshift} without a decomposition into time frequencies $\omega$, requiring merely a decomposition into angular frequencies $\ell$. See also the related earlier work \cite{blu0}, as well as \cite{tataru1} for a refined understanding of the effects of trapped null geodesics. Note that the decomposition into angular frequencies in proofs of local integrated energy estimates was removed entirely in \cite{hmv24}.

An important ingredient towards turning the \emph{local} integrated energy estimates into \emph{global} integrated energy estimates that extend to the horizon and future null infinity are \emph{red-shift estimates} (near the horizon), introduced in \cite{redshift} and \emph{$r^p$-weighted energy estimates} (in a far-away region), introduced in \cite{newmethod}. As will be clarified below,  $r^p$-weighted energy estimates may be viewed as a degenerate analogue of red-shift estimates.

In going from Schwarzschild to the more general Kerr setting, one faces several important additional difficulties. First of all, due to the presence of an ergoregion, where the Killing vector fields generating time translation symmetry $\partial_t$ fail to be timelike, one cannot prove an energy boundedness statement independently from an integrated energy estimate, in contrast with the Schwarzschild setting. When decomposing into time frequencies $\omega$, this manifests itself as \emph{superradiance} \cite{zel71,zel72}, see \S\ref{sec:introstephighfreq} for a discussion of superradiance in the charged setting. In addition, the existence of trapped null geodesics arises creates an obstruction at a range of $r$-values when varying the time frequency, rather than a single $r$-value, as in Schwarzschild. 

Integrated energy estimates in the full sub-extremal Kerr range were first obtained in \cite{part3} by carrying out a full decomposition into time frequencies and oblate spheroidal harmonic modes (appropriate generalizations of spherical harmonic modes) and treating different time- and angular-frequency regimes separately. An important step in the proof is a proof of \emph{future-integrability} of solutions, which is necessary to be able to apply the estimates at the level of the Fourier transform in time. See also the estimates in \cite{dhrt24}, which only rely on a decomposition into time frequencies and azimuthal modes, at the expense of introducing lower-order terms on the right-hand side. We also point the reader to the comprehensive analysis on integrated energy estimates in setting of the Teukolsky equation in \cite{shlcosta20, shlcosta23} and earlier works in the very slowly rotating case ($|a|\ll M$) in \cite{dr7,lecturesMD,enadio, tataru2,blukerr}.

In the present paper, we will apply microlocal energy currents as in \cite{dr7,part3} to deal with $(\omega,\ell)$ outside of a compact set; see Step 1 of \S \ref{sec:sketchpf}. 

As mentioned above, energy \emph{boundedness} estimates and \emph{integrated} local energy estimates are coupled in the proof of \cite{part3}, due to the presence of superradiance in Kerr. In \cite{st25} it was, however, shown that energy boundedness estimates can be obtained independently from integrated local energy estimates, when restricting to solutions supported on high, trapped (but non-superradiant) frequencies.

Finally, we note that in the case of (massive) wave equations on sub-extremal Reissner--Nordstr\"om--de Sitter spacetimes, integrated energy estimates were obtained in \cite{gon24}. 

\subsubsection{Integrated energy estimates on extremal black holes}
The first integrated energy estimates for the wave equation on extremal black hole spacetimes were obtained on extremal Reissner--Nordstr\"om in \cite{aretakis1}, employing similar vector fields multipliers to \cite{redshift}, exploiting the presence of a photon sphere at $\{r=2M\}$ and the existence of a degenerate energy boundedness estimate that is independent of local integrated energy estimates. In \cite{aretakis1}, a preliminary analogue of the $r^p$-weighted energy estimates was introduced near the event horizon, which was later fully developed in \cite{paper4}.

Integrated energy estimates for the wave equation on extremal Kerr were established first for axisymmetric solutions ($m=0$) in \cite{aretakis3}, see also \cite{giwa24}, and for solutions supported on a finite number of azimuthal modes ($m\neq 0$) in \cite{gaj23}, the latter assuming future-integrability. As the results in \cite{gaj23} are closely connected to results of the present paper, we provide a more detailed comparison in \S \ref{sec:compwavekerr}.

\subsubsection{Integrated estimates for charged scalar fields}
When viewing the Minkowski spacetime as a solution to the Einstein--Maxwell equations, the corresponding Faraday tensor $F$ vanishes. The linearization of the Maxwell--charged scalar field equations around $(\phi,F)=(0,0)$ results in the standard wave equation on Minkowski.\footnote{One can nevertheless study the charged scalar field equation on the Minkowski background around a non-zero $F$, namely $F=-Qr^{-2}dt\wedge dr$ corresponding to the Reissner--Nordstr\"om solution, see also \cite{gvdm24}. This equation plays a fundamental role in the precise late-time asymptotics derived in \cite{gaj26b}.} The nonlinear Maxwell--charged scalar field system, however, does feature a non-zero $F$, which has as significant long-range effect on the analysis. For this system, there is a large body of literature ranging from global well-posedness results for large initial data, see \cite{em82a,em82b, km94, yan18}, to decay estimates for small initial data \cite{ls06,yan16}. In particular, \cite{yan16, yan18} feature integrated local energy decay estimates and $r^p$-weighted energy estimates.

A key additional difficulty that arises when considering charged scalar fields on black hole backgrounds is the non-decay of $F$ in time, due to the fact that $F$ is non-decaying in the absence of a scalar field. 

This difficulty affects both integrated local energy decay estimates and $r^p$-weighted energy estimates.
It was first addressed in the context of the Maxwell--charged scalar field system in spherical symmetry with small scalar field charge parameters $\mathfrak{q}$ on sub-extremal Reissner--Nordstr\"om backgrounds in \cite{vdm22}. In particular, integrated local energy estimates and $r^p$-weighted energy estimates were established, exploiting both spherical symmetry and smallness of $\mathfrak{q}$.

Let us note that in the context of the Maxwell--charged scalar fields (on both Minkowski and black hole backgrounds), there exists a conserved, non-negative definite (degenerate) energy for the pair $(\phi,F)$. In contrast, for the linearized equation \eqref{eq:CSFintro} there exists no such non-negative definite conserved quantity, allowing in principle for solutions that are growing in time.\footnote{Note that the existence of solutions that grow in time for the linearized system can in principle be compatible with a bounded energy for the nonlinear system, even for small initial data, since mere boundedness of the energy in the nonlinear system does not need to guarantee (sufficiently rapid) convergence to the zero solution, and therefore decay of nonlinear terms.}

Finally, we note that the wave equation with an asymptotically inverse-square potential serves as an important toy model that captures some of the difficulties that appear in the case of large scalar field charge coupling constant $\mathfrak{q}$. In this case, integrated energy estimates (local and $r^p$-weighted) were obtained in \cite{gaj22a}.

\subsection{Sketch of the proof}
\label{sec:sketchpf}
The proof of Theorem \ref{thm:intromain} can be divided into four main steps:
\begin{enumerate}[label=\textbf{Step \arabic*}:]
	\item Derive weighted $L^2$-estimates for the Fourier transform of $\phi$ in $t$ and on $\s^2$, restricted to high frequencies, assuming a priori that $\phi$ is \emph{future integrable} in time to make sense of Fourier-transformed expressions.
	\item Derive weighted $L^2$-estimates for the Fourier transform, restricted to bounded frequencies.
	\item Apply Plancherel's theorem to obtain weighted $L^2$-estimates for $\phi$ (under the assumption of future integrability).
	\item Verify the assumption of future-integrability for $\kappa_+>0$ and $\kappa_c>0$ and then take the limit $\kappa_c, \kappa_+\downarrow 0$.
\end{enumerate}

\subsubsection{Fourier transform}
\label{sec:introsketchfourier}
Let $(t,r,\theta,\varphi)$ denote the standard coordinates on Reissner--Nordstr\"om(--de Sitter) with $\Lambda\geq 0$. Consider the electromagnetic gauge:
\begin{equation*}
	A=-\frac{Q}{r}dt.
\end{equation*}
Let $\psi=r\phi$. In order to Fourier transform in $t$, we first extend $\psi$ as a function on the full manifold and then restrict:
\begin{equation*}
	\uppsi(t,r,\theta,\varphi):=\xi(\tau(t,r))\cdot  \psi(t,r,\theta,\varphi),
\end{equation*}
where $\xi: \R_{\tau}\to [0,1]$ is a smooth cut-off function satisfying $\xi(\tau)=0$ for $\tau\leq 0$ and $\xi(\tau)=1$ for $\tau\geq 1$.

Note that then
	\begin{equation*}
		(g^{-1}_{M,Q,\Lambda})^{\mu \nu}\:(^AD)_{\mu}\:(^AD)_{\nu}(r^{-1}\uppsi)-\frac{2\Lambda}{3}(r^{-1}\uppsi)=\xi \cdot G_A+F_{\xi},
	\end{equation*}
where $F_{\xi}$ is compactly supported in $0\leq \tau\leq 1$.

Assuming that $\uppsi(t,\cdot)\in L^2(\R_t)$, the $t$-Fourier transform $\mathfrak{F}(\uppsi)(\omega,\cdot)$ is well-defined. We perform a Fourier transform on $\s^2_{\theta,\varphi}$ by decomposing into spherical harmonics:
\begin{equation*}
	u_{\ell m}(r_*;\omega):=\int_{\s^2}\mathfrak{F}(\uppsi)(\omega,r(r_*),\theta,\varphi)\overline{Y}_{\ell m}(\theta,\varphi)\,\sin\theta d\theta d\varphi,
\end{equation*} 
where $r_*$ is a tortoise coordinate, satisfying $\frac{dr_*}{dr}=\Omega^{-2}$, and $Y_{\ell m}$ denote spherical harmonics.\footnote{Note that since Reissner--Nordstr\"om spacetimes are spherically symmetric, the Fourier transforms in $t$ and $\s^2$ commute.}

Then $u_{\ell m}$ satisfies a Schr\"odinger equation:
\begin{equation}
\label{eq:introschrod}
	u_{\ell m}''+\left(\omega^2-V_{\ell  \omega}\right)u_{\ell m}=H_{\ell m},
\end{equation}
with $H_{\ell m}(r_*,\omega)$ the Fourier transform of $r\Omega^2(F_{\xi}+\xi G_A)_{\ell m}(t,r_*)$ and $V_{\ell\omega}$ functions of $r_*$.

It is convenient to consider the following dimensionless charge parameter:
\begin{equation*}
	\q:=\mathfrak{q}Q.
\end{equation*}

It will also be more convenient to introduce the following shifted frequencies:
\begin{align*}
	\homega:=\omega-\q r_c^{-1},\\
	\tomega:=\omega-\q r_+^{-1},
\end{align*}
with $r_+$ the event horizon radius and $r_c$ the cosmological horizon radius ($r_c=\infty$ when $\Lambda=0$)) as these show up in the following boundary conditions for $u_{\ell m}$ as $r_*\to \pm \infty$ (which should be interpreted in an $L^2_{\omega}$-sense):
\begin{align*}
	u_{\ell m}(r_*;\omega)\sim &\: e^{-i\tomega r_*-i\q \int_{0}^{r_*}\rho_+(r(r_*'))\,dr_*'}\quad (r_*\to -\infty),\\
		u_{\ell m}(r_*;\omega)\sim &\: e^{+i\homega r_*-i\q \int_{0}^{r_*}\rho_c(r(r_*'))\,dr_*'}\quad (r_*\to \infty),
\end{align*}
where
\begin{align*}
	\rho_+:=&\: r_+^{-1}-r^{-1},\\
	\rho_c:=&\: r^{-1}-r_c^{-1}.
\end{align*}
Then we define the shifted potentials $\widehat{V}_{\ell  \omega}$ and $\widetilde{V}_{\ell  \omega}$ as follows:
\begin{equation*}
	\omega^2-V_{\ell \omega}=:\homega^2-\widehat{V}_{\ell  \homega}=: \tomega^2-\widetilde{V}_{\ell  \tomega}.
\end{equation*}
Since $\omega^2-V_{\ell \omega}$ does not play an important role, we will only consider $\homega$ and $\tomega$ and redefine: $V_{\ell \homega}=\widehat{V}_{\ell  \homega}$.

Then:
\begin{equation*}
V_{   \ell \homega}(r)=\ell(\ell+1)\Omega^2r^{-2}-{\q}^2\rho_c^2(r)+2{\q} \homega \rho_c(r)+\Omega^2\left[r^{-1}\frac{d\Omega^2 }{dr}+\frac{2\Lambda}{3}\right] .
\end{equation*}

The goal at the Fourier-transformed level is to derive appropriate uniform $L^2_{\omega}$-estimates, so that we can then apply Plancherel's theorem to arrive at integrated-in-time energy estimates in physical space.

\subsubsection{Step 1: high-frequency regimes}
\label{sec:introstephighfreq}
We partition the high-frequency regime as follows:
\begin{itemize}
	\item $\homega$-dominated: $|\homega|\gg \ell+1$,
	\item $\ell$-dominated:  $\ell\gg |\homega|+1$,
	\item trapped: $|\homega|\sim \ell\gg 1$.
\end{itemize}
In each of these frequency regimes, we derive weighted $L^2_{r_*}$-estimates via the consideration of \emph{multipliers} of the form (where we omit the subscripts $m\ell$):
\begin{equation*}
	\re(h u\cdot \overline{H}),\quad \re(y u'\cdot \overline{H}),\quad \re(f'u \overline{H}+2fu' \overline{H}),\quad \re(i\homega u\cdot \overline{H}),\quad \re(i\tomega u\cdot \overline{H}),
\end{equation*}
with $h,y,f: \R_{r_*}\to \R$, as well as appropriate linear combinations, and then integrating by parts.

In addition to the above multipliers, we need to consider multipliers corresponding to $(\Omega^{-1}r)^p$-estimates in frequency space:
\begin{equation*}
\re((\Omega^{-1}r)^{p} w'e^{i \homega r_*-i \q \int_{0}^{r_*}\rho_c(r_*')\,dr_*'}\overline{H}),\quad \re((\Omega^{-1}r)^{p} v'e^{-i \tomega r_*-i \q \int_{0}^{r_*}\rho_+(r_*')\,dr_*'}\overline{H}),
\end{equation*}
where
\begin{equation*}
	v=e^{i\tomega r_*+i \q \int_{0}^{r_*}\rho_+(r_*')\,dr_*'}u,\quad w=e^{-i\homega r_*+i \q \int_{0}^{r_*}\rho_c(r_*')\,dr_*'}u.
\end{equation*}

In particular, the multipliers $\re(i\homega u\cdot \overline{H})$ and $\re(i\tomega u\cdot \overline{H})$ may be interpreted as the Fourier-transformed analogues of energy estimates with respect to the multipliers $K_+\psi$ and $K_c\psi$, respectively, where\\ $K_+=T+i\q r_+^{-1}$ and $K_c=T+i\q r_c^{-1}$ (and $K_c=T$ if $\Lambda=0$). Integrating by parts results in:
\begin{align*}
	\homega^2|u|^2(\infty)+\homega \tomega |u|^2(-\infty)=&\:\int_{\R}\homega  \re(iu \overline{H}),\\
	\tomega^2|u|^2(-\infty)+\homega \tomega |u|^2(\infty)=&\:\int_{\R}\tomega  \re(iu \overline{H}).
\end{align*}
Note that the two terms on the left-hand side are each non-negative only if:
\begin{equation*}
	\homega \tomega\geq 0.
\end{equation*}
We then say $\omega$ is \emph{non-superradiant}. Superradiant frequencies correspond to $\homega \tomega <0$, or equivalently:
\begin{equation*}
	\homega (\homega-\q \rho_+(r_c))<0,
\end{equation*}
which corresponds to $\q\homega\in (0,\q^2\rho_+(r_c))$.

Superradiance forms an obstruction, as it prevents us from controlling boundary terms at $r_*=\pm \infty$ using the Fourier transform of $K_+$-energy estimates and $K_c$-energy estimates.

An important property of $V_{\ell \homega}$ is that in the superradiant frequency regime, $\homega \tomega<0$,
\begin{equation*}
	V_{\ell \homega}(r_{\rm max})-\homega^2>0.
\end{equation*}
We say that \emph{superradiant frequencies are not trapped}. This nomenclature is be motivated by the dynamics of free-falling, charged, massless particles. More precisely, we consider massive particles with charge $e$ and mass $m$, restricted to the equatorial plane, satisfying the (spacetime version) of the Lorentz force law and then take the limit $m\downarrow 0$, $e\to 0$, such that $\frac{e}{m}\to\mathfrak{q}\neq 0$. The effective potential $V_{\rm eff}$ appearing in the corresponding equations of motion then satisfies:
\begin{equation*}
	V_{\rm eff}(r)=L^2\Omega^2r^{-2}-{\q}^2\rho_c^2(r)+2{\q} \hat{E}\rho_c(r)+\q^2r_c^{-2},
\end{equation*}
where $L$ is the magnitude of the angular momentum and $\hat{E}:=p_t+qr_c^{-1}$, with $p_t$ the canonical momentum associated to time translations. The pair $(\hat{E},L)$ are conserved along the motion of the particle.

Note that, up to a frequency-independent and charge-independent term, $V_{\rm eff}(r)$ agrees with $V_{\homega \ell}$ under the identification $\homega\mapsto \hat{E}$ and $L^2\mapsto \ell(\ell+1)$. The charged particle is trapped at $r=r_{\rm max}$ if $V_{\rm eff}(r_{\rm \max})-\hat{E}^2=0$.

Superradiant frequencies correspond to charged particles with $\q\hat{E}\in (0,\q^2\rho_+(r_c))$. It can be shown that in this case, $V_{\rm eff}(r_{\rm max})-\hat{E}^2>0$, so these particles are not trapped.

Since superradiant frequencies $\homega$ are bounded, they only affect the $\ell$-dominated frequencies $|\homega|\ll \ell+1$ of our high-frequency regimes.

The form of the potential $V_{\ell \omega}$ when $ \ell\gg |\homega|+1$ is sketched in Figure \ref{fig:introelldom} below.
\begin{figure}[h!]
	\begin{center}
\includegraphics[scale=0.6]{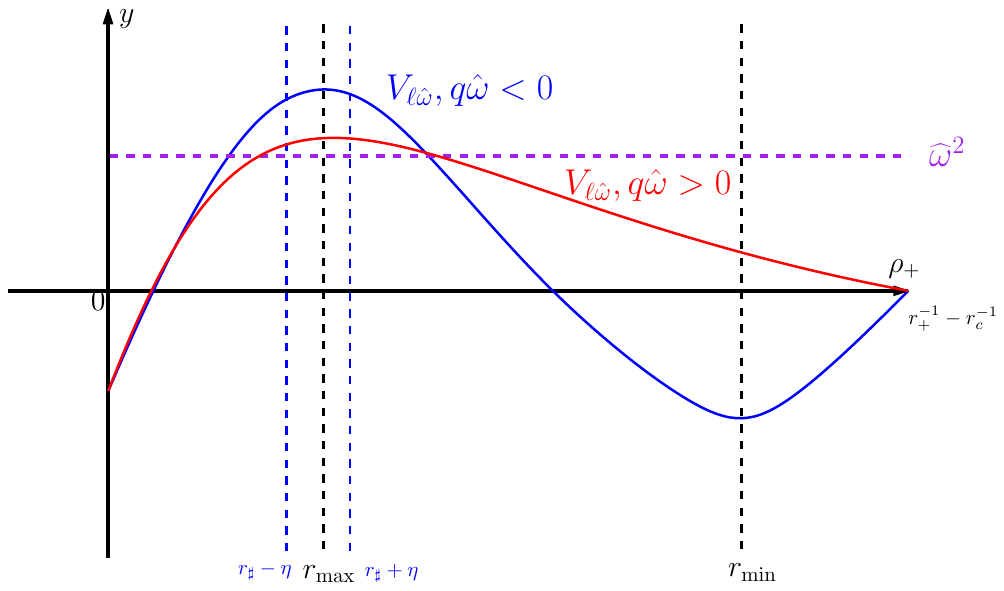}
\end{center}
\caption{A sketch of the graph of the potential $V_{\ell \omega}$ as a function of $\rho_+$ in the high, $\ell$-dominated frequency regime with $\q\tomega<0$.}
	\label{fig:introelldom}
\end{figure}
\\
We make the following observations:
\begin{itemize}
	\item the maximum $r_{\rm max}$ is located very close to $r_{\sharp}$, the \emph{Reissner--Nordstr\"om(-de Sitter) photon sphere radius} and
	\begin{equation*}
		V_{\ell \homega}(r_{\rm max})-\homega^2\gtrsim \ell(\ell+1).
	\end{equation*}
	\item In the non-superradiant case with $\q\tomega<0$ and $\q\homega<0$, $V_{\ell \homega}$ has a minimum to the right of the maximum for suitably small $\kappa_c$. As $\frac{\q\homega}{\ell(\ell+1)}\uparrow 0$, the minimum approaches $\rho_+=r_+^{-1}-r_c^{-1}$. Analogously, it can be shown that when $\kappa_+$ is suitably small and $\q\tomega>0$ and $\q\tomega>0$, $V$ has a minimum to the left of the maximum, which approaches $\rho_+=0$ as $\q\tomega\downarrow 0$. Finally, when $\q\homega\cdot \q\tomega<0$ (the superradiant regime), $V_{\ell \homega}$ has a single extremum, a maximum. 
\end{itemize}
The first observation provides a quantitative version of the statement that superradiant frequencies are not trapped.

The second observation concerns the main difficulty in the high-frequency analysis and may be thought of as the main characteristic feature of the absence of red-shift and the presence of a scalar field charge in the high-frequency regime.

In the $\q\tomega, \q\homega<0$ case, we resolve this difficulty by exploiting the fact that the minimum occurs for large values of $r$ ($\rho_c$ close to zero), where we can consider the Fourier versions of $(\Omega^{-1}r)^p$-weighted energy estimates above. The same strategy works in the $\q\tomega, \q\homega>0$ case. While such weighted energy estimates do not hold in general in physical space, we can exploit the fact that $\ell\gg |\homega|$ to ensure their validity in this frequency regime. It is important to emphasize here that we do \underline{not} make use of the red-shift effect near the event or cosmological horizons, as this would result in estimates that are not uniform as $\kappa_c\downarrow 0$ or $\kappa_+\downarrow 0$.

In the superradiant case ($\q\tomega\cdot \q\homega<0$), the potential has a single extremum, so the above difficulty does not occur. However, in that case boundary terms at $r=r_+$ and $r=r_+$ (or $r=\infty$) do not have a good sign, so we cannot use the Fourier versions of $K_+$-energy estimates and $K_c$-energy estimates and we instead exploit, as in \cite{part3}, the largeness of $V_{\ell \homega}(r_{\rm max})-\homega^2\gtrsim \ell(\ell+1)$ to absorb these boundary terms by employing (Fourier versions of) $K_+$-energy estimates and $K_c$-energy estimates with appropriate cut-off functions. 

The trapped frequencies can be treated analogously. Since these frequencies are non-superradiant, the main remaining difficulty is the presence of a minimum of the potential near $\rho_+=0$ or $\rho_c=0$, as above, which necessitates a coupling with $(\Omega^{-1}r)^p$-weighted estimates as above. The main difference is the appearance of a degenerate factor near $r=r_{\sharp}$ due to the fact that $V_{\ell \homega}(r_{\rm max})$ or $\homega^2-V_{\ell \homega}(r_{\rm max})$ cannot be bounded below.

We finally note that the large $\homega$-frequency regime is largely independent of the precise behaviour of the potential and proceeds in a straightforward manner. In particular, we do not need to appeal to $(\Omega^{-1}r)^p$ weighted energy estimates (which would would  \emph{a priori} also not be valid in this frequency regime!).

The conclusion of Step 1 can be stated very schematically as the following inequality, where we ignore all (important!) $r$-weights on the left-hand side:
\begin{multline}
\label{eq:schematiciedhighfreq}
c\cdot 	\int_{\R_{r_*}}|u'|^2+\homega^2|u|^2+|u|^2\,dr_*\leq \int_{\R_{r_*}} \re(y u'\cdot \overline{H})+\re(h u\cdot \overline{H})+\re(f'u \overline{H}+2fu' \overline{H})+\re(i\homega u\cdot \overline{H})+\re(i\tomega u\cdot \overline{H})\,dr_*\\
-\int_{\R_{r_*}}\re((\Omega^{-1}r)^{p} w'e^{i \homega r_*-i \q \int_{0}^{r_*}\rho_c(r_*')\,dr_*'}\overline{H})+ \re((\Omega^{-1}r)^{p} v'e^{-i \tomega r_*-i \q \int_{0}^{r_*}\rho_+(r_*')\,dr_*'}\overline{H})\,dr_*,
\end{multline}
where $h,y,f$ denote functions that arise from the multipliers described above and may have some $\omega$ dependence in a tractable way, but not to leading-order near $r=r_+$ and $r=r_c$. The constant $c>0$, importantly, does not depend on $\omega$ or $\ell$.

\subsubsection{Step 2: bounded frequency regimes}
\label{intro:sketchstep2}
In the case of bounded frequencies, we can (formally) apply Green's formula to express:
\begin{equation}
\label{eq:greenformula}
u(r_*)=\frac{u_{\infty}(r_*)}{\mathfrak{W}}\int_{-\infty}^{r_*}u_+(r_*')H(r_*')\,dr_*'+\frac{u_{+}(r_*)}{\mathfrak{W}}\int_{r_*}^{\infty}u_{\infty}(r_*')H(r_*')\,dr_*',
\end{equation}
with $u_{\infty}$ and $u_{+}$ solutions to \eqref{eq:introschrod} with $H=0$, satisfying:
\begin{align*}
	u_{+}\sim &\: e^{-i\tomega r_*-i\q \int_{0}^{r_*}\rho_+(r(r_*'))\,dr_*'}\quad (r_*\to -\infty),\\
		u_{\infty}(r_*)\sim &\: e^{+i\homega r_*-i\q \int_{0}^{r_*}\rho_c(r(r_*'))\,dr_*'}\quad (r_*\to \infty),
\end{align*}
and with $\mathfrak{W}=\mathfrak{W}(\omega, \ell)$ the corresponding Wronskian. The goal is to obtain suitable estimates for $u_+$, $u_{\infty}$ and $\mathfrak{W}$ and then convert these into weighted $L^2$-estimates via Green's formula. This problem can be divided into two parts:
\begin{enumerate}
	\item An analysis \textbf{away from} $\homega=0$ and $\tomega=0$, where $u_+$ and $u_{\infty}$ can easily be controlled and boundedness of $|\mathfrak{W}|^{-1}$ may be thought of as a consequence of the fact that $\mathfrak{W}\neq 0$ together with continuity of $\mathfrak{W}$ along the real axis away from $\homega=0$ and $\tomega=0$. The statement $\mathfrak{W}\neq 0$ is known as \emph{mode-stability on the real axis}, because $\mathfrak{W}(\omega,\ell)=0$ would imply the existence of finite energy solutions $\psi_{\ell m}(t,r)=e^{-i\omega t}u(r)$ which are non-decaying.
	\item An analysis \textbf{near} $\homega=0$ and $\tomega=0$. As $\homega\to 0$, $r_*=\infty$ changes from a regular singular point of \eqref{eq:introschrod} to an irregular singular point in the case $\kappa_c=0$. The same change occurs as $\tomega\to 0$ for $r_*=-\infty$ in the case $\kappa_+=0$.
\end{enumerate}
We first establish mode stability on the real axis in the case $\kappa_c=\kappa_+=0$ by applying the methods developed in \cite{costa20} and then extend this to the near-extremal case $\kappa_c\ll \kappa_+\ll 1$. \textbf{This is the only part of the analysis where appeal to (near-)extremality $1-\frac{|Q|}{M}\ll 1$!}

The rest of the analysis concerns small frequencies: $|\homega|\ll 1$ or $|\tomega|\ll 1$. This part contains the main novel aspects of the analysis in the present paper, which includes a mathematically rigorous realization of the small-frequency ``matched asymptotics'' method that is frequently applied in the black hole setting in the physics literature, see for example the foundational works \cite{sta73, staro, tp74} as well as \cite{hp98b,zimmerman4} in the charged scalar field on (extremal) Reissner--Nordstr\"om. In contrast with the above works, we need to provide rigorous estimates for relevant error terms to be able to conclude quantitative estimates.

Consider solutions $u$ \eqref{eq:introschrod} with $H=0$. We can rewrite the ODE in two different ways:
\begin{align}
\label{eq:intromaineqUevent}
\frac{d^2U_+}{ds_+^2}=&\:\rho_+^{4}(r^{-2}\Omega^2)^{-2}\mathcal{V}U_+, \\
\label{eq:intromaineqUcosmo}
\frac{d^2U_c}{ds_c^2}=&\:\rho_c^{4}(r^{-2}\Omega^2)^{-2}\mathcal{V}U_c,
\end{align}
where	$s_+= \frac{1}{\rho_+}-r_+$, $s_c= \frac{1}{\rho_c}-r_+$, $U_+(s_+)=(s_++r_+)(r^{-1}\Omega u)(r_*(s_+))$, $U_c(s_c)=(s_c+r_+)(r^{-1}\Omega u)(r_*(s_c))$ and:
\begin{equation*}
\mathcal{V}_{\homega \ell}(r)=-\homega^2+V_{\homega\ell}(r)+\Omega^3\frac{d^2\Omega}{dr^2}(r).
\end{equation*}
The coordinates $s_+$ and $s_c$ are natural for treating the cases $\kappa_c\geq 0$ and $\kappa_+\geq 0$ in a uniform manner.

The key idea is to approximate the solutions $U_+$ and $U_c$ by appropriate special functions in different regions and establish the existence of an overlap region, where the approximations coincide.

For simplicity, in the discussion below, we will restrict to the extremal case $\kappa_c=\kappa_+=0$ (so $\homega=\omega$) and we will set the length scale to $r_+=1$. Denote:
\begin{equation*}
	\beta_{\ell}=\sqrt{(2\ell+1)^2-4\q^2}.
\end{equation*}

In that case, we can split the $r$-domain $[1,\infty)$ as follows:
\begin{itemize}
\item Region I:
When
\begin{equation*}
\frac{1}{2}|\omega|^{\frac{1}{4}} |\tomega|^{-\frac{1}{4}}\leq s_+ <\infty,
\end{equation*}
we can approximate:
\begin{equation*}
	U_+\approx B_1W_{-i\sigma,\frac{1}{2}\beta_{\ell}}(-2i\tomega s_+)+B_2M_{-i\sigma,\frac{1}{2}\beta_{\ell}}(-2i \tomega s_+),
\end{equation*}
where $W_{\varkappa,\mu},M_{\varkappa,\mu}$ are standard Whittaker functions and $\sigma=-\q+2\tomega$. Note that in the limit $\omega\to 0$, $U_+$ becomes \emph{exactly} a linear combination of Whittaker functions.
\item Region II:
When
\begin{equation*}
\frac{1}{2} |\tomega|^{\frac{1}{4}}|\omega|^{-\frac{1}{4}}\leq s_c <\infty,
\end{equation*}
we can approximate:
\begin{equation*}
	U_c\approx D_1W_{-i\sigma,\frac{1}{2}\beta_{\ell}}(-2i\omega s_c)+D_2M_{-i\sigma,\frac{1}{2}\beta_{\ell}}(-2i \omega s_c),
\end{equation*}
with now $\sigma=-\q-2\tomega$. Note that in the limit $\tomega\to 0$, $U_c$ becomes \emph{exactly} a linear combination of Whittaker functions.
\end{itemize}
Since $s_c=r-1$ and $s_+=\frac{1}{r-1}$, we can equivalently characterize Region II as follows:
\begin{equation*}
	0<s_+\leq 2 |\omega|^{\frac{1}{4}}|\tomega|^{-\frac{1}{4}},
\end{equation*}
so Regions I and II overlap and their intersection corresponds to: $\frac{1}{2}|\omega|^{\frac{1}{4}} |\tomega|^{-\frac{1}{4}}\leq s_+\leq 2 |\omega|^{\frac{1}{4}}|\tomega|^{-\frac{1}{4}}$. Furthermore, $U_c=s_+^{-1}U_+$.

Consider now the case of small $|\tomega|$. Setting $u=u_+$ fixes the constants $B_1$ and $B_2$. In particular, $B_2=0$. Since both approximations are valid in the overlap region \underline{and} the arguments in all the Whittaker functions are small, we can determine the constants $D_1$ and $D_2$ from the small $z$-asymptotics of $W_{\varkappa,\mu}(z),M_{\varkappa,\mu}(z)$. Furthermore, $|\mathfrak{W}|^{-1}\sim |\tomega|^{-1} D_2^{-1}$, which provides a bound for the Wronskian:
\begin{align*}
	|\mathfrak{W}|^{-1} \sim &\: |\tomega|^{-\frac{1}{2}+\frac{1}{2} \re \beta_{\ell}}\quad(\beta_{\ell}\neq 0),\\
	|\mathfrak{W}|^{-1} \sim &\:|\tomega|^{-\frac{1}{2}}\frac{1}{1+\log((1+|\tomega| )^{-1})}\quad(\beta_{\ell}=0),
\end{align*}
where the logarithmic term arises from the logarithm  appearing in the small $z$-asymptotics of the Whittaker function $M_{\varkappa,\mu}(z)$ when $\mu=0$. Furthermore, the estimates on $D_1$ and $D_2$ also result in global, uniform estimates for $u_+$. 

Choosing $u=u_{\infty}$ instead fixes $D_1$ and also fixes $D_2=0$. Then estimates for $B_1$ and $B_2$ follow, which therefore also give global, uniform estimates for $u_{\infty}$. The small-$|\omega|$ case proceeds entirely analogously.

\textbf{The above argument relies on precise properties of Whittaker functions, such as explicit connection formulas between $W_{\varkappa,\mu}(z)$ and $M_{\varkappa,\mu}(z)$!}

While an approximation by Whittaker functions is sufficient for the extremal case, it is insufficient for deriving uniform estimates in $\kappa_+$ and $\kappa_c$ in the near-extremal case, which are required for our analysis. In this case, we need to split the $r$-domain into four regions in total. When $|\tomega|\gg \kappa_+$ and $|\homega|\gg \kappa_c$, we can still use Whittaker functions away from the event horizon and cosmological horizon/null infinity. Near the horizons, we can instead apply a standard WKB or Liouville--Green approximation and approximate the solutions by oscillating exponentials.

When $|\tomega|\lesssim \kappa_+$ or $|\homega|\lesssim \kappa_c$, we instead approximate $U_+$ and $U_c$ by \emph{Gauss hypergeometric functions}. These can then be matched to other hypergeometric or Whittaker functions, as above. Note that the small-$|\tomega|$ analysis requires smallness of $\kappa_+$ and $\kappa_c$, whereas the small-$|\homega|$ analysis requires only smallness of $\kappa_c$. 

We conclude the following uniform estimates for $\mathfrak{W}$, valid in the full bounded frequency regime (under the assumption of mode stability when $\kappa_+$ is not small):
\begin{equation*}
|\mathfrak{W}|^{-1}\sim (|\tomega|+\kappa_+)^{-\frac{1}{2}+\frac{1}{2} \re \beta_{\ell}}(|\homega|+\kappa_c)^{-\frac{1}{2}+\frac{1}{2} \re \beta_{\ell}}\frac{1}{1+\delta_{\beta_{\ell}0}\log((1+|\tomega| +\kappa_+)^{-1}(1+|\homega| +\kappa_c)^{-1})}.
\end{equation*}
In the remainder of the bounded-frequency analysis, we apply \eqref{eq:greenformula} to conclude the following schematic inequality, where we ignore (important!) $r$-dependent weights on both sides:
\begin{equation}
\label{eq:schematiciedboundfreq}
c\cdot 	\int_{|\omega|\lesssim 1}\int_{\R_{r_*}}(\ldots)|u|^2\,dr_*d\omega\leq \int_0^{\infty}\int_{\Sigma_{\tau}} (\ldots)|F_{\xi}|^2+(\ldots)|G_A|^2\,\sin \theta d\theta d\varphi dr d\tau.
\end{equation}
Note that in contrast with \eqref{eq:schematiciedhighfreq}, \eqref{eq:schematiciedboundfreq} already has physical-space norms on the right-hand side, which is necessary in order to obtain $L^2$-estimates via \eqref{eq:greenformula} with the desired $r$-weights; see also \cite{sr15}[\S3.1] for a related analysis.

Let us finally mention that a precise analysis of analogues of $u_+$ and $u_{\infty}$ also plays an important role in the late-time analysis of the massive Klein--Gordon equation on Schwarzschild in \cite{psv23, svdm24}. In that case, the critical frequencies $\omega=\pm m$, with $m$ the Klein--Gordon mass, play an analogous role to $\omega=0$ and $\omega=\mathfrak{q}Qr_+^{-1}$ in the present paper. For $|\omega|<m$,  homogeneous solutions are approximated by matching an approximation by Bessel functions, to a WKB/turning point approximation (in a large-$r$ region). This stands in contrast with earlier work from the physics literature (in the bounded angular frequency regime), see for example \cite{kt02}, where Whittaker functions are used (for bounded angular frequencies) at suitably large values of $r$ and are matched to hypergeometric functions which cover the rest of the region (and describe exactly the behaviour of solutions at $\omega=\pm m$). 

In the Klein--Gordon setting, the turning points of the relevant potentials occur at large absolute values of the relevant rescaled argument $x$ appearing in the Whittaker function $W_{\varkappa,\mu}(x)$ in \cite{kt02} (with $|x|\sim \frac{r}{\sqrt{m^2-\omega^2}}$). Note that for large $|x|$, the Whittaker functions themselves can be approximated via a WKB/turning point analysis. Furthermore, $\varkappa$ scales like $\frac{1}{\sqrt{m^2-\omega^2}}$, which implies that for $|x|\ll \varkappa$, the Whittaker functions can instead be approximated by  Bessel functions. This is what allows \cite{psv23} to circumvent the use of Whittaker functions altogether in the analysis.

In the charged scalar field setting, there are two key differences:  1) the turning points of the potential $V$ do \underline{not} occur at large (or small) values of the relevant rescaled variables $|\tomega s_+|$ and $|\homega s_c|$ of the Whittaker functions, so the Whittaker functions \underline{cannot} be approximated via a WKB/turning point analysis in the relevant regions, and 2) the Whittaker parameter $\varkappa$ is bounded, so Bessel functions also do not form adequate approximations. For this reason, Whittaker/hypergeometric functions play an essential role in the analysis of the present paper. It would nevertheless be interesting to revisit the setting of \cite{psv23} with a matched asymptotics argument centered around Whittaker and hypergeometric functions, as in the present paper.

We refer to the upcoming work \cite{tdcheun} for a related matched-asymptotics analysis in the setting of more general Heun equations. 

\subsubsection{Step 3: Plancherel}
Combining Steps 1 and 2 amounts to summing up the inequalities \eqref{eq:schematiciedhighfreq} and \eqref{eq:schematiciedboundfreq}. We can freely apply Plancherel to the left-hand side of the summed estimate, but since the right-hand side has non-trivial frequency dependence, applying Plancherel to the right-hand side is less straightforward.

The resolution to this difficulty is the observation that the functions in the multipliers appearing in Step 1 can be chosen in such a way that they agree for all frequencies near the horizons up to higher order terms in $\rho_+$ or $\rho_c$, which can be dealt with by applying Young's inequality and absorbing the arising $u$-dependent terms into the left-hand side. In particular, this means that we require also multiplier estimates for the bounded frequencies of Step 2, but since we can use the estimate \eqref{eq:schematiciedboundfreq}, they simplify greatly.

After writing the estimates at the level of $u$ in a sufficiently uniform way, Plancherel can be obtained to derive the desired physical space integrated energy estimates.

\subsubsection{Step 4: verifying future integrability}
Since the final integrated energy estimates of Step 3 are only valid under the assumption of sufficient integrability of $\uppsi$ and $\xi G$ in $\tau$, which follows from \emph{future integrability} (in $\tau\in [0,\infty)$) of $\psi$ (and $G_A$), it remains to verify this assumption on $\psi$. Note that the assumption is \emph{qualitative} in nature, in contrast with the conclusion of Step 3, which was a \emph{quantitative} statement. In particular, we can restrict to fixed angular frequencies $\ell\in \N_0$ without loss of generality.

The idea is to use both $\kappa_+$ and $\kappa_c$ as \emph{regularizers}. That is to say, we can first consider the near-extremal Reissner--Nordstr\"om--de Sitter case, where $\kappa_+>0$ and $\kappa_c>0$ and derive future integrability, using crucially the strict positivity of the surface gravities. Then the quantitative estimates in Step 3, which are uniform $\kappa_+$ and $\kappa_c$, remain valid as $\kappa_+\downarrow 0$ and $\kappa_c\downarrow 0$, allowing us to conclude the validity of the estimates (and as a consequence, future integrability) in the $\kappa_c=\kappa_+=0$ case.

We derive future integrability in the $0<\kappa_c<\kappa_+$ case via a continuity argument in the charge parameter $\q$ (keeping $Q$ fixed). The starting point is future integrability for small $|\q|$; see \cite{bes20}[Theorem 4.2]. Alternatively, we could appeal to the integrated energy estimates for $\q=0$ \cite{gon24}[Proposition 11] from which future integrability for small $|\q|$ follows straightforwardly.

Assuming future integrability for $\q=\q_0$, we consider nearby $q_1$ and the time-dependent charge parameter $q_{\tau}=\xi_{\tau}\q_1+(1-\xi_{\tau}) \q_0$, where $\xi_{\tau}(\tau')$ is a smooth cut-off function that vanishes for $\tau'\geq \tau$ and equal to 1 for $\tau'\leq \tau-\delta$ with $\delta>0$. We then treat the terms involving $\q_{\tau}-\q_0$ as an inhomogeneity and apply integrated estimates for $\q=\q_0$. For sufficiently small $|\q_1-\q_0|$, this inhomogeneity can be absorbed into the left-hand side of the estimate, if we make use of the fact that $\kappa_c>0$ and $\kappa_+>0$. That is to say, we make use of \underline{red-shift estimates} near the horizon. This forms the key step towards extending the range of $\q$ for which future integrability holds.

\subsection{Comparison with wave equation on extremal Kerr}
\label{sec:compwavekerr}
After Fourier transforming in time and decomposing into spheroidal harmonic modes, integrated energy estimates in extremal Kerr are determined by an analysis at the level of an inhomogeneous radial ODE:
\begin{equation*}
	u''+(\omega^2-V_{m\ell \omega})u=H.
\end{equation*} 
We fix $|a|=M$ and $r_+=1$ and define $\tomega^2-\widetilde{V}=\omega^2-V_{m\ell\omega}$ with $\tomega=\omega-\frac{m}{2}$. Then we obtain the following asymptotic behaviour around $r=1$:
\begin{equation*}
	\widetilde{V}_{m\ell \omega}(r(r_*))= 2a m \tomega r_*^{-1}+\left[\Lambda_{m \ell}\left(\frac{m}{2}\right)-2m^2+O(m\tomega )\right]r_*^{-2}+\log r_* O(r_*^{-3}),
\end{equation*}
with $\Lambda_{m\ell}(\omega)=\lambda_{m\ell}(\omega)+a^2\omega^2$, where $\lambda_{m\ell}(\omega)$ are oblate  spheroidal eigenvalues.

In comparison, the potential $\widetilde{V}_{\ell}$ introduced in \S \ref{sec:introsketchfourier} has the following behaviour around $r=r_+=1$:
\begin{equation*}
	\widetilde{V}_{\ell \omega}(r)=2{\q}\tomega r_*^{-1}+[\ell(\ell+1)-\q^2+O(\tomega)]r_*^{-2}+\log r_* O(r_*^{-3})
	\end{equation*}
	
	Under the transformation $am\mapsto \q$ and $\Lambda_{m\ell}(\frac{m}{2})-2m^2\mapsto \ell(\ell+1)-\q^2$, the asymptotic behaviours of the potentials as $r_*\to -\infty$ and $\tomega\to 0$ agree.
	
	In fact, the analogue of the potential $\mathcal{V}_{\homega \ell}(r)$ in extremal Kerr has the same structure  as $\mathcal{V}_{\homega \ell}(r)$ in the setting of the charged scalar field, with the above identifications for bounded $r$.
	
The analysis concerning the radial ODE in a neighbourhood of $r=r_+$ developed in the present paper applies therefore also to the extremal Kerr case. Since we allow the constants in the relevant estimates to depend on $\q$, the analysis is only applicable in the setting of bounded $m$ and can be used to remove the conditional assumptions in the integrated energy estimates in \cite{gaj23}.

The improvement of the results in \cite{gaj23} as well as the consideration of unbounded $m$ is part of future work.

\subsection{Overview of the remainder of the paper}
Here, we provide an outline of the content of the remaining sections of the paper.
\begin{itemize}
	\item In \S \ref{sec:prelim} we provide all the necessary geometric preliminaries and we set notational conventions. We moreover derive different forms of \eqref{eq:CSFintro}.
	\item We give a precise statement of the main theorem of the paper in \S \ref{sec:precisemainthm}.
	\item The analysis in \S \ref{sec_smallqinten} concerns small $|\mathfrak{q}Q|$ and is purely physical-space based and independent from the Fourier analysis in the rest of the paper.
	\item In \S \ref{sec:iledfreq} we explain how to relate the analysis of \eqref{eq:CSFintro} to the analysis of an appropriate Schr\"odinger ODE, by taking a Fourier transform.
	\item We carry out the high-frequency analysis in \S \ref{sec:iledfreq}. 
	\item In \S \ref{sec:intestboundfreq}, we perform the bounded-frequency analysis. This section may be thought of as the core of the paper and it contains most of the new ideas.
	\item We put together the high- and bounded-frequency analysis in \S \ref{sec:iedcomb} and then apply Plancherel's theorem to obtain physical space estimates in \S \ref{sec:iedphy}.
	\item The appendix contains some results that can be read independently from the rest of the paper. In Appendix \ref{sec:purelyphysied} we provide a purely physical-space proof of integrated energy decay estimates, with zeroth order terms on the right-hand side. This result is also used in the small-$|\mathfrak{q}Q|$-analysis. In Appendix \ref{sec:restrmodestab}, we prove a restricted mode stability result away from extremality. This result is not used in the rest of the paper.
	\item Finally, Appendix \ref{sec:ODEest} contains ODE tools that are necessary for carrying out the matched asymptotic analysis in \S \ref{sec:intestboundfreq}.
\end{itemize}
\subsection{Acknowledgments}
 We thank Marc Casals for helpful discussions and for sharing relevant computations and Rita Teixeira da Costa for invaluable discussions on matched asymptotics in a wider context. The author acknowledges funding through the ERC Starting Grant 101115568.

\section{Preliminaries}
\label{sec:prelim}
In this section, we introduce the relevant geometric concepts, as well as the charged scalar field equation. We also briefly discuss local energy estimates and introduce notation that will be used throughout the remaining sections of the paper and is necessary to be able to give a precise statement of the main theorems in \S \ref{sec:precisemainthm}.
\subsection{Reissner--Nordstr\"om--de Sitter spacetimes}
\label{sec:rndsgeom}
Consider the parameters $(M,Q,\Lambda)\in \R_+\times \R \times[0,\infty)$ and the corresponding polynomial
\begin{equation*}
	r^2\Omega^2(r):=r^2-2Mr+Q^2-\frac{\Lambda}{3} r^4.
\end{equation*}
For $Q\neq 0$, $r^2\Omega^2(r)$ has up to three positive roots $r_-,r_+,r_c$, with
\begin{equation*}
0<r_-\leq r_+\leq r_c\leq \infty,
\end{equation*}
where we take the convention that $r_c=\infty$ if $\Lambda=0$. We will restrict to $M^2\Lambda$ sufficiently small, so that $r_+<r_c$. Later in the paper, we will restrict further to $M^2\Lambda\ll 1$.

Let $\mathcal{M}_{M,Q,\Lambda}=\R_v\times [r_+,r_c)_r\times \s^2_{(\theta,\varphi)}$ denote the \emph{Reissner--Nordstr\"om(--de Sitter) exterior manifold with boundary}.
The \emph{Reissner--Nordstr\"om(--de Sitter) exterior spacetimes} are tuples $(\mathcal{M}_{M,Q,\Lambda},g_{M,Q,\Lambda})$ (together with the time orientation determined by the vector field $\partial_v$), with $g_{M,Q,\Lambda}$ a Lorentzian metric on $\mathcal{M}_{M,Q,\Lambda}$ that can be expressed as follows:
\begin{align}
\label{eq:RNmetric}
g_{M,Q,\Lambda}=&-\Omega^2 dv^2+2dvdr+r^2\slashed{g}_{\s^2},\\ \nonumber
\slashed{g}_{\s^2}:=&\:d\theta^2+\sin^2\theta d\varphi^2,
\end{align}
where $v\in \R$ and $r\in [r_+,\infty)$ are called \emph{ingoing Eddington--Finkelstein coordinates}.\footnote{The spherical coordinate chart $(\theta,\varphi)$ feature a degeneration at a half-circle connecting the north and south pole and can be complemented with an additional coordinate chart $(\tilde{\theta},\tilde{\varphi})$ to cover the full sphere $\s^2$.}

When $\Lambda=0$, the spacetimes are called \emph{Reissner--Nordstr\"om} spacetimes. When $\Lambda>0$, they are called \emph{Reissner--Nordstr\"om--de Sitter} spacetimes.

The \emph{future event horizon} $\mathcal{H}^+$ is defined as the boundary:
\begin{equation*}
\mathcal{H}^+:=\partial \mathcal{M}_{M,Q,\Lambda}=\{r=r_+\}.
\end{equation*}

The corresponding inverse or dual metric is:
\begin{equation*}
g_{M,Q,\Lambda}^{-1}=\Omega^2\partial_r\otimes \partial_r+\partial_v\otimes \partial_r+\partial_r\otimes \partial_v+ r^{-2}\slashed{g}_{\s^2}^{-1},
\end{equation*}
with $\slashed{g}_{\s^2}^{-1}$ the inverse metric associated to the unit round sphere metric $\slashed{g}_{\s^2}$.

The \emph{event horizon surface gravity} $\kappa_+$ and the \emph{cosmological horizon surface gravity} $\kappa_c$ are defined as follows:
\begin{align*}
\kappa_+(Q,M,\Lambda):=\frac{1}{2}\frac{d\Omega^2}{dr}(r_+),\\
\kappa_c(Q,M,\Lambda):=-\frac{1}{2}\lim_{r\uparrow r_c}\frac{d\Omega^2}{dr}(r).
\end{align*}
From the lemma below, it follows that $\kappa_c\downarrow 0$ as $\Lambda\downarrow 0$. When $\kappa_+=0$, we will say the spacetime is \emph{extremal}.\footnote{There are another notions of extremality in the Reissner--Nordstr\"om--de Sitter context, which arise when $r_+=r_c$; see for example the discussion in \cite{gaj18}[\S1]. These spacetimes do not play a role in the present paper, since we assumed the strict inequality $r_+<r_c$.} In particular, when $\kappa_c=\kappa_+=0$, we have that $r_+=r_-$ and $r_c=\infty$ and the spacetimes are called \emph{extremal Reissner--Nordstr\"om} spacetimes. In that case, $|Q|=M$.

When $\kappa_+>0$ and $\kappa_c\geq 0$, we say the spacetimes are \emph{sub-extremal}. If in addition $0<M\kappa_+\ll 1$ and $0\leq M\kappa_c\ll 1$, we say the sub-extremal spacetimes are \emph{near-extremal}.

In the lemma below, we relate the parameters $(M,\Lambda,Q)$ to the surface gravities $(\kappa_+,\kappa_c)$.
\begin{lemma}
\label{lm:surfacegrav}
	\mbox{}\begin{enumerate}[label=\emph{(\roman*)}]
\item We can write:
	\begin{equation*}
		r^2\Omega^2(r)=\frac{\Lambda}{3}(r_c-r)(r-r_+)(r-r_-)(r+r_c+r_++r_-),
	\end{equation*}
	with
	\begin{align*}
		\frac{3}{\Lambda}=&\:r_c^2+r_+^2+r_-^2+r_+r_-+r_cr_++r_cr_-,\\
		2M\cdot \frac{3}{\Lambda}=&\:r_c^2(r_++r_-)+r_+^2(r_c+r_+)+r_-^2(r_c+r_-)+2r_cr_+r_-,\\
		Q^2\cdot \frac{3}{\Lambda}=&\:(r_c+r_++r_-)r_cr_+r_-.
	\end{align*}
	Furthermore, we can express:
	\begin{align*}
		M=&\:r_+\left(1-\kappa_+r_+-\frac{2 \Lambda}{3}r_+^2\right)=r_c\left(1-\kappa_c r_c-\frac{2 \Lambda}{3}r_c^2\right),\\
		Q^2=&\:r_+^2\left(1-2\kappa_+ r_+-\Lambda r_+^2\right)=r_c^2\left(1-2\kappa_c r_c-\Lambda r_c^2\right).
	\end{align*}
	\item We can express:
	\begin{align*}
		\kappa_+=&\:\frac{\Lambda}{6} r_+^{-2}(r_c-r_+)(r_+-r_-)(2r_++r_c+r_-),\\
			\kappa_c=&\:\frac{\Lambda}{6}r_c^{-2}(r_c-r_+)(r_c-r_-)(2r_c+r_++r_-).
	\end{align*}
	\item Fix $Q,M>0$. Then
	\begin{align*}
		\lim_{\Lambda\downarrow 0}\frac{\Lambda}{3}r_c^{2}=&\:1,\\
		\lim_{\Lambda\downarrow 0} r_c \kappa_c(Q,M,\Lambda)=&\:1,\\
		\lim_{\Lambda\downarrow 0} \kappa_+(Q,M,\Lambda)=&\:\frac{r_+-r_-}{2r_+^2}.
	\end{align*}
	\item We can expand:
\begin{align*}
	\frac{\Lambda}{3}r_c^2=&\:1-(r_++r_-)\kappa_c+r_+O(\kappa_c),\\
	\kappa_c r_c=&\: 1-\frac{3}{2}(r_++r_-)\kappa_c+r_+^2O(\kappa_c^2).
\end{align*}
		\end{enumerate}
\end{lemma}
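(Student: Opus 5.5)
The plan is to treat Lemma \ref{lm:surfacegrav} as elementary algebra on the quartic $P(r):=r^2\Omega^2(r)=r^2-2Mr+Q^2-\frac{\Lambda}{3}r^4$. I will prove (i) and (ii) first, then take limits in (iii), and finally expand to get (iv).

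For (i), I note that $P$ has leading coefficient $-\frac{\Lambda}{3}$ and vanishing cubic coefficient. Since $r_-,r_+,r_c$ are roots, the fourth root must be $r_4=-(r_c+r_++r_-)$. This gives $P=\frac{\Lambda}{3}(r_c-r)(r-r_+)(r-r_-)(r+r_c+r_++r_-)$. Comparing the $r^2$, $r^1$ and $r^0$ coefficients with Vieta's formulas then gives the three identities for $\frac{3}{\Lambda}$, $2M\frac{3}{\Lambda}$ and $Q^2\frac{3}{\Lambda}$. For instance, the $r^2$ coefficient gives $\frac{3}{\Lambda}=-\sum_{i<j}r_ir_j$ with $r_4$ inserted, which expands to the stated symmetric expression. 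The expressions for $M$ and $Q^2$ in terms of $(r_+,\kappa_+)$ come from solving two equations:
\begin{equation*}
P(r_+)=0,\qquad P'(r_+)=2\kappa_+r_+^2 .
\end{equation*}
The second equation holds because $P'(r_+)=r_+^2\frac{d\Omega^2}{dr}(r_+)$, using $\Omega^2(r_+)=0$. The two equations are
\begin{align*}
r_+^2-2Mr_++Q^2-\tfrac{\Lambda}{3}r_+^4&=0,\\
2r_+-2M-\tfrac{4\Lambda}{3}r_+^3&=2\kappa_+r_+^2,
\end{align*}
which are linear in $(M,Q^2)$. Solving them yields $M=r_+(1-\kappa_+r_+-\frac{2\Lambda}{3}r_+^2)$ and $Q^2=r_+^2(1-2\kappa_+r_+-\Lambda r_+^2)$. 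At $r_c$ the same computation applies, with $P'(r_c)=-2\kappa_cr_c^2$.

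For (ii), I differentiate the factorization from (i) at $r=r_+$ and at $r=r_c$. Only the term in which the vanishing factor is differentiated survives. This gives $P'(r_+)=\frac{\Lambda}{3}(r_c-r_+)(r_+-r_-)(2r_++r_c+r_-)$, and dividing by $2r_+^2$ gives $\kappa_+$. The case of $\kappa_c$ is the same, keeping track of the sign.

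For (iii), I fix $M$ and $Q$ and let $\Lambda\downarrow0$. Then $r_\pm\to M\pm\sqrt{M^2-Q^2}$ by continuity of simple roots in the sub-extremal case, and by continuity of the root set in general. I also get $r_c\to\infty$, since $P(r)\to r^2-2Mr+Q^2>0$ for large $r$ and the largest root must escape. Dividing the identity for $\frac{3}{\Lambda}$ by $r_c^2$ gives $\frac{\Lambda}{3}r_c^2\to1$. The formula for $\kappa_c$ in (ii) gives $r_c\kappa_c=\frac{\Lambda}{6}r_c^{-1}(r_c-r_+)(r_c-r_-)(2r_c+\dots)\to\frac{\Lambda}{3}r_c^2\to1$. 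The formula for $\kappa_+$ tends to $\frac{1}{6}\cdot3r_c^{-2}\cdot r_c\cdot(r_+-r_-)\cdot r_c\cdot r_+^{-2}$, which is $\frac{r_+-r_-}{2r_+^2}$.

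For (iv), I write $x=\frac{\Lambda}{3}r_c^2$ and expand the formula for $\kappa_cr_c$ in powers of $r_c^{-1}$:
\begin{equation*}
\kappa_cr_c=x\left(1-\tfrac{r_++r_-}{r_c}\right)\left(1+\tfrac{r_++r_-}{2r_c}\right)+O(r_c^{-2}) .
\end{equation*}
Since $r_c^{-1}=\kappa_c(1+O(\kappa_c))$ by (iii), the identity for $\frac{3}{\Lambda}$ gives $x=(1+(r_++r_-)r_c^{-1}+O(r_c^{-2}))^{-1}=1-(r_++r_-)\kappa_c+O(\kappa_c^2)$. Substituting this in gives $\kappa_cr_c=1-\frac{3}{2}(r_++r_-)\kappa_c+O(\kappa_c^2)$.

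The only mildly delicate step is this bookkeeping in (iv): replacing $r_c^{-1}$ by $\kappa_c$ consistently to second order, with errors uniform in the bounded quantities $r_\pm$. I would organize it as one expansion in the small parameter $\epsilon=r_c^{-1}$ and then invert $\kappa_c=\epsilon(1+O(\epsilon))$.
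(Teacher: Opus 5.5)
Your route is the same as the paper's: the vanishing $r^3$-coefficient identifies the fourth root $-(r_c+r_++r_-)$, Vieta gives the remaining coefficients, and the linear system $P(r_\bullet)=0$, $P'(r_\bullet)=\pm2\kappa_\bullet r_\bullet^2$ gives $(M,Q^2)$. Your arguments for (ii)--(iv) are fine, and in (iv) you even get $O(\kappa_c^2)$, stronger than the printed $r_+O(\kappa_c)$. The flaw is in (i). Twice you assert that the computation ``gives'' the displayed identities without carrying it out, and in both places it does not, because the statement contains misprints. As written, your proof would be certifying false formulas.

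\textbf{The $r^1$-coefficient.} It gives
$2M\cdot\frac{3}{\Lambda}=(r_c+r_++r_-)(r_cr_++r_cr_-+r_+r_-)-r_cr_+r_-=r_c^2(r_++r_-)+r_+^2(r_c+r_-)+r_-^2(r_c+r_+)+2r_cr_+r_-$.
This must be symmetric in $(r_c,r_+,r_-)$. The printed version instead has $r_+^2(r_c+r_+)+r_-^2(r_c+r_-)$, which is not.

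\textbf{The formulas at $r_c$.} Your own system is $P(r_c)=0$ and $2r_c-2M-\frac{4\Lambda}{3}r_c^3=-2\kappa_cr_c^2$. This is the $r_+$ system with $\kappa_+$ replaced by $-\kappa_c$, so it yields $M=r_c\bigl(1+\kappa_cr_c-\frac{2\Lambda}{3}r_c^2\bigr)$ and $Q^2=r_c^2\bigl(1+2\kappa_cr_c-\Lambda r_c^2\bigr)$, with plus signs rather than the printed minus signs.

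Three checks confirm the corrected signs:
\begin{itemize}
\item The printed signs contradict (iii): they give $M/r_c\to1-1-2=-2$ as $\Lambda\downarrow0$, whereas $M/r_c\to0$.
\item The roots $(r_-,r_+,r_c)=(1,2,3)$ correspond to $\Lambda=\frac{3}{25}$, $M=\frac{6}{5}$, $Q^2=\frac{36}{25}$, $\kappa_c=\frac{1}{25}$. Then $2M\cdot\frac{3}{\Lambda}=60$, while the printed formula gives $63$, and $r_c(1-\kappa_cr_c-\frac{2\Lambda}{3}r_c^2)=\frac{12}{25}\neq M$.
\item The paper uses the plus signs downstream: the $\rho_c^2$-coefficient $1+6\kappa_cr_c-2\Lambda r_c^2$ in Lemma~\ref{lm:metricest} equals $1-6M/r_c+6Q^2/r_c^2$ only with them.
\end{itemize}

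Carry out these two computations explicitly and state the corrected identities rather than the printed ones.
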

\begin{proof}
Straightforward consequences of expanding the following identity, which uses that $r^2\Omega^2$ is a fourth-order polynomial:
\begin{equation*}
	C_0(r_c-r)(r-r_+)(r-r_-)(r-r_0)=r^2\Omega^2(r)=r^2-2Mr+Q^2-\frac{\Lambda}{3} r^4. \qedhere
\end{equation*}	
Furthermore, the expressions for $M$ and $Q$ in terms of $\kappa_+$, $r_+$ and $\Lambda$ follow from solving the system:
\begin{align*}
	\Omega^2(r_+)=&\:0,\\
	\frac{1}{2}\frac{d\Omega}{dr}(r_+)=&\:\kappa_+.
\end{align*}
Similarly, the expressions for $M$ and $Q$ in terms of $\kappa_c$, $r_c$ and $\Lambda$ follow from $\Omega^2(r_+)=0$ and $\frac{1}{2}\frac{d\Omega}{dr}(r_c)=-\kappa_c$.
\end{proof}

It will be convenient to characterize the Reissner--Nordstr\"om spacetimes via the triple $(r_+,\kappa_+,\kappa_c)$ instead of $(\Lambda,M,Q)$. In much of the analysis, we will also simplify the notation by fixing, without loss of generality, $r_+=1$.

\subsection{Notation: inequalities and Big-O notation}
\label{sec:Bignot}
We first introduce the following notation. Let $h:I\to \R$ be a non-polynomial function. We use $O_s(h)$ to group all functions $f: I\to \R$ satisfying the following estimate: let $s,l\in (0,\infty]$, with $l\leq s$. Then there exists a constant $C>0$ such that for all $x\in I$ and $l\leq s$:
\begin{equation*}
\left|\frac{d^lf}{dx^l}\right|(x)\leq C \left|\frac{d^lh}{dx^l}\right|(x).
\end{equation*}
We write $O(h)=O_0(h)$ and when the above estimate holds for any $s\in \N_0$, we write $O_{\infty}(h)$.

We extend the above notation to polynomial $h$ as follows. We write $O_{s}(x^{m})$ for $m\in \N_0$, if
\begin{equation*}
\left|\frac{d^lf}{dx^l}\right|(x)\leq Cx^{-m-l}
\end{equation*}
for all $l\leq s$. When $m=0$, we write $O_s(x^0)$ to group functions $f$ satisfying:
\begin{equation*}
\left|\frac{d^lf}{dx^l}\right|(x)\leq Cx^{-l}
\end{equation*}
for all $l\leq s$. 

When the constant $C$ is not entirely uniform and depends on an additional parameter $\lambda$, we will place the parameter in the superscript and write $O_s^{\lambda}(h)$.

Furthermore, given a constant $\alpha\in \C$ and $s\in \R$, we also use the notation $O(\alpha^s)$ to group \underline{constants} $\beta\in \C$ satisfying $|\beta|\leq C |\alpha|^s$, with $C$ independent of $\alpha$.

We will use $c,C$ to denote constants appearing in inequalities for functions $f$ that are \emph{uniform} in the following sense: $c,C$ only depend on $r_+$,  the choice of foliation defining function $\widetilde{\h}$, or equivalently $\h$, introduced in \S \ref{sec:foliations} and $\q=\mathfrak{q}Q$, the charge parameter introduced in \S \ref{sec:CSF}. If $c$ or $C$ have additional dependencies, we will denote them explicitly. To ease the notion in the paper, we will also adhere to the following convention (the  ``algebra of constants''):
\begin{equation*}
C\cdot C=C,\quad c\cdot c=c, \quad c^{-1}=C,\quad \quad C+C=C,\quad c+c=c,\quad C+c=C.
\end{equation*}
Furthermore, we will use the notation $f\lesssim h$ when $|f|\leq C |h|$, $f\gtrsim h$ when $|f|\geq c |h|$ and $f\sim h$ when $f\lesssim h$ and $f\gtrsim h$.

\subsection{Properties of $r^{-2}\Omega^2$}
In the lemma below, we will describe the asymptotic behaviour of $r^{-2}\Omega^2$ on near-extremal Reissner--Nordstr\"om--de Sitter spacetimes. For this purpose, it is convenient to introduce the following alternative radial coordinates:
\begin{align*}
	\rho_+:=&\: r_+^{-1}-r^{-1},\\
	\rho_c:=&\: r^{-1}-r_c^{-1}.
\end{align*}
Note that $\rho_+,\rho_c\in [0,r_+^{-1}-r_c^{-1})$.
\begin{lemma}
\label{lm:metricest}
The following estimates hold:
	\begin{align*}
	r^{-2}\Omega^2(r(\rho_+))=&\:2\kappa_+ \rho_++(1-6\kappa_+ r_+-2\Lambda r_+^2)\rho_+^2+(-2r_++6\kappa_+ r_+^2+\frac{8}{3}\Lambda r_+^3)\rho_+^3\\
	&+r_+^2\left(1-2\kappa_+ r_+-\Lambda r_+^2\right)\rho_+^4,\\
	\frac{d(r^{-2}\Omega^2)}{d\rho_+}(r(\rho_+))=&\: 2\kappa_+ +2\left[1-6r_+\kappa_+-2\Lambda r_+^2\right]\rho_++2(-3r_++9\kappa_+ r_+^2+4\Lambda r_+^3)\rho_+^2\\
	&+4r_+^2\left(1-2\kappa_+ r_+-\Lambda r_+^2\right)\rho_+^4,\\
	\frac{d^2(r^{-2}\Omega^2)}{d\rho_+^2}(r(\rho_+))=&\:2\left[1-6r_+\kappa_+-2\Lambda r_+^2\right]-4(3r_+-9\kappa_+ r_+^2-4\Lambda r_+^3)\rho_+\\
	&+12r_+^2\left(1-2\kappa_+ r_+-\Lambda r_+^2\right)\rho_+^4,\\
	r^{-2}\Omega^2(r(\rho_c))=&\:2\kappa_c \rho_c+(1+6\kappa_c r_c-2\Lambda r_c^2)\rho_c^2+(2r_c+6\kappa_c r_c^2-\frac{8}{3}\Lambda r_c^3)\rho_c^3\\
	&+r_+^2\left(1-2\kappa_+ r_+-\Lambda r_+^2\right)\rho_c^4\\
	=&\:2\kappa_c \rho_c+(1+O(\kappa_c))\rho_c^2+(-2r_++2\kappa_+ r_+^2+O(\kappa_c))\rho_c^3\\
	&+r_+^2\left(1-2\kappa_+ r_+-\Lambda r_+^2\right)\rho_c^4,\\
	\frac{d(r^{-2}\Omega^2)}{d\rho_c}(r(\rho_c))=&\: 2\kappa_c +2\left[1+6r_c\kappa_c-2\Lambda r_c^2\right]\rho_c+2(3r_c+9\kappa_c r_c^2\\
	&+4r_+^2\left(1-2\kappa_+ r_+-\Lambda r_+^2\right)\rho_c^4\\
	=&\:2\kappa_c +2(1+O(\kappa_c))\rho_c+3(-2r_++2\kappa_+ r_+^2+O(\kappa_c))\rho_c^2\\
	&+4r_+^2\left(1-2\kappa_+ r_+-\Lambda r_+^2\right)\rho_c^4,\\
	\frac{d^2(r^{-2}\Omega^2)}{d\rho_c^2}(r(\rho_c))=&\:2\left[1+6r_c\kappa_c-2\Lambda r_c^2\right]+4(3r_c+9\kappa_c r_c^2-4\Lambda r_c^3)\rho_c\\
	&+12r_+^2\left(1-2\kappa_+ r_+-\Lambda r_+^2\right)\rho_c^4\\
	=&\:2(1+O(\kappa_c))+4(-3r_++3\kappa_+ r_+^2+O(\kappa_c))\rho_c+12r_+^2\left(1-2\kappa_+ r_+-\Lambda r_+^2\right)\rho_c^4.
	\end{align*}
\end{lemma}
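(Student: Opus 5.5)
The identities in Lemma \ref{lm:metricest} are exact polynomial identities in $\rho_+$ (respectively $\rho_c$), so I would prove them by direct substitution rather than by any asymptotic argument.

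\textbf{Expansion in $\rho_+$.} Since $r^{-1}=r_+^{-1}-\rho_+$, I substitute this into
\begin{equation*}
r^{-2}\Omega^2=r^{-2}-2Mr^{-3}+Q^2r^{-4}-\frac{\Lambda}{3}.
\end{equation*}
This gives a quartic polynomial in $\rho_+$. I then replace $M$ and $Q^2$ using Lemma \ref{lm:surfacegrav}(i):
\begin{equation*}
M=r_+\Big(1-\kappa_+r_+-\tfrac{2\Lambda}{3}r_+^2\Big),\qquad Q^2=r_+^2\big(1-2\kappa_+r_+-\Lambda r_+^2\big).
\end{equation*}
The $\rho_+^0$ coefficient must vanish because $\Omega^2(r_+)=0$. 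The $\rho_+^1$ coefficient equals $\frac{d}{d\rho_+}(r^{-2}\Omega^2)|_{r_+}=r_+^2\frac{d\Omega^2}{dr}(r_+)$, which becomes $2\kappa_+$ after normalizing by $r_+$. I compute the coefficients of $\rho_+^2$ and $\rho_+^3$ by binomially expanding $(r_+^{-1}-\rho_+)^k$ for $k=2,3,4$. The $\rho_+^4$ coefficient is simply $Q^2$, which matches the stated $r_+^2(1-2\kappa_+r_+-\Lambda r_+^2)$.

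\textbf{Derivatives in $\rho_+$.} The two derivative formulas follow by differentiating this polynomial term by term in $\rho_+$. Along the way I will check the displayed powers and coefficients; for instance, the derivative of $Q^2\rho_+^4$ is $4Q^2\rho_+^3$, so the stated exponents should be read accordingly.

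\textbf{Expansion in $\rho_c$.} I repeat the procedure with $r^{-1}=r_c^{-1}+\rho_c$, this time using the $r_c$-forms of $M$ and $Q^2$ from Lemma \ref{lm:surfacegrav}(i). Again the constant term vanishes, and the linear coefficient is $-r_c^2\frac{d\Omega^2}{dr}(r_c)=2\kappa_c$ (up to the same normalization). The quartic coefficient is $Q^2$, which can be written in the $r_+$-form.

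\textbf{Second forms via small $\kappa_c$.} To obtain the second expressions, with $O(\kappa_c)$ errors, I use Lemma \ref{lm:surfacegrav}(iv): $\kappa_cr_c=1+O(\kappa_c)$ and $\frac{\Lambda}{3}r_c^2=1+O(\kappa_c)$. The $\rho_c^2$ coefficient then becomes $1+6\kappa_cr_c-2\Lambda r_c^2=1+O(\kappa_c)$. The $\rho_c^3$ coefficient must be rewritten using the exact identity for $M$: it equals $-2M$ plus cross terms. Writing $M=r_+(1-\kappa_+r_+)+O(\Lambda)$, where $\Lambda=O(\kappa_c^2)$, yields $-2r_++2\kappa_+r_+^2+O(\kappa_c)$.

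The only delicate point is bookkeeping: the $r_c$-representation involves large quantities ($r_c\sim\kappa_c^{-1}$) that cancel. I would avoid handling these cancellations directly by expressing everything in terms of $M$, $Q$ and $\Lambda$ before expanding. In that form the coefficients are simply
\begin{equation*}
\rho_c^2:\ 1-6Mr_c^{-1}+6Q^2r_c^{-2},\qquad \rho_c^3:\ -2M+4Q^2r_c^{-1},\qquad \rho_c^4:\ Q^2,
\end{equation*}
and the $O(\kappa_c)$ claims are immediate from $r_c^{-1}=O(\kappa_c)$.
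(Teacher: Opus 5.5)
Your route is the paper's own: its proof is just a Taylor expansion of $r^{-2}\Omega^2$ in $\rho_+$ or $\rho_c$, followed by substituting the expressions for $M$ and $Q$ from Lemma~\ref{lm:surfacegrav}. Since $r^{-2}\Omega^2=r^{-2}-2Mr^{-3}+Q^2r^{-4}-\frac{\Lambda}{3}$ is a quartic in $r^{-1}$, your binomial substitution is exactly that expansion. You are also right that the top powers in the derivative formulas should be $\rho^3$ and $\rho^2$, not $\rho^4$. Two slips need fixing, though.

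First, the chain rule. Since $d\rho_+/dr=r^{-2}$, you have $\frac{d}{d\rho_+}(r^{-2}\Omega^2)=\frac{d\Omega^2}{dr}-2r^{-1}\Omega^2$. So the linear coefficient is exactly $\frac{d\Omega^2}{dr}(r_+)=2\kappa_+$, with no normalization, and not $r_+^2\frac{d\Omega^2}{dr}(r_+)$. Likewise $\frac{d}{d\rho_c}(r^{-2}\Omega^2)\big|_{r_c}=-\frac{d\Omega^2}{dr}(r_c)=2\kappa_c$. The factor $r_c^2\sim\kappa_c^{-2}$ in your version could not be normalized away.

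Second, the $r_c$-forms of $M$ and $Q^2$ as printed in Lemma~\ref{lm:surfacegrav}(i) have the wrong sign on the $\kappa_c$-terms. Solving $\Omega^2(r_c)=0$ together with $\frac{d\Omega^2}{dr}(r_c)=-2\kappa_c$ gives
\begin{equation*}
M=r_c\Big(1+\kappa_cr_c-\tfrac{2\Lambda}{3}r_c^2\Big),\qquad Q^2=r_c^2\big(1+2\kappa_cr_c-\Lambda r_c^2\big).
\end{equation*}
If you substitute the printed versions, you get $-2\kappa_c\rho_c+(1-6\kappa_cr_c-2\Lambda r_c^2)\rho_c^2+(2r_c-6\kappa_cr_c^2-\frac{8}{3}\Lambda r_c^3)\rho_c^3+\dots$, which contradicts the statement. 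With the corrected forms, your coefficients $1-6Mr_c^{-1}+6Q^2r_c^{-2}$ and $-2M+4Q^2r_c^{-1}$ reproduce the stated exact ones. Your derivation of the $O(\kappa_c)$ versions from these is correct, using $r_c^{-1}=O(\kappa_c)$, $\Lambda=O(\kappa_c^2)$ and the $r_+$-form of $M$.
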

\begin{proof}
Follows from a Taylor expansion of $r^{-2}\Omega^2$ in $\rho_+$ or $\rho_c$, combined with the expressions for $\kappa_+$, $\kappa_c$ $M$ and $Q$ in Lemma \ref {lm:surfacegrav}.
\end{proof}

The function $\frac{d}{dr}\left(r^{-2}\Omega^2\right)$ has a single root $r=r_{\sharp}$ in $(r_+,r_c)$, with
\begin{equation*}
r_{\sharp}=\frac{3M}{2}+\frac{1}{2}\sqrt{9M^2-8Q^2}.
\end{equation*}
The hypersurface $\{r=r_{\sharp}\}$ is called the \emph{photon sphere}. Observe that it does not depend on $\Lambda$.

We define the \emph{tortoise coordinate} $r_*: (r_+,r_c)\to \R$ as the solution to the ODE:
\begin{align*}
	\frac{dr_*}{dr}=&\:\frac{1}{\Omega^2},\\
	r_*(r_{\sharp})=&\:0.
\end{align*}
We can write
\begin{equation*}
	\frac{r^2}{\Omega^2}(\rho_+)=\frac{1}{\rho_+(\rho_++2\kappa_+)}(1+O_{\infty}(r_+\rho_+)+\frac{r_+\kappa_c^2}{\kappa_+}O_{\infty}(r_+\rho_+)).
\end{equation*}
The function $r_*$ therefore satisfies the following asymptotic properties:
\begin{multline}
\label{eq:tortoiseevent1}
	r_*(r(\rho_+))=\int_{r_0}^{r(\rho_+)} \frac{1}{\Omega^2}(r')\,dr'=\int_{\rho_+(r_0)}^{\rho_+} \left(\frac{r^2}{\Omega^2}\right)(r(\rho_+'))\,d\rho_+'\\
	=\int_{\rho_+(r_0)}^{\rho_+}\frac{1}{\rho_+'(\rho_+'+2\kappa_+)}(1+O_{\infty}(r_+\rho_+')+\frac{r_+\kappa_c^2}{\kappa_+}O_{\infty}(r_+\rho_+')) \,d\rho_+'\\
	=\frac{1}{2\kappa_+}\log\left(\frac{\rho_+}{\rho_++2\kappa_+}\right)+\log(\rho_++2\kappa_+)\left[O_{\infty}((r_+\rho_+)^0)+\frac{r_+\kappa_c^2}{\kappa_+}O_{\infty}((r_+\rho_+)^0)\right].
\end{multline}
When $\kappa_c\leq \kappa_+<\rho_+$, we can expand further to obtain
\begin{equation}
\label{eq:tortoiseevent2}
	r_*(r(\rho_+))=-\rho_+^{-1}\left[1+O_{\infty}(\kappa_+ \rho_+^{-1})\right]+r_+\log(r_+\rho_+)O((r_+\rho_+)^0).
\end{equation}
When $\kappa_c\leq \kappa_+$ and $\rho_+<\kappa_+$, we can expand further to obtain
\begin{equation}
\label{eq:tortoiseevent3}
	r_*(r(\rho_+))=-\frac{1}{2\kappa_+}\left[\log(2\kappa_+\rho_+^{-1})+O_{\infty}(\kappa_+^{-1} \rho_+)^0\right].
\end{equation}
We repeat the above computations with the role of $\rho_+$ taken on by $\rho_c$ to obtain:
\begin{equation}
	\label{eq:tortoisecosmo1}
	r_*(r(\rho_c))=-\frac{1}{2\kappa_c}\log\left(\frac{\rho_c}{\rho_c+2\kappa_c}\right)+\log(\rho_c+2\kappa_c)O_{\infty}((r_+\rho_c)^0).
\end{equation}
When $\kappa_c<\rho_c$, we can expand further to obtain
\begin{equation}
\label{eq:tortoisecosmo2}
	r_*(r(\rho_c))=\rho_c^{-1}\left[1+O_{\infty}(\kappa_c \rho_c^{-1})\right]+r_+\log(r_+\rho_c )O((r_+\rho_+)^0).
\end{equation}
When $\rho_c<\kappa_c$, we can expand further to obtain
\begin{equation}
\label{eq:tortoisecosmo3}
	r_*(r(\rho_c))=\frac{1}{2\kappa_c}\left[\log(2\kappa_c\rho_c^{-1})+O_{\infty}(\kappa_c^{-1} \rho_+)^0\right]
\end{equation}
We conclude in particular that in all cases $r_*((r_+,r_c))=\R$.

Let $t=v-r_*$. Then the metric $g_{M,Q,\Lambda}$ takes the following form in the interior of $\mathcal{M}_{M,Q,\Lambda}$, $\mathring{\mathcal{M}}_{M,Q,\Lambda}=\R_v\times (r_+,r_c)_r\times \s^2$, with respect to $(t,r,\theta,\varphi)$ coordinates:
\begin{equation*}
g_{M,Q,\Lambda}=-\Omega^2 dt^2+\Omega^{-2}dr^2+r^2(d\theta^2+\sin^2\theta d\varphi^2).
\end{equation*}

\subsection{Foliations}
\label{sec:foliations}
We will foliate $\mathcal{M}_{M,Q,\Lambda}$ by spacelike hypersurfaces $\Sigma_{\tau}$, where $\tau\in \R$. To construct $\Sigma_{\tau}$, we first introduce the foliation-defining function $\widetilde{\mathbbm{h}}:[r_+,r_c)\to  [0,\infty)$, which is a smooth function satisfying:
\begin{equation*}
2-\widetilde{\mathbbm{h}}\Omega^2\geq 0.
\end{equation*}
We moreover define $\h: [r_+,r_c)\to  [0,\infty)$ as follows:
\begin{equation*}
\h(r):=2-\widetilde{\mathbbm{h}}\Omega^2(r).
\end{equation*}
The time function $\tau: \mathcal{M}_{M,Q,\Lambda}\to \R$ is then defined as follows: 
\begin{equation*}
\tau(v,r,\theta,\varphi)=v-\int_{r_+}^r \widetilde{\mathbbm{h}}(r')\,dr'.
\end{equation*}
We denote the level sets of constant $\tau$ by $\Sigma_{\tau}$. Since
\begin{equation*}
g^{-1}_{M,Q,\Lambda}(d\tau,d\tau)=-\widetilde{\h}\h\leq 0,
\end{equation*}
we have that $\Sigma_{\tau}$ must consist of spacelike or null segments. Furthermore, if $\h>0$ and $\widetilde{\h}>0$, then $\Sigma_{\tau}$ are smooth spacelike hypersurfaces for all $\tau\in \R$.

We record the following identity that we will appeal to later:
\begin{equation}
\label{eq:identityhtildeh}
\Omega^2\h \widetilde{\h}=1-(1-\widetilde{\h}\Omega^2)^2.
\end{equation}

We can express:
\begin{equation}
\label{eq:invmetrictaucoord}
g_{M,Q,\Lambda}^{-1}=- \mathbbm{h}\widetilde{\mathbbm{h}} \partial_{\tau}\otimes \partial_{\tau}+(1-\widetilde{\h}\Omega^2)(\partial_{\tau}\otimes \partial_{r}+\partial_r\otimes \partial_{\tau})+\Omega^2\partial_r\otimes \partial_r+r^{-2}\slashed{g}_{\s^2}^{-1}.
\end{equation}
By applying \eqref{eq:identityhtildeh}, it then follows that $\sqrt{-\det g_{M,Q,\Lambda}}=r^2\sin\theta$.

Note that $v|_{\Sigma_{\tau}}(r_+)=\tau$, so $\Sigma_{\tau}\cap \mathcal{H}^+=\{r=r_+\}\cap\{v=\tau\}$. Define $u(\tau,r):=v(\tau,r)-2r_*$. Then $u(\tau,r_{\sharp})=v(\tau,r_{\sharp})=\tau+\int_{r_+}^{r_{\sharp}} \widetilde{\mathbbm{h}}(r')\,dr'$. To conclude that the restriction $\left|u|_{\Sigma_{\tau}}\right|(r)$ is bounded as $r\to r_c$, we express:
\begin{equation*}
u(\tau,r)=\tau+\int_{r_{\sharp}}^r (\widetilde{\mathbbm{h}}-2\Omega^{-2})(r')\,dr'+\int_{r_+}^{r_{\sharp}} \widetilde{\mathbbm{h}}(r')\,dr'=\tau-\int_{r_{\sharp}}^r(\Omega^{-2}\h)(r')\,dr'+\int_{r_+}^{r_{\sharp}} \widetilde{\mathbbm{h}}(r')\,dr'.
\end{equation*}
Hence, $\lim_{r\to r_c}u(\tau,r)$ is well-defined if $\Omega^{-2}\h$ is integrable on $[r_{\sharp},r_c)$. For the sake of convenience, we will occasionally make one of the two following assumptions:
\begin{enumerate}
\item Either there exists $h_0>0$ such that for $r\geq r_{\sharp}$:
\begin{equation*}
\Omega^{-2}\h(r)=h_0 r^{-2}+O_{\infty}(r^{-3}),
\end{equation*}
\item Or there exist $r_+<r_H<r_I<r_c$, such that $\widetilde{\h}(r)=0$ when $r\leq r_H$ and  $\h(r)=0$ when $r\geq r_I$.
\end{enumerate}

Finally, it will be convenient to express the metric in $(u,x,\theta,\varphi)$ coordinates, with $x=\frac{1}{r}$:
\begin{equation*}
g=r^2(\Omega^2x^2du^2+2dudx+\slashed{g}_{\s^2}).
\end{equation*}
We can express with respect to $(u,x)$ coordinates: $\mathcal{M}_{M,Q,\Lambda}\cong \mathcal{H}^+\cup \R_u\times (0,r_+^{-1})_x\times \s^2$. Then we extend $\mathcal{M}_{M,Q,\Lambda}$ by defining:
\begin{equation*}
\widehat{\mathcal{M}}_{M,Q,\Lambda}:=\mathcal{H}^+\cup (\R_u\times [0,r_+^{-1})_x\times \s^2).
\end{equation*}
We denote the level set $\{x=0\}$ of $\widehat{\mathcal{M}}_{M,Q,\Lambda}$ as follows:
\begin{align*}
\mathcal{I}^+:=&\:\{x=0\}\subset \widehat{\mathcal{M}}_{M,Q,\Lambda}\quad (\Lambda=0),\\
\mathcal{C}^+:=&\:\{x=0\}\subset \widehat{\mathcal{M}}_{M,Q,\Lambda}\quad (\Lambda>0).
\end{align*}
We refer to $\mathcal{I}^+$ as \emph{future null infinity} and to $\mathcal{C}^+$ as the \emph{cosmological horizon}.

When integrating over the unit  round sphere $\s^2$, we will make use of the notation:
\begin{equation*}
d\sigma:=\sin\theta d\theta d\varphi.
\end{equation*}

\subsection{Couch--Torrence radial coordinates}
\label{sec:couchtorrencerad}
The following alternative radial coordinates will be convenient to make manifest symmetries between the event and cosmological horizons:
\begin{align*}
	s_+:=&\: \frac{1}{\rho_+}-r_+,\\
	s_c:=&\: \frac{1}{\rho_c}-r_+.
\end{align*}
Since $\rho_+,\rho_c\in [0,r_+^{-1}-r_c^{-1})$, we have that $s_+,s_c\in (\frac{r_+^2}{r_c-r_+},\infty)$. When $\kappa_c=0$, we have that $s_c=r-r_+$ and $s_+=\frac{r_+^2}{r-r_+}$.

Furthermore, we can express:
\begin{align}
\label{eq:scintermss+}
	r_+^{-1}s_c=&\:\frac{1+r_+r_c^{-1}(r_+^{-1}s_++r_+r_c^{-1}) }{(1-r_+r_c^{-1})r_+^{-1}s_++r_+^2 r_c^{-2}},\\
	\label{eq:s+intermssc}
	r_+^{-1}s_+=&\:\frac{1+(r_+ r_c^{-1})^{2}(1-r_+^{-1}s_c)}{(1-r_+r_c^{-1})r_+^{-1}s_c-r_+r_c^{-1}}.
\end{align}

We will refer to $s_+$ and $s_c$ as \emph{Couch--Torrence radial coordinates} and the map $s_c\mapsto s_+$ as a \emph{Couch--Torrence transformation}, see \cite{couch}. This map is equivalent to $\rho_c\mapsto \rho_+$.

As the lemma below shows, the map $s_c\mapsto s_+$ plays a special role in the extremal case $\kappa_c=\kappa_+=0$.

\begin{lemma}
\label{lm:couchtorr}
Let $\kappa_c=\kappa_+=0$ and $r_+=1$. Consider the functions $r, \Omega, r_*: [0,\infty)\to [1,\infty)$, which are defined as follows: $r(s_c):=s_c+1$, $\Omega(s_c):=(s_c^{-1}+1)^{-1}$ and $r_*(s_c):=\int_{1}^{s_c}\frac{1}{\Omega^2(s_c)}\,ds_c$. Then
\begin{align}
\label{eq:couchtorr1}
\Omega(s_+)=&\:\frac{1}{r(s_c)},\\
\label{eq:couchtorr2}
\Omega(s_c)=&\:\frac{1}{r(s_+)},\\
\label{eq:couchtorr3}
\left(r^{-1}\Omega^2\frac{d\Omega^2}{dr}\right)(s_c)=&\:\left(r^{-1}\Omega^2\frac{d\Omega^2}{dr}\right)(s_+),\\
\label{eq:couchtorr1b}
r_*(s_c)=&\:-r_*(s_+(s_c)).
\end{align}
\end{lemma}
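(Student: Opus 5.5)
The plan is a direct computation. The key observation is that in the extremal case the Couch--Torrence transformation is simply the inversion $s_c\mapsto s_c^{-1}$; once this is established, each identity reduces to an elementary check or a change of variables.

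\textbf{Step 1: the inversion.} When $\kappa_c=\kappa_+=0$ and $r_+=1$, Lemma \ref{lm:surfacegrav} gives $r_c=\infty$, $\Lambda=0$ and $M=Q^2=1$. Hence $\Omega^2=(1-r^{-1})^2$, and since $r\geq 1$ we may take $\Omega=1-r^{-1}=\frac{r-1}{r}$. Because $\rho_c=r^{-1}$, we get $s_c=r-1$. This agrees with the stated definitions: $r(s_c)=s_c+1$, and $\Omega(s_c)=\frac{s_c}{1+s_c}=(s_c^{-1}+1)^{-1}$. Next, $\rho_+=1-r^{-1}=\frac{r-1}{r}$, so
\begin{equation*}
s_+=\rho_+^{-1}-1=\frac{r}{r-1}-1=\frac{1}{r-1}=s_c^{-1}.
\end{equation*}
Since the functions $r,\Omega,r_*$ are defined on the half-line in terms of a single variable $s$, each identity in the lemma amounts to comparing a function evaluated at $s$ with the same function evaluated at $s^{-1}$.

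\textbf{Step 2: identities \eqref{eq:couchtorr1} and \eqref{eq:couchtorr2}.} For \eqref{eq:couchtorr1}, we have $\Omega(s_c^{-1})=(s_c+1)^{-1}=r(s_c)^{-1}$. Identity \eqref{eq:couchtorr2} follows by the same computation with the roles of $s_c$ and $s_+$ exchanged, because the inversion is an involution.

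\textbf{Step 3: identity \eqref{eq:couchtorr3}.} Using $\frac{d\Omega^2}{dr}=2(1-r^{-1})r^{-2}$, we obtain
\begin{equation*}
r^{-1}\Omega^2\frac{d\Omega^2}{dr}=2\Omega^3r^{-3}=\frac{2s^3}{(1+s)^6}.
\end{equation*}
Replacing $s$ by $s^{-1}$ gives $2s^{-3}s^{6}(1+s)^{-6}$, which is the same expression, so the quantity is invariant under the inversion.

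\textbf{Step 4: identity \eqref{eq:couchtorr1b}.} Write $r_*(s)=\int_1^{s}(1+\sigma^{-1})^2\,d\sigma$. Then $r_*(s^{-1})=\int_1^{s^{-1}}(1+\sigma^{-1})^2\,d\sigma$. Substituting $\sigma=\mu^{-1}$, so that $d\sigma=-\mu^{-2}d\mu$, turns the integrand into $-(1+\mu)^2\mu^{-2}=-(1+\mu^{-1})^2$, with limits running from $1$ to $s$. Therefore $r_*(s^{-1})=-r_*(s)$.

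This normalization is consistent with the earlier convention $r_*(r_\sharp)=0$: here $s_c=1$ corresponds to $r=2=r_\sharp$, since for $M=Q=1$ we have $r_\sharp=\tfrac{3}{2}+\tfrac{1}{2}=2$.

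None of these steps presents a real obstacle. The only point requiring care is Step 1, namely identifying $s_+=s_c^{-1}$ and choosing the sign $\Omega=1-r^{-1}>0$ correctly.
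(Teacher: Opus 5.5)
Your proof is correct and follows the same route as the paper: identify $s_+=s_c^{-1}$ in the extremal case, verify \eqref{eq:couchtorr1}--\eqref{eq:couchtorr3} directly, and obtain \eqref{eq:couchtorr1b} by substituting $s_+=s_c^{-1}$ in the integral defining $r_*$. Your write-up makes explicit the computations that the paper leaves as straightforward, for instance $r^{-1}\Omega^2\frac{d\Omega^2}{dr}=2s^3(1+s)^{-6}$ and the integrand transformation $(1+\mu)^2\mu^{-2}=(1+\mu^{-1})^2$.
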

\begin{proof}
The identities \eqref{eq:couchtorr1} and \eqref{eq:couchtorr2} are immediate consequences of the identity $s_+=s_c^{-1}$. The identity \eqref{eq:couchtorr3} similarly follows straightforwardly.

We then use that
\begin{equation*}
	r_*(s_c)=-\int_{1}^{s_+(s_c)}\frac{1}{\Omega^2(s_c(s_+))}s_+^{-2}\,ds_+=-\int_{1}^{s_+(s_c)}\frac{1}{\Omega^2(s_+)}\,ds_+
\end{equation*}
to conclude \eqref{eq:couchtorr1b}.
\end{proof}
In fact, from the above lemma, it can easily be shown that the map $(t,s_c,\theta,\varphi)\mapsto (t,s_+,\theta,\varphi)$ is a conformal isometry of $(\mathring{\mathcal{M}}_{M,M,0},g_{M,M,0})$.

\subsection{Key vector fields}
The following vector fields will play key roles in the analysis in the remainder of the article. With respect to $(v,r,\theta,\varphi)$ coordinates:
\begin{align*}
T:=&\:\partial_v,\\
X:=&\: \partial_r+\widetilde{\h}\partial_v,\\
\Lbar:=&\: -\frac{\Omega^2}{2}\partial_r,\\
L:=&\: T-\Lbar,\\
Y_*:=&L-\underline{L}=\Omega^2X+(1-\widetilde{\h}\Omega^2)T.
\end{align*}
With respect to $(\tau,r,\theta,\varphi)$ coordinates, we have that $T=\partial_{\tau}$ and $X=\partial_r$.

We will also denote for $\q\in \R$:
\begin{align*}
K_+:=&\: T+i\q r_+^{-1}\mathbf{1},\\
K_c:=&\: T+i\q r_c^{-1}\mathbf{1}.
\end{align*}
where we will fix $\q$ via \eqref{def:q} in \S \ref{sec:CSF}.

\subsection{Frequency decompositions on $\s^2$}
We can decompose any function $f\in L^2(\s^2;\C)$ as follows:
\begin{equation*}
f=\sum_{\ell\in \N_0}\sum_{m\in \Z, |m|\leq \ell} f_{\ell m}Y_{\ell m},
\end{equation*}
with $f_{\ell m}\in \C$ and $Y_{\ell m}\in L^2(\s^2)$ the $(\ell,m)$-th spherical harmonics, which satisfy $\slashed{\Delta}_{\s^2}Y_{\ell m}=-\ell(\ell+1)Y_{\ell m}$ and $\la Y_{\ell m},Y_{\ell' m'}\ra_{L^2(\s^2)}=\delta_{\ell \ell'}\delta_{mm'}$.

We moreover denote:
\begin{align*}
f_{\ell}=&\:\sum_{m\in \Z, |m|\leq \ell} f_{\ell m}Y_{\ell m},\\
f_{\geq \ell}=&\:\sum_{\ell'=\ell}^{\infty}f_{\ell'}.
\end{align*}

\subsection{Charged scalar field equation}
\label{sec:CSF}
The (electrically charged) Reissner--Nordstr\"om--de Sitter solution to the Einstein--Maxwell equations are pairs $(\mathcal{M}_{M,Q,\Lambda},g_{M,Q,\Lambda},F_Q)$, with $(\mathcal{M}_{M,Q,\Lambda},g_{M,Q,\Lambda})$ defined in \S \ref{sec:rndsgeom} and
\begin{equation*}
F=-\frac{Q}{r^2}dv\wedge dr.
\end{equation*}
The constant $Q\in \R$ can be interpreted as the total electrical charge of the Reissner--Nordstr\"om spacetime.\footnote{More generally, one may consider electromagnetically charged Reissner--Nordstr\"om--de Sitter solutions $(\mathcal{M}_{M,e,\Lambda},g_{M,e,\Lambda},F_{Q,P})$, with $P\in \R$, $e=\sqrt{Q^2+P^2}$ and $F_{Q,P}=-\frac{Q}{r^2}dt\wedge dr+P\sin\theta d\theta\wedge d\varphi$. The analysis of the present paper can be generalized to that setting, with the key difference being the replacement of spherical harmonics $Y_{\ell,m}$ with so-called ``monopole harmonics'' $\widetilde{Y}_{j,m,P\mathfrak{q}}$, with eigenvalues $\lambda_j=j(j+1)-(\mathfrak{q}P)^2$, $j\in \N_0+|\mathfrak{q}P|$, $m\in \Z$, $|m|\leq j$, provided that the scalar field charge $\mathfrak{q}$ satisfies $\mathfrak{q}P\in \frac{1}{2}\Z$ (Dirac quantization). }

We define the following 1-form on $\mathring{\mathcal{M}}_{M,e}$: $\widetilde{A}=-\frac{Q}{r}dt$ and observe that $F=d\widetilde{A}$. Note that the 1-form $\widetilde{A}$ is not well-defined on $\mathcal{H}^+$. Define:
\begin{equation*}
\widehat{A}:=-\frac{Q}{r}d\tau.
\end{equation*}
The 1-form $\widehat{A}$ is well-defined everywhere on $\mathcal{M}_{M,Q,\Lambda}$ and also on the extended manifold $\widehat{M}_{M,Q,\Lambda}$. We can relate  $\widetilde{A}$ and $\widehat{A}$ as follows:
\begin{equation*}
\widehat{A}=\widetilde{A}-\frac{Q}{r}(\Omega^{-2}-\widetilde{\mathbbm{h}})dr=\widetilde{A}-d\left(\int_{r_{\sharp}}^r\frac{Q}{r'}(\Omega^{-2}-\widetilde{\mathbbm{h}})(r')\,dr'\right),
\end{equation*}
so we also have that $d\widehat{A}=F$. In the remainder of the article, we will assume that $P=0$.

We will refer to a choice of 1-form $A$ satisfying $F=dA$ as an \emph{electromagnetic gauge}. Given an electromagnetic gauge $A$, we define the following linear operator in terms of the Levi--Civita covariant derivative $\nabla$:
\begin{equation*}
^AD:=\nabla-i\mathfrak{q} A\otimes (\cdot) ,
\end{equation*}
with $\mathfrak{q}\in \R$.

Given a vector field $Y$, we moreover write $^AD_{Y}:=\nabla_Y-i\mathfrak{q} A(Y)\mathbf{1}$.

Given a tensor field $T$, we moreover apply the notational convention: $(^AD_{\mu}) (^AD_{\nu})T:=((^AD)^2T)_{\mu \nu}$.

The \emph{inhomogeneous charged scalar field equation} with inhomogeneity $G_A$ with respect to $(g_{M,Q,\Lambda},A)$ is then defined as follows:
\begin{equation}
\label{eq:CSF}
(g^{-1}_{Q,M,\Lambda})^{\mu \nu}(^AD_{\mu}) (^AD_{\nu})\phi-\frac{2\Lambda}{3}\phi=G_A.
\end{equation}
In the context of \eqref{eq:CSF}, we refer to $\mathfrak{q}$ as the \emph{scalar field charge parameter}. Note that by definition of $D$ and $A$, $\mathfrak{q}$ has units of inverse length and $Q$ has units of length, so the product 
\begin{equation}
\label{def:q}	
\q:=  \mathfrak{q}\cdot Q
\end{equation}
is dimensionless.

We introduce the parameter $\beta_{\ell}\in [0,\infty)\cup i(0,\infty)$, which will play a vital role in the bounded frequency analysis in Fourier space and appears also in the late-time asymptotics of \cite{gaj26b}. Let $\ell\in \N_0$, then
\begin{equation}
	\beta_{\ell}:=\sqrt{(2\ell+1)^2-4q^2}=\sqrt{1+4\ell(\ell+1)-4q^2}.
\end{equation}

Suppose $A'=A+df$ for some $f\in C^{\infty}(\mathring{\mathcal{M}}_{M,Q,\Lambda})$ and $\phi'=e^{i\mathfrak{q}f}\phi$. Then for any vector field $Y$, we have that:
\begin{equation*}
^{A'}D\phi'=e^{i \mathfrak{q}f}(^AD\phi).
\end{equation*}
Hence $\phi$ is a solution to \eqref{eq:CSF} if and only if $\phi'$ satisfies:
\begin{equation}
\label{eq:csfgaugetransf}
(g^{-1}_{M,Q,\Lambda})^{\mu \nu}(^{A'}D)_{\mu}(^{A'}D)_{\nu}\phi'-\frac{2\Lambda}{3}\phi'=e^{i \mathfrak{q}f}G_A=:G_{A'}.
\end{equation}

We refer to $\phi'$ as a \emph{gauge transformation} of $\phi$.

Let $\gamma\in \N_0^3$. We introduce the shorthand notation:
\begin{equation}
\label{eq:defDZgamma}
D_{\mathbf{Z}}^{\gamma}:=(^{A}\slashed{D}_{\s^2})^{\gamma_1}(r\Omega (^{A} D_X))^{\gamma_2}(^{A}D_T)^{\gamma_3},
\end{equation}
with $^{A}\slashed{D}_{\s^2}=\snabla_{\s^2}-i\mathfrak{q} \slashed{A}_{\s^2}\otimes (\cdot)$ and $ \slashed{A}_{\s^2}=A^{\theta}d\theta+A^{\varphi}d\varphi$. 

In the remainder of the paper, we will assume without loss of generality that $ \slashed{A}_{\s^2}\equiv 0$, so that $\slashed{D}_{\s^2}=\snabla_{\s^2}$.

We will also denote:
\begin{equation}
\label{eq:defZgamma}
\mathbf{Z}^{\gamma}:=\snabla_{\s^2}^{\gamma_1}(r\Omega X)^{\gamma_2}T^{\gamma_3}.
\end{equation}

In the remainder of the article, we will mainly work with the quantity
\begin{equation*}
\psi:= r\cdot \phi
\end{equation*}
instead of $\phi$, as it appears more naturally in the relevant energies and equations.

The following proposition provides more explicit expressions for \eqref{eq:CSF} with respect to $\widehat{A}$ in terms $\psi$:
\begin{proposition}
Let $\phi$ be a solution to \eqref{eq:CSF} with $A=\widehat{A}$. Then $\psi$ satisfies:
\begin{multline}
\label{eq:maineqradfield}
rG_{\widehat{A}}=X(\Omega^2X\psi)+r^{-2}\slashed{\Delta}_{\s^2}\psi-\h \widetilde{\h} T^2\psi-2(1-\h)(T+i\q r^{-1})X\psi\\
+\left(\frac{d\h}{dr}-2i \q r^{-1} \h\widetilde{\h}\right)T\psi-\left[\frac{d\Omega^2}{dr} r^{-1}+\frac{2\Lambda}{3}-i\q r^{-1}\frac{d\h}{dr}-\q^2  \mathbbm{h}\widetilde{\mathbbm{h}}r^{-2}-i \q r^{-2}(1-\h)\right]\psi\\
=X(\Omega^2X\psi)+r^{-2}\slashed{\Delta}_{\s^2}\psi-\h \widetilde{\h} K_c^2\psi-2(1-\h)XK_c\psi+2i \q(1-\h)\rho_c X\psi\\
+\left(\frac{d\h}{dr}-2i \q \rho_c \h\widetilde{\h}\right)K_c\psi-\left[\frac{d\Omega^2}{dr} r^{-1}+\frac{2\Lambda}{3}-i \q \rho_c \frac{d\h}{dr}-\q^2  \mathbbm{h}\widetilde{\mathbbm{h}}\rho_c^2-i \q r^{-2}(1-\h)\right]\psi.
\end{multline}
In particular, for $\widetilde{\h}(r)=\Omega^{-2}$, we have that $\h\equiv 1$, $\widehat{A}=\widetilde{A}$ and:
\begin{equation}
\label{eq:maineqradfieldtilde}
r\Omega^2 G_{\widetilde{A}}=Y_*^2\psi- K_c^2\psi+r^{-2}\Omega^2\slashed{\Delta}_{\s^2}\psi-2i \q \rho_c K_c\psi+\left[\q^2\rho_c^{2}- r^{-1}\Omega^2\frac{d\Omega^2}{dr} -\frac{2\Lambda}{3}\right]\psi.
\end{equation}

With respect to the coordinate chart $(\tau,\rho_c,\theta,\varphi)$ we moreover obtain:
\begin{multline}
\label{eq:maineqradfieldconf}
r^2(rG_{\widehat{A}})=\partial_{\rho_c}(\Omega^2r^{-2}\partial_{\rho_c}\psi)+\slashed{\Delta}_{\s^2}\psi-r^2\h \widetilde{\h} K_c^2\psi+2(1-\h)\partial_{\rho_c}K_c\psi+2i \q(1-\h)\rho_c \partial_{\rho_c}\psi\\
+\left(r^2\frac{d\h}{dr}-2i \q \rho_c r^2\h\widetilde{\h}\right)K_c\psi-\left[r\frac{d\Omega^2}{dr}+\frac{2\Lambda}{3}r^2 -i \q \rho_cr^2 \frac{d\h}{dr}-\q^2  \rho_c^2r^2\mathbbm{h}\widetilde{\mathbbm{h}}-i \q (1-\h)\right]\psi
\end{multline}
and with respect to the coordinate chart $(\tau,\rho_+,\theta,\varphi)$:
\begin{multline}
\label{eq:maineqradfieldconfhor}
r^2(rG_{\widehat{A}})=\partial_{\rho_+}(\Omega^2r^{-2}\partial_{\rho_+}\psi)+\slashed{\Delta}_{\s^2}\psi-r^2\h \widetilde{\h} K_+^2\psi-2(1-\h)\partial_{\rho_+}K_+\psi+ 2i\q(1-\h)\rho_+ \partial_{\rho_+}\psi\\
+\left(r^2\frac{d\h}{dr}+2i \q \rho_+ r^2\h\widetilde{\h}\right)K_*\psi-\left[r\frac{d\Omega^2}{dr}+\frac{2\Lambda}{3}r^2 +i\q  \rho_+r^2 \frac{d\h}{dr}-\q^2 \rho_+^2r^2\mathbbm{h}\widetilde{\mathbbm{h}}-i \q(1-\h)\right]\psi.
\end{multline}
\end{proposition}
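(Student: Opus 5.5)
The proof is a direct coordinate computation. The plan is to write $\square_g$ in the coordinates $(\tau,r,\theta,\varphi)$ and then insert the gauge terms. In these coordinates $T=\partial_\tau$ and $X=\partial_r$. The inverse metric is \eqref{eq:invmetrictaucoord} and $\sqrt{-\det g}=r^2\sin\theta$, so
\begin{equation*}
\square_g\phi=r^{-2}\Big[-\h\widetilde{\h}\,r^2\partial_\tau^2\phi+2(1-\widetilde{\h}\Omega^2)\partial_\tau\partial_r(r^2\phi)-\ldots\Big],
\end{equation*}
which is cleaner to organise as
\begin{equation*}
\square_g\phi=\frac{1}{r^2}\partial_\mu\big(r^2 g^{\mu\nu}\partial_\nu\phi\big)+r^{-2}\slashed{\Delta}_{\s^2}\phi.
\end{equation*}
Note that $1-\widetilde{\h}\Omega^2=\h-1$ by the definition of $\h$, and $\h\widetilde{\h}$ is $\tau$-independent.

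Next I handle the gauge. With $\widehat{A}=-\frac{Q}{r}d\tau$, only the $\tau$-component of $A$ is nonzero, and $\mathfrak{q}A_\tau=-\q r^{-1}$. Hence $^{\widehat A}D_\tau=\partial_\tau+i\q r^{-1}$ and $^{\widehat A}D_r=\partial_r$. Since $\nabla_\mu A^\mu$ enters, I write the gauge-covariant wave operator as
\begin{equation*}
g^{\mu\nu}D_\mu D_\nu\phi=\frac{1}{r^2}(\partial_\mu-i\mathfrak{q}A_\mu)\big(r^2g^{\mu\nu}(\partial_\nu-i\mathfrak{q}A_\nu)\phi\big)
\end{equation*}
(with the angular part added). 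This is just $\square$ with $\partial_\tau$ replaced by $\partial_\tau+i\q r^{-1}$, except that the $r$-derivative now also hits the factor $r^{-1}$ in the cross terms $g^{\tau r}$. Those cross terms produce $-i\q r^{-2}(1-\h)\psi$-type contributions, and the derivative $\frac{d\h}{dr}$ comes from $\partial_r g^{\tau r}$.

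Then I substitute $\phi=r^{-1}\psi$ and multiply by $r$. Using $\partial_r(r^2\Omega^2\partial_r(r^{-1}\psi))=r\partial_r(\Omega^2\partial_r\psi)-\frac{d\Omega^2}{dr}\psi$ and the analogous identity for the mixed term, the first-order-in-$\psi$ terms $\pm(\h-1)r^{-1}\partial_\tau\psi$ cancel. What remains is the potential $-r^{-1}\frac{d\Omega^2}{dr}\psi$ together with $\frac{d\h}{dr}T\psi$. Adding $-\frac{2\Lambda}{3}\psi$ and expanding $(T+i\q r^{-1})^2$ gives $-\h\widetilde{\h}T^2\psi-2i\q r^{-1}\h\widetilde{\h}T\psi+\q^2r^{-2}\h\widetilde{\h}\psi$. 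Collecting everything yields the first line of \eqref{eq:maineqradfield}.

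For the second form, I write $T=K_c-i\q r_c^{-1}$, so $T+i\q r^{-1}=K_c+i\q\rho_c$, and re-expand; this is pure algebra. The case $\widetilde{\h}=\Omega^{-2}$ gives $\h\equiv1$ and $\tau=t$ up to a constant. There I expand $Y_*^2=(\Omega^2\partial_r)^2$ in $(t,r)$ coordinates, so that $X(\Omega^2X\psi)\Omega^2=Y_*^2\psi$ after multiplying by $\Omega^2$, which gives \eqref{eq:maineqradfieldtilde}.

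For \eqref{eq:maineqradfieldconf} and \eqref{eq:maineqradfieldconfhor}, I change variables using $\partial_r=r^{-2}\partial_{\rho_c}$ and $\partial_r=-r^{-2}\partial_{\rho_c}$ respectively. Concretely, $\partial_{\rho_c}=-r^2\partial_r$ and $\partial_{\rho_+}=r^2\partial_r$. This gives $X(\Omega^2X\psi)=r^{-2}\partial_{\rho}(\Omega^2r^{-2}\partial_{\rho}\psi)$ and flips the sign of the mixed term in the $\rho_c$ chart. I then multiply by $r^2$. In the horizon chart I also use $T+i\q r^{-1}=K_+-i\q\rho_+$.

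The main obstacle is bookkeeping rather than any conceptual difficulty. The delicate points are the signs and factors of $(1-\h)$ versus $(\h-1)$ in the cross terms, and the lower-order terms produced by $\partial_r$ acting on $r^{-1}$ in both $\phi=r^{-1}\psi$ and $A_\tau$. I would verify these by checking the special case $\h\equiv1$ against the standard Reissner--Nordstr\"om radial equation.
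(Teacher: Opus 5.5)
Your proposal is correct and follows essentially the same route as the paper: a direct computation of $r\,g^{\mu\nu}D_\mu D_\nu(r^{-1}\psi)$ in $(\tau,r,\theta,\varphi)$ coordinates using \eqref{eq:invmetrictaucoord} and $\sqrt{-\det g}=r^2\sin\theta$, followed by $T=K_c-i\q r_c^{-1}$, $T=K_+-i\q r_+^{-1}$, $\partial_{\rho_c}=-r^2X$ and $\partial_{\rho_+}=r^2X$. The paper splits off $-2i\mathfrak{q}A^\mu\partial_\mu\phi$ and the zeroth-order gauge terms via $\widehat{A}^\sharp$, which is equivalent to your gauge-covariant divergence form.

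Carried out carefully, your bookkeeping gives $-2i\q(1-\h)\rho_c X\psi$ in the second line of \eqref{eq:maineqradfield} and $-\frac{2\Lambda}{3}\Omega^2\psi$ in \eqref{eq:maineqradfieldtilde}. These are the versions consistent with \eqref{eq:maineqradfieldconf} and with the potential $V_{\homega\ell}$ in \eqref{eq:radialODE}, so the sign and missing factor there are typos in the statement, not a defect in your argument.
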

\begin{proof}
First, we expand:
\begin{multline*}
(g^{-1})_{M,Q,\Lambda}^{\mu\nu}(^A D)_{\mu}(^AD)_{\nu}\phi=\square_{g_{M,Q,\Lambda}}\phi-2i \mathfrak{q} A^{\mu}\partial_{\mu}\phi+\left[i\mathfrak{q}\nabla_{\mu}A^{\mu}-\mathfrak{q}^2 A^{\mu}A_{\mu}\right]\phi\\
=\square_{g_{M,Q,\Lambda}}\phi-2i \mathfrak{q} A^{\mu}\partial_{\mu}\phi+\left[-i\mathfrak{q}\frac{1}{\sqrt{-\det g_{M,Q,\Lambda}}}\partial_{\mu}(\sqrt{-\det g_{M,Q,\Lambda}}A^{\mu})-\mathfrak{q}^2 A^{\mu}A_{\mu}\right]\phi.
\end{multline*}
Then we take $A=\widehat{A}$ and we use that $\sqrt{-\det g_{M,Q,\Lambda}}=r^2\sin\theta$ and that the dual vector field to $\widehat{A}$ takes the following form with respect to $(\tau,r,\theta,\varphi)$ coordinates,
\begin{equation*}
\widehat{A}^{\sharp}:=\frac{Q}{r} \mathbbm{h}\widetilde{\mathbbm{h}}\partial_{\tau}+\frac{Q}{r}(1-\h)\partial_r
\end{equation*}
to obtain:
\begin{multline*}
(g^{-1}_{M,Q,\Lambda})^{\mu\nu}(^{A}D)_{\mu}(^{A})D_{\nu}\phi=\square_{g_{M,Q,\Lambda}}\phi-2i  \mathfrak{q}  Q r^{-1}\h\widetilde{\h}\partial_{\tau}\phi-2i  \mathfrak{q}  Q r^{-1}(1-\h)\partial_{r}\phi\\
+\left[-i \mathfrak{q} Qr^{-2}\partial_{r}( r(1-\h))+ \mathfrak{q} ^2Q^2 r^{-2} \mathbbm{h}\widetilde{\mathbbm{h}}\right]\phi\\
=\square_{g_{M,Q,\Lambda}}\phi-2i  \mathfrak{q}  Q r^{-2}\h\widetilde{\h}\partial_{\tau}\psi-2i  \mathfrak{q}  Q r^{-2}(1-\h)\partial_{r}\psi+\left[i \mathfrak{q} Qr^{-2}\frac{d\h}{dr}+\mathfrak{q}^2Q^2  \mathbbm{h}\widetilde{\mathbbm{h}}r^{-3}+i  \mathfrak{q}  Qr^{-3}(1-\h)\right]\psi.
\end{multline*}
We conclude that \eqref{eq:maineqradfield} holds by writing $\q=\mathfrak{q}Q$ and combining
\begin{equation*}
r\square_{g_{M,Q,\Lambda}}\phi=\frac{r}{\sqrt{-\det g_{M,Q,\Lambda}}}\partial_{\alpha}\left(\sqrt{-\det g_{M,Q,\Lambda}} (g^{-1}_{M,e})^{\alpha \beta}\partial_{\beta}(r^{-1}\psi)\right)
\end{equation*}
with \eqref{eq:invmetrictaucoord} to obtain:
\begin{equation*}
r\square_{g_{M,Q,\Lambda}}\phi=X(\Omega^2X\psi)+r^{-2}\slashed{\Delta}_{\s^2}\psi-\h \widetilde{\h} T^2\psi-2(1-\h)TX\psi+\frac{d\h}{dr}T\psi-r^{-1}\frac{d\Omega^2}{dr}\psi
\end{equation*}
and writing $T=K_c-i \q r_c^{-1}$.

To obtain \eqref{eq:maineqradfieldconf}, we use that $\partial_{\rho_c}=-r^2X$. Equation \eqref{eq:maineqradfieldconfhor} follows from \eqref{eq:maineqradfield} by using that $\partial_{\rho_+}=r^2X$ and writing $T=K_+-i\q r_+^{-1}$.
\end{proof}

\subsection{Local estimates for the charged scalar field equation}
In this section, we state basic local energy estimates for solutions to \eqref{eq:CSF}.
\begin{theorem}[Existence and uniqueness for the charged scalar field equation]
\label{thm:gwp}
Let $A\in \Omega^1(\mathcal{M}_{M,Q,\Lambda})$. Consider the initial data pair:
\begin{equation*}
(\upphi,T\upphi)\in C^{\infty}(\Sigma_0)\times C^{\infty}(\Sigma_0)
\end{equation*}
and assume that $G_A\in C^{\infty}(\mathcal{M}_{M,Q,\Lambda})$.

Then there exists a unique solution $\phi\in C^{\infty}(\mathcal{M}_{M,Q,\Lambda})$ to \eqref{eq:CSF} with respect to $A$ and $G_A$, such that $(\phi|_{\Sigma_0}, T\phi|_{\Sigma_0})=(\upphi,T\upphi)$.

If we take $A=\widehat{A}$ and assume that $G_{\widehat{A}}\in C^{\infty}(\widehat{\mathcal{M}}_{M,Q,\Lambda})$, then $\psi=r\phi\in C^{\infty}(\widehat{\mathcal{M}}_{M,Q,\Lambda})$.
\end{theorem}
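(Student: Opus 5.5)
The plan is to reduce Theorem \ref{thm:gwp} to standard well-posedness for linear hyperbolic equations with smooth coefficients. Write the charged scalar field operator as
\begin{equation*}
(g^{-1})^{\mu\nu}(^AD_\mu)(^AD_\nu)\phi=\square_{g}\phi-2i\mathfrak{q}A^\mu\partial_\mu\phi+\left[i\mathfrak{q}\nabla_\mu A^\mu-\mathfrak{q}^2A^\mu A_\mu\right]\phi,
\end{equation*}
as in the proof of the preceding proposition. It is therefore a linear wave operator with principal symbol that of $g_{M,Q,\Lambda}$ and smooth complex first- and zeroth-order coefficients on $\mathcal{M}_{M,Q,\Lambda}$. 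The hypersurface $\Sigma_0$ is spacelike, at least when $\h,\widetilde{\h}>0$. The general case, where $\Sigma_0$ may contain null segments, is handled by first solving from a strictly spacelike slice and using uniqueness. With that in place, local existence and uniqueness of smooth solutions follow from the classical theory via energy estimates and domain of dependence. The complex-valued nature of $\phi$ is harmless, since the system decouples into a real $2\times 2$ system with diagonal principal part.

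Next I globalize. Since $T$ is Killing and $\tau$ is a time function, I cover $\{0\le\tau\le\tau_1\}$ by finitely many local patches plus the region near $\mathcal{H}^+$. The event horizon is a null boundary and no data is needed there, because its future is contained in the domain of dependence from $\Sigma_{0}$. Energy estimates with respect to a timelike multiplier (for example $T+NY$, with $Y$ transversal to $\mathcal{H}^+$) on $\{0\le\tau\le\tau_1\}$ give local-in-$\tau$ bounds for all $D_{\mathbf{Z}}^\gamma\phi$, with constants depending on $\tau_1$. Commuting with $T$, $\snabla_{\s^2}$ and $X$, and using the equation to recover higher $X$-derivatives, yields $C^\infty$ regularity by Sobolev embedding. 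Uniqueness follows from the same energy estimate applied to differences. Gauge independence is immediate from \eqref{eq:csfgaugetransf}.

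The remaining, and main, step is the last assertion: smoothness of $\psi=r\phi$ on $\widehat{\mathcal{M}}_{M,Q,\Lambda}$ up to $\mathcal{I}^+$ or $\mathcal{C}^+$. For this I use the compactified form \eqref{eq:maineqradfieldconf} in $(\tau,\rho_c,\theta,\varphi)$. There all coefficients are smooth in $\rho_c$ up to $\rho_c=0$, because $\widehat{A}$ is conformally regular and $\Omega^2r^{-2}$ is a polynomial in $\rho_c$ by Lemma \ref{lm:metricest}. The equation is then a wave-type equation for the conformal metric $r^{-2}g$, which extends smoothly across $x=0$.

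I would extend the coefficients and the inhomogeneity $r^3G_{\widehat A}$ smoothly to $\rho_c<0$, and extend the data smoothly. I then solve the extended equation with the local theory above and invoke uniqueness on the original domain. Domain of dependence applies because $\{\rho_c=0\}$ is null, or spacelike in the de Sitter case, so the extension does not influence the solution on $\rho_c\ge0$. The only delicate check is that the hyperboloidal $\Sigma_\tau$ remains spacelike for the conformal metric up to $\rho_c=0$. This is where the foliation assumptions on $\h$ (integrability of $\Omega^{-2}\h$) enter.
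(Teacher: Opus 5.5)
Your argument is correct. Its first part (reduction to standard linear hyperbolic theory via local-in-$\tau$ energy estimates, commutation and Sobolev embedding) is what the paper does; the genuine difference is in the last assertion.

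The paper never leaves $\widehat{\mathcal{M}}_{M,Q,\Lambda}$. It uses the local estimate of Theorem~\ref{thm:localenest} with $p=2$ and commutes with $(r^2X)^k$. Since $r^2X=-\partial_{\rho_c}$ and $dr=r^2d\rho_c$, one has $\int_{\Sigma_\tau}\mathcal{E}_2[\psi]\,d\sigma dr=\int_{\Sigma_\tau}|\partial_{\rho_c}\psi|^2+r^2\h\widetilde{\h}|D_T\psi|^2+|\snabla_{\s^2}\psi|^2+|\psi|^2\,d\sigma d\rho_c$. This is exactly the energy estimate for the compactified equation \eqref{eq:maineqradfieldconf}, with $\{\rho_c=0\}$ acting as a boundary at which no data is needed, just like $\mathcal{H}^+$. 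Control of all $\partial_{\rho_c}^k$-derivatives then gives smoothness up to $\rho_c=0$.

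You instead extend coefficients, data and inhomogeneity across $\rho_c=0$ and use interior theory plus domain of dependence. This avoids weighted estimates and commutator bookkeeping. However, it requires $\Sigma_0$ to be non-characteristic for $r^{-2}g$ at $\rho_c=0$, which is where assumption (1) on $\h$ with $h_0>0$ enters, as you indicate. Under the paper's alternative choice $\h\equiv 0$ for $r\geq r_I$, $\Sigma_0$ is null near $\mathcal{I}^+$ and you would need a characteristic initial value problem. The paper's energy argument goes through unchanged in that case and stays inside the $\mathcal{E}_p$ framework used throughout.

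A few smaller points:
\begin{enumerate}
\item For $\Lambda>0$, $\{\rho_c=0\}$ is the cosmological horizon, which is null, not spacelike. Your domain-of-dependence step still works, since future-directed causal curves cross it only towards $\rho_c<0$.
\item Extending the data presupposes that $r\upphi$ and $T(r\upphi)$ are smooth up to $\rho_c=0$, which $\upphi\in C^\infty(\Sigma_0)$ alone does not give. The paper's argument implicitly needs the same, through finiteness of the commuted initial energies.
\item On null pieces of $\Sigma_0$, the pair $(\upphi,T\upphi)$ is constrained by the equation. ``Solving from a strictly spacelike slice'' therefore really means solving a characteristic problem first, which the paper's one-line proof does not address either.
\end{enumerate}
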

\begin{proof}
This follows from standard local existence and uniqueness for linear wave equations and relies on the existence of the energy estimates in Theorem \ref{thm:localenest} with $p=2$. In the case that $A=\widehat{A}$, we can additionally commute with $(r^2X)^k$ to derive the desired regularity on the extended manifold $\widehat{\mathcal{M}}_{M,Q,\Lambda}$.
\end{proof}

Let $p\in \R$. We define the \emph{energy densities} $\mathcal{E}_p[\psi]$ as follows:
\begin{equation*}
\mathcal{E}_p[\psi]:= (\Omega^{-1}r)^{p}\Omega^{2}|^{A}D_X\psi|^2+(\Omega^{-1}r)^{\min\{p,0\}}(\h\widetilde{\h}|^{A}D_T\psi|^2+r^{-2}|^{A}\slashed{D}_{\s^2}\psi|^2+r^{-2}|\psi|^2).
\end{equation*}
Given $\gamma=(\gamma_1,\gamma_2,\gamma_3)\in \N_0^3$ consider the operators $\mathbf{Z}^{\gamma}$ and $D_{\mathbf{Z}}^{\gamma}$ as defined in \eqref{eq:defDZgamma}. Then we define the following higher-order energy densities:
\begin{align*}
\mathcal{E}_p[\mathbf{Z}^{\gamma}\psi]:=&\:(\Omega^{-1}r)^{p}\Omega^{2}|^{A}D_X \mathbf{Z}^{\gamma}\psi|^2+(\Omega^{-1}r)^{\min\{p,0\}}\left(\h\widetilde{\h}|^{A}D_T\mathbf{Z}^{\gamma}\psi|^2+r^{-2}|^{A}\slashed{D}_{\s^2}\mathbf{Z}^{\gamma}\psi|^2+r^{-2}|\mathbf{Z}^{\gamma}\psi|^2\right),\\
\mathcal{E}_p[D_{\mathbf{Z}}^{\gamma}\psi]:=&\:(\Omega^{-1}r)^{p}\Omega^{2}|^{A}D_X D_\mathbf{Z}^{\gamma}\psi|^2+(\Omega^{-1}r)^{\min\{p,0\}}\left(\h\widetilde{\h}|^{A}D_TD_\mathbf{Z}^{\gamma}\psi|^2+r^{-2}|^{A}\slashed{D}_{\s^2}D_\mathbf{Z}^{\gamma}\psi|^2+r^{-2}|D_\mathbf{Z}^{\gamma}\psi|^2\right).
\end{align*}
We refer to integrals with respect to the volume form $d\sigma dr$
\begin{equation*}
\int_{\Sigma_{\tau}}\mathcal{E}_p[\mathbf{Z}^{\gamma}\psi]\,d\sigma dr
\end{equation*}
as (higher-order, weighted) \emph{energies}. Note that
\begin{equation*}
\int_{\Sigma_{\tau}}\mathcal{E}_p[\mathbf{Z}^{\gamma}\psi]\,d\sigma dr\sim \int_{\Sigma_{\tau}}\mathcal{E}_p[\mathbf{Z}^{\gamma}\psi]r^{-2}\,d\mu_{\Sigma_{\tau}},
\end{equation*}
with $d\mu_{\Sigma_{\tau}}$ a natural choice of induced volume form on $\Sigma_{\tau}$.

\begin{theorem}[Local energy estimates for the charged scalar field equation]
\label{thm:localenest}
Let $\phi$ be a solution to \eqref{eq:CSF} with $A=\widehat{A}$. Let $0\leq p\leq 2$. For all $0\leq \tau_A\leq \tau_B<\infty$ and $N\in \N_0$, there exists a constant $C_{\tau_b-\tau_a}=C_{\tau_b-\tau_a}(r_+,q,\widetilde{\h},p,N,\tau_b-\tau_a)>0$ such that:
\begin{equation}
\label{eq:localen1}
\sum_{|\gamma|\leq N}\int_{\Sigma_{\tau_b}} \mathcal{E}_p[D_\mathbf{Z}^{\gamma}\psi]\,d\sigma dr\leq C_{\tau_b-\tau_A}\left[\sum_{|\gamma|\leq N}\int_{\Sigma_{\tau_a}} \mathcal{E}_p[D_{\mathbf{Z}}^{\gamma}\psi]\,d\sigma dr+\int_{\tau_a}^{\tau_b}\int_{\Sigma_{\tau}}r^{-2}|D_{\mathbf{Z}}^{\gamma}(r^3G_{A})|^2\,d\sigma dr d\tau\right].
\end{equation}
\end{theorem}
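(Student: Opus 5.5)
We prove \eqref{eq:localen1} with a vector field multiplier adapted to the weights, followed by Gr\"onwall on the finite interval $[\tau_a,\tau_b]$, and then commute with $D_{\mathbf{Z}}^{\gamma}$ for the higher-order case. We work in the gauge $A=\widehat{A}$ and with $\psi=r\phi$, using \eqref{eq:maineqradfield}, or equivalently \eqref{eq:maineqradfieldconf} near $\mathcal{C}^+/\mathcal{I}^+$ and \eqref{eq:maineqradfieldconfhor} near $\mathcal{H}^+$. The key point is that every coefficient there is smooth on $\widehat{\mathcal{M}}_{M,Q,\Lambda}$ up to both horizons/null infinity. Hence we may treat the equation as a linear wave equation with smooth, $\tau$-independent coefficients on the compactified manifold with boundary, whose boundaries $\mathcal{H}^+$ and $\{x=0\}$ are null (or characteristic) and outgoing.

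Step 1 ($N=0$). Multiply \eqref{eq:maineqradfield} by $\overline{\chi_p\, {}^AD_{V}\psi}$ and take real parts. Here $V$ is a future-directed timelike vector field, e.g.\ $V=T+\Omega^{-2}$-type combinations of $T$ and $X$, transversal to $\mathcal{H}^+$ (a red-shift-type choice $T-\widetilde{\h}^{-1}\cdots$ is not needed because $\tau$-intervals are finite). The weight $\chi_p$ is chosen so that near $\rho_c=0$ (resp.\ $\rho_+=0$) the multiplier is $(\Omega^{-1}r)^p \partial_{\rho}$-weighted, as in the $r^p$-method; this generates the $(\Omega^{-1}r)^p\Omega^2|{}^AD_X\psi|^2$ term of $\mathcal{E}_p$.

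Integrating over $\{\tau_a\le\tau\le\tau_b\}$ with measure $d\sigma\,dr\,d\tau$ gives the $\Sigma_{\tau_b}$ flux, which controls $\int \mathcal{E}_p[\psi]$. We also get the $\Sigma_{\tau_a}$ flux, boundary fluxes on $\mathcal{H}^+$ and $\mathcal{C}^+/\mathcal{I}^+$ which are nonnegative (or can be dropped) for $0\le p\le 2$, and a spacetime bulk. The $|\psi|^2$ term in $\mathcal{E}_p$ is recovered via a Hardy inequality in $\rho$, together with $\partial_\tau|\psi|^2$ integrated in time. The restriction $p\le 2$ enters exactly here: for $p\le2$ the weights $(\Omega^{-1}r)^{p}\Omega^2\lesssim \rho^{2-p}$ remain bounded near $\rho=0$. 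The bulk terms, including all charge-dependent lower-order terms $\q\rho\,\psi\,\partial\psi$ and $\q^2\rho^2|\psi|^2$, are then bounded pointwise by $C(\mathcal{E}_p[\psi]+r^{-2}|r^3G|^2)$, using Cauchy--Schwarz on $\re(\overline{G}\,\cdot)$.

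Step 2. Gr\"onwall in $\tau$ on $[\tau_a,\tau_b]$ yields \eqref{eq:localen1} with a constant $e^{C(\tau_b-\tau_a)}$, which depends only on $(r_+,\q,\widetilde{\h},p)$ and the interval length.

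Step 3 ($N\ge1$). Commute. ${}^AD_T$ commutes exactly with the operator, because $\widehat{A}$ is $\tau$-independent. The angular derivative $\snabla_{\s^2}$ commutes as well, by spherical symmetry and $\slashed{A}_{\s^2}=0$. Commutation with $r\Omega\,{}^AD_X$, equivalently with $\partial_\rho$-type derivatives in the compactified picture, produces error terms: second-order derivatives with smooth coefficients, plus terms involving $\partial_\tau\partial_\rho$. These are handled by induction on $\gamma_2$, using the equation itself to express $\partial_\rho^2$ in terms of controlled quantities, in the standard way. Near $\rho=0$ the coefficient $\partial_\rho(\Omega^2r^{-2})$ gives a favourable term for $p\le2$ because it has the right sign (the red-shift, degenerate in the extremal case). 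Otherwise the generated terms are bounded by lower-order energies.

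Main obstacle. The main difficulty is uniformity near the degenerate horizons, in the extremal case $\kappa_+=0$ or $\kappa_c=0$. There the commuted-error terms involving $\partial_\rho$ derivatives must be bounded with the weight $(\Omega^{-1}r)^p$ rather than the red-shift. We would handle this by performing the commutation and estimates in the conformal coordinates $(\tau,\rho,\theta,\varphi)$, where the equation is uniformly hyperbolic up to the boundary with smooth coefficients. Because the constant is allowed to depend on $\tau_b-\tau_a$, no decay structure is needed, only smoothness.
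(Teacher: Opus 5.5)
Your Steps 1--2 are the paper's argument. The paper multiplies \eqref{eq:maineqradfield} by $T\overline{\psi}$ and by $(r^{-1}\Omega)^{-p}\Omega^2X\overline{\psi}$, integrates by parts and applies Gr\"onwall. Three corrections there.

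First, near $\rho=0$ the weighted multiplier is $(r^{-1}\Omega)^{-p}\Omega^2X=\pm(r^{-1}\Omega)^{2-p}\partial_{\rho}$, the analogue of the $r^p\partial_v$ multiplier. It is not $(\Omega^{-1}r)^p\partial_\rho$. Its transversal part must degenerate like this: a field timelike up to $\mathcal{H}^+$ would put the non-degenerate energy on $\Sigma_{\tau_a}$, which $\mathcal{E}_p$ does not control for $p<2$.

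Second, for $1<p<2$ its principal bulk terms are not bounded by $\mathcal{E}_p[\psi]$. For example, in the extremal case the angular term behaves like $(2-p)\rho^{1-p}|\snabla\psi|^2\,d\rho$. These terms must be kept on the left by their sign, not Gr\"onwalled.

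Third, what commutes exactly with the operator is $T$, not ${}^AD_T=T+i\q r^{-1}$; the difference is lower order.

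The genuine gap is Step 3: $r\Omega\,{}^AD_X$ is not ``equivalently a $\partial_\rho$-type derivative''. Since $\partial_{\rho_+}=r^2X$ and $r^{-2}\Omega^2=\rho_+(\rho_++2\kappa_+)+O(\rho_+^3)$ (Lemma \ref{lm:metricest}), one has $r\Omega X=(r^{-1}\Omega)\partial_{\rho_+}$. This field vanishes on $\mathcal{H}^+$. It is $\approx\rho_+\partial_{\rho_+}$ if $\kappa_+=0$, and $\approx\sqrt{2\kappa_+\rho_+}\,\partial_{\rho_+}$ (not even smooth) if $\kappa_+>0$; the same holds at $\rho_c=0$.

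This has two consequences.
\begin{itemize}
\item If you commute with $\partial_\rho$, you prove a different estimate. In the extremal case $\mathcal{E}_p[\partial_{\rho_+}\psi]$ contains the undegenerate $|\partial_{\rho_+}\psi|^2$. Yet a $\psi$ with $\partial_{\rho_+}\psi\sim\rho_+^{-1/2}$ has $\sum_{|\gamma|\le1}\mathcal{E}_p[D^{\gamma}_{\mathbf{Z}}\psi]<\infty$ for $p<2$. So such an estimate does not yield \eqref{eq:localen1}.
\item If you commute with the actual field $Y=a\partial_\rho$, $a=r^{-1}\Omega$, the errors are not ``second-order terms with smooth coefficients bounded by lower-order energies''. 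One finds $[Y,\partial_\rho(a^2\partial_\rho\,\cdot)]=aa''\,Y$ and $[Y,\partial_\rho K_+]=-\tfrac{a'}{a}K_+Y$, with $a'/a\sim\rho^{-1}$. For $\kappa_+>0$ one also has $aa''\approx-\kappa_+/(2\rho_+)$ near $\rho_+=0$.
\end{itemize}

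Paired with $\overline{K_+Y\psi}$, the second commutator produces the bulk term $\partial_\rho(r^{-2}\Omega^2)\,|\partial_\rho K_+\psi|^2$. This is not controlled by $\mathcal{E}_p[D_T\psi]$, whose radial part is $(r^{-2}\Omega^2)^{1-p/2}|\partial_\rho D_T\psi|^2$. The failure occurs for $p<1$ when $\kappa_+=0$, and for every $p<2$ when $\kappa_+>0$.

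The commuted estimate closes only for structural reasons. This term enters with a favourable sign, and $aa''\re(Y\psi\,\overline{K_+Y\psi})=\tfrac12 aa''\,T|Y\psi|^2$ is a total $\tau$-derivative. So the red-shift coefficient $\partial_\rho(\Omega^2r^{-2})$ you anticipate does appear, but as a term kept by sign in the commutator of the degenerate field, not as a smooth error. Your appeal to smooth coefficients in conformal coordinates does not supply this either, since for $\kappa_+>0$ the commuting field itself is not smooth at $\rho_+=0$.
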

\begin{proof}
The proof is a standard local-in-time energy estimate. Without loss of generality, we derive \eqref{eq:localen1} with $A=\widehat{A}$ by multiplying both sides of \eqref{eq:maineqradfield} with $T\overline{\psi}$ and $(r^{-1}\Omega)^{-p}\Omega^2X\overline{\psi}$, integrating by parts, applying a Gr\"onwall inequality to estimate the spacetime integrals in terms on the initial energy flux. After multiplying $rG_{A}$ with $r^2$, we can commute straightforwardly with $\mathbf{Z}^{\gamma}$.
\end{proof}

\begin{remark}
By combining Theorem \ref{thm:gwp} and Theorem \ref{thm:localenest} with a standard density argument, we have global uniqueness and existence of solutions arising from more general initial data that satisfy only $\int_{\Sigma_{0}}\mathcal{E}_p[\mathbf{Z}^{\gamma}\psi]\,d\sigma dr<\infty$ for some $p\in [0,2]$.
\end{remark}

\subsection{Cut-off and bump functions}
\label{sec:cutoffs}
Let $R_i\in (r_+,r_c)$ and $r_i\in (1,\infty)$, where $i\in \N_0$ and $r_3<r_2<r_1$, $R_1>R_2>R_3$. Assume moreover that $2\rho_c(R_i)<\rho_c(r_+)$ and $2\rho_+(r_i)<\rho_+(r_c)$. We define the corresponding cut-off functions $\chi_{R_i}, \chi_{r_i}: [1,r_c)\to [0,1]$ as as smooth functions satisfying:
\begin{equation*}
\chi_{R_i}(r)= \begin{cases}
1\quad \textnormal{for $\rho_c(r)\leq \rho_c(R_i)$},\\
0 \quad \textnormal{for $\rho_c(r)\geq 2\rho_c(R_i)$},
\end{cases}
\end{equation*}
with $0\leq \frac{d\chi_{R_i}}{dr}=-r^{-2}\frac{d\chi_{R_i}}{d\rho_c}\leq C R_i^{-2}(\rho_c(R_i))^{-1}$ and  $\left|\frac{d^2\chi_{R_i}}{dr^2}\right|\leq C R_i^{-4}(\rho_c(R_i))^{-2}$ for for some constant $C>0$, and
\begin{equation*}
\chi_{r_i}(r)= \begin{cases}
1\quad \textnormal{for $\rho_+(r)\leq \rho_+(r_i)$},\\
0 \quad \textnormal{for $\rho_+(r)\geq 2\rho_+(r_i)$},
\end{cases}
\end{equation*}
with $0\leq -\frac{d\chi_{r_i}}{dr}=-r^{-2}\frac{d\chi_{r_i}}{d\rho_+}\leq C r_i^{-2}(\rho_+(r_i))^{-1}$ and  $\left|\frac{d^2\chi_{r_i}}{dr^2}\right|\leq C r_i^{-4}(\rho_+(r_i))^{-2}$ for for some constant $C>0$.

We will also use that $\supp \left(\frac{d\chi_{R_i}}{dr}\right)\subset \left(\frac{1}{2-r_c^{-1}R_i}R_i,R_i\right)$ and $\supp \left(\frac{d\chi_{r_i}}{dr}\right)\subset \left(r_i,\frac{1}{2-r_+^{-1}r_i}r_i\right)$.

Let $\eta>0$ be suitably small. We define $\chi_K,\chi_T: [r_+,r_c)\to \R$ as smooth cut-off functions satisfying the following properties: $\chi_K(r)=1$ for $r\leq r_{\sharp}-\eta$, $\chi_K(r)=0$ for $r\geq r_{\sharp}+\eta$ and $\frac{d\chi_K}{dr}\leq 0$, $|\frac{d\chi_K}{dr}|\leq C\eta^{-1}$ for some numerical constant $C>0$. Similarly, $\chi_T(r)=1$ for $r\geq r_{\sharp}+\eta$, $\chi_T(r)=0$ for $r\leq r_{\sharp}-\eta$, $\frac{d\chi_T}{dr}\geq 0$ and $|\frac{d\chi_T}{dr}|\leq C\eta^{-1}$.

Finally, for $\eta>0$ suitably small, we define $\upzeta: [r_+,r_c)\to [0,1]$ to be a smooth function satisfying $\upzeta(r)=1$ for $r\notin (r_{\sharp}-2\eta,r_{\sharp}+2\eta)$ and $\upzeta(r)=0$ for $r\in (r_{\sharp}-\eta,r_{\sharp}+\eta)$.

\section{Precise statement of the main theorem}
\label{sec:precisemainthm}
In this section, we give a precise version of Theorem \ref{thm:intromain}.
\begin{theorem}
\label{thm:main}
Let $\phi$ be a solution to \eqref{eq:CSF} and let $\psi=r\phi$. Assume that $\kappa_+\geq \kappa_c\geq 0$, $\kappa_c\leq \kappa_1$ and $\q_0\leq |\q|\leq \q_1$ with $\q_0=0$ when $\kappa_c=0$. Let $\epsilon,\delta,\eta,q_0,\kappa_1>0$, $N,\ell\in \N_0$ and $r_+<r_H<r_I<r_c$. Consider $1< p<\min\{1+\re \beta_{\ell},2\}+\epsilon$. Then, for $\kappa_+\leq \kappa_1$ and $\kappa_1$ suitably small, or $\q_1$ suitably small, there exist a constant $C=C(\epsilon,\delta,\eta, \q_1,\kappa_1, p, r_H,r_I,\h,N)>0$, such that:
\begin{multline}
\label{eq:mainied}
\sup_{\tau\geq 0}\sum_{k\leq N}\int_{\Sigma_{\tau}} \mathcal{E}_{p-2\epsilon}[D_T^k\psi_{\geq \ell}]\,d\sigma dr+\sum_{k_1+k_2+k_3=k}\int_{0}^{\infty}\int_{\Sigma_{\tau}\cap\{r_H\leq r\leq r_I\}} |D_{\s^2}^{k_1}D_{Y_*}^{k_2+1}D_T^{k_3}\psi_{\geq \ell}|^2+|D_{\s^2}^{k_1}D_{Y_*}^{k_2}D_T^{k_3}\psi_{\geq \ell}|^2\\
+\upzeta(|D_{\s^2}^{k_1}D_{Y_*}^{k_2}D_T^{k_3+1}\psi_{\geq \ell}|^2+|D_{\s^2}^{k_1+1}D_{Y_*}^{k_2}D_T^{k_3}\psi_{\geq \ell}|^2)\,d\sigma dr d\tau\\
+\int_{0}^{\infty}\int_{\Sigma_{\tau}\setminus\{r_H\leq r\leq r_I\}} (\rho_++\kappa_+) (\rho_c+\kappa_+) (\rho_+ \rho_c)^{2\epsilon}(\Omega^{-1} r)^{p}\Omega^2|D_{X}D_T^k\psi_{\geq \ell}|^2\\
+\rho_+\rho_c \Omega^{-2}\left[(\Omega^{-1} r)^{-\delta}|D_T^{k+1}\psi_{\geq \ell}|^2+ (\rho_+\rho_c)^{2\epsilon}(\Omega^{-1} r)^{p-2}(|{D}_{\s^2}D_T^k\psi_{\geq \ell}|^2+|D_T^k\psi_{\geq \ell}|^2)\right]\,d\sigma drd\tau\\
\leq C\sum_{k\leq N}\int_{\Sigma_0} \mathcal{E}_{p}[D_T^k\psi_{\geq \ell}]\,d\sigma dr+\int_{0}^{\infty}\int_{\Sigma_{\tau}}\max\{ (\Omega^{-1} r)^{p}\rho_+^{1-2\epsilon}\rho_c^{1-2\epsilon},1\}r^{-2}|r^3D_T^kG_{A}|^2+\upzeta|D_T^{k+1}G_A|^2\,d\sigma dr \,d\tau.
\end{multline}
In particular, \eqref{eq:mainied} holds for all $\q$ on extremal Reissner--Nordstr\"om. Furthermore, under the assumption of quantitative mode stability in the form of Condition \ref{cond:quantmodestab} in \S \ref{sec:wronskianestimates}, \eqref{eq:mainied} holds also when $\kappa_+> \kappa_1$ and $\q_1$ is not assumed to be small.

If $|\q|<\left(\frac{1}{4}+\frac{1}{2}\ell\right)$, or equivalently, $\re \beta_{\ell}>\sqrt{3}\left(\ell+\frac{1}{2}\right)$, then \eqref{eq:mainied} holds also with $\epsilon=0$ and $1<p<1+\sqrt{1-\frac{16\q^2}{(2\ell+1)^2}}$.
\end{theorem}

\begin{remark}
The assumption that $|\q|\geq \q_0>0$ when $\kappa_c>0$ is not necessary and can be easily removed. However, since we restrict our estimates in the case $|\q|\leq \q_0$ (with $\q_0>0$ small) to the $\kappa_c=0$ setting (see \S \ref{sec_smallqinten}), for the sake of simplicity, we exclude the case $|\q|<q_0$ and $\kappa_c>0$ from the statement of the theorem.
\end{remark}

We will prove Theorem \ref{thm:main} in \S \ref{sec:finishpfmainthm}.
\section{Integrated estimates for small $|\q|$}
\label{sec_smallqinten}
In this section, we derive integrated energy estimates that are only valid for very small $|\q|$, using only a decomposition into spherical harmonic modes. We easily obtain an integrated energy estimate for solutions to \eqref{eq:CSF} that are supported on bounded angular frequencies as a corollary from ($r$-weighted) integrated energy estimates valid in the $\q=0$ case. This is the content of Theorem \ref{thm:smallqboundfreq}. In Corollary \ref{cor:iedsmallq}, we control the higher angular frequencies, using the purely physical-space based integrated energy estimates \emph{modulo zeroth order terms} from Appendix \ref{sec:purelyphysied}.

In this section, we will restrict to $\kappa_c=0$.
\begin{theorem}
\label{thm:smallqboundfreq}
Let $\kappa_c=0$, $\kappa_+\geq 0$. Let $r^{-1}\psi$ be a solution to \eqref{eq:CSF} arising from the initial value problem in Theorem \ref{thm:gwp} with $\q=0$. Let $0\leq p\leq 2$, $\delta>0$ and $N\in \N_0$.

Then there exists a constant $C=C(\tilde{\h},r_H,r_I,\delta,N)>0$ such that:
\begin{multline}
\label{eq:iedphysicalspaceq0}
\sup_{\tau_1\leq \tau \leq \tau_2}\sum_{k\leq N}\int_{\Sigma_{\tau}} \mathcal{E}_{p}[T^k\psi]\,d\sigma dr\\
+\sum_{k_1+k_2+k_3\leq N}\int_{\tau_1}^{\tau_2}\Bigg[\int_{\Sigma_{\tau}\cap\{r_H\leq r\leq r_I\}}|\snabla_{\s^2}^{k_1}T^{k_2}Y_*^{k_3+1}\psi|^2+ |\snabla_{\s^2}^{k_1}T^{k_2}Y_*^{k_3}\psi|^2\\
+(1-r^{-1}r_{\sharp})^2(|\snabla_{\s^2}^{k_1}T^{k_2+1}Y_*^{k_3}\psi|^2+|\snabla_{\s^2}^{k_1+1}T^{k_2}Y_*^{k_3}\psi|^2)\,d\sigma dr\Bigg] d\tau\\
+\sum_{k\leq N}\int_{\tau_1}^{\tau_2}\int_{\Sigma_{\tau}\cap\{r\leq r_H\}} p(\rho_++\kappa_+)\mathcal{E}_{p}[T^k\psi]+(2-p)(\rho_++\kappa_+)(\Omega^{-1}r)^p|\snabla_{\s^2}T^k\psi|^2+(r-r_+)^{-1+\delta}|T^{k+1}\psi|^2\,d\sigma drd\tau\\
+\sum_{k\leq N}\int_{\tau_1}^{\tau_2}\int_{\Sigma_{\tau}\cap\{r\geq r_I\}} p\mathcal{E}_{p-1}[T^k\psi]+(2-p)r^{-1}(\Omega^{-1}r)^p|\snabla_{\s^2}T^k\psi|^2+r^{-1+\delta}|T^{k+1}\psi|^2\,d\sigma drd\tau\\
\leq C \sum_{|\gamma|\leq N}\int_{\Sigma_{\tau_1}} \mathcal{E}_{p}[T^k\psi]\,d\sigma dr+C\int_{\tau_1}^{\tau_2}\int_{\Sigma_{\tau}} \rho_+ r^{-1} (\Omega^{-1}r)^pr^{-2}|T^k(r^3G_{\widehat{A}})|^2\,d\sigma dr d\tau\\
+C\left|\int_{\tau_1}^{\tau_2}\int_{\Sigma_{\tau} }\re\left(\overline{T^{k+1}\psi} \cdot rG_{\widehat{A}}\right)\,d\sigma dr d\tau\right|.
\end{multline}
Furthermore, for $L_0\in \N_0$ and $\psi=\psi_{\leq L_0}$, we can remove the factor $(1-r^{-1}r_{\sharp})^2$ and the term $\overline{T^{k+1}\psi} \cdot rG_{\widehat{A}}$ above at the expense of making the constant $C$ depend on $L_0$.
\end{theorem}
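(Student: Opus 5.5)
The plan is the classical vector field method for the neutral equation, since $\q=0$ makes the gauge terms in \eqref{eq:maineqradfield} disappear. Concretely, I would multiply \eqref{eq:maineqradfield} (with $\q=0$ and $\kappa_c=0$) by a sequence of multipliers, integrate over $\{\tau_1\leq\tau\leq\tau_2\}$ with respect to $d\sigma dr d\tau$, and add up the resulting identities. Because $T$ is Killing and commutes with the operator, it suffices to take $N=0$ and then apply the result to $T^k\psi$. The angular and $Y_*$ derivatives in the region $\{r_H\leq r\leq r_I\}$ would be recovered afterwards by elliptic estimates and by using the equation to express $Y_*^2\psi$.

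\textbf{Step 1 ($T$-energy).} Multiply by $T\overline{\psi}$ and take real parts. With $\q=0$ the boundary fluxes on $\Sigma_\tau$ are non-negative, because $\h\widetilde{\h}\geq 0$ and the zeroth-order potential $r^{-1}\frac{d\Omega^2}{dr}$ has a good sign on $[r_+,\infty)$ (alternatively, one can handle it with a Hardy inequality). The flux through $\mathcal{H}^+$ and $\mathcal{I}^+$ is also non-negative. This gives degenerate energy boundedness, with error $\int\re(\overline{T\psi}\, rG)$, which I keep exactly as it appears on the right-hand side of \eqref{eq:iedphysicalspaceq0}.

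\textbf{Step 2 (Morawetz).} Use a multiplier $f(r)Y_*\overline{\psi}+\frac12 f'\overline{\psi}$ with $f$ increasing, vanishing at $r_{\sharp}$, and bounded, as in \cite{redshift, aretakis1}. The photon sphere $r_\sharp$ is the unique critical point of $r^{-2}\Omega^2$, so the angular bulk term has coefficient $\sim(1-r^{-1}r_\sharp)^2$. Combining with a zeroth-order Lagrangian correction and a Hardy inequality, I would obtain positive bulk control on $\{r_H\leq r\leq r_I\}$, degenerate at $r_\sharp$ for the $T$ and angular derivatives. The $G$-error terms are bounded via Cauchy--Schwarz by a small multiple of the bulk plus $\int r^{-2}|r^3G|^2$-type terms.

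\textbf{Step 3 (weighted estimates near $\mathcal{H}^+$ and $\mathcal{I}^+$).} Near infinity, use the Dafermos--Rodnianski multiplier $\chi_{R}(\Omega^{-1}r)^p\Omega^2 X\overline{\psi}$ in the $\partial_{\rho_c}$ form \eqref{eq:maineqradfieldconf}, for $0\leq p\leq 2$. This yields the bulk $p\mathcal{E}_{p-1}+(2-p)r^{-1}(\Omega^{-1}r)^p|\snabla\psi|^2$. The $r^{-1+\delta}|T\psi|^2$ term comes from an additional small-weight Morawetz piece. Near the horizon, use the mirror multiplier in the $\rho_+$ form \eqref{eq:maineqradfieldconfhor}, with weight $(\Omega^{-1}r)^p\sim(\rho_+(\rho_++2\kappa_+))^{-p/2}$. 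The Couch--Torrence symmetry (Lemma \ref{lm:couchtorr}) makes this computation identical to the one near infinity up to $O(\rho_+)$ and $O(\kappa_+)$ corrections, and these corrections produce the bulk factor $(\rho_++\kappa_+)$ uniformly in $\kappa_+\geq0$. The cut-off errors land in $\{r_H\leq r\leq r_I\}$ and are absorbed by Step 2. The inhomogeneous errors are bounded by Cauchy--Schwarz against the bulk, which produces the weight $\rho_+r^{-1}(\Omega^{-1}r)^p$ in front of $r^{-2}|r^3G|^2$. I expect this step to be the main obstacle: the $\kappa_+$-uniformity and the exact matching of weights to the bulk terms must be checked carefully, and the lower-order potential terms have to be handled with sign-sensitive Hardy inequalities for all $0\leq p\leq2$.

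\textbf{Step 4 (bounded angular frequencies).} For $\psi=\psi_{\leq L_0}$, choose $f$ depending on $L_0$ (equivalently, use the $\ell$-dependent Morawetz currents of \cite{redshift}) so that the bulk is non-degenerate at $r_\sharp$. This removes the factor $(1-r^{-1}r_\sharp)^2$. The term $\re(\overline{T\psi}\,rG)$ can then be estimated by $\varepsilon\int|T\psi|^2$ times suitable weights plus $C_\varepsilon\int(\text{weights})|r^3G|^2$. The first piece is absorbed into the non-degenerate bulk in the middle region and into the $(r-r_+)^{-1+\delta}$ and $r^{-1+\delta}$ terms near the ends, with constants depending on $L_0$.
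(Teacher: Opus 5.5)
Your plan is essentially the paper's proof, which likewise combines degenerate $T$-energy with a Morawetz estimate in $\{r_H\leq r\leq r_I\}$ (citing \cite{damon}, \cite{aretakis1}, or, uniformly in $\kappa_+$, \cite{gaj22a}; large $\ell$ is handled through Corollary \ref{cor:iedmoduleo0th} rather than a single global multiplier) and with the $(\Omega^{-1}r)^{p}$-weighted hierarchies near $\mathcal{H}^+$ and $\mathcal{I}^+$ (\cite{paper4}, Proposition \ref{prop:rpestphysspace}), followed by commutation with $T$ and elliptic estimates. One caveat on Steps 3--4: a small-weight Morawetz piece gives $r^{-1-\delta}|T\psi|^2$ near infinity, not $r^{-1+\delta}|T\psi|^2$ as the statement has it (the latter cannot hold, since $T\psi$ tends to the generically nonzero radiation field along $\Sigma_\tau$); moreover, Cauchy--Schwarz of $\re(\overline{T\psi}\,rG_{\widehat{A}})$ against the horizon and infinity $T$-bulks produces weights like $\rho_+^{1-\delta}|rG_{\widehat{A}}|^2$ and $r^{1+\delta}|rG_{\widehat{A}}|^2$, which the stated weight $\rho_+r^{-1}(\Omega^{-1}r)^{p}r^{-2}|r^3G_{\widehat{A}}|^2$ dominates only after you take $\delta$ small relative to $p$, so your removal of that term works for every $p>0$ but not at $p=0$, where it would need the false endpoint bulks $\rho_+^{-1}|T\psi|^2$ and $r^{-1}|T\psi|^2$.
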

\begin{proof}
	Consider first the case $N=0$. The control of the integral in $\{r_H\leq r\leq r_I\}$ follows from a \emph{Morawetz estimate} or integrated local energy decay estimate. In the $\kappa_+>0$ case, this follows as a special case from \cite{damon}[Theorem 3.2.1]. The $\kappa_+=0$ case follows from \cite{aretakis1}[Theorem 1]. Strictly speaking, the proof of \cite{aretakis1}[Theorem 1] does not consider simultaneously the extremal and sub-extremal cases.  A purely physical-space-based proof, which is moreover uniform in $\kappa_+$, follows from \cite{gaj22a}[\S 6] with $D(r)=1-\frac{2M}{r}+Q^2r^{-2}$. Indeed, the desired Morawetz estimates for $\ell=0$ follow from \cite{gaj22a}[Proposition 6.1] and for $\psi_{\ell}$ with $\ell\geq 1$, we simply combine \cite{gaj22a}[Proposition 6.1] with \cite{gaj22a}[Proposition 6.2], the latter which remains unchanged when $D(r)=1-\frac{2M}{r}+Q^2r^{-2}$. In order to obtain uniform-in-$\kappa_+$ estimates for higher spherical harmonic modes $\psi_{\geq L_0}$, with $L_0$ arbitrarily large, we can directly apply Corollary \ref{cor:iedmoduleo0th} (or alternatively, modify \cite{gaj22a}[Proposition 6.5]).
	
	Control over the integrals in $\{r\leq r_H\}$ and $\{r\geq r_I\}$ follows from $r^p$-weighted energy estimates and $(r-r_+)^{-p}$-weighted energy estimates; see for example \cite{paper4}[Proposition 6.5] for the $|Q|=M$ case or Proposition \ref{prop:rpestphysspace} below for estimates that are uniform in $\kappa_+$.
	
To conclude the $N\geq 1$ cases, we commute with the Killing vector field $T$ and apply standard elliptic estimates in $\{r_H\leq r\leq r_I\}$.
\end{proof}

\begin{remark}
An analogue of Theorem \ref{thm:smallqboundfreq} can also be shown to hold when $\kappa_c>0$ by applying the results in \cite{gon24}.
\end{remark}

\begin{corollary}
\label{cor:iedsmallq}
	Let $\kappa_c=0$, $\kappa_+\geq 0$ and $N\in \N_0$. Let $r^{-1}\psi$ be a solution to \eqref{eq:CSF} arising from the initial value problem in Theorem \ref{thm:gwp} with 1-form $A=\widehat{A}$ and denote with $r^{-1}\widetilde{\psi}$ the solution corresponding to $A=\widetilde{A}$.
	 
	 Then for $1\leq p<2$ and $|\q|$ suitably small (depending on $p$) there exists a constant $C=C(\tilde{\h},r_H,r_I,p,\q,N)>0$, such that:
\begin{multline}
\label{eq:iedphysicalspaceqsmallboundl}
\sup_{\tau\in [\tau_1,\tau_2]}\sum_{k\leq N}\int_{\Sigma_{\tau}}\mathcal{E}_p[T^k\psi]\,d\sigma dr d\sigma+\int_{\mathcal{I}^+\cap \{\tau_1\leq \tau\leq \tau_2\}}|T^{k+1}\psi|^2\,d\sigma d\tau+\int_{\mathcal{H}^+\cap \{\tau_1\leq \tau\leq \tau_2\}}|K_+T^k\psi|^2\,d\sigma d\tau\\
+\int_{\tau_1}^{\tau_2}\int_{\Sigma_{\tau}\cap \{r\notin (r_H,r_I)\}}\mathcal{E}_{p-1}[T^k\psi]\,d\sigma dr d\tau\\
+\sum_{k_1+k_2+k_3\leq N}\int_{\tau_1}^{\tau_2}\int_{\Sigma_{\tau}\cap \{r\in (r_H,r_I)\}}(1-r^{-1}r_{\sharp})^2(|\snabla_{\s^2}^{k_1+1}T^{k_2}Y_*^{k_3}\psi|^2+|\snabla_{\s^2}^{k_1}T^{k_2+1}Y_*^{k_3}\widetilde{\psi}|^2)\\
+|\snabla_{\s^2}^{k_1}T^{k_2}Y_*^{k_3+1}\widetilde{\psi}|^2+|\snabla_{\s^2}^{k_1}T^{k_2}Y_*^{k_3}\psi|^2\,d\sigma dr d\tau\\
\leq C\sum_{k\leq N}\int_{\Sigma_{\tau_1}}\mathcal{E}_p[T^k\psi]\,d\sigma dr d\sigma+C\int_{\tau_1}^{\tau_2}\int_{\Sigma_{\tau}}\rho_+ r^{-1} (\Omega^{-1}r)^pr^{-2}|T^k(r^3G_{\widehat{A}})|^2\,d\sigma dr d\tau\\
+C\left|\int_{\tau_1}^{\tau_2}\int_{\Sigma_{\tau} }\re\left(\overline{T^{k+1}\psi} rT^kG_{\widehat{A}}\right)\,d\sigma dr d\tau\right|.
\end{multline}
\end{corollary}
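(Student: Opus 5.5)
We split $\psi=\psi_{\leq L_0}+\psi_{\geq L_0+1}$ into spherical harmonic pieces, with $L_0\in\N_0$ large and to be fixed. Both pieces solve \eqref{eq:CSF} with inhomogeneity $G_{\widehat{A}}$ projected accordingly, since the projection commutes with the equation by spherical symmetry and our gauge choice $\slashed{A}_{\s^2}\equiv 0$. We estimate each piece separately and then sum.

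\emph{Low angular frequencies.} For $\psi_{\leq L_0}$ we treat the charge terms as a perturbation of the $\q=0$ equation. From \eqref{eq:maineqradfield}, $\psi$ satisfies the $\q=0$ equation with inhomogeneity
\begin{equation*}
rG_{\widehat{A}}+\q\, \mathcal{P}\psi,
\end{equation*}
where $\mathcal{P}\psi$ collects the terms $2i(1-\h)r^{-1}X\psi$, $2ir^{-1}\h\widetilde{\h}T\psi$ and the zeroth-order terms carrying factors of $\q$ or $\q^2$. We apply the version of Theorem \ref{thm:smallqboundfreq} with $\psi=\psi_{\leq L_0}$, in which the degenerate factor $(1-r^{-1}r_\sharp)^2$ and the $\overline{T^{k+1}\psi}$ term are removed at the cost of an $L_0$-dependent constant. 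The new error terms are then $\q$ times weighted spacetime integrals of $\mathcal{P}\psi$ against the multipliers. These are:
\begin{itemize}
\item $\q^2\int \rho_+ r^{-1}(\Omega^{-1}r)^p r^{-2}|r^2\mathcal{P}\psi|^2$, and
\item the cross term $\q\int \re(\overline{T\psi}\,\mathcal{P}\psi)$.
\end{itemize}
We bound these by the left-hand-side bulk terms $p(\rho_++\kappa_+)\mathcal{E}_p$, $p\mathcal{E}_{p-1}$ and the Morawetz integrals, using Cauchy--Schwarz. The delicate region is near $\mathcal{H}^+$ and $\mathcal{I}^+$:
\begin{itemize}
\item $X\psi$ appears with factor $(1-\h)r^{-1}$, and must be paired against weights $(\Omega^{-1}r)^{p}\Omega^2$ with only the degenerate control $(\rho_++\kappa_+)$ or $r^{-1}$ available.
\item The zeroth-order terms $\q^2\h\widetilde{\h}\rho_c^2\psi$ and $i\q r^{-2}(1-\h)\psi$ need Hardy inequalities in $\rho_+$ and in $r$ to be absorbed into $\mathcal{E}_{p-1}$.
\end{itemize}
This is exactly why $p<2$ is required: $p=2$ loses the $|\snabla\psi|^2$ term, and near $p=2$ the Hardy constants degenerate. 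Taking $|\q|$ small depending on $p$ and $L_0$ lets us absorb everything. The last line of \eqref{eq:iedphysicalspaceqsmallboundl} is retained for the pairing of $T\psi$ with the true inhomogeneity. The flux terms on $\mathcal{H}^+$ and $\mathcal{I}^+$ come from the $K_+$- and $T$-energy boundary terms.

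\emph{High angular frequencies.} For $\psi_{\geq L_0+1}$ we use Corollary \ref{cor:iedmoduleo0th}, the physical-space estimate modulo zeroth-order terms supported away from the photon sphere. We combine it with $(\Omega^{-1}r)^p$-weighted estimates (Proposition \ref{prop:rpestphysspace}), run in gauge $\widehat{A}$. The error zeroth-order spacetime integrals $\int_{\supp} |\psi_{\geq L_0+1}|^2$ are absorbed via the Poincaré inequality on $\s^2$:
\begin{equation*}
\int|\psi_{\geq L_0+1}|^2\,d\sigma\leq (L_0+1)^{-1}(L_0+2)^{-1}\int|\snabla\psi_{\geq L_0+1}|^2\,d\sigma,
\end{equation*}
the right side being controlled by the angular-derivative bulk terms. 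The charge contributions in this regime are also lower order relative to $\ell^2$, and we treat them in the same way as in the low-frequency case. The quantities involving $\widetilde{\psi}$ in the trapping region come from a gauge transformation, which is smooth there since $\h$ and $\widetilde{\h}$ are regular on $[r_H,r_I]$.

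\emph{Higher orders and the main obstacle.} For $N\geq1$ we commute with $T$, which commutes with the equation for $A=\widehat{A}$. In $[r_H,r_I]$ we add $\snabla$ and $Y_*$ derivatives through elliptic estimates, as in Theorem \ref{thm:smallqboundfreq}. The main obstacle is the absorption of the $\q X\psi$ and zeroth-order terms near the horizon and near infinity uniformly in $\kappa_+\geq0$. The weights $(\rho_++\kappa_+)$ and $\rho_+r^{-1}(\Omega^{-1}r)^p$ must dominate the products produced by Cauchy--Schwarz. Our first attempt will be to use the $p$-weighted Hardy inequality
\begin{equation*}
\int \rho_+^{p-2}|\psi|^2\lesssim_p \int \rho_+^{p}|\partial_{\rho_+}\psi|^2 \quad\text{for } p<1,
\end{equation*}
and its analogue near $\mathcal{I}^+$. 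For $p\geq1$, where this inequality fails, we instead use the smallness of $\q$ together with the $\mathcal{E}_{p-1}$ bulk term directly.
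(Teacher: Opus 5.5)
Your treatment of $\psi_{\geq L_0+1}$ matches the paper (Corollary~\ref{cor:iedmoduleo0th} plus a Poincar\'e inequality on $\s^2$). The low-frequency step, however, has a genuine gap at the event horizon.

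In the gauge $\widehat{A}$ the charge coupling has coefficient $\q r^{-1}$, which does not vanish at $\mathcal{H}^+$. By \eqref{eq:maineqradfield}, your $\q\mathcal{P}\psi$ contains $2i\q r^{-1}(1-\h)X\psi$, with $1-\h\to -1$ as $r\to r_+$. In the inhomogeneity norm of Theorem~\ref{thm:smallqboundfreq}, this term contributes $\sim\q^2\rho_+(\Omega^{-1}r)^p|X\psi|^2$ near $\mathcal{H}^+$. The only spacetime control of $X\psi$ there is $p(\rho_++\kappa_+)(\Omega^{-1}r)^p\Omega^2|X\psi|^2\sim p(\rho_++\kappa_+)^2\rho_+(\Omega^{-1}r)^p|X\psi|^2$. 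The ratio is therefore $\sim\q^2p^{-1}(\rho_++\kappa_+)^{-2}$. So Cauchy--Schwarz plus smallness of $\q$ closes only if $|\q|\ll\kappa_+$, which is not uniform in $\kappa_+$ as the statement requires, and it fails outright on extremal Reissner--Nordstr\"om. This is exactly the obstacle you name at the end, but nothing in your plan removes it. Near $\mathcal{I}^+$ there is no such problem, since $\q r^{-1}$ decays and the corresponding ratio is $\sim\q^2/p$.

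The missing idea is a gauge shift that makes the potential vanish on $\mathcal{H}^+$. The paper applies the $\q=0$ estimate to $e^{i\q\chi_{r_1}\tau}\phi$ (normalising $r_+=1$), cf.\ \eqref{eq:csfgaugetransf}, where $\chi_{r_1}=1$ near $\mathcal{H}^+$. The coupling coefficient then becomes $\q(\chi_{r_1}r_+^{-1}-r^{-1})=\q\rho_+$ there; effectively $T$ is replaced by $K_+$, which is also the source of the flux $|K_+\psi|^2$ on $\mathcal{H}^+$. The first-order errors are then $\lesssim\q^2\rho_+\mathcal{E}_p$ and can be absorbed.

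The zeroth-order term $\q r^{-2}(1-\h)\psi$ survives the shift. It contributes $\q^2\rho_+^{1-p}|\psi|^2$ near $\mathcal{H}^+$ (extremal case) and $\q^2 r^{p-3}|\psi|^2$ near $\mathcal{I}^+$. These are absorbed, modulo terms on a compact $r$-interval, via
$\int\rho_+^{1-p}|\psi|^2\,d\rho_+\lesssim(2-p)^{-2}\int\rho_+^{3-p}|\partial_{\rho_+}\psi|^2\,d\rho_+$
and
$\int r^{p-3}|\psi|^2\,dr\lesssim(2-p)^{-2}\int r^{p-1}|X\psi|^2\,dr$.
Both hold for every $p<2$ without $\psi$ vanishing on $\mathcal{H}^+$.

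Your own Hardy inequality $\int\rho_+^{p-2}|\psi|^2\lesssim\int\rho_+^p|\partial_{\rho_+}\psi|^2$ is not usable. For $p<1$ its left side diverges unless $\psi|_{\mathcal{H}^+}=0$, and in any case the error weight is $\rho_+^{1-p}$. Your fallback for $p\geq 1$ does not close either. The zeroth-order bulk that Theorem~\ref{thm:smallqboundfreq} actually provides is $(\rho_++\kappa_+)r^{-2}|\psi|^2$ near $\mathcal{H}^+$ and $r^{-2}|\psi|^2$ near $\mathcal{I}^+$. These fall short by factors $\rho_+^{-p}$ (extremal case) and $r^{p-1}$ respectively.

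Two further cautions:
\begin{itemize}
\item Since the phase depends on $\tau$, its $r$-derivatives produce factors $\q\chi'_{r_1}\tau$ in the transition region, which needs care. Alternatively, build the shift into the currents with the multiplier $\chi_{r_1}K_++(1-\chi_{r_1})T$, as in Appendix~\ref{sec:purelyphysied}. The cost is an error $\q\frac{d\chi_{r_1}}{dr}\re(\overline{i\widetilde{\psi}}\,Y_*\widetilde{\psi})$ compactly supported in $r$.
\item For high frequencies, do not re-treat the charge terms perturbatively, since that reintroduces the horizon problem. Do not invoke Proposition~\ref{prop:rpestphysspace} either, since as stated it depends on the Fourier-based Corollary~\ref{cor:iedphysicalspaceho}. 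Corollary~\ref{cor:iedmoduleo0th} already covers the full charged equation, including the $p$-weighted estimates.
\end{itemize}
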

\begin{proof}
	We first consider the case of bounded angular frequencies: $\psi=\psi_{\leq L_0}$. We rearrange the terms in \eqref{eq:maineqradfield}, using \eqref{eq:csfgaugetransf} to obtain:
	\begin{equation*}
		\square_{g_{M,Q,\Lambda}} (e^{i\q\chi_{r_1}\tau}\phi)=e^{i\q\chi_{r_1}\tau}G_{\hat{A}}+F_{\q},
	\end{equation*}
	with
	\begin{multline*}
		F_{\q}=-2i\q r^{-1}\left[(1-\h)(\chi_{r_1}r_+^{-1}-r^{-1}) \partial_r(e^{i\q \chi_{r_1}\tau}\psi)+(\chi_{r_1}r_+^{-1}-r^{-1})  \h\widetilde{\h}T(e^{i\q\chi_{r_1}\tau}\psi)\right]\\
		+r^{-1}\left[-i\q r^{-2}\left(1-\h-(\chi_{r_1}r_+^{-1}-r^{-1})r^2 \frac{d\h}{dr}\right)-\q^2(\chi_{r_1}r_+^{-1}-r^{-1})^2  \h\widetilde{\h}\right]e^{i\q\chi_{r_1}\tau}\psi\\
		+\q\frac{d\chi_{r_1}}{dr}O(1)(|X\psi|+|T\psi|)+\q\frac{d^2\chi_{r_1}}{dr^2}O(1)|\psi|.
	\end{multline*}
	Using that $\supp (\frac{d\chi_{r_1}}{dr})\subset (r_1,\frac{1}{2-r_+^{-1}r_1}r_1)$ for $r_1>r_+$, we can estimate
	\begin{multline*}
		\int_{\tau_1}^{\tau_2}\int_{\Sigma_{\tau}} (1-r^{-1})^{1-p}r^{p+1}|rF_{\q}|^2\,d\sigma dr d\tau\leq C\q^2 \int_{\tau_1}^{\tau_2}\int_{\Sigma_{\tau}\cap\{r\leq r_H\}} \rho_+ \mathcal{E}_{p}[e^{i\q\chi_{r_1}\tau}\psi]+\rho_+^{1-p}|\psi|^2\,d\sigma drd\tau\\
+C\q^2\int_{\tau_1}^{\tau_2}\int_{\Sigma_{\tau}\cap\{r\geq r_I\}} \mathcal{E}_{p-1}[e^{i\q\chi_{r_1}\tau}\psi]+r^{p-3}|\psi|^2\,d\sigma drd\tau\\
+C\q^2 \int_{\tau_1}^{\tau_2}\left[\int_{\Sigma_{\tau}\cap\{r_H\leq r\leq r_I\}}|Y_*\psi|^2+ |\psi|^2+|T\psi|^2+|\snabla_{\s^2}\psi|^2\,d\sigma dr\right] d\tau.
	\end{multline*}
	
	When $p<2$, we can apply a standard Hardy inequality to estimate:
	\begin{multline*}
		\int_{\Sigma_{\tau}}\rho_+^{1-p}\chi_{r_1}|\psi|^2\,d\sigma dr+\int_{\Sigma_{\tau}}r^{p-3}\chi_{r_1}|\psi|^2\,d\sigma dr\leq C(2-p)^{-2}\left[\int_{\Sigma_{\tau}}\rho_+^{3-p}\chi_{r_1}|X\psi|^2\,d\sigma dr+\int_{\Sigma_{\tau}}r^{p-1}\chi_{r_1}|X\psi|^2\,d\sigma dr\right]\\
		+C \int_{\Sigma_{\tau}\cap \{r_1\leq r\leq  R_1\} }\psi^2\,d\sigma dr.
	\end{multline*}

	For $\phi$ supported on spherical harmonics with $\ell\leq L_0$, we can therefore apply \eqref{eq:iedphysicalspaceq0} with $0<p<2$ (without the degenerate factor $(1-r^{-1}r_{\sharp})^2$!) and $G_{\widehat{A}}$ replaced by $G_{\hat{A}}+F_{\q}$ and absorb the terms with a factor $\q^2$ into the LHS of \eqref{eq:iedphysicalspaceq0} for $|\q|$ suitably small, depending on $p$.
	
	Now consider $\psi=\psi_{\geq L_0}$ then we can instead apply Corollary \ref{cor:iedmoduleo0th}.
\end{proof}

\section{Integrated energy estimates in frequency space: set-up}
\label{sec:iledfreq}
In order to obtain integrated energy estimates \underline{without} a smallness assumption on $|q|$, we will need to apply a Fourier transform in the time coordinate $t$ and perform a frequency-space-based analysis.

We provide below an outline of the arguments in frequency space.
\begin{itemize}
	\item In the present section, we will introduce the a priori assumption of \emph{future integrability} to make sense of the Fourier transform and derive a Schr\"odinger-type radial ODE \eqref{eq:radialODE}. The analysis in \S \S \ref{sec:iledfreq}--\ref{sec:iedcomb} concerns only \eqref{eq:radialODE} and can therefore be read independently from the rest of the paper. Here, we will moreover derive the main relevant properties of the potentials appearing in the radial ODE.
	\item In \S \ref{sec:iedlargefreq}, we consider large time \emph{or} angular frequencies and derive $L^2$-estimates for solutions to \eqref{eq:radialODE} by employing ``microlocal" energy currents or vector field multiplier, i.e.\ by a systematic integration by parts adapted to the large frequency behaviour of the potential in \eqref{eq:radialODE}.
	\item In \S \ref{sec:intestboundfreq}, we restrict to bounded time \emph{and} angular frequencies. We derive $L^2$-estimates for solutions to \eqref{eq:radialODE} by utilizing Green's formula \eqref{eq:greenformula} and deriving appropriate $L^{\infty}$-estimates for homogeneous solutions to \eqref{eq:radialODE}, as well as uniform estimates for the corresponding Wronskian. 
	\item In \S \ref{sec:iedcomb}, we combine the estimates in the different frequency ranges.
	\item We then derive energy estimates in physical space in \S\ref{sec:iedphy} from the estimates in \S \ref{sec:iedcomb} by applying Plancherel's theorem. 
	\end{itemize}

\subsection{Sufficient integrability, future integrability and the Fourier transform}
\label{sec:deffutureint}
In this section, we will introduce the notion of future integrability of solutions $\psi_{\ell m}: [0,\infty)_{\tau}\times [r_+,\infty)_r\to \C$ to \eqref{eq:CSF} in the gauge $A=\widehat{A}=-Qr^{-1}d\tau$ with fixed angular frequency $\ell$ and azimuthal number $m$. 

Let $\xi: \R \to \R$ be a smooth cut-off function such that $\xi(\tau)=1$ when $\tau\geq 1$ and $\xi(\tau)=0$ when $\tau\leq 0$. Let
\begin{equation*}
\uppsi_{\ell m}: \R_{t}\times [r_+,\infty)_r\to \C
\end{equation*}
be the trivial extension of $\xi\cdot \psi_{\ell m}(\tau(t,r),r)$ from $\{\tau\geq 0\}\subset \R_t\times [r_+,\infty)_r$ to $\R_{t}\times [r_+,\infty)_r$.
Then we can apply \eqref{eq:maineqradfield} to obtain:
\begin{equation}
\label{eq:CSFtimecutoff}
(g^{-1}_{M,Q,\Lambda})^{\mu \nu}(^{\widehat{A}}D)_{\mu}(^{\widehat{A}}D)_{\nu}(r^{-1}\uppsi_{\ell m}Y_{\ell m})=\xi (G_{\widehat{A}})_{\ell m}Y_{\ell m}+(F_{\widehat{A}, \xi})_{\ell m}Y_{\ell m},
\end{equation}
with
\begin{equation}
\label{eq:Ftimecutoff}
r\Omega^{2}(F_{\widehat{A},\xi})_{\ell m}=-2\dot{\xi} \Omega^2\h \widetilde{\h} T\psi_{\ell m}-\ddot{\xi} \Omega^2\h \widetilde{\h}\psi_{\ell m}+ 2\dot{\xi}(1-\widetilde{\h}\Omega^2)\Omega^2X\psi_{\ell m}+\Omega^2\left(\frac{d\h}{dr}-2i \q r^{-1} \h\widetilde{\h}\right)\dot{\xi}\psi_{\ell m},
\end{equation}
where $\dot  \xi:=\frac{d\xi}{d\tau}, \ddot  \xi:=\frac{d^2\xi}{d\tau^2}$ and where we also identify $\xi (G_{\widehat{A}})_{\ell m}$ with its trivial extension to $\R_{t}\times [r_+,\infty)_r$.

In what follows, we will frequently omit the subscript $\ell m$ in the notation.

With respect to the gauge $\widetilde{A}=-Qr^{-1}dt=\widehat{A}+Qr^{-1}dr_*-Qr^{-1}\widetilde{\h}dr$, we then obtain: 
\begin{equation}
(g^{-1}_{M,Q,\Lambda})^{\mu \nu}(^{\widetilde{A}}D)_{\mu} (^{\widetilde{A}}D)_{\nu}(r^{-1}\widetilde{\uppsi})=\xi G_{\widetilde{A}}+F_{\widetilde{A},\xi},
\end{equation}
with
\begin{align}
\label{eq:uppsihattilde1}
\widetilde{\uppsi}:=&\:e^{i\q r_+^{-1}r_*-i\q\int_{0}^{r_*}\rho_+(r_*')\,dr_*'-i\q \int_{r_{\sharp}}^{r}r'^{-1}\widetilde{\h}(r')\,dr'} \uppsi,\\
\label{eq:Ghattilde1}
G_{\widetilde{A}}=&\:e^{i\q r_+^{-1}r_*-i\q \int_{0}^{r_*}\rho_+(r_*')\,dr_*'-i\q\int_{r_{\sharp}}^{r}r'^{-1}\widetilde{\h}(r')\,dr'} G_{\widehat{A}},\\
\label{eq:Fhattilde1}
F_{\widetilde{A},\xi}=&\:e^{i\q r_+^{-1}r_*-i\q\int_{0}^{r_*}\rho_+(r_*')\,dr_*'-i\q\int_{r_{\sharp}}^{r}r'^{-1}\widetilde{\h}(r')\,dr'} F_{\widehat{A}, \xi}.
\end{align}

The following alternative expressions will also be useful
\begin{align}
\label{eq:uppsihattilde2}
\widetilde{\uppsi}=&\:e^{iqr_c^{-1}r_*+iq\int_{0}^{r_*}\rho_c(r_*')\,dr_*'-iq\int_{2}^{r}r'^{-1}\Omega^{-2}(r')\h(r')\,dr'}  \uppsi,\\
\label{eq:Ghattilde2}
G_{\widetilde{A}}=&\:e^{iqr_c^{-1}r_*+iq\int_{0}^{r_*}\rho_c(r_*')\,dr_*'-iq\int_{2}^{r}r'^{-1}\Omega^{-2}(r')\h(r')\,dr'}   G_{\widehat{A}},\\
\label{eq:Fhattilde2}
F_{\widetilde{A},\xi}=&\:e^{iqr_c^{-1}r_*+iq\int_{0}^{r_*}\rho_c(r_*')\,dr_*'-iq\int_{2}^{r}r'^{-1}\Omega^{-2}(r')\h(r')\,dr'}  F_{\widehat{A}, \xi}.
\end{align}
\begin{definition}
\label{def:suffint}
A function $f: \R_{t}\times [M,\infty)_r\to \C$ is \emph{sufficiently integrable} if: for any $r\in (r_+,r_c)$ and any $k\in \N_0$:
\begin{align}
\label{eq:suffint1}
\int_{\R} |Y_*^kf|^2(t,r)\,dt<\infty,\\
\label{eq:suffint2}
\int_{-\infty}^{0}\int_{\R_t} |D_{\underline{L}}f|^2(t,r(r_*))\, dtdr_*+\int_{0}^{\infty}\int_{\R_t} |D_{L} f|^2(t,r_*(r))\, dtdr_*<\infty.
\end{align}
and
\begin{align}
\label{eq:suffint3}
\int_{\R} |Y_*^k\left((g^{-1}_{M,Q,\Lambda})^{\mu \nu}(^{\widehat{A}}D)_{\mu}(^{\widehat{A}}D)_{\nu}(r^{-1}f)\right)|^2(t,r)\,dt<\infty.
\end{align}
\end{definition}

\begin{definition}
	A solution $\psi$ to \eqref{eq:CSF} with $A=\widehat{A}$ is called \emph{future integrable} if for all $\ell\in \N_0$ and $m\in \Z$ with $|m|\leq \ell$, $\uppsi_{\ell m}$ is sufficiently integrable.
\end{definition}

\begin{proposition}
Let $\mathfrak{F}: L^2(\R_t)\to L^2(\R_{\omega})$ denote the Fourier transform. Let 
\begin{equation*}
\homega:=\omega-\q r_c^{-1}.
\end{equation*}
Assume that $\uppsi_{\ell m}$ is sufficiently integrable. Then for all $k\in \N_0$ and $r_*\in \R$,\\ $u_{\ell m}^{(k)}(\cdot,r_*):=(\mathfrak{F}(Y_*^k\widetilde{\uppsi}))_{\ell m}(\cdot,r(r_*))\in L^2(\R_{\omega})$ is well-defined and $u_{\ell m}(\cdot,r_*)$ satisfies the following equation in $L^2(\R_{\omega})$:
\begin{equation}
\label{eq:radialODE}
u_{\ell m}''(\cdot,r_*)+(\homega^2-V_{\homega  \ell})u_{\ell m}(\cdot,r_*)=H_{ \ell m}(\cdot,r_*),
\end{equation}
with
\begin{equation*}
H_{\ell m}(r_*)=r\Omega^{2}(\mathfrak{F}(\widehat{G}_{\widetilde{A}}(\cdot,r(r_*))))_{\ell m}+r\Omega^{2}(\mathfrak{F}(F_{\widetilde{A},\xi}(\cdot,r(r_*))))_{\ell m}
\end{equation*}
and $V_{\homega  \ell}:[r_+,r_c)\to \R$ is defined as follows:
\begin{equation*}
V_{\homega   \ell}(r):=\ell(\ell+1)\Omega^2r^{-2}-{\q}^2\rho_c^2(r)+2{\q} \homega \rho_c(r)+\Omega^2\left[r^{-1}\frac{d\Omega^2 }{dr}+\frac{2\Lambda}{3}\right] .
\end{equation*}
\end{proposition}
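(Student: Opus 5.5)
The plan is to Fourier transform in $t$ the equation satisfied by $\widetilde{\uppsi}$ written in the gauge $\widetilde{A}$, i.e.\ equation \eqref{eq:maineqradfieldtilde} with $\h\equiv 1$. We are allowed to pass derivatives through $\mathfrak{F}$ because of the sufficient integrability assumptions. The argument has three steps: well-definedness of $u^{(k)}$, derivation of the ODE, and identification of the potential.

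\textbf{Well-definedness.} Fix $r\in(r_+,r_c)$. By \eqref{eq:suffint1}, $Y_*^k\uppsi(\cdot,r)\in L^2(\R_t)$. The gauge factor relating $\widetilde{\uppsi}$ and $\uppsi$ in \eqref{eq:uppsihattilde1} depends only on $r$ and has modulus one. Commuting $Y_*$ through it produces only bounded $r$-dependent multiples of lower-order terms. Hence $Y_*^k\widetilde{\uppsi}(\cdot,r)\in L^2(\R_t)$, and $u^{(k)}$ is well defined pointwise in $r_*$.

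\textbf{The ODE.} In $(t,r_*)$ coordinates we have $Y_*=\partial_{r_*}$ and $T=\partial_t$. Since $\mathfrak{F}$ commutes with multiplication by functions of $r_*$, we need to show two things.
\begin{itemize}
\item The $r_*$-derivatives commute with $\mathfrak{F}$: $\partial_{r_*}u^{(k)}=u^{(k+1)}$ in $L^2_\omega$. For this, write
\[
Y_*^k\widetilde{\uppsi}(t,r_*)-Y_*^k\widetilde{\uppsi}(t,r_{*,0})=\int_{r_{*,0}}^{r_*}Y_*^{k+1}\widetilde{\uppsi}\,dr_*'.
\]
Applying $\mathfrak{F}$ requires $L^2_t$ control of the integrand, locally uniformly in $r_*$. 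I would get this by expressing $Y_*=L-\underline{L}$ and using \eqref{eq:suffint2}, or by using \eqref{eq:suffint1} together with continuity/Fubini on compact $r_*$-intervals. Minkowski's integral inequality then shows that $\mathfrak{F}$ commutes with the $dr_*'$-integral.
\item Time derivatives become multiplication: $\mathfrak{F}(K_c f)=-i\homega\,\mathfrak{F}(f)$, with sign conventions $\mathfrak{F}(\partial_t f)=-i\omega\mathfrak{F}f$ and $K_c=T+i\q r_c^{-1}$. This holds in the distributional sense. It becomes a genuine $L^2$ identity once we use that the remaining terms in \eqref{eq:maineqradfieldtilde} (namely $Y_*^2\widetilde{\uppsi}$, the zeroth-order terms, and the inhomogeneity, the latter in $L^2_t$ by \eqref{eq:suffint3}) all lie in $L^2_t$.
\end{itemize}
Taking $\mathfrak{F}$ of \eqref{eq:maineqradfieldtilde}, projected onto $Y_{\ell m}$ (which commutes since the background is spherically symmetric, with $\slashed\Delta\mapsto-\ell(\ell+1)$), then gives
\[
u''+\homega^2u-\ell(\ell+1)\Omega^2r^{-2}u-2\q\homega\rho_c u+\Big[\q^2\rho_c^2-r^{-1}\Omega^2\tfrac{d\Omega^2}{dr}-\tfrac{2\Lambda}{3}\Big]u=H.
\]
Here the identity $-2i\q\rho_c K_c\mapsto -2\q\homega\rho_c$ has been used.

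\textbf{The potential.} Comparing with $V_{\homega\ell}$ should require only one point: the $\Lambda$-term must come out as $\Omega^2\cdot\frac{2\Lambda}{3}$. This follows once we remember that \eqref{eq:maineqradfieldtilde} is the $\Omega^{2}$-multiplied version of the equation. The $-\frac{2\Lambda}{3}\phi$ term in \eqref{eq:CSF} is thus multiplied by $r\Omega^2$, and the displayed bracket should be read with this factor of $\Omega^2$. I would recheck this bookkeeping carefully, since that is the only place where a discrepancy can arise.

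The inhomogeneity $H$ consists of the transforms of $r\Omega^2\xi G_{\widetilde{A}}$ and $r\Omega^2F_{\widetilde{A},\xi}$, projected onto $Y_{\ell m}$. The main obstacle is justifying the interchange of $\partial_{r_*}^2$ with $\mathfrak{F}$ using only the qualitative assumptions in Definition~\ref{def:suffint}.
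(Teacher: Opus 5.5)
This is essentially the paper's proof: Fourier transform \eqref{eq:maineqradfieldtilde} (the $\widetilde{A}$-gauge, $\Omega^2$-multiplied equation) at fixed $r_*$, justified by \eqref{eq:suffint1} and \eqref{eq:suffint3}, and replace $K_c$ by $-i\homega$. Your bookkeeping is right, including the factor $\Omega^2$ on $\frac{2\Lambda}{3}$ (which the paper's proof writes as $[\ldots+\frac{2\Lambda}{3}]\Omega^2$) and $\rho_c$ as the coefficient of $K_c$.

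The paper does not address the interchange of $\partial_{r_*}$ with $\mathfrak{F}$ here at all. It effectively reads $u''$ as $u^{(2)}=\mathfrak{F}(Y_*^2\widetilde{\uppsi})$ and only discusses regularity in $r_*$ in the subsequent proposition, so you do not need that step. Note also that your first suggested route through \eqref{eq:suffint2} would not supply it, since that condition controls only $D_{\underline{L}}$ for $r_*<0$ and $D_L$ for $r_*>0$.
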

\begin{proof}
We apply \eqref{eq:maineqradfieldtilde} to obtain:
\begin{multline*}
r(g^{-1}_{M,Q,\Lambda})^{\mu \nu}(^{\widetilde{A}}D)_{\mu} (^{\widetilde{A}}D)_{\nu}(r^{-1}\widetilde{\uppsi}_{\ell m})=-K_c^2\widetilde{\uppsi}_{\ell m}+ Y_*^2\widetilde{\uppsi}_{\ell m}+r^{-2}(\slashed{\Delta}_{\s^2}\widetilde{\uppsi})_{\ell m}+
\q^2\rho_c^{2}\widetilde{\uppsi}\\
-\left[\frac{d}{dr}(\Omega^2)r^{-1}+\frac{2\Lambda}{3}\right]\Omega^2\widetilde{\uppsi}_{\ell m}-2i \q r^{-1}K_c\widetilde{\uppsi}_{\ell m}.
\end{multline*}
To obtain \eqref{eq:radialODE}, we first replace $K_c$ with $-i\homega \mathbf{1}$ and using that \eqref{eq:suffint1} in the definition of sufficient integrability allows us to take a Fourier transform in $t$.
\end{proof}
Let:
\begin{equation*}
\tomega:=\omega-\q r_+^{-1}.
\end{equation*}

It will be convenient to define the following shifted potential $\widetilde{V}_{\tomega \ell}: [r_+,r_c)_r\to \infty$:
\begin{equation*}
\widetilde{V}_{\tomega \ell}(r)-\tomega^2:=V_{\homega \ell}(r)-\homega^2.
\end{equation*}

We can then express:
\begin{equation*}
\widetilde{V}_{\tomega \ell}(r)=\ell(\ell+1)\Omega^2r^{-2}-{\q}^2\rho_+^2(r)-2{\q}\tomega \rho_+(r)+\Omega^2\left[r^{-1}\frac{d\Omega^2}{dr}+\frac{2\Lambda}{3}\right].
\end{equation*}

Given a solution $u_{\ell m}$ to \eqref{eq:radialODE}, we define the auxiliary functions:
\begin{align}
\label{eq:defv}
 v_{\ell m}:=&\:e^{i\tomega r_*+i \q \int_{0}^{r_*}\rho_+(r_*')\,dr_*'}u_{\ell m},\\
 \label{eq:defw}
 w_{\ell m}:=&\:e^{-i\homega r_*+i \q \int_{0}^{r_*}\rho_c(r_*')\,dr_*'}u_{\ell m}.
\end{align}

The functions $v_{\ell m}$ and $w_{\ell m}$ may be thought of as Fourier transforms of gauge transformations of $\uppsi$ with respect to the time coordinates $\tau=t+r_*$ and $\tau=t-r_*$, respectively, where $v_{\ell m}$ corresponds to the gauge choice $A=-\frac{Q}{r}dv $ and $w_{\ell m}$ corresponds to the gauge choice $A=-\frac{Q}{r}du$, cf. \eqref{eq:uppsihattilde1} with $\widetilde{\h}\equiv 0$ and  \eqref{eq:uppsihattilde2} with $\h\equiv 0$.

Observe that by Lemma \ref{lm:couchtorr}, the following equality holds when $\kappa_c=\kappa_+=0$:
\begin{equation}
\label{eq:ctorrpot}
\widetilde{V}_{\tomega \ell}(r(r_*))=V_{-\homega \ell}(r(-r_*)).
\end{equation}

In the lemma below, we establish an $L^2$-version of the appropriate boundary conditions satisfied by solutions to \eqref{eq:radialODE}.

\begin{lemma}
\label{lm:boundcond}
Assume that $\uppsi_{\ell m}$ is sufficiently integrable.  Then there exist sequences $\{r_i\}$ and $\{\tilde{r}_i\}$ with $r_i\to r_c$ and $\tilde{r}_i\to r_+$ as $i\to \infty$, such that:
\begin{align*}
\lim_{i\to \infty} ||w'_{\ell m}(\cdot,r_*(r_i))||_{L^2(\R_{\omega})}\to\: & 0,\\
\lim_{i\to \infty} ||v'_{\ell m}(\cdot,r_*(\tilde{r}_i))||_{L^2(\R_{\omega})}\to \:& 0
\end{align*}
as $i\to \infty$. In particular, $v'_{\ell m}(r_*)\to 0$ as $r_*\to -\infty$ and $w'_{\ell m}(r_*)\to 0$ as $r_*\to \infty$ for almost every $\omega\in \R$.
\end{lemma}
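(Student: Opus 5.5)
The plan is to identify $v'_{\ell m}$ and $w'_{\ell m}$ with Fourier transforms in $t$ of gauge-covariant null derivatives of $\widetilde{\uppsi}$. Then \eqref{eq:suffint2} says that these null derivatives are square-integrable in $(t,r_*)$ near the two ends. A standard integrability argument then produces the required sequences.

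\textbf{Step 1: identify $w'$ with a transformed $D_L$-derivative.} Differentiating \eqref{eq:defw} in $r_*$ gives
\begin{equation*}
w'_{\ell m}=e^{-i\homega r_*+i\q\int_0^{r_*}\rho_c\,dr_*'}\left(u'_{\ell m}-i\homega u_{\ell m}+i\q\rho_c u_{\ell m}\right).
\end{equation*}
With the convention $\mathfrak{F}(K_c f)=-i\homega\mathfrak{F}(f)$, the bracket is $\mathfrak{F}$ applied to $Y_*\widetilde{\uppsi}+K_c\widetilde{\uppsi}+i\q\rho_c\widetilde{\uppsi}$. Since $Y_*=L-\Lbar$ and $T=L+\Lbar$, this equals, up to the factor $2$, the Fourier transform of $L\widetilde{\uppsi}+i\q r^{-1}\widetilde{\uppsi}$ (note $\rho_c+r_c^{-1}=r^{-1}$). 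This is $D_L\widetilde{\uppsi}$ in the gauge $\widetilde{A}=-Qr^{-1}dt$, because $\widetilde{A}(L)=-Qr^{-1}$. Similarly,
\begin{equation*}
v'_{\ell m}=e^{i\tomega r_*+i\q\int_0^{r_*}\rho_+\,dr_*'}\,\mathfrak{F}\bigl(-2D_{\Lbar}\widetilde{\uppsi}\bigr)\quad\text{up to sign conventions.}
\end{equation*}
The prefactors are unimodular for real $\omega$, so
\begin{equation*}
\|w'(\cdot,r_*)\|_{L^2_\omega}=C\|D_L\widetilde{\uppsi}(\cdot,r_*)\|_{L^2_t},\qquad \|v'(\cdot,r_*)\|_{L^2_\omega}=C\|D_{\Lbar}\widetilde{\uppsi}(\cdot,r_*)\|_{L^2_t}
\end{equation*}
by Plancherel. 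The passage to the gauge $\widetilde{A}$ only multiplies by the unimodular, $t$-independent phases \eqref{eq:uppsihattilde1}--\eqref{eq:uppsihattilde2}. Hence the gauge-covariant derivatives have the same modulus in either gauge, and \eqref{eq:suffint2} may be used as stated.

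\textbf{Step 2: extract the sequences.} By \eqref{eq:suffint2} and Fubini, the function $g(r_*):=\|D_L\widetilde{\uppsi}(\cdot,r_*)\|^2_{L^2_t}$ lies in $L^1([0,\infty))$. An integrable function cannot stay bounded below by a positive constant on a set of infinite measure, so $\liminf_{r_*\to\infty}g=0$. We therefore pick $r_{*,i}\to\infty$ with $g(r_{*,i})\to 0$ and set $r_i=r(r_{*,i})\to r_c$. Some care is needed to make $g$ defined pointwise rather than only almost everywhere. For this we use that $u$, $u'$ are continuous in $r_*$ with values in $L^2_\omega$, which follows from \eqref{eq:suffint1} with $k=0,1$ and the ODE \eqref{eq:radialODE} together with \eqref{eq:suffint3}. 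The same argument on $(-\infty,0]$ with $D_{\Lbar}$ gives $\tilde r_i\to r_+$.

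\textbf{Step 3: pointwise a.e.\ convergence.} Convergence in $L^2_\omega$ gives almost-everywhere convergence along a subsequence only. To upgrade this to the full limit $r_*\to\pm\infty$ for almost every $\omega$, I would use the ODE for $w$. From \eqref{eq:radialODE}, $w$ satisfies $\bigl(e^{2i\homega r_*-2i\q\int\rho_c}w'\bigr)'=e^{\dots}(V_{\homega\ell}-\text{phase terms})w+e^{\dots}H$. Here the potential difference decays like $r^{-2}$, since $\q^2\rho_c^2$ and $\ell(\ell+1)\Omega^2r^{-2}$ are integrable in $r_*$ at the ends. Then $w'$ has a limit as $r_*\to\infty$ for a.e.\ $\omega$, provided $w$ is bounded and $H$ is integrable there. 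Since a subsequence tends to $0$, that limit must be $0$.

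\textbf{Main obstacle.} The hard part is the justification in Step 3, namely the boundedness of $w$ and the integrability of $H$ for a.e.\ $\omega$ near the ends, especially in the extremal case where $r_*\to\pm\infty$ corresponds to irregular singular points. If this proves delicate, I would weaken the conclusion to the subsequential a.e.\ statement, which is all that is needed for the boundary terms in the later multiplier estimates.
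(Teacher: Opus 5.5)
Your Steps 1--2 are essentially the paper's proof. The paper also uses Plancherel to turn \eqref{eq:suffint2} into $\int_{-\infty}^{0}\int_{\R}|v'|^2\,d\omega\,dr_*<\infty$, and the analogous statement for $w'$ on $[0,\infty)$. It then uses the mean-value theorem on the dyadic intervals $[-2^{i+1},-2^{i}]$ to select points $(r_*)_i$ with $\|v'(\cdot,(r_*)_i)\|^2_{L^2_\omega}\leq C2^{-i}$. The one difference is how the connection term is removed: the paper specializes to $\widetilde{\h}\equiv 0$ (resp.\ $\h\equiv 0$), whereas you invoke gauge covariance, which is cleaner.

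Two small corrections to Step 1 and Step 2:
\begin{itemize}
\item Since $L=\tfrac12(T+Y_*)$, one has $dt(L)=\tfrac12$. Hence $\widetilde{A}(L)=-\tfrac{Q}{2r}$ and $D_L=L+\tfrac{i\q}{2r}$. Your bracket is exactly $\mathfrak{F}(2D_L\widetilde{\uppsi})$; your two factor-of-two slips cancel.
\item The continuity in $r_*$ you appeal to in Step 2 is not needed. Simply pick the points inside the full-measure set of $r_*$ on which Fubini and Plancherel apply.
\end{itemize}

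On Step 3: the paper's proof does not obtain the full limit $r_*\to\pm\infty$ either. It only shows convergence along the chosen sequence, and that is what is then used for \eqref{eq:inhomodebc1}--\eqref{eq:inhomodebc2}. Along that sequence no subsequence is needed. The summable bound gives, by Tonelli, $\sum_i|v'(\omega,(r_*)_i)|^2<\infty$ for a.e.\ $\omega$. You get the same by choosing $g(r_{*,i})\leq 2^{-i}$. So your fallback is precisely what the paper proves.

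If you want the genuine limit, your conjugation idea is the right one. In the $w$- and $v$-equations the Coulomb term is absorbed by the phase. The remaining potential $\ell(\ell+1)\Omega^2r^{-2}+\Omega^2[r^{-1}\tfrac{d\Omega^2}{dr}+\tfrac{2\Lambda}{3}]-i\q r^{-2}\Omega^2$ is $O(r_*^{-2})$, even at extremality. However, the argument needs a hypothesis on $H(\omega,\cdot)$ near the ends that sufficient integrability does not supply.

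A quicker route than bounding $w$ is the following. For a.e.\ $\omega$ you already have $\int^{\infty}|w'|^2\,dr_*<\infty$, hence $|w|\lesssim 1+r_*^{1/2}$. If in addition $H(\omega,\cdot)\in L^2$ near $r_*=\infty$, the $w$-equation gives $w''\in L^2$ there. Then $(|w'|^2)'\in L^1$, so $|w'|^2$ has a limit, and since $|w'|^2$ is integrable that limit must be $0$.
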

\begin{proof}
We will suppress the subscript $\ell m$ in the notation below. By the assumption \eqref{eq:suffint2}, together with the Plancherel identity, we have that for $\widetilde{\h}=0$ $D_{\underline{L}}\uppsi=\underline{L}\uppsi=(T-Y_*)\uppsi$, there exists a constant $C(\uppsi)>0$ such that:
\begin{multline*}
	C(\uppsi)>\frac{1}{4}\int_{-\infty}^0\int_{\R_t} |(T-Y_*)\uppsi|^2\,dt dr_*=\frac{1}{4}\int_{-\infty}^0\int_{\R_t} |\partial_t \widetilde{\uppsi}+i\q (r_+^{-1}-\rho_+(r_*))\widetilde{\uppsi}-Y_*\uppsi|^2\,dtdr_*\\
	=\frac{1}{4}\int_{-\infty}^0 \int_{\R_{\omega}}|v'|^2\,d\omega dr_*.
\end{multline*}
Consider the sequence $\{-2^i\}$. By the mean-value theorem, there exists a sequence $\{(r_*)_i\}$ with $-2^{i+1}\leq (r_*)_i\leq- 2^{i}$, such that
\begin{equation*}
	\int_{\R_{\omega}}|v'|^2(\omega,(r_*)_i)\,d\omega\leq C(\uppsi) 2^{-i-1}\to 0
\end{equation*}
as $i\to \infty$. Since $v'((r_*)_i,\cdot)\in L^2(\R_{\omega})$ converges to 0 , it must follow that  $v'((r_*)_i,\omega)\to 0$ pointwise for almost all $\omega\in \R$.

The convergence properties of $w_{\ell m}'$ follow analogously by considering $\h\equiv 0$ so that $D_{L}\uppsi=L\uppsi$. 
\end{proof}

Since we will derive $L^2_{\omega}$-estimates for \eqref{eq:radialODE}, we can consider, without loss of generality solutions to \eqref{eq:radialODE} that are valid outside of a zero measure subset of $\R_{\omega}$. The proposition below demonstrates that by omitting a zero measure subset of frequencies, we can in fact restrict to smooth solutions to \eqref{eq:radialODE}.
\begin{proposition}
Assume that $\uppsi_{\ell m}$ is sufficiently integrable, such that $\uppsi_{\ell m}$ is a solution to \eqref{eq:CSFtimecutoff}. Then, for any $\ell\in \N_0$ and $m\in \Z$ with $|m|\leq \ell$, $H_{ m \ell}(\omega,\cdot)$ is smooth for almost all $\omega\in \R$ and the corresponding solution $u_{\ell m}(\omega,\cdot)$ to \eqref{eq:radialODE} is smooth and satisfies the boundary conditions:
\begin{align}
\label{eq:inhomodebc1}
\lim_{r_*\to \infty} w'_{\ell m}(\omega,r_*)\to\: & 0,\\
\label{eq:inhomodebc2}
\lim_{r_*\to -\infty} v'_{\ell m}(\omega, r_*)\to \:& 0
\end{align}
\end{proposition}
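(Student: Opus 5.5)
The plan is to obtain smoothness in $r_*$ for almost every $\omega$ from the ODE itself, and the boundary conditions from Lemma \ref{lm:boundcond} upgraded from a subsequence to a genuine limit.

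\textbf{Smoothness of $H$ and $u$.} First, by \eqref{eq:suffint3} applied to all $k$ and the fact that $Y_*=\partial_{r_*}$ in $(t,r_*)$ coordinates, the functions $\mathfrak{F}(Y_*^k(r\Omega^2(\xi G_{\widetilde{A}}+F_{\widetilde{A},\xi})))(\cdot,r_*)$ lie in $L^2(\R_\omega)$ for every $r_*$ and every $k$. The same holds for $u^{(k)}$ by \eqref{eq:suffint1}. I would argue that, as elements of $L^2_{\rm loc}(\R_{r_*};L^2(\R_\omega))$, the distributional $r_*$-derivatives of $u$ and $H$ are the $u^{(k)}$ and $H^{(k)}$. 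This follows by pairing with test functions in $(\omega,r_*)$ and using Plancherel, since $\mathfrak{F}$ commutes with $\partial_{r_*}$ at fixed $t$-frequency.

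By Fubini, for almost every $\omega$ we then have $u(\omega,\cdot),H(\omega,\cdot)\in H^k_{\rm loc}(\R_{r_*})$ for all $k$. Here I take a countable intersection over $k$ and over compact intervals with rational endpoints. Sobolev embedding then gives smooth representatives. For such $\omega$, \eqref{eq:radialODE} holds pointwise, because it holds in $L^2_\omega$ for each $r_*$ and hence as distributions in $r_*$ for almost every $\omega$. Alternatively, once $H(\omega,\cdot)$ is smooth and $u(\omega,\cdot)\in H^2_{\rm loc}$ solves the ODE weakly, standard ODE regularity (bootstrapping $u''=(V-\homega^2)u+H$) gives $u\in C^\infty$.

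\textbf{Boundary conditions.} Lemma \ref{lm:boundcond} only gives $v'\to 0$ along a sequence $(r_*)_i\to-\infty$ for almost every $\omega$. To obtain a genuine limit, I would use the ODE satisfied by $v$:
\begin{equation*}
v''-2i(\tomega+\q\rho_+)v'=e^{i\tomega r_*+i\q\int_0^{r_*}\rho_+}\bigl(H-(\widetilde{V}_{\tomega\ell}+\q^2\rho_+^2+2\q\tomega\rho_+-i\q\rho_+')u\bigr).
\end{equation*}
The zeroth-order coefficient on the right decays as $r_*\to-\infty$, since it is $O(\Omega^2)$. Integrating this first-order linear equation for $v'$ between $(r_*)_i$ and $r_*$ gives $v'(r_*)$ in terms of $v'((r_*)_i)$ plus an integral of the right-hand side. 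It therefore suffices that the right-hand side be integrable in $r_*$ near $-\infty$ for almost every $\omega$, and that $u$ stay bounded.

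For integrability, I would use \eqref{eq:suffint2} together with $\int|v'|^2\,dr_*\,d\omega<\infty$, which gives $v'(\omega,\cdot)\in L^2((-\infty,0))$ for almost every $\omega$. This yields local boundedness of $v$ with at most $|r_*|^{1/2}$ growth. That growth is killed by the decaying weight, which is exponential when $\kappa_+>0$ and $O(r_*^{-2})$ when $\kappa_+=0$. For the $H$ term, $r\Omega^2 F_{\widetilde{A},\xi}$ carries a factor $\Omega^2$ and derivatives of $\psi$ in a compact $\tau$-range, and $\xi G$ is handled the same way. Hence $\int_{-\infty}^0|H|\,dr_*\in L^2_\omega$, after possibly again passing to almost every $\omega$. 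The argument for $w'$ as $r_*\to\infty$ is identical, with $\rho_c$ and $\homega$.

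\textbf{Main obstacle.} The delicate step is the extremal case $\kappa_+=0$ or $\kappa_c=0$. There $\rho_+\sim|r_*|^{-1}$, so the coefficient $2i\q\rho_+$ in the $v'$ equation and the Coulomb-type term $\q\tomega\rho_+$ decay only like $|r_*|^{-1}$, and a crude Grönwall estimate is borderline. I would first try to absorb the $\rho_+$ term into the integrating factor (it is a pure phase, so it does not affect modulus). I would then control the remaining $O(|r_*|^{-2})|u|$ term via the $L^2$ bound on $v'$, which makes it integrable. If that fails, I would instead show that the set of $\omega$ where the limit does not exist has measure zero by a dyadic $L^2_\omega$ argument like the one in the proof of Lemma \ref{lm:boundcond}.
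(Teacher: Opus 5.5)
Your smoothness argument is the same as the paper's: Fubini plus Sobolev embedding on countably many $r_*$-intervals, for all $k$. For the boundary conditions you take a different and more complete route. The paper cites Lemma \ref{lm:boundcond}, but that lemma's proof only yields $v'(\omega,(r_*)_i)\to0$ along the dyadic sequence $(r_*)_i\in[-2^{i+1},-2^i]$ for a.e.\ $\omega$. It gets this from summability of $\|v'(\cdot,(r_*)_i)\|^2_{L^2_\omega}$; the full limit claimed there is never derived.

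Your integrating-factor argument supplies the missing step. The first-order coefficient $-2i(\tomega+\q\rho_+)$ is purely imaginary, so $\Phi=\exp\bigl(-2i\int_0^{r_*}(\tomega+\q\rho_+)\,dr_*'\bigr)$ is unimodular and $(\Phi v')'=\Phi R$. If $R\in L^1((-\infty,0))$, then $\Phi v'$ is Cauchy as $r_*\to-\infty$, and its limit must be $0$ because of the sequence. Your ``main obstacle'' does not arise. The terms $2\q\tomega\rho_++\q^2\rho_+^2$ cancel exactly against those in $\widetilde V_{\tomega\ell}$, so the zeroth-order coefficient is $O(\Omega^2)$, as you yourself note. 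Hence $\Omega^2|v|\lesssim\Omega^2(1+|r_*|^{1/2})$ is integrable also when $\kappa_+=0$. Your displayed $v$-equation has a harmless sign slip in the zeroth-order term. What your route buys is the genuine limit, which is what the paper later uses to evaluate boundary terms at $r_*=\pm\infty$ (e.g.\ Lemma \ref{lm:superradiance}). Your fallback dyadic argument, by contrast, would only reproduce sequential convergence.

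One step does not follow as written: ``$\xi G$ is handled the same way''. The $F_{\widetilde A,\xi}$ part is fine. At fixed $r$ it is supported in a unit $t$-interval, it carries an explicit $\Omega^2$, and it involves only $\psi,T\psi,X\psi$ on $\{0\le\tau\le1\}$, which are bounded near the ends for smooth data. But $G$ is not compactly supported in $\tau$. Moreover, \eqref{eq:suffint3} only gives $\|G(\cdot,r)\|_{L^2_t}<\infty$ at each fixed $r$, with no uniformity as $r\to r_+$ or $r\to r_c$. So $\int_{-\infty}^0\|H(\cdot,r_*)\|_{L^2_\omega}\,dr_*<\infty$ is not a consequence of sufficient integrability.

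You should add an explicit hypothesis, for example
$$\int_0^\infty\int_{\Sigma_\tau}(\rho_+\rho_c)^{1-\delta}r^{-2}|r^3G_{\widehat A}|^2\,d\sigma\,dr\,d\tau<\infty$$
for some $\delta>0$. Cauchy--Schwarz in $r_*$ plus Plancherel, as in the proof of Lemma \ref{lm:estHbyF}, then gives $H\in L^1_{r_*}L^2_\omega$ near both ends. Such a bound holds in all of the paper's applications. Alternatively, $H\in L^2(dr_*\,d\omega)$ near the ends already suffices. Then $v''\in L^2((-\infty,0))$ for a.e.\ $\omega$, so $v'\in H^1((-\infty,0))$ and $v'\to0$ directly, without using the sequence.
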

\begin{proof}
	By \eqref{eq:suffint3}, we have that $u_{\ell m}^{(k)}\in L^2(\R_{\omega})$ are well-defined for all $k\in \N_0$ and there exists constants $D_k(\uppsi)>0$ such that $||u_{\ell m}||_{H^k((r_*-\epsilon,r_*+\epsilon))L^2(\R_{\omega})}\leq D_k(\uppsi)$.
	
	By Fubini's theorem and a Sobolev embedding, there therefore exist zero-Lesbesgue-measure subsets $U_k\subset \R_{\omega}$, such that $u_{\ell m}(\omega,\cdot)\in C^k((r_*-\epsilon,r_*+\epsilon))$ for $\omega\in U_k^c$. Note that $U:=\bigcup_{k\in \N_0}U_k$ also has zero measure and $u_{\ell m}(\omega,\cdot)\in C^{\infty}((r_*-\epsilon,r_*+\epsilon))$ for all $\omega\in U^c$. The same argument also applies to $H_{\ell m}$ to conclude that $H_{\ell m}(\omega,\cdot)\in C^{\infty}((r_*-\epsilon,r_*+\epsilon))$. Since $r_*\in \R$ was arbitrary and $\R_{r_*}$ can be covered by a countable number of sets of the form $(r_*-\epsilon,r_*+\epsilon)$, we conclude smoothness of $u_{\ell m}(\omega,\cdot)$ and $H_{\ell m}(\omega,\cdot)$ for almost every $\omega\in \R$.
	
	Finally, \eqref{eq:inhomodebc1} and \eqref{eq:inhomodebc2} follow (after removing a further zero-measure subset) from Lemma \ref{lm:boundcond}.
\end{proof}
\subsection{Properties of the potential $V_{\homega \ell}$}
The analysis of solutions to \eqref{eq:radialODE} depends crucially on the precise form of the potential $V_{\homega \ell}$. \textbf{For the sake of notational convenience, we will set $r_+=1$ the remainder of \S \ref{sec:iledfreq}!}

In the case $\kappa_c=\kappa_+=0$, the potentials ${V}_{\homega \ell}(r)$ and $\widetilde{V}_{\tomega \ell}$ are symmetric under the transformations $r_*\mapsto -r_*$ and $\homega\mapsto -\tomega$ by \eqref{eq:ctorrpot}. More generally, $\widetilde{V}_{\tomega \ell}$ has the same qualitative behaviour when we consider the transformation $(r_*,\homega, \kappa_c,\kappa_+)\mapsto (-r_*,-\tomega, \kappa_+,\kappa_c)$.

We will denote:
\begin{multline*}
V^{\sharp}_{\omega \ell}(r):=V_{\tomega \ell}(r)+{\q}^2(r^{-1}-r_c^{-1})^2-\Omega^2\left[r^{-1}\frac{d\Omega^2}{dr}+\frac{2\Lambda}{3}\right]=\ell(\ell+1)\Omega^2r^{-2}+2{\q} \homega (r^{-1}-r_c^{-1}).
\end{multline*}

The $V^{\sharp}_{\omega \ell}$ part of the potential $V_{\tomega \ell}$ is relevant when studying the large $|\omega|$ or large $\ell$ behaviour. In the lemma below, we determine the shape of $V^{\sharp}_{\omega \ell}$.
\begin{lemma}
\label{lm:generalcriticalpointsV}
In the interval $[1,r_c)$ function $V^{\sharp}_{\omega \ell}$ has at most two critical points.
\begin{enumerate}
\item Suppose that $\q\homega\neq 0$. In the case of exactly two critical points, there is a local minimum at $r=r_{\rm min}$ and a local maximum at $r=r_{\rm max}$, with $r_{\rm min}< r_{\rm max}$ if $\q\homega>0$ and $r_{\rm min}>r_{\rm max}$ if $\q\homega<0$.
\item Suppose that $\q\homega\neq 0$. In the case of exactly one critical point, the critical point must be a global maximum.
\item In the absence of critical points, $V^{\sharp}_{\omega \ell}$ is decreasing if $\q\homega>0$ and increasing if $\q\homega<0$.
 \end{enumerate}
\end{lemma}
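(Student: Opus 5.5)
The plan is to change variables to $x=r^{-1}\in(r_c^{-1},1]$ (with $r_+=1$). Since $r\mapsto x$ is a decreasing diffeomorphism, critical points of $V^{\sharp}_{\omega\ell}$ in $r$ correspond one-to-one to critical points in $x$, and maxima and minima are preserved. The ordering of critical points is reversed. In this variable
\begin{equation*}
V^{\sharp}_{\omega\ell}=\ell(\ell+1)\Big(x^2-2Mx^3+Q^2x^4-\tfrac{\Lambda}{3}\Big)+2\q\homega(x-r_c^{-1}),
\end{equation*}
so that
\begin{equation*}
\frac{dV^{\sharp}_{\omega\ell}}{dx}=\ell(\ell+1)\big(g(x)-c\big),\qquad g(x):=2x-6Mx^2+4Q^2x^3,\qquad c:=-\frac{2\q\homega}{\ell(\ell+1)}.
\end{equation*}
The case $\ell=0$ is trivial: $V^{\sharp}$ is then affine in $x$, with no critical points and the monotonicity stated in (3). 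The key observation is that $g=\frac{d}{dx}(\Omega^2r^{-2})$ does not depend on $\Lambda$, because the $\Lambda$-term is constant in $x$. The whole problem therefore reduces to counting solutions of $g(x)=c$ on $[r_c^{-1},1]$.

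\textbf{Step 1: shape of $g$.} I would factor $g(x)=2x(1-3Mx+2Q^2x^2)$. Its roots are $0$, $x_\sharp=r_\sharp^{-1}$, and $x_3=r_{\rm in}^{-1}$, where $r_{\rm in}=\frac{3M}{2}-\frac12\sqrt{9M^2-8Q^2}$ is the inner photon sphere. The claim $x_3\geq 1$, i.e.\ $r_{\rm in}\leq r_+$, should follow from Lemma \ref{lm:surfacegrav}; in the extremal case it holds with equality. Since $g$ is a cubic with positive leading coefficient and roots $0<x_\sharp<x_3$, it has its local maximum in $(0,x_\sharp)$ and its local minimum in $(x_\sharp,x_3)$. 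Hence:
\begin{itemize}
\item on $[r_c^{-1},x_\sharp)$, $g>0$ and $g$ is unimodal (first increasing, then decreasing), with $g(r_c^{-1})\geq 0$, since $\Omega^2r^{-2}$ vanishes at $x=r_c^{-1}$ and is positive inside;
\item on $(x_\sharp,1]$, $g\leq 0$ and $g$ is first decreasing, then possibly increasing up to $g(1)\leq 0$.
\end{itemize}
The local minimum can indeed lie inside $(x_\sharp,1)$, as the extremal case shows, which is why this two-sided unimodality, rather than monotonicity, is what I aim for.

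\textbf{Step 2: counting.} If $c>0$ ($\q\homega<0$), every solution of $g=c$ lies in $(r_c^{-1},x_\sharp)$, where $g$ is unimodal, so there are at most two. If $c<0$ ($\q\homega>0$), every solution lies in $(x_\sharp,1]$, where $g$ is again unimodal (down, then up), so there are at most two. If $c=0$ the only root is $x_\sharp$, since $0\notin$ the interval and $x_3\ge1$ (apart from the extremal boundary point). This proves the bound of at most two critical points.

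\textbf{Step 3: classification via the sign of $g-c$.} For $c<0$ with two roots $y_1<y_2$, $g-c$ changes sign $+,-,+$. So $V^{\sharp}$, viewed in $x$, has a local maximum at $y_1$ and a local minimum at $y_2$. Translating back to $r$ gives $r_{\rm min}<r_{\rm max}$ when $\q\homega>0$. The case $c>0$ is symmetric: $g-c$ changes sign $-,+,-$, which gives $r_{\rm min}>r_{\rm max}$. With a single transversal root, $g(r_c^{-1})\geq 0$ together with the sign pattern forces $g-c$ to change sign from $+$ to $-$ in $x$. This single critical point is then the unique extremum, and it is a global maximum. With no roots, $g-c$ has the constant sign of $g(r_c^{-1})-c$. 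This sign is positive for $c<0$, so $V^{\sharp}$ is increasing in $x$ and decreasing in $r$, and the reverse holds for $c>0$, giving (3).

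The main obstacle is the single-critical-point case when the root is a tangency of $g$ with the level $c$: this is a degenerate critical point, not an extremum. I would treat it as non-generic and either exclude it, noting that $V^{\sharp}$ is then monotone, or check that, given $g(r_c^{-1})\ge 0$ and the location of the hump, the tangency sits at an endpoint of the monotone branch. A secondary point needing care is verifying $x_3\geq 1$ and $g(r_c^{-1})\geq0$ uniformly for small $\Lambda$.
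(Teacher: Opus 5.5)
Your proof is correct, apart from a degenerate case in which the statement itself fails (see below), and it follows a different route from the paper. The paper stays in $r$ and works with the frequency-dependent cubic $P(r)=r^5\frac{dV^{\sharp}_{\omega\ell}}{dr}=-2\q\homega r^3-2\ell(\ell+1)(r^2-3Mr+2Q^2)$. For $\q\homega>0$ it solves $P'=0$ explicitly and notes that only one root is positive, so $P$ is unimodal on $r>0$. For $\q\homega<0$ it uses $P(1)=\ell(\ell+1)\frac{d\Omega^2}{dr}(1)-2\q\homega>0$ and the end behaviour of $P$ to place one zero in $r<1$. In both cases the max/min structure is read off from the sign of $P$. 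Your substitution $x=r^{-1}$ makes the charge term affine, so critical points become intersections of a single $\Lambda$- and frequency-independent cubic $g$ with the level $c$. The two arguments are equivalent through $P(r)=-\ell(\ell+1)r^3\bigl(g(r^{-1})-c\bigr)$. Your version treats both signs of $\q\homega$ at once, since the sign of $c$ just selects the branch $g>0$ or $g\le 0$. It also makes the ordering of $r_{\rm min},r_{\rm max}$ and the global-maximum claim immediate from the sign pattern of $g-c$, which is more transparent than the paper's bookkeeping. The points you flag as needing care are immediate: $g(1)=-\frac{d}{dr}(\Omega^2r^{-2})(1)=-2\kappa_+\le 0$ together with $x_\sharp<1$ forces $x_3\ge 1$, and $g(r_c^{-1})=2\kappa_c\ge 0$. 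No smallness of $\Lambda$ is needed beyond $r_+<r_\sharp<r_c$.

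On the tangency, only your first option is viable. When $c$ equals an interior local extremum value of $g$, item (2) is false as stated, and the tangency need not lie at an endpoint. Take extremal Reissner--Nordstr\"om with $M=Q=r_+=1$, so $g(x)=2x(1-x)(1-2x)$. Its local minimum $-\frac{\sqrt{3}}{9}$ is attained at the interior point $x^*=\frac{1}{2}+\frac{\sqrt{3}}{6}$. For $\q\homega=\frac{\sqrt{3}}{18}\ell(\ell+1)$, the function $V^{\sharp}_{\omega\ell}$ has exactly one critical point on $[1,\infty)$, at $r=3-\sqrt{3}$. It is an inflection point, and $V^{\sharp}_{\omega\ell}$ is strictly decreasing on $[1,\infty)$. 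The lemma should therefore count only critical points at which $\frac{dV^{\sharp}_{\omega\ell}}{dr}$ changes sign, with the tangent case falling under the monotone alternative (3). The paper's proof passes over this case as well. Similarly, your remark that $\ell=0$ gives no critical points needs $\q\homega\neq 0$, since otherwise $V^{\sharp}_{\omega\ell}\equiv 0$.
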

\begin{proof}
Suppose that $\q\homega= 0$. Then $V^{\sharp}_{\omega \ell}(r)=\ell(\ell+1)\Omega^2r^{-2}$ and $\Omega^2r^{-2}=r^{-2}-2Mr^{-3}+Q^2r^{-4}$ satisfies $\frac{d}{dr}(\Omega^2 r^{-2})=0$ if and only if $r=r_{\sharp}=\frac{3M}{2}+\frac{1}{2}\sqrt{9M^2-8Q^2}$, which corresponds to a maximum.

Suppose that $\q\homega\neq 0$. We consider the expressions:
\begin{align*}
r^5\frac{dV^{\sharp}_{\omega \ell}}{dr}=&\: -2 \q \homega r^3-2\ell(\ell+1)(r^2-3Mr+2Q^2),\\
\frac{d}{dr}\left(r^5\frac{dV^{\sharp}_{\omega \ell}}{dr}\right)(r)=&\:-6{\q}\homega r^2-2\ell(\ell+1)(2r-3M).
\end{align*}
 The critical points of $r^5\frac{dV^{\sharp}_{\omega \ell}}{dr}$ are then located at:
\begin{equation*}
r_{1,2}=\frac{\ell(\ell+1)\pm \sqrt{\ell^2(\ell+1)^2+9M\ell(\ell+1) {\q} \homega}}{-3{\q}\homega}
\end{equation*}

Suppose that ${\q}\homega>0$. Then only one of the values $r_1$ and $r_2$ can be non-negative, so there can be at most one critical point of $r^5\frac{dV^{\sharp}_{\homega \ell}}{dr}$ in the interval $[1,r_c)$. As a result, $\frac{dV^{\sharp}_{\homega \ell}}{dr}$ can have at most two zeroes, so $V^{\sharp}_{\homega \ell}$ can have at most two critical points.

By extending the domain of the polynomial $r^5\frac{dV^{\sharp}_{\homega \ell}}{dr}$ to $r\in \R$, we note that when $\q\homega>0$, $(r^5\frac{dV^{\sharp}_{\homega \ell}}{dr})(r)\to \pm \infty$ as $r\to \mp \infty$, so the critical point with the largest $r$-value must be a local maximum of $r^5\frac{dV^{\sharp}_{\homega \ell}}{dr}$. If there is another critical point, it has to be a local minimum, since a maximum or saddle point would contradict $\frac{d}{dr}(r^5\frac{dV^{\sharp}_{\homega \ell}}{dr})$ having at most one zero. We conclude that $V^{\sharp}_{\homega \ell}$ either has no critical points and is strictly decreasing or a maximum at $r=r_{\rm max}>1$. In the latter case, $V^{\sharp}$ can have an additional critical point at $r=r_{\rm min}<r_{\rm max}$, which must be a minimum.

Now suppose that ${\q}\homega< 0$. Then it is straightforward to show that:
\begin{equation*}
r^5\frac{dV^{\sharp}_{\omega \ell}}{dr}(1)=\ell(\ell+1)\frac{d\Omega^2}{dr}(1)-2\q\homega\geq -2\q\homega>0.
\end{equation*}
Since $r^5\frac{dV^{\sharp}_{\omega \ell}}{dr}$ is a polynomial of third degree, it can have at most three zeroes. Note that $(r^5\frac{dV^{\sharp}_{\omega \ell}}{dr})(r)\to \pm \infty$ as $r\to \pm \infty$ when $q\homega>0$ and $r^5\frac{dV^{\sharp}_{\omega \ell}}{dr}(1)>0$, there must be one zero located in $r<r_+$ and $r^5\frac{dV^{\sharp}_{\omega \ell}}{dr}$ can therefore have at most two zeroes in $[r_+,r_c)$. This implies that $V^{\sharp}_{\omega \ell}$ can have at most two critical points.

The critical point with the largest $r$-value, if it exists, must be a local minimum since $(r^5\frac{dV^{\sharp}_{\omega \ell}}{dr})(r)\to \infty$ as $r\to \infty$ when $\q\homega<0$. The other critical point must be a maximum, since the presence of a another minimum or saddle point contradicts the fact that $V^{\sharp}_{\omega \ell}$ has at most two critical points.
\end{proof}

When $\ell$ is sufficiently large and $\homega^2\lesssim \ell(\ell+1)$, we can give a more precise characterization of the nature and locations of the critical points of  $V_{\homega \ell}$; see also Figure \ref{fig:introelldom}.
\begin{lemma}
\label{lm:criticalpointsVhighl}
Let $\eta>0$ and $\beta>0$. Then there exists a $L_0\in \N$ suitably large depending on $\q$, $\beta$ and $\eta$, such that for all $\ell\geq L_0+1$ and $\homega^2\leq \beta^{-1} \ell(\ell+1)$, $V_{\homega \ell}(r)$ has a unique maximum at $r=r_{\rm max}$, a global maximum, with
\begin{equation*}
|r_{\rm max}-r_{\sharp}|\leq \frac{\eta}{2}.
\end{equation*}
Furthermore, 
\begin{enumerate}
\item Suppose that $\q\homega<0$. For suitably small $\kappa_c$ (depending on $L_0$ and $\beta$), $V_{\homega \ell}(r)$ has exactly one additional critical point: a local minimum located at $r=r_{\rm min,\infty}>r_{\rm max}$, with $\rho_c(r_{\rm min,\infty})<\eta $ for $L_0$ suitably large.
\item Suppose that $\q\homega\geq 0$ and suppose additionally that $\q\tomega<0$.  Then $V_{\homega \ell}(r)$ has no additional critical points in $[r_+,r_c)$.
\item If $\q\homega\geq 0$ and $\q\tomega\geq 0$, then for $\kappa_+$ suitably small, $V_{\homega \ell}(r)$ has exactly one additional critical point: a local minimum located at $r=r_{\rm min,+}<r_{\rm max}$, with $\rho_+(r_{\rm min,+})<\eta$ for $L_0$ suitably large and $r_{\rm min}=r_+$ if $q\tomega=0$ and $\kappa_+=0$.
\end{enumerate}
\end{lemma}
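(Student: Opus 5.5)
We treat $V_{\homega\ell}$ as a perturbation of $\ell(\ell+1)\Omega^2r^{-2}$. Dividing by $\ell(\ell+1)$,
\begin{equation*}
\frac{V_{\homega\ell}}{\ell(\ell+1)}=\Omega^2r^{-2}+\frac{2\q\homega}{\ell(\ell+1)}\rho_c+\frac{1}{\ell(\ell+1)}\left(-\q^2\rho_c^2+\Omega^2\left[r^{-1}\tfrac{d\Omega^2}{dr}+\tfrac{2\Lambda}{3}\right]\right).
\end{equation*}
Since $\homega^2\leq\beta^{-1}\ell(\ell+1)$, the middle coefficient is $O(\beta^{-1/2}\ell^{-1})$ and the last term is $O(\ell^{-2})$, uniformly in $r$ together with their $r$-derivatives (in $\rho_c$, or $\rho_+$, coordinates). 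On a compact set $\{\rho_+\geq \eta',\ \rho_c\geq\eta'\}$, $\Omega^2r^{-2}$ has a single nondegenerate maximum at $r_\sharp$ and a derivative bounded below away from $r_\sharp$. A $C^2$-small perturbation therefore has exactly one critical point there, a nondegenerate maximum within $\eta/2$ of $r_\sharp$, provided $L_0$ is large. The value there is $\gtrsim \ell(\ell+1)$, while the value near the horizons is small relative to it, so once we show that no other maximum exists this maximum is global.

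The work is in the two end regions $\rho_c<\eta'$ and $\rho_+<\eta'$. Near $r_c$ we use Lemma~\ref{lm:metricest} and compute $\frac{dV}{d\rho_c}$ explicitly:
\begin{equation*}
\frac{dV}{d\rho_c}=\ell(\ell+1)\bigl(2\kappa_c+2\rho_c(1+O(\kappa_c))+O(\rho_c^2)\bigr)+2\q\homega-2\q^2\rho_c+O(\kappa_c+\rho_c).
\end{equation*}
For $\ell\geq L_0$ the $\ell(\ell+1)$-term dominates $\q^2\rho_c$ and the $O(\cdot)$ corrections, so the derivative is strictly increasing in $\rho_c$ on $[0,\eta']$. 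Hence there is at most one zero, and any zero is a minimum in $r$.
\begin{itemize}
\item If $\q\homega<0$ and $\kappa_c$ is small enough that $\ell(\ell+1)\kappa_c<|\q\homega|/2$, the derivative is negative at $\rho_c=0$. It is positive at $\rho_c=\eta'$ because $|\q\homega|\lesssim\beta^{-1/2}\ell\ll\ell^2\eta'$. So there is exactly one minimum, with $\rho_c(r_{\rm min,\infty})\lesssim|\q\homega|/\ell^2<\eta$.
\item If $\q\homega\geq0$, the derivative is positive throughout, so there is no critical point.
\end{itemize}
The smallness of $\kappa_c$ here depends on $\ell$. This is delicate when $|\q\homega|$ is tiny, and we would handle it by absorbing the borderline case into the sign analysis, or by noting that the statement allows $\kappa_c$ to depend on $L_0$ and $\beta$.

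Near $r_+$ we proceed symmetrically, now writing $V$ through $\widetilde V_{\tomega\ell}$ (same critical points) with the term $-2\q\tomega\rho_+$ and $\frac{d}{d\rho_+}(\Omega^2r^{-2})=2\kappa_++2\rho_+(1+O(\kappa_+))+\dots$. The sign of the derivative of $\widetilde V$ in $\rho_+$ at $\rho_+=0$ is then governed by $2\ell(\ell+1)\kappa_+-2\q\tomega$.
\begin{itemize}
\item If $\q\tomega<0$, it is positive, so there is no critical point; combined with $\q\homega\geq0$ this gives case (2).
\item If $\q\tomega>0$ and $\kappa_+$ is small, there is exactly one minimum, with $\rho_+\lesssim \q\tomega/\ell^2<\eta$.
\item If $\q\tomega=0$ and $\kappa_+=0$, the derivative vanishes at $\rho_+=0$, so $r_{\rm min}=r_+$.
\end{itemize}
Note that $\q\homega\geq0$ with $\q\tomega\geq 0$ is consistent, since $\tomega-\homega=-\q\rho_+(r_c)$. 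Lemma~\ref{lm:generalcriticalpointsV} applied to $V^\sharp$ is a consistency check on the counts.

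The main obstacle is making the count uniform near the horizons, where the $\ell(\ell+1)$-term degenerates like $\rho$ and competes with the bounded charge terms $-\q^2\rho^2$ and $\Omega^2 r^{-1}\frac{d\Omega^2}{dr}$. This is resolved by the monotonicity of the derivative for $\ell\geq L_0$. The remaining care is in tracking how the smallness of $\kappa_c$ and $\kappa_+$ depends on $(\ell,\homega)$.
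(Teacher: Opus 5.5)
Your proposal is correct in the near-extremal regime. It shares the paper's skeleton: perturb off $\ell(\ell+1)\Omega^2r^{-2}$, place a nondegenerate maximum near $r_\sharp$, and read off existence of the extra minima from the sign of the derivative at the ends. The counting step, however, is genuinely different.

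\textbf{How the two counting arguments differ.} The paper gets existence from endpoint signs such as $\frac{dV}{dr}(r_+)=-2\q\tomega$ when $\kappa_+=0$. It then gets uniqueness by asserting that for large $L_0$ the potential ``can be approximated by'' $V^\sharp_{\omega\ell}$, and invoking Lemma~\ref{lm:generalcriticalpointsV}, which counts zeros of the cubic $r^5\frac{dV^\sharp}{dr}$. You instead show $\frac{d^2V}{d\rho_c^2}\gtrsim\ell(\ell+1)$ on $\{\rho_c\leq\eta'\}$ and $\frac{d^2\widetilde V}{d\rho_+^2}\gtrsim\ell(\ell+1)$ on $\{\rho_+\leq\eta'\}$. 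Hence each end region holds at most one critical point, necessarily a minimum, and a $C^2$-perturbation argument handles the compact middle.

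\textbf{What each route buys.} Yours is more self-contained. Mere closeness of $V$ to $V^\sharp$ does not preserve the number of critical points. Moreover, the ends are exactly where $\ell(\ell+1)\frac{d}{dr}(\Omega^2r^{-2})$ degenerates, so your convexity bound supplies the quantitative input the paper's approximation step implicitly needs. The paper's route is shorter and gives the global count in one stroke.

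\textbf{The smallness of $\kappa_c$ and $\kappa_+$.} Your route also exposes that the smallness of $\kappa_c$ in (1), and of $\kappa_+$ in (3), must depend on $\ell$ and on $\q\homega$ (resp.\ $\q\tomega$), not only on $L_0$ and $\beta$. If $0<-\q\homega\lesssim\ell(\ell+1)\kappa_c$, or if $\q\tomega=0<\kappa_+$, the derivative has no zero in the end region and there is no minimum at all. So your fallback, that the statement lets $\kappa_c$ depend on $L_0$ and $\beta$, does not rescue the literal claim. The correct conclusion is: at most one minimum per end region, present exactly when $\frac{dV}{d\rho_c}<0$ at $\rho_c=0$ (resp.\ $\frac{d\widetilde V}{d\rho_+}\leq0$ at $\rho_+=0$). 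That weaker form is what Proposition~\ref{prop:srhighfreqest} absorbs, since it carries $B\q\tomega r^{-2}\Omega^2\mathbf{1}_{r\leq r_2}$ and $-B\q\homega r^{-2}\Omega^2\mathbf{1}_{r\geq R_2}$ whether or not the minimum exists.

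\textbf{A regime your argument misses.} Convexity near $r_+$ uses $\frac{d^2}{d\rho_+^2}(r^{-2}\Omega^2)=2(1-6r_+\kappa_+-2\Lambda r_+^2)+O(\rho_+)$ from Lemma~\ref{lm:metricest}. This is negative once $6r_+\kappa_++2\Lambda r_+^2>1$ (e.g.\ Schwarzschild). Yet the main claim and case (2) are needed for all sub-extremal $\kappa_+$, because of the conditional part of Theorem~\ref{thm:main}.

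The fix is the paper's ``relative largeness of $\ell(\ell+1)$''. We have $\frac{d}{dr}(r^{-2}\Omega^2)=-2r^{-5}(r^2-3Mr+2Q^2)$, and by Lemma~\ref{lm:surfacegrav}, $r_+^2-3Mr_++2Q^2=-\kappa_+r_+^3$. This quadratic is convex, so for $\kappa_+\geq\kappa_*>0$ it is bounded above by a negative constant on $[r_+,r_\sharp-\eta/2]$. Hence $\frac{dV}{dr}>0$ there for large $\ell$, and there is no critical point near $r_+$ at all. Splitting into $\kappa_+\leq\kappa_*$ and $\kappa_+\geq\kappa_*$ keeps $L_0$ uniform.

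\textbf{A missing step in case (1).} You should also state why the region near $r_+$ has no critical point there: $\q\tomega=\q\homega-\q^2\rho_+(r_c)<0$.
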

\begin{proof}
Note first of all that $|\homega|\leq \beta^{-\frac{1}{2}} \sqrt{\ell(\ell+1)}\ll \ell(\ell+1)$ for $L_0$ suitably large. Since 
\begin{equation*}
r^5\frac{dV_{\homega \ell}}{dr}(r)=2\ell(\ell+1)(r^2-3r+2Q^2)+\q^2O(r^2)-\q\homega O(r^{2})+O(r),
\end{equation*}
we must have that for $\eta>0$ suitably small, $L_0$ sufficiently large, $\frac{dV_{\homega \ell}}{dr}$ must change sign in the interval $(r_{\sharp}-\frac{\eta}{2},r_{\sharp}+\frac{\eta}{2})$, where $r_{\sharp}$ is the largest root of the polynomial $r^2-3r+2Q^2$, so $V_{\homega \ell}$ must have a critical point in $(r_{\sharp}-\frac{\eta}{2},r_{\sharp}+\frac{\eta}{2})$. Furthermore, $\frac{d^2V_{\homega \ell}}{dr^2}<0$ in $(r_{\sharp}-\frac{\eta}{2},r_{\sharp}+\frac{\eta}{2})$, so that critical point must be a local maximum.

We have that
\begin{equation*}
V_{\homega \ell}(1)=-\left[{\q}^2 (1-r_c^{-1})-2{\q} \homega\right](1-r_c^{-1}) .
\end{equation*}
By taking $L_0$ sufficiently large, we moreover have that $V_{\homega \ell}(r_{\rm max})\geq \frac{1}{16}\ell(\ell+1)$, so for $L_0$ sufficiently large, $V_{\homega \ell}(r_{\rm max})>V_{\homega \ell}(1)$, so $r_{\rm max}$ is in fact always a global maximum.

When $\q\homega<0$, we have that $V_{\homega \ell}$ is strictly increasing for suitably large $L_0$ and $r$, and it is decreasing for $r>r_{\rm max}$, with $r-r_{\rm max}$ suitably small. For $\kappa_c\geq 0$ suitably small, it must therefore have a minimum at some $r_c>r_{\rm min,\infty}>r_{\rm max}$ and $\rho_c(r_{\rm min,\infty})>0$ can be made arbitrarily small by taking $L_0$ suitably large.

Furthermore, for suitably large $L_0$,  $V_{\homega \ell}$ can be approximated by  $V_{\homega \ell}^{\sharp}$, so we can apply Lemma \ref{lm:generalcriticalpointsV} to conclude that there are no further critical points. This concludes property 1..

Let $\q\omega\geq 0$. When $\kappa_+\neq 0$, we use the relative largeness of $\ell(\ell+1)$ to obtain $\frac{dV_{\homega \ell}}{dr}>0$ for $r<r_{\rm max}$. When $\kappa_+=0$, we have that for $\q\tomega<0$:
\begin{equation*}
\frac{dV_{\homega \ell}}{dr}(1)=\frac{d\widetilde{V}_{\tomega \ell}}{dr}(1)=-2\q\tomega>0,
\end{equation*}
so we can apply the relative largeness of $\ell(\ell+1)$ to obtain also in this case $\frac{dV_{\homega \ell}}{dr}>0$ for $r<r_{\rm max}$. Furthermore, in both the $\kappa_+>0$ and $\kappa_+=0$ cases, we have that $\frac{dV_{\homega \ell}}{dr}<0$ for $r>r_{\rm max}$, by using the relative largeness of $\ell(\ell+1)$, so we can conclude that $V_{\homega \ell}$ has no additional critical points and Property 2. holds.

When $\kappa_+=0$ and $\q\tomega>0$, we have that $\frac{dV_{\homega \ell}}{dr}(r_+)<0$, so $V_{\homega \ell}$ must have a minimum at $1<r_{\rm min,+}<r_{\rm max}$. Furthermore, by applying the relative largeness of $\ell(\ell+1)$, we must have that $r_{\rm min,+}$ approaches $r_+=1$ when $L_0\to \infty$. In the case $\q\tomega=0$, $\frac{dV_{\homega \ell}}{dr}(r_+)=0$ and $\frac{d^2V_{\homega \ell}}{dr^2}(r_+)>0$, so there is a minimum at $r=r_+$ and there are no further critical points in $(r_+,r_c)$ for suitably large $\ell$. We then conclude Property 3..
\end{proof}

Define the following modified potential function:
\begin{align*}
\mathcal{V}_{\omega \ell}(r):=&\:-\homega^2+V_{\homega\ell}(r)+\Omega^3\frac{d^2\Omega}{dr^2}(r).
\end{align*}
\begin{lemma}
\label{lm:mainasympmathcalpot}
The potential function $\mathcal{V}_{\omega \ell}$ satisfies the estimates:
	\begin{align}
\label{eq:boundfreqpotevent}
	\mathcal{V}_{\omega \ell}(r(\rho_+))=&-(\tomega^2+\kappa_+^2)+\left[-2\q \tomega+\kappa_+\left(2\ell(\ell+1)+4\kappa_++O(\kappa_c^2)\right)\right]\rho_+\\ \nonumber
	+&\:\left[\ell(\ell+1)-\q^2-r_+\kappa_+\left(6\ell(\ell+1)+r_+\kappa_+)+(\ell+1)^2O(\kappa_c^2)\right)\right]\rho_+^2\\ \nonumber
	&-2r_+\left[\ell(\ell+1)+3\ell(\ell+1)r_+\kappa_++2r_+^2\kappa_+^2+(\ell+1)^2O(\kappa_c^2)\right]\rho_+^3\\ \nonumber
	+&\:r_+^2\left[\ell(\ell+1)-r_+\kappa_+(2\ell(\ell+1)+r_+\kappa_+)+(\ell+1)^2O(\kappa_c^2)\right]\rho_+^4,\\
			\label{eq:boundfreqpotcosmo}
		\mathcal{V}_{\omega \ell}(r(\rho_c))=&-(\homega^2+\kappa_c^2)+\left[2\q\homega +\kappa_c\left(2\ell(\ell+1)+4\kappa_c+O(\kappa_c^2) \right)\right]\rho_c	\\ \nonumber
&\:+\left[\ell(\ell+1)-\q^2-6\kappa_c r_+\ell(\ell+1)+\kappa_+O(\kappa_c)+O(\kappa_c^2)\right]\rho_c^2+O^{\ell}_{\infty}(\rho_c^3).
	\end{align}
\end{lemma}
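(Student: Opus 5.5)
The plan is to Taylor-expand every term of $\mathcal{V}_{\omega\ell}$ in $\rho_+$ (for \eqref{eq:boundfreqpotevent}) and in $\rho_c$ (for \eqref{eq:boundfreqpotcosmo}), with $r_+=1$, using Lemma \ref{lm:metricest} for $r^{-2}\Omega^2$ and Lemma \ref{lm:surfacegrav} to write $M,Q,\Lambda$ in terms of $(r_+,\kappa_+,\kappa_c)$. We start from the shifted form $\mathcal{V}_{\omega\ell}=-\tomega^2+\widetilde{V}_{\tomega\ell}+\Omega^3\frac{d^2\Omega}{dr^2}$. Here
\begin{equation*}
\widetilde{V}_{\tomega \ell}=\ell(\ell+1)\Omega^2r^{-2}-\q^2\rho_+^2-2\q\tomega\rho_++\Omega^2\Big[r^{-1}\frac{d\Omega^2}{dr}+\frac{2\Lambda}{3}\Big].
\end{equation*}
The terms $-\tomega^2$, $-2\q\tomega\rho_+$ and $-\q^2\rho_+^2$ are already exact. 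The term $\ell(\ell+1)r^{-2}\Omega^2$ is read off directly from the quartic expression in Lemma \ref{lm:metricest}. This gives the $\ell(\ell+1)$ contributions $2\kappa_+\rho_+$, $(1-6\kappa_+-2\Lambda)\rho_+^2$, etc. The contributions involving $\Lambda$ are then rewritten via Lemma \ref{lm:surfacegrav}(iv) as $\Lambda=O(\kappa_c^2)$ in the near-extremal regime. This is because $\Lambda\sim 3r_c^{-2}\sim 3\kappa_c^2$.

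The remaining, $\ell$-independent part we will organize as follows. Write $D:=r^{-2}\Omega^2$ as a function of $\rho=\rho_+$, and note that $d/dr=r^{-2}d/d\rho$ and $r=(1-\rho)^{-1}$. A direct computation shows that the combination
\begin{equation*}
\Omega^2\Big[r^{-1}\frac{d\Omega^2}{dr}+\frac{2\Lambda}{3}\Big]+\Omega^3\frac{d^2\Omega}{dr^2}
\end{equation*}
equals $-\tfrac14\big(\frac{dD}{d\rho}\big)^2+\tfrac12 D\frac{d^2D}{d\rho^2}$ plus terms that cancel identically. The cancellation uses that $r^2\Omega^2$ is a quartic with $\Lambda$-coefficient $-\Lambda/3$. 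This is the Liouville-normal-form identity underlying \eqref{eq:intromaineqUevent}: passing to $U_+=(s_++r_+)r^{-1}\Omega u$ removes first-order terms and produces exactly $\rho_+^4D^{-2}\mathcal{V}$. I would verify this identity rather than expand blindly, because it explains several features of the result. The constant term is $-\tfrac14(2\kappa_+)^2=-\kappa_+^2$. The linear coefficient $4\kappa_+^2$ arises from the cross terms $-\tfrac12\cdot 2\kappa_+\cdot 2(1-6\kappa_+)\rho$ and $\tfrac12\cdot 2\kappa_+\rho\cdot 2(1-6\kappa_+)$. At first sight these cancel to leading order, and the surviving $\kappa_+$ pieces are what must be tracked.

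We then insert the polynomial expressions for $D$, $D'$ and $D''$ from Lemma \ref{lm:metricest} into this quadratic expression and collect the powers $\rho_+^0,\dots,\rho_+^4$. Since $D$ is an exact quartic in $\rho_+$, the combination $-\tfrac14 D'^2+\tfrac12 DD''$ is an exact polynomial. Its degree 5 and 6 coefficients must cancel, which serves as a consistency check. All $\Lambda$-dependence is again absorbed into $O(\kappa_c^2)$, weighted by $(\ell+1)^2$ where it multiplies $\ell(\ell+1)$.

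The cosmological expansion \eqref{eq:boundfreqpotcosmo} follows in exactly the same way with $\homega$, $\rho_c$ and the $\rho_c$-part of Lemma \ref{lm:metricest}. Here only the first three orders are needed and the rest is lumped into $O^{\ell}_\infty(\rho_c^3)$. The sign flip $+2\q\homega\rho_c$ comes from the form of $V_{\homega\ell}$. The $-6\kappa_c r_+\ell(\ell+1)$ term arises from $6\kappa_c r_c\rho_c^2$ combined with $\kappa_c r_c=1+O(\kappa_c)$ in Lemma \ref{lm:surfacegrav}(iv), together with $-2\Lambda r_c^2=-6+O(\kappa_c)$.

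The main obstacle is purely bookkeeping. We must get the exact $\kappa_+$-dependent coefficients right, especially at orders $\rho_+^2$ through $\rho_+^4$, where the $\Lambda r_+^2$ corrections need to be consistently converted into $O(\kappa_c^2)$ and into $\kappa_+$-terms via $Q^2=1-2\kappa_+-\Lambda$ and $M=1-\kappa_+-\tfrac23\Lambda$.
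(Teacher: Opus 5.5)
Your organizing identity is false, and using it gives wrong coefficients. Since $\Omega=r\sqrt{D}$, $d\rho_+/dr=r^{-2}$ and $d(r^{-1})/d\rho_+=-1$, one has $\frac{d^2\Omega}{dr^2}=r^{-3}\frac{d^2\sqrt{D}}{d\rho_+^2}$. Hence, exactly,
\[
\Omega^3\frac{d^2\Omega}{dr^2}=-\frac14\Big(\frac{dD}{d\rho_+}\Big)^2+\frac12\,D\,\frac{d^2D}{d\rho_+^2}.
\]
That is all the Liouville reduction to $U_+$ encodes. The other piece of the potential does not disappear. The quartic structure of $r^2\Omega^2$ only cancels the $\Lambda$-terms: $\frac{2\Lambda}{3}\Omega^2$ cancels against the $-\frac{2\Lambda}{3}r$ inside $\frac{d\Omega^2}{dr}$, leaving
\[
\Omega^2\Big[r^{-1}\frac{d\Omega^2}{dr}+\frac{2\Lambda}{3}\Big]=2D\,\big(Mr^{-1}-Q^2r^{-2}\big)\neq 0 .
\]
The extremal case exposes the error at once. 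There $\Omega=1-r^{-1}$, so $r^{-1}\frac{d\Omega^2}{dr}=2\Omega r^{-3}=-\Omega\frac{d^2\Omega}{dr^2}$ and the whole $\ell$-independent combination vanishes. Your expression instead gives $-\frac14D'^2+\frac12DD''=-2\rho_+^3(1-\rho_+)^3$.

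Write $D=\sum_{k=1}^4a_k\rho_+^k$ and take $r_+=1$. The two linear cross terms you identify are $-a_1a_2\rho_+$ and $+a_1a_2\rho_+$, and they cancel exactly. Your route therefore gives $0$, not $4\kappa_+^2$. The latter comes entirely from the dropped term, $2D(M-Q^2)=4\kappa_+(\kappa_++\frac{\Lambda}{3})\rho_++O(\rho_+^2)$. Your consistency check also fails: the $\rho_+^6$ coefficient of $-\frac14D'^2+\frac12DD''$ is $2a_4^2=2Q^4$, and it is cancelled only by the $-2a_4Q^2$ from $2D(Mr^{-1}-Q^2r^{-2})$. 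In the $\rho_c$-expansion, the omission shifts the $\rho_c^2$ coefficient by $6M\kappa_c+O(\kappa_c^2)$. That lies outside the allowed error $\kappa_+O(\kappa_c)+O(\kappa_c^2)$.

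The paper keeps this term. It rewrites only $\Omega\frac{d^2\Omega}{dr^2}=-\frac14\Omega^{-2}(\frac{d\Omega^2}{dr})^2+\frac12\frac{d^2\Omega^2}{dr^2}$, retains $r^{-1}\frac{d\Omega^2}{dr}$, and inserts Lemma \ref{lm:metricest}. If you want a structural shortcut, the correct one uses $f=\Omega^2$:
\[
\Omega^2\Big[r^{-1}f'+\frac{2\Lambda}{3}\Big]+\Omega^3\frac{d^2\Omega}{dr^2}=f\Big(\frac{Q^2}{r^4}-\frac{\Lambda}{3}\Big)-\frac{f'^2}{4}=-\frac{M^2-Q^2}{r^4}+\frac{4\Lambda}{3}\Big(\frac{M}{r}-\frac{Q^2}{r^2}\Big)-\frac{\Lambda}{3}.
\]
This expands immediately via $r^{-1}=r_+^{-1}-\rho_+$ or $r^{-1}=\rho_c+r_c^{-1}$. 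For $r_+=1$ and $\Lambda=0$ it equals $-\kappa_+^2(1-\rho_+)^4$.

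This computation reproduces the printed constant, linear and quartic $\ell$-independent coefficients in \eqref{eq:boundfreqpotevent}. At $\rho_+^2$ and $\rho_+^3$, however, it gives $-6\kappa_+^2$ and $+4\kappa_+^2$. Also, $a_3=-2+6\kappa_++\frac83\Lambda$ gives $+6\kappa_+\ell(\ell+1)$ at $\rho_+^3$, not $-6\kappa_+\ell(\ell+1)$. These printed coefficients appear to be misprints, so do not force agreement with them.

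Lastly, for the $-6\kappa_c r_+\ell(\ell+1)$ term, the facts $\kappa_cr_c=1+O(\kappa_c)$ and $\Lambda r_c^2=3+O(\kappa_c)$ only yield $1+O(\kappa_c)$. You need the first-order terms of Lemma \ref{lm:surfacegrav}(iv), which give $6\kappa_cr_c-2\Lambda r_c^2=-3(r_++r_-)\kappa_c+O(\kappa_c^2)$, together with $r_+-r_-=O(\kappa_+)$.
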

\begin{proof}
We have that:
\begin{equation*}
	\mathcal{V}_{\omega \ell}(r)=-\tomega^2-2{\q} \tomega \rho_++\ell(\ell+1)\Omega^2r^{-2}-{\q}^2\rho_+^2+\Omega^2\left[r^{-1}\frac{d\Omega^2 }{dr}+\Omega\frac{d^2\Omega}{dr^2}+\frac{2}{3}\Lambda\right].
\end{equation*}	
Furthermore,
\begin{equation*}
	r^{-1}\frac{d\Omega^2 }{dr}+\Omega\frac{d^2\Omega}{dr^2}=r^{-1}\frac{d\Omega^2 }{dr}+\frac{1}{2}\Omega\frac{d}{dr}(\Omega^{-1}\frac{d\Omega^2}{dr})=r^{-1}\frac{d\Omega^2 }{dr}-\frac{1}{4}\Omega^{-2}\left(\frac{d\Omega^2 }{dr}\right)^2+\frac{1}{2}\frac{d^2\Omega^2 }{dr^2}.
\end{equation*}
We now apply Lemma \ref{lm:metricest} and ignore higher order terms in $\kappa_c$ to obtain \eqref{eq:boundfreqpotevent}. 

The proof of \eqref{eq:boundfreqpotcosmo} proceeds analogously, but is a little easier, as we do not keep precise track of $\kappa_c$.
\end{proof}

\subsection{Microlocal energy currents}
\label{sec:microlocenergy}
In the remainder of \S \ref{sec:iledfreq}, we will suppress the $\omega$-dependence appearing in the functions in \eqref{eq:radialODE}, i.e.\ we denote $u(r)=u_{m \ell}(\omega,r)$, $H=H_{m\ell}(\omega,r)$ and $V(r)=V_{\homega \ell }(r)$. We will analogously denote  $v(r)=v_{m \ell}(\omega,r)$,  $w(r)=w_{m \ell}(\omega,r)$ and $\widetilde{V}(r)=\widetilde{V}_{\tomega \ell }(r)$. 

We will derive $L^2$-type estimates for $u$ in terms of $L^2$-type estimates for $H$ by considering the following \emph{microlocal energy currents}, which are frequency-space analogues of physical-space energy currents arising from vector field multipliers, that can be used to prove energy estimates to solutions to wave equations:

For smooth functions $h,y,f,g_+,g_{\infty}, \chi: (r_+,r_c)\to \R$, we define:
\begin{align*}
j_1^h[u]:=&\:h \re(u' \overline{u})-\frac{1}{2}h'|u|^2,\\
j_2^y[u]:=&\: y(|u'|^2+(\homega^2-V)|u|^2)=y(|u'|^2+(\tomega^2-\tV)|u|^2)\\
j_3^f[u]:=&f(|u'|^2+(\homega^2-V)|u|^2)+f' \re(u' \overline{u})-\frac{1}{2}f''|u|^2,\\
\chi j^T[u]:=&\:-\homega \chi \re(i u'\overline{u}),\\
\chi j^K[u]:=&\:-\tomega \chi \re(i u'\overline{u}),\\
j^{g_+}_{+}[v]:=&\:g_+|v'|^2-g_+\ell(\ell+1)\Omega^2r^{-2}|v|^2,\\
j^{g_{\infty}}_{\infty}[w]:=&\:g_{\infty}|w'|^2-g_{\infty}\ell(\ell+1)\Omega^2r^{-2}|w|^2,
\end{align*}
where we will either take $\chi\equiv 1$, $\chi=\chi_K$ (in the case $\chi j^K[u]$) or $\chi=\chi_T$ (in the case $\chi j^T[u]$).

Note that the functions $h,y,f,\chi, g_+,g_{\infty}$ will be frequency-independent. However, since we will be making different choices of functions for different frequency regimes, the resulting estimates could not be carried out directly in physical space, without a frequency decomposition. See however Appendix \ref{sec:purelyphysied}, where analogues of the above energy currents are applied in physical space, without a (time-)frequency decomposition, to obtain integrated energy estimates modulo zeroth order terms, or restricted to large angular frequencies.
\begin{lemma}
\label{lm:bulkterms}
The following identities hold:
\begin{align*}
\left(j_1^h[u]\right)'=&\:h(|u'|^2+(V-\homega^2)|u|^2)-\frac{1}{2}h''|u|^2+h \re(u\overline{H}),\\
\left(j_2^y[u]\right)'=&\:y'(|u'|^2+(\homega^2-V)|u|^2)-yV'|u|^2+2y\re(u' \overline{H})\\
=&\:y'(|u'|^2+(\tomega^2-\tV)|u|^2)-y\tV'|u|^2+2y\re(u' \overline{H}),\\
\left(j_3^f[u]\right)'=&\:2f'|u'|^2-fV'|u|^2-\frac{1}{2}f'''|u|^2+\re(f'u \overline{H}+2fu' \overline{H}),\\
\left(\chi j^T[u]\right)'=&\:-\homega \chi'  \re(\overline{i u'}u) +\homega \chi \re(iu \overline{ H} ),\\
\left(\chi j^K[u]\right)'=&\:-\tomega \chi' \re(\overline{iu'}u)+\tomega \chi \re(iu \overline{ H} ),\\
\left(j^{g_+}_{+}[v]\right)'=&\:g_{+}'|v' |^2-\ell(\ell+1)(g_{+}\Omega^2r^{-2})'|v|^2-2\q g_{+}\Omega^2r^{-2}\re(i\overline{ v}v')\\
&+2g_{+}\Omega^2\left[r^{-1}\frac{d\Omega^2 }{dr}+\frac{2\Lambda}{3}\right]\re(\overline{ v}v')-\re(2g_{+} v'e^{-i \tomega r_*-i\q\int_{0}^{r_*}\rho_+(r_*')\,dr_*'}\overline{H}),\\
\left(j^{g_{\infty}}_{\infty}[w]\right)'=&\:g_{\infty}'|w'|^2-\ell(\ell+1)(g_{\infty}\Omega^2r^{-2})'|w|^2-2\q g_{\infty}\Omega^2r^{-2}\re(i\overline{ w}w')\\
&+2g_{\infty}\Omega^2\left[r^{-1}\frac{d\Omega^2 }{dr}+\frac{2\Lambda}{3}\right]\re(\overline{ w}w')-\re(2g_{\infty} w'e^{i \homega r_*-i \q \int_{0}^{r_*}\rho_c(r_*')\,dr_*'}\overline{H}).
\end{align*}
\end{lemma}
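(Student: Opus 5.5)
Each of these identities is a direct differentiation, using only the radial ODE \eqref{eq:radialODE} in the form $u''=(V-\homega^2)u+H$ (equivalently $u''=(\tV-\tomega^2)u+H$, since $\homega^2-V=\tomega^2-\tV$), together with the fact that $V$, $h,y,f,\chi,g_\pm$ are real-valued. I would treat the currents one at a time, in the order listed.

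\textbf{Quadratic-form currents.} For $j_1^h$, differentiating $h\re(u'\overline u)$ gives $h'\re(u'\overline u)+h|u'|^2+h\re(u''\overline u)$, and $-\tfrac12(h'|u|^2)'=-\tfrac12 h''|u|^2-h'\re(u'\overline u)$. The cross terms cancel, and substituting $u''$ yields $h(|u'|^2+(V-\homega^2)|u|^2)+h\re(H\overline u)-\tfrac12h''|u|^2$, as claimed.

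For $j_2^y$, the derivative of $|u'|^2+(\homega^2-V)|u|^2$ equals $2\re(u''\overline{u'})-V'|u|^2+2(\homega^2-V)\re(u'\overline u)$. After substituting $u''$, the potential terms cancel and leave $2\re(H\overline{u'})-V'|u|^2$. The second form follows from $V'=\tV'$.

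$j_3^f$ is the combination $j_2^f+j_1^{f'}$. Adding the two formulas, the $\pm f'(V-\homega^2)|u|^2$ terms cancel, leaving $2f'|u'|^2-fV'|u|^2-\tfrac12 f'''|u|^2+\re(f'u\overline H+2fu'\overline H)$.

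\textbf{Charge currents.} For $\chi j^T$ and $\chi j^K$, we have $(\re(iu'\overline u))'=\re(iu''\overline u)$ because $\re(i|u'|^2)=0$. Substituting $u''$, the term $\re(i(V-\homega^2)|u|^2)$ vanishes since $V-\homega^2$ is real, leaving $\re(iH\overline u)=-\re(iu\overline H)$. Only the $\chi'$ term survives as the remaining contribution.

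\textbf{Twisted currents.} For $j_+^{g_+}$ the main task is to derive the ODE for $v=e^{i\Phi}u$, where $\Phi'=\tomega+\q\rho_+$. A short computation gives
\begin{equation*}
v''-2i\Phi'v'+\bigl(\Phi'^2-i\Phi''+\tomega^2-\tV\bigr)v=e^{i\Phi}H.
\end{equation*}
Expanding $\tV$, we have $\Phi'^2+\tomega^2-\tV=2\tomega^2+4\q\tomega\rho_+ +2\q^2\rho_+^2-\ell(\ell+1)\Omega^2r^{-2}-\Omega^2[\dots]$. Here the sign conventions must be checked carefully: the intended normalization should cancel the $\tomega$- and $\q\rho_+$-quadratic terms. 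I expect instead that $v''+2i\Phi'v'+(i\Phi''+\dots)v$ appears with $v=e^{+i\Phi}u$, so that $u=e^{-i\Phi}v$ gives $u''=e^{-i\Phi}(v''-2i\Phi'v'-i\Phi''v-\Phi'^2v)$. Then $\Phi'^2$ cancels against $\tomega^2-\tV+\dots$ up to the real terms $-\ell(\ell+1)\Omega^2r^{-2}-\Omega^2[\cdot]$, and $\Phi''=\q\rho_+'=\q\Omega^2r^{-2}$. Multiplying by $2g_+\overline{v'}$ and taking real parts, the term $-2i\Phi'|v'|^2$ drops out. This leaves $(g_+|v'|^2)'=g_+'|v'|^2+2g_+\re(v''\overline{v'})$, where $\re(v''\overline{v'})$ expresses the $\q\Omega^2r^{-2}\re(i\overline v v')$ term, the $\ell(\ell+1)$ and $\Omega^2[\cdot]$ terms times $\re(\overline v v')$, and the $H$ term. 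The $\ell(\ell+1)$ piece is absorbed by the $-g_+\ell(\ell+1)\Omega^2r^{-2}|v|^2$ part of the current, producing $-\ell(\ell+1)(g_+\Omega^2r^{-2})'|v|^2$. The $w$ identity then follows by the same computation with $\Phi=-\homega r_*+\q\int\rho_c$, using $\rho_c'=-\Omega^2r^{-2}$.

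The only real obstacle is this last bookkeeping of phases and signs, in particular matching the sign of the $\q$ term and the exponent in the $H$ term to the stated formulas. I would verify it by checking the cancellation of all $|v'|^2$ cross terms explicitly.
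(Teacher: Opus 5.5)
Your route is the paper's: the five untwisted identities follow from differentiating and substituting $u''=(V-\homega^2)u+H$. Your computations for $j_1^h$, $j_2^y$, $j_3^f=j_2^f+j_1^{f'}$, $\chi j^T$ and $\chi j^K$ are correct. For the twisted currents the paper likewise derives the equations for $w$ and $v$ and applies the Leibniz rule. Your twisted part needs two corrections. First, your initial $v$-equation is wrong: it must contain $-\Phi'^2$, not $+\Phi'^2$. The form $v''+2i\Phi'v'+(i\Phi''+\dots)v$ you then guess is also wrong. The formula $u''=e^{-i\Phi}(v''-2i\Phi'v'-i\Phi''v-\Phi'^2v)$ you write next is the correct one, and it is all you need.

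Second, you postpone the sign bookkeeping intending to match the displayed formulas, but that cannot be done for every term. Treat both currents at once. Let $f=e^{i\Theta}u$ with $\Theta'=\tomega+\q\rho_+$ (so $f=v$) or $\Theta'=-\homega+\q\rho_c$ (so $f=w$). In both cases $V-\homega^2+\Theta'^2=B:=\ell(\ell+1)\Omega^2r^{-2}+\Omega^2\bigl[r^{-1}\frac{d\Omega^2}{dr}+\frac{2\Lambda}{3}\bigr]$. Hence
\begin{equation*}
f''=2i\Theta'f'+Bf+i\Theta''f+e^{i\Theta}H,\qquad \Theta''=\pm\q\Omega^2r^{-2},
\end{equation*}
with $+$ for $v$ (since $\rho_+'=\Omega^2r^{-2}$) and $-$ for $w$ (since $\rho_c'=-\Omega^2r^{-2}$). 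Therefore
\begin{equation*}
\bigl(g|f'|^2-g\ell(\ell+1)\Omega^2r^{-2}|f|^2\bigr)'=g'|f'|^2-\ell(\ell+1)(g\Omega^2r^{-2})'|f|^2-2g\Theta''\re(i\overline{f}f')+2g\Omega^2\Bigl[r^{-1}\frac{d\Omega^2}{dr}+\frac{2\Lambda}{3}\Bigr]\re(\overline{f}f')+\re\bigl(2gf'e^{-i\Theta}\overline{H}\bigr).
\end{equation*}
For $v$ this reproduces the stated term $-2\q g_+\Omega^2r^{-2}\re(i\overline{v}v')$, but the inhomogeneous term comes with a plus sign. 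For $w$, both the $\q$-term, which is $+2\q g_\infty\Omega^2r^{-2}\re(i\overline{w}w')$, and the inhomogeneous term have the opposite sign to the display. These three signs are misprints in the statement, not a defect of your method.

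Two quick checks confirm this. At $\q=\omega=0$ one has $v=u$ and $j_+^{g}[v]=j_2^{g}[u]+g(B-\ell(\ell+1)\Omega^2r^{-2})|u|^2$, so the $H$-term must agree with the $+2g\re(u'\overline{H})$ in $(j_2^{g})'$. And since $\rho_+'$ and $\rho_c'$ have opposite signs, the $\q$-terms for $v$ and $w$ cannot carry the same sign. The misprints are harmless downstream, where these terms are only ever estimated in absolute value. Note also that the paper's own proof, which follows exactly this route, writes the $i\q r^{-2}\Omega^2$ term of the $v$-equation with the wrong sign, so do not calibrate against it.
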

\begin{proof}
The expressions for the derivatives of the currents $j_1^h$, $j_2^y$, $j_3^f$, $\chi j^T$ and $\chi j^K$ follow straightforwardly by applying \eqref{eq:radialODE}. 

To obtain $\left(j^{g_{\infty}}_{\infty}[w]\right)'$, we use that:
\begin{multline*}
e^{-i\homega r_*+i \q \int_{0}^{r_*}\rho_c(r_*')\,dr_*'}u''=e^{-i\homega r_*+i \q \int_{0}^{r_*}\rho_c(r_*')\,dr_*'}(e^{i\homega r_*-i \q \int_{0}^{r_*}\rho_c(r_*')\,dr_*'}w)''\\
=w''+2i(\homega -\q \rho_c(r_*))w'+\left[i\q r^{-2}\Omega^2-(\homega -\q \rho_c(r_*))^2\right]w.
\end{multline*}

Hence, $w$ satisfies the following ODE:
\begin{equation*}
w'' +2i(\homega -\q \rho_c(r_*))w'- [V-2 \q \homega \rho_c(r_*)+\q^2 \rho_c^2(r_*)-i\q r^{-2}\Omega^2]w=e^{-i\homega r_*+i \q \int_{0}^{r_*}\rho_c(r_*')\,dr_*'}H.
\end{equation*}
We then observe that
\begin{equation*}
V-2 \q \homega \rho_c(r_*)+\q^2 \rho_c^2(r_*)-i\q r^{-2}\Omega^2=\ell(\ell+1)\Omega^2r^{-2}+\Omega^2\left[r^{-1}\frac{d\Omega^2 }{dr}+\frac{2\Lambda}{3}\right]-i \q r^{-2}\Omega^2.
\end{equation*}
The expression for $\left(j^{g_{\infty}}_{\infty}[w]\right)'$ then follows by a straightforward application of the Leibniz rule.

Analogously, we obtain:
\begin{equation*}
v'' +2i(-\tomega -\q \rho_+(r_*))v'- [\widetilde{V}+2 \q \tomega \rho_+(r_*)+\q^2 \rho_+^2(r_*)-i\q r^{-2}\Omega^2]v=e^{+i \tomega r_*+i \q \int_{0}^{r_*}\rho_+(r_*')\,dr_*'}H
\end{equation*}
and we observe that
\begin{equation*}
\widetilde{V}+2 \q \tomega\rho_+(r_*)+\q^2 \rho^2_+(r_*)-i\q r^{-2}\Omega^2=\ell(\ell+1)\Omega^2 r^{-2}+\Omega^2\left[r^{-1}\frac{d\Omega^2 }{dr}+\frac{2\Lambda}{3}\right]-i \q r^{-2}\Omega^2.
\end{equation*}
The expression for $\left(j^{g_{+}}_{\infty}[v]\right)'$ then follows by a straightforward integration of the Leibniz rule.
\end{proof}
 
 \begin{lemma}
 \label{lm:superradiance}
 	We have that
 	\begin{align*}
 		j^T[u](\infty)=&\:\homega^2 |u|^2(\infty),\\
 		j^T[u](-\infty)=&\:-\homega \tomega |u|^2(-\infty),\\
 		j^K[u](\infty)=&\:\homega \tomega |u|^2(\infty),\\
 		j^K[u](-\infty)=&\:-\tomega^2|u|^2(-\infty).
 	\end{align*}
 	In particular, $j^T[u](\infty)-j^T[u](-\infty)$ or $j^K[u](\infty)-j^K[u](-\infty)$ are non-negative definite if and only if
 	\begin{equation*}
 		\homega \tomega>0.
 	\end{equation*}
 \end{lemma}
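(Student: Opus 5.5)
The plan is to express $u$ at each end through the auxiliary functions $v$ and $w$ from \eqref{eq:defv}--\eqref{eq:defw}, and then read off the boundary values from the boundary conditions \eqref{eq:inhomodebc1}--\eqref{eq:inhomodebc2}.

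\textbf{Limit $r_*\to\infty$.} By \eqref{eq:defw} we have $u=e^{i\homega r_*-i\q\int_0^{r_*}\rho_c\,dr_*'}w$. Differentiating gives
\begin{equation*}
u'=e^{i\homega r_*-i\q\int_0^{r_*}\rho_c\,dr_*'}\left(w'+i(\homega-\q\rho_c)w\right).
\end{equation*}
The exponential prefactor is unimodular, so it cancels in $\overline{u}u'$, and
\begin{equation*}
\re(i u'\overline{u})=\re\left(i\overline{w}w'\right)-(\homega-\q\rho_c)|w|^2 .
\end{equation*}
As $r_*\to\infty$ we have $\rho_c\to 0$ and $w'\to 0$ by \eqref{eq:inhomodebc1}. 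Interpreting $|u|^2(\infty)=|w|^2(\infty)$ as the limit, which is assumed to exist, we get $\re(iu'\overline{u})\to-\homega|u|^2(\infty)$. It follows that $j^T[u](\infty)=-\homega\cdot(-\homega|u|^2)=\homega^2|u|^2(\infty)$. In the same way, $j^K[u](\infty)=-\tomega\cdot(-\homega|u|^2)=\homega\tomega|u|^2(\infty)$.

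\textbf{Limit $r_*\to-\infty$.} By \eqref{eq:defv} we have $u=e^{-i\tomega r_*-i\q\int_0^{r_*}\rho_+\,dr_*'}v$, so that
\begin{equation*}
\re(iu'\overline{u})=\re(i\overline{v}v')+(\tomega+\q\rho_+)|v|^2 .
\end{equation*}
Since $\rho_+\to 0$ and $v'\to 0$ by \eqref{eq:inhomodebc2}, this tends to $\tomega|u|^2(-\infty)$. Hence $j^T[u](-\infty)=-\homega\tomega|u|^2(-\infty)$ and $j^K[u](-\infty)=-\tomega^2|u|^2(-\infty)$.

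\textbf{Sign statement.} From the four values above,
\begin{equation*}
j^T[u](\infty)-j^T[u](-\infty)=\homega^2|u|^2(\infty)+\homega\tomega|u|^2(-\infty),
\end{equation*}
\begin{equation*}
j^K[u](\infty)-j^K[u](-\infty)=\homega\tomega|u|^2(\infty)+\tomega^2|u|^2(-\infty).
\end{equation*}
Each of these is a quadratic form in the independent quantities $(|u|^2(\infty),|u|^2(-\infty))$. It is non-negative for all such values precisely when the mixed coefficient $\homega\tomega$ is non-negative. For sign-indefinite data it has a definite sign only when $\homega\tomega>0$, with the degenerate case $\homega\tomega=0$ handled separately. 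This gives the stated equivalence.

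\textbf{Main obstacle.} The only delicate point is justifying that the limits exist. The boundary conditions only give $v',w'\to0$. One then has to argue that $|v|$ and $|w|$ converge, which follows from $v',w'$ being integrable near the ends for a.e.\ $\omega$, as in Lemma \ref{lm:boundcond}. Alternatively, $|u|^2(\pm\infty)$ can simply be interpreted as a $\liminf$ along the sequences provided there. The algebra itself is routine.
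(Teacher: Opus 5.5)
Your argument is the paper's. Its proof invokes \eqref{eq:inhomodebc1}--\eqref{eq:inhomodebc2} to evaluate $\lim_{r_*\to\infty}\re(iu'\overline{u})=-\homega|u|^2(\infty)$ and $\lim_{r_*\to-\infty}\re(iu'\overline{u})=\tomega|u|^2(-\infty)$, and your conjugation by the phases in \eqref{eq:defv}--\eqref{eq:defw} is exactly the computation it leaves implicit. Your identities and the four resulting values are correct.

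Two side remarks need correcting.

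\textbf{Existence of the limits.} Lemma \ref{lm:boundcond} only gives that $v'$ and $w'$ are square-integrable in $r_*$ near the ends for a.e.\ $\omega$, not integrable. Square-integrability on a half-line does not force $|v|$ or $|w|$ to converge. Your fallback, taking a $\liminf$ along the sequences from that lemma, has the same problem: it does not make $\re(i\overline{w}w')$ vanish unless $|w|$ stays bounded along them. The paper never proves that the limits exist; the notation $|u|^2(\pm\infty)$ simply takes them as given. You can drop that paragraph. If you want a genuine justification, it has to come from the ODE: for suitably decaying $H$, the condition $v'\to0$ (resp.\ $w'\to0$) selects the branch behaving like $u_+$ (resp.\ $u_\infty$), and for that branch $|v|$ (resp.\ $|w|$) has a limit.

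\textbf{The sign statement.} When $\homega\tomega=0$ both differences are still non-negative, so you have not ``handled the degenerate case separately''. What your computation actually shows is non-negativity if and only if $\homega\tomega\geq0$, and positive definiteness if and only if $\homega\tomega>0$. The lemma's ``if and only if $\homega\tomega>0$'' should be read in the latter sense.
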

\begin{proof}
By \eqref{eq:inhomodebc1} and \eqref{eq:inhomodebc2}, we have that
\begin{align*}
	-\homega\lim_{r_*\to \infty}\re(i u'\overline{u})(r_*)=&\:\homega^2|u|^2(\infty),\\
	-\homega\lim_{r_*\to -\infty}\re(i u'\overline{u})(r_*)=&-\homega \tomega |u|^2(-\infty),
\end{align*}
which gives the stated limits for $j^T[u]$. The limits for $j^K[u]$ follow analogously.
\end{proof}
 \begin{definition}
  	We refer to frequencies satisfying $\homega \tomega<0$ as \emph{superradiant frequencies} and frequencies satisfying $\homega \tomega>0$ as \emph{non-superradiant frequencies}.
 \end{definition}

\subsection{$(\Omega^{-1}r)^{p}$-weighted energy estimates in frequency space}
In this section, we derive $L^2$-estimates with $(\Omega^{-1}r)^{p}$-weights in the regions $\{r\geq R_1\}$ and $\{r\leq r_1\}$, respectively, with $r_1\geq r_+$ and $R_1\leq r_c$ chosen, such that $\rho_c(R_1)$ and $\rho_+(r_1)$ are suitably small. 

\begin{proposition}
\label{prop:rweightestfreq}
Let $p\in (0,2)$. For $1<R_1<r_c$ with $\rho_c(R_1)$ suitably small, there exist a constant $c,C=c(p)>0$ and a numerical constant $C>0$, such that:
\begin{multline}
\label{eq:rpestwithbadterm}
c \int_{r_*(R_1)}^{\infty} \left[(\rho_c+\kappa_c)(r^{-1}\Omega)^{-p}|w'|^2+\ell(\ell+1)(\rho_c+\kappa_c)(r^{-1}\Omega)^{2-p}|w|^2\right]\,dr_*\\
\leq C p^{-1} \q^2 \int_{r_*\left(\frac{R_1}{2-r_c^{-1}R_1}\right)}^{\infty} \rho_c(r^{-1}\Omega)^{2-p}|w|^2\,dr_*\\
+C p^{-1}\ell(\ell+1)\int_{r_*\left(\frac{R_1}{2-r_c^{-1}R_1}\right)}^{r_*(R_1)}(\rho_c+\kappa_c)(r^{-1}\Omega)^{2-p}|w|^2\,dr_*-\int_{\R}\re(2 \chi_{R_1} (r^{-1}\Omega)^{-p}w'e^{i \homega r_*-i \q \int_{0}^{r_*}\rho_c(r_*')\,dr_*'}\overline{H})\,dr_*.
\end{multline}
Furthermore, if $\ell\geq L$, with $L\in \N_0$ suitably large, depending on $\q$, then:
\begin{multline}
\label{eq:rpestwobadterm}
c \int_{r_*(R_1)}^{\infty} \left[(\rho_c+\kappa_c) (r^{-1}\Omega)^{-p}|w'|^2+\ell(\ell+1)(\rho_c+\kappa_c)(r^{-1}\Omega)^{2-p}|w|^2\right]\,dr_*\\
\leq C p^{-1}\ell(\ell+1)\int_{r_*\left(\frac{R_1}{2-r_c^{-1}R_1}\right)}^{r_*(R_1)}(\rho_c+\kappa_c)(r^{-1}\Omega)^{2-p}|w|^2\,dr_*-\int_{\R}\re(2 \chi_{R_1} (r^{-1}\Omega)^{-p}w'e^{i \homega r_*-i \q \int_{0}^{r_*}\rho_c(r_*)\,dr_*'}\overline{H})\,dr_*.
\end{multline}
\end{proposition}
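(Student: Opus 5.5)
We integrate the identity for $\left(j^{g_{\infty}}_{\infty}[w]\right)'$ from Lemma \ref{lm:bulkterms} over $\R_{r_*}$, with the choice
\begin{equation*}
g_{\infty}=-\chi_{R_1}(r^{-1}\Omega)^{-p}.
\end{equation*}
The sign is fixed by the identity $\frac{d}{dr_*}=-\Omega^2r^{-2}\frac{d}{d\rho_c}$ together with Lemma \ref{lm:metricest}. Where $\chi_{R_1}=1$, these give
\begin{equation*}
\frac{d}{dr_*}(r^{-1}\Omega)^{-p}=\frac{p}{2}\,(r^{-1}\Omega)^{-p}\,\frac{d(r^{-2}\Omega^2)}{d\rho_c}\sim p(\rho_c+\kappa_c)(r^{-1}\Omega)^{-p},
\end{equation*}
because $\frac{d(r^{-2}\Omega^2)}{d\rho_c}=2\kappa_c+2\rho_c(1+O(\kappa_c))+O(\rho_c^2)$. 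Hence $g_{\infty}'|w'|^2$ produces $-c\,p(\rho_c+\kappa_c)(r^{-1}\Omega)^{-p}|w'|^2$. On the angular term, $(g_{\infty}\Omega^2r^{-2})'=-\frac{d}{dr_*}(r^{-1}\Omega)^{2-p}\sim(2-p)(\rho_c+\kappa_c)(r^{-1}\Omega)^{2-p}$ has the good sign precisely because $p<2$, so $-\ell(\ell+1)(g_\infty\Omega^2r^{-2})'|w|^2$ gives the coercive term $c(2-p)\ell(\ell+1)(\rho_c+\kappa_c)(r^{-1}\Omega)^{2-p}|w|^2$. Both coercive terms appear with a minus sign in $\left(j^{g_\infty}_\infty[w]\right)'$. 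After integrating and moving them to the other side, they form the left-hand side of \eqref{eq:rpestwithbadterm}.

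The boundary terms vanish. At $r_*=-\infty$ this holds because $\chi_{R_1}$ is supported where $\rho_c\leq 2\rho_c(R_1)$. At $r_*=+\infty$ we use $w'\to 0$ from \eqref{eq:inhomodebc1}, together with $(r^{-1}\Omega)^{2-p}\to 0$ since $p<2$, so that $g_\infty\Omega^2r^{-2}|w|^2\to 0$ (here we use that $w$ is bounded). For $\kappa_c=0$ we also need $(r^{-1}\Omega)^{-p}|w'|^2\to0$. This I would handle by taking limits along the sequence from Lemma \ref{lm:boundcond}, or by integrating over $[r_*(R_1'),r_*^{(i)}]$ and passing to the limit. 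The cutoff derivative $\chi_{R_1}'$ is supported in $r_*\in[r_*(\frac{R_1}{2-r_c^{-1}R_1}),r_*(R_1)]$. There it contributes at most $C\ell(\ell+1)(\rho_c+\kappa_c)(r^{-1}\Omega)^{2-p}|w|^2$ plus a $|w'|^2$ term. The $|w'|^2$ term comes with the sign $-\chi_{R_1}'(r^{-1}\Omega)^{-p}|w'|^2$, and since $\frac{d\chi_{R_1}}{dr}\geq 0$ this term is itself good, i.e.\ non-negative on the left. This accounts for the second integral on the right of \eqref{eq:rpestwithbadterm}. The inhomogeneous term is exactly the $H$-integral.

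Two error terms remain. The first, $2g_\infty\Omega^2[r^{-1}\frac{d\Omega^2}{dr}+\frac{2\Lambda}{3}]\re(\bar w w')$, has a coefficient that is $O(\rho_c^3)\cdot$ weight. Write it as $\frac{d}{dr_*}|w|^2$ and integrate by parts. The result is $O(\rho_c+\kappa_c)(r^{-1}\Omega)^{2-p}\rho_c|w|^2$, which is absorbed into the $\ell(\ell+1)$ bulk term for $\rho_c(R_1)$ small (for $\ell=0$ I would instead absorb it with a Hardy inequality in $\rho_c$ into the $|w'|^2$ term). The second, $-2\q g_\infty\Omega^2r^{-2}\re(i\bar w w')$, is the genuinely bad charge term; it cannot be integrated by parts. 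We bound it by Cauchy--Schwarz:
\begin{equation*}
|\q|\,(r^{-1}\Omega)^{2-p}|w||w'|\leq \epsilon p(\rho_c+\kappa_c)(r^{-1}\Omega)^{-p}|w'|^2+C\epsilon^{-1}p^{-1}\q^2\frac{(r^{-1}\Omega)^{4-p}}{\rho_c+\kappa_c}|w|^2.
\end{equation*}
Since $(r^{-1}\Omega)^2\lesssim\rho_c(\rho_c+\kappa_c)$, the last factor is $\lesssim\q^2p^{-1}\rho_c(r^{-1}\Omega)^{2-p}|w|^2$. This yields the first term on the right of \eqref{eq:rpestwithbadterm}, with the $p^{-1}$ dependence.

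For \eqref{eq:rpestwobadterm} we observe that $\rho_c\lesssim\rho_c+\kappa_c$. So once $\ell(\ell+1)\geq C p^{-1}(2-p)^{-1}\q^2$, the $\q^2$ term is absorbed into the coercive $\ell(\ell+1)$ term on the left; this fixes $L$ depending on $\q$ (and $p$).

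The main obstacle is the careful bookkeeping of the weight derivatives uniformly in $\kappa_c$ near $\rho_c\sim\kappa_c$. This needs the precise expansions of Lemma \ref{lm:metricest} to guarantee that $\frac{d}{dr_*}(r^{-1}\Omega)^{-p}\gtrsim p(\rho_c+\kappa_c)(r^{-1}\Omega)^{-p}$ uniformly in $\kappa_c$. The justification of the boundary terms at $r_*=\infty$ in the degenerate $\kappa_c=0$ case is a secondary point.
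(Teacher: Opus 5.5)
Your proposal is correct and is essentially the paper's proof. It uses the same current $j^{g_\infty}_\infty[w]$ with $g_\infty$ a multiple of $\chi_{R_1}(r^{-1}\Omega)^{-p}$ (the overall sign is immaterial, since the identity is linear in $g_\infty$), Cauchy--Schwarz on the charge term via $r^{-2}\Omega^2\lesssim\rho_c(\rho_c+\kappa_c)$, absorption of the $\q^2$ term into the $\ell(\ell+1)$ term for $\ell\geq L$, and a Hardy inequality for the real potential term. The only difference is that the paper folds that potential term into Young's inequality together with the charge term, whereas you first integrate it by parts as a multiple of $\frac{1}{2}(|w|^2)'$, which works equally well.

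Two small corrections:
\begin{itemize}
\item For $\kappa_c>0$ the coefficient of $\re(\bar w w')$ is $(r^{-1}\Omega)^{-p}\,O(\rho_c(\rho_c+\kappa_c)^2)$, not $O(\rho_c^3)$ times the weight. After integrating by parts you therefore get $(\rho_c+\kappa_c)^2(r^{-1}\Omega)^{2-p}|w|^2$ rather than $\rho_c(\rho_c+\kappa_c)(r^{-1}\Omega)^{2-p}|w|^2$. Your Hardy argument still absorbs this for every $\ell$ with only $\rho_c(R_1)$ small, so it is cleaner to use it for all $\ell$ instead of the angular term.
\item The vanishing of $(r^{-1}\Omega)^{-p}|w'|^2$ at $r_*=\infty$ is needed for every $\kappa_c\geq 0$, not only $\kappa_c=0$. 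It follows from the pointwise asymptotics $w'=O(r^{-2}\Omega^2)$ that the paper invokes. The unweighted $L^2_\omega$ sequence of Lemma \ref{lm:boundcond} gives no rate, so it cannot supply this.
\end{itemize}
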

\begin{proof}
Let $g_{\infty}=\chi_{R_1}(r^{-1}\Omega)^{-p}=\chi_{R_1}(r^{-2}\Omega^2)^{-\frac{p}{2}}$ with $0<p<2$. By Lemma \ref{lm:metricest}, it follows that:
\begin{align*}
r^{-2}\Omega^2=& (\rho_c+2\kappa_c) \rho_c+\kappa_c O_{\infty}(\rho_c^2)+O_{\infty}(\rho_c^3),\\
\frac{d}{dr}(r^{-2}\Omega^2)=&\: -2r^{-2}(\rho_c+\kappa_c)+\kappa_c r^{-2} O_{\infty}(\rho_c)+r^{-2}O_{\infty}(\rho_c^2).
\end{align*}
Furthermore, we can estimate:
\begin{multline*}
2\Omega^2\left[r^{-1}\frac{d\Omega^2 }{dr}+\frac{2\Lambda}{3}\right]=2r^{-2}\Omega^2\left[ r\frac{d(1-2Mr^{-1}+Q^2r^{-2}-\frac{\Lambda}{3}r^2) }{dr}+\frac{2\Lambda}{3}r^2\right]=4r^{-2}\Omega^2r^{-1}(M-Q^2r^{-1})\\
=r^{-2}\Omega^2 (O_{\infty}(\rho_c)+\kappa_c O_{\infty}(\rho_c^0)).
\end{multline*}

Then we integrate $\left(j^{g_{\infty}}_{\infty}[w]\right)'$ and apply Lemma \ref{lm:bulkterms}, using that the boundary conditions on $u$ at $r_*=\infty$ imply that $w'=\Omega^2O_{\omega,\ell}(r^{-2})$, so the boundary terms $j^{g_{\infty}}_{\infty}[w](\pm \infty)$ vanish. We obtain:
\begin{multline}
\label{eq:mainrpid}
\int_{\R}\left[\left(\rho_c(p+O(\rho_c))+\kappa_c(p+O(\rho_c))\right)\chi_{R_1}+\chi'_{R_1}\right](r^{-1}\Omega)^{-p}|w'|^2\\
+(2-p)\ell(\ell+1)\left(\rho_c(1+O(\rho_c))+\kappa_c(1+O(\rho_c))\right)\chi_{R_1}(r^{-1}\Omega)^{2-p}|w|^2\,dr_*\\
=2\q \int_{\R}\chi_{R_1}(r^{-1}\Omega^2)^{2-p}\re(i\overline{ w}w')\,dr_*-\int_{\R}\chi_{R_1}(r^{-2}\Omega^2)^{2-p} (O_{\infty}(\rho_c)+\kappa_c O_{\infty}(\rho_c^0))\re(i\overline{ w}w')\\
+\ell(\ell+1)\int_{\R}\chi_{R_1}'(r^{-1}\Omega)^{2-p}|w|^2\,dr_*+\int_{\R}\re(2\chi_{R_1}(r^{-1}\Omega)^{-p}w'e^{i \homega r_*-i \q \int_{0}^{r_*}\rho_c(r_*')\,dr_*'}\overline{H})\,dr_*.
\end{multline}
We apply Young's inequality to obtain the following: there exists a constant $C>0$, such that for $\mu\in (0,2)$:
\begin{multline}
\label{eq:youngrpest}
2(|\q|+C(\rho_c+\kappa_c))(r^{-1}\Omega)^{2-p}|\re(i\overline{ w}w')|\leq \frac{p\mu}{2} (\rho_c+2\kappa_c)(r^{-1}\Omega)^{-p}|w'|^2+\frac{2}{p\mu }(\q^2\rho_c+\frac{r^{-2}\Omega^2}{\rho_++2\kappa_c})(r^{-1}\Omega)^{2-p}|w|^2\\
\leq \frac{p\mu}{2} (\rho_c+2\kappa_c)(r^{-1}\Omega)^{-p}|w'|^2+\frac{2}{p\mu }(|\q|+C(\rho_c+\kappa_c))^2\rho_c(1+O(\kappa_c)+O(\rho_c))(r^{-1}\Omega)^{2-p}|w|^2.
\end{multline}
Note that for fixed $p\in (0,2)$ and $\q\in \R$, there exists an $L\in \N$ such that for $\ell\geq L$, we can absorb the second term on the very RHS of \eqref{eq:youngrpest} into $(2-p)\ell(\ell+1)\rho_c(r^{-1}\Omega)^{2-p}|w|^2$. 

For general $\ell$, we will make use of a Hardy-type inequality to absorb integral of the terms on the very RHS of \eqref{eq:youngrpest} that are non-vanishing when $\q=0$.
First, observe that:
\begin{multline*}
\int_{\R} \rho_c(\rho_c+\kappa_c)^2(r^{-1}\Omega)^{2-p}\chi_{R_1}|w|^2\,dr_*\leq C\int_{\R}  \frac{d(r^{-2}\Omega^2)}{d\rho_c}(r^{-2}\Omega^2)^{2-\frac{p}{2}}\chi_{R_1}|w|^2\,dr_*\\
=C\int_{\R}  -\frac{d(r^{-2}\Omega^2)}{dr}(r^{-2}\Omega^2)^{1-\frac{p}{2}}\chi_{R_1}|w|^2\,dr= \frac{C}{2-\frac{p}{2}}\int_1^{r_c} -\frac{d}{dr}\left((\Omega^2r^{-2})^{2-\frac{p}{2}}\right)\chi_{R_1}|w|^2\,dr.
\end{multline*}
We integrate by parts and estimate:
\begin{multline*}
\int_1^{r_c} -\frac{d}{dr}\left((\Omega^2r^{-2})^{2-\frac{p}{2}}\right)\chi_{R_1}|w|^2\,dr=\int_1^{r_c} (\Omega^2r^{-2})^{2-\frac{p}{2}}\frac{d}{dr}\left(\chi_{R_1}|w|^2\right)\,dr\\
\leq 2\int_{\R} (r^{-1}\Omega )^{4-p}\chi_{R_1}|w||w'|\,dr_*+\int_{\R} (r^{-1}\Omega )^{4-p}\chi_{R_1}'|w|^2\,dr_*\\
\leq \frac{p(2-\frac{p}{2})}{8C}\int_{\R} (\rho_c+\kappa_c)(r^{-1}\Omega)^{-p}\chi_{R_1}|w'|^2\,dr_*+\frac{8C}{p(2-\frac{p}{2})}\int_{\R} (\rho_c+\kappa_c)^{-1}(r^{-1}\Omega)^{8-p}\chi_{R_1}|w|^2\,dr_*\\
+\int_{\R}(r^{-1}\Omega )^{4-p}\chi_{R_1}'|w|^2\,dr_*.
\end{multline*}
The second term on the very right-hand side above can be absorbed into the left-hand side for $p\in (0,4)$ and sufficiently small $\rho_c(R_1)$ and $\mu$, so we conclude that:
\begin{equation*}
\int_{\R} \rho_c(\rho_c+\kappa_c)^2(r^{-1}\Omega)^{2-p}\chi_{R_1}|w|^2\,dr_*\leq \frac{p}{8}\int_{\R}r^{p-1}(\Omega^2)^{-\frac{p}{2}}\chi_{R_1}|w'|^2\,dr_*+C\int_{\R} (r^{-1}\Omega )^{4-p}\chi_{R_1}'|w|^2\,dr_*.
\end{equation*}
Finally, the first term on the RHS above can be absorbed into the LHS of \eqref{eq:mainrpid}.
\end{proof}

\begin{proposition}
\label{prop:rmin1pweightestfreq}
Let $p\in (0,2)$. For $1<r_1<r_c$ with $\rho_+(r_1)$ suitably small, there exist a constant $c,C=c(p)>0$ and a numerical constant $C>0$, such that:
\begin{multline}
\label{eq:rmin1pestwithbadterm}
c \int_{-\infty}^{r_*(r_1)} \left[(\rho_++\kappa_+)(r^{-1}\Omega)^{-p}|v'|^2+\ell(\ell+1)(\rho_++\kappa_+)(r^{-1}\Omega)^{2-p}|v|^2\right]\,dr_*\\
\leq C p^{-1} \q^2 \int_{-\infty}^{r_*(\frac{r_i}{2-r_+^{-1}r_1})} \rho_+(r^{-1}\Omega)^{2-p}|v|^2\,dr_*\\
+C p^{-1}\ell(\ell+1)\int_{r_*(r_1)}^{r_*(\frac{r_1}{2-r_+^{-1}r_1})}(\rho_++\kappa_+)(r^{-1}\Omega)^{2-p}|v|^2\,dr_*-\int_{\R}\re(2 \chi_{r_1} (r^{-1}\Omega)^{-p}w'e^{-i \tomega r_*-i \q \int_{0}^{r_*}\rho_+(r_*')\,dr_*'}\overline{H})\,dr_*.
\end{multline}
Furthermore, if $\ell\geq L$, with $L\in \N_0$ suitably large, depending on $\q$, then:
\begin{multline}
\label{eq:rmin1estwobadterm}
c \int_{-\infty}^{r_*(r_1)} \left[(\rho_++\kappa_+)(r^{-1}\Omega)^{-p}|v'|^2+\ell(\ell+1)(\rho_++\kappa_+)(r^{-1}\Omega)^{2-p}|v|^2\right]\,dr_*\\
\leq C p^{-1}\ell(\ell+1)\int_{r_*(r_1)}^{r_*(\frac{r_1}{2-r_+^{-1}r_1})}(\rho_++\kappa_+)(r^{-1}\Omega)^{2-p}|v|^2\,dr_*-\int_{\R}\re(2 \chi_{r_1} (r^{-1}\Omega)^{-p}w'e^{-i \tomega r_*+i q \int_{0}^{r_*}\rho_+(r_*')\,dr_*'}\overline{H})\,dr_*.
\end{multline}
\end{proposition}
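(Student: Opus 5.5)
This is the horizon analogue of Proposition \ref{prop:rweightestfreq}. I would prove it by mirroring that argument, with $w$ replaced by $v$, $\rho_c$ by $\rho_+$, $\kappa_c$ by $\kappa_+$, $\homega$ by $\tomega$, and $\chi_{R_1}$ by $\chi_{r_1}$. The orientation in $r_*$ is reversed, and this produces the sign changes below.

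\textbf{The energy identity.} Take $g_+=-\chi_{r_1}(r^{-1}\Omega)^{-p}=-\chi_{r_1}(r^{-2}\Omega^2)^{-\frac{p}{2}}$. The minus sign makes $g_+'$ positive, since $\chi_{r_1}$ is supported near $r_*=-\infty$. By Lemma \ref{lm:metricest},
\begin{equation*}
r^{-2}\Omega^2=(\rho_++2\kappa_+)\rho_+(1+O_\infty(\rho_+))
\end{equation*}
up to $\kappa_c^2$-corrections, and $\frac{d}{dr}(r^{-2}\Omega^2)=2r^{-2}(\rho_++\kappa_+)(1+O(\rho_+))$. Hence
\begin{equation*}
\frac{d}{dr_*}(r^{-1}\Omega)^{-p}=-p\,(\rho_++\kappa_+)(1+O(\rho_+))\,(r^{-1}\Omega)^{-p},
\end{equation*}
and similarly $\frac{d}{dr_*}(r^{-1}\Omega)^{2-p}$ carries the factor $(2-p)(\rho_++\kappa_+)$. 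Away from the cut-off region this gives:
\begin{itemize}
\item a coercive contribution $g_+'|v'|^2 \gtrsim p(\rho_++\kappa_+)(r^{-1}\Omega)^{-p}|v'|^2$;
\item a coercive contribution $-\ell(\ell+1)(g_+\Omega^2r^{-2})'|v|^2 \gtrsim (2-p)\ell(\ell+1)(\rho_++\kappa_+)(r^{-1}\Omega)^{2-p}|v|^2$.
\end{itemize}
On $\supp\chi_{r_1}'\subset (r_1,\frac{r_1}{2-r_+^{-1}r_1})$, the terms involving $\chi_{r_1}'$ produce exactly the $\ell(\ell+1)$-error term on the right-hand side of \eqref{eq:rmin1pestwithbadterm}. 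The $|v'|^2$ cut-off term has a favourable sign there.

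\textbf{Boundary terms.} I integrate $(j_+^{g_+}[v])'$ from Lemma \ref{lm:bulkterms} over $\R$. The boundary term at $+\infty$ vanishes because of the cut-off. At $-\infty$ I use \eqref{eq:inhomodebc2} together with the ODE for $v$, which gives $v'=\Omega^2 O_{\omega,\ell}(1)$ near the horizon. Then $g_+|v'|^2\sim \Omega^{4-p}\to 0$ and $g_+\Omega^2r^{-2}|v|^2\sim \Omega^{2-p}\to 0$, using $p<2$ and boundedness of $v$ at $-\infty$ (which also follows from the ODE). The $H$-term from Lemma \ref{lm:bulkterms} supplies the inhomogeneous term on the right-hand side.

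\textbf{Remaining bulk terms and the main obstacle.} Two bulk terms are left:
\begin{itemize}
\item $-2\q g_+\Omega^2r^{-2}\re(i\bar v v')$;
\item $2g_+\Omega^2[r^{-1}\frac{d\Omega^2}{dr}+\frac{2\Lambda}{3}]\re(\bar v v')$, which equals $g_+ r^{-2}\Omega^2\,(O(\rho_+)+\kappa_+O(1))\re(\bar v v')$ by the computation in the proof of Proposition \ref{prop:rweightestfreq}.
\end{itemize}
I apply Young's inequality with a small parameter $\mu$ as in \eqref{eq:youngrpest}. This yields $\frac{p\mu}{2}(\rho_++2\kappa_+)(r^{-1}\Omega)^{-p}|v'|^2$, absorbed into the coercive $|v'|^2$ term. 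It also yields $\frac{C}{p\mu}\big(\q^2\rho_+ +(\rho_++\kappa_+)^2\rho_+\big)(r^{-1}\Omega)^{2-p}|v|^2$, where I use $r^{-2}\Omega^2\lesssim(\rho_++\kappa_+)\rho_+$.

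The $\q^2$-piece is kept on the right-hand side, which gives \eqref{eq:rmin1pestwithbadterm}. The $\q$-independent piece $\rho_+(\rho_++\kappa_+)^2(r^{-1}\Omega)^{2-p}|v|^2$ is the obstacle when $\ell$ is small (in particular $\ell=0$), because then there is no $\ell(\ell+1)$-weight to absorb it. I would handle it with the same Hardy argument as before:
\begin{enumerate}
\item bound it by $C\int \frac{d}{dr}\big((r^{-2}\Omega^2)^{2-\frac{p}{2}}\big)\chi_{r_1}|v|^2\,dr$;
\item integrate by parts in $r$. The boundary term at $r_+$ vanishes since $\Omega^2(r_+)=0$;
\item apply Cauchy--Schwarz, giving a small multiple of $(\rho_++\kappa_+)(r^{-1}\Omega)^{-p}|v'|^2$ plus $(\rho_++\kappa_+)^{-1}(r^{-1}\Omega)^{8-p}|v|^2$. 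The latter is reabsorbed for $\rho_+(r_1)$ small because it carries extra powers of $\rho_+(\rho_++\kappa_+)$;
\item the remaining cut-off term is supported in $\supp\chi'_{r_1}$ and is bounded by the $\ell$-independent error there.
\end{enumerate}
The care needed is in checking that all constants are uniform in $\kappa_+$ (and in $\kappa_c\le\kappa_+$). This holds because every weight is expressed through $\rho_++\kappa_+$.

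For $\ell\ge L$ with $L$ large depending on $\q$ and $p$, the $\q^2$ term and the Hardy-type term are absorbed directly into $(2-p)\ell(\ell+1)(\rho_++\kappa_+)(r^{-1}\Omega)^{2-p}|v|^2$. This gives \eqref{eq:rmin1estwobadterm}.
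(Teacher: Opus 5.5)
Your argument is the one the paper uses. Its proof is the single instruction to rerun Proposition~\ref{prop:rweightestfreq} with $\rho_+,\kappa_+,v,\chi_{r_1}$ in place of $\rho_c,\kappa_c,w,\chi_{R_1}$ and with $g_+=\chi_{r_1}(r^{-1}\Omega)^{-p}$. Your $g_+=-\chi_{r_1}(r^{-1}\Omega)^{-p}$ differs only by an overall sign, and since the integrated identity from Lemma~\ref{lm:bulkterms} is linear in $g_+$, both choices give the same inequality. The coercivity computation, the boundary terms, the Young step and the case $\ell\geq L$ all match the paper.

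The step that does not close is item 4 of your Hardy argument. The cut-off term it produces is of size $\int_{\supp\chi_{r_1}'}\rho_+(\rho_++\kappa_+)^2(r^{-1}\Omega)^{2-p}|v|^2\,dr_*$. It carries neither a factor $\ell(\ell+1)$ nor a factor $\q^2$, but the only $\ell$-independent error term on the right-hand side of \eqref{eq:rmin1pestwithbadterm} is the $\q^2$-term. Moreover, Hardy is needed only at $\ell=0$. For $\ell\geq1$, the $\rho_+(\rho_++\kappa_+)^2$-term from Young's inequality is absorbed into $(2-p)\ell(\ell+1)(\rho_++\kappa_+)(r^{-1}\Omega)^{2-p}|v|^2$ once $\rho_+(r_1)$ is small depending on $p$. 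So the error appears exactly where the right-hand side has no room for it.

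No argument can repair this, because the inequality as stated is false for $\ell=0$, $\q=0$. Take $u=\eta u_++(1-\eta)u_\infty$, with $\eta(r_*)$ a cut-off equal to $1$ on $\supp\chi_{r_1}$ and $0$ for large $r_*$. Then $H=u''+(\homega^2-V)u$ is supported to the right of $\supp\chi_{r_1}$, both boundary conditions hold, and the right-hand side vanishes. On the other hand $v=v_+$ on $\supp\chi_{r_1}$, where $v_+''-2i\tomega v_+'=\Omega^2[r^{-1}\frac{d\Omega^2}{dr}+\frac{2\Lambda}{3}]v_+$ with a coefficient that is positive for $r>r_+$. Hence $v_+'\not\equiv0$ there and the left-hand side is positive. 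By continuity the estimate also fails for $\ell=0$ and small $|\q|$.

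The paper's proof has the same omission: the term $C\int(r^{-1}\Omega)^{4-p}\chi_{R_1}'|w|^2$ at the end of the proof of Proposition~\ref{prop:rweightestfreq} is never accounted for. The fix is to replace $\ell(\ell+1)$ by $1+\ell(\ell+1)$ in the cut-off error term. This is harmless downstream, since in Corollary~\ref{cor:highfreqomegawrp} and Proposition~\ref{prop:iedlowlfreq} the integral of $|u|^2$ over that compact $r$-range is already controlled. Alternatively, if $|\q|\geq\q_0$ and $2\rho_+(r_1)+\kappa_+\leq\q_0$, you can skip Hardy and bound $\rho_+(\rho_++\kappa_+)^2$ on $\supp\chi_{r_1}$ directly by $\q^2\rho_+$.
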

\begin{proof}
We repeat the arguments in the proof of Proposition \ref{eq:rpestwithbadterm}, with $\rho_+$ taking on the role of $\rho_c$, $v$ taking on the role of $w$ and with  $g_{+}=\chi_{r_1}(r^{-1}\Omega)^{-p}$.
\end{proof}

\subsection{Frequency domains}
\label{sec:freqdom}
We will divide the frequency domain $\R_{\omega}\times (\N_0)_{\ell}$ into sub-domains, where we will apply different choices of microlocal energy currents from \S \ref{sec:microlocenergy}.
Let $L_0\in \N_0$ and $\alpha,\beta,\gamma\in \R_+$ be dimensionless constants. We will later choose $L_0$ to be sufficiently large and $\alpha,\beta,\gamma\in \R_+$ to be suitably small. Define:
\begin{align*}
\mathcal{F}_{\sharp,{\rm angular}}:=&\:\left \{(\omega,\ell)\in \R\times \N_0\,|\, \ell\geq L_0+1, r_+^2\homega^2\leq \alpha \ell(\ell+1) \right\},\\
\mathcal{F}_{\sharp,{\rm trap}}:=&\:\left \{(\omega,\ell)\in \R\times \N_0\,|\, \ell\geq L_0+1, \alpha \ell(\ell+1)< r_+^2\homega^2\leq \beta^{-1}\ell(\ell+1)\right\},\\
\mathcal{F}_{\sharp,\rm{time}}:=&\:\left \{(\omega,\ell)\in \R\times \N_0\,|\, \ell\in \N_0, r_+^2\homega^2>\beta^{-1}(1+\ell(\ell+1)) \right\},\\
\mathcal{F}_{\flat,+}:=&\:\left\{(\omega,\ell)\in \R\times \N_0\,|\, \ell\leq L_0,\: r_+^2\tomega^2\leq \gamma,\, \gamma<r_+^2\homega^2 \leq \beta^{-1}(1+L_0(L_0+1)) \right\},\\
\mathcal{F}_{\flat,\sim }:=&\:\left\{(\omega,\ell)\in \R\times \N_0\,|\, \ell\leq L_0,\: r_+^2\tomega^2> \gamma,\, \gamma<r_+^2\homega^2 \leq \beta^{-1}(1+L_0(L_0+1)) \right\},\\
\mathcal{F}_{\flat,\infty}:=&\:\left\{(\omega,\ell)\in \R\times \N_0\,|\, \ell\leq L_0,\: r_+^2\homega^2\leq \gamma \right\}.
\end{align*}
We moreover denote $\mathcal{F}_{\sharp}:=\mathcal{F}_{\sharp,{\rm angular}}\cup \mathcal{F}_{\sharp,{\rm trap}}\cup \mathcal{F}_{\sharp, {\rm time}}$ and $\mathcal{F}_{\flat}:=\mathcal{F}_{\flat,+}\cup \mathcal{F}_{\flat,\sim } \cup \mathcal{F}_{\flat,\infty}$.

Let $\q_{1}>0$. We will restrict to $|\q|<\q_{1}$. Let $\q_0>0$. We will further restrict to $\q_0<|\q|<\q_{1}$ when considering $\mathcal{F}_{\flat}$. We will allow any uniform constant in the estimates in the sections below to depend on $\q_{1}$ and we will specify when allow the constants to depend also on $\q_0$.

The frequency domain $\mathcal{F}_{\sharp}$ can be considered a high-frequency domain. Superradiant frequencies in $\mathcal{F}_{\sharp}$ are entirely contained $\mathcal{F}_{\sharp,{\rm angular}}$, because $\homega \tomega>0$ in $\mathcal{F}_{\sharp,{\rm trap}}$ and $\mathcal{F}_{\sharp, {\rm time}}$.

The frequency domain $\mathcal{F}_{\flat}$ consists of bounded superradiant and non-superradiant frequencies.

In the high-frequency domain $\mathcal{F}_{\sharp}$, we will only apply microlocal energy currents to derive the desired estimates for $u$. In the bounded frequency domain $\mathcal{F}_{\flat}$, we will instead appeal to Wronskian estimates and pointwise estimates of homogeneous solutions to \eqref{eq:radialODE}. In particular, the analysis of $\mathcal{F}_{\flat,\sim }$ is closely connected to the validity of mode stability on the real axis, away from $\homega=0$ and $\tomega=0$, for \eqref{eq:CSF}.

We will treat each of the above frequency domains independently in the sections below.

\section{Integrated estimates in frequency space: large frequencies}
\label{sec:iedlargefreq}
In this section, we consider the large frequency domain $\mathcal{F}_{\sharp}$. We treat relatively large $\omega$-dominated frequencies $\mathcal{F}_{\sharp,{\rm time}}$ in \S \ref{sec:largeomega}, relatively large $\ell$-dominated frequencies $\mathcal{F}_{\sharp, {\rm angular}}$ in \S \ref{sec:largel} and frequencies where $\omega$ and $\ell$ are large and $\omega^2$ is comparable to $\ell(\ell+1)$ $\mathcal{F}_{\sharp,{\rm trap}}$ in \S \ref{sec:trapped}.

\textbf{In the remainder of \S \ref{sec:iedlargefreq}, we will assume that $r_+=1$, for notational convenience.}

\subsection{$\mathcal{F}_{\sharp,{\rm time}}$: $\omega$-dominated frequencies}
\label{sec:largeomega}
In the frequency domain $\mathcal{F}_{\sharp,{\rm time}}$, we can exploit that $\homega^2-V$ is globally very large and therefore the estimates are not sensitive to the precise form of the potential $V$.
\begin{proposition}
\label{prop:highfreqomega}
Let $E\geq 2$ and $1\leq  s<2$. Then, for suitably small $\beta>0$, there exists a constant $c>0$, such that for $y: [r_+,r_c)_r\to \R$, with $y(r)=(1-r_+r_c^{-1})^{-2+s}(\rho_+(r)^{2-s}-\rho_c(r)^{2-s})$:
\begin{multline}
\label{eq:highfreqomega}
c\int_{\R}r^{-2}\Omega^{2}\left(\rho_c^{1-s}+\rho_+^{1-s}\right) (|u'|^2+(1+\ell(\ell+1)+\homega^2) |u|^2)\,dr_*=-\int_{\R}2y\re(u' \overline{H})\,dr_*\\
+E\int_{\R}\tomega \chi_{K} \re(i u \overline{H} )\,dr_*+E\int_{\R}\homega \chi_{T}\re(i u \overline{H} )\,dr_*
\end{multline}
\end{proposition}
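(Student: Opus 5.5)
We combine the currents $j_2^y$, $E\chi_K j^K$ and $E\chi_T j^T$ from \S\ref{sec:microlocenergy}, integrate over $\R_{r_*}$ and use Lemma \ref{lm:bulkterms}. Explicitly, we integrate
\begin{equation*}
\left(j_2^y[u]+E\chi_K j^K[u]+E\chi_Tj^T[u]\right)'
\end{equation*}
over $\R$. The boundary terms are the first issue. Since $y(r)\to 1$ as $r_*\to -\infty$ and $y\to -1$ as $r_*\to\infty$ (with the normalisation $(1-r_+r_c^{-1})^{-2+s}$), and since the boundary conditions \eqref{eq:inhomodebc1}--\eqref{eq:inhomodebc2} give $|u'|^2\to\tomega^2|u|^2$ at $-\infty$ and $|u'|^2\to\homega^2|u|^2$ at $+\infty$, we obtain $j_2^y(\infty)-j_2^y(-\infty)=-2\homega^2|u|^2(\infty)-2\tomega^2|u|^2(-\infty)$, using $V(\pm\infty)$ finite and absorbed. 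By Lemma \ref{lm:superradiance} with the cut-offs, $E\chi_K j^K$ contributes $+E\tomega^2|u|^2(-\infty)$ and $E\chi_Tj^T$ contributes $-E\homega^2|u|^2(\infty)$ with the opposite orientation. The sign conventions must be arranged so that, with $E\geq 2$, the $K$/$T$ boundary terms dominate the negative $y$-boundary terms. Since $\homega\tomega>0$ in $\mathcal{F}_{\sharp,\rm time}$ (because $|\homega|\gg |\q|$), all these boundary terms are then nonnegative and can be dropped.

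The bulk terms are handled as follows. The identity for $(j_2^y)'$ gives
\begin{equation*}
y'(|u'|^2+(\homega^2-V)|u|^2)-yV'|u|^2 .
\end{equation*}
Here $y'=\Omega^2\frac{dy}{dr}$. Since $\frac{d\rho_+}{dr}=r^{-2}$ and $\frac{d\rho_c}{dr}=-r^{-2}$, we get $y'\sim(2-s)r^{-2}\Omega^2(\rho_+^{1-s}+\rho_c^{1-s})>0$, which is exactly the weight on the left-hand side of \eqref{eq:highfreqomega}. In $\mathcal{F}_{\sharp,\rm time}$ we have $\homega^2-V\geq \frac12\homega^2\geq c(1+\ell(\ell+1)+\homega^2)$ for $\beta$ small, because $|V|\lesssim \ell(\ell+1)+\q^2+|\q\homega|+1$. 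It remains to control the error $yV'|u|^2$. We bound $|V'|\lesssim \Omega^2 r^{-2}(\ell(\ell+1)+|\q\homega|+1)\rho\cdots$ and $|y|\lesssim 1$; near the horizons $V'$ carries an $\Omega^2$ factor. Therefore $|yV'|\lesssim \beta^{1/2}\, y'\,\homega^2$ pointwise, using that $\rho_\pm^{1-s}\gtrsim 1$ for $s\geq 1$, which is why $s\geq1$ is needed. This error is absorbed for $\beta$ small.

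The main obstacle is the cut-off errors from $E\chi_K' $ and $E\chi_T'$. These are $E|\homega||\chi'||u'||u|$, supported in $|r-r_\sharp|\leq\eta$. By Young's inequality they are bounded by $E\eta^{-1}(\mu|u'|^2+\mu^{-1}\homega^2|u|^2)$, which is of the same order as the $y'$-bulk and so cannot simply be absorbed. The first option is to choose $E$ fixed (say $E=2$) and exploit the size of $y'$: near $r_\sharp$ we have $y'\gtrsim 1$, while $\chi_K'+\chi_T'$ is signed. Observe that $\homega\chi_T'+\tomega\chi_K'=\homega(\chi_T+\chi_K)'+O(\q)\chi_K'$. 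If we arrange $\chi_K+\chi_T\equiv1$, the leading term cancels and only $E\q|\chi_K'||u||u'|\lesssim E\q\eta^{-1}(|u'|^2+|u|^2)$ remains. Since $\homega^2\gtrsim\beta^{-1}$, this term is bounded by $\beta\, y'(|u'|^2+\homega^2|u|^2)$ after choosing $\beta\ll\eta$, and it is absorbed. The right-hand side then consists exactly of the $H$-terms $2y\re(u'\overline H)$, $E\tomega\chi_K\re(iu\overline H)$ and $E\homega\chi_T\re(iu\overline H)$ from Lemma \ref{lm:bulkterms}, which gives \eqref{eq:highfreqomega}.
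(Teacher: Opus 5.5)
Your route is the paper's: the current $j_2^y$ with this $y$ (bulk weight $y'=(2-s)(1-r_+r_c^{-1})^{-2+s}r^{-2}\Omega^2(\rho_+^{1-s}+\rho_c^{1-s})$), absorption of $V|u|^2$ and $yV'|u|^2$ using $\homega^2\geq\beta^{-1}(1+\ell(\ell+1))$ and $\rho_\pm^{1-s}\geq 1$ for $s\geq 1$, and the localized currents $E\chi_Kj^K$, $E\chi_Tj^T$ for the fluxes at the two ends. Your handling of the cut-off errors is more explicit than the paper's, which only asserts that they are absorbed. The error $E\homega\chi_T'\re(\overline{iu'}u)$ has size $E\eta^{-1}(|u'|^2+\homega^2|u|^2)$, which is no smaller than the $y'$-bulk, so Young's inequality alone cannot absorb it. The choice $\chi_T=1-\chi_K$ is compatible with \S\ref{sec:cutoffs}. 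It reduces the error to $-E(\tomega-\homega)\chi_K'\re(\overline{iu'}u)$, whose coefficient $\tomega-\homega=-\q(r_+^{-1}-r_c^{-1})$ is bounded. This is the right mechanism, and it is the same one the paper uses for the $(\homega-\tomega)\chi_{r_1}'$ term in Proposition \ref{prop:trappedfreqest}.

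However, your boundary bookkeeping is wrong as written, and as stated the identity does not close.

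\textbf{Boundary values of $y$.} Since $\rho_+(r_+)=\rho_c(r_c)=0$ and $\rho_c(r_+)=\rho_+(r_c)=r_+^{-1}-r_c^{-1}$, you have $y=-1$ at $r=r_+$ and $y=+1$ at $r=r_c$ (with $r_+=1$). Your own computation $y'>0$ forces this. Using $V(r_c)=0$, $\homega^2-V\to\tomega^2$ at $r_+$, and \eqref{eq:inhomodebc1}--\eqref{eq:inhomodebc2}, you get
\[
j_2^y(\infty)-j_2^y(-\infty)=2B,\qquad B:=\homega^2|u|^2(\infty)+\tomega^2|u|^2(-\infty)\geq 0 .
\]
So these are bad terms.

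\textbf{Cut-off fluxes.} Lemma \ref{lm:superradiance} gives $\int_{\R}(\chi_Kj^K+\chi_Tj^T)'\,dr_*=+B$, with no minus sign. Each cut-off current sees only one end, so this holds for either sign of $\homega\tomega$; non-superradiance is not used.

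\textbf{Correct combination.} You must therefore integrate $(j_2^y-E\chi_Kj^K-E\chi_Tj^T)'$. The sum with plus signs produces $(2+E)B$ and controls nothing. With the minus signs the boundary terms total $(2-E)B\leq 0$ and can be dropped, and the $H$-terms appear with exactly the signs in \eqref{eq:highfreqomega}.

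\textbf{Residual cut-off error.} The term $E|\q|(r_+^{-1}-r_c^{-1})|\chi_K'||u'||u|$ is not bounded by $\beta y'(|u'|^2+\homega^2|u|^2)$ after an unweighted Young inequality, because the $|u'|^2$ part carries no small factor. Use instead
\[
E|\q|C\eta^{-1}|u'||u|\leq \mu|u'|^2+\mu^{-1}E^2\q^2C^2\eta^{-2}\beta\,\homega^2|u|^2 ,
\]
which is valid because $|u|^2\leq\beta\homega^2|u|^2$. Choose $\mu$ small compared with $\inf_{|r-r_\sharp|\leq\eta}y'$, and then $\beta$ small depending on $\mu$, $\eta$, $E$ and $\q_1$.
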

\begin{proof}
Consider the microlocal energy current $j_2^y[u]$. Note that
\begin{equation*}
y'(r)=(2-s)(1-r_+r_c^{-1})^{-2+s}r^{-2}\Omega^{2}\left(\rho_c(r)^{1-s}+\rho_+(r)^{1-s}\right).
\end{equation*}

Integrating $(j_2^y[u])'$ then gives:
\begin{equation*}
\int_{\R}y'(|u'|^2+(\homega^2-V)|u|^2)-yV'|u|^2\,dr_*=2(\tomega^2|u|^2(-\infty)+\homega^2|u|^2(\infty))-\int_{\R}y\re(u' \overline{H})\,dr_*.
\end{equation*}
Using that $\homega^2\geq \beta^{-1 }(1+\ell(\ell+1))$ and $1\leq s<2$, we can take $\beta$ suitably small to ensure that there exists a constant $c>0$ such that:
\begin{equation*}
y'(|u'|^2+(\homega^2-V)|u|^2)-yV'|u|^2\geq c (2-s)r^{-2}\Omega^{2}\left(\rho_c^{1-s}+\rho_+^{1-s}\right)\left[|u'|^2+(1+\ell(\ell+1)+\homega^2) |u|^2\right].
\end{equation*}
We therefore obtain
\begin{multline*}
c\int_{\R} (2-s)r^{-2}\Omega^{2}\left(\rho_c^{1-s}+\rho_+^{1-s}\right)(|u'|^2+(1+\ell(\ell+1)+\homega^2) |u|^2)\,dr_*\\
=2(\tomega^2|u|^2(-\infty)+\homega^2|u|^2(\infty))-\int_{\R}y\re(u' \overline{H})\,dr_*.
\end{multline*}
We can integrate $E (\chi_Tj^T[u])'$ or $E (\chi_Kj^K[u])'$ with $E\geq 2$ and $\beta$ sufficiently large to estimate the boundary terms on the right-hand side and absorb the bulk terms involving $\chi'_T$ and $\chi_K'$ to conclude \eqref{eq:highfreqomega}.
\end{proof}

From Proposition \ref{prop:highfreqomega}, it follows moreover that we extend the domain of validity of the $r$-weighted and $(r-1)^{-1}$-weighted estimates from  that are valid for large $\ell$ to the frequency domain $\mathcal{F}_{\sharp,{\rm time}}$ (which includes bounded $\ell$).
\begin{corollary}
\label{cor:highfreqomegawrp}
Let $E\geq 2$, $1\leq s<2$ and $p\leq s$. Then, for $\beta$ suitably small, there exists a constant $c>0$, such that for $y: [r_+,r_c)_r\to \R$, with $y(r)=(1-r_+r_c^{-1})^{-2+s}(\rho_+(r)^{2-s}-\rho_c(r)^{2-s})$:
\begin{multline}
\label{eq:highfreqomegawrp}
c\int_{\R}r^{-2}\Omega^{2}\left(\rho_c^{1-s}+\rho_+^{1-s}\right)(|u'|^2+(1+\ell(\ell+1)+\homega^2) |u|^2)\,dr_*\\
+c \int_{-\infty}^{r_*(r_1)}\left[(\rho_++\kappa_+)(r^{-1}\Omega)^{-p} |v'|^2+(1+\ell(\ell+1))(\rho_++\kappa_+)(r^{-1}\Omega)^{2-p} |v|^2\right]\,dr_*\\
+c\int_{r_*(R_1)}^{\infty} \left[(\rho_c+\kappa_c)(r^{-1}\Omega)^{-p} |w'|^2+\ell(\ell+1)(\rho_c+\kappa_c)(r^{-1}\Omega)^{2-p} |w|^2\right]\,dr_*\\
\leq -2\int_{\R}y\re(u' \overline{H})\,dr_*+E\int_{\R}\tomega \chi_{K} \re(i u \overline{H})\,dr_*+E\int_{\R}\homega \chi_{T}\re(i u\overline{H})\,dr_*\\
- \int_{\R}\re(2 \chi_{r_1}(r^{-1}\Omega)^{-p} v' e^{-i \tomega r_*-i\q\int_{0}^{r_*}\rho_+(r_*')\,dr_*'}\overline{H}) \\
- \int_{\R}\re(2 \chi_{R_1} (r^{-1}\Omega)^{-p}w'  e^{i \homega r_*-i \q \int_{0}^{r_*}\rho_c(r_*')\,dr_*'}\overline{H})\,dr_*.
\end{multline}
\end{corollary}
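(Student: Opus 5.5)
The plan is to add to \eqref{eq:highfreqomega} the frequency-space $(\Omega^{-1}r)^p$-weighted estimates \eqref{eq:rpestwithbadterm} of Proposition \ref{prop:rweightestfreq} and \eqref{eq:rmin1pestwithbadterm} of Proposition \ref{prop:rmin1pweightestfreq}, and then to absorb their error terms into the left-hand side of \eqref{eq:highfreqomega}. Both propositions hold for every $\ell$, including bounded $\ell$. Their right-hand sides contain three kinds of terms: the $H$-terms, which appear verbatim in \eqref{eq:highfreqomegawrp}; the $\ell(\ell+1)$ terms supported on the transition regions of $\chi_{R_1}$ and $\chi_{r_1}$; and the $\q^2$ terms $C p^{-1}\q^2\int \rho_c (r^{-1}\Omega)^{2-p}|w|^2$ and $Cp^{-1}\q^2\int \rho_+(r^{-1}\Omega)^{2-p}|v|^2$, which extend all the way to the horizon and to infinity. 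Since $|w|=|v|=|u|$, every error term is a weighted $L^2$ integral of $|u|$.

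The first step is to compare weights. The transition-region terms live on a compact $r$-interval away from $r_+$ and $r_c$. There the left-hand side of \eqref{eq:highfreqomega} controls $(1+\ell(\ell+1)+\homega^2)|u|^2$ with a weight bounded below. In $\mathcal{F}_{\sharp,{\rm time}}$ we have $\homega^2\geq \beta^{-1}(1+\ell(\ell+1))$, so these terms are bounded by $C\beta\int (\ldots)\homega^2|u|^2$. For the $\q^2$ terms, note that $r^{-2}\Omega^2\sim \rho_c(\rho_c+2\kappa_c)$ near $r_c$. Hence $\rho_c(r^{-1}\Omega)^{2-p}\lesssim \rho_c^{1+(2-p)/2}(\rho_c+\kappa_c)^{(2-p)/2}$. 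Meanwhile the left-hand side of \eqref{eq:highfreqomega} gives the weight $r^{-2}\Omega^2\rho_c^{1-s}\homega^2\sim \rho_c^{2-s}(\rho_c+2\kappa_c)\homega^2$. For $p\leq s$ the first weight is dominated by the second, up to factors that are bounded on $[r_+,r_c)$ uniformly in $\kappa_c$. The same argument applies near $r_+$ with $\rho_+$ and $\kappa_+$. So the $\q^2$ terms are bounded by $C\q_1^2 p^{-1}\homega^{-2}$ times the left-hand side of \eqref{eq:highfreqomega}. Using $\homega^{-2}\leq\beta$ again, they can be absorbed once $\beta$ is small, depending on $p$ and $\q_1$.

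The second step handles $\ell=0$. The term $(1+\ell(\ell+1))(\rho_++\kappa_+)(r^{-1}\Omega)^{2-p}|v|^2$ on the left-hand side of \eqref{eq:highfreqomegawrp} carries a "$1+$" that the propositions do not produce. Its weight is again dominated, for $p\leq s$, by the weight in \eqref{eq:highfreqomega}, so it follows from the same weight comparison with a large multiple of \eqref{eq:highfreqomega}. I then sum, with the $\rho_c/\rho_+$-weighted currents of the propositions multiplied by a small constant $\delta$, and choose $\beta$ small relative to $\delta$.

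The main obstacle I expect is making the weight comparison uniform in $\kappa_+$ and $\kappa_c$ in the intermediate regimes $\rho\sim\kappa$, where the $(r^{-1}\Omega)^{2-p}$ factor switches behaviour. This is exactly where $p\leq s$ is needed. I would check it separately in the regions $\rho_c\leq\kappa_c$ and $\rho_c\geq\kappa_c$ using Lemma \ref{lm:metricest}.
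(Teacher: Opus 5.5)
Your overall strategy is the paper's. The paper adds \eqref{eq:rpestwithbadterm} and \eqref{eq:rmin1pestwithbadterm} to \eqref{eq:highfreqomega} with $s=p$, then absorbs the $\q^2$-terms and the $\chi'_{R_1},\chi'_{r_1}$ transition terms using $\homega^2\geq\beta^{-1}(1+\ell(\ell+1))$. Your first step is correct, and uniform in $\kappa_+,\kappa_c$, because the $\q^2$-terms carry $\rho_+$ (resp.\ $\rho_c$) rather than $\rho_++\kappa_+$. Near $r_+$ the ratio of $\rho_+(r^{-1}\Omega)^{2-p}$ to $r^{-2}\Omega^2\rho_+^{1-s}$ is $\sim\rho_+^{s-p/2}(\rho_++\kappa_+)^{-p/2}\lesssim(\rho_++\kappa_+)^{s-p}$.

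Your second step, however, fails. The weight of the $\ell=0$ term is $(\rho_++\kappa_+)(r^{-1}\Omega)^{2-p}\sim\rho_+^{1-p/2}(\rho_++\kappa_+)^{2-p/2}$. Its ratio to $r^{-2}\Omega^2\rho_+^{1-s}\sim\rho_+^{2-s}(\rho_++\kappa_+)$ is $\rho_+^{s-1-p/2}(\rho_++\kappa_+)^{1-p/2}$. For $\rho_+\leq\kappa_+$ this behaves like $\kappa_+^{1-p/2}\rho_+^{s-1-p/2}$, which blows up as $\rho_+\to 0$ whenever $p>2(s-1)$, for instance when $s=p$ or $s=1$. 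So for $\kappa_+>0$ no multiple of \eqref{eq:highfreqomega} controls this term, not even for fixed $\kappa_+$. The regime $\rho_+\lesssim\kappa_+$ you flagged is exactly where the claimed domination breaks; it only holds when $\kappa_+=0$, where the ratio is $\rho_+^{s-p}$. The paper's one-line proof does not address this $\ell=0$ contribution either.

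What is missing is a Hardy inequality that uses the $|v'|^2$ bulk term instead of \eqref{eq:highfreqomega}. By the expansions of Lemma \ref{lm:metricest}, for $\rho_+(r_1)$ suitably small you have $\frac{d}{dr_*}\big((r^{-1}\Omega)^{2-p}\big)=(1-\tfrac{p}{2})(r^{-1}\Omega)^{2-p}\frac{d(r^{-2}\Omega^2)}{d\rho_+}\geq\frac{2-p}{2}(\rho_++\kappa_+)(r^{-1}\Omega)^{2-p}$, together with $r^{-2}\Omega^2\lesssim(\rho_++\kappa_+)^2$. Take a cut-off $\chi$ equal to $1$ near $r_+$ and supported in $\{r\leq r_1\}$. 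Integrating by parts against it (the boundary term at $r_*=-\infty$ vanishes for $p<2$) and applying Young's inequality gives, uniformly in $\kappa_+$,
\[
\int_{\R}\chi(\rho_++\kappa_+)(r^{-1}\Omega)^{2-p}|v|^2\,dr_*\leq\frac{C}{(2-p)^{2}}\int_{\R}\chi(\rho_++\kappa_+)(r^{-1}\Omega)^{-p}|v'|^2\,dr_*+\frac{C}{2-p}\int_{\R}|\chi'|(r^{-1}\Omega)^{2-p}|u|^2\,dr_*.
\]
The last term, and the part of $\{r\leq r_1\}$ where $\chi\neq 1$, lie in a compact $r$-region. Both are bounded by the left-hand side of \eqref{eq:highfreqomega}.

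Also drop the small factor $\delta$ and the large multiple of \eqref{eq:highfreqomega}. The $H$-terms in \eqref{eq:highfreqomegawrp} are signed and have fixed coefficients, so rescaling them changes the inequality you are proving. Instead, add the three estimates with coefficient one, which reproduces that right-hand side exactly, and let the smallness of $\beta$ do all the absorbing. Only the Hardy inequality, which contains no $H$, should be added with a small weight.
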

\begin{proof}
Let $s=p$ and combine \eqref{eq:highfreqomega} with \eqref{eq:rpestwithbadterm} and \eqref{eq:rmin1pestwithbadterm} and use that $\homega^2\gg \ell(\ell+1)$ to absorb the terms appearing with a factor $q^2$ and $\chi'_{R_1}$ or $\chi'_{r_1}$ on the right-hand side of \eqref{eq:rpestwithbadterm} and \eqref{eq:rmin1pestwithbadterm}.
\end{proof}

\subsection{$\mathcal{F}_{\sharp,{\rm angular}}$: large $\ell$, superradiant frequencies}
\label{sec:largel}
This frequency regime contains the superradiant frequencies, so we cannot appeal to $j^T[u]$ or $j^K[u]$ currents to directly control boundary terms. Instead, we will exploit that $V-\omega^2$ is very large near the maximum of the potential to absorb error terms generated by the \emph{localized} currents: $\chi_{T}j^T[u]$ and $\chi_{K}j^K[u]$. This can be thought of as a quantitative realization of the fact that \emph{superradiant frequencies are not trapped}.

We will moreover need to couple estimates in $\mathcal{F}_{\sharp,{\rm angular}}$ with the $(\Omega^{-1}r)^{p}$-weighted estimates from Propositions \ref{prop:rweightestfreq} and \ref{prop:rmin1pweightestfreq}, which are valid for sufficiently large $\ell$. This is a characteristic feature of the fact that we wish to include the extremal cases $\kappa_+=0$ and $\kappa_c=0$ in our estimates, and it would be absent if we restricted to $\kappa_+>0$ and $\kappa_c>0$.\footnote{In the setting of the neutral wave equation on Kerr in a near-extremal setting, the coupling also appears in \cite{gaj23}, but it does not appear in the purely sub-extremal analysis of \cite{part3}.}

\begin{figure}[h!]
	\begin{center}
\includegraphics[scale=0.6]{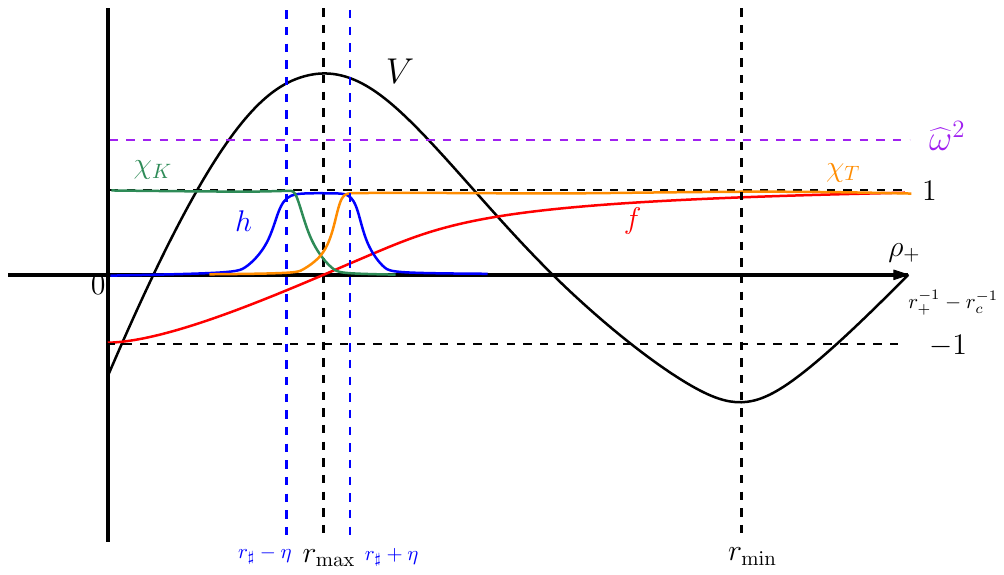}
\end{center}
\caption{A sketch of the graph of the potential $V$ as a function of $\rho_+$ in the frequency regime $\mathcal{F}_{\sharp,{\rm angular}}$ with $\q\homega<0$ and $\q\tomega<0$, $\kappa_c$ suitably small and sketches of the graphs of the functions $f$, $h$, $\chi_T$ and $\chi_K$ that appear in the currents $j_2^h[u]$, $j_3^f[u]$, $\chi_Tj^T[u]$ and $\chi_Kj^K[u]$.}
	\label{fig:fhKTcurrents}
\end{figure}

\begin{proposition}
\label{prop:srhighfreqest}
Let $E\geq 2$, $\delta,\eta>0$ and $1\leq p<2$. Then there exist functions $f,h: \R_{r_*}\to \R$ such that $f(r_*)=-1+(r^{-1}\Omega)^{\delta}(r(r_*))$ for $r_*\leq (r_3)_*$ and $f(r_*)=1-(r^{-1}\Omega)^{\delta}(r(r_*))$ for $r_*\geq (R_3)_*$, and $h$ is compactly supported in $(r_*(r_{\sharp}-2\eta), r_*(r_{\sharp}+2\eta))$ and there exists a constant $c>0$, such that for $L_0$ suitably large and $\alpha>0$ suitably small:
  \begin{multline}
  \label{eq:srhighfreqest}
 c \int_{(r_2)_*}^{(R_2)_*} |u'|^2+\left[\homega^2+\ell(\ell+1)+1\right]|u|^2\,dr_*\\
+c  \int_{-\infty}^{r_*(r_1)} \Big[(\rho_++\kappa_+)(r^{-1}\Omega)^{-p} |v'|^2+(1+\ell(\ell+1)+\homega^2)(\rho_++\kappa_+)(r^{-1}\Omega)^{2-p} |v|^2\\
+(\rho_++\kappa_+)(r^{-1}\Omega)^{\delta}|u'|^2\Big]\,dr_*\\
+c  \int_{r_*(R_1)}^{\infty} \Big[(\rho_c+\kappa_c)(r^{-1}\Omega)^{-p} |w'|^2+(1+\ell(\ell+1)+\homega^2)(\rho_c+\kappa_c)(r^{-1}\Omega)^{2-p} |w|^2\\
+(\rho_c+\kappa_c)(r^{-1}\Omega)^{\delta}|u'|^2\Big]\,dr_* \leq -\int_{\R}\Re((f'+h)u \overline{H}+2fu' \overline{H})\,dr_*\\
- \int_{\R}\re(2 \chi_{r_1} (r^{-1}\Omega)^{-p} v' e^{-i \tomega r_*+i\q\int_{0}^{r_*}\rho_+(r_*')\,dr_*'}\overline{H}) - \int_{\R}\re(2 \chi_{R_1} (r^{-1}\Omega)^{-p} w'  e^{i \homega r_*-i \q \int_{0}^{r_*}\rho_c(r_*')\,dr_*'}\overline{H})\,dr_*,\\
+E\int_{\R}\tomega \chi_{K} \re(i u \overline{H})\,dr_*+E\int_{\R}\homega \chi_{T}\re(i u \overline{H} )\,dr_*.
 \end{multline}
\end{proposition}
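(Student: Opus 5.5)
We integrate the total derivative of the combined current
\begin{equation*}
J:=j_3^f[u]+j_1^h[u]+E\chi_K j^K[u]+E\chi_T j^T[u]+j^{g_+}_+[v]+j^{g_\infty}_\infty[w],
\end{equation*}
with $g_+=\chi_{r_1}(r^{-1}\Omega)^{-p}$ and $g_\infty=\chi_{R_1}(r^{-1}\Omega)^{-p}$, over $\R_{r_*}$, and use Lemma \ref{lm:bulkterms} to write the bulk. The boundary terms at $r_*=\pm\infty$ are handled by \eqref{eq:inhomodebc1}--\eqref{eq:inhomodebc2} and Lemma \ref{lm:superradiance}. The $j^{g}$-boundary terms vanish. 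Since $\chi_K=1$ near $\mathcal{H}^+$ and vanishes near infinity, and $\chi_T$ does the opposite, the $E$-currents contribute exactly the good-signed terms $E\tomega^2|u|^2(-\infty)$ and $E\homega^2|u|^2(\infty)$, whatever the superradiant sign of $\homega\tomega$. The $j_3^f$ boundary terms are $f(\pm\infty)(|u'|^2+\homega^2|u|^2)$ with $f(\pm\infty)=\pm1$. Using $v'\to0$ and $w'\to 0$, these reduce to $\tomega^2|u|^2(-\infty)+\homega^2|u|^2(\infty)$ with a negative sign, and they are absorbed for $E\geq2$.

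Next we choose $f$. We take $f$ increasing, with $f(r_*)=\mp(1-(r^{-1}\Omega)^\delta)$ near the two ends and a single zero at $r_{\rm max}$. By Lemma \ref{lm:criticalpointsVhighl}, $|r_{\rm max}-r_\sharp|\leq\eta/2$, and we may let $f$ depend on $r_{\rm max}$ in a uniform manner. This gives the following.
\begin{itemize}
\item The term $2f'|u'|^2$ is positive. Since $(r^{-1}\Omega)^\delta$ has $r_*$-derivative $\sim(\rho_++\kappa_+)(r^{-1}\Omega)^\delta$ near $\mathcal{H}^+$ (and similarly near infinity), it yields the degenerate $|u'|^2$ terms.
\item The term $-fV'|u|^2$ is non-negative wherever $V$ is monotone with the sign opposite to $f$. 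Near $r_{\rm max}$ we get $-fV'\gtrsim \ell(\ell+1)(r-r_{\rm max})^2$.
\item The term $-\tfrac12 f'''|u|^2$ is lower order compared with $\ell(\ell+1)$ for $L_0$ large, except near the ends. There it is $O((\rho_++\kappa_+)(r^{-1}\Omega)^{\delta})|u|^2$, which we absorb using the $v$-weighted estimates.
\end{itemize}
To obtain nondegenerate control near $r_{\rm max}$, we add $j_1^h$ with $h\geq0$ supported in $(r_\sharp-2\eta,r_\sharp+2\eta)$. Its bulk contains $h(V-\homega^2)|u|^2$, and in $\mathcal{F}_{\sharp,\rm angular}$ we have $V-\homega^2\gtrsim\ell(\ell+1)$ there, provided $\alpha$ is small. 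The negative term $-h|u'|^2$ is absorbed by $2f'|u'|^2$ if $f'$ is chosen large compared to $h$ on $\supp h$. The $-\tfrac12h''|u|^2$ term is absorbed because $\ell$ is large.

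The key issue is the region near the minima $r_{\rm min,+}$ or $r_{\rm min,\infty}$ from Lemma \ref{lm:criticalpointsVhighl}. In the case $\q\homega<0$, to the right of $r_{\rm min,\infty}$ we have $V'>0$ and $f>0$, so $-fV'|u|^2$ has the wrong sign. Since $\rho_c(r_{\rm min,\infty})<\eta$, this bad region lies within $\{\rho_c\lesssim\eta\}$. There we compare $V'$ with the weights: $|V'|\lesssim \ell(\ell+1)\Omega^2 r^{-2}\rho_c$ (up to $\q$-terms), and the bad term is bounded by $\ell(\ell+1)(\rho_c+\kappa_c)(r^{-1}\Omega)^{2}|u|^2\leq \ell(\ell+1)(\rho_c+\kappa_c)(r^{-1}\Omega)^{2-p}|w|^2\cdot(r^{-1}\Omega)^{p}$, using $|u|=|w|$. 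This is absorbed by the left-hand side of \eqref{eq:rpestwobadterm}, which is valid for $\ell\geq L$. The symmetric argument with $v$ and \eqref{eq:rmin1estwobadterm} handles the case $\q\homega\geq0$, $\q\tomega\geq0$ near $r_+$. The cutoff error terms of \eqref{eq:rpestwobadterm} and \eqref{eq:rmin1estwobadterm}, supported in bounded $r$-regions, are absorbed by the $f$/$h$ bulk there after multiplying the weighted estimates by a small constant.

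Finally, we handle the $E$-currents. The bulk terms $-E\homega\chi_T'\re(\overline{iu'}u)$ and $-E\tomega\chi_K'\re(\overline{iu'}u)$ are supported in $|r-r_\sharp|\leq\eta$. Since $|\homega|,|\tomega|\lesssim\sqrt{\alpha}\,\ell$, Young's inequality bounds them by $\epsilon|u'|^2+C\epsilon^{-1}E^2\eta^{-2}\alpha\ell(\ell+1)|u|^2$, and these are absorbed by the $h$ and $f$ terms when $\alpha$ is small; this is the quantitative sense in which superradiant frequencies are not trapped. The factor $\homega^2|u|^2$ in the bounded region follows from $\homega^2\leq\alpha\ell(\ell+1)$, and the $v,w$ terms follow from the weighted estimates. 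The main obstacle is the sign bookkeeping near the minima, together with choosing $f$, $h$ and the relative constants so that all absorptions hold uniformly in $\kappa_+$ and $\kappa_c$. I expect the comparison of $V'$ against the $(r^{-1}\Omega)^{2-p}$ weights near $\rho_c,\rho_+\to0$ to need the most care.
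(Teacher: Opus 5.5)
Your overall architecture matches the paper's. It uses the $j_3^f$ current with the prescribed end behaviour of $f$, and a compactly supported $h$-current near $r_\sharp$ exploiting $V-\homega^2\gtrsim\ell(\ell+1)$ in $\mathcal{F}_{\sharp,{\rm angular}}$. It uses the localized currents $\chi_Tj^T$, $\chi_Kj^K$ with $E\geq 2$, whose $\chi'$-errors are absorbed for $\alpha$ small. And it couples to \eqref{eq:rpestwobadterm} and \eqref{eq:rmin1estwobadterm} to handle the minima from Lemma \ref{lm:criticalpointsVhighl}. (You control $f'''$ near $r_{\rm max}$ with the $h$-bulk, whereas the paper uses an $\arctan$ profile with $-f'''>0$ there; that is a harmless variation in this regime.)

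\textbf{The gap.} The step you single out as the key one is carried out incorrectly. For $\q\homega<0$, $f\approx 1$ and $\rho_c$ small, one has
\begin{equation*}
-fV'\approx r^{-2}\Omega^2\left[2\ell(\ell+1)(\rho_c+\kappa_c)+2\q\homega+O(\q^2\rho_c)\right].
\end{equation*}
To the right of $r_{\rm min,\infty}$, $V'>0$ precisely because $2|\q\homega|$ dominates $2\ell(\ell+1)(\rho_c+\kappa_c)$ there.
\begin{itemize}
\item The negative contribution is therefore of size $|\q\homega|\,r^{-2}\Omega^2|u|^2$. This is \emph{not} bounded by $\ell(\ell+1)(\rho_c+\kappa_c)r^{-2}\Omega^2|u|^2$; for $\kappa_c=0$ it is much larger as $\rho_c\to 0$.
\item Your bound ``$|V'|\lesssim\ell(\ell+1)\Omega^2r^{-2}\rho_c$ up to $\q$-terms'' discards exactly the term responsible for the bad sign.
\item The correct absorption, as in the paper, writes $|\q\homega|r^{-2}\Omega^2|u|^2=|\q\homega|\,(r^{-1}\Omega)^{p}\cdot(r^{-1}\Omega)^{2-p}|w|^2$. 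It then uses $(r^{-1}\Omega)^{p}\lesssim(\rho_c+2\kappa_c)^{p}\lesssim\rho_c+\kappa_c$, which holds uniformly down to $\rho_c=0$ (in particular when $\kappa_c=0$) only for $p\geq 1$. It also uses $|\q\homega|\leq|\q|\sqrt{\alpha\ell(\ell+1)}\ll\ell(\ell+1)$.
\item With these, the term is absorbed by a small multiple of the left-hand side of \eqref{eq:rpestwobadterm}. The case $\q\tomega\geq 0$ near $r_+$ is symmetric, with $v$ and \eqref{eq:rmin1estwobadterm}.
\end{itemize}
Smallness of $(r^{-1}\Omega)^p$ on $\{\rho_c\lesssim\eta\}$ alone does not suffice. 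This is exactly where the hypothesis $p\geq 1$ is needed, and your argument never uses it.

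\textbf{Sign slips.} Two further errors should be corrected.
\begin{itemize}
\item Integrating $(j_3^f)'$ gives
\begin{equation*}
\int\big(2f'|u'|^2-fV'|u|^2-\tfrac12 f'''|u|^2\big)\,dr_*=2\big(\homega^2|u|^2(\infty)+\tomega^2|u|^2(-\infty)\big)-\int\re(f'u\overline{H}+2fu'\overline{H})\,dr_*.
\end{equation*}
So the flux is a bad term, and it carries a factor $2$. The identity $\int(E\chi_Tj^T+E\chi_Kj^K)'\,dr_*=E\big(\homega^2|u|^2(\infty)+\tomega^2|u|^2(-\infty)\big)$ is what bounds it. As a single current this is $j_3^f-E\chi_Tj^T-E\chi_Kj^K$, not $+$, and the factor $2$ is why $E\geq 2$ is required.
\item One has $(j_1^h)'=h|u'|^2+h(V-\homega^2)|u|^2-\tfrac12h''|u|^2+h\re(u\overline{H})$. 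So there is no negative $|u'|^2$ term to absorb.
\end{itemize}
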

\begin{proof}
\textbf{$j_3^f[u]$-current estimates:}\\
By taking $\eta>0$ arbitrarily small, $L_0$ appropriately large and applying Lemma \ref{lm:criticalpointsVhighl}, we obtain the existence of a maximum of the potential $V$ at some $r_+<r_{\rm max}<r_c$, with $|r_{\rm max}-r_{\sharp}|\leq \frac{1}{2}\eta$.

Let $r_+<r_3<r_2<r_1<R_1<R_2<R_3<r_c$. We define the function $f: \R_{r_*}\to \R$ as follows: 
let $a>0$, $s>0$ and let $f(r_*)=-1+(r^{-2}\Omega^2)^{\frac{\delta}{2}}(r(r_*))$ for $r_*\leq r_*(r_3)$, $f(r_*)=1-(r^{-2}\Omega^2)^{\frac{\delta}{2}}(r(r_*))$ for $r_*\geq r_*(R_3)$ and $f(r_*)=\frac{2 }{\pi}\arctan (a(r_*-r_*(r_{\rm max})))$, for $r_*(r_2)\leq r\leq r_*(R_2)$. Then, for $r_2\leq r\leq R_2$:
\begin{align*}
f'(r_*)=&\:\frac{2a}{\pi (1+a^2(r_*-r_*(r_{\rm max}))^2)}\\
f''(r_*)=&\: -\frac{4a^3((r_*-r_*(r_{\rm max}))}{\pi (1+a^2(r_*-r_*(r_{\rm max}))^2)^2},\\
f'''(r_*)=&\: -\frac{4a^3(1-3a^2(r_*-r_*(r_{\rm max}))^2)}{\pi (1+a^2(r_*-r_*(r_{\rm max}))^2)^3},
\end{align*}
so $-f'''>0$ iff $|r_*-r_*(r_{\rm max})|< \frac{1}{\sqrt{3}}$.

For $r\leq r_3$ and $r\geq R_3$ we have that: 
\begin{align*}
f'=&\:\frac{\delta}{2}\Omega^2(r^{-2}\Omega^2)^{\frac{\delta}{2}-1}\frac{d}{dr}(r^{-2}\Omega^2)\geq \frac{\delta}{2}(r^{-2}\Omega^2)^{\frac{\delta}{2}}(\rho_++\kappa_+)\quad (r\leq r_3),\\
f'=&\:-\frac{\delta}{2}\Omega^2(r^{-2}\Omega^2)^{\frac{\delta}{2}-1}\frac{d}{dr}(r^{-2}\Omega^2)\geq \frac{\delta}{2}(r^{-2}\Omega^2)^{\frac{\delta}{2}}(\rho_c+\kappa_c)\quad (r\geq R_3).
\end{align*}

We can choose $f$ in $(r_3,r_2)\cup (R_2,R_3)$ such that $f$ is a $C^3$ function, $f'>0$ globally, and there exists a constant $C>0$ such that globally:
\begin{align*}
|f'''(r_*)|\leq &\: C |r_*|^{-3}.
\end{align*}
 See Figure \ref{fig:fhKTcurrents} for an illustration of the graph of $f$.
 
We then integrate $(j_3^f[u])'$ to obtain:
\begin{multline*}
\int_{\R}2f'|u'|^2-fV'|u|^2-\frac{1}{2}f'''|u|^2\,dr_*=\homega^2|u|^2(\infty)+|u'|^2(\infty)+\tomega^2|u|^2(-\infty)+|u'|^2(-\infty).\\
 -\int_{\R}\re(f'u \overline{H}+2fu' \overline{H})\,dr_*
\end{multline*}
Note that the boundary conditions on $u$ at $r_*=\pm \infty$ imply that:
\begin{equation*}
\homega^2|u|^2(\infty)+|u'|^2(\infty)+\tomega^2|u|^2(-\infty)+|u'|^2(-\infty)=2(\homega^2|u|^2(\infty)+\tomega^2|u|^2(-\infty)).
\end{equation*}

From the proof of Lemma  \ref{lm:criticalpointsVhighl} and the behaviour of $f$ near $r=r_{\rm max}$, it follows moreover that for $\ell\geq L_0$ with $L_0$ suitably large and $|q\homega|\ll \ell(\ell+1)$, which, for $L_0$ suitably large, is guaranteed by $\homega^2\leq \beta^{-1}\ell(\ell+1)$, there exist constants $b,B>0$, such that
\begin{equation*}
-fV'\geq b (\homega^2+\ell(\ell+1))(\rho_++\kappa_+)(\rho_c+\kappa_c) r^{-2}\Omega^2(1-r_{\max}r^{-1})^2+Bq\tomega r^{-2}\Omega^2\mathbf{1}_{r\leq r_2}-Bq\homega r^{-2}\Omega^2\mathbf{1}_{r\geq R_2}.
\end{equation*}
If $\q\homega>0$ or $\q\tomega<0$, the right-hand side above will be negative for sufficiently small $\rho_c$ and $\kappa_c$ or $\rho_+$ and $\kappa_+$, respectively. This reflects the possible existence of a minimum of $V$ for $\kappa_c$ suitably small (depending on $\ell$) at a sufficiently large value of $r$ when $q\homega>0$ and at a small value of $r-1$ when $\q\tomega<0$ and $\kappa_+$ is suitably small, corresponding to points 1. and 3. of Lemma  \ref{lm:criticalpointsVhighl}, respectively. See also Figure \ref{fig:fhKTcurrents} in the case $\q\homega<0$.

Since $-f'''$ has a good sign near $r=r_{\max}$ and decays sufficiently fast for $\rho_c$ or $\rho_+$ suitably small, we can in fact estimate:
\begin{equation*}
-fV'-\frac{1}{2}f'''\geq b(\rho_++\kappa_+)(\rho_c+\kappa_c)r^{-2}\Omega^2[( \ell(\ell+1)+\homega^2)(1-r_{\max}r^{-1})^2+1]+Bq\tomega r^{-2}\Omega^2\mathbf{1}_{r\leq r_2}-Bq\homega r^{-2}\Omega^2\mathbf{1}_{r\geq R_2}.
\end{equation*}
Hence, there exist a constants $c>0$, depending on $r_2,R_2$, such that:
\begin{multline*}
c\int_{\R}(r^{-2}\Omega^2)^{\frac{\delta}{2}}(\rho_++\kappa_+)(\rho_c+\kappa_c)|u'|^2\,dr_*+c \int_{r_*(r_2)}^{r_*(R_2)} \left[(\omega^2+\ell(\ell+1))(1-r_{\max}r^{-1})^2+1\right]|u|^2\,dr_*\\
=2(\homega^2|u|^2(\infty)+\tomega^2|u|^2(-\infty))- B\int_{-\infty}^{r_*(r_2)}|\q\tomega|\Omega^2r^{-2}|u|^2\,dr_*-B\int_{r_*(R_2)}^{\infty}|\q\omega|\Omega^2r^{-2}|u|^2\,dr_*\\
-\int_{\R}\re(f'u \overline{H}+2fu' \overline{H})\,dr_*.
\end{multline*}
It remains to control the boundary terms and the integrals over $(-\infty,r_*(r_2))$ and $(r_*(R_2),\infty)$ on the right-hand side. 
We control the latter by coupling the above $j_3^f[u]$-current estimate with the $(r^{-1}\Omega)^{-p}$-weighted estimates \eqref{eq:rpestwobadterm} and \eqref{eq:rmin1estwobadterm} from Propositions \ref{prop:rweightestfreq} and \ref{prop:rmin1pweightestfreq}, respectively, with $p\geq 1$. Here, we use again that $|q\homega|\ll \ell(\ell+1)$ and $|q\tomega|\ll \ell(\ell+1)$. We obtain:
\begin{multline}
\label{eq:fcurrentandrweightedest}
c\int_{\R}(r^{-2}\Omega^2)^{\frac{\delta}{2}}(\rho_++\kappa_+)(\rho_c+\kappa_c)|u'|^2\,dr_*+c \int_{r_*(r_2)}^{r_*(R_2)}|u'|^2+ \left[(\homega^2+\ell(\ell+1))(1-r_{\max}r^{-1})^2+1\right]|u|^2\,dr_*\\
+c \int_{-\infty}^{r_*(r_1)} \left[(\rho_++\kappa_+)(r^{-1}\Omega)^{-p}|v'|^2+\ell(\ell+1)(\rho_++\kappa_+)(r^{-1}\Omega)^{2-p}|v|^2\right]\,dr_*\\
+c \int_{r_*(R_1)}^{\infty} \left[(\rho_c+\kappa_c)(r^{-1}\Omega)^{-p}|w'|^2+\ell(\ell+1)(\rho_c+\kappa_c)(r^{-1}\Omega)^{2-p}|w|^2\right]\,dr_*=2(\homega^2|u|^2(\infty)+\tomega^2|u|^2(-\infty))\\
-\int_{\R}\re(2 \chi_{r_1} (r^{-1}\Omega)^{-p}v'e^{-i \tomega r_*-i\q\int_{0}^{r_*}\rho_+(r_*')\,dr_*'}\overline{H})\,dr_*\\
-\int_{\R}\re(2 \chi_{R_1}(r^{-1}\Omega)^{-p}w'e^{i \homega r_*-i \q \int_{0}^{r_*}\rho_c(r_*')\,dr_*'}\overline{H})\,dr_*-\int_{\R}\re(f'u \overline{H}+2fu' \overline{H})\,dr_*.
\end{multline}
Now we turn to the boundary terms $2(\homega^2|u|^2(\infty)+\tomega^2|u|^2(-\infty))$. Due to the fact that $\mathcal{F}_{\sharp,{\rm angular}}$ includes superradiant frequencies, we cannot simply control the boundary terms with the microlocal energy currents $j^T[u]$ or $j^K[u]$; see Lemma \ref{lm:superradiance}. Instead, we consider \emph{localized} currents $E \chi_{T}j^T[u]$ or $E \chi_{K}j^K[u]$, with $E\geq 2$ a constant, which will introduce additional error terms in the support of $\chi'_{K}$ and $\chi'_{T}$ that will subsequently need to be controlled via a $j_1^h[u]$-current supported near the maximum of $V$; see Figure \ref{fig:fhKTcurrents} for an illustration.

\textbf{$j_2^h[u]$-current estimates:}\\
We construct the function $h: [r_+,r_c)\to [0,1]$ as follows: let $h$ be a smooth function such that $h(r)=1$ for $r\in (r_{\sharp}-\eta,r_{\sharp}+\eta)$ and $h(r)=0$ for $r\leq r_{\sharp}-2\eta$ or $r\geq r_{\sharp}+2\eta$, with $\eta>0$ suitably small. We can moreover choose $h$ such that there exists a numerical constant $C>0$, such that:
\begin{equation*}
|h''(r)|\leq C \eta^{-2}.
\end{equation*}
By integrating $(j_2^h[u])'$ and applying Lemma \ref{lm:bulkterms}, we then obtain:
\begin{equation}
\label{eq:hcurrent}
\int_{\R}h(|u'|^2+(V-\omega^2)|u|^2)-\frac{1}{2}h''|u|^2\,dr_*=-\int_{\R}h \re(u\overline{H})\,dr_*.
\end{equation}

Using that $\ell\geq L_0$ with $L_0\gg 1$ and $\omega^2\ll \ell(\ell+1)$ (i.e.\ taking $\alpha$ suitably small), we conclude that for $r\in [r_{\sharp}-2\eta,r_{\sharp}+2\eta]$, we can estimate:
\begin{equation}
\label{eq:keypropforhcurrent}
V(r)-\omega^2 \geq \frac{1}{8}(r^{-2}\Omega^2)(r_{\sharp})\ell(\ell+1)
\end{equation}
and hence, for $r\in (r_{\sharp}-\eta,r_{\sharp}+\eta)$:
\begin{equation*}
h(V(r)-\homega^2) \geq \frac{1}{8}(r^{-2}\Omega^2)(r_{\sharp})\ell(\ell+1).
\end{equation*}
By combining \eqref{eq:hcurrent} with \eqref{eq:fcurrentandrweightedest}, we can absorb the term $-\frac{1}{2}h'''$ into the left-hand side of \eqref{eq:fcurrentandrweightedest}, using that, by construction, $r_{\rm max}\notin \textnormal{supp}\, h''$ and taking $L_0$ sufficiently large, depending on $c$ and $\eta$.

We then obtain:
\begin{multline}
\label{eq:fhcurrentandrweightedest}
c \int_{(r_2)_*}^{(R_2)_*}|u'|^2+(\homega^2+\ell(\ell+1)+1)|u|^2\,dr_*+c\int_{\R}(r^{-2}\Omega^2)^{\frac{\delta}{2}}(\rho_++\kappa_+)(\rho_c+\kappa_c)|u'|^2\,dr_*\\
+c \int_{-\infty}^{r_*(r_1)} \left[(\rho_++\kappa_+)(r^{-1}\Omega)^{-p}|v'|^2+\ell(\ell+1)(\rho_++\kappa_+)(r^{-1}\Omega)^{2-p}|v|^2\right]\,dr_*\\
+c \int_{r_*(R_1)}^{\infty} \left[(\rho_c+\kappa_c)(r^{-1}\Omega)^{-p}|w'|^2+\ell(\ell+1)(\rho_c+\kappa_c)(r^{-1}\Omega)^{2-p}|w|^2\right]\,dr_*=2(\homega^2|u|^2(\infty)+\tomega^2|u|^2(-\infty))\\
-\int_{\R}\re(2 \chi_{r_1} (r^{-1}\Omega)^{-p}v'e^{-i \tomega r_*-i\q\int_{0}^{r_*}\rho_+(r_*')\,dr_*'}\overline{H})\,dr_*-\int_{\R}\re(2 \chi_{R_1}(r^{-1}\Omega)^{-p}w'e^{i \homega r_*-i \q \int_{0}^{r_*}\rho_c(r_*')\,dr_*'}\overline{H})\,dr_*\\
-\int_{\R}\re((f'+h)u \overline{H}+2fu' \overline{H})\,dr_*.
\end{multline}

\textbf{$j^T[u]$- and $j^K[u]$-current estimates:}\\
Finally, we integrate $E (\chi_{T}j^T[u])'$ or $E (\chi_{K}j^K[u])'$ and use the defining properties of the cut-off functions $\chi_K$ and $\chi_T$ from \S \ref{sec:cutoffs} to obtain:
\begin{multline}
\label{eq:localTKcurrentest}
E(\omega^2|u|^2(\infty)+\tomega^2|u|^2(-\infty))=-E\int_{r_*(r_{\sharp}-\eta)}^{r_*(r_{\sharp}+\eta)}(\tomega \chi_K'+\homega \chi_T')  \re(\overline{i u'}u)\,dr_*+E\int_{\R}(\tomega \chi_K+\homega \chi_T) \re(iu \overline{H})\,dr_*
\end{multline}
We can estimate
\begin{align*}
\left|\tomega \chi_K'  \re(\overline{i u'}u)\right|\leq &\: \frac{c}{2}|u'|^2+2C^2\eta^{-2}c^{-1}\tomega^2|u|^2,\\
\left|\homega \chi_T'  \re(\overline{i u'}u)\right|\leq &\: \frac{c}{2}|u'|^2+2C^2\eta^{-2}c^{-1}\homega^2|u|^2
\end{align*}
and combine \eqref{eq:localTKcurrentest} with \eqref{eq:fhcurrentandrweightedest}, taking $\alpha$ suitably small and $L_0$ suitably large in order to conclude \eqref{eq:srhighfreqest}.
\end{proof}

\subsection{$\mathcal{F}_{\sharp,{\rm trap}}$: trapped frequencies}
\label{sec:trapped}
In the frequency domain $\mathcal{F}_{\sharp,{\rm trap}}$, we proceed analogously to $\mathcal{F}_{\sharp,{\rm angular}}$, but we use that the corresponding frequencies are \underline{non}-superradiant to integrate $E (j^T[u])'$ or $E (j^K[u])'$ without the need to use cut-off functions $\chi_K$ and $\chi_T$ and absorb error terms caused by $\chi_K'\neq 0$ and $\chi'_T\neq 0$.

When combining the estimates in the different frequency regimes, however, it will be convenient to consider $E \chi_{r_1}(j^K[u])'+E (1-\chi_{r_1})(j^T[u])'$, with $\chi_{r_1}$ a cut-off function with derivative supported close to $r=r_+$.
\begin{proposition}
\label{prop:trappedfreqest}
Let $E\geq 2$, $s>0$ and $1\leq p<2$. Then there exists a function $f: \R_{r_*}\to \R$ such that $f(r_*)=-1+(r^{-1}\Omega)^{s}(r(r_*))$ for $r_*\leq (r_3)_*$ and $f(r_*)=1-(r^{-1}\Omega)^{s}(r(r_*))$ for $r_*\geq (R_3)_*$, and there exists a constant $c>0$, such that for $L_0$ suitably large and $\alpha>0$ suitably small:
  \begin{multline}
  \label{eq:trappedfreqest}
 c \int_{(r_2)_*}^{(R_2)_*} |u'|^2+\left[(1-r_{\rm max}r^{-1})^2(\homega^2+\ell(\ell+1))+1\right]|u|^2\,dr_*\\
 +c  \int_{-\infty}^{r_*(r_1)} \Big[(\rho_++\kappa_+)(r^{-1}\Omega)^{-p} |v'|^2+(1+\ell(\ell+1)+\homega^2)(\rho_++\kappa_+)(r^{-1}\Omega)^{2-p} |v|^2\\
+(\rho_++\kappa_+)(r^{-1}\Omega)^{\delta}|u'|^2\Big]\,dr_*\\
+c  \int_{r_*(R_1)}^{\infty} \Big[(\rho_c+\kappa_c)(r^{-1}\Omega)^{-p} |w'|^2+(1+\ell(\ell+1)+\homega^2)(\rho_c+\kappa_c)(r^{-1}\Omega)^{2-p} |w|^2\\
+(\rho_c+\kappa_c)(r^{-1}\Omega)^{\delta}|u'|^2\Big]\,dr_* \leq -\int_{\R}\Re(f'u \overline{H}+2fu' \overline{H})\,dr_*\\
- \int_{\R}\re(2 \chi_{r_1} (r^{-1}\Omega)^{-p} v' e^{-i \tomega r_*+i\q\int_{0}^{r_*}\rho_+(r_*')\,dr_*'}\overline{H}) - \int_{\R}\re(2 \chi_{R_1} (r^{-1}\Omega)^{-p} w'  e^{i \homega r_*-i \q \int_{0}^{r_*}\rho_c(r_*')\,dr_*'}\overline{H})\,dr_*,\\
+E\int_{\R}\tomega \re(i \chi_{r_1}u\overline{H} )\,dr_*+E\int_{\R}\homega \re(i (1-\chi_{r_1})u \overline{H} )\,dr_*.
 \end{multline}
\end{proposition}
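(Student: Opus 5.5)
The plan is to follow the proof of Proposition \ref{prop:srhighfreqest}, dropping the $j_1^h$-current near the photon sphere and replacing the localized $\chi_Tj^T[u]$, $\chi_Kj^K[u]$ currents by the non-localized combination $E\chi_{r_1}j^K[u]+E(1-\chi_{r_1})j^T[u]$. First, in $\mathcal{F}_{\sharp,{\rm trap}}$ we have $\alpha\ell(\ell+1)<\homega^2\leq\beta^{-1}\ell(\ell+1)$, so Lemma \ref{lm:criticalpointsVhighl} still applies for $L_0$ large: $V$ has a unique global maximum at $r_{\rm max}$ with $|r_{\rm max}-r_\sharp|\leq\eta/2$. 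We take the same $f$ as before, namely $\frac{2}{\pi}\arctan(a(r_*-r_*(r_{\rm max})))$ in $[r_2,R_2]$ and $\mp(1-(r^{-1}\Omega)^{s})$ near the horizons, with $f'>0$ and $|f'''|\lesssim |r_*|^{-3}$. Integrating $(j_3^f[u])'$ gives the bulk lower bound
\[
-fV'-\tfrac12 f'''\geq b(\rho_++\kappa_+)(\rho_c+\kappa_c)r^{-2}\Omega^2\big[(\ell(\ell+1)+\homega^2)(1-r_{\rm max}r^{-1})^2+1\big]-B|\q\tomega|r^{-2}\Omega^2\mathbf{1}_{r\leq r_2}-B|\q\homega|r^{-2}\Omega^2\mathbf{1}_{r\geq R_2}.
\]
The degenerate factor $(1-r_{\rm max}r^{-1})^2$ now stays on the left, because trapping prevents any positive lower bound on $V-\homega^2$ at $r_{\rm max}$. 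This is exactly why no $h$-current appears. The lower-order $+1$ term is obtained from $-f'''>0$ near $r_{\rm max}$, as before.

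Second, to handle the negative terms near the horizons we couple with \eqref{eq:rpestwobadterm} and \eqref{eq:rmin1estwobadterm}. These are valid for $\ell\geq L_0$ large. The terms $B|\q\tomega|$ and $B|\q\homega|$ are $O(\sqrt{\ell(\ell+1)})\ll\ell(\ell+1)$ and are absorbed into $\ell(\ell+1)(\rho+\kappa)(r^{-1}\Omega)^{2-p}|v|^2$, $|w|^2$. For this we use $|u|=|v|=|w|$ and $(r^{-1}\Omega)^{2-p}\gtrsim r^{-2}\Omega^2$ near the horizons when $p\geq1$. The $\chi'$-supported error terms in those estimates are controlled by the $[r_2,R_2]$ bulk term, since their supports avoid $r_{\rm max}$ and the weight $(1-r_{\rm max}r^{-1})^2$ is bounded below there. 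To upgrade the weight $\ell(\ell+1)$ to $1+\ell(\ell+1)+\homega^2$ in the $v,w$ integrals we use $\homega^2\leq\beta^{-1}\ell(\ell+1)$.

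Third, the boundary terms $2(\homega^2|u|^2(\infty)+\tomega^2|u|^2(-\infty))$ must be controlled. In $\mathcal{F}_{\sharp,{\rm trap}}$ the frequencies are non-superradiant: $\homega\tomega>0$, since $|\homega|\gg|\q|$. We integrate $E(\chi_{r_1}j^K[u]+(1-\chi_{r_1})j^T[u])'$. By Lemma \ref{lm:superradiance}, the boundary contributions are $E\homega^2|u|^2(\infty)+E\tomega^2|u|^2(-\infty)$, both nonnegative, which dominate the previous boundary terms for $E\geq2$. The bulk error is $E\chi_{r_1}'(\tomega-\homega)\re(\overline{iu'}u)=E\chi'_{r_1}\q(r_c^{-1}-1)\re(\overline{iu'}u)$ (up to sign). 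Here the frequency factor cancels to a bounded constant $\q$, and the term is supported in $r_1<r<r_1/(2-r_1)$. There it is bounded by $\epsilon|u'|^2+C_\epsilon|u|^2$ and absorbed by the $[r_2,R_2]$ term or the $v$-estimate, provided $L_0$ is large. The inhomogeneous terms produce the $E\tomega\chi_{r_1}$ and $E\homega(1-\chi_{r_1})$ terms on the right-hand side of \eqref{eq:trappedfreqest}.

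The main obstacle is the first step near the horizons. We must check that $-fV'$ has the claimed lower bound uniformly in $\kappa_\pm$ when $\homega^2\sim\ell(\ell+1)$. The $\homega$-linear parts of $V'$ are only $O(\ell)$ while the $\ell(\ell+1)$ part has a definite sign, so this should hold. The real issue is ensuring that the $\delta$-weighted $|u'|^2$ terms and the possible minimum of $V$ near $\rho_+=0$ or $\rho_c=0$ (items 1 and 3 of Lemma \ref{lm:criticalpointsVhighl}) are genuinely absorbed by the $(\Omega^{-1}r)^p$ estimates rather than the red-shift.
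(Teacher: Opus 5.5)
Your proposal is correct and follows the paper's proof. You reuse the coupled $j_3^f$, $j^{g_+}_+[v]$, $j^{g_\infty}_\infty[w]$ estimate from Proposition \ref{prop:srhighfreqest}, keeping the degenerate factor $(1-r_{\rm max}r^{-1})^2$ because no $h$-current is available when $\homega^2\sim\ell(\ell+1)$. You then integrate $E\chi_{r_1}j^K[u]+E(1-\chi_{r_1})j^T[u]$ and absorb the error $E(\homega-\tomega)\chi_{r_1}'\re(\overline{iu'}u)$, whose coefficient is the bounded constant $\q(1-r_c^{-1})$, via Young's inequality and the non-degenerate bulk away from $r_{\rm max}$ where $\homega^2$ is large. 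As a minor remark, with this cut-off combination Lemma \ref{lm:superradiance} already gives nonnegative boundary terms $E\homega^2|u|^2(\infty)+E\tomega^2|u|^2(-\infty)$ whatever the sign of $\homega\tomega$, so the non-superradiance you invoke is not needed at that step.
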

\begin{proof}
We first proceed exactly as in the proof of Proposition \ref{prop:srhighfreqest}, by coupling $j_3^f[u]$-current, $j_+^{g_+}[v]$-current and $j_{\infty}^{g_{\infty}}[w]$-current estimates to obtain \eqref{eq:fcurrentandrweightedest}.

Since $\omega \tomega>0$ in $\mathcal{F}_{\sharp,{\rm trap}}$, we can then immediately integrate $E (j^T[u])'$ or $E (j^K[u])'$ with $E\geq 2$ and apply Lemma \ref{lm:superradiance} to estimate the boundary terms on the right-hand side. However, for the sake of later convenience when passing to physical space, we nevertheless add cut-off functions: $E (\chi_{r_1}j^K[u])'$ and $E ((1-\chi_{r_1})j^T[u])'$. We then have to control the following additional term:
\begin{equation*}
	E\int_R (\homega-\tomega) \chi_{r_1}' \re(\overline{iu'}u)\,dr_*.
\end{equation*}
We then estimate
\begin{equation}
\label{eq:errortermtrappedfreq}
	2|(\homega-\tomega)  \re(\overline{iu'}u)|\leq \eta |u'|^2+\eta^{-1}|\homega-\tomega|^2|u|^2.
\end{equation}
We then take $\eta>0$ suitably small and use that $|\homega-\tomega|=|q|(1-r_c^{-1})$ is bounded, $\chi'_{r_1}$ is supported away from $r=r_{\max}$ and $\tomega^2$ is large to absorb both terms on the right-hand side of \eqref{eq:errortermtrappedfreq} and conclude \eqref{eq:trappedfreqest}.

Observe however that, in contrast with the proof of Proposition \ref{prop:srhighfreqest}, we cannot remove the factor $(1-r_{\max}r^{-1})^2$ appearing in \eqref{eq:fcurrentandrweightedest}, since that relied on \eqref{eq:keypropforhcurrent}, which used the assumption $\ell(\ell+1)\gg \homega^2$, which is not valid in $\mathcal{F}_{\sharp,{\rm trap}}$ where $\ell(\ell+1)\sim \homega^2$.
\end{proof}

\section{Integrated estimates in frequency space: bounded frequencies}
\label{sec:intestboundfreq}
In this section, we will restrict to $(\omega,\ell)\in \mathcal{F}_{\flat}$. An important role will be played by \emph{homogeneous solutions} $u_+$ and $u_{\infty}$ to \eqref{eq:radialODE} (with $H\equiv 0$), satisfying the following boundary conditions:
\begin{equation*}
\lim_{r_*\to -\infty}e^{i\tomega r_*+i \q \int_{0}^{r_*}\rho_+(r_*')\,dr_*'}u_+(r_*)=\lim_{r_*\to \infty}e^{-i\homega r_*+i \q \int_{0}^{r_*}\rho_c(r_*')\,dr_*'}u_{\infty}(r_*)=1.
\end{equation*}
We will refer to $u_+$ as the \emph{event-horizon-normalized} homogeneous solution and $u_{\infty}$ as the \emph{infinity-normalized} homogeneous solution (when $\kappa_c=0$).

We define the corresponding Wronskian as follows:
\begin{equation*}
\mathfrak{W}(\homega,\ell):=\mathcal{W}(u_+,u_{\infty})=u_{+}(r_*)\frac{du_{\infty}}{dr_*}(r_*)-u_{\infty}(r_*)\frac{du_{+}}{dr_*}(r_*).
\end{equation*}
Since \eqref{eq:radialODE} does not have any first-order terms, $\mathfrak{W}$ is independent of $r_*$.

We will restrict to $|\q|\geq q_0$, where $\q_0>0$ can be taken arbitrarily small. \textbf{In the remainder of \S \ref{sec:intestboundfreq}, we will allow any constant denoted $C$ and $c$, as well as the constants implicit in the big-O notation, to depend on $L_0$, $\beta$, $\q_0$ and $\q_{1}$.} If there is a further dependence on other constants, we will denote this in the subscript.

We will apply Green's formula \eqref{eq:greenformula} in \S\ref{sec:greenhm} and \S \ref{sec:greeninhm} to estimate solutions $u$ to \eqref{eq:radialODE} that satisfy the boundary conditions \eqref{eq:inhomodebc1} and \eqref{eq:inhomodebc2} by homogeneous solutions $u_+$ and $u_{\infty}$. An important ingredient for these estimates is a uniform estimate for $|\mathfrak{W}|^{-1}$.

The structure of the present section is as follows:
\begin{itemize}
	\item In \S \ref{sec:homest} we approximate general solutions $u$ to \eqref{eq:radialODE} with $H\equiv 0$ by appropriate special functions by partioning the $r$-interval $(r_+,r_c)$ into subintervals. These approximations are uniformly valid as $\tomega,\homega\to 0$.
	\item In \S \ref{sec:matchasymptomega} and \S \ref{sec:matchasymphomega}, we derive upper-bound estimates for the coefficients appearing in the approximations from the previous step in the case that $u=u_+$ or $u=u_{\infty}$. This results in uniform estimates for  $u=u_+$ and $u_{\infty}$. In the cases $(\homega,\ell)\in \mathcal{F}_{\flat,+}\cup \mathcal{F}_{\flat,\sim}$ and $(\homega,\ell)\in \mathcal{F}_{\flat,\infty}\cup \mathcal{F}_{\flat,\sim}$, we moreover derive lower bounds for these coefficients. These lower bounds form a key ingredients in the Wronskian estimates. The key technique for deriving lower bound is a mathematically rigorous realization of the \emph{matched asymptotics method} employed in black hole spacetimes; see also the discussion in \S \ref{intro:sketchstep2}. This method will be discussed more generally in upcoming work.
	\item In \S \ref{sec:wronskianestimates}, we derive estimates for $|\mathfrak{W}|^{-1}$ that are uniform and do not degenerate as $\kappa_+,\kappa_c,\tomega,\homega\to 0$ by combining the coefficient estimates from the previous step for $(\homega,\ell)\in \mathcal{F}_{\flat,\infty}\cup \mathcal{F}_{\flat,+}$ with a mode stability argument for $(\homega,\ell)\in \mathcal{F}_{\flat,\sim}$.
	\item Finally, we use Green's formula \eqref{eq:greenformula} in \S \ref{sec:greenhm} and \S \ref{sec:greeninhm} and apply the homogeneous and Wronskian estimates from the previous step to derive $(r^{-1}\Omega)^{-p}$-weighted $L^2$-estimates for solutions $u$ to \eqref{eq:radialODE} that satisfy the boundary conditions \eqref{eq:inhomodebc1} and \eqref{eq:inhomodebc2}
\end{itemize}
\textbf{In the remainder of this \S \ref{sec:intestboundfreq}, we fix $r_+=1$.}

\subsection{Asymptotics of homogeneous solutions}
\label{sec:homest}
We will first derive uniform estimates for homogeneous solutions $u_+$ and $u_{\infty}$ to \eqref{eq:radialODE} in the frequency range $\mathcal{F}_{\flat}$ that remain valid as $|\homega|\to 0$ or $|\tomega|\to 0$. 

It will be convenient to employ the Couch--Torrence radial coordinates $s_+$ and $s_c$, introduced in \S \ref{sec:couchtorrencerad}. Via the Lemma \ref{lm:U+Uceqs} below, we will reformulate \eqref{eq:radialODE} with $H\equiv 0$ to bring it in a form amenable to a uniform analysis in $\kappa_+,\kappa_c$.
\begin{lemma}
\label{lm:U+Uceqs}
Set $r_+=1$. Let
\begin{align*}
U_+(s_+):=&\:(s_++1)(r^{-1}\Omega u)(r_*(s_+)),\\
U_c(s_c):=&\: (s_c+1)(r^{-1}\Omega u)(r_*(s_c)). 
\end{align*}
Then $u$ satisfies  \eqref{eq:radialODE} with $H\equiv 0$ and if and only if:
\begin{align}
\label{eq:maineqUevent}
\frac{d^2U_+}{ds_+^2}=&\:\rho_+^{4}(r^{-2}\Omega^2)^{-2}{\mathcal{V}}U_+, \\
\label{eq:maineqUcosmo}
\frac{d^2U_c}{ds_c^2}=&\:\rho_c^{4}(r^{-2}\Omega^2)^{-2}\mathcal{V}U_c,
\end{align}
with $\mathcal{V}$ introduced in Lemma \ref{lm:mainasympmathcalpot}.

We can write:
\begin{align}
\label{eq:tildemathcalwithfactor}
\rho_+^{4}(r^{-2}\Omega^2)^{-2}{\mathcal{V}}=&(1+2\kappa_+s_+)^{-2}\left[-(\tomega_+^2+\kappa_+^2)+2\nu_+ s_+^{-1}+\left(\mu_+^2-\frac{1}{4}\right)s_+^{-2}+(|\homega|+\kappa_c)O_{\infty}(s_+^{-3})\right],\\
\label{eq:mathcalVwithfactor}
\rho_c^4(r^{-2}\Omega^2)^{-2}\mathcal{V}=&(1+2\kappa_c s_c)^{-2}\left[-(\homega^2_c+\kappa_c^2)+2\nu_c s_c^{-1}+\left(\mu_c^2-\frac{1}{4}\right)s_c^{-2}+(|\tomega|+\kappa_c+\kappa_+)O_{\infty}(s_c^{-3})\right],
\end{align}
 with the constants $\homega_+,\tomega_+,\mu_+,\mu_c,\nu_+,\nu_c$ defined above satisfying:
\begin{align*}
\homega^2_++\kappa_+^2=&\:\tomega^2+\kappa_+^2+O(\kappa_c^2),\\
\homega^2_c+\kappa_c^2=&\:\homega^2+O(\kappa_c),\\
\mu_{+}=&\: \sqrt{\ell(\ell+1)+\frac{1}{4}-\q^2-6\tomega \homega+O(\kappa_c^2)},\\
\mu_c=&\: \sqrt{\ell(\ell+1)+\frac{1}{4}-\q^2-6\tomega \homega+ O(\kappa_c)+O(|\homega|)},\\
\nu_+=&\: -\tomega(\q+2\tomega)+\kappa_+\ell(\ell+1)+O(\kappa_c^2),\\
	\nu_c=&\:\homega(\q-2\homega)  +\kappa_c\ell(\ell+1)+O(\kappa_c)+O(|\homega|).
\end{align*}

In particular, when $\kappa_c=\kappa_+=0$:
\begin{align}
\label{eq:tildemathcalVext}
\rho_+^4(r^{-2}\Omega^2)^{-2}{\mathcal{V}}=&-\tomega^2+2\tomega (\q-2\homega) s_+^{-1}+(\ell(\ell+1)-\q^2-6\tomega \homega) s_+^{-2}\\ \nonumber
&+2 \homega(\q-2\homega) s_+^{-3}-\homega^2 s_+^{-4},\\
\label{eq:mathcalVext}
\rho_c^4(r^{-2}\Omega^2)^{-2}\mathcal{V}=&-\homega^2-2 \homega(\q+2\tomega) s_c^{-1}+\left(\ell(\ell+1)-\q^2-6\omega \tomega\right)s_c^{-2}\\\nonumber
&+2\tomega(\q-2\homega)s_c^{-3}-\tomega^2s_c^{-4}.
\end{align}
\end{lemma}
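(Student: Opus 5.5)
The proof splits into two parts: first a change of variables that turns \eqref{eq:radialODE} into \eqref{eq:maineqUevent}--\eqref{eq:maineqUcosmo}, then an expansion of the resulting potential.

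\textbf{The transformation.} Set $H\equiv 0$ and write $\frac{d}{dr_*}=\Omega^2\frac{d}{dr}=r^{-2}\Omega^2\frac{d}{d\rho_+}$. Since $s_+=\rho_+^{-1}-1$, we have $\frac{d}{ds_+}=-\rho_+^2\frac{d}{d\rho_+}$. A homogeneous equation $u''=(V-\homega^2)u$ written in $r_*$ becomes an equation in a new variable $x$ with no first-order term after the rescaling $u=(dr_*/dx)^{1/2}\,\widetilde U$. The coefficient then changes by the Schwarzian-type correction $-\frac12\{r_*,x\}$.

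With $x=s_+$ we get
\begin{equation*}
\frac{dr_*}{ds_+}=\frac{r^2}{\Omega^2}\rho_+^2 .
\end{equation*}
Hence $(dr_*/ds_+)^{-1/2}\propto r^{-1}\Omega\,\rho_+^{-1}=(s_++1)r^{-1}\Omega$, which matches the definition of $U_+$. The transformed equation is
\begin{equation*}
U_+''=\left(\frac{dr_*}{ds_+}\right)^2(V-\homega^2)\,U_+ +(\text{correction})\,U_+ .
\end{equation*}
The task is to check that the correction equals $\rho_+^4(r^{-2}\Omega^2)^{-2}\,\Omega^3\frac{d^2\Omega}{dr^2}$. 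This is the standard identity for $(r\Omega^{-1})$-type conformal rescalings. The $\rho_+^{-1}$ factor is linear in $s_+$, so it contributes nothing to the second derivative. The remaining $\Omega$ factor produces exactly $\Omega\frac{d^2\Omega}{dr_*^2}$-type terms, which recombine into $\Omega^3\Omega_{rr}$ once the factor $(dr_*/ds_+)^2$ is pulled out. The $s_c$ equation follows identically, with $\rho_c$ in place of $\rho_+$ and $\frac{d}{ds_c}=+\rho_c^2\frac{d}{d\rho_c}$.

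\textbf{The expansions.} Insert \eqref{eq:boundfreqpotevent} from Lemma \ref{lm:mainasympmathcalpot} and the expansion of $r^{-2}\Omega^2$ in $\rho_+$ from Lemma \ref{lm:metricest}. Factoring out $2\kappa_+\rho_++\rho_+^2=\rho_+^2(1+2\kappa_+s_+)s_+^{-1}\cdot(\ldots)$ gives the prefactor $(1+2\kappa_+s_+)^{-2}$. Expanding $\rho_+=(1+s_+)^{-1}$ in powers of $s_+^{-1}$ then identifies the coefficients of $s_+^0,s_+^{-1},s_+^{-2}$ as $-(\tomega_+^2+\kappa_+^2)$, $2\nu_+$ and $\mu_+^2-\frac14$. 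The terms of order $s_+^{-3}$ and higher are collected into the error. The frequency dependence of these higher terms enters only through $\homega$ (via $\tomega-\homega$ terms) and through $\kappa_c$, which gives the stated $(|\homega|+\kappa_c)O_\infty(s_+^{-3})$ error.

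\textbf{Main obstacle.} The delicate step is bookkeeping: showing that every error term of order $s_+^{-3}$ or lower really carries a factor $|\homega|+\kappa_c$. It is not enough that it be $O(1)$. I would check this by first doing the exact extremal computation with $\kappa_\pm=0$, $r_+=1$, $r=1+s_c$, $\Omega=s_c/(1+s_c)$. There $\Omega^3\Omega_{rr}$ is explicit, and the $\ell(\ell+1)$ and $\Omega$-derivative terms must cancel exactly beyond $s^{-2}$, leaving the polynomials \eqref{eq:tildemathcalVext} and \eqref{eq:mathcalVext}. Here Lemma \ref{lm:couchtorr} and \eqref{eq:ctorrpot} reduce the cosmological case to the horizon one via $s_c\leftrightarrow s_+^{-1}$, $\homega\leftrightarrow-\tomega$. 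The general near-extremal statement then follows by perturbing in $\kappa_+,\kappa_c$ using Lemma \ref{lm:surfacegrav}(iv).
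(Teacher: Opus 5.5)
Your change of variables is correct and matches the paper's three-step computation. In that computation, $\Omega u$ solves an $r$-equation with the extra term $\Omega^3\frac{d^2\Omega}{dr^2}$, and then one uses $\frac{d^2f}{dr^2}=r^{-3}\frac{d^2(r^{-1}f)}{d\rho_+^2}$ and $\frac{d^2}{ds_+^2}((s_++1)g)=\rho_+^3\frac{d^2g}{d\rho_+^2}$. The precise reason the $r\mapsto s_+$ step adds no correction is that $r=1+s_+^{-1}$ is a M\"obius map. Hence $|dr/ds_+|^{-1/2}=r^{-1}\rho_+^{-1}=s_+$ is linear in $s_+$, and the Schwarzian of that step vanishes. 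Your sign slips in $dr_*/ds_+$ and $d/ds_c$ are harmless.

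The gap is in the expansion, specifically in how you obtain the $\kappa_+$-dependence.

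First, your factorization is not an identity: $2\kappa_+\rho_++\rho_+^2=\rho_+^2(1+2\kappa_++2\kappa_+s_+)$. Truncating $r^{-2}\Omega^2$ to its first two terms therefore gives the wrong prefactor. The mismatch ratio $(1+2\kappa_+s_+)^2(1+2\kappa_++2\kappa_+s_+)^{-2}$ cannot be absorbed into an error $(|\homega|+\kappa_c)O_\infty(s_+^{-3})$. The factor $1+2\kappa_+s_+$ only comes out of the whole quartic: at $\kappa_c=0$,
$$r^{-2}\Omega^2=\rho_+(2\kappa_++(1-2\kappa_+)\rho_+)(1-\rho_+)^2=\rho_+^2(1+2\kappa_+s_+)(1-\rho_+)^2.$$

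Second, and more seriously, your proposed check cannot yield the statement. You suggest computing at $\kappa_\pm=0$ and then perturbing in $\kappa_+,\kappa_c$; note also that Lemma \ref{lm:surfacegrav}(iv) only concerns $\kappa_c$. But \eqref{eq:tildemathcalwithfactor} assumes no smallness of $\kappa_+$, and its error term contains no $\kappa_+$. Meanwhile $(1+2\kappa_+s_+)^{-2}$ is far from $1$ once $s_+\gtrsim\kappa_+^{-1}$. That is exactly the near-horizon regime where the lemma feeds the hypergeometric approximation of Proposition \ref{prop:regionIhypgeom}, and where the term $\kappa_+\ell(\ell+1)$ in $\nu_+$ is of leading order after rescaling. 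Any perturbative treatment of $\kappa_+$ leaves errors that are not of the allowed form.

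What works is to do the $\kappa_c=0$ computation exactly for arbitrary $\kappa_+$; it is as explicit as the extremal one. Take $r_+=1$, $r_-=1-2\kappa_+$, $M^2-Q^2=\kappa_+^2$ and $r=1+s_+^{-1}$. Then $r^2\Omega^2=(r-1)(r-r_-)=s_+^{-2}(1+2\kappa_+s_+)$, so $\rho_+^4(r^{-2}\Omega^2)^{-2}=(1+2\kappa_+s_+)^{-2}r^4$. A direct computation also gives $\Omega^2r^{-1}\frac{d\Omega^2}{dr}+\Omega^3\frac{d^2\Omega}{dr^2}=(Q^2-M^2)r^{-4}=-\kappa_+^2r^{-4}$. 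Therefore
\begin{equation*}
r^4\mathcal{V}=-\homega^2r^4+2\q\homega r^3-\q^2r^2+\ell(\ell+1)\left(s_+^{-2}+2\kappa_+s_+^{-1}\right)-\kappa_+^2
\end{equation*}
is an exact quartic in $s_+^{-1}$. Using $\homega=\tomega+\q$, its coefficients in increasing powers of $s_+^{-1}$ are:
\begin{itemize}
\item $-(\tomega^2+\kappa_+^2)$,
\item $2(-\tomega(\q+2\tomega)+\kappa_+\ell(\ell+1))$,
\item $\ell(\ell+1)-\q^2-6\tomega\homega$,
\item $2\homega(\q-2\homega)$,
\item $-\homega^2$.
\end{itemize}
This is exactly where the factor $|\homega|$ in the error comes from, with no $\kappa_+$. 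Only $\kappa_c$ should then be treated perturbatively, through $\Lambda r_+^2=O(\kappa_c^2)$.

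This is effectively the paper's route. The paper uses Lemma \ref{lm:metricest}, which is exact in $\kappa_+$, to get $\rho_+^4(r^{-2}\Omega^2)^{-2}=(1+2\kappa_+s_+)^{-2}(1+s_+^{-1})^4(1+O(\kappa_c^2))$, and then inserts the expansion of $\mathcal{V}$. Computing $r^4\mathcal{V}$ directly as above is also a cleaner way to get the coefficients than re-expanding those of \eqref{eq:boundfreqpotevent}.
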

\begin{proof}
The function $u$ satisfies \eqref{eq:radialODE} with $H\equiv 0$ if and only if
\begin{equation*}
\frac{d^2(\Omega u)}{dr^2}+\Omega^{-4}\left(\homega^2-V-\Omega^3\frac{d^2\Omega}{dr^2}\right)(\Omega u)=0,
\end{equation*}
or equivalently,
\begin{equation*}
\frac{d^2(\Omega u)}{dr^2}+\Omega^{-4}\left(\tomega^2-\widetilde{V}-\Omega^3\frac{d^2\Omega}{dr^2}\right)(\Omega u)=0.
\end{equation*}

Note that
\begin{align*}
\frac{d^2(\Omega u)}{dr^2}=&\:r^{-3}\frac{d^2(r^{-1}\Omega u)}{d\rho_c^2},\\
\frac{d^2(\Omega u)}{dr^2}=&\:=r^{-3}\frac{d^2(r^{-1}\Omega u)}{d\rho_+^2}.
\end{align*}
Hence,
\begin{align*}
\frac{d^2(r^{-1}\Omega u)}{d\rho_c^2}=&\:(r^{-2}\Omega^2)^{-2}\left(-\homega^2+V+\Omega^3\frac{d^2\Omega}{dr^2}\right)(r^{-1}\Omega u),\\
\frac{d^2(r^{-1}\Omega u)}{d\rho_+^2}=&\:(r^{-2}\Omega^2)^{-2}\left(-\tomega^2+\widetilde{V}+\Omega^3\frac{d^2\Omega}{dr^2}\right)(r^{-1}\Omega u).
\end{align*}

Finally, note that
\begin{align*}
	\frac{d^2}{ds_c^2}((s_c+1) r^{-1}\Omega u)=&\:\rho_c^3\frac{d^2(r^{-1}\Omega u)}{d\rho_c^2},\\
	\frac{d^2}{ds_+^2}((s_++1) r^{-1}\Omega u)=&\:\rho_+^3\frac{d^2(r^{-1}\Omega u)}{d\rho_+^2},
\end{align*}
to conclude \eqref{eq:maineqUevent} and \eqref{eq:maineqUcosmo}.

We apply Lemma \ref{lm:metricest} to write:
\begin{multline*}
	\rho_+^4 (\Omega^2 r^{-2})^{-2}=\frac{\rho_+^4}{(2\kappa_+ \rho_++(1-6\kappa_+ +O(\kappa_c^2))\rho_+^2+(-2+6\kappa_+ +O(\kappa_c^2))\rho_+^3+\left(1-2\kappa_+ +O(\kappa_c^2)\right)\rho_+^4)^2}\\
	=\frac{1}{(1-6\kappa_++O(\kappa_c^2)+(-2+6\kappa_+ +O(\kappa_c^2))\rho_++\left(1-2\kappa_+ +O(\kappa_c^2)\right)\rho_+^2)^2}\\
	=(1+2\kappa_+ s_+)^{-2}(1+s^{-1})^4(1+\kappa_c^2 O_{\infty}(s^0)).
\end{multline*}
After applying \eqref{eq:boundfreqpotevent}, we then obtain \eqref{eq:tildemathcalwithfactor}.

We obtain in a similar manner:
\begin{equation*}
\rho_c^4 (\Omega^2 r^{-2})^{-2}=	(1+2\kappa_c s_c)^{-2}(1+O_{\infty}(\kappa_c)+O_{\infty}(\rho_c)).
\end{equation*}
 We can similarly derive \eqref{eq:mathcalVwithfactor} after applying \eqref{eq:boundfreqpotcosmo}.

When $\kappa_c=\kappa_+=0$, we have that $\rho_c^4(r^{-2}\Omega^2)^{-2}=\Omega^{-4}$ and $\rho_+^4(r^{-2}\Omega^2)^{-2}=r^4$ and the expressions simplify greatly.
\end{proof}

In order to obtain uniform estimates for solutions $u$ to \eqref{eq:radialODE} with $H\equiv 0$ as $|\tomega|\to 0$ or $|\homega|\to 0$, we divide the $r$-domain $(1,r_c)$ into four subintervals, where we will approximate $u$ in different ways. 

Let $\digamma_0>0$ be a constant, which we will later take to be suitably large. Consider the regions $I,II,III,IV\subset (1,r_c)$, which are defined as follows:
\begin{itemize}
\item \textbf{I}: 
\begin{align*}
	\frac{1}{2} \digamma_0^{\frac{1}{16}}|\tomega|^{-1}\leq s_+<&\:\infty & (|\tomega| \kappa_+^{-1}>& \digamma_0),\\
	\frac{1}{2}\digamma_0^{\frac{1}{16}}\kappa_+^{-\frac{1}{4}}(|\homega|+\kappa_c)^{\frac{1}{2}}\leq s_+<&\:\infty & (|\tomega| \kappa_+^{-1}\leq& \digamma_0).
\end{align*}
\item \textbf{II}: 
\begin{align*}
	\frac{1}{2}\digamma_0^{-\frac{1}{3}}(\kappa_c+|\homega|)^{\frac{1}{4}} |\tomega|^{-\frac{1}{4}}\leq s_+\leq &\:2\digamma_0^{\frac{3}{8}}|\tomega|^{-1}& (\kappa_+>&\:0\quad \textnormal{and}\quad |\tomega| \kappa_+^{-1}> \digamma_0),\\
	\frac{1}{2}\digamma_0^{-\frac{1}{3}}(\kappa_c+|\homega|)^{\frac{1}{4}} |\tomega|^{-\frac{1}{4}}\leq s_+ <&\:\infty & (\kappa_+=&\:0).
\end{align*}
\item \textbf{III}: 
\begin{align*}
	\frac{1}{2} \digamma_0^{-\frac{1}{3}}(\kappa_c+|\tomega|)^{\frac{1}{4}}|\homega|^{-\frac{1}{4}}\leq s_c\leq &\:2\digamma_0^{\frac{3}{8}}|\homega|^{-1}& (\kappa_c>&\:0\quad \textnormal{and}\quad |\homega| \kappa_c^{-1}> \digamma_0),\\
	\frac{1}{2} \digamma_0^{-\frac{1}{3}}|\tomega|^{\frac{1}{4}}|\homega|^{-\frac{1}{4}}\leq s_c <&\:\infty & (\kappa_c=&\:0).
\end{align*}
\item \textbf{IV}: 
\begin{align*}
	\frac{1}{2} \digamma_0^{\frac{1}{16}}|\homega|^{-1}\leq s_c<&\:\infty & (|\homega| \kappa_c^{-1}>& \digamma_0),\\
	\frac{1}{2}\digamma_0^{\frac{1}{16}}\leq s_c<&\:\infty & (|\homega| \kappa_c^{-1}\leq& \digamma_0).
\end{align*}
\end{itemize}

The lemma below shows in particular that Regions I--IV cover the full $r$-interval $(r_+,r_c)$ for $\kappa_c$ and $\kappa_+$ suitably small and we identify intervals contained in two intersecting regions.
\begin{lemma}
\label{lm:setincl}
	The following set inclusions hold: for $\digamma_0$ suitably large and $\kappa_c$ suitably small, depending on $\gamma$ and $q_0$, the following set inclusions hold: 	\begin{align}
	\label{eq:setincl1}
		\left\{\frac{1}{2}\digamma_0^{\frac{1}{4}}|\tomega|^{-1}\leq s_+\leq 2\digamma_0^{\frac{1}{4}}|\tomega|^{-1}\right\}\subset&\: I\cap II,\\
			\label{eq:setincl2}
		\left\{\frac{3}{4}\digamma_0^{-\frac{1}{3}}|\homega|^{-\frac{1}{4}}(\kappa_++\kappa_c+|\tomega|)^{\frac{1}{4}}\leq s_c\leq \frac{4}{3}\digamma_0^{-\frac{1}{3}}|\homega|^{-\frac{1}{4}}(\kappa_++\kappa_c+|\tomega|)^{\frac{1}{4}}\right\}\subset&\: I\cap III\quad &(|\tomega|\leq\digamma_0 \kappa_+),\\
		\label{eq:setincl3}
		\left\{\frac{3}{4}\digamma_0^{-\frac{1}{3}}|\homega|^{-\frac{1}{4}}(\kappa_++\kappa_c+|\tomega|)^{\frac{1}{4}}\leq s_c\leq \frac{4}{3}\digamma_0^{-\frac{1}{3}}|\homega|^{-\frac{1}{4}}(\kappa_++\kappa_c+|\tomega|)^{\frac{1}{4}}\right\}\subset&\: II\cap III\quad &(|\tomega|>\digamma_0 \kappa_+),\\
		\label{eq:setincl4}
		\left\{\frac{1}{2} \digamma_0^{-\frac{1}{4}}|\homega|^{-1}\leq s_c\leq 2 \digamma_0^{-\frac{1}{4}}|\homega|^{-1}\right\}\subset&\: III\cap IV,\\
				\label{eq:setincl5}
		\left\{\frac{3}{4}\digamma_0^{-\frac{1}{3}}|\tomega|^{-\frac{1}{4}}(\kappa_c+|\homega|)^{\frac{1}{4}}\leq s_+\leq \frac{4}{3}\digamma_0^{-\frac{1}{3}}|\tomega|^{-\frac{1}{4}}(\kappa_c+|\homega|)^{\frac{1}{4}}\right\}\subset&\: IV\cap II\quad &(|\homega|\leq\digamma_0 \kappa_c),\\
		\label{eq:setincl6}
		\left\{\frac{3}{4}\digamma_0^{-\frac{1}{3}}|\tomega|^{-\frac{1}{4}}(\kappa_c+|\homega|)^{\frac{1}{4}}\leq s_+\leq \frac{4}{3}\digamma_0^{-\frac{1}{3}}|\tomega|^{-\frac{1}{4}}(\kappa_c+|\homega|)^{\frac{1}{4}}\right\}\subset&\: III\cap II\quad &(|\homega|>\digamma_0 \kappa_c).
	\end{align}
\end{lemma}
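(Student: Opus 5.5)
The plan is to verify each inclusion directly from the definitions of Regions I--IV, using the Couch--Torrence relations \eqref{eq:scintermss+}--\eqref{eq:s+intermssc} between $s_c$ and $s_+$. With $r_+=1$ and $\kappa_c$ small, these relations say that $s_c s_+ = 1+O(r_c^{-1}(s_++s_c+1))$. In particular, $s_c\sim s_+^{-1}$ uniformly whenever both $s_c$ and $s_+$ are $\ll r_c\sim \kappa_c^{-1}$. I will also use that in $\mathcal{F}_\flat$ the frequencies are bounded, and that $|\homega-\tomega|=|\q|(1-r_c^{-1})\geq \q_0/2$. Consequently at least one of $|\tomega|,|\homega|$ is bounded below by $\q_0/4$, and the other is bounded above by a constant depending on $L_0,\beta$.

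Each inclusion is then a comparison of powers of $\digamma_0$ and of the frequencies at the two ends of the interval.

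\emph{Inclusions \eqref{eq:setincl1} and \eqref{eq:setincl4}.} For \eqref{eq:setincl1}, the interval $s_+\in[\tfrac12\digamma_0^{1/4}|\tomega|^{-1},2\digamma_0^{1/4}|\tomega|^{-1}]$ lies in I when $|\tomega|>\digamma_0\kappa_+$, since $\digamma_0^{1/16}<\digamma_0^{1/4}$. When $|\tomega|\le \digamma_0\kappa_+$, I instead need $\digamma_0^{1/4}|\tomega|^{-1}\gtrsim \digamma_0^{1/16}\kappa_+^{-1/4}(|\homega|+\kappa_c)^{1/2}$. This holds because $|\tomega|^{-1}\geq \digamma_0^{-1}\kappa_+^{-1}$ is large and $|\homega|$ is bounded; I expect the case $\kappa_+\lesssim|\tomega|$ to need only a crude check. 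The interval lies in II because its upper endpoint is below $2\digamma_0^{3/8}|\tomega|^{-1}$. Its lower endpoint exceeds $\tfrac12\digamma_0^{-1/3}(\kappa_c+|\homega|)^{1/4}|\tomega|^{-1/4}$, since $|\tomega|^{-1}\geq|\tomega|^{-1/4}$ for $|\tomega|\lesssim1$, and the bounded case is handled via the constants. Inclusion \eqref{eq:setincl4} is the mirror statement in $s_c$ and $\homega$.

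\emph{Mixed inclusions \eqref{eq:setincl2}, \eqref{eq:setincl3}, \eqref{eq:setincl5}, \eqref{eq:setincl6}.} Here I convert the $s_c$-interval to an $s_+$-interval via $s_+\approx s_c^{-1}$, at the cost of a factor $1+O(\kappa_c\digamma_0^{C})$. The factors $\tfrac34,\tfrac43$ versus $\tfrac12,2$ give exactly the room to absorb this error for $\kappa_c$ small, depending on $\gamma,\q_0$. The point of the powers $\tfrac14$ is a consistency check. For $s_c\sim \digamma_0^{-1/3}|\homega|^{-1/4}(\dots)^{1/4}$, one gets $s_+\sim\digamma_0^{1/3}|\homega|^{1/4}(\dots)^{-1/4}$. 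Membership in II or I then reduces to inequalities of the form $\digamma_0^{1/3}\geq \tfrac12\digamma_0^{-1/3}$ together with upper bounds such as $\digamma_0^{1/3}|\homega|^{1/4}|\tomega|^{-1/4}\le 2\digamma_0^{3/8}|\tomega|^{-1}$. The latter hold because $|\tomega|^{3/4}|\homega|^{1/4}\le\digamma_0^{1/24}$ on the bounded frequency set.

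\emph{Main obstacle.} The delicate cases are those where one frequency is comparable to a surface gravity, that is \eqref{eq:setincl2} and \eqref{eq:setincl5}. In these the region boundaries involve $\kappa_+^{-1/4}(|\homega|+\kappa_c)^{1/2}$, which is not homogeneous in the same way as the interval. There I plan to use $|\tomega|\le\digamma_0\kappa_+$, hence $(\kappa_++\kappa_c+|\tomega|)^{1/4}\lesssim \digamma_0^{1/4}\kappa_+^{1/4}$. I will also use that $|\homega|\gtrsim \q_0$ in this regime, which makes the required inequality reduce to $\digamma_0^{1/3-1/4}\gtrsim \digamma_0^{1/16}$. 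Since $\tfrac1{12}>\tfrac1{16}$, this holds for $\digamma_0$ large. The exponents were evidently tuned for exactly such comparisons, so the bulk of the work is bookkeeping of these power inequalities case by case.
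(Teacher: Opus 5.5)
Your route is the paper's: check each inclusion directly against the definitions of Regions I--IV, pass between $s_c$ and $s_+$ via the Couch--Torrence relation, and in \eqref{eq:setincl2} use $\kappa_++\kappa_c+|\tomega|\leq 2\digamma_0\kappa_+$ to reduce to $\frac{1}{12}>\frac{1}{16}$. Your identity $(1-r_c^{-1})s_+s_c=1+r_c^{-1}(s_++s_c+1)$ is correct, and the last reduction is exactly the paper's chain of inequalities.

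One step you propose would fail, however. In \eqref{eq:setincl1} you try to treat the branch $|\tomega|\leq\digamma_0\kappa_+$ of Region I. This requires $\digamma_0^{3/16}\kappa_+^{1/4}\geq|\tomega|(|\homega|+\kappa_c)^{1/2}$. For $\kappa_+=\digamma_0^{-1}|\tomega|$ the left side is $\digamma_0^{-1/16}|\tomega|^{1/4}$, so the inequality fails as soon as $|\tomega|^{3/4}(|\homega|+\kappa_c)^{1/2}>\digamma_0^{-1/16}$, e.g.\ for $|\tomega|,|\homega|$ of order $q_0$. The bound $|\tomega|^{-1}\geq\digamma_0^{-1}\kappa_+^{-1}$ is large only when $\kappa_+\ll\digamma_0^{-1}$. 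That branch should simply be dropped: Region II is only defined when $|\tomega|>\digamma_0\kappa_+$ or $\kappa_+=0$. Then only the first branch of I matters, and \eqref{eq:setincl1} is immediate, as the paper says.

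Note also that \eqref{eq:setincl4} as printed carries $\digamma_0^{-1/4}$. Where III is defined, IV starts at $\frac{1}{2}\digamma_0^{1/16}|\homega|^{-1}$, so that interval misses IV for large $\digamma_0$. Your mirror argument proves the intended version with $\digamma_0^{1/4}$.

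The mixed inclusions are also not pure power counting in $\digamma_0$. You must carry the conditions under which the regions exist, as the paper does explicitly for \eqref{eq:setincl3}:
\begin{itemize}
\item $\kappa_c\leq\kappa_+$;
\item $|\homega|>\digamma_0\kappa_c$ or $\kappa_c=0$, for III;
\item $|\tomega|>\digamma_0\kappa_+$ or $\kappa_+=0$, for II, in \eqref{eq:setincl3}, \eqref{eq:setincl5}, \eqref{eq:setincl6}.
\end{itemize}

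These conditions are needed for the lower bound of II in \eqref{eq:setincl3}. Comparing $\digamma_0^{1/3}|\homega|^{1/4}(\kappa_++\kappa_c+|\tomega|)^{-1/4}$ with $\frac{1}{2}\digamma_0^{-1/3}(\kappa_c+|\homega|)^{1/4}|\tomega|^{-1/4}$ requires $\kappa_c\lesssim|\homega|$ and $\kappa_++\kappa_c\lesssim|\tomega|$. The comparison fails as $\homega\to 0$ with $\kappa_c>0$ fixed.

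The same conditions make the Couch--Torrence conversion accurate. On the intervals in \eqref{eq:setincl2}--\eqref{eq:setincl3} you only have $s_+\leq C\digamma_0^{1/3}\kappa_c^{-1/4}$ (coming from the $\kappa_c$ inside $\kappa_++\kappa_c+|\tomega|$) and $s_c\leq C\digamma_0^{-1/3}|\homega|^{-1/4}$. So the relative error $r_c^{-1}(s_++s_c+1)$ is $O(\digamma_0^{C}\kappa_c^{3/4})$ rather than $O(\digamma_0^{C}\kappa_c)$. It is the condition $|\homega|>\digamma_0\kappa_c$ that keeps $r_c^{-1}s_c$ small.

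Finally, in \eqref{eq:setincl2} you do not need $|\homega|\gtrsim q_0$, which $|\tomega|\leq\digamma_0\kappa_+$ does not imply. Boundedness of $|\homega|$ together with $\kappa_c\leq|\homega|$ already gives $\frac{1}{2}\digamma_0^{1/12}|\homega|^{1/4}\geq\digamma_0^{1/16}(|\homega|+\kappa_c)^{1/2}$ for $\digamma_0$ large. The step $\kappa_++\kappa_c+|\tomega|\leq 2\digamma_0\kappa_+$ itself uses $\kappa_c\leq\kappa_+$.
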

\begin{proof}
	The inclusions \eqref{eq:setincl1} and \eqref{eq:setincl4} follow immediately. To derive \eqref{eq:setincl2}, we apply \eqref{eq:s+intermssc} with $r_+=1$ and obtain for $\kappa_c$ suitably small and $\digamma_0$ suitably large:
	\begin{multline}
	\label{eq:auxsetconv}
		\left\{\frac{3}{4}\digamma_0^{-\frac{1}{3}}|\homega|^{-\frac{1}{4}}(\kappa_++\kappa_c+|\tomega|)^{\frac{1}{4}}\leq s_c\leq \frac{4}{3}\digamma_0^{-\frac{1}{3}}|\homega|^{-\frac{1}{4}}(\kappa_++\kappa_c+|\tomega|)^{\frac{1}{4}}\right\}\\=\left\{\frac{1+r_c^{-2}(1-\frac{4}{3}\digamma_0^{-\frac{1}{3}}|\homega|^{-\frac{1}{4}}(\kappa_++\kappa_c+|\tomega|)^{\frac{1}{4}})}{\frac{4}{3}(1-r_c^{-1})\digamma_0^{-\frac{1}{3}}|\homega|^{-\frac{1}{4}}(\kappa_++\kappa_c+|\tomega|)^{\frac{1}{4}}-r_c^{-1}}\leq s_+\leq \frac{1+r_c^{-2}(1-\frac{3}{4}\digamma_0^{-\frac{1}{2}}|\homega|^{-\frac{1}{4}}(\kappa_++\kappa_c+|\tomega|)^{\frac{1}{4}})}{\frac{3}{4}\digamma_0^{-\frac{1}{3}}(1-r_c^{-1})|\homega|^{-\frac{1}{4}}(\kappa_++\kappa_c+|\tomega|)^{\frac{1}{4}}-r_c^{-1}}\right\}\\
		\subseteq \left\{\digamma_0^{\frac{1}{3}}|\homega|^{\frac{1}{4}}(\kappa_++\kappa_c+|\tomega|)^{-\frac{1}{4}}\leq s_+<\frac{5}{4}\digamma_0^{\frac{1}{3}}|\homega|^{\frac{1}{4}}(\kappa_++\kappa_c+|\tomega|)^{-\frac{1}{4}}\right\}
	\end{multline}
	For $|\tomega|\leq \digamma_0 \kappa_+$, with $\digamma_0$ suitably large and $\kappa_c\leq \kappa_+$, the set on the right-hand side of the inclusion in \eqref{eq:auxsetconv} is contained in I, because:
	\begin{equation*}
		\digamma_0^{\frac{1}{3}}|\homega|^{\frac{1}{4}}(\kappa_++\kappa_c+|\tomega|)^{-\frac{1}{4}}\geq \frac{1}{2}\digamma_0^{\frac{1}{12}}|\homega|^{\frac{1}{4}}\kappa_+^{-\frac{1}{4}}\geq \digamma_0^{\frac{1}{16}}(|\homega|+|\kappa_c|)^{\frac{1}{2}}\kappa_+^{-\frac{1}{4}},
	\end{equation*}
	so \eqref{eq:setincl2} holds.
	
	For $|\tomega|> \digamma_0 \kappa_+$ and $|\homega|> \digamma_0 \kappa_c$ or $\kappa_c=0$, with $\digamma_0$ suitably large and $\kappa_c$ suitably small, we moreover have that:
	\begin{equation*}
		\digamma_0^{\frac{1}{3}}|\homega|^{\frac{1}{4}}(\kappa_c+|\tomega|)^{-\frac{1}{4}}\geq \frac{3}{4}\digamma_0^{\frac{1}{3}}|\homega|^{\frac{1}{4}}|\tomega|^{-\frac{1}{4}}\geq \frac{1}{2}\digamma_0^{-\frac{1}{2}}(\kappa_c+|\homega|)^{\frac{1}{4}} |\tomega|^{-\frac{1}{4}}
	\end{equation*}
	and when $\kappa_c>0$, also:
	\begin{equation*}
		\frac{5}{4}\digamma_0^{\frac{1}{3}}|\homega|^{\frac{1}{4}}(\kappa_++\kappa_c+|\tomega|)^{-\frac{1}{4}}\leq \frac{5}{4}\digamma_0^{\frac{1}{3}}|\homega|^{\frac{1}{4}}|\tomega|^{-\frac{1}{4}}\leq 2 F_0^{\frac{3}{8}}|\tomega|^{-1},
	\end{equation*}
	so the set on the right-hand side of the inclusion in \eqref{eq:auxsetconv} is also contained in II. This concludes \eqref{eq:setincl3}. 
	
	The inclusions \eqref{eq:setincl5} and \eqref{eq:setincl6} follow analogously with the roles of $s_+$ and $s_c$ and $|\tomega|$ and $|\homega|$ interchanged.
	\end{proof}

\subsubsection{Region I}
\label{sec:homestRegionI}
We first approximate solutions $u$ to \eqref{eq:radialODE} with $H\equiv 0$ by purely oscillating exponentials via a standard WKB approximation.
\begin{proposition}
\label{prop:regionIeleq1}
 Let $u$ be a solution to \eqref{eq:radialODE} with $H\equiv 0$. Then, there exist constants $A_1,A_2\in \C$ and a suitably small $\delta>0$, such that for $s_+>\frac{1}{2}\delta^{-1} |\tomega|^{-1}$, we can express:
\begin{equation*}
u(r_*)=A_1e^{-i\tomega r_*-i \q \int_{0}^{r_*}\rho_+(r_*')\,dr_*'}(1+\varepsilon_1(r_*))+A_2e^{i\tomega r_*+i \q \int_{0}^{r_*}\rho_+(r_*')\,dr_*'}(1+\varepsilon_2(r_*)),
\end{equation*}
and there exists a uniform constant $C>0$ (independent of $\delta$), such that:
\begin{align*}
|\varepsilon_i|(r_*)\leq &\: C\tomega^{-2}r^{-2}\Omega^2(|\tomega|+|\kappa_+|)+C|\tomega|^{-1}s_+^{-1},\\
\left|\frac{d\varepsilon_i}{dr_*}\right|(r_*)\leq &\: Cr^{-2}\Omega^2|\tomega|^{-1}.
\end{align*}
In particular, $|\varepsilon_i|(r_*)\leq C\delta$ if $|\tomega|> \digamma_0 \kappa_+$ and $\digamma_0>1$.
\end{proposition}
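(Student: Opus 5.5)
The approach is a standard Volterra-integral-equation (Liouville--Green/WKB) construction. It uses the explicit phase functions $\Phi_\pm(r_*):=\pm(\tomega r_*+\q\int_0^{r_*}\rho_+\,dr_*')$, which are exactly the horizon boundary behaviours. We rewrite the homogeneous equation \eqref{eq:radialODE} in the form
\begin{equation*}
u''+(\tomega+\q\rho_+)^2u=\mathcal{R}\,u,\qquad \mathcal{R}:=\widetilde{V}_{\tomega\ell}+2\q\tomega\rho_++\q^2\rho_+^2=\ell(\ell+1)\Omega^2r^{-2}+\Omega^2\Big[r^{-1}\tfrac{d\Omega^2}{dr}+\tfrac{2\Lambda}{3}\Big],
\end{equation*}
using the explicit formula for $\widetilde{V}_{\tomega\ell}$. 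In particular $\mathcal{R}=O(r^{-2}\Omega^2)$, with a constant depending on $L_0$, which is allowed since $\ell\leq L_0$ in $\mathcal{F}_\flat$.

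The approximate solutions $e^{\pm i\Phi}$, with $\Phi:=\tomega r_*+\q\int_0^{r_*}\rho_+$, satisfy $(e^{\pm i\Phi})''+(\Phi')^2e^{\pm i\Phi}=\pm i\Phi''e^{\pm i\Phi}$. Here $\Phi'=\tomega+\q\rho_+$ and $\Phi''=\q\,d\rho_+/dr_*=\q r^{-2}\Omega^2$. The full error potential is therefore $\mathcal{R}\mp i\q r^{-2}\Omega^2=O(r^{-2}\Omega^2)$. I would rather use the exact WKB normalisation $(\Phi')^{-1/2}e^{\pm i\Phi}$, for which the error is $\mathcal{R}+O((\Phi'')^2/(\Phi')^2+\Phi'''/\Phi')$. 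This is valid because in the region $s_+\gtrsim\delta^{-1}|\tomega|^{-1}$ we have $\rho_+\lesssim\delta|\tomega|$, so $|\Phi'|\geq|\tomega|/2$ for $\delta$ small. Then $(\Phi')^{-1/2}=\tomega^{-1/2}(1+O(\q\rho_+/\tomega))$, and the factor $1+O(\rho_+/|\tomega|)=1+O(|\tomega|^{-1}s_+^{-1})$ is absorbed into $\varepsilon_i$. This explains the term $C|\tomega|^{-1}s_+^{-1}$ in the bound.

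We then set $u=A_1e^{-i\Phi}(1+\varepsilon_1)+A_2e^{i\Phi}(1+\varepsilon_2)$, defining $\varepsilon_i$ as solutions of the Volterra equations
\begin{equation*}
\varepsilon(r_*)=\int_{-\infty}^{r_*}K(r_*,r_*')\,\mathcal{R}_{\rm err}(r_*')(1+\varepsilon(r_*'))\,dr_*',\qquad |K(r_*,r_*')|\lesssim|\tomega|^{-1},\quad |\partial_{r_*}K|\lesssim1,
\end{equation*}
where $K$ is built from the two WKB exponentials. The standard Volterra/Picard iteration (Olver's theorem, or the ODE tools of Appendix \ref{sec:ODEest}) gives
\begin{equation*}
|\varepsilon|\leq \exp\big(C|\tomega|^{-1}\mathcal{I}\big)-1,\qquad \mathcal{I}(r_*):=\int_{-\infty}^{r_*}|\mathcal{R}_{\rm err}|\,dr_*',
\end{equation*}
and the derivative bound $|\varepsilon'|\lesssim\int_{-\infty}^{r_*}|\mathcal{R}_{\rm err}|$. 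Since the general solution is a two-dimensional space and the two constructed solutions are independent, every $u$ can be written in this form, which yields $A_1,A_2$.

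The main obstacle is computing the weighted integral $\mathcal{I}$ sharply and uniformly in $\kappa_+$. Changing variables, $\int_{-\infty}^{r_*}r^{-2}\Omega^2\,dr_*'=\rho_+(r_*)$ exactly, since $d\rho_+/dr_*=r^{-2}\Omega^2$. This alone gives $|\varepsilon|\lesssim|\tomega|^{-1}\rho_+\sim|\tomega|^{-1}s_+^{-1}$. To obtain the stated form $\tomega^{-2}r^{-2}\Omega^2(|\tomega|+\kappa_+)$, I would instead integrate by parts once against the oscillating factor $e^{2i\Phi}$ in the kernel. This gains a factor $(\Phi')^{-1}\sim\tomega^{-1}$ and leaves the boundary term $\tomega^{-2}\mathcal{R}_{\rm err}(r_*)$. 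We have $\mathcal{R}_{\rm err}(r_*)\lesssim r^{-2}\Omega^2\lesssim\rho_+(\rho_++\kappa_+)$ by Lemma \ref{lm:metricest}, and in the region $\rho_+\lesssim\delta|\tomega|$ this is bounded by $r^{-2}\Omega^2(|\tomega|+\kappa_+)|\tomega|^{-2}$ up to constants. The remaining bulk terms are of the form $\tomega^{-2}\int|\mathcal{R}_{\rm err}'|$ and are handled the same way. Finally, when $|\tomega|>\digamma_0\kappa_+$ and $\rho_+\leq\delta|\tomega|$, we have $r^{-2}\Omega^2\lesssim\rho_+(\rho_++2\kappa_+)\lesssim\delta|\tomega|^2$. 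Both error terms are then $\lesssim\delta$, which gives the final claim $|\varepsilon_i|\leq C\delta$. Care is needed to check that $C$ does not depend on $\delta$, which holds because every bound above is linear in the small quantity and the exponential only enters through $e^{C\delta}-1\lesssim\delta$.
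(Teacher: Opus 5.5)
Your Volterra set-up with the explicit phase $\Phi=\tomega r_*+\q\int_0^{r_*}\rho_+$ and error potential $\vartheta=\mathcal{R}+i\q r^{-2}\Omega^2=O(r^{-2}\Omega^2)$ is sound, and it proves the bound on $\varepsilon_i$. From $\int_{-\infty}^{r_*}r^{-2}\Omega^2\,dr_*'=\rho_+$ you get $|\varepsilon_i|\lesssim|\tomega|^{-1}\rho_+\le|\tomega|^{-1}s_+^{-1}$. This is already one of the two summands in the stated estimate, and it gives $|\varepsilon_i|\le C\delta$ directly.

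The gap is the \emph{derivative} estimate. The Volterra lemma only yields what you wrote, namely $|\varepsilon_i'|\lesssim\int_{-\infty}^{r_*}|\mathcal{R}_{\rm err}|\sim\rho_+$. The statement instead requires $|\varepsilon_i'|\lesssim|\tomega|^{-1}r^{-2}\Omega^2\sim|\tomega|^{-1}\rho_+(\rho_++2\kappa_+)$. The ratio of the two bounds is $(\rho_++2\kappa_+)/|\tomega|$, which tends to $0$ at the horizon whenever $\kappa_+\ll|\tomega|$ (e.g.\ $\kappa_+=0$). So your bound does not imply the claimed one.

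Your integration by parts is aimed at the wrong quantity. For the ansatz $e^{-i\Phi}(1+\varepsilon_1)$ one has $\varepsilon_1(x)=\int_{-\infty}^xK(x,y)\vartheta(y)(1+\varepsilon_1(y))\,dy$ with $K(x,y)=\int_y^xe^{2i(\Phi(z)-\Phi(y))}dz=\frac{e^{2i(\Phi(x)-\Phi(y))}}{2i\Phi'(x)}-\frac{1}{2i\Phi'(y)}+\dots$.
\begin{itemize}
\item The non-oscillating piece contributes $\frac{i}{2}\int_{-\infty}^x\vartheta(1+\varepsilon_1)/\Phi'\,dy$. This is the WKB amplitude/phase correction and is genuinely of size $|\tomega|^{-1}\rho_+$. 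No integration by parts can reduce $\varepsilon_1$ to $\tomega^{-2}\mathcal{R}_{\rm err}$, and you do not need it to.
\item Your inequality $\tomega^{-2}\mathcal{R}_{\rm err}\lesssim\tomega^{-2}r^{-2}\Omega^2(|\tomega|+\kappa_+)$ also fails when $|\tomega|+\kappa_+\ll1$.
\end{itemize}

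Integration by parts is needed in the representation $\varepsilon_1'(x)=\int_{-\infty}^xe^{2i(\Phi(x)-\Phi(y))}\vartheta(y)(1+\varepsilon_1(y))\,dy$, whose kernel is purely oscillatory. Writing $e^{-2i\Phi(y)}=\partial_y e^{-2i\Phi(y)}/(-2i\Phi'(y))$ and integrating by parts gives:
\begin{itemize}
\item a boundary term $|\vartheta(x)|/(2|\Phi'(x)|)\lesssim|\tomega|^{-1}r^{-2}\Omega^2$;
\item bulk terms $|\tomega|^{-1}\int_{-\infty}^x|\vartheta'|$, $|\tomega|^{-1}\int_{-\infty}^x|\vartheta||\varepsilon_1'|\lesssim|\tomega|^{-1}\rho_+^2$ and $\tomega^{-2}\int_{-\infty}^x|\vartheta||\Phi''|\lesssim\tomega^{-2}\rho_+\,r^{-2}\Omega^2$.
\end{itemize}
All of these are $\lesssim|\tomega|^{-1}r^{-2}\Omega^2(x)$. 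This uses four facts: $r^{-2}\Omega^2$ is increasing in $r_*$ for $r<r_\sharp$ (and the region $\rho_+\lesssim\delta|\tomega|$ lies there); $\rho_+^2\lesssim r^{-2}\Omega^2$; your a priori bound $|\varepsilon_1'|\lesssim\rho_+$; and $\rho_+\lesssim\delta|\tomega|$. The same argument applies to $\varepsilon_2$. With this step added, your argument proves the full statement.

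For comparison, the paper avoids the issue by putting the whole potential into the Liouville--Green variable $\zeta=\int\sqrt{\tomega^2-\widetilde V}$ and invoking Olver's Theorem 6.2.2. There $\mathcal{R}$ never enters the error-control function. The $|\tomega|^{-1}s_+^{-1}$ term and the pointwise derivative bound then come from converting $(\zeta')^{-1/2}e^{\mp i\zeta}$ into $e^{\mp i\Phi}$, since the phase discrepancy has derivative $O(\mathcal{R}/|\tomega|)$ pointwise.
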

\begin{proof}
Without loss of generality, we will take $\delta>0$ to be suitably small. Let $\zeta(r_*):=\int_{0}^{r_*}\sqrt{\tomega^2-\widetilde{V}(r_*')}\,dr_*'$. Then
	$\zeta':=\frac{d\zeta}{dr_*}=\sqrt{\tomega^2-\widetilde{V}(r_*)}$ and $\zeta''=\frac{d^2\zeta}{dr_*^2}$ and we obtain:
	\begin{equation}
	\label{eq:expapprox}
	\frac{d^2}{d\zeta^2}((\zeta')^{\frac{1}{2}}u)=-(1-\vartheta)(\zeta')^{\frac{1}{2}}u,
	\end{equation}
with
\begin{equation*}
	\vartheta(\zeta):=\frac{1}{2}(\zeta')^{-\frac{3}{2}}\frac{d}{dr_*}(\zeta'' \zeta'^{-\frac{3}{2}})=-2(\zeta')^{-\frac{3}{2}}\frac{d^2}{dr_*^2}(\zeta'^{-\frac{1}{2}})
\end{equation*}
Let $W_1(\zeta)=e^{-i\sign(\tomega) \zeta}$ and $W_2(\zeta)=e^{+i\sign(\tomega) \zeta}$. Then $W_i$ are solutions to \eqref{eq:expapprox} with $\vartheta\equiv 0$. Note that $\mathcal{W}(W_1,W_2)=2\sign(\tomega)$.

We can then apply \cite{olv97}[Theorem 6.2.2] to obtain:
\begin{equation*}
	u(r_*(\zeta))=\widetilde{A}_1(\zeta')^{-\frac{1}{2}}e^{-i\sign(\tomega) \zeta}(1+\tilde{\varepsilon}_1(\zeta))+\widetilde{A}_2(\zeta')^{-\frac{1}{2}}e^{+i\sign(\tomega) \zeta}(1+\tilde{\varepsilon}_2(\zeta)),
\end{equation*}
with
\begin{equation*}
|\tilde{\varepsilon}_i|(\zeta)\leq  e^{\frac{1}{2}\int^{\infty}_{\zeta(r_*)}|\vartheta|(\eta)\,d\eta}-1,\quad \frac{1}{2}\left|\frac{d\tilde{\varepsilon}_i}{d\zeta}\right|(r_*)\leq e^{\frac{1}{2}\int^{\infty}_{\zeta(r_*)}|\vartheta|(\eta)\,d\eta}-1.
\end{equation*}

Note that for $s_+\geq \frac{1}{2}\delta^{-1}|\tomega|^{-1}$ and $\delta>0$ suitably small, we can estimate $\zeta'\geq \frac{1}{2}|\tomega|$. We can therefore estimate:
\begin{multline*}
\int^{\infty}_{\zeta(r_*)}|\vartheta|(\eta)\,d\eta=\int^{\infty}_{r_*}|\zeta'||\vartheta|(r_*')\,dr_*'\leq \int_{1}^{r(r_*)}|\zeta'|^{-\frac{1}{2}}\left|\frac{d}{dr}((\zeta'^{-\frac{1}{2}})')\right|\,dr'\\
\leq C\tomega^{-2}r^{-2}\Omega^2(|\tomega|+|\kappa_+|+(r(r_*)-1))\leq C\tomega^{-2}r^{-2}\Omega^2(|\tomega|+|\kappa_+|).
\end{multline*}

We conclude that
\begin{equation*}
|\tilde{\varepsilon}_i|(\zeta)\leq  C\tomega^{-2}r^{-2}\Omega^2(|\tomega|+|\kappa_+|),\\
\left|\frac{d\tilde{\varepsilon}_i}{d\zeta}\right|(r_*)\leq C\tomega^{-2}r^{-2}\Omega^2(|\tomega|+|\kappa_+|) .
\end{equation*}
To conclude the estimates in the proposition, we note that
\begin{multline*}
(\zeta')^{-\frac{1}{2}}(r_*)e^{\mp i\sign(\tomega) \zeta(r_*)}=|\tomega|^{-\frac{1}{2}}(1-q\tomega^{-1}s_+^{-1}+\kappa_+\tomega^{-2}O(s_+^{-1})+ O(|\tomega|^{-2} s_+^{-2}))\\
\times e^{\mp i\tomega r_*\pm i q \int_{0}^{r_*}\rho_+(r_*')\,dr_*'+\int_0^{r_*}\tomega^{-2}(s_+)(r_*')^{-2}\,dr_*'}
\end{multline*}
and we rescale $\tilde{A}_i$ and $\tilde{\varepsilon_i}$ appropriately to obtain $A_i$ and $\varepsilon_i$.
\end{proof}
The estimates in Proposition \ref{prop:regionIeleq1} cover Region I, in the case $|\tomega| \kappa_+^{-1}> \digamma_0$ if $\delta\leq \digamma_0^{-\frac{1}{16}}$.

\subsubsection{Region I: $|\tomega| \kappa_+^{-1}\leq  \digamma_0$}
When $|\tomega| \kappa_+^{-1}\leq  \digamma_0$, we can improve the approximation in \S \ref{sec:homestRegionI} by considering hypergeometric functions, which approximate $U_+$ well in a larger $r$-interval.

	Define
	\begin{equation*}
	\zeta(s_+):=2\kappa_+ s_+ (1+2\kappa_+ s_+)^{-1}=1-(1+2\kappa_+ s_+)^{-1}.
	\end{equation*}
	Note that $\zeta(s_+)\in (\frac{2\kappa_+}{r_c-1+2\kappa_+ },1)$.

\begin{proposition}
\label{prop:regionIhypgeom}
	Assume that $\kappa_c\leq \kappa_+$ and that $|\tomega| \kappa_+^{-1}\leq  \digamma_0$. Let $U_+$ be a solution to \eqref{eq:maineqUevent}. Let $\delta> \kappa_c+|\tomega \homega|$. Then there exist constants $B_1,B_2\in \C$ and a uniform constant $C>0$, such that for $s_+\geq \delta^{-1} \kappa_+^{-\frac{1}{4}}(\kappa_c+|\homega|)^{\frac{3}{4}}$:
	\begin{multline*}
	(2\kappa_+ s_+)^{-\frac{1}{2}}U_+(s)=B_1(2\kappa_+ s_+)^{-\frac{1}{2}}F_{\underline{\sigma},\frac{1}{2}\beta_{\ell}}(2\kappa_+s_+)(1+\varepsilon_1(s_+))\\+B_2 (2\kappa_+ s_+)^{-\frac{1}{2}}\left(G_{\underline{\sigma},\frac{1}{2}\beta_{\ell}}(2\kappa_+s_+)+\varepsilon_2(s_+)\right),
\end{multline*}
with 
\begin{itemize}
\item 
$F_{\underline{\sigma},\frac{1}{2}\beta_{\ell}}$ and $G_{\underline{\sigma},\frac{1}{2}\beta_{\ell}}$ proportional to hypergeometric functions and are precisely defined in Lemma \ref{prop:hypgeomfundsoln},
\item $\underline{\sigma}=(\sigma_a,\sigma_b)$ satisfying
\begin{align*}
	\sigma_a=&-\q+O(\kappa_c)+\kappa_+^{-1}|\tomega|O(|\homega|),\\
	\sigma_b=&\: \q+\kappa_+^{-1}\tomega +\kappa_+^{-1}|\tomega|O(|\homega|),
	\end{align*}
	\item $\varepsilon_1$ and $\varepsilon_2$ satisfying the estimates:
\begin{align*}
\left|\varepsilon_1\right|(s_+)\leq&\: C\delta,\\
\left|\varepsilon_2\right|(s_+)\leq&\: C \delta |F_{\underline{\sigma},\frac{1}{2}\beta_{\ell}}|(2\kappa_+s_+),\\
\left|\frac{d \varepsilon_1}{ds_+}\right|(s_+)\leq&\: C\delta \frac{1}{s_+(1+2\kappa_+s_+)}\\
\left|\frac{d \varepsilon_2}{ds_+}\right|(s_+)\leq&\: C\delta \frac{1}{s_+(1+2\kappa_+s_+)}|F_{\underline{\sigma},\frac{1}{2}\beta_{\ell}}|(2\kappa_+s_+).
\end{align*}
\end{itemize}
\end{proposition}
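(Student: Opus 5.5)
The plan is to treat \eqref{eq:maineqUevent} in the regime $|\tomega|\leq \digamma_0\kappa_+$ as a perturbation of an exactly solvable equation with three regular singular points. We work in the rescaled variable $z=2\kappa_+s_+$, or equivalently $\zeta=z(1+z)^{-1}$. By \eqref{eq:tildemathcalwithfactor}, after multiplying by $(2\kappa_+)^{-2}$ the equation becomes
\begin{equation*}
\frac{d^2U_+}{dz^2}=(1+z)^{-2}\Big[-\tfrac{\tomega_+^2+\kappa_+^2}{4\kappa_+^2}+\tfrac{\nu_+}{\kappa_+}z^{-1}+\big(\mu_+^2-\tfrac14\big)z^{-2}\Big]U_++\mathcal{R}(z)U_+ ,
\end{equation*}
where the remainder $\mathcal{R}$ comes from the $(|\homega|+\kappa_c)O_\infty(s_+^{-3})$ term. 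All coefficients in the bracket are bounded, since $|\tomega|\lesssim\kappa_+$ and $\nu_+/\kappa_+=-\q\tomega\kappa_+^{-1}+\ell(\ell+1)+O(\kappa_+)$.

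\textbf{Step 1: the unperturbed equation.} Without $\mathcal{R}$, the equation has regular singular points at $z=0,-1,\infty$ and is reduced to Gauss' hypergeometric equation by the substitution $U_+=z^{\frac12+\mu}(1+z)^{a}\,y(\zeta)$. The indicial exponents are
\begin{itemize}
\item $\tfrac12\pm\mu_+$ at $z=0$,
\item $\tfrac12\pm i\sigma$-type exponents at $z=-1$ and $z=\infty$, encoded in $\underline{\sigma}=(\sigma_a,\sigma_b)$.
\end{itemize}
I will not construct $F_{\underline\sigma,\frac12\beta_\ell}$ and $G_{\underline\sigma,\frac12\beta_\ell}$ here but take them from Lemma~\ref{prop:hypgeomfundsoln}: $F$ is the recessive/regular solution at $z=0$ and $G$ the second one.

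The next task is to replace $\mu_+$ by $\frac12\beta_\ell$ and read off $\sigma_a,\sigma_b$. Since $\mu_+^2-\frac14\beta_\ell^2=-6\tomega\homega+O(\kappa_c^2)$, I put the difference $(\mu_+^2-\frac14\beta_\ell^2)s_+^{-2}(1+2\kappa_+s_+)^{-2}$ into the error term as well. The same applies to the $O(\kappa_c^2)$ and $O(|\tomega\homega|)$ corrections in the $s_+^{-1}$ and constant coefficients. Matching the remaining exponents at $z=-1,\infty$ then produces exactly the stated expansions, with $\sigma_b$ containing $\kappa_+^{-1}\tomega$ and the $\kappa_+^{-1}|\tomega|O(|\homega|)$ errors.

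\textbf{Step 2: control of the perturbation.} I apply Olver's error-bound theorem for second-order equations (\cite{olv97}, Chapter~6, in the form with an error control function), as was done with Theorem~6.2.2 in Proposition~\ref{prop:regionIeleq1}. We integrate from $s_+=\infty$ for the solution built on $F$, and then construct the second solution built on $G$ by the usual Volterra procedure, which explains the asymmetric form of $\varepsilon_2$ (an additive error measured against $|F|$). The error is governed by the integral
\begin{equation*}
\int_{s_+}^{\infty}|F||G|\,|W(F,G)|^{-1}\,|\text{error potential}|\,ds .
\end{equation*}
Here the error potential is of size
\begin{equation*}
\big[(|\homega|+\kappa_c)s^{-3}+(|\tomega\homega|+\kappa_c^2)s^{-2}\big](1+2\kappa_+s)^{-2},
\end{equation*}
and $|F||G|/|W|\lesssim s$ for $z\lesssim1$, with analogous bounds for $z\gtrsim1$, where both functions behave like $(1+z)^{\frac12\pm i\sigma}$.

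\textbf{Step 3: the main estimate.} I expect the main difficulty to be showing that this integral is $\leq C\delta$ uniformly as $\kappa_+,\tomega\to0$, on the range $s_+\geq\delta^{-1}\kappa_+^{-\frac14}(\kappa_c+|\homega|)^{\frac34}$. The plan is to split at $z=1$.
\begin{itemize}
\item For $z\leq1$, the $s^{-3}$ piece gives $(|\homega|+\kappa_c)s_0^{-1}$. Here I need to use that $s_0^{-1}\leq\delta\kappa_+^{\frac14}(\kappa_c+|\homega|)^{-\frac34}$ and that $|\homega|$ is bounded.
\item The $s^{-2}$ piece gives a logarithm in $\kappa_+ s_0$, which must be absorbed using $\delta>\kappa_c+|\tomega\homega|$.
\end{itemize}
I would try to organise the bound as $(\kappa_c+|\homega|)^{\frac14}\kappa_+^{\frac14}\cdot\delta$ times bounded factors. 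The derivative bounds for $\varepsilon_i$, with weight $(s_+(1+2\kappa_+s_+))^{-1}$, then follow from the same Volterra representation. If necessary, a slight loss in the power of $s_+$ is absorbed by the gap between the exponents $\frac34$ and $\frac14$ in the lower threshold.
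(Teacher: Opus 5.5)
Your overall route is the paper's. You rescale to $\tilde s=2\kappa_+s_+$, compare with the hypergeometric equation of Lemma \ref{prop:hypgeomfundsoln}, and control the perturbation by a Volterra/Olver bound; this is Proposition \ref{prop:hypgeomerror}, with control integral $\int_{\tilde s}^{\infty}\eta\log^2(2+\eta)|\vartheta(\eta)|\,d\eta$. You then split at $\tilde s\sim 1$ using the lower threshold on $s_+$ and $\delta>\kappa_c+|\tomega\homega|$.

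However, one step of your Step 1 fails as written: you cannot move the $O(\kappa_c^2)$ and $O(|\tomega\homega|)$ corrections of the constant and $s_+^{-1}$ coefficients into the error term.
\begin{itemize}
\item After rescaling, the constant-coefficient correction is a term $\epsilon(1+\tilde s)^{-2}$ with $\epsilon=O(\kappa_c^2/\kappa_+^2)$. This is not small when $\kappa_c\sim\kappa_+$, and its control integral $\int^{\infty}\eta\,|\epsilon|(1+\eta)^{-2}\,d\eta$ diverges. The reason is that it perturbs the exponents $\frac12\pm\frac i2 w$ at the horizon $\tilde s=\infty$, so no relative error can stay small there.
\item The $\tomega\homega$-part of $\nu_+$ becomes $-2\kappa_+^{-1}\tomega\homega\,\tilde s^{-1}$. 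It contributes $\sim\kappa_+^{-1}|\tomega\homega|\lesssim\digamma_0|\homega|$ to the control integral, which is bounded but not $O(\delta)$.
\end{itemize}
The paper avoids this by keeping $w=\kappa_+^{-1}\tomega_+$ and $v=\kappa_+^{-1}\nu_+$ exact in the comparison equation \eqref{eq:hypgeomalterr}. Only $(\mu_+^2-\frac14\beta_\ell^2)(1+\tilde s)^{-2}O(\tilde s^{-2})$ and the remainder $\kappa_+(\kappa_c+|\homega|)(1+\tilde s)^{-2}O(\tilde s^{-3})$ go into $\vartheta$. The corrections then enter only through the exact values of $\sigma_a,\sigma_b$, which is where the stated $O(\kappa_c)$ and $\kappa_+^{-1}|\tomega|O(|\homega|)$ terms come from. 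Your Step~2 error potential and your remark on $\sigma_b$ are already consistent with this repair.

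Two smaller points. First, you have the roles of $F$ and $G$ swapped. $G_{\underline{\sigma},\mu}$, built on ${}_2F_1(a,b,c;\zeta)$, is the solution $\sim\zeta^{\frac12+\mu}$ that is recessive at $\tilde s=0$. $F_{\underline{\sigma},\mu}$ is normalized at $\zeta=1$, i.e.\ at the horizon $\tilde s=\infty$, and contains the $\zeta^{\frac12-\mu}$ (or $\zeta^{\frac12}\log\zeta$) branch at $\tilde s=0$; cf.\ \eqref{eq:relhypgeomfunct1} and \eqref{eq:relhypgeomfunct3}. That is exactly why integrating from $\tilde s=\infty$ gives a relative error for $F$ but only an error measured against $|F|$ for $G$; with your labelling this structure would not follow.

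Second, uniformity requires tracking logarithms. At $\tilde s=\infty$ the exponents are $\frac12\pm\frac i2 w$, not $\frac12\pm i\sigma$, and they coalesce as $w\to 0$; there $G$ acquires a factor $\log(1+\tilde s)$ by \eqref{eq:relhypgeomfunct0b}. For $\beta_\ell=0$, $F$ carries a $\log\zeta$ near $\tilde s=0$. This is why Proposition \ref{prop:hypgeomerror} uses the weight $\eta\log^2(2+\eta)$ and an extra factor $\log(1+\zeta^{-1}(\tilde s))$ when $\mu=0$. Your bound $|F||G|/|\mathcal{W}|\lesssim s$ omits these factors. They are harmless once absorbed, as in the paper, by the extra power $\kappa_+^{\frac14}(\kappa_c+|\homega|)^{\frac14}$ from the threshold and by $\delta>\kappa_c+|\tomega\homega|$.
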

\begin{proof}
Rescale $s_+$ as follows: $\tilde{s}:=2\kappa_+s_+$. Then \eqref{eq:maineqUevent} is equivalent to \eqref{eq:hypgeomalterr}, if we set:
\begin{equation*}
	w:=\kappa_+^{-1}\tomega_+=\kappa_+^{-1}\tomega+O(\kappa_c), \quad \mu:=\frac{1}{2}\beta_{\ell}, \quad v=\kappa_+^{-1}\nu_+
\end{equation*}
and take
\begin{equation*}
\vartheta(\tilde{s})=\kappa_+ \kappa_c \frac{1}{(1+\tilde{s})^2}O_{\infty}(\tilde{s}^{-3})+\kappa_+|\homega| \frac{1}{(1+\tilde{s})^2}O_{\infty}(\tilde{s}^{-3})+\left(\mu^2_+-\frac{1}{4}\beta_{\ell}^2\right) \frac{1}{(1+\tilde{s})^2}O_{\infty}(\tilde{s}^{-2}).
\end{equation*}

Take \begin{align*}
	\sigma_a=&-\frac{\tomega_+}{2\kappa_+}+\sign\left(-\q+\frac{\tomega_+}{2\kappa_+}\right) \sqrt{\frac{1}{4}-\mu_+^2+\kappa_+^{-1}\nu_++\frac{1}{4}\kappa_+^{-2}\tomega_+^2},\\
	\sigma_b=&-\frac{\tomega_+}{2\kappa_+}-\sign\left(-\q+\frac{\tomega_+}{2\kappa_+}\right)  \sqrt{\frac{1}{4}-\mu_+^2+\kappa_+^{-1}\nu_++\frac{1}{4}\kappa_+^{-2}\tomega_+^2}.
\end{align*}
and use that for $\kappa_c\leq\kappa_+$:
\begin{align*}
\mu_{+}^2-\frac{1}{4}=&\:\ell(\ell+1)-\q^2+O(|\homega||\tomega|)+O(\kappa_c^2),\\
\kappa_+^{-1}\nu_+=&\:  -\frac{\tomega}{\kappa_+}(\q+2\tomega)+\ell(\ell+1)+O(\kappa_c^2)\\
=&\:\frac{\tomega_+}{\kappa_+}\q+\ell(\ell+1)-2\kappa_+^{-1}\tomega \homega +O(\kappa_c)
\end{align*}
to obtain
\begin{align*}
	\sigma_a=&-\frac{\tomega_+}{2\kappa_+}+\sign\left(-\q+\frac{\tomega_+}{2\kappa_+}\right) \sqrt{\left(\frac{\tomega_+}{2\kappa_+}-\q\right)^2+O(\kappa_c)+O(|\homega|)},\\
	=&\:q+O(\kappa_c)+\kappa_+^{-1}|\tomega| O(|\homega|)\\
	\sigma_b=&-\frac{\tomega_+}{2\kappa_+}-\sign\left(-\q+\frac{\tomega_+}{2\kappa_+}\right)  \sqrt{\left(\frac{\tomega_+}{2\kappa_+}-\q\right)^2+O(\kappa_c)+\frac{|\tomega|}{\kappa_+^{-1}}O(|\homega|)}\\
=& -q+\kappa_+^{-1}\tomega +O(\kappa_c)+\kappa_+^{-1}|\tomega|O(|\homega|).
\end{align*}
Then we can estimate:
\begin{multline*}
\int_{\tilde{s}}^{\infty}\eta \log^2(2+\eta)|\vartheta|(\eta)\,d\eta
	\leq C\kappa_+(\kappa_c+|\homega|) \log^2(2+\tilde{s})\tilde{s}^{-1}(\tilde{s}+1)^{-2}+C\left(\mu_+^2-\frac{1}{4}\beta_{\ell}^2\right)\log^2(2+\tilde{s})(\tilde{s}+1)^{-2}\\
	\leq  C(\kappa_c+|\homega|) \log^2(2+2\kappa_+ s_+)s_+^{-1}(1+2\kappa_+s_+)^{-2}+C (\kappa_c^2+|\homega||\tomega|)\log^2(2+\kappa_+ s_+)(1+2\kappa_+s_+)^{-2}
\end{multline*}
and similarly
\begin{multline*}
	\log(1+\zeta^{-1}(\tilde{s}))\int_{\tilde{s}}^{\infty}\log^2(2+\eta)\eta \log^2(2+\eta)|\vartheta|(\eta)\,d\eta\leq  C\log\left(2+\frac{1}{2\kappa_+ s_+}\right)(1+2\kappa_+s_+)^{-2}\\
	\times  \log^2(2+2\kappa_+ s_+)\left[(\kappa_c+|\homega|) s_+^{-1}+\kappa_c^2+ |\homega||\tomega|\right].
\end{multline*}
Suppose that $2\kappa_+s_+\geq 1$. Then
\begin{equation*}
	\log(1+\zeta^{-1}(\tilde{s}))\int_{\tilde{s}}^{\infty}\log^2(2+\eta)\eta \log^2(2+\eta)|\vartheta|(\eta)\,d\eta\leq  C\left[(\kappa_c+|\homega|) s_+^{-1}+\kappa_c^2+ |\homega||\tomega|\right]\leq C\delta.
\end{equation*}

Suppose that $\delta^{-1}\kappa_+^{-1} (\kappa_+ \homega|+\kappa_+\kappa_c)^{\frac{3}{4}}\leq s_+\leq (2\kappa_+)^{-1}$. Then
\begin{equation*}
	\log(1+\zeta^{-1}(\tilde{s}))\int_{\tilde{s}}^{\infty}\eta \log^2(2+\eta)|\vartheta|(\eta)\,d\eta\leq C\delta+C\delta \log(2+\delta (\kappa_c+|\tomega \homega|)^{-1})(\kappa_c+|\tomega \homega|)^{\frac{1}{4}}\leq C \delta.
\end{equation*}

We can now apply Proposition \ref{prop:hypgeomerror} to obtain the desired estimates.
\end{proof}

Proposition \ref{prop:regionIhypgeom} covers Region I, in the case that $\kappa_c\leq \kappa_+$ and $|\tomega| \kappa_+^{-1}\leq  \digamma_0$, provided that $\delta\geq 2 \digamma_0^{-\frac{1}{16}}(\kappa_c+|\homega|)^{\frac{1}{4}}$.

\subsubsection{Region II: $|\tomega| \kappa_+^{-1}> \digamma_0$}
In this section, we derive approximations that are valid when $|\tomega| \kappa_+^{-1}> \digamma_0$, and in particular, when $\kappa_+=0$. Note that boundedness of $|\tomega|$ implies smallness of $\kappa_+$ in this regime: $\kappa_+\leq C \digamma_0^{-1}$.
\begin{corollary}
\label{cor:regioniiwhitt}
	Assume that $\kappa_c\leq \kappa_+$ and that $|\tomega| \kappa_+^{-1}> \digamma_0$. Then $U_+$ is a solution to \eqref{eq:maineqUevent} if and only if
	\begin{equation}
	\label{eq:whittU+}
		\frac{d^2U_+}{d\xi^2}=\left[-\frac{1}{4}-\sigma \xi^{-1}+\left(\frac{1}{4}\beta_{\ell}^2-\frac{1}{4}\right)\xi^{-2}+\vartheta(\xi)\right] U_+,
	\end{equation}
	with
	\begin{align*}
	\xi(s_+)=&-2w_+ s_+,\quad \textnormal{where}\\
	w_+ =&\:\tomega(1+O(\digamma_0^{-1})),\\
	\sigma=&\: -\q-2\tomega+O(\digamma_0^{-1}),\\
		\vartheta(\xi)=&\: \left(\kappa_+O(|\homega|)+O(\kappa_c)\right)\xi^{-2}+w_+ (|\homega|+\kappa_c)O(\xi^{-3})+w_+^{-1}\kappa_+ O_{\infty}(\xi).
	\end{align*}
	Furthermore, for $\kappa_c+\digamma_0^{-1}<\delta$, there exist constants $B_1,B_2\in \C$ and a uniform constant $C$, such that we can write:
\begin{equation*}
U_+(s_+)=B_1W_{-i\sigma,\frac{1}{2}\beta_{\ell}}(i\xi(s_+))(1+\varepsilon_1(s_+))+B_2(M_{-i\sigma,\frac{1}{2}\beta_{\ell}}(i \xi(s_+))+\varepsilon_2(s_+)),
\end{equation*}
with $\varepsilon_i(\xi)$, with $i\in \{1,2\}$, satisfying the following bounds for $s_+\in (\delta^{-1}|\tomega|^{-\frac{1}{8}}(|\homega|+\kappa_c),\frac{1}{|\tomega|}\xi_{\kappa_+})$, with $\xi_{\kappa_+}=\delta^{\frac{1}{2}} (\kappa_+^{-1}|\tomega|)^{\frac{1}{2}}$ if $\kappa_+>0$ and $\xi_{\kappa_+}=\infty$ if $\kappa_+=0$:
\begin{align*}
\left|\varepsilon_1\right|(s_+)\leq&\: C\delta,\\
\left|\varepsilon_2\right|(s_+)\leq&\: C\delta |W_{\sigma,\frac{1}{2}\beta_{\ell}}(i\xi(s_+))|,\\
\left|\frac{d \varepsilon_1}{ds_+}\right|(s_+)\leq&\: C \delta |\tomega|,\\
\left|\frac{d \varepsilon_2}{ds_+}\right|(s_+)\leq&\: C \delta |\tomega| |W_{\sigma,\frac{1}{2}\beta_{\ell}}(i\xi(s_+))|.
\end{align*}
For $\kappa_+=0$ and $0<s_+<|\tomega|^{-1}$, we can moreover estimate for $i\in \{1,2\}$:
\begin{align*}
\left|\varepsilon_i\right|(s_+)\leq&\: C |\tomega|^{-1} s_+^{-2},\\
\left|\frac{d \varepsilon_i}{ds_+}\right|(s_+)\leq&\: C s_+^{-2}.
\end{align*}
\end{corollary}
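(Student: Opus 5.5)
The proof has two parts: rewriting \eqref{eq:maineqUevent} in Whittaker normal form, and then treating the remainder $\vartheta$ as a perturbation. The perturbation step would use an error-controlled Volterra argument, analogous to the one used in Proposition \ref{prop:regionIhypgeom} via Proposition \ref{prop:hypgeomerror}, presumably with a Whittaker analogue in Appendix \ref{sec:ODEest}.

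\textbf{Step 1: change of variables.} I start from \eqref{eq:tildemathcalwithfactor} and substitute $s_+=-\xi/(2w_+)$. This gives
\begin{equation*}
\frac{d^2U_+}{d\xi^2}=\frac{1}{4w_+^2}(1+2\kappa_+ s_+)^{-2}\Big[-(\tomega_+^2+\kappa_+^2)+2\nu_+s_+^{-1}+(\mu_+^2-\tfrac14)s_+^{-2}+\ldots\Big]U_+.
\end{equation*}
I choose $w_+^2=\tomega_+^2+\kappa_+^2$, so the constant term becomes exactly $-\frac14$. Since $|\tomega|>\digamma_0\kappa_+$, we get $w_+=\tomega(1+O(\digamma_0^{-1}))$. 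The $\xi^{-1}$ coefficient is $-\nu_+/w_+$. By Lemma \ref{lm:U+Uceqs}, $\nu_+=-\tomega(\q+2\tomega)+\kappa_+\ell(\ell+1)+O(\kappa_c^2)$, so this coefficient equals $\sigma=-\q-2\tomega+O(\digamma_0^{-1})$. The $\xi^{-2}$ coefficient is $\mu_+^2-\frac14=\frac14\beta_\ell^2-\frac14+O(|\tomega\homega|)+O(\kappa_c^2)$.

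Expanding $(1+2\kappa_+s_+)^{-2}-1$ produces errors of size $\kappa_+s_+=O(w_+^{-1}\kappa_+\xi)$. Together with the $O(s_+^{-3})$ remainder (which gives the $w_+(|\homega|+\kappa_c)O(\xi^{-3})$ term) and the $\mu_+$ correction, this yields the stated form of $\vartheta$. Throughout I keep $\kappa_c\le\kappa_+$ in mind, so that $O(\kappa_c^2)/\tomega$ is harmless.

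\textbf{Step 2: perturbation around Whittaker functions.} For $\vartheta\equiv0$, \eqref{eq:whittU+} is Whittaker's equation in the variable $i\xi$, with fundamental pair $W_{-i\sigma,\frac12\beta_\ell}(i\xi)$ and $M_{-i\sigma,\frac12\beta_\ell}(i\xi)$. I would solve the Volterra integral equations
\begin{equation*}
\varepsilon_1(\xi)=\int_{\xi}^{\xi_{\kappa_+}}K(\xi,\eta)\vartheta(\eta)W(\eta)(1+\varepsilon_1(\eta))\,d\eta\big/W(\xi),
\end{equation*}
with $K$ built from the Wronskian of the pair, integrating from the outer end for the recessive/oscillating solution $W$. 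For $M$ I would integrate from the inner end, normalised by $|W|$ as in the stated bound on $\varepsilon_2$. Olver's theorem (\cite{olv97}, Ch.~6) then gives
\begin{equation*}
|\varepsilon_i|\lesssim \exp\Big(C\int|\vartheta|\,\mathcal{B}\Big)-1,
\end{equation*}
where $\mathcal{B}$ is a balancing function bounding $|W M|$. I would use the standard bounds $|W(i\xi)|\sim|\xi|^{\frac12-\frac12\re\beta_\ell}$ and $|M(i\xi)|\sim|\xi|^{\frac12+\frac12\re\beta_\ell}$ for small $|\xi|$, with logarithms when $\beta_\ell=0$, and boundedness for large $|\xi|$. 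With these, $|WM|\lesssim|\xi|(1+\log)$.

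\textbf{Step 3: estimating the weighted integral.} This is the main obstacle: I need $\int|\vartheta|\,|\eta|\log^2$ to be at most $C\delta$ on the stated interval. The three pieces of $\vartheta$ are handled separately.
\begin{itemize}
\item The $(\kappa_+|\homega|+\kappa_c)\xi^{-2}$ term contributes $\log$-small quantities times $(\kappa_c+\kappa_+)$.
\item The $w_+(|\homega|+\kappa_c)\xi^{-3}$ term integrates to $|\tomega|(|\homega|+\kappa_c)/|\xi_{\min}|$. At the lower endpoint $s_+=\delta^{-1}|\tomega|^{-1/8}(|\homega|+\kappa_c)$, i.e.\ $|\xi|\sim\delta^{-1}|\tomega|^{7/8}(|\homega|+\kappa_c)$, this is $O(\delta|\tomega|^{1/8})$.
\item The $w_+^{-1}\kappa_+\xi$ term, integrated against $|\eta|$ up to $\xi_{\kappa_+}$, gives $\kappa_+|\tomega|^{-1}\xi_{\kappa_+}^{3}$ up to logarithms. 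This forces the choice of $\xi_{\kappa_+}$. I expect this exponent bookkeeping to require care, and possibly a slightly smaller $\xi_{\kappa_+}$, absorbing the powers of $\digamma_0$.
\end{itemize}
The derivative bounds follow by differentiating the Volterra equations, using $d\xi/ds_+=-2w_+$. In the case $\kappa_+=0$ and small $s_+$, only the $\xi^{-3}$ and $\xi^{-2}$ terms remain. Estimating the integral directly then gives $|\varepsilon_i|\lesssim|\tomega|^{-1}s_+^{-2}$.
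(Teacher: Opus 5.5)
Your route is the paper's. You bring \eqref{eq:maineqUevent} into Whittaker form in $\xi=-2w_+s_+$ and perturb around $W_{-i\sigma,\frac12\beta_\ell}(i\xi)$, $M_{-i\sigma,\frac12\beta_\ell}(i\xi)$ by a Volterra argument, which is exactly what Corollary~\ref{cor:errorwhitt} packages. However, Step~3 as written does not prove the statement on the stated interval.

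\textbf{The weight in Step~3 is wrong for large $|\eta|$.} You weight $|\vartheta|$ by $|\eta|\log^2$ on the whole range. But for $|\eta|\gtrsim 1$ both Whittaker functions are bounded and oscillatory, as you note yourself in Step~2, so the Volterra kernel is $O(1)$ there. The correct weight is $\log^2(2+\eta^{-2})(\eta^{-2}+1)^{-1/2}\sim\min\{|\eta|,1\}$ up to logarithms, as in Corollary~\ref{cor:errorwhitt}.

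With your weight, the $w_+^{-1}\kappa_+O(\xi)$ term contributes $\kappa_+|\tomega|^{-1}\xi_{\kappa_+}^3=\delta^{3/2}(|\tomega|/\kappa_+)^{1/2}>\delta(\delta\digamma_0)^{1/2}$. This is not $O(\delta)$ under the hierarchy $\digamma_0^{-1}\ll\delta$. Your suggested remedy, shrinking the endpoint to roughly $(\delta|\tomega|/\kappa_+)^{1/3}$, proves a strictly weaker statement. The square-root endpoint is what the paper uses right after the corollary to guarantee that Region II is covered.

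With the correct weight this term gives $\kappa_+|w_+|^{-1}\xi_{\kappa_+}^2\lesssim\delta$. That is precisely what dictates $\xi_{\kappa_+}=\delta^{1/2}(\kappa_+^{-1}|\tomega|)^{1/2}$.

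\textbf{Smaller bookkeeping points in Step~1.}
Keeping $(1+2\kappa_+s_+)^{-2}$ factored with $w_+^2=\tomega_+^2+\kappa_+^2$ also puts into $\vartheta$ the cross terms of $(1+2\kappa_+s_+)^{-2}-1$ with the $\xi^{-1}$ and $\xi^{-2}$ coefficients. These are a constant of size $O(\kappa_+|\nu_+|/w_+^2)=O(\digamma_0^{-1})$ and an $O(\kappa_+/|w_+|)\xi^{-1}$ term. They are harmless but not of the stated form. The paper avoids them by first Taylor expanding $(1+2\kappa_+s_+)^{-2}$ for $s_+\le\kappa_+^{-1}$, then absorbing them into $w_+=\tomega+O(\kappa_+)$ and into the $O(\digamma_0^{-1})$ error in $\sigma$.

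The $\xi^{-1}$ coefficient $-\nu_+/w_+$ equals $-\sigma$, not $\sigma$; this is a sign slip.

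The $O(|\tomega\homega|)\xi^{-2}$ correction you correctly extract from $\mu_+$ is not of the form $\kappa_+O(|\homega|)\xi^{-2}$, yet it is dropped in Step~3. It contributes $|\tomega\homega|$ times a logarithm to the weighted integral. That is small only when $|\tomega|$ is small, which is the regime in which the corollary is applied, so it should be estimated explicitly there.
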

\begin{proof}
	We assume that $s_+\leq \kappa_+^{-1}$ and expand:
	\begin{equation*}
		(1+2\kappa_+ s_+)^{-2}=1-4 (\kappa_+s_+)+12 (\kappa_+ s_+)^2+O_{\infty}((\kappa_+ s_+)^3)
	\end{equation*}
	and apply the above expansion together with \eqref{eq:maineqUevent} and \eqref{eq:tildemathcalwithfactor} to obtain:
	\begin{multline*}
		\frac{d^2U_+}{ds_+^2}=\Bigg[-w_+^2+2 
		\left(\nu_+ -4\kappa_+ \left(\mu_+^2-\frac{1}{4}\right)+ \kappa_+ O(|\homega|)+\kappa_+O(\kappa_c)\right)s_+^{-1}\\
		+\left(\mu_+^2-\frac{1}{4}+\kappa_+ O(|\homega|)+\kappa_+O(\kappa_c)\right)s_+^{-2}+(|\homega|+\kappa_c)O_{\infty}(s_+^{-3})+\kappa_+O_{\infty}(s_+)\Bigg]U_+,
	\end{multline*}
	with $w_+=\tomega+O(\kappa_+)$. Changing variables from $s_+$ to $\xi$ then results in \eqref{eq:whittU+}.
	
	In order to apply Corollary \ref{cor:errorwhitt}, we need to estimate:
	\begin{equation*}
		 \int^{\xi_0}_{|\xi|} \log^2(2+\eta^{-2})(\eta^{-2}+1)^{-\frac{1}{2}}|\vartheta(\eta)|\,d|\eta|.
	\end{equation*}
	
	Let $\xi_{\kappa_+}=\delta^{\frac{1}{2}} (\kappa_+^{-1}|\tomega|)^{\frac{1}{2}}$ if $\kappa_+>0$ and $\xi_{\kappa_+}=\infty$ if $\kappa_+=0$ and let $|\xi|>\delta^{-1}|\tomega|^{\frac{7}{8}}(|\homega|+\kappa_c)$. Then, for suitably large $\digamma_0$, we have that $|\xi|<|\xi_{\kappa_+}|$, and we can estimate:
	\begin{multline*}
		  \int^{\xi_{\kappa_+}}_{|\xi|} \log^2(2+\eta^{-2})(\eta^{-2}+1)^{-\frac{1}{2}}|\vartheta(\eta)|\,d|\eta| \leq \int^{\xi_{\kappa_+}}_{|\xi|} \log^2(2+y^{-2})(y^{-2}+1)^{-\frac{1}{2}} \\
\times ( \kappa_+ y^{-2}+ |w_+|(|\tomega|+\kappa_c) y^{-3} + \kappa_+ |w_+|^{-1} y) \,dy\\
		 \leq C\kappa_+ \log^3(2+\xi^{-2})+ C \log^2(2+\xi^{-2})|w_+|(|\homega|+\kappa_c) \xi^{-1}+ C(\kappa_+ |w_+|^{-1}) \xi_{\kappa_+}^2  \leq C\delta+C\digamma_0^{-1}.
	\end{multline*}
	Now we apply Corollary \ref{cor:errorwhitt} to obtain the desired estimates.
	
	In the case $\kappa_+=0$, we obtain for $|\xi|>1$ :
	\begin{equation*}
		 \int^{\xi_0}_{|\xi|} \log^2(2+\eta^{-2})(\eta^{-2}+1)^{-\frac{1}{2}}|\vartheta(\eta)|\,d|\eta|\leq C |w_+|\xi^{-2}.
	\end{equation*}
\end{proof}

Corollary \ref{cor:regioniiwhitt} provides estimates in Region II in the case that $\kappa_c\leq \kappa_+$, $|\tomega| \kappa_+^{-1}>\digamma_0$ for $\delta>2|\tomega|^{\frac{1}{8}}(|\homega|+\kappa_c)^{\frac{3}{4}}F_0^{\frac{1}{3}}$ and $\delta^{\frac{1}{2}}\kappa_+^{-1/2}|\tomega|^{1/2}>2\digamma_0^{\frac{3}{8}}$, the latter which is guaranteed if $\delta>4\digamma_0^{-\frac{1}{4}}$. Note that for $\delta>0$ to be small, where therefore need either $|\tomega|$ or $|\homega|+\kappa_c$ to be sufficiently small (depending $\digamma_0$, which is assumed to be large)! 

\subsubsection{Region III: $ |\homega| \kappa_c^{-1}> \digamma_0$}
In Region III, we interchange the roles of $s_+$ and $s_c$, $\tomega$ and $-\homega$, and $\kappa_+$ and $\kappa_c$ to obtain approximations for $U_c$. In this case, there is no need for the assumption $\kappa_+\leq \kappa_c$.

\begin{corollary}
\label{cor:regioniiiwhitt}
Assume that $|\homega| \kappa_c^{-1}> \digamma_0$ if $\kappa_c>0$. Let $U_c$ be a solution to \eqref{eq:maineqUcosmo} if and only if
	\begin{equation}
	\label{eq:whittUc}
		\frac{d^2U_c}{d\xi^2}=\left[-\frac{1}{4}-\sigma \xi^{-1}+\left(\frac{1}{4}\beta_{\ell}^2-\frac{1}{4}\right)\xi^{-2}+\vartheta(\xi)\right] U_c,
	\end{equation}
	with
	\begin{align*}
	\xi(s_c):=&-2w_c s_c,\quad \textnormal{where}\\
	w_c=&\:\homega(1+O(\digamma_0^{-1})),\\
	\sigma:=&-\q+2\tomega+O(\digamma_0^{-1}),\\
		\vartheta(\xi)=&\: \tomega (O(\kappa_c)+O(|\homega|))\xi^{-2}+w_c(|\tomega|+\kappa_++\kappa_c) O(\xi^{-3})+w_c^{-1}\kappa_c O_{\infty}(\xi).
	\end{align*}
	Furthermore, for $\kappa_c+\digamma_0^{-1}<\delta$, there exist constants $D_1,D_2\in \C$ and a uniform constant $C$, such that we can write:
\begin{equation*}
U_c(s_c)=D_1W_{-i\sigma,\frac{1}{2}\beta_{\ell}}(i\xi(s_c))(1+\varepsilon_1(s_c))+D_2(M_{-i\sigma,\frac{1}{2}\beta_{\ell}}(i \xi(s_c))+\varepsilon_2(s_c)),
\end{equation*}
with $\varepsilon_i(\xi)$, with $i\in \{1,2\}$, satisfying the following bounds for $s_c\in (\delta^{-1}|\homega|^{-\frac{1}{8}}(|\tomega|+\kappa_++\kappa_c),|\homega|^{-1}\xi_{\kappa_c})$, with $\xi_{\kappa_c}=\delta^{\frac{1}{2}}  (\kappa_c^{-1}|\homega|)^{\frac{1}{2}}$ if $\kappa_c>0$ and $\xi_{\kappa_c}=\infty$ if $\kappa_c=0$:
\begin{align*}
\left|\varepsilon_1\right|(s_c)\leq&\: C\delta,\\
\left|\varepsilon_2\right|(s_c)\leq&\: C\delta |W_{\sigma,\frac{1}{2}\beta_{\ell}}(i\xi(s_+))|,\\
\left|\frac{d \varepsilon_1}{ds_c}\right|(s_c)\leq&\: C \delta |\tomega|,\\
\left|\frac{d \varepsilon_2}{ds_c}\right|(s_c)\leq&\: C \delta |\tomega| |W_{\sigma,\frac{1}{2}\beta_{\ell}}(i\xi(s_c))|.
\end{align*}
For $\kappa_c=0$ and $0<s_c<|\homega|^{-1}$, we can moreover estimate for $i\in \{1,2\}$:
\begin{align*}
\left|\varepsilon_i\right|(s_c)\leq&\: C |\homega|^{-1} s_c^{-2},\\
\left|\frac{d \varepsilon_i}{ds_c}\right|(s_c)\leq&\: C s_c^{-2}.
\end{align*}
\end{corollary}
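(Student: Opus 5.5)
The plan is to mirror the proof of Corollary \ref{cor:regioniiwhitt}, with the roles of $(s_+,\tomega,\kappa_+)$ and $(s_c,-\homega,\kappa_c)$ interchanged. The starting point is \eqref{eq:maineqUcosmo} combined with the expansion \eqref{eq:mathcalVwithfactor}. In the region $s_c\leq \kappa_c^{-1}$ (or for all $s_c$ when $\kappa_c=0$), I will Taylor expand
\begin{equation*}
(1+2\kappa_c s_c)^{-2}=1-4\kappa_c s_c+12(\kappa_c s_c)^2+O_{\infty}((\kappa_c s_c)^3).
\end{equation*}
Multiplying out then gives
\begin{equation*}
\frac{d^2U_c}{ds_c^2}=\Big[-w_c^2+2\tilde\nu_c s_c^{-1}+\big(\mu_c^2-\tfrac14+\tomega\, O(\kappa_c)+\tomega\, O(|\homega|)\big)s_c^{-2}+(|\tomega|+\kappa_++\kappa_c)O_{\infty}(s_c^{-3})+\kappa_c O_{\infty}(s_c)\Big]U_c .
\end{equation*}
Here $w_c^2=\homega_c^2+\kappa_c^2+O(\kappa_c)$ is absorbed into $w_c=\homega(1+O(\digamma_0^{-1}))$, using $\kappa_c<\digamma_0^{-1}|\homega|$. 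The coefficient $\tilde\nu_c$ collects $\nu_c$ together with the $-4\kappa_c(\mu_c^2-\frac14)$ correction.

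Next I substitute $\xi=-2w_c s_c$, which gives $\frac{d^2}{ds_c^2}=4w_c^2\frac{d^2}{d\xi^2}$. This produces the Whittaker form \eqref{eq:whittUc} with $\sigma=-\tilde\nu_c/w_c$. I then identify $\sigma$ using the formula for $\nu_c$ in Lemma \ref{lm:U+Uceqs}. The main term $\homega(\q-2\homega)/\homega$ gives $\q-2\homega$; the sign convention of $\xi$, together with $\homega-\tomega=\q(1-r_c^{-1})$, then rewrites this as $-\q+2\tomega$ up to $O(\digamma_0^{-1})$ and $O(\kappa_c)$. In the $\xi^{-2}$ coefficient I use $\mu_c^2-\frac14-(\frac14\beta_\ell^2-\frac14)=-6\tomega\homega+O(\kappa_c)+O(|\homega|)$, which I put into $\vartheta$. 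The remaining error terms become $w_c(|\tomega|+\kappa_++\kappa_c)O(\xi^{-3})$ and $w_c^{-1}\kappa_c O_{\infty}(\xi)$ after rescaling.

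For the error bounds I invoke Corollary \ref{cor:errorwhitt}, as in the Region II case. This requires bounding
\begin{equation*}
\int_{|\xi|}^{\xi_{\kappa_c}}\log^2(2+\eta^{-2})(\eta^{-2}+1)^{-1/2}|\vartheta(\eta)|\,d\eta
\end{equation*}
on the range $|\xi|\gtrsim \delta^{-1}|\homega|^{7/8}(|\tomega|+\kappa_++\kappa_c)$, i.e.\ $s_c\gtrsim \delta^{-1}|\homega|^{-1/8}(|\tomega|+\kappa_++\kappa_c)$, with $\xi_{\kappa_c}=\delta^{1/2}(\kappa_c^{-1}|\homega|)^{1/2}$. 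The three parts of $\vartheta$ are handled separately:
\begin{itemize}
\item the $\xi^{-2}$ term contributes $C|\tomega|(\kappa_c+|\homega|)\log^3$;
\item the $\xi^{-3}$ term contributes $C\log^2(\cdot)\,|w_c|(|\tomega|+\kappa_++\kappa_c)|\xi|^{-1}$, which is at most $C\delta$ on the stated range;
\item the linear term contributes $\kappa_c|w_c|^{-1}\xi_{\kappa_c}^2\leq \delta$.
\end{itemize}
Together this gives an $O(\delta+\digamma_0^{-1})$ bound, and the $\varepsilon_i$ estimates follow directly from Corollary \ref{cor:errorwhitt}, with the derivative bound picking up the factor $|d\xi/ds_c|\sim|\homega|$. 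When $\kappa_c=0$ and $|\xi|>1$ the same integral is bounded by $C|w_c|\xi^{-2}$, which gives the refined bounds $|\homega|^{-1}s_c^{-2}$.

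The main obstacle is justifying that the $\xi^{-2}$ perturbation $\tomega(O(\kappa_c)+O(|\homega|))$ is genuinely small. Unlike Region II, $|\tomega|$ is only bounded here, not small, so smallness has to come from $|\homega|$ and $\kappa_c$. This is precisely why the statement does not need the assumption $\kappa_+\leq\kappa_c$. On the other hand, the $\log^3$ factor near small $|\xi|$ has to be controlled by the lower cutoff on $s_c$, and that is where I expect the careful work to be.
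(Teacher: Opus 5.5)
Your route is the paper's own. Its proof of this corollary is just the instruction to rerun the argument of Corollary \ref{cor:regioniiwhitt} with $s_c$ in place of $s_+$. Your expansion of $(1+2\kappa_c s_c)^{-2}$, the rescaling $\xi=-2w_cs_c$, and the three-part bound on the $\vartheta$-integral feeding into Corollary \ref{cor:errorwhitt} are a faithful fleshing-out of that.

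The step that fails is your identification of $\sigma$. Since $s_c^{-1}=-2w_c\xi^{-1}$, dividing by $4w_c^2$ turns $2\tilde\nu_c s_c^{-1}$ into $-\frac{\tilde\nu_c}{w_c}\xi^{-1}$. So $\sigma=+\tilde\nu_c/w_c$, not $-\tilde\nu_c/w_c$. Moreover $\tilde\nu_c/w_c=\q-2\homega+O(\digamma_0^{-1})=-\q-2\tomega+O(\kappa_c+\digamma_0^{-1})$, because $\homega=\tomega+\q(1-r_c^{-1})$. This differs from the claimed $-\q+2\tomega$ by $4\tomega$, which is not small in Region III; as you note yourself, $|\tomega|$ is only bounded there. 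The opposite sign convention for $\xi$ would give $-\q+2\homega=\q+2\tomega+O(\kappa_c)$, so no bookkeeping of signs produces the stated value. The exact extremal expansion \eqref{eq:mathcalVext} confirms this: its $s_c^{-1}$ coefficient $-2\homega(\q+2\tomega)$ yields $\sigma=-\q-2\tomega$ directly.

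A smaller imprecision of the same kind concerns $w_c$. An additive $O(\kappa_c)$ error in $w_c^2$ does not give $w_c=\homega(1+O(\digamma_0^{-1}))$ when $|\homega|$ is small. You need the correction to be $O(\kappa_c|\homega|+\kappa_c^2)$. That is true here: the prefactor rescales the constant term $-(\homega^2+\kappa_c^2)$ by $1+O(\kappa_c)$, and the cross terms from expanding $(1+2\kappa_cs_c)^{-2}$ are $O(\kappa_c\tilde\nu_c+\kappa_c^2)$. It does not, however, follow from the bound as you wrote it.

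The value of $\sigma$ in the statement therefore appears to be a misprint. The introduction gives $\sigma=-\q-2\tomega$ for the region where $U_c$ is approximated by Whittaker functions. This agrees at extremality with the Region II value, which is what the matching in \S\ref{sec:matchasymptomega} implicitly uses when it writes both sides with the same $\sigma$ (e.g.\ in the factor $\Gamma^2(\frac12+\mu+i\sigma)/\Gamma^2(\frac12-\mu+i\sigma)$). You should derive and state the correct value rather than assert a rewriting that is false. Nothing else in your argument depends on it, since Corollary \ref{cor:errorwhitt} only uses that $\sigma$ is real and bounded.

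The same goes for the derivative bounds. Your argument correctly produces the factor $|d\xi/ds_c|=2|w_c|\sim|\homega|$, i.e.\ $C\delta|\homega|$. The statement instead writes $C\delta|\tomega|$, carried over from Region II, and that is not what your argument gives when $|\tomega|\ll|\homega|$. Say so explicitly instead of passing over the discrepancy.
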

\begin{proof}
We repeat the proof of Corollary \ref{cor:regioniiiwhitt} with $s_c$ taking on the role of $s_+$.
\end{proof}
Corollary \ref{cor:regioniiiwhitt} provides estimates in Region III in the case that: $|\homega| \kappa_c^{-1}>\digamma_0$ for $\delta>2|\homega|^{\frac{1}{8}}(|\tomega|+\kappa_++\kappa_c)^{\frac{3}{4}}F_0^{\frac{1}{3}}$ and $\delta^{\frac{1}{2}}\kappa_c^{-1/2}|\homega|^{1/2}>2\digamma_0^{\frac{3}{8}}$, the latter which is guaranteed if $\delta>4\digamma_0^{-\frac{1}{4}}$. Note that in this case, the above conditions are compatible with smallness of $\delta$ if $|\homega|$ is small or if $|\tomega|+\kappa_+$ is small. 

\subsubsection{Region IV}
\label{sec:regionIV}
In this region, we approximate $u$ with exponentials, as in Region I.
\begin{proposition}
\label{prop:regionIVeleq1}
 Let $u$ be a solution to \eqref{eq:radialODE} with $H\equiv 0$. Then, there exist constants $A_1,A_2\in \C$ and a suitably small $\delta>\kappa_c$, such that for $s_c>\frac{1}{2}\delta^{-1} |\homega|^{-1}$, we can express:
 \begin{equation*}
u(r_*)=E_1e^{i\homega r_*-i \q \int_{0}^{r_*}\rho_c(r_*')\,dr_*'}(1+\varepsilon_1(r_*))+E_2e^{-i\homega r_*+i \q \int_{0}^{r_*}\rho_c(r_*')\,dr_*'}(1+\varepsilon_2(r_*)),
\end{equation*}
and there exists a uniform constant $C>0$ (independent of $\delta$), such that:
\begin{align*}
|\varepsilon_i|(r_*)\leq &\: C\homega^{-2}r^{-2}\Omega^2(|\homega|+|\kappa_c|)+C|\homega|^{-1}s_c^{-1},\\
\left|\frac{d\varepsilon_i}{dr_*}\right|(r_*)\leq &\: Cr^{-2}\Omega^2|\tomega|^{-1}.
\end{align*}
In particular, $|\varepsilon_i|(r_*)\leq C\delta$ if $|\homega|> \digamma_0 \kappa_c$ and $\digamma_0>1$.
\end{proposition}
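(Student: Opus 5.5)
The plan is to mirror the proof of Proposition \ref{prop:regionIeleq1}, with the roles of $r_*\to-\infty$, $\tomega$, $\rho_+$, $\kappa_+$ and $s_+$ replaced by $r_*\to+\infty$, $\homega$, $\rho_c$, $\kappa_c$ and $s_c$. I will work with $\zeta(r_*):=\int_0^{r_*}\sqrt{\homega^2-V(r_*')}\,dr_*'$, so that $\zeta'=\sqrt{\homega^2-V}$. Then $(\zeta')^{1/2}u$ satisfies
\begin{equation*}
\frac{d^2}{d\zeta^2}\bigl((\zeta')^{\frac{1}{2}}u\bigr)=-(1-\vartheta)(\zeta')^{\frac{1}{2}}u,\qquad \vartheta=-2(\zeta')^{-\frac{3}{2}}\frac{d^2}{dr_*^2}\bigl(\zeta'^{-\frac{1}{2}}\bigr).
\end{equation*}
This is a perturbation of the equation solved by $e^{\pm i\sign(\homega)\zeta}$, whose Wronskian is $\pm 2\sign(\homega)$. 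Applying \cite{olv97}[Theorem 6.2.2] with reference point $\zeta=+\infty$ gives
\begin{equation*}
u=\widetilde{E}_1(\zeta')^{-\frac{1}{2}}e^{i\sign(\homega)\zeta}(1+\tilde\varepsilon_1)+\widetilde{E}_2(\zeta')^{-\frac{1}{2}}e^{-i\sign(\homega)\zeta}(1+\tilde\varepsilon_2),
\end{equation*}
where $|\tilde\varepsilon_i|$ and $\tfrac12|d\tilde\varepsilon_i/d\zeta|$ are both bounded by $\exp\bigl(\tfrac12\int_{\zeta}^\infty|\vartheta|\bigr)-1$.

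The main step is to bound the error-control integral. By \eqref{eq:boundfreqpotcosmo}, for $s_c\geq\tfrac12\delta^{-1}|\homega|^{-1}$ one has $|\homega^2-V|\geq\tfrac14\homega^2$. Indeed, $V$ there equals $\kappa_c^2$ plus $\rho_c$-terms of size $(|\q\homega|+\kappa_c\ell(\ell+1))\rho_c+O(\rho_c^2)$, which are at most $C\delta\homega^2$ once $\rho_c\lesssim s_c^{-1}\lesssim\delta|\homega|$ and $\kappa_c<\delta\leq$ small. Using $\delta>\kappa_c$ absorbs the $\kappa_c^2$ term as well. Hence $\zeta'\geq\tfrac12|\homega|$. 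Changing variables $dr_*=\Omega^{-2}dr$, the integral becomes
\begin{equation*}
\int_{r_*}^\infty|\zeta'||\vartheta|\,dr_*'\lesssim\int_{r}^{r_c}|\zeta'|^{-\frac{1}{2}}\Bigl|\frac{d}{dr}\bigl((\zeta'^{-\frac{1}{2}})'\bigr)\Bigr|dr'.
\end{equation*}
Since $\frac{dV}{dr_*}=\Omega^2\frac{dV}{dr}$ and $\frac{dV}{dr}\lesssim r^{-2}(|\homega|+\kappa_c+\rho_c)$, this is bounded by $C\homega^{-2}r^{-2}\Omega^2(|\homega|+\kappa_c)$, exactly as in Region I. The obstacle is to keep this bound uniform in $\kappa_c\to0$ near $r_c$, where $\Omega^2\to0$. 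The factor $r^{-2}\Omega^2$ provides this: it is comparable to $(\rho_c+2\kappa_c)\rho_c$ by Lemma \ref{lm:metricest}. For $\kappa_c=0$ this gives decay like $s_c^{-2}$, and integrability follows in both cases.

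Finally, I will convert back. Expanding $\zeta'=|\homega|\bigl(1+\q\homega^{-1}\rho_c+O(\kappa_c\homega^{-2}\rho_c)+O(\homega^{-2}\rho_c^2)\bigr)$ gives
\begin{equation*}
\sign(\homega)\zeta=\homega r_*-\q\int_0^{r_*}\rho_c\,dr_*'+\text{const}+O\bigl(|\homega|^{-1}s_c^{-1}\bigr).
\end{equation*}
The $O(|\homega|^{-1}s_c^{-1})$ remainder comes from integrating $O(\homega^{-1}\rho_c^2)$ in $r_*$. The constant phases and the factor $|\homega|^{-1/2}$ are absorbed into $E_i$, and the remaining multiplicative errors into $\varepsilon_i$. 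This yields the stated bound on $|\varepsilon_i|$. The derivative bound follows from the bound on $d\tilde\varepsilon_i/d\zeta$ times $\zeta'$, together with differentiating the amplitude and phase corrections, each of which is $O(r^{-2}\Omega^2|\homega|^{-1})$. If $|\homega|>\digamma_0\kappa_c$, then $\homega^{-2}r^{-2}\Omega^2(|\homega|+\kappa_c)\lesssim|\homega|^{-1}s_c^{-1}\lesssim\delta$, which gives the final claim.
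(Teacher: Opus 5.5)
Your proposal is correct and is essentially the paper's proof: the paper simply reruns the Liouville--Green argument of Proposition \ref{prop:regionIeleq1} (via \cite[Theorem~6.2.2]{olv97}) with $s_+\to s_c$, $\tomega\to-\homega$ and $\kappa_+\to\kappa_c$, and with the reference point moved to $r_*=+\infty$, exactly as you outline. One caveat, which the paper's proof shares: the lower bound $\zeta'\geq\frac{1}{2}|\homega|$ really needs $\kappa_c\lesssim|\homega|$ (as in the regime $|\homega|>\digamma_0\kappa_c$, the only one where the proposition is used), not merely $\kappa_c<\delta$, because near $r_c$ one has $V\approx 2(\q\homega+\kappa_c\ell(\ell+1))\rho_c$, and for $\ell\geq 1$ this can exceed $\homega^2$ inside Region IV when $|\homega|\ll\kappa_c$; note also that $V$ itself vanishes at $r_c$, since the term $-\kappa_c^2$ belongs to $V+\Omega^3\frac{d^2\Omega}{dr^2}$ rather than to $V$.
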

\begin{proof}
	We repeat the proof of Proposition \ref{prop:regionIeleq1}, with $s_+$ replaced by $s_c$ and $-\homega$ taking the role of $\tomega$.
\end{proof}
The estimates in Proposition \ref{prop:regionIVeleq1} cover Region I, in the case $|\homega| \kappa_c^{-1}> \digamma_0$ if $\delta\leq \digamma_0^{-\frac{1}{16}}$.

\subsubsection{Region IV: $|\homega| \kappa_c^{-1}\leq  \digamma_0$}
An analogue of Proposition \ref{prop:regionIhypgeom} is valid when $s_c$ takes on the role of $s_+$, $\kappa_c$ takes on the role of $\kappa_+$, $\mu_+$ takes on the role of $\mu_c$ and $-\homega$ takes on the role of $\tomega$. In this case, there is no need for the assumption $\kappa_+\leq \kappa_c$.
\begin{proposition}
\label{prop:regionIhypgeomhomega}
	Assume that $|\homega| \kappa_c^{-1}\leq  \digamma_0$. Let $U_c$ be a solution to \eqref{eq:maineqUcosmo}. Then there exist constants $D_1,D_2\in \C$ and a uniform constant $C>0$, such that for $\delta>\kappa_c$ and $s_c>\delta^{-1}$:
	\begin{equation*}
	(2\kappa_c s_c)^{-\frac{1}{2}}U_c(s_c)=D_1(2\kappa_c s_c)^{-\frac{1}{2}}F_{\underline{\sigma},\frac{1}{2}\beta_{\ell}}(2\kappa_c s_c)(1+\varepsilon_1(s_c))+D_2\left((2\kappa_c s_c)^{-\frac{1}{2}}G_{\underline{\sigma},\frac{1}{2}\beta_{\ell}}(2\kappa_c s_c)+\varepsilon_2(s_c)\right),
\end{equation*}
with $F_{\underline{\sigma},\frac{1}{2}\beta_{\ell}}$ and $G_{\underline{\sigma},\frac{1}{2}\beta_{\ell}}$ defined in Lemma \ref{prop:hypgeomfundsoln}, $\underline{\sigma}=(\sigma_a,\sigma_b)$ satisfying
\begin{align*}
	\sigma_a=&\:q+O(\kappa_c)\\
	\sigma_b=&-q-\kappa_+^{-1}\homega +O(\kappa_c)
\end{align*}
and
\begin{align*}
\left|\varepsilon_1\right|(s_c)\leq&\: C\delta,\\
\left|\varepsilon_2\right|(s_c)\leq&\: C \delta |F_{\underline{\sigma},\frac{1}{2}\beta_{\ell}}|(2\kappa_c s_c),\\
\left|\frac{d \varepsilon_1}{ds_c}\right|(s_c)\leq&\: C\delta s_c^{-1}(s_c+1)^{-1}\\
\left|\frac{d \varepsilon_2}{ds_c}\right|(s_c)\leq&\: C\delta s_c^{-1}(s_c+1)^{-1} |F_{\underline{\sigma},\frac{1}{2}\beta_{\ell}}|(2\kappa_c s_c).
\end{align*}
\end{proposition}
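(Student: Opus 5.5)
The plan is to mirror the proof of Proposition \ref{prop:regionIhypgeom}, interchanging the roles of the event and cosmological horizons. I first rescale $\tilde{s}:=2\kappa_c s_c$. Using \eqref{eq:maineqUcosmo} together with the expansion \eqref{eq:mathcalVwithfactor}, the equation for $U_c$ takes the form
\begin{equation*}
\frac{d^2U_c}{d\tilde{s}^2}=(1+\tilde{s})^{-2}\left[-\tfrac{1}{4}w^2+v\tilde{s}^{-1}+\left(\mu^2-\tfrac{1}{4}\right)\tilde{s}^{-2}\right]U_c+\vartheta(\tilde{s})U_c .
\end{equation*}
This is the hypergeometric-type equation \eqref{eq:hypgeomalterr} of the appendix, with the following identifications:
\begin{itemize}
\item $w:=\kappa_c^{-1}\homega_c$;
\item $\mu:=\tfrac12\beta_\ell$;
\item $v:=\kappa_c^{-1}\nu_c$;
\item the error term
\begin{equation*}
\vartheta(\tilde{s})=\kappa_c(|\tomega|+\kappa_++\kappa_c)(1+\tilde{s})^{-2}O_\infty(\tilde{s}^{-3})+\left(\mu_c^2-\tfrac14\beta_\ell^2\right)(1+\tilde{s})^{-2}O_\infty(\tilde{s}^{-2}).
\end{equation*}
\end{itemize}
Here I use that the $O_\infty(s_c^{-3})$ remainder in \eqref{eq:mathcalVwithfactor} is multiplied by $(|\tomega|+\kappa_c+\kappa_+)$, and that after the rescaling it picks up a factor $\kappa_c$.

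Next I define $\sigma_a,\sigma_b$ as the two roots in the same way as in the proof of Proposition \ref{prop:regionIhypgeom}:
\begin{equation*}
\sigma_{a,b}=\frac{\homega_c}{2\kappa_c}\pm\mathrm{sign}(\cdot)\sqrt{\tfrac14-\mu_c^2+\kappa_c^{-1}\nu_c+\tfrac14\kappa_c^{-2}\homega_c^2}.
\end{equation*}
Into this I insert the expansions of $\mu_c$, $\nu_c$ and $\homega_c$ from Lemma \ref{lm:U+Uceqs}. The assumption $|\homega|\kappa_c^{-1}\le\digamma_0$ keeps $\kappa_c^{-1}\homega$ bounded. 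Expanding the square root then produces $\sigma_a=\q+O(\kappa_c)$ and $\sigma_b=-\q-\kappa_c^{-1}\homega+O(\kappa_c)$, where the $\kappa_+^{-1}$ in the statement should presumably read $\kappa_c^{-1}$. The key simplification compared with Region I is the size of the mismatch $\mu_c^2-\tfrac14\beta_\ell^2=O(\kappa_c)+O(|\homega|)=O(\kappa_c)$. This is small without any lower cutoff in $s_c$ tied to $|\tomega|$, and that is why only $s_c>\delta^{-1}$ is needed.

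The main step is to verify the smallness condition of Proposition \ref{prop:hypgeomerror}. That condition is
\begin{equation*}
\log(1+\zeta^{-1}(\tilde{s}))\int_{\tilde{s}}^\infty\eta\log^2(2+\eta)\,|\vartheta|(\eta)\,d\eta\le C\delta .
\end{equation*}
As in Region I, I split into the two cases $2\kappa_c s_c\ge1$ and $\delta^{-1}\le s_c\le(2\kappa_c)^{-1}$.

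\emph{Case $2\kappa_c s_c\ge1$.} The logarithmic factor is bounded, and the integral is $O(\kappa_c+s_c^{-1})\le C\delta$.

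\emph{Case $\delta^{-1}\le s_c\le(2\kappa_c)^{-1}$.} This is the obstacle. The factor $\log(1+\zeta^{-1})\sim\log(\kappa_c s_c)^{-1}$ is now large, and it must be compensated. The first part of $\vartheta$ contributes roughly
\begin{equation*}
(|\tomega|+\kappa_++\kappa_c)\,s_c^{-1}\log^2(2+s_c^{-1}),
\end{equation*}
after undoing the rescaling, where the powers of $\kappa_c$ cancel. The second part contributes $O(\kappa_c)\log^2$. Multiplied by $\log(1/(\kappa_c s_c))$, these contributions are $\lesssim s_c^{-1}\log(1/\kappa_c)+\kappa_c\log^3(1/\kappa_c)$. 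I expect to bound this by $C\delta$ using $s_c\ge\delta^{-1}$ and $\delta>\kappa_c$, possibly after absorbing logarithmic losses by shrinking $\delta$ slightly. If this fails, I would tighten the threshold to $s_c\ge\delta^{-1}\log(1/\kappa_c)$, which is harmless for the later overlap with Region III.

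With the condition verified, Proposition \ref{prop:hypgeomerror} gives the representation of $U_c$ in terms of $F_{\underline\sigma,\frac12\beta_\ell}$ and $G_{\underline\sigma,\frac12\beta_\ell}$, together with the stated bounds on $\varepsilon_i$ and $\varepsilon_i'$. The derivative bounds follow after converting $d/d\tilde{s}$ to $d/ds_c$.
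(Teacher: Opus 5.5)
Your route is the paper's: rescale $\tilde s=2\kappa_c s_c$, recognise \eqref{eq:hypgeomalterr}, and apply Proposition \ref{prop:hypgeomerror}. Your error term matches the paper's: a piece $O(\kappa_c+|\homega|)(1+\tilde s)^{-2}O_\infty(\tilde s^{-2})$ coming from $\mu_c^2-\frac14\beta_\ell^2$, and a piece $\kappa_c(|\tomega|+\kappa_++\kappa_c)(1+\tilde s)^{-2}O_\infty(\tilde s^{-3})$. Your reading $\kappa_+^{-1}\homega\to\kappa_c^{-1}\homega$ in $\sigma_b$ is also right.

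However, the step you single out as the obstacle is mis-posed. You carried over from the proof of Proposition \ref{prop:regionIhypgeom} the condition weighted by $\log(1+\zeta^{-1}(\tilde s))$. In Proposition \ref{prop:hypgeomerror} that factor enters the exponent only when $\mu=\frac12\beta_\ell=0$. For $\beta_\ell\neq0$ you only need $\int_{\tilde s}^\infty\eta\log^2(2+\eta)|\vartheta|(\eta)\,d\eta\lesssim\delta$. The paper bounds this integral by $C\log^2(2+2\kappa_cs_c)(\kappa_c+s_c^{-1})(1+2\kappa_cs_c)^{-2}\le C(\kappa_c+s_c^{-1})$, which closes the argument at once for $s_c>\delta^{-1}$ and $\kappa_c<\delta$, with no case split.

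There are two small slips in the set-up. The constant term is $-\frac14(1+w^2)$, not $-\frac14 w^2$. With your sign $\sigma_{a,b}=+\frac{\homega_c}{2\kappa_c}\pm\sqrt{\cdot}$ the roots come out as $\{-\q,\ \q+\kappa_c^{-1}\homega\}$, i.e.\ $-\underline\sigma$. Use $\sigma=-\frac w2\pm\sqrt{\cdot}$ with $w=\kappa_c^{-1}\homega_c$, as in Lemma \ref{prop:hypgeomfundsoln}; that is the sign for which $F$ carries the $e^{i\homega r_*}$ behaviour. $G$ is unaffected, since $G_{-\underline\sigma,\mu}=G_{\underline\sigma,\mu}$ by Euler's transformation.

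For $\beta_\ell=0$ the difficulty is genuine, and your primary remedy cannot work. Here $|\homega|\le\digamma_0\kappa_c$, so $|\tomega|\geq|\q|(1-r_c^{-1})-|\homega|$ is of order one. The $\tilde s^{-3}$ part of $\vartheta$ has a genuinely nonzero coefficient (cf.\ the $2\tomega(\q-2\homega)s_c^{-3}$ term in \eqref{eq:mathcalVext}). It therefore makes the quantity in the hypothesis of Proposition \ref{prop:hypgeomerror} of size $|\tomega|s_c^{-1}\log\frac{1}{\kappa_cs_c}$. At $s_c=\delta^{-1}$ this is comparable to $\delta\log(\delta/\kappa_c)$, which is unbounded as $\kappa_c\downarrow0$ at fixed $\delta$; no shrinking of $\delta$ absorbs it. The other piece is harmless: it contributes $\kappa_c\log^2(\delta/\kappa_c)\lesssim\delta$, not $\kappa_c\log^3(1/\kappa_c)$.

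The paper's one-line proof does not address this case either. It defers to Proposition \ref{prop:regionIhypgeom}, where the same logarithm is absorbed by the factor $(\kappa_c+|\tomega\homega|)^{1/4}$ that comes from that region's lower threshold on $s_+$. The threshold $s_c>\delta^{-1}$ here has no such factor. So in this case you really need your fallback, e.g.\ $s_c\geq\delta^{-1}\log(\delta/\kappa_c)$. This is harmless, because the overlap with Region II used for matching, \eqref{eq:setincl5}, lies at $s_c$ of order $\kappa_c^{-1/4}$ up to powers of $\digamma_0$.

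Alternatively, you can keep $s_c>\delta^{-1}$ by estimating the Volterra kernel directly near $\tilde s=0$. Use $|K(x,y)|\lesssim(xy)^{1/2}\log(y/x)$ for $x<y\ll1$ rather than the product bound $P_0(x)Q(y)$. The $\tilde s^{-3}$ part of $\vartheta$ weights $y\sim x$, where $\log(y/x)=O(1)$, so this removes the extra logarithm. The relative errors become $O(|\tomega|s_c^{-1}+\kappa_c\log^2(\delta/\kappa_c))=O(\delta)$.
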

\begin{proof}
In this case $\tilde{s}=2\kappa_cs_c$ and:
\begin{equation*}
\vartheta(\tilde{s})=(\kappa_c+|\homega|) \frac{1}{(1+\tilde{s})^2}O_{\infty}(\tilde{s}^{-2})+ \frac{\kappa_c}{(1+\tilde{s})^2}(|\tomega|+\kappa_++\kappa_c)O_{\infty}(\tilde{s}^{-3}).
\end{equation*}	
We then obtain:
\begin{equation*}
	\int_{\tilde{s}}^{\infty}y\log^2(2+y) |\vartheta(y)|\,dy\leq C\kappa_c\log^2(2+\tilde{s})(1+\tilde{s}^{-1})(1+\tilde{s})^{-2}\leq C \log^2(2+2\kappa_c s_c)(\kappa_c+s_c^{-1})(1+2\kappa_c s_c)^{-2}.
\end{equation*}
The rest of the proof proceeds as the proof of Proposition \ref{prop:regionIhypgeom}.
\end{proof}

Proposition \ref{prop:regionIhypgeomhomega} covers Region IV, in the case that $|\homega| \kappa_c^{-1}\leq  \digamma_0$, provided that $\delta\geq 2 \digamma_0^{-\frac{1}{16}}$.

\subsection{Matching asymptotic estimates: small $|\tomega|$}
\label{sec:matchasymptomega}
We will consider the following \emph{hierarchy of smallness of constants}:
\begin{equation*}
	\kappa_c\ll \digamma_0^{-1}\ll \delta\ll 1,
\end{equation*}
with moreover $\kappa_+\geq \kappa_c$. In order for all the constants $\delta$ in \S \ref{sec:homest} to be small, we need either $|\homega|$ or $|\tomega|+\kappa_+$ to be small. Since we are only assuming that $|\tomega|$ is small in the present section, this means that we need the additional assumption $\kappa_+\ll 1$.
\begin{proposition}
\label{prop:boundfreqcoeffesttomega1}
Let $(\homega,\ell)\in \mathcal{F}_{\flat,+}\cup \mathcal{F}_{\flat,\sim}$. Then the event-horizon-normalized solution $u_+$ to \eqref{eq:radialODE} with $H\equiv 0$ satisfies the estimates in \S \S\ref{sec:homestRegionI}--\S \S \ref{sec:regionIV}, with coefficients $A_i$, $B_i$, $D_i$, $E_i$ that satisfy the following upper bounds: there exists a constant $C=C(\q_0,\q_{1},L_0,\beta)>0$ such that
\begin{align}
	\label{eq:upboundAhor}
	|A_1|=&\:1,\quad A_2=0,\\
	\label{eq:upboundBhor}
	|B_1|\leq &\: C,\quad B_2=0,\\
	\label{eq:upboundD1hor}
	|D_1|\leq &\: C(\kappa_++|\tomega|)^{\frac{1}{2}+\frac{1}{2}\re \beta_{\ell}},\\
	\label{eq:upboundD2hor}
	 |D_2|\leq &\: C(\kappa_++|\tomega|)^{\frac{1}{2}-\frac{1}{2}\re \beta_{\ell}}(1+\delta_{\beta_{\ell}0}\log(1+(|\kappa_+|+|\tomega|)^{-1})),\\
	 \label{eq:upboundEhor}
	|E_i|\leq &\: C(\kappa_++|\tomega|)^{\frac{1}{2}-\frac{1}{2}\re \beta_{\ell}}(1+\delta_{\beta_{\ell}0}\log(1+(|\kappa_+|+|\tomega|)^{-1}))\quad i\in\{1,2\}.
\end{align}
Furthermore, for $(\omega,\ell)\in \mathcal{F}_{\flat,+}$, $\kappa_c\leq \kappa_+$ and both $\kappa_+$ and $\gamma$ suitably small, there exists a constant\\ $C=C(q_0,q_{\rm max},L_0,\gamma)>0$ such that
\begin{equation}
 \label{eq:lowboundEhor}
	|E_2|\geq \frac{1}{C}(\kappa_++|\tomega|)^{\frac{1}{2}-\re \beta_{\ell}}(1+\delta_{\beta_{\ell}0}\log(1+(|\kappa_+|+|\tomega|)^{-1})).
\end{equation}
\end{proposition}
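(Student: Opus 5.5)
The plan is to propagate the horizon normalization of $u_+$ outward through Regions I$\to$II$\to$III$\to$IV. At each step I match the two local approximations of \S\ref{sec:homest} on the overlap intervals of Lemma \ref{lm:setincl}.

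\textbf{Regions I and II.} In Region I with $|\tomega|>\digamma_0\kappa_+$, Proposition \ref{prop:regionIeleq1} writes $u_+$ as a combination of $e^{\mp i\tomega r_*\mp i\q\int\rho_+}$ with errors $\varepsilon_i\to 0$ as $r_*\to-\infty$. The normalization of $u_+$ then forces $A_1=1$ (up to a unimodular phase from the choice of origin of $r_*$) and $A_2=0$, which is \eqref{eq:upboundAhor}. In the case $|\tomega|\leq \digamma_0\kappa_+$, Proposition \ref{prop:regionIhypgeom} applies instead. The hypergeometric solution $G_{\underline\sigma,\frac12\beta_\ell}$ is not ingoing at $\zeta=1$, so $B_2=0$. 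Evaluating $F_{\underline\sigma,\frac12\beta_\ell}$ near $\zeta=1$ against the exponential normalization gives $|B_1|\leq C$, using that $\underline\sigma$ ranges over a compact set.

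In Region II (the Whittaker regime, Corollary \ref{cor:regioniiwhitt}) I match on the overlap \eqref{eq:setincl1}, where $|\xi|\sim\digamma_0^{1/4}$ is large. There $W_{-i\sigma,\frac12\beta_\ell}(i\xi)\sim (i\xi)^{-i\sigma}e^{-i\xi/2}$ is exactly the ingoing exponential, while $M$ contains both exponentials. The $e^{+i\xi/2}$ component must vanish up to $O(\delta)$, which forces $B_2=0$ and $|B_1|\leq C$, giving \eqref{eq:upboundBhor}. The same large-argument identification applies to the hypergeometric case, using the large-$\tilde s$ asymptotics of $F$.

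\textbf{Region II to Region III.} This is on the overlap \eqref{eq:setincl3} or \eqref{eq:setincl2}. Both Whittaker arguments are small there: $|\tomega s_+|\ll1$ and $|\homega s_c|\ll 1$. I use the connection formula
\begin{equation*}
W_{\varkappa,\mu}(z)=\frac{\Gamma(-2\mu)}{\Gamma(\frac12-\mu-\varkappa)}M_{\varkappa,\mu}(z)+\frac{\Gamma(2\mu)}{\Gamma(\frac12+\mu-\varkappa)}M_{\varkappa,-\mu}(z),
\end{equation*}
together with $M_{\varkappa,\pm\mu}(z)=z^{\frac12\pm\mu}(1+O(z))$, with the logarithmic modification when $\beta_\ell=0$. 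This gives $U_+\approx c_+(\tomega s_+)^{\frac12+\frac12\beta_\ell}+c_-(\tomega s_+)^{\frac12-\frac12\beta_\ell}$ with $|c_\pm|\leq C$.

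I then use $U_c=s_+^{-1}U_+$ up to $O(\kappa_c)$ corrections, from \eqref{eq:scintermss+} and $s_c\approx s_+^{-1}$. Re-expanding in powers of $\homega s_c$ and comparing with $D_1W+D_2M$ at small argument, the $s_c^{\frac12+\frac12\beta_\ell}$ term comes only from $D_2 M$ together with the corresponding part of $D_1W$. Solving the resulting $2\times2$ system, whose determinant is a nonvanishing Gamma ratio, gives \eqref{eq:upboundD1hor} and \eqref{eq:upboundD2hor}. The factors $|\homega|^{\pm\frac12\beta_\ell}$ are harmless because $|\homega|\sim 1$ when $\tomega$ is small. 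In the near-horizon hypergeometric case the role of $|\tomega|$ is played by $\kappa_+$, which produces $(\kappa_++|\tomega|)$ in the bounds.

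\textbf{Region III to Region IV.} On \eqref{eq:setincl4} the argument is of size $\digamma_0^{-1/4}$. Since $|\homega|\gtrsim\gamma^{1/2}$ and $|\homega|$ is bounded, all Whittaker functions and the exponentials of Proposition \ref{prop:regionIVeleq1} are uniformly comparable there. Hence $|E_i|\leq C(|D_1|+|D_2|)$, which is \eqref{eq:upboundEhor}.

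\textbf{Lower bound \eqref{eq:lowboundEhor}.} I expect this to be the main obstacle. First, the upper-bound computation already shows $|D_2|\gtrsim(\kappa_++|\tomega|)^{\frac12-\frac12\re\beta_\ell}$ (with the log when $\beta_\ell=0$). The reason is that the relevant coefficient $\Gamma(2\mu)/\Gamma(\frac12+\mu+i\sigma)$ never vanishes and is bounded below uniformly for $\sigma$ in a compact set, and $|D_1|$ is of strictly smaller order.

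Second, I must exclude cancellation between $E_1$ and $E_2$. For this I use the conserved current $\im(u_+'\overline{u_+})$ from Lemma \ref{lm:superradiance}, which gives $\homega(|E_1|^2-|E_2|^2)=\pm\tomega$. Hence $\bigl||E_1|^2-|E_2|^2\bigr|\lesssim|\tomega|$, which is negligible. On the other hand, in Region IV the solution has size $\gtrsim|D_2|$ at some point, because the $M$ function is not identically small on a unit interval. So $|E_1|+|E_2|\gtrsim|D_2|$, and combining with the current bound gives $|E_2|\gtrsim|D_2|$.

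The delicate point is making the matching error $O(\delta)$ small relative to the Gamma-function coefficients uniformly in $\beta_\ell$ near $0$, including the logarithmic case. If that uniformity fails, I would fall back on treating $\beta_\ell=0$ separately via the explicit $M_{\varkappa,0}$ expansion.
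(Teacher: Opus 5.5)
Your upper-bound argument follows the paper's route and is fine: normalisation in Region I, identification of $W$ with the ingoing exponential on I$\cap$II, small-argument connection formulas on I$\cap$III or II$\cap$III, then passage to Region IV. The gap is the lower bound when $\beta_\ell\in i(0,\infty)$, i.e.\ $|\q|>\ell+\frac12$, which is precisely the large-charge regime the paper is aimed at.

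There $\mu=\frac12\beta_\ell$ has $\re\mu=0$, so your claim that $|D_1|$ is of strictly smaller order is false. $|D_1|$ and both terms making up $D_2$ are of size $(\kappa_++|\tomega|)^{1/2}$, and $D_2$ is proportional to $1-g+O(\delta)$. Here $g$ is a product of:
$|2\tomega|^{2\mu}|2\homega|^{2\mu}$ (or $(2\kappa_+)^{2\mu}|2\homega|^{2\mu}$ in the hypergeometric regime);
factors $e^{\pm i\pi\mu}$;
$\Gamma^2(-2\mu)/\Gamma^2(2\mu)$;
ratios $\Gamma(\frac12+\mu+i\sigma)/\Gamma(\frac12-\mu+i\sigma)$, plus the analogues with $\sigma_a,\sigma_b$.

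Since $|(\kappa_++|\tomega|)^{2\mu}|=1$, the argument of $g$ rotates uncontrollably as $\tomega\to0$. Nonvanishing of any individual Gamma coefficient therefore proves nothing; you must show that $|g|$ stays a fixed distance from $1$. The paper does this via $|\Gamma(\frac12+iy)|^2=\pi/\cosh(\pi y)$. This expresses $|g|$ through factors $e^{\pm\pi\im\mu}$ and ratios $\cosh(\pi(\im\mu-\sigma))/\cosh(\pi(\im\mu+\sigma))$. It then uses the monotonicity in Lemma \ref{lm:coshratios} together with $|\sigma|\approx|\q|\geq\q_0$, which confines $|g|$ to an interval excluding $1$ for every sign combination of $\homega,\tomega$. (You are right to read the exponent in \eqref{eq:lowboundEhor} as $\frac12-\frac12\re\beta_\ell$; as printed it contradicts \eqref{eq:upboundEhor}.)

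Your passage from $D_2$ to $E_2$ also fails in this case. In the superradiant subcase $\homega\tomega<0$ the conserved current gives $|E_1|^2-|E_2|^2=|\tomega|/|\homega|>0$. Together with $|E_1|+|E_2|\gtrsim|D_2|$, this yields $|E_2|\gtrsim|D_2|$ only if $|D_2|^2\gg|\tomega|$. That holds when $\re\beta_\ell>0$, or when $\beta_\ell=0$ thanks to the logarithm, but not when $|D_2|^2\sim\kappa_++|\tomega|$.

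The current is unnecessary. The overlap III$\cap$IV is at large $|\xi|$: the $\digamma_0^{-1/4}$ in \eqref{eq:setincl4} should be $\digamma_0^{1/4}$, cf.\ \eqref{eq:setincl1}, and at $|\xi|\sim\digamma_0^{-1/4}$ you are not in Region IV. On that overlap, \eqref{eq:whittasymp1}--\eqref{eq:whittasymp2} show that $W_{-i\sigma,\mu}(i\xi)$ carries only the exponential belonging to $E_1$. Hence $D_1$ does not feed into $E_2$. Instead $E_2$ equals $D_2$ times $\Gamma(1+2\mu)/\Gamma(\frac12+\mu+i\sigma)$ and a factor bounded above and below, up to $1+O(\delta)$. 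The lower bound for $|E_2|$ thus reduces exactly to the one for $|D_2|$, where the cosh-ratio argument is indispensable.
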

\begin{proof}
Recall the boundary condition:
\begin{equation*}
	\lim_{r_*\to \infty }e^{i\tomega r_*-i q \int_{0}^{r_*}\rho_+(r_*')\,dr_*'}u_+(r_*)=1.
\end{equation*}

\textbf{Estimates for $A_i$: $|\tomega| \kappa_+^{-1}>  \digamma_0$}\\
By Proposition \ref{prop:regionIeleq1} and the boundary conditions on $u_+$, we immediately obtain $A_1=1$ and $A_2=0$.

\textbf{Estimates for $B_i$: $|\tomega| \kappa_+^{-1}\leq  \digamma_0$}\\

	Let $|\tomega| \kappa_+^{-1}\leq  \digamma_0$. Then we apply Proposition \ref{prop:regionIhypgeom} to obtain for $s_+\geq \delta^{-1} \kappa_+^{-\frac{1}{4}}(\kappa_c+|\homega|)^{\frac{3}{4}}$:
	\begin{multline*}
	(2\kappa_+ s_+)^{-\frac{1}{2}}(s_++1)(r^{-2}\Omega^2)^{\frac{1}{2}}u_+(r_*(s_+))=B_1(2\kappa_+(s_++1))^{-\frac{1}{2}}F_{\underline{\sigma},\mu}(2\kappa_+(s_++1))(1+\varepsilon_1(s_+))\\+B_2\left((2\kappa_+(s_++1))^{-\frac{1}{2}}G_{\underline{\sigma},\mu}(2\kappa_+(s_++1))+\varepsilon_2(s_+)\right),
\end{multline*}
with $\mu=\frac{1}{2}\beta_{\ell}$.

Note that by Lemma \ref{lm:metricest}:
\begin{equation*}
	(2\kappa_+ s_+)^{-\frac{1}{2}}(s_++1)(r^{-2}\Omega^2)^{\frac{1}{2}}=1+\kappa_+^{-1}O(s_+^{-1}).
\end{equation*}

Then, by \eqref{eq:tortoiseevent3} and \eqref{eq:asymphypgeomsmallarg1}:
\begin{multline*}
(2\kappa_+ s_+)^{-\frac{1}{2}}F_{\underline{\sigma},\mu}(2\kappa_+s_+)=(2\kappa_+s_+)^{-\frac{1}{2}}(1-\zeta(\tilde{s}))^{\frac{1}{2}(-1-i\frac{\tomega}{\kappa_+})}+\kappa_+^{-1}O(s_+^{-1})\\
=(2\kappa_+s_+)^{i\frac{\tomega}{2\kappa_+}}+\kappa_+^{-1}O(s_+^{-1})\\
=e^{i\frac{\tomega}{2\kappa_+}\log(2\kappa_+ s_+)}+\kappa_+^{-1}O(s_+^{-1})\\
=e^{-i\tomega r_*+i\frac{\tomega}{\kappa_+} O_{\infty}((\kappa_+^{-1} s_+^{-1})^0)}+\kappa_+^{-1}O(s_+^{-1}).
\end{multline*}

Since $G_{\underline{\sigma},\frac{1}{2}\beta_{\ell}}$ can be written as a linear combination of $F_{\underline{\sigma},\frac{1}{2}\beta_{\ell}}$ and $F_{-\underline{\sigma},\frac{1}{2}\beta_{\ell}}$ and $(2\kappa_+s_+)^{-\frac{1}{2}}F_{-\underline{\sigma},\frac{1}{2}\beta_{\ell}}$ features the factor $e^{+i\tomega r_*}$, we can conclude can conclude that $B_2=0$ and $|B_1|=1$.

\textbf{Estimates for $B_i$: $|\tomega| \kappa_+^{-1}>  \digamma_0$}\\
In order to obtain estimates for $B_i$, we will consider the intersection between Region I and Region II, where both Proposition \ref{prop:regionIeleq1} and Corollary \ref{cor:regioniiwhitt} apply. Let $\frac{1}{2}\digamma_0^{-\frac{1}{4}}|\tomega|^{-1}<s_+<2\digamma_0^{-\frac{1}{4}}|\tomega|^{-1}$. Write $\mu=\frac{1}{2}\beta_{\ell}$. Then by \eqref{eq:tortoiseevent3} and \eqref{eq:whittasymp1}:
\begin{multline*}
W_{ -i\sigma, \mu}(i \xi(s_+))= e^{\sign(\xi(s_+))\frac{\pi}{2}\sigma }e^{-  i\frac{\xi(s_+)}{2}} |\xi|^{ -i\sigma}(1+O(|\xi|^{-1}))\\
=e^{-\sign(\tomega)\frac{\pi}{2}\sigma }e^{+  iw s_+^{-1} }|2w|^{ -i\sigma}2s_+^{ -i\sigma}(1+|\tomega|O(|s_+|^{-1}))\\
=e^{-\sign(\tomega)\frac{\pi}{2}\sigma }e^{-  i\tomega r_*(1+O(\digamma_0^{-1})+O_{\infty}(\kappa_+ s_+)) }|2w|^{ -i\sigma}2s_+^{ -i\sigma}(1+|\tomega|O(|s_+|^{-1})).
\end{multline*}
Since $M_{-i\sigma,\mu}$ can be written as a linear combination of $W_{-i \sigma,\mu}$ and $W_{ +i\sigma,\mu}$, it will feature the factor $e^{+i\tomega r_*}$, so we can conclude that $B_2=0$ and $|B_1|=e^{\sign(\tomega)\frac{\pi}{2}\sigma }|A_1|=e^{\sign(\tomega)\frac{\pi}{2}\sigma }$.

\textbf{Estimates for $D_i$: $|\tomega| \kappa_+^{-1}\leq  \digamma_0$}\\
We will estimate $D_i$ in terms of $B_i$ by exploiting the intersection of Regions I and III in the case $|\tomega| \kappa_+^{-1}\leq  \digamma_0$. This step involves matching asymptotic expressions of hypergeometric functions to asymptotic expressions of Whittaker functions.

Let $\frac{3}{4}\digamma_0^{-\frac{1}{3}}|\homega|^{-\frac{1}{4}}(\kappa_++\kappa_c+|\tomega|)^{\frac{1}{4}}\leq s_c\leq \frac{4}{3}\digamma_0^{-\frac{1}{3}}|\homega|^{-\frac{1}{4}}(\kappa_++\kappa_c+|\tomega|)^{\frac{1}{4}}$, which is contained in $I\cap III$ by \eqref{eq:setincl2}. Then $2\kappa_+ s_+\ll 1$, since $|\homega|+\kappa_++\kappa_c\ll 1$ and $s_c\ll 1$. We can therefore apply Proposition \ref{prop:regionIhypgeom}, \eqref{eq:relhypgeomfunct1}, \eqref{eq:relhypgeomfunct2} and \eqref{eq:asymphypgeomsmallarg1} to obtain for $\mu=2\beta_{\ell}\neq 0$:
\begin{multline*}
	U_+(s_+)=B_1F_{\underline{\sigma},\mu}(2\kappa_+(s_++1))(1+\delta O(s_+^0))\\
	=B_1\Gamma\left(1+i(\sigma_a+\sigma_b)\right)\Bigg[\frac{\Gamma(-2\mu)}{\Gamma(\frac{1}{2}-\mu+i\sigma_a)\Gamma(\frac{1}{2}-\mu+i\sigma_b)}G_{\underline{\sigma},\mu}(2\kappa_+(s_++1))\\
	+\frac{\Gamma(2\mu)}{\Gamma(\frac{1}{2}+\mu+i\sigma_a)\Gamma(\frac{1}{2}+\mu+i\sigma_b)}G_{\underline{\sigma},-\mu}(2\kappa_+(s_++1))(1+\delta O_{\infty}(s_+^0))\Bigg]\\
	=B_1\Gamma\left(1+i(\sigma_a+\sigma_b)\right)\Bigg[\frac{\Gamma(-2\mu)}{\Gamma(\frac{1}{2}-\mu+i\sigma_a)\Gamma(\frac{1}{2}-\mu+i\sigma_b)}(2\kappa_+)^{\frac{1}{2}+\mu}s_+^{\frac{1}{2}+\mu}\\
	+\frac{\Gamma(2\mu)}{\Gamma(\frac{1}{2}+\mu+i\sigma_a)\Gamma(\frac{1}{2}+\mu+i\sigma_b)}(2\kappa_+)^{\frac{1}{2}-\mu}s_+^{\frac{1}{2}-\mu}(1+\delta O_{\infty}(s_+^0))\Bigg]\\
	=B_1\Gamma\left(1+i(\sigma_a+\sigma_b)\right)\Bigg[\frac{\Gamma(-2\mu)}{\Gamma(\frac{1}{2}-\mu+i\sigma_a)\Gamma(\frac{1}{2}-\mu+i\sigma_b)}(2\kappa_+)^{\frac{1}{2}+\mu}s_c^{-\frac{1}{2}-\mu}\\
	+\frac{\Gamma(2\mu)}{\Gamma(\frac{1}{2}+\mu+i\sigma_a)\Gamma(\frac{1}{2}+\mu+i\sigma_b)}(2\kappa_+)^{\frac{1}{2}-\mu}s_c^{-\frac{1}{2}+\mu}(1+\delta O_{\infty}(s_+^0))\Bigg],
\end{multline*}
where we used that $s_c= s_+^{-1}(1+\delta O_{\infty}(s_+^0))$ to arrive at the final step.

We can express the above expressions for $U_+(s_+)$ in terms of Whittaker functions $M_{-i\sigma,\mu}(i\xi(s_c))$ and $M_{-i\sigma,-\mu}(i\xi(s_c))$ and use the smallness of $|\xi(s_c)|=|2w_c s_c|$ together with \eqref{eq:whittasymp3} to obtain:
\begin{multline}
\label{eq:matchhypwhitmunonzero}
	U_c(s_c)=\frac{s_c+1}{s_++1}U_+(s_+(s_c))=B_1\Gamma\left(1+i(\sigma_a+\sigma_b)\right)\\
	\times\Bigg[\frac{\Gamma(-2\mu)}{\Gamma(\frac{1}{2}-\mu+i\sigma_a)\Gamma(\frac{1}{2}-\mu+i\sigma_b)}(2\kappa_+)^{\frac{1}{2}+\mu}|2\homega|^{-\frac{1}{2}+\mu}e^{\sign \homega \frac{i\pi}{2}(\frac{1}{2}-\mu)}M_{-i\sigma,-\mu}(-2iw_cs_c)\\
	+\frac{\Gamma(2\mu)}{\Gamma(\frac{1}{2}+\mu+i\sigma_a)\Gamma(\frac{1}{2}+\mu+i\sigma_b)}(2\kappa_+)^{\frac{1}{2}-\mu}|2\homega|^{-\frac{1}{2}-\mu}e^{\sign \homega \frac{i\pi}{2}(\frac{1}{2}+\mu)}M_{-i\sigma,\mu}(-2iw_cs_c)(1+\delta O_{\infty}(s_+^0))\Bigg].
\end{multline}
We apply \eqref{eq:whittasymp4}--\eqref{eq:whittasymp6} again to express $M_{-i\sigma,-\mu}$ in terms of $W_{-i\sigma,\mu}$ and $M_{-i\sigma,\mu}$ and obtain for $\mu\neq 0$:
\begin{multline*}
	U_c(s_c)=B_1\Gamma\left(1+i(\sigma_a+\sigma_b)\right)\frac{\Gamma(-2\mu)}{\Gamma(2\mu)}\frac{\Gamma\left(\frac{1}{2}+\mu+i\sigma\right)}{\Gamma\left(\frac{1}{2}+\mu+i\sigma_a\right)\Gamma\left(\frac{1}{2}+\mu+i\sigma_b\right)}\\
	\times (2\kappa_+)^{\frac{1}{2}+\mu}|2\homega|^{-\frac{1}{2}+\mu}e^{\sign(\homega)\frac{i\pi}{2}(\frac{1}{2}-\mu)}W_{\sigma,\mu}(-2iw_cs_c)(1+\delta O_{\infty}(s_+^0))\\
	+B_1\Gamma\left(1+i(\sigma_a+\sigma_b)\right)\frac{\Gamma(2\mu)}{\Gamma\left(\frac{1}{2}+\mu+i\sigma_a\right)\Gamma\left(\frac{1}{2}+\mu+i\sigma_b\right)}(2\kappa_+)^{\frac{1}{2}-\mu}|2\homega|^{-\frac{1}{2}-\mu}e^{\sign(\homega)\frac{i\pi}{2}(\frac{1}{2}+\mu)}\\
	\times \Bigg[1-(2\kappa_+)^{2\mu}|2\homega|^{2\mu}e^{-\sign(\homega)i\pi\mu)}\frac{\Gamma^2(-2\mu)}{\Gamma^2(2\mu)}\frac{\Gamma\left(\frac{1}{2}+\mu+i\sigma\right)}{\Gamma\left(\frac{1}{2}-\mu+i\sigma\right)}\frac{\Gamma\left(\frac{1}{2}+\mu+i\sigma_a\right)}{\Gamma\left(\frac{1}{2}-\mu+i\sigma_a\right)}\frac{\Gamma\left(\frac{1}{2}+\mu+i\sigma_b\right)}{\Gamma\left(\frac{1}{2}-\mu+i\sigma_b\right)}\Bigg]\\
	\times M_{\sigma,\mu}(-2iw_c s_c)(1+\delta O_{\infty}(s_+^0)).
	\end{multline*}
	
	When $\mu=0$, we instead apply \eqref{eq:relhypgeomfunct3} to obtain:
\begin{multline*}
	U_+(s_+)=B_1F_{\underline{\sigma},0}(2\kappa_+(s_++1))(1+\delta O(s_+^0))\\
	=-B_1\frac{\Gamma\left(1+i(\sigma_a+\sigma_b)\right)}{\Gamma(\frac{1}{2}+i\sigma_a)\Gamma(\frac{1}{2}+i\sigma_b)}G_{0,\sigma}(2\kappa_+(s_++1))\Bigg[\log (2\kappa_+(s_++1))(1+\delta O_{\infty}(s_+^0))\\
	+\left(2\gamma_{\rm Euler}+\frac{\Gamma'(\frac{1}{2}+i\sigma_a)}{\Gamma(\frac{1}{2}+i\sigma_a)}+\frac{\Gamma'(\frac{1}{2}+i\sigma_b)}{\Gamma(\frac{1}{2}+i\sigma_b)}\right)\Bigg]\\
	=-B_1\frac{\Gamma\left(1+i(\sigma_a+\sigma_b)\right)}{\Gamma(\frac{1}{2}+i\sigma_a)\Gamma(\frac{1}{2}+i\sigma_b)}(2\kappa_+)^{\frac{1}{2}}s_c^{-\frac{1}{2}}\Bigg[\log (2\kappa_+s_c^{-1})(1+\delta O_{\infty}(s_+^0))\\
	+\left(2\gamma_{\rm Euler}+\frac{\Gamma'(\frac{1}{2}+i\sigma_a)}{\Gamma(\frac{1}{2}+i\sigma_a)}+\frac{\Gamma'(\frac{1}{2}+i\sigma_b)}{\Gamma(\frac{1}{2}+i\sigma_b)}\right)\Bigg]
\end{multline*}
Therefore:
\begin{multline*}
	U_c(s_c)=\frac{s_c+1}{s_++1}U_+(s_+(s_c))=B_1\frac{\Gamma\left(1+i(\sigma_a+\sigma_b)\right)}{\Gamma(\frac{1}{2}+i\sigma_a)\Gamma(\frac{1}{2}+i\sigma_b)}(2\kappa_+)^{\frac{1}{2}}s_c^{\frac{1}{2}}\Bigg[(\log (s_c)+\log (2\kappa_+)^{-1}(1+\delta O_{\infty}(s_+^0))\\
	-\left(2\gamma_{\rm Euler}+\frac{\Gamma'(\frac{1}{2}+i\sigma_a)}{\Gamma(\frac{1}{2}+i\sigma_a)}+\frac{\Gamma'(\frac{1}{2}+i\sigma_b)}{\Gamma(\frac{1}{2}+i\sigma_b)}\right)\Bigg].
\end{multline*}
On the other hand, we can apply \eqref{eq:whittasymp3} and \eqref{eq:whittasymp7} to obtain:
\begin{align*}
M_{-i\sigma,0}(-2iw_cs_c)=&\:e^{-\sign(\homega)i\frac{\pi}{4}}|2\homega|^{\frac{1}{2}}s_c^{\frac{1}{2}}(1+\delta O(s_c^0)),\\
	W_{-i\sigma,0}(-2iw_cs_c)=&-\frac{1}{\Gamma(\frac{1}{2}+i\sigma)}e^{-\sign(\homega)i\frac{\pi}{4}}|2\homega|^{\frac{1}{2}}s_c^{\frac{1}{2}}\left[\log(-2i\homega s_c)+\frac{\Gamma'(\frac{1}{2}-i\sigma)}{\Gamma(\frac{1}{2}-i\sigma)}+2\gamma_{\rm Euler}+\delta O(s_c^0)\right]\\
	=&-\frac{1}{\Gamma(\frac{1}{2}+i\sigma)}e^{-\sign(\homega)i\frac{\pi}{4}}|2\homega|^{\frac{1}{2}}s_c^{\frac{1}{2}}\Bigg[\log(s_c)+\log|2\homega|-\sign(\homega)\frac{i\pi}{2}+\frac{\Gamma'(\frac{1}{2}-i\sigma)}{\Gamma(\frac{1}{2}-i\sigma)}\\
	+2\gamma_{\rm Euler}+\delta O(s_c^0)\Bigg].
\end{align*}
Hence,
\begin{multline*}
	U_c(s_c)=-B_1\frac{\Gamma\left(1+i(\sigma_a+\sigma_b)\right)}{\Gamma(\frac{1}{2}+i\sigma_a)\Gamma(\frac{1}{2}+i\sigma_b)}e^{+\sign(\homega)i\frac{\pi}{4}}|2\homega|^{-\frac{1}{2}}(2\kappa_+)^{\frac{1}{2}}\Gamma\left(\frac{1}{2}+i\sigma\right)(1+\delta O(s_c^0))W_{-i\sigma,0}(-2iw_cs_c)\\
	+B_1\frac{\Gamma\left(1+i(\sigma_a+\sigma_b)\right)}{\Gamma(\frac{1}{2}+i\sigma_a)\Gamma(\frac{1}{2}+i\sigma_b)}e^{+\sign(\homega)i\frac{\pi}{4}}|2\homega|^{-\frac{1}{2}}(2\kappa_+)^{\frac{1}{2}}\\
	\times\Bigg[\log (2\kappa_+)^{-1}(1+\delta O(s_c^0))-\log|2\homega|+\sign(\homega)\frac{i\pi}{2}-4\gamma_{\rm Euler}-\frac{\Gamma'(\frac{1}{2}+i\sigma)}{\Gamma(\frac{1}{2}+i\sigma)}-\frac{\Gamma'(\frac{1}{2}+i\sigma_a)}{\Gamma(\frac{1}{2}+i\sigma_a)}-\frac{\Gamma'(\frac{1}{2}+i\sigma_b)}{\Gamma(\frac{1}{2}+i\sigma_b)}\Bigg]\\
\times M_{-i\sigma,0}(-2iws_c)
\end{multline*}

Since Corollary \ref{cor:regioniiiwhitt} also applies for the $s_+$-values under consideration, we conclude that for $\mu\neq 0$:
\begin{align*}
	D_1=&\:B_1\Gamma\left(1+i(\sigma_a+\sigma_b)\right)\frac{\Gamma(-2\mu)}{\Gamma(2\mu)}\frac{\Gamma\left(\frac{1}{2}+\mu+i\sigma\right)}{\Gamma\left(\frac{1}{2}+\mu+i\sigma_a\right)\Gamma\left(\frac{1}{2}+\mu+i\sigma_b\right)}\\
	\times &\:(2\kappa_+)^{\frac{1}{2}+\mu}|2\homega|^{-\frac{1}{2}+\mu}e^{\sign(\homega)\frac{i\pi}{2}(\frac{1}{2}-\mu)}(1+O(\delta)),\\
	D_2=&\:B_1\Gamma\left(1+i(\sigma_a+\sigma_b)\right)\frac{\Gamma(2\mu)}{\Gamma\left(\frac{1}{2}+\mu+i\sigma_a\right)\Gamma\left(\frac{1}{2}+\mu+i\sigma_b\right)}(2\kappa_+)^{\frac{1}{2}-\mu}|2\homega|^{-\frac{1}{2}-\mu}e^{\sign(\homega)\frac{i\pi}{2}(\frac{1}{2}+\mu)}\\
	\times &\:\Bigg[1-(2\kappa_+)^{2\mu}|2\homega|^{2\mu}e^{-\sign(\homega)i\pi\mu}\frac{\Gamma^2(-2\mu)}{\Gamma^2(2\mu)}\frac{\Gamma\left(\frac{1}{2}+\mu+i\sigma\right)}{\Gamma\left(\frac{1}{2}-\mu+i\sigma\right)}\frac{\Gamma\left(\frac{1}{2}+\mu+i\sigma_a\right)}{\Gamma\left(\frac{1}{2}-\mu+i\sigma_a\right)}\frac{\Gamma\left(\frac{1}{2}+\mu+i\sigma_b\right)}{\Gamma\left(\frac{1}{2}-\mu+i\sigma_b\right)}\\
	+&\:O(\delta)\Bigg]\quad (\mu\neq 0),\\
	D_2=&\:B_1\frac{\Gamma\left(1+i(\sigma_a+\sigma_b)\right)}{\Gamma(\frac{1}{2}+i\sigma_a)\Gamma(\frac{1}{2}+i\sigma_b)}e^{+\sign(\homega)i\frac{\pi}{4}}|2\homega|^{-\frac{1}{2}}(2\kappa_+)^{\frac{1}{2}}\\
	\times&\:\Bigg[\log (2\kappa_+)^{-1}(1+\delta O(s_c^0))-\log|2\homega|+\sign(\homega)\frac{i\pi}{2}\\
	&-4\gamma_{\rm Euler}-\frac{\Gamma'(\frac{1}{2}+i\sigma)}{\Gamma(\frac{1}{2}+i\sigma)}-\frac{\Gamma'(\frac{1}{2}+i\sigma_a)}{\Gamma(\frac{1}{2}+i\sigma_a)}-\frac{\Gamma'(\frac{1}{2}+i\sigma_b)}{\Gamma(\frac{1}{2}+i\sigma_b)}+O(\delta)\Bigg]\quad (\mu= 0),
\end{align*}
where in the expression for $D_1$, we used that $\lim_{\mu\to 0}\frac{\Gamma(-2\mu)}{\Gamma(2\mu)}=-1$.

Since all the $\Gamma$-factors in the expressions for $D_1$ and $D_2$ can be bounded uniformly when $|\tomega| \kappa_+^{-1}\leq  \digamma_0$, we obtain $|D_1|\leq C |\kappa_+|^{\frac{1}{2}+\re \mu}$ and $|D_2|\leq C |\kappa_+|^{\frac{1}{2}-\re \mu}$ when $\mu\neq 0$ and $|D_2|\leq C |\kappa_+|^{\frac{1}{2}}\log |\kappa_+|^{-1}$ when $\mu=0$. 

When $\mu\notin i(0,\infty)$, we also obtain a lower bound: $|D_2|\geq \frac{1}{C} |\kappa_+|^{\frac{1}{2}-\re \mu}$ when $\mu\neq 0$ and $|D_2|\geq \frac{1}{C} |\kappa_+|^{\frac{1}{2}}\log \kappa_+^{-1}$ when $\mu=0$.

To obtain a lower bound for $|D_2|$ when $\mu\in i(0,\infty)$, we need to keep more careful track of the ratios of $\Gamma$-factors. Let:
\begin{multline*}
	g(\tomega,q,\mu,\kappa_+):=(2\kappa_+)^{2\mu}|2\homega|^{2\mu}e^{-\sign(\homega)i\pi\mu}\\
	\times \frac{\Gamma^2(-2\mu)}{\Gamma^2(2\mu)}\frac{\Gamma\left(\frac{1}{2}+\mu+i\sigma\right)}{\Gamma\left(\frac{1}{2}-\mu+i\sigma\right)}\frac{\Gamma\left(\frac{1}{2}+\mu+i\sigma_a\right)}{\Gamma\left(\frac{1}{2}-\mu+i\sigma_a\right)}\frac{\Gamma\left(\frac{1}{2}+\mu+i\sigma_b\right)}{\Gamma\left(\frac{1}{2}-\mu+i\sigma_b\right)}.
\end{multline*}
Then we conclude that $|D_2|\geq \frac{1}{C} |\kappa_+|^{\frac{1}{2}}$ if $|1-g|\geq b$ for some constant $b>0$.

Note that for $\mu\in i(0,\infty)$:
\begin{equation*}
	|g|(\tomega,q,\mu,\kappa_+)=e^{\sign(\homega)\pi \im \mu}\left(\frac{\cosh(\pi(\mu-\sigma))}{\cosh(\pi(\mu+\sigma))}\frac{\cosh(\pi(\mu-\sigma_a))}{\cosh(\pi(\mu+\sigma_a))}\frac{\cosh(\pi(\mu-\sigma_b))}{\cosh(\pi(\mu+\sigma_b))}\right)^{\frac{1}{2}}.
\end{equation*}
We will use that: $\homega=\q+O(\tomega)$, $\sigma=\q+O(\kappa_c)$, $\sigma_a=-\q+O(\kappa_c)+\kappa_+^{-1}|\tomega|O(|\homega|)$ and $\sigma_b=\q+\kappa_+^{-1}\tomega+O(\kappa_c)+\kappa_+^{-1}|\tomega|O(|\homega|)$ to estimate further:
\begin{equation*}
	|g|(\tomega,q,\mu,\kappa_+)=e^{-\sign(q)\pi \im \mu}\left(\frac{\cosh(\pi(\mu-q))}{\cosh(\pi(\mu+q))}\right)\left(\frac{\cosh(\pi(\mu-\sigma_b))}{\cosh(\pi(\mu+\sigma_b))}\right)^{\frac{1}{2}}+O(\kappa_c)+O(\homega).
\end{equation*}
By Lemma \ref{lm:coshratios}, we have that :
\begin{equation*}
	e^{\sign(\q)\pi \im \mu}\left(\frac{\cosh(\pi(\mu+\q))}{\cosh(\pi(\mu-\q))}\right)\in \begin{cases}
 (e^{-3\pi\im \mu},e^{-\pi\im \mu})\quad \q<0,\\
 (e^{\pi\im \mu},	e^{3\pi\im \mu})\quad \q>0.
 \end{cases}
 \end{equation*}
Furthermore, $\left(\frac{\cosh(\pi(\mu-\sigma_b))}{\cosh(\pi(\mu+\sigma_b))}\right)^{\frac{1}{2}}\in (e^{-\pi \im \mu},e^{\pi \im \mu})$, so $|g|$ must be bounded away from $1$ and we obtain the desired estimate for $|1-g|$, with a constant $B$ that depends on $\im \mu$ and $\digamma_0$.

The non-vanishing of the leading-order term in $D_2$ is closely related to the properties of the stationary $(\homega=0)$ solutions in Reissner--Nordstr\"om; see \cite{gaj26b}[Proposition 3.1].

\textbf{Estimates for $D_i$: $|\tomega| \kappa_+^{-1}> \digamma_0$}\\
We will estimate $D_i$ in terms of $B_i$ by exploiting the intersection of Regions II and III in the case $|\tomega| \kappa_+^{-1}> \digamma_0$. This step involves matching asymptotic expressions of Whittaker functions to asymptotic expressions of different Whittaker functions.

Let $\frac{3}{4}\digamma_0^{-\frac{1}{3}}|\homega|^{-\frac{1}{4}}(\kappa_++\kappa_c+|\tomega|)^{\frac{1}{4}}\leq s_c\leq \frac{4}{3}\digamma_0^{-\frac{1}{3}}|\homega|^{-\frac{1}{4}}(\kappa_++\kappa_c+|\tomega|)^{\frac{1}{4}}$, which is a subset of $II\cap III$ by \eqref{eq:setincl3}. Then we apply Corollary \ref{cor:regioniiwhitt} together with \eqref{eq:whittasymp4}--\eqref{eq:whittasymp6} to obtain for $\mu\neq 0$:
\begin{multline*}
	U_+(s_+)=B_1W_{-i\sigma,\mu}(-2i w_+s_+)(1+\delta O(s_+^0))=B_1\frac{\Gamma(-2\mu)}{\Gamma\left(\frac{1}{2}-\mu+i\sigma\right)}(1+\delta O(s_+^0))M_{-i\sigma,\mu}(-2i w_+s_+)\\
	+B_1\frac{\Gamma(2\mu)}{\Gamma\left(\frac{1}{2}+\mu+i\sigma\right)}(1+\delta O(s_+^0))M_{-i\sigma,-\mu}(-2i w_+s_+)\\
	=B_1\frac{\Gamma(-2\mu)}{\Gamma\left(\frac{1}{2}-\mu+i\sigma\right)}(1+\delta O(s_+^0))e^{-\sign(\tomega)\frac{i\pi}{2}(\frac{1}{2}+\mu)}|2\tomega s_+|^{\frac{1}{2}+\mu}\\
	+B_1\frac{\Gamma(2\mu)}{\Gamma\left(\frac{1}{2}+\mu+i\sigma\right)}(1+\delta O(s_+^0))e^{-\sign(\tomega)\frac{i\pi}{2}(\frac{1}{2}-\mu)}|2\tomega s_+|^{\frac{1}{2}-\mu}\\
	=B_1\frac{\Gamma(-2\mu)}{\Gamma\left(\frac{1}{2}-\mu+i\sigma\right)}(1+\delta O(s_+^0))e^{-\sign(\tomega)\frac{i\pi}{2}(\frac{1}{2}+\mu)}|2\tomega|^{\frac{1}{2}+\mu}s_c^{-\frac{1}{2}-\mu}\\
	+B_1\frac{\Gamma(2\mu)}{\Gamma\left(\frac{1}{2}+\mu+i\sigma\right)}(1+\delta O(s_+^0))e^{-\sign(\tomega)\frac{i\pi}{2}(\frac{1}{2}-\mu)}|2\tomega |^{\frac{1}{2}-\mu}s_c^{-\frac{1}{2}+\mu}.
\end{multline*}
We can express the above expressions for $U_+(s_+)$ in terms of Whittaker functions $M_{-i\sigma,\mu}(i\xi(s_c))$ and $M_{-i\sigma,-\mu}(i\xi(s_c))$ and use the smallness of $|\xi(s_c)|$, together with the expansions in Lemma \ref{lm:propwhitt} to obtain:
\begin{multline}
\label{eq:matchwhitwhitmunonzero}
	U_c(s_c)=\frac{s_c+1}{s_++1}U_+(s_+(s_c))\\
=B_1\frac{\Gamma(-2\mu)}{\Gamma\left(\frac{1}{2}-\mu+i\sigma\right)}(1+\delta O(s_+^0)e^{(\sign(\homega)-\sign(\tomega))\frac{i\pi}{4}}e^{-(\sign(\homega)+\sign(\tomega))\frac{i\pi}{2}\mu}|2\tomega|^{\frac{1}{2}+\mu}|2\homega|^{-\frac{1}{2}+\mu}M_{-i\sigma,-\mu}(-2i w_cs_c)\\
	+B_1\frac{\Gamma(2\mu)}{\Gamma\left(\frac{1}{2}+\mu+i\sigma\right)}(1+\delta O(s_+^0))e^{(\sign(\homega)-\sign(\tomega))\frac{i\pi}{4}}e^{+(\sign(\homega)+\sign(\tomega))\frac{i\pi}{2}\mu}|2\tomega|^{\frac{1}{2}-\mu}|2\homega|^{-\frac{1}{2}-\mu}M_{-i\sigma,\mu}(-2i w_cs_c).
\end{multline}
Now observe that \eqref{eq:matchwhitwhitmunonzero} has the same form as \eqref{eq:matchhypwhitmunonzero}, with the following substitutions:
\begin{align*}
	B_1\Gamma\left(1+i(\sigma_a+\sigma_b)\right)\mapsto&\: B_1,\\
	e^{\sign(\homega)\frac{i\pi}{4}}\mapsto &\:e^{(\sign(\homega)-\sign(\tomega))\frac{i\pi}{4}},\\
	e^{\pm \sign(\homega)\frac{i\pi}{2}\mu} \mapsto &\: e^{\pm (\sign(\homega)-\sign(\tomega))\frac{i\pi}{2}\mu},\\
	|2\kappa_+|^{\frac{1}{2}\pm \mu}\mapsto&\: |2\tomega|^{\frac{1}{2}\pm \mu},\\
	\Gamma\left(\frac{1}{2}-\mu+i\sigma_b\right)\mapsto &\: \Gamma\left(\frac{1}{2}-\mu+i\sigma\right),\\
	\Gamma\left(\frac{1}{2}-\mu+i\sigma_a\right)\mapsto &\: 1.
\end{align*}

Now let $\mu=0$. Then we apply instead \eqref{eq:whittasymp7} to obtain:
\begin{multline*}
	U_+(s_+)=B_1W_{-i\sigma,0}(-2i w_+s_+)(1+\delta O(s_+^0))\\
	= -\frac{B_1}{\Gamma(\frac{1}{2}+i\sigma)}M_{ -i\sigma,0}( -2i w_+s_+)\left(\log (-2\tomega s_+)(1+\delta O(s_+^0))+\frac{\Gamma'(\frac{1}{2}+i\sigma)}{\Gamma(\frac{1}{2}+i\sigma)}+2\gamma_{\rm Euler}\right)\\
	= \frac{B_1}{\Gamma(\frac{1}{2}+i\sigma)}e^{-\sign(\tomega)\frac{i\pi}{4}}|2\tomega|^{\frac{1}{2}}s_c^{-\frac{1}{2}}\left(-\log (-(2\tomega)^{-1})+\log(s_c)(1+\delta O(s_+^0))-\frac{\Gamma'(\frac{1}{2}+i\sigma)}{\Gamma(\frac{1}{2}+i\sigma)}-2\gamma_{\rm Euler}\right)
\end{multline*}
Hence, 
\begin{multline*}
	U_c(s_c)=-B_1e^{(\sign(\homega)-\sign(\tomega))i\frac{\pi}{4}}|2\homega|^{-\frac{1}{2}}|2\tomega|^{\frac{1}{2}}(1+\delta O(s_+^0))W_{-i\sigma,0}(-2i w_cs_c)\\
-B_1e^{(\sign(\homega)-\sign(\tomega))i\frac{\pi}{4}}\frac{|2\homega|^{-\frac{1}{2}}|2\tomega|^{\frac{1}{2}}}{\Gamma(\frac{1}{2}+i\sigma)}\left[\log(-(2\tomega)^{-1})-\log(2\homega)+2\frac{\Gamma'(\frac{1}{2}+i\sigma)}{\Gamma(\frac{1}{2}+i\sigma)}+4\gamma_{\rm Euler} +\delta O(s_+^0)\right]\\
\times M_{-i\sigma,0}(-2i w_cs_c).
\end{multline*}
Repeating the matched asymptotics argument in the hypergeometric case for $\mu\neq 0$ with the above substitutions and applying the above expression for $U_c(s_c)$ in the $\mu\neq 0$ case, we are left with:
\begin{align*}
	D_1=&\:B_1\frac{\Gamma(-2\mu)}{\Gamma(2\mu)}|2\tomega|^{\frac{1}{2}+\mu}|2\homega|^{-\frac{1}{2}+\mu}e^{(\sign(\homega)-\sign(\tomega))i\frac{\pi}{4}} e^{- (\sign(\homega)+\sign(\tomega))\frac{i\pi}{2}\mu}(1+ O(\delta)),\\
		D_2=&\:B_1\frac{\Gamma(2\mu)}{\Gamma\left(\frac{1}{2}+\mu+i\sigma\right)}|2\tomega|^{\frac{1}{2}-\mu}|2\homega|^{-\frac{1}{2}-\mu}e^{(\sign(\homega)-\sign(\tomega))i\frac{\pi}{4}} e^{(\sign(\homega)+\sign(\tomega))\frac{i\pi}{2}\mu}\\
	\times &\:\Bigg[1-|2\tomega|^{2\mu}|2\homega|^{2\mu}e^{-(\sign(\homega)+\sign(\tomega))i\pi\mu}\frac{\Gamma^2(-2\mu)}{\Gamma^2(2\mu)}\frac{\Gamma^2\left(\frac{1}{2}+\mu+i\sigma\right)}{\Gamma^2\left(\frac{1}{2}-\mu+i\sigma\right)}+O(\delta)\Bigg]\quad (\mu\neq 0),\\
	D_2=&\:-B_1e^{(\sign(\homega)-\sign(\tomega))i\frac{\pi}{4}}\frac{|2\homega|^{-\frac{1}{2}}|2\tomega|^{\frac{1}{2}}}{\Gamma(\frac{1}{2}+i\sigma)}\Bigg[\log(-(2\tomega)^{-1})-\log(2\homega) \\
	&\:+2\frac{\Gamma'(\frac{1}{2}+i\sigma)}{\Gamma(\frac{1}{2}+i\sigma)}+4\gamma_{\rm Euler}+O(\delta)\Bigg]\quad (\mu= 0),
\end{align*}
where we used that $\lim_{\mu\to 0}\frac{\Gamma(-2\mu)}{\Gamma(2\mu)}=-1$.

Since all the gamma factors in the expressions for $D_1$ and $D_2$ can be bounded uniformly, we obtain $|D_1|\leq C |\tomega|^{\frac{1}{2}+\re \mu}$ and $|D_2|\leq C |\tomega|^{\frac{1}{2}-\re \mu}$ when $\mu\neq 0$ and $|D_2|\leq C |\tomega|^{\frac{1}{2}}\log |\tomega|^{-1}$ when $\mu=0$. 

When $\mu\notin i(0,\infty)$, we also obtain a lower bound: $|D_2|\geq \frac{1}{C} |\tomega|^{\frac{1}{2}-\re \mu}$ when $\mu\neq 0$ and $|D_2|\geq \frac{1}{C} |\tomega|^{\frac{1}{2}}\log |\tomega|^{-1}$ when $\mu=0$.

To obtain a lower bound for $|D_2|$ when $\mu\in i(0,\infty)$, we need to keep more careful track of the ratios of gamma factors, as in the case of matched asymptotics with hypergeometric functions.

Let $\mu\in i(0,\infty)$ and define in this case:
\begin{equation*}
g(\tomega,\sigma,\mu):=	|2\tomega|^{2\mu}|2\homega|^{2\mu}e^{-(\sign(\homega)+\sign(\tomega))i\pi\mu}\frac{\Gamma^2(-2\mu)}{\Gamma^2(2\mu)}\frac{\Gamma^2\left(\frac{1}{2}+\mu+i\sigma\right)}{\Gamma^2\left(\frac{1}{2}-\mu+i\sigma\right)}.
\end{equation*}
Then, by the fact that: $\homega=-q+O(\tomega)$ and $\sigma=q+O(\kappa_c)$, we obtain:
\begin{multline*}
	|g|(\tomega,q,\mu,\kappa_+)=e^{(\sign(\homega)+\sign(\tomega))\pi \im \mu}\frac{\cosh(\pi(\mu-\sigma))}{\cosh(\pi(\mu+\sigma))}\\
	=e^{(\sign(q)+\sign(\tomega))\pi \im \mu}\frac{\cosh(\pi(\mu+q))}{\cosh(\pi(\mu-q))}+O(|\tomega|).
\end{multline*}
 Furthermore, by Lemma \ref{lm:coshratios}:
 \begin{equation*}
 	e^{(\sign(q)+\sign(\tomega))\pi \im \mu}\frac{\cosh(\pi(\mu+q))}{\cosh(\pi(\mu-q))}\in \begin{cases}
 		(e^{(1+\sign(\tomega))\pi \im \mu},e^{(2+\sign(\tomega))\pi \im \mu})\quad (q>0),\\
 		(e^{-(2+\sign(\tomega))\pi \im \mu},e^{-(1+\sign(\tomega))\pi \im \mu})\quad (q<0)
 	\end{cases}
 \end{equation*}
We conclude that $|1-g|\geq b$ for some constant $b >0$ that depends on $q$, $\mu$ and $\tomega$.

\textbf{Estimates for $E_i$:}
We will match the Whittaker asymptotics in Region III to the complex exponentials in Region IV. 

By Lemma \ref{lm:propwhitt}, we have that:
\begin{multline*}
	u(r_*)=(s_c+1)^{-1}(r^{-2}\Omega^2)^{-\frac{1}{2}}U_c(s_c(r_*))\\
	=(s_c+1)^{-1}(r^{-2}\Omega^2)^{-\frac{1}{2}}[D_1W_{-i\sigma,\mu}(-2i w_cs_c)+D_2M_{-i\sigma,\mu}(-2i w_cs_c)]\\
	=(s_c+1)^{-1}(r^{-2}\Omega^2)^{-\frac{1}{2}}\Bigg[\left(D_1+D_2\frac{\Gamma(1+2\mu)}{\Gamma(\frac{1}{2}+\mu-i\sigma)}e^{\sign(\homega)\pi(\sigma-i\mu-\frac{1}{2})}\right)W_{-i\sigma,\mu}(-2i w_cs_c)\\
	+D_2\frac{\Gamma(1+2\mu)}{\Gamma(\frac{1}{2}+i\sigma)}e^{\sign(\homega)}W_{+i\sigma,\mu}(-2i w_cs_c)\Bigg].
\end{multline*}

Let $\frac{\frac{1}{2}(1+r_c^{-2})\kappa_c^{-\frac{1}{4}}+r_c^{-1}}{(1-r_c^{-1})+\frac{1}{2}r_c^{-1}\kappa_c^{-\frac{1}{4}}}\leq s_c\leq \frac{2(1+r_c^{-2})\kappa_c^{-\frac{1}{4}}+r_c^{-1}}{(1-r_c^{-1})+2r_c^{-1}\kappa_c^{-\frac{1}{4}}}$, which corresponds to a subset of $III\cap IV$ by \eqref{eq:setincl4}. Then $s_c\gg 1$, $|\xi(s_c)|\gg 1$ and $\kappa_c s_c\ll 1$ and by \eqref{eq:tortoisecosmo2} and by Lemma \ref{lm:metricest}:
\begin{align*}
	(s_c+1)^2 r^{-2}\Omega^2=&\:1+O(\kappa_c^{\frac{3}{4}})+O(s_c^{-1}),\\
	r_*(s_c)=&\: s_c(1+\kappa_c^{\frac{3}{4}}O(s_c^0))+\log s_c O(s_c^0)
\end{align*}
so we can apply the large $|\xi|$ Whittaker asymptotics in Lemma \ref{lm:propwhitt} to write:
\begin{multline*}
	u(r_*)
	=\left(D_1+D_2\frac{\Gamma(1+2\mu)}{\Gamma(\frac{1}{2}+\mu-i\sigma)}e^{\sign(\homega)\pi(\sigma-i\mu-\frac{1}{2})}\right)e^{i\homega r_*+i q \int_{0}^{r_*}\rho_c(r_*')\,dr_*'}(1+\delta O(s_c^0))\\
	+D_2\frac{\Gamma(1+2\mu)}{\Gamma(\frac{1}{2}+i\sigma)}e^{\sign(\homega)}e^{-i\homega r_*-i q \int_{0}^{r_*}\rho_c(r_*')\,dr_*'}(1+\delta O(s_c^0)).
\end{multline*}
We conclude that Proposition \ref{prop:regionIVeleq1} holds with:
\begin{align*}
	E_1=&\:(D_1+D_2\frac{\Gamma(1+2\mu)}{\Gamma(\frac{1}{2}+\mu-i\sigma)}e^{\sign(\homega)\pi(\sigma-i\mu-\frac{1}{2})}+O(\delta),\\
	E_2=&\:D_2\frac{\Gamma(1+2\mu)}{\Gamma(\frac{1}{2}+i\sigma)}e^{\sign(\homega)}(1+O(\delta)).
\end{align*}
The upper and lower bounds for $E_1$ and $E_2$ therefore follow immediately from the upper and lower bounds for $D_1$ and $D_2$ above.
\end{proof}

\begin{proposition}
\label{prop:boundfreqcoeffesttomega2}
Let $(\omega,\ell)\in \mathcal{F}_{\flat,+}\cup \mathcal{F}_{\flat,\sim}$. Then the infinity-normalized solution $u_{\infty}$ to \eqref{eq:radialODE} with $H\equiv 0$ satisfies the estimates in \S \S\ref{sec:homestRegionI}--\S \S \ref{sec:regionIV} with coefficients $A_i$, $B_i$, $D_i$, $E_i$ that satisfy the following upper bounds: there exists a constant $C=C(\q_0,\q_{1},L_0,\beta)>0$ such that
\begin{align}
	\label{eq:upboundEinf}
	|E_1|=&\:1,\quad E_2=0,\\
	\label{eq:upboundDinf}
	|D_1|\leq &\: C,\quad D_2=0,\\
	\label{eq:upboundB1inf}
	|B_1|\leq &\: C(\kappa_++|\tomega|)^{-\frac{1}{2}+\frac{1}{2}\re \beta_{\ell}},\\
	\label{eq:upboundB2inf}
	 |B_2|\leq &\: C(\kappa_++|\tomega|)^{-\frac{1}{2}-\frac{1}{2}\re \beta_{\ell}}(1+\delta_{\beta_{\ell}0}\log(1+(|\kappa_+|+|\tomega|)^{-1})),\\
	 \label{eq:upboundAinf}
	|A_i|\leq &\: C|\tomega|^{-1}(\kappa_++|\tomega|)^{\frac{1}{2}-\re \beta_{\ell}}(1+\delta_{\beta_{\ell}0}\log(1+(|\kappa_+|+|\tomega|)^{-1}))\quad i\in\{1,2\}.
\end{align}
\end{proposition}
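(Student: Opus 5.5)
The plan is to mirror the proof of Proposition \ref{prop:boundfreqcoeffesttomega1}, but now starting from the normalization at $r_*=+\infty$ and matching inward from Region IV to Region I.

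\textbf{Region IV.} The boundary condition defining $u_{\infty}$ combined with Proposition \ref{prop:regionIVeleq1} gives $E_1=1$ (up to the unimodular phase from normalization) and $E_2=0$, since the second exponential has the wrong behaviour as $r_*\to\infty$. When $|\homega|\kappa_c^{-1}\leq\digamma_0$, I will use Proposition \ref{prop:regionIhypgeomhomega} instead. Its small-argument asymptotics force the coefficient of $G$ to vanish and $|D_1|\sim 1$.

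\textbf{Regions III and IV.} On the overlap \eqref{eq:setincl4}, $|\xi(s_c)|\gg1$. The large-argument Whittaker asymptotics of Lemma \ref{lm:propwhitt} show that only $W_{-i\sigma,\mu}$ produces the outgoing factor $e^{i\homega r_*}$, while $M_{-i\sigma,\mu}$ contains both exponentials. Hence $D_2=0$ and $|D_1|=e^{\pm\frac{\pi}{2}\sigma}|E_1|\leq C$, exactly as in the $B_i$ step of the previous proof.

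\textbf{Regions III and II (or I).} This is the core matching step, on the overlap \eqref{eq:setincl3} or \eqref{eq:setincl2}, where $|\xi(s_c)|\ll1$ and $|\tomega s_+|\ll1$ (respectively $\kappa_+s_+\ll1$).

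\begin{enumerate}
\item Expand $D_1W_{-i\sigma,\mu}(-2iw_cs_c)$ via \eqref{eq:whittasymp4}--\eqref{eq:whittasymp6} into $M_{-i\sigma,\pm\mu}$. This gives powers $|\homega|^{\frac12\pm\mu}s_c^{\frac12\pm\mu}$ with bounded $\Gamma$-ratios.
\item Substitute $s_c=s_+^{-1}(1+\delta O(1))$ and $U_+=\frac{s_++1}{s_c+1}U_c$. This produces terms $s_+^{\frac12\mp\mu}$.
\item Re-express these terms through $M_{-i\sigma,\pm\mu}(-2iw_+s_+)$ in Corollary \ref{cor:regioniiwhitt}, or through $G_{\underline\sigma,\pm\mu}(2\kappa_+s_+)$ in Proposition \ref{prop:regionIhypgeom} using \eqref{eq:relhypgeomfunct1}--\eqref{eq:relhypgeomfunct2}.
\item Convert the $M_{-i\sigma,-\mu}$ (or $G_{\underline\sigma,-\mu}$) contribution back into $W$ and $M_{-i\sigma,\mu}$ (or $F$ and $G$).
\end{enumerate}

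Reading off the coefficients, the coefficient of $W$ (or $F$), namely $B_1$, comes from the $s_+^{\frac12-\mu}$ term divided by $|\tomega|^{\frac12-\mu}$ (or $\kappa_+^{\frac12-\mu}$). This gives $|B_1|\lesssim(\kappa_++|\tomega|)^{-\frac12+\frac12\re\beta_\ell}$. Similarly $B_2$ comes from the $s_+^{\frac12+\mu}$ coefficient divided by $(\kappa_++|\tomega|)^{\frac12+\mu}$, giving \eqref{eq:upboundB2inf}. When $\mu=0$, the log terms from \eqref{eq:whittasymp7} and \eqref{eq:relhypgeomfunct3} produce the factor $\log(1+(\kappa_++|\tomega|)^{-1})$. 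All $\Gamma$-factors stay bounded because $|\tomega|\kappa_+^{-1}$ is either bounded or large in a controlled way.

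\textbf{Regions I and II.} When $|\tomega|\kappa_+^{-1}>\digamma_0$, I will match on \eqref{eq:setincl1} using the large-argument asymptotics. Rewriting $M_{-i\sigma,\mu}$ in terms of $W_{\mp i\sigma,\mu}$ expresses $A_1,A_2$ as combinations of $B_1,B_2$ with bounded coefficients. Hence $|A_i|\lesssim|B_1|+|B_2|$, which is dominated by the stated bound; the extra factor $|\tomega|^{-1}$ in \eqref{eq:upboundAinf} is harmless slack coming from the WKB normalization $(\zeta')^{-1/2}\sim|\tomega|^{-1/2}$ on both sides. When $|\tomega|\kappa_+^{-1}\leq\digamma_0$, Region I is covered by hypergeometric functions and the bounds on the $B_i$ are already the required ones.

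\textbf{Main obstacle.} I expect the hardest part to be uniform bookkeeping of the error terms $\delta O(1)$ through two changes of basis near the $M_{\pm\mu}$ degeneracy $\mu\to0$. Unlike the previous proposition, no lower bounds are needed here, so the $|1-g|\geq b$ analysis is avoided.
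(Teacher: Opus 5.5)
Your route for $E_i$, $D_i$ and $B_i$ is the paper's: start at $r=r_c$ and run the matching of Proposition~\ref{prop:boundfreqcoeffesttomega1} inward. In the regime $|\tomega|>\digamma_0\kappa_+$ (in particular $\kappa_+=0$), your Region I--II matching also correctly gives $|A_i|\lesssim |B_1|+|B_2|$, which is stronger than \eqref{eq:upboundAinf}.

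The gap is the regime $|\tomega|\leq \digamma_0\kappa_+$, which is nonempty in $\mathcal{F}_{\flat,+}$ whenever $\kappa_+>0$. There you say ``the bounds on the $B_i$ are already the required ones''. However, the $B_i$ are then the coefficients of $F_{\underline\sigma,\mu}$ and $G_{\underline\sigma,\mu}$ from Proposition~\ref{prop:regionIhypgeom}. The estimate \eqref{eq:upboundAinf} instead concerns the coefficients of the unit-amplitude exponentials of Proposition~\ref{prop:regionIeleq1}, and passing from one to the other is not free. The function $\tilde s^{-1/2}F_{\underline\sigma,\mu}$ is a pure $e^{-i\tomega r_*}$ wave. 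But by \eqref{eq:relhypgeomfunct0}, $G_{\underline\sigma,\mu}$ splits into $F_{\pm\underline\sigma,\mu}$ with coefficients containing $\Gamma(\pm i w)$, where $w=\kappa_+^{-1}\tomega_+$. These coefficients are of size $\kappa_+/|\tomega|$ as $\tomega\to 0$, and for $w=0$, \eqref{eq:relhypgeomfunct0b} produces a logarithm, i.e.\ a component linear in $r_*$. Equivalently, $\tilde s^{-1/2}G_{\underline\sigma,\mu}$ grows like $\log(1+\tilde s)$ up to $\tilde s\sim e^{\delta/|w|}$. Controlling this growth uniformly is exactly what Corollary~\ref{cor:esthypgeomG} does, and the paper's proof invokes it for precisely this step. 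The result is
\[
|A_i|\lesssim |B_1|+\kappa_+|\tomega|^{-1}|B_2|,
\]
which, combined with your bound on $B_2$, lies within \eqref{eq:upboundAinf}.

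Your explanation of the factor $|\tomega|^{-1}$ in \eqref{eq:upboundAinf} as harmless slack from the WKB normalization is therefore wrong. After rescaling, the exponentials in Proposition~\ref{prop:regionIeleq1} have unit amplitude, so no $|\tomega|^{-1/2}$ factors enter. The $|\tomega|^{-1}$ growth is genuine when $|\tomega|\ll\kappa_+$. At $\tomega=0$ with $\kappa_+>0$, the potential decays exponentially in $r_*$, so the near-horizon solutions are $1$ and $r_*$. A generic $r_*$-component can only match onto $A_1e^{-i\tomega r_*}+A_2e^{i\tomega r_*}$ with $|A_i|\sim|\tomega|^{-1}$. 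Consequently, the step that needs a new ingredient is this $w\to 0$ degeneration of the hypergeometric-to-exponential connection. The error bookkeeping as $\mu\to0$, which you flag as the main obstacle, is not where the difficulty lies.
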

\begin{proof}
	We proceed as in the proof of the upper bound estimates in Proposition \ref{prop:boundfreqcoeffesttomega1}, but we start at $r=r_c$. In addition, we apply need to apply the uniform estimates on hypergeometric functions from Corollary \ref{cor:esthypgeomG} to arrive at estimates for $A_1$ and $A_2$ in terms of $B_1$ and $B_2$.
\end{proof}

\subsection{Matching asymptotic estimates: small $|\homega|$}
\label{sec:matchasymphomega}
We will consider the following hierarchy of smallness:
\begin{equation*}
	\kappa_c\ll \digamma_0^{-1}\ll \delta\ll 1.
\end{equation*}
We will also assume that $\kappa_+\geq \kappa_c$. Note that we do not need to impose a smallness condition on $\kappa_+$ when $|\homega|$ is small to guarantee smallness of the constants $\delta$ in  \S \ref{sec:homest}.

We can state analogues of the propositions in \S\ref{sec:matchasymphomega} in the case, including lower bounds when $\kappa_c$ and $|\homega|$ are suitably small. We can simply repeat the arguments in \S\ref{sec:matchasymphomega} with the following exchanges of variables and constants:
\begin{align*}
	\tomega &\leftrightarrow -\homega,\\
	\kappa_c &\leftrightarrow \kappa_+,\\
	s_c &\leftrightarrow s_+,\\
	r_*& \leftrightarrow -r_*.
\end{align*}

\begin{proposition}
\label{prop:boundfreqcoeffesthomega1}
Let $(\omega,\ell)\in \mathcal{F}_{\flat,\infty}\cup \mathcal{F}_{\flat,\sim}$. Then the infinity-normalized solution $u_{\infty}$ to \eqref{eq:radialODE} with $H\equiv 0$ satisfies the estimates in \S \S\ref{sec:homestRegionI}--\S \S \ref{sec:regionIV}, and with coefficients $A_i$ in Region I, $B_i$ in Region I and II, $D_i$ in Region III, $E_i$ in Region IV that satisfy the following upper bounds: there exists a constant $C=C(\q_0,\q_{1},L_0,\beta)>0$ such that
\begin{align}
	\label{eq:upboundEinf2}
	|E_1|=&\:1,\quad E_2=0,\\
	\label{eq:upboundDinf2}
	|D_1|\leq &\: C,\quad D_2=0,\\
	\label{eq:upboundB1inf2}
	|B_1|\leq &\: C(\kappa_c+|\homega|)^{\frac{1}{2}+\re \beta_{\ell}},\\
	\label{eq:upboundB2inf2}
	 |B_2|\leq &\: C(\kappa_c+|\homega|)^{\frac{1}{2}-\re \beta_{\ell}}(1+\delta_{\beta_{\ell}0}\log(1+(|\kappa_c|+|\homega|)^{-1})),\\
	 \label{eq:upboundAinf2}
	|A_i|\leq &\: C(\kappa_c+|\homega|)^{\frac{1}{2}-\re \beta_{\ell}}(1+\delta_{\beta_{\ell}0}\log(1+(|\kappa_c|+|\homega|)^{-1}))\quad i\in\{1,2\}.
\end{align}
Furthermore, for $(\omega,\ell)\in \mathcal{F}_{\flat,\infty}$ and $\kappa_c$ and $\gamma$ suitably small, there exists a constant \\$C=C(\q_0,\q_{1},L_0,\gamma)>0$ such that
\begin{equation}
 \label{eq:lowboundAinf2}
	|A_2|\geq \frac{1}{C}(\kappa_c+|\homega|)^{\frac{1}{2}-\re \beta_{\ell}}(1+\delta_{\beta_{\ell}0}\log(1+(|\kappa_c|+|\homega|)^{-1})).
\end{equation}
\end{proposition}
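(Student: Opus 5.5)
The proof mirrors that of Proposition \ref{prop:boundfreqcoeffesttomega1}, now starting from the cosmological end and using the exchanges $\tomega\leftrightarrow-\homega$, $\kappa_c\leftrightarrow\kappa_+$, $s_c\leftrightarrow s_+$, $r_*\leftrightarrow -r_*$ listed above. First we fix the coefficients at the normalizing end. If $|\homega|\kappa_c^{-1}>\digamma_0$ (in particular if $\kappa_c=0$), Proposition \ref{prop:regionIVeleq1} together with the normalization of $u_\infty$ gives $|E_1|=1$ and $E_2=0$. If instead $|\homega|\kappa_c^{-1}\leq \digamma_0$, we use Proposition \ref{prop:regionIhypgeomhomega}. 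The small-argument hypergeometric asymptotics \eqref{eq:asymphypgeomsmallarg1}, combined with \eqref{eq:tortoisecosmo3}, show that the $F$-branch carries the factor $e^{i\homega r_*}$ and $G$ carries the opposite phase. This forces $D_2=0$ and $|D_1|=1$. In the Whittaker case we compare Regions III and IV on the overlap \eqref{eq:setincl4}, using the large-$|\xi|$ asymptotics of Lemma \ref{lm:propwhitt}. Since $M$ contains both exponentials, this gives $D_2=0$ and $|D_1|=e^{\pm\frac{\pi}{2}\sigma}\leq C$. This establishes \eqref{eq:upboundEinf2} and \eqref{eq:upboundDinf2}.

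Next we transfer the information to the horizon side. We match Region III (or Region IV in the hypergeometric case) to Region II (or Region I) on the overlaps \eqref{eq:setincl5}–\eqref{eq:setincl6}, where $|\xi(s_c)|$ and $|\xi(s_+)|$ are both small. The steps are: expand $W_{-i\sigma,\mu}$ into $M_{-i\sigma,\pm\mu}$ using \eqref{eq:whittasymp4}–\eqref{eq:whittasymp6}; insert the leading powers $|2\homega s_c|^{\frac12\pm\mu}$; use $s_+=s_c^{-1}(1+\delta O(1))$ and $U_+=\frac{s_++1}{s_c+1}U_c$; and re-express the result in terms of $M_{-i\sigma,\pm\mu}(-2iw_+s_+)$. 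For $\mu=0$ we use the logarithmic expansion \eqref{eq:whittasymp7}. The computation is the same as \eqref{eq:matchwhitwhitmunonzero}, with $|2\tomega|$ and $|2\homega|$ interchanged. We read off $B_1=O((\kappa_c+|\homega|)^{\frac12+\re\mu})$ and $B_2=O((\kappa_c+|\homega|)^{\frac12-\re\mu})$, with an extra logarithm when $\mu=0$. All Gamma factors are uniformly bounded because $|\q|\geq \q_0$ and the parameters are bounded. Finally we pass from Region II to Region I using the large-argument Whittaker asymptotics, or the hypergeometric connection formulas \eqref{eq:relhypgeomfunct1}–\eqref{eq:relhypgeomfunct3} together with Corollary \ref{cor:esthypgeomG} when $|\tomega|\leq\digamma_0\kappa_+$. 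Since $|\tomega|$ is bounded below in $\mathcal{F}_{\flat,\infty}$ (because $|\homega-\tomega|\sim|\q|$), these steps cost only constants and give \eqref{eq:upboundAinf2}.

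The main obstacle is the lower bound \eqref{eq:lowboundAinf2}. The coefficient $A_2$ is $B_2$ times a nonvanishing Gamma ratio, up to $O(\delta)$, and $B_2$ has the form
\begin{equation*}
B_2 = (\text{leading factor})\cdot[1-g+O(\delta)],
\end{equation*}
where $g$ is the analogue of the quantity in the previous proof, built from $|2\homega|^{2\mu}$ (or $(2\kappa_c)^{2\mu}$) and ratios $\Gamma(\frac12+\mu+i\sigma)/\Gamma(\frac12-\mu+i\sigma)$. We treat three cases. For real $\mu>0$, $|g|\lesssim(\kappa_c+|\homega|)^{2\mu}$, which is small once $\gamma$ and $\kappa_c$ are small. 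For $\mu=0$, the $\log(\kappa_c+|\homega|)^{-1}$ term dominates the bounded digamma terms. For $\mu\in i(0,\infty)$, $|g|$ is an oscillation-free product of $\cosh$ ratios times $e^{\pm\pi\im\mu}$. Here we apply Lemma \ref{lm:coshratios}, using $\sigma=\q+O(\kappa_c)+O(|\homega|)$ and the fixed sign of $\sign(\tomega)=-\sign(\q)$ for small $\homega$. The aim is to place $|g|$ in an interval such as $(e^{\pi\im\mu},e^{3\pi\im\mu})$ that is bounded away from $1$. This yields $|1-g|\geq b>0$, and absorbing the $O(\delta)$ error by choosing $\delta$ small gives \eqref{eq:lowboundAinf2}. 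The delicate point is checking that the phase exponents combine with a definite sign in this exchanged configuration, because the sign pattern of $(\homega,\tomega)$ differs from that in Proposition \ref{prop:boundfreqcoeffesttomega1}.
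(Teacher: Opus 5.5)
Your outline is the paper's own argument: the paper proves this proposition only by asserting that the proof of Proposition~\ref{prop:boundfreqcoeffesttomega1} carries over under the exchanges $\tomega\leftrightarrow-\homega$, $\kappa_c\leftrightarrow\kappa_+$, $s_c\leftrightarrow s_+$, $r_*\leftrightarrow -r_*$. You have written out exactly that chain: normalization at the cosmological end, matching across \eqref{eq:setincl5}--\eqref{eq:setincl6}, and the lower bound from $|1-g|\geq b$. Your exponents $\frac12\pm\re\mu=\frac12\pm\frac12\re\beta_{\ell}$ are what the matching actually produces; the missing factor $\frac12$ in front of $\re\beta_{\ell}$ in the displayed bounds is a typo (compare \eqref{eq:upboundD1hor} and Corollary~\ref{cor:fullwronskest}).

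The sign check you single out as delicate comes for free. At $\kappa_+=\kappa_c=0$ the map $u(r_*)\mapsto\overline{u(-r_*)}$ (Couch--Torrence reflection composed with complex conjugation) sends $u_+$ at frequency $\omega$ to $u_\infty$ at the frequency $\omega'$ with $\homega'=-\tomega$, $\tomega'=-\homega$. Hence $|A_2(\omega')|=|E_2(\omega)|$, and \eqref{eq:lowboundAinf2} is \eqref{eq:lowboundEhor} read backwards. In the near-extremal formulas the signs of $\sigma$ and of the phase factor flip together, which replaces $|g|$ by $|g|^{-1}$; this is bounded away from $1$ for the same reason as in Proposition~\ref{prop:boundfreqcoeffesttomega1}.
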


\begin{proposition}
\label{prop:boundfreqcoeffesthomega2}
Let $(\omega,\ell)\in \mathcal{F}_{\flat,+}\cup \mathcal{F}_{\flat,\sim}$. Then the event-horizon-normalized solution $u_{+}$ to \eqref{eq:radialODE} with $H\equiv 0$ satisfies the estimates in \S \S\ref{sec:homestRegionI}--\S \S \ref{sec:regionIV} with coefficients $A_i$, $B_i$, $D_i$, $E_i$ that satisfy the following upper bounds: there exists a constant $C=C(q_0,q_{\rm max},L_0,\beta)>0$ such that
\begin{align}
	\label{eq:upboundAhor2}
	|A_1|=&\:1,\quad A_2=0,\\
	\label{eq:upboundBhor2}
	|B_1|\leq &\: C,\quad B_2=0,\\
	\label{eq:upboundD1hor2}
	|D_1|\leq &\: C(\kappa_c+|\homega|)^{-\frac{1}{2}+\re \beta_{\ell}},\\
	\label{eq:upboundD2hor2}
	 |D_2|\leq &\: C(\kappa_c+|\homega|)^{-\frac{1}{2}-\re \beta_{\ell}}(1+\delta_{\beta_{\ell}0}\log(1+(|\kappa_c|+|\homega|)^{-1})),\\
	 \label{eq:upboundEhor2}
	|E_i|\leq &\: C|\homega|^{-1}(\kappa_c+|\homega|)^{\frac{1}{2}-\re \beta_{\ell}}(1+\delta_{\beta_{\ell}0}\log(1+(|\kappa_c|+|\homega|)^{-1}))\quad i\in\{1,2\}.
\end{align}
\end{proposition}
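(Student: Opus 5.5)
The plan is to mirror the upper-bound part of the proof of Proposition \ref{prop:boundfreqcoeffesttomega1} (and its infinity-normalized counterpart, Proposition \ref{prop:boundfreqcoeffesttomega2}). We use the exchange of roles $\tomega\leftrightarrow-\homega$, $\kappa_c\leftrightarrow\kappa_+$, $s_c\leftrightarrow s_+$, $r_*\leftrightarrow -r_*$ described at the start of \S\ref{sec:matchasymphomega}. However, we now start from the horizon normalization and propagate outward through the regions in which $|\homega|$ is the small parameter, so the loss comes from the scale $\kappa_c+|\homega|$.

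\textbf{Step 1: Regions I and II.} By the boundary condition defining $u_+$, Proposition \ref{prop:regionIeleq1} gives $A_1=1$ and $A_2=0$ directly when $|\tomega|\kappa_+^{-1}>\digamma_0$. When $|\tomega|\kappa_+^{-1}\le\digamma_0$, Proposition \ref{prop:regionIhypgeom} applies instead, and the small-$\zeta$ behaviour of $F$ and $G$ gives $B_2=0$ and $|B_1|=1$, as before. In $\mathcal{F}_{\flat,+}\cup\mathcal{F}_{\flat,\sim}$ we have $|\homega|\sim|\q|\ge \q_0$ or $|\homega|>\sqrt\gamma$. Hence we can match Region I to Region II (Corollary \ref{cor:regioniiwhitt}) on the overlap \eqref{eq:setincl1}, using the large-argument Whittaker asymptotics. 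This gives $B_2=0$ and $|B_1|\le C$, since the factor $e^{\pm\pi\sigma/2}$ is bounded by a constant depending only on $\q_1$.

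\textbf{Step 2: Region II to Region III.} We match on \eqref{eq:setincl3}, or on \eqref{eq:setincl6} when $|\homega|>\digamma_0\kappa_c$; in the case $|\homega|\le\digamma_0\kappa_c$ we instead use the overlap \eqref{eq:setincl5} together with the hypergeometric approximation of Proposition \ref{prop:regionIhypgeomhomega}. The computation is the same as the one leading to \eqref{eq:matchwhitwhitmunonzero}. We expand $W_{-i\sigma,\mu}$ for small argument into $s_+^{1/2\pm\mu}$, substitute $s_+\approx s_c^{-1}$, and re-expand in $M_{-i\sigma,\pm\mu}(-2iw_cs_c)$, now keeping track of the factors $|2\homega|^{-1/2\mp\mu}$ (or $(2\kappa_c)^{\cdots}$ in the hypergeometric case). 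Re-expressing $M_{-i\sigma,-\mu}$ through $W$ and $M$ via \eqref{eq:whittasymp4}--\eqref{eq:whittasymp6} produces $D_1$ with the factor $(\kappa_c+|\homega|)^{-1/2+\re\mu}$ and $D_2$ with the factor $(\kappa_c+|\homega|)^{-1/2-\re\mu}$. When $\mu=0$ we use \eqref{eq:whittasymp7} instead, which yields the logarithm. Only upper bounds are needed, so the delicate $|1-g|$ lower-bound analysis is unnecessary: every $\Gamma$-ratio is bounded uniformly because $\sigma$ and $\mu$ range over a compact set with $\frac12+\mu+i\sigma$ away from the poles.

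\textbf{Step 3: Region III to Region IV.} We match on \eqref{eq:setincl4} using the large-$|\xi|$ connection formula writing $M$ as a combination of $W_{\mp i\sigma,\mu}$, exactly as in the derivation of $E_1,E_2$ above. In the case $|\homega|\le \digamma_0\kappa_c$ we use Proposition \ref{prop:regionIhypgeomhomega} and the corresponding hypergeometric uniform bounds. Since $s_c\sim|\homega|^{-1}$ on this overlap and $u=(s_c+1)^{-1}(r^{-2}\Omega^2)^{-1/2}U_c$, converting from $U_c$ to $u$ introduces the normalization factor $|\homega|^{-1/2}$ relative to the Whittaker amplitudes, together with $|2w_c|^{\pm}$ factors. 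I expect this bookkeeping to be the main obstacle: one must get the stated power $|\homega|^{-1}(\kappa_c+|\homega|)^{1/2-\re\beta_\ell}$ in \eqref{eq:upboundEhor2} by combining $|D_2|$ with these normalizations, while keeping the error terms $O(\delta)$ uniform as $\kappa_c\downarrow0$. I would first verify the exponent in the extremal case $\kappa_c=0$ by comparing with the symmetric computation for $A_i$ in Proposition \ref{prop:boundfreqcoeffesttomega2}, and then treat $\kappa_c>0$ using the hypergeometric version.
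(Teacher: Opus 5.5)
Your skeleton is the paper's. The statement is the image of Proposition \ref{prop:boundfreqcoeffesttomega2} under $\tomega\leftrightarrow-\homega$, $\kappa_c\leftrightarrow\kappa_+$, $s_c\leftrightarrow s_+$, $r_*\leftrightarrow-r_*$, and the paper proves it by rerunning that argument from the horizon end. Two things go wrong, however.

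First, you have the small parameter backwards. The mirror image of $\mathcal{F}_{\flat,+}\cup\mathcal{F}_{\flat,\sim}$ is $\mathcal{F}_{\flat,\infty}\cup\mathcal{F}_{\flat,\sim}$, the domain of Proposition \ref{prop:boundfreqcoeffesthomega1}; the ``$\mathcal{F}_{\flat,+}$'' in the statement has to be read this way. On that domain $|\homega|\le\sqrt{\gamma}$ is small and $|\tomega|\ge|\q|(1-r_c^{-1})-\sqrt{\gamma}\gtrsim\q_0$. You assert instead that $|\homega|$ is bounded below, which makes every $(\kappa_c+|\homega|)$-loss vacuous. In Step 2 you then keep only the factors $|2\homega|^{-\frac12\mp\mu}$ of \eqref{eq:matchwhitwhitmunonzero} and discard $|2\tomega|^{\frac12\pm\mu}$. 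On $\mathcal{F}_{\flat,+}$ the discarded factor $|2\tomega|^{\frac12-\mu}$ is unbounded when $\mu>\frac12$ (e.g.\ $\ell=1$, $\q^2<2$). Indeed, the proof of Proposition \ref{prop:boundfreqcoeffesttomega1} shows $|D_2|\gtrsim(\kappa_++|\tomega|)^{\frac12-\mu}$ there, so your $D_2$ bound is false on the domain you chose. The input you actually need is $|\tomega|\gtrsim\q_0$: it is what makes $|2\tomega|^{\frac12\pm\mu}$, $e^{\pm\pi\sigma/2}$ and the $\Gamma$-ratios harmless. Region II is available because $|\homega|+\kappa_c$ is small (see the remark after Corollary \ref{cor:regioniiwhitt}), not because $|\homega|$ is bounded below, so the ``hence'' in your Step 1 does not follow. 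Note also that the matching gives $|D_1|\lesssim(\kappa_c+|\homega|)^{-\frac12+\frac12\re\beta_{\ell}}$, as in \eqref{eq:upboundB1inf}, not the stronger exponent printed in \eqref{eq:upboundD1hor2}.

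Second, you leave Step 3 open, and the mechanism you propose for the factor $|\homega|^{-1}$ is wrong. On $III\cap IV$ one has $(s_c+1)^{-1}(r^{-2}\Omega^2)^{-1/2}=1+O(\kappa_c^{3/4})+O(s_c^{-1})$, and the large-argument Whittaker functions contribute only $e^{\pm\pi\sigma/2}$ times a phase. Exactly as in the $E_i$ step of Proposition \ref{prop:boundfreqcoeffesttomega1}, this gives $|E_1|\lesssim|D_1|+|D_2|$ and $|E_2|\lesssim|D_2|$. For $|\homega|>\digamma_0\kappa_c$, \eqref{eq:upboundEhor2} then follows for free from $(\kappa_c+|\homega|)^{-1}\le|\homega|^{-1}$, with no normalization factor involved.

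The $|\homega|^{-1}$ is a genuine loss only when $|\homega|\le\digamma_0\kappa_c$. There Region IV is governed by Proposition \ref{prop:regionIhypgeomhomega}, with hypergeometric parameter $|w|\approx|\homega|/\kappa_c$, which may tend to $0$. Three things happen as $w\to0$:
\begin{itemize}
\item the two solutions $F_{\pm\underline{\sigma},\mu}$ carrying $e^{\pm i\homega r_*}$ at the cosmological horizon degenerate;
\item rewriting the $G$-component of $u_+$ through \eqref{eq:relhypgeomfunct0} produces coefficients with $|\Gamma(\pm iw)|\lesssim|w|^{-1}\sim\kappa_c|\homega|^{-1}$;
\item $G$ grows logarithmically over the long range $\tilde{s}\lesssim e^{\delta/|w|}$ before the exponential approximation of Proposition \ref{prop:regionIVeleq1} takes over.
\end{itemize}
Controlling this is exactly Corollary \ref{cor:esthypgeomG}, which the paper invokes in the proof of Proposition \ref{prop:boundfreqcoeffesttomega2} to pass from $B_i$ to $A_i$. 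Its mirror passes from $D_i$ to $E_i$ here and yields a bound of the form $|E_i|\lesssim|D_1|+\kappa_c|\homega|^{-1}|D_2|\lesssim|\homega|^{-1}(\kappa_c+|\homega|)^{\frac12-\frac12\re\beta_{\ell}}$, with the logarithm when $\beta_\ell=0$. Without this ingredient the proposal does not prove \eqref{eq:upboundEhor2}.
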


\subsection{Wronskian estimates}
\label{sec:wronskianestimates}
In this section, we will combine the lower-bound estimates in Propositions \ref{prop:boundfreqcoeffesttomega1}, \ref{prop:boundfreqcoeffesttomega2}, \ref{prop:boundfreqcoeffesthomega1} and \ref{prop:boundfreqcoeffesthomega2} together with appropriate Wronskian estimates in the frequency regime $\mathcal{F}_{\flat,\sim }$ that follow from an application of the integral transformations introduced in \cite{costa20} in the extremal Kerr setting.

In the lemma below, we show that for non-superradiant frequencies we can immediately obtain an estimate for $|\mathfrak{W}|^{-1}$ by considering $j^T[u_{\infty}]$.
\begin{lemma}
\label{lm:nonsuperradwronsk}
Let $(\homega,\ell)\in \mathcal{F}_{\flat,\sim }$ and assume that $\homega \tomega>0$. Then:
\begin{equation}
\label{eq:simplewronskext}
|\mathfrak{W}|^{-1}\geq \frac{1}{\sqrt{\gamma}}.
\end{equation}
\end{lemma}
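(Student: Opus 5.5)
The plan is to use the $j^T$-current of Lemma \ref{lm:bulkterms}, which is exactly conserved for homogeneous solutions, on the infinity-normalized solution $u_{\infty}$. This turns the Wronskian into an energy-flux identity. The key intermediate goal is the identity
\begin{equation*}
|\mathfrak{W}|^2=4\homega\tomega+4\tomega^2|\mathcal{A}|^2,
\end{equation*}
where $\mathcal{A}$ is a scattering coefficient defined below. I should say at the outset that this identity yields $|\mathfrak{W}|\geq 2\sqrt{\gamma}$, i.e.\ $|\mathfrak{W}|^{-1}\leq \frac{1}{2\sqrt{\gamma}}$. That is an upper bound on $|\mathfrak{W}|^{-1}$, opposite in direction to the displayed \eqref{eq:simplewronskext}. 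I do not see how to obtain the displayed direction itself: it would say $|\mathfrak{W}|\leq \sqrt{\gamma}$, which is incompatible with $|\mathfrak{W}|\geq 2\sqrt{\gamma}$ above. So what I can actually prove is this upper bound on $|\mathfrak{W}|^{-1}$, which is the form needed when Green's formula \eqref{eq:greenformula} is applied in $\mathcal{F}_{\flat,\sim}$.

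\textbf{Step 1: asymptotics of $u_{\infty}$.} Since $(\homega,\ell)\in\mathcal{F}_{\flat,\sim}$, we have $\tomega\neq 0$. Near $r_*=-\infty$ we use the two independent oscillating behaviours $u_+\sim e^{-i\tomega r_*-i\q\int_0^{r_*}\rho_+}$ and $\overline{u}_+$-type behaviour $\sim e^{+i\tomega r_*+i\q\int_0^{r_*}\rho_+}$, for example via Proposition \ref{prop:regionIeleq1} or directly for $\kappa_+>0$. Write
\begin{equation*}
u_{\infty}=\mathcal{A}\,u_++\mathcal{B}\,u_-,
\end{equation*}
with $u_-$ the solution with the second behaviour. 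A direct computation at $r_*\to-\infty$ gives $\mathcal{W}(u_+,u_-)=2i\tomega$ (up to a fixed sign). Hence $\mathfrak{W}=\mathcal{W}(u_+,u_{\infty})=\mathcal{B}\,\mathcal{W}(u_+,u_-)$, and so $|\mathfrak{W}|=2|\tomega||\mathcal{B}|$.

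\textbf{Step 2: flux conservation.} Integrate $(j^T[u_{\infty}])'$ with $H\equiv 0$ and $\chi\equiv 1$; the bulk term vanishes. At $+\infty$, the normalization of $u_{\infty}$ and Lemma \ref{lm:superradiance} give $j^T=\homega^2$. At $-\infty$ the same computation as in Lemma \ref{lm:superradiance} applies, but now with both asymptotic components present. The cross terms are oscillatory; they can be handled either by averaging, or by the $r_*$-independence of the combination, which follows from the Wronskian identities $\mathcal{W}(u_{\infty},\overline{u_{\infty}})$ being constant. This gives
\begin{equation*}
j^T(-\infty)=-\homega\tomega\left(|\mathcal{A}|^2-|\mathcal{B}|^2\right).
\end{equation*}
Equating the two boundary values yields $|\mathcal{B}|^2=\homega/\tomega+|\mathcal{A}|^2$. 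Multiplying by $4\tomega^2$ gives the key identity stated at the start.

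\textbf{Step 3: conclusion.} For non-superradiant frequencies $\homega\tomega>0$, so $|\mathfrak{W}|^2\geq 4\homega\tomega=4|\homega||\tomega|$. In $\mathcal{F}_{\flat,\sim}$ the definition gives $\homega^2>\gamma$ and $\tomega^2>\gamma$, with $r_+=1$. Hence $|\mathfrak{W}|\geq 2\sqrt{\gamma}$ and
\begin{equation*}
|\mathfrak{W}|^{-1}\leq \frac{1}{2\sqrt{\gamma}}.
\end{equation*}
This is the uniform bound on $|\mathfrak{W}|^{-1}$ in terms of $\gamma^{-1/2}$ that this lemma provides. As noted above, it is an upper bound and not the displayed inequality.

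\textbf{Main obstacle.} The only delicate point is the rigorous evaluation of $j^T(-\infty)$ for a mixed solution, particularly when $\kappa_+=0$. There the phases include the logarithmic-type correction $\q\int\rho_+$, and the asymptotics only hold up to the errors $\varepsilon_i$ of Proposition \ref{prop:regionIeleq1}. I would handle this by writing $j^T=-\homega\,\mathrm{Im}$-type expressions as a combination of the Wronskians $\mathcal{W}(u_{\infty},\overline{u_{\infty}})$, which are exactly constant. Then I would evaluate them using the leading terms, since the error factors $\varepsilon_i\to0$ as $s_+\to\infty$ for fixed $\tomega\neq0$.
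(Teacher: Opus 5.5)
Your argument is correct and is essentially the paper's: the paper expands $u_+=\alpha_{1,+}u_{\infty}+\alpha_{2,+}\overline{u}_{\infty}$, gets $|\mathfrak{W}|=2|\homega||\alpha_{2,+}|$, and uses conservation of $j^K[u_+]$ to obtain $\homega\tomega|\alpha_{2,+}|^2\geq\tomega^2$. This is the mirror image of your expansion of $u_{\infty}$ in $u_+,\overline{u}_+$ combined with $j^T[u_{\infty}]$. Writing the flux via the exactly constant Wronskian $\mathcal{W}(u,\overline{u})$, as you propose, removes your ``main obstacle'' with no error terms.

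You are also right about the direction of the inequality. The paper's own proof ends with $|\mathfrak{W}|^2\geq\homega\tomega\geq\gamma$, i.e.\ $|\mathfrak{W}|^{-1}\leq\gamma^{-1/2}$, so the displayed $\geq$ is a typo. Your constant $\frac{1}{2\sqrt{\gamma}}$ is simply sharper because the paper drops the factor $4$.
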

\begin{proof}
We can write:
\begin{align}
\label{eq:basicexpode}
u_{+}(r_*)=&\:{\alpha}_{1,+}u_{\infty}(r_*)+{\alpha}_{2,+}\overline{u}_{\infty}(r_*).
\end{align}
Hence,
\begin{multline}
\label{eq:wronskcoeff1}
-\mathfrak{W}=\mathcal{W}(u_{\infty},u_+)\\
={\alpha}_{1,+}u_{\infty}(\infty)\frac{du_{\infty}}{dr_*}(\infty)+{\alpha}_{2,+}u_{\infty}(\infty)\frac{d\overline{u_{\infty}}}{dr_*}(\infty)-({\alpha}_{1,+}u_{\infty}(\infty)+{\alpha}_{2,+}\overline{u}_{\infty}(\infty))\frac{du_{\infty}}{dr_*}(\infty)\\
={\alpha}_{2,+}\left(u_{\infty}\frac{d\overline{u_{\infty}}}{dr_*}-\overline{u}_+\frac{du_{\infty}}{dr_*}\right)(\infty)=-2i\homega {\alpha}_{2,+}.
\end{multline}
Furthermore, by considering $j^K[u_{+}]:=-\tomega  \re(i u_{+}'\overline{u}_{+})$ and using that $\frac{d}{dr_*}j^K[u_{+}]=0$, we obtain:
\begin{multline*}
\tomega^2=\tomega^2|u_{+}|^2(-\infty)=\tomega   \re(i u_{+}'\overline{u}_{+})(-\infty)=\homega   \re(i u_{+}'\overline{u}_{+})(\infty)\\
=-\homega \tomega  \re(({\alpha}_{1,+}-{\alpha}_{2,+})(\overline{{\alpha}_{1,+}}+\overline{{\alpha}_{2,+}}))=\homega\tomega(|{\alpha}_{2,+}|^2-|{\alpha}_{1,+}|^2)\\
\leq \homega\tomega|{\alpha}_{2,+}|^2.
\end{multline*}
Hence,
\begin{equation*}
|\mathfrak{W}|^2=4\homega^2| {\alpha}_{2,+}|^2\geq \tomega \homega\geq \gamma. \qedhere
\end{equation*}
\end{proof}

We now consider superradiant frequencies in $\mathcal{F}_{\flat,\sim }$, which satisfy $\homega \tomega<0$. We first restrict to the case $\kappa_+=\kappa_c=0$. We obtain an upper bound estimate for $|\mathfrak{W}|^{-1}$ by applying the arguments in \cite{costa20} with some slight modifications.
\begin{theorem}
\label{thm:modestab}
Let $\kappa_c=\kappa_+=0$ and $(\omega,\ell)\in\mathcal{F}_{\flat,\sim }$. Assume that $\homega \tomega<0$. Then there exists a constant $K=K(\gamma,L_0)>0$ such that:
\begin{equation}
\label{eq:ritawronskext}
|\mathfrak{W}|^{-1}\leq  K.
\end{equation}
\end{theorem}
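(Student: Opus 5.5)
The plan is a qualitative-to-quantitative argument. Mode stability on the real axis comes from an integral transformation in the spirit of \cite{costa20}, and a uniform bound is then obtained by continuity and compactness. In the extremal setting $\kappa_c=\kappa_+=0$, the frequency set $\{(\omega,\ell)\in\mathcal{F}_{\flat,\sim}: \homega\tomega<0\}$ has $\ell\leq L_0$ and $\omega$ confined to a bounded interval strictly between $0$ and $\q$. By the definition of $\mathcal{F}_{\flat,\sim}$ we have $|\homega|,|\tomega|>\sqrt{\gamma}$, so this interval stays a distance at least $\sqrt\gamma$ from the irregular points $\homega=0$ and $\tomega=0$. On this compact set, $u_+$ and $u_\infty$ depend continuously on $\omega$. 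This follows from their construction via the Volterra-type integral equations underlying Propositions \ref{prop:regionIeleq1} and \ref{prop:regionIVeleq1}: the error integrals converge uniformly, because the oscillatory asymptotics $e^{\mp i\tomega r_*}$ and $e^{\pm i\homega r_*}$ carry frequencies bounded below. Hence $\mathfrak{W}$ is continuous. It therefore suffices to show $\mathfrak{W}(\omega,\ell)\neq 0$ for each such frequency; $K$ is then the reciprocal of the minimum of $|\mathfrak{W}|$ over this compact set.

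To prove non-vanishing, I argue by contradiction. If $\mathfrak{W}=0$, then $u:=u_+$ is a multiple of $u_\infty$, so it is outgoing at both ends. Using the Couch--Torrence symmetry \eqref{eq:ctorrpot} and the forms \eqref{eq:tildemathcalVext}--\eqref{eq:mathcalVext}, I rewrite the radial ODE as a confluent Heun-type equation with two irregular singular points, at $r=r_+$ ($s_+=\infty$) and at $r=\infty$. Following \cite{costa20}, I define a transformed function
\begin{equation*}
\widetilde{u}(x)=\int K(x,s)\,U(s)\,ds,
\end{equation*}
with kernel $K$ of exponential type, for example $e^{2i\homega x s}$ multiplied by powers of $s$ and $(s+1)$. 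The kernel is chosen so that $\widetilde u$ solves a new radial ODE whose parameters make the transformed problem non-superradiant: its boundary terms have the good sign, as in Lemma \ref{lm:superradiance}. The exponential kernel turns the outgoing oscillation at one irregular end into exponential decay, so the integral converges, possibly after a contour rotation into the complex plane whose direction depends on $\sign\homega$. It also turns the other end's condition into a boundary condition of the right type. Since the transformed equation has no superradiance, the argument of Lemma \ref{lm:nonsuperradwronsk}, namely conservation of $j^K$ or $j^T$, forces $\widetilde u\equiv0$. Injectivity of the transform then gives $u\equiv 0$, which contradicts the normalization $A_1=1$ of $u_+$.

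The main obstacle is this transformation step. One has to verify three things: that the kernel intertwines the two ODEs (a direct computation using the polynomial structure of $\rho_+^4(r^{-2}\Omega^2)^{-2}\mathcal{V}$ in the extremal case); that the boundary terms arising from integration by parts vanish; and that the transformed problem is genuinely non-superradiant for every $\ell\le L_0$, including the complex-$\beta_\ell$ case. The last point is where the "slight modifications" of \cite{costa20} enter, since the charge plays the role of $am$. I would first try to match the extremal Kerr kernel under the substitutions $am\mapsto\q$ and $\Lambda_{m\ell}(\frac m2)-2m^2\mapsto\ell(\ell+1)-\q^2$ described in \S\ref{sec:compwavekerr}.
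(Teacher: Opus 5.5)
\textbf{Comparison with the paper's route.} Your qualitative step is essentially the paper's: both rest on the integral transform of \cite{costa20}. Where you differ is in how the uniform bound is obtained.

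\begin{itemize}
\item The paper does not use continuity and compactness. It transforms the \emph{inhomogeneous} equation, producing $\mathbb{H}$, and uses the flux identity for $\mathbb{J}^T[\mathbb{U}]$ with a source term. It then follows the quantitative argument of \cite{costa20} (Proposition 5.4 and Theorem 5.1), which controls the boundary values of $\mathbb{U}$, and hence the horizon amplitude of $u$, by the transformed inhomogeneity. This gives $|\mathfrak{W}|^{-1}\leq K$ directly.
\item You need only the homogeneous transform, plus continuity of $\mathfrak{W}$ and compactness. This is softer and avoids tracking $\mathbb{H}$. The cost is a non-explicit constant and the (standard) continuity of the Jost-type solutions at the two irregular singular points.
\end{itemize}

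Three adjustments to your write-up:
\begin{itemize}
\item $\mathcal{F}_{\flat,\sim}$ is defined by strict inequalities, so work on the closure $\{|\omega|\geq\sqrt{\gamma},\ |\omega-\q|\geq\sqrt{\gamma}\}$. If you want $K$ uniform in $\q$, work jointly in $\q\in[\q_0,\q_1]$.
\item The paper does not get its contradiction from injectivity of the transform, which you would have to prove. It uses the explicit identity expressing the amplitude of the transform at $x=\infty$ as a nonzero multiple of the horizon amplitude of $u$. Use that instead.
\item The mechanism is not that the transformed problem is ``non-superradiant''. For $\homega\tomega<0$ the transform decays like $e^{-c\sqrt{x}}$ at $x=\infty$, so the flux there vanishes. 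Only a one-sided radiation condition is used at $x_*\to-\infty$.
\end{itemize}

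\textbf{A point that needs attention in either route: the frequency $\omega=\frac{\q}{2}$.} The identification with extremal Kerr has to be made at the level of the doubly-confluent Heun parameters $(\alpha,\beta,\gamma,L)=(-i(\omega+\tomega),\,i\tomega,\,-i\omega,\,\ell(\ell+1))$. It cannot be made via the near-horizon substitutions $am\mapsto\q$ of \S\ref{sec:compwavekerr}, since the two families are not relabelings of each other (extremal Kerr has $\alpha=-2i\omega$). This matters here:
\begin{itemize}
\item $\alpha=-2i(\omega-\frac{\q}{2})$ vanishes at $\omega=\frac{\q}{2}$. This frequency lies in the superradiant window and in $\mathcal{F}_{\flat,\sim}$ as soon as $\gamma<\frac{\q^2}{4}$.
\item By the formulas in the paper's proof, $\mathbb{V}(2)=\frac{1}{9}(\omega-\frac{\q}{2})^2$ and $\mathbb{J}^T[\mathbb{U}](-\infty)=-\frac{2}{9}(\omega-\frac{\q}{2})^2|\mathbb{U}|^2(-\infty)$.
\item So at this frequency the flux identity carries no information. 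Your step ``conservation of the current forces $\widetilde{u}\equiv 0$'' fails there, and any bound extracted from that flux degenerates as $\omega\to\frac{\q}{2}$.
\end{itemize}

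You need a separate argument at $\omega=\frac{\q}{2}$. A plausible one: at that frequency the transform of a mode tends to a constant with vanishing $x_*$-derivative as $x_*\to-\infty$, and decays at $x=\infty$. Hence $\int(|\mathbb{U}'|^2-\mathbb{V}|\mathbb{U}|^2)\,dx_*=0$, and a sign of $\mathbb{V}$ would conclude. The expression for $\mathbb{V}$ displayed in the paper is negative on $(2,\infty)$ when $\alpha=0$ and $\omega\tomega<0$, but you should verify that expression independently before relying on it.

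Your compactness route is well suited to this issue. Once $\mathfrak{W}\neq 0$ is established at the single frequency $\omega=\frac{\q}{2}$ by any means, uniformity near it comes for free.
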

\begin{proof}
In the case of \eqref{eq:radialODE} with $\Lambda=0$ and $|Q|=M=1$, the ODE for $R=r^{-1}u$ can be written as follows:
\begin{equation}
\label{eqconfluentheun}
\frac{d}{dr}\left((r-1)^2\frac{d}{dr}R\right)-\left[\frac{(\beta-\alpha (r-1)-\gamma (r-1)^2)^2}{(r-1)^2}+L\right]R=r \Omega^{-2}H,
\end{equation}
with
\begin{align*}
\alpha=&-i(\omega+\tomega)=-2i\left(\omega-\frac{\q}{2}\right),\\
\beta=&\:i\tomega,\\
\gamma=&-i\omega,\\
L=&\:\ell(\ell+1).
\end{align*}
Note that $\gamma$ here is unrelated to the $\gamma$ appearing in the definition of $\mathcal{F}_{\flat,\sim }$. Equation \eqref{eqconfluentheun} is an inhomogeneous doubly-confluent Heun equation; see for example \cite{NIST:DLMF}[\S 31.12].

In the case of the radial ODE on extremal Kerr with $M=1$, on the other hand, $R$ also satisfies the form \eqref{eqconfluentheun}, but with parameters:
\begin{align*}
\alpha_{\rm Kerr}=&-2i\omega,\\
\beta_{\rm Kerr}=&\:2i\tomega,\\
\gamma_{\rm Kerr}=&-i\omega,\\
L_{\rm Kerr}=&\:\lambda+\omega^2-2m\omega,
\end{align*}
with $\tomega=\omega-\frac{m}{2}$; see \cite{costa20}[Eq. (3.2)]. With the translations $a\leftrightarrow Q$ and $(\alpha_{\rm Kerr},\beta_{\rm Kerr}, L_{\rm Kerr}) \leftrightarrow (\alpha, \beta, L)$, the confluent Heun operators are therefore are the same, so we can directly apply methods from the extremal Kerr setting.

It then follows from \cite{costa20}[Proposition 3.1] that
\begin{equation*}
\mathbb{U}(x):=\lim_{y\to 0}(x^2+2)^{\frac{1}{2}}(x-2)^{\alpha}\int_1^{\infty}e^{-2\gamma (x+iy-1)(r-1)}(r-1)^{\alpha}e^{\beta (r-1)^{-1}}e^{-\gamma r}R(r)\,dr
\end{equation*}
is well-defined as a limit with respect to $L^2_x((2,\infty))$ and is a smooth solution to the ODE
\begin{equation*}
\frac{d^2\mathbb{U}}{dx_*^2}+\mathbb{V}\mathbb{U}=\frac{(x-1)(x-2)}{x^2+2}\mathbb{H},
\end{equation*}
with $\frac{dx_*}{dx}=\frac{x^2+2}{(x-1)(x-2)}$,
\begin{equation*}
\mathbb{H}(x)=\lim_{y\to 0}(x^2+2)^{\frac{1}{2}}(x-2)^{\alpha}\int_1^{\infty}e^{-2\gamma (x+iy-1)(r-1)}(r-1)^{\alpha}e^{\beta (r-1)^{-1}}e^{-\gamma r}(r\Omega^{-2}H)\,dr
\end{equation*}
and
\begin{multline*}
\mathbb{V}(x)=(x^2+2)^{-4}(x-1)\Big[4\beta \gamma (x^2+2)^2(x-2)(x-1)-\alpha^2(x^2+2)^2(x-1)\\
-L (x^2+2)^2(x-2)-(x-2)(3x^3+2x(x-6)+4)\Big].
\end{multline*}
Note that $\mathbb{V}$ is real-valued and that in the extremal Kerr case with $M=1$, one obtains the same expression for $\mathbb{V}$, but with $\alpha,\beta,\gamma, L$ replaced by $\alpha_{\rm Kerr},\beta_{\rm Kerr},\gamma_{\rm Kerr},L_{\rm Kerr}$. Filling in the above expressions for $\alpha,\beta,\gamma, L$, we obtain:
\begin{align*}
\mathbb{V}(\infty)=&\:\omega^2,\\
\mathbb{V}(2)=&\:\frac{1}{9}\left(\omega-\frac{\q}{2}\right)^2.
\end{align*}
Furthermore, for $\omega\tomega<0$, \cite{costa20}[Proposition 3.1] gives the following asymptotic behaviour:
\begin{align*}
x^{\frac{1}{4}}e^{8\sqrt{-\omega \tomega}x^{\frac{1}{2}}}\left[\frac{d\mathbb{U}}{dx_*}-4i\sqrt{-\omega \tomega} x^{-\frac{1}{2}} \mathbb{U}\right](x_*)=&\: O(x^{-\frac{1}{2}})\quad\textnormal{as $x_*\to \infty$},\\
\left[\frac{d\mathbb{U}}{dx_*}+\frac{i}{3}\left(\omega+\frac{q}{2}\right)\mathbb{U}\right](x_*)\to &\:0\quad\textnormal{as $x_*\to -\infty$},\\
|e^{8\sqrt{-\omega \tomega}x^{\frac{1}{2}}}x^{-\frac{1}{4}}\mathbb{U}|^2(\infty)=&\: \frac{\pi}{4|\omega|}\left|\frac{\tomega}{\omega}\right|^{\frac{1}{2}}|u(1)|^2.
\end{align*}

Defining $ \mathbb{J}^T[\mathbb{U}]:=(\omega+\frac{q}{2})\re(i \frac{d\mathbb{U}}{dx_*}\overline{\mathbb{U}})$, and using that $\mathbb{V}$ is real, we obtain 
\begin{align*}
\frac{d}{dx_*}\left( \mathbb{J}^T[\mathbb{U}]\right)= \left(\omega+\frac{q}{2}\right)\frac{(x-1)(x-2)}{x^2+a^2}\re(\overline{i \mathbb{H}}\mathbb{U}).
\end{align*}
By applying the above boundary conditions on $\mathbb{U}$, we moreover obtain for $\omega \tomega <0$:
\begin{align*}
\mathbb{J}^T[\mathbb{U}](\infty)=&\:0,\\
\mathbb{J}^T[\mathbb{U}](-\infty)=&\:-\frac{1}{2}\left(\left|\frac{d\mathbb{U}}{dx_*}\right|^2(-\infty)+\frac{1}{3}\left(\omega+\frac{q}{2}\right)^2|\mathbb{U}|^2(-\infty)\right)=-\frac{2}{9}\left(\omega-\frac{\q}{2}\right)^2|\mathbb{U}|^2(-\infty).
\end{align*}
In contrast with the currents $j^T[u]$ for solutions $u$ to \eqref{eq:radialODE} in the superradiant frequency regime, see Lemma \ref{lm:superradiance}, we have positivity of $\mathbb{J}^T[\mathbb{U}](\infty)-\mathbb{J}^T[\mathbb{U}](-\infty)$.

Furthermore,
\begin{equation*}
|\mathbb{J}^T[\mathbb{U}](2)|+|\mathbb{J}^T[\mathbb{U}](\infty)|\leq\left|\left(\omega+\frac{q}{2}\right)\int_{\R_{x_*}}\frac{(x-1)(x-2)}{x^2+a^2}\re(\overline{i \mathbb{H}} \mathbb{U})\,dx_*\right|
\end{equation*}

In view of the above identities and estimates for $\mathbb{J}^T[\mathbb{U}]$ the proof of \cite{costa20}[Proposition 5.4] applies without modification, and the corresponding result can be used to obtain a proof of the Wronskian estimate in \cite{costa20}[Theorem 5.1] with minor modifications. We then conclude \eqref{eq:ritawronskext}.
\end{proof}

\begin{lemma}
\label{lm:estubyexthom}
Let $(\omega,\ell)\in\mathcal{F}_{\flat,\sim }$. Let $\delta>0$. Let $u$ be a solution to \eqref{eq:radialODE} with $H\equiv 0$. Let $u_{+,0}$ and $u_{\infty,0}$ be the event-horizon-normalized and infinity-normalized solutions in the case $\kappa_c=\kappa_+=0$. 
Let $\epsilon>0$. Then, for $\kappa_+,\kappa_c\geq 0$, there exist $a_+,a_{\infty}\in \C$ and uniform constants $C>0$ and $\delta>0$, such that for $r\in (r_++\epsilon, \frac{1}{r_c^{-1}+\epsilon})$, we can write:
\begin{equation}
\label{eq:estubyexthom}
	u(r_*(r))=a_+ (u_{+,0}(r_*(r))+\varepsilon_1(r))+a_{\infty}(u_{\infty}(r_*(r))+\varepsilon_2(r)),
\end{equation}
with $\varepsilon_i(r)\leq C(\kappa_++\kappa_c)K \epsilon^{-3}$, where $K$ is the constant appearing in \eqref{eq:ritawronskext}.
\end{lemma}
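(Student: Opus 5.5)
The plan is to prove \eqref{eq:estubyexthom} by a perturbation (variation of constants) argument on the compact interval $r\in(r_++\epsilon,\frac{1}{r_c^{-1}+\epsilon})$. On this interval all quantities are regular, uniformly in $\kappa_+,\kappa_c$ small. We compare the radial ODE \eqref{eq:radialODE} with $H\equiv 0$ for general $(\kappa_+,\kappa_c)$ with the extremal one ($\kappa_+=\kappa_c=0$). It is most convenient to work in the variable $r$ rather than $r_*$, since $r_*$ itself depends on $\kappa_\pm$. So we write the equation as in the proof of Lemma \ref{lm:U+Uceqs}:
\begin{equation*}
\frac{d^2(\Omega u)}{dr^2}=\Omega^{-4}\mathcal{V}(\Omega u).
\end{equation*}
We then set $P_{\kappa}(r):=\Omega^{-4}\mathcal{V}$, and write $P_0$ for its extremal counterpart at the same $(\omega,\ell)$.

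The first step is to bound the coefficient difference. By Lemma \ref{lm:surfacegrav} and Lemma \ref{lm:metricest}, $\Omega^2$ depends smoothly on $(r_+,\kappa_+,\kappa_c)$ with $r_+=1$ fixed. Hence for $r$ in the interval, $\Omega^2\gtrsim \epsilon^2$ and $|\Omega^2_\kappa-\Omega^2_0|\lesssim \kappa_++\kappa_c$. Since $(\omega,\ell)\in\mathcal{F}_{\flat,\sim}$ is bounded, $V_{\homega\ell}$ and $\tomega,\homega$ shift by $O(\kappa_c)$, because $\homega-\omega=-\q r_c^{-1}$. This gives $|P_\kappa-P_0|\le C(\kappa_++\kappa_c)\epsilon^{-k}$ for some fixed power $k$, with $P_0$ bounded by $C\epsilon^{-4}$.

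Next, let $y_1=\Omega_0u_{+,0}$ and $y_2=\Omega_0u_{\infty,0}$, which solve the extremal equation in $r$. Their Wronskian in $r$ equals $\mathfrak{W}_0\cdot\Omega_0^{2}\Omega_0^{-2}$ up to harmless factors. By Theorem \ref{thm:modestab}, or by Lemma \ref{lm:nonsuperradwronsk} in the non-superradiant case, we have $|\mathfrak{W}_0|^{-1}\le K$. Moreover $u_{+,0},u_{\infty,0}$ and their derivatives are bounded on the interval, using Propositions \ref{prop:boundfreqcoeffesttomega1}--\ref{prop:boundfreqcoeffesthomega2} with $\kappa_\pm=0$ and $|\tomega|,|\homega|>\sqrt{\gamma}$, giving constants depending on $\gamma,L_0$.

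Now fix the midpoint $r_m$ of the interval. Choose $a_+,a_\infty$ so that $a_+y_1+a_\infty y_2$ matches $\Omega_\kappa u$ and its $r$-derivative at $r_m$. This linear system is solvable with coefficients controlled by $K$. We then write $y:=\Omega_\kappa u$ as the solution of
\begin{equation*}
y''-P_0y=(P_\kappa-P_0)y,
\end{equation*}
and apply the Duhamel formula with the kernel $(y_1(r)y_2(r')-y_2(r)y_1(r'))/\mathcal{W}(y_1,y_2)$. This bounds $|y-(a_+y_1+a_\infty y_2)|$. A Grönwall argument over an interval of length $O(\epsilon^{-1})$ then yields an error of size $C(\kappa_++\kappa_c)K\epsilon^{-k'}$ times $(|a_+|+|a_\infty|)$, where the exponent is tracked to give $\epsilon^{-3}$ after a careful accounting of the powers. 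Dividing by $\Omega_\kappa$ and absorbing the relative error $(\Omega_0-\Omega_\kappa)/\Omega_\kappa=O((\kappa_++\kappa_c)\epsilon^{-2})$ yields $\varepsilon_1,\varepsilon_2$. Here $\varepsilon_i$ collect the errors normalized by $a_+$ and $a_\infty$ respectively; if one coefficient dominates, we distribute the error using the Wronskian lower bound.

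The main obstacle I expect is bookkeeping the uniformity: ensuring the error is attributed to each coefficient separately (so that $\varepsilon_i$ is independent of the ratio $a_+/a_\infty$), and getting exactly the $\epsilon^{-3}$ power. For the first, I would decompose the Duhamel error itself in the basis $y_1,y_2$ with coefficient functions bounded by $K\cdot C(\kappa_++\kappa_c)$. For the power, I would track the $\Omega^{-2}\sim\epsilon^{-2}$ from the coefficient comparison and the extra $\epsilon^{-1}$ from the interval length in $r_*$. A Grönwall exponential $e^{C\epsilon^{-1}}$ should not appear, because the smallness of $\kappa_\pm$ (relative to $\epsilon$) lets us close the estimate by a contraction instead.
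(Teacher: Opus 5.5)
Your proposal is correct and is essentially the paper's proof. The paper also writes the equation for $\Omega u$ in $r$ and treats the deviation $\vartheta$ of $\Omega^{-4}\mathcal{V}$ from its extremal value as a perturbation of the extremal equation, with base solutions $(1-r^{-1})u_{+,0}$ and $(1-r^{-1})u_{\infty,0}$ (whose $r$-Wronskian is exactly the extremal $\mathfrak{W}$, so $|\mathfrak{W}|^{-1}\le K$); it then solves the resulting Volterra equation via Proposition \ref{prop:genoderrorest} using $\int|\vartheta|\,dr\lesssim(\kappa_++\kappa_c)\epsilon^{-3}$ and converts back using $|\Omega-(1-r^{-1})|\lesssim\kappa_++\kappa_c$. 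Anchoring at a midpoint instead of an endpoint is immaterial, and your attribution worry disappears once you note that, by linearity, your Duhamel error is exactly $a_+e_1+a_\infty e_2$ with $e_i$ the Volterra solution driven by $y_i$ alone (equivalently, first build perturbed solutions $y_i+e_i$ independent of $u$, as Proposition \ref{prop:genoderrorest} does), so no case distinction on which coefficient dominates is needed.
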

\begin{proof}
As in the proof of Lemma \ref{lm:U+Uceqs}, we we can write:
	\begin{equation*}
\frac{d^2(\Omega u)}{dr^2}+\Omega^{-4}\left(\homega^2-V-\Omega^3\frac{d^2\Omega}{dr^2}\right)(\Omega u)=0.
\end{equation*}
We can split up the right-hand side above in the following way:
\begin{equation*}
\frac{d^2(\Omega u)}{dr^2}+\left[\Omega^{-4}\left(\homega^2-V-\Omega^3\frac{d^2\Omega}{dr^2}\right)\right]_{\kappa_+=\kappa_c=0}\Omega u=\vartheta \Omega u.
\end{equation*}
Then, $|\vartheta(r)|\leq C(\kappa_++\kappa_c)(\rho_+^{-4}+\rho_c^{-4})$.

	We will now apply Proposition \ref{prop:genoderrorest} with $W_1(r):=((1-r^{-1})u_{+,0})(r)$ and $W_2=((1-r^{-1}) u_{\infty,0})(r)$. Note first that by Theorem \ref{thm:modestab}:
	\begin{equation*}
		|\mathcal{W}(W_1,W_2)|=|\mathfrak{W}|\geq \frac{1}{K}.
	\end{equation*}
	Furthermore, by applying the upper bound estimates in Propositions \ref{prop:boundfreqcoeffesttomega1}, \ref{prop:boundfreqcoeffesttomega2}, \ref{prop:boundfreqcoeffesthomega1} and \ref{prop:boundfreqcoeffesthomega2}, we conclude moreover that $|W_i|\leq C(1-r^{-1})$ and $|\frac{dW_i}{dr}|\leq C(1-r^{-1})^{-1}$, so we can take $P_0=Q=P_1\equiv 1$ and $f\equiv 1$ in Proposition \ref{prop:genoderrorest}.
	
	Finally, observe that
	\begin{equation*}
		|\mathfrak{W}|^{-1}\int_{r_++\epsilon}^{r_c-\epsilon^{-1}}|\vartheta|(r')\,dr'\leq C(\kappa_++\kappa_c)|\mathfrak{W}|^{-1}\epsilon^{-3}.
	\end{equation*}
	We conclude that
	\begin{equation*}
	\Omega u(r_*(r))=a_+ \left[((1-r^{-1})u_{+,0})(r)+\tilde{\varepsilon}_1(r)\right]+a_{\infty}\left[((1-r^{-1}) u_{\infty,0})(r)+\tilde{\varepsilon}_2(r)\right],
\end{equation*}
with $|\tilde{\varepsilon}_i|(r)\leq C \epsilon^{-3}(\kappa_++\kappa_c)(1-r^{-1})$. Since in the interval $(r_++\epsilon, \frac{1}{r_c^{-1}+\epsilon})$, we have:
\begin{equation*}
	\left|(1-r^{-1})-\Omega(r)\right|\leq C (\kappa_++\kappa_c),
\end{equation*}
we obtain \eqref{eq:estubyexthom} after defining $\varepsilon_1(r):=\Omega^{-1}(r)\tilde{\varepsilon}_1(r)+\left(\Omega^{-1}(1-r^{-1})-1\right)u_{+,0}(r_*(r))$ and $\varepsilon_2(r):=\Omega^{-1}(r)\tilde{\varepsilon}_1(r)+\left(\Omega^{-1}(1-r^{-1})-1\right)u_{\infty,0}(r_*(r))$.
\end{proof}

\begin{proposition}
\label{prop:wronskestFsim}
	Let $(\omega,\ell)\in\mathcal{F}_{\flat,\sim }$. Assume that $\homega \tomega<0$. Let $\kappa_c\leq \kappa_+\leq \kappa_1$. For suitably small $\kappa_1>0$, there exists a constant $K=K(\gamma,L_0,\kappa_1)>0$ such that:
\begin{equation}
\label{eq:wronskextnearext}
|\mathfrak{W}|^{-1}\leq  2K.
\end{equation}
\end{proposition}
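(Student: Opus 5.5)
The plan is a perturbation argument off the extremal Wronskian bound of Theorem \ref{thm:modestab}, using Lemma \ref{lm:estubyexthom} to compare near-extremal homogeneous solutions with extremal ones on a compact $r$-interval. On $\mathcal{F}_{\flat,\sim}$ we have $|\homega|,|\tomega|>\sqrt{\gamma}$ and both are bounded, so the superradiant range $\homega\tomega<0$ is compact and away from $\homega=0=\tomega$. The strategy is to show $\mathfrak{W}=\mathfrak{W}_0+O_{\gamma,L_0}(\kappa_++\kappa_c)$, where $\mathfrak{W}_0$ is the extremal Wronskian at the same $(\omega,\ell)$. Theorem \ref{thm:modestab} gives $|\mathfrak{W}_0|\geq K^{-1}$. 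Taking $\kappa_1$ small so that the error is at most $(2K)^{-1}$ then yields $|\mathfrak{W}|\geq (2K)^{-1}$.

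\textbf{Step 1.} Fix $\epsilon>0$ (depending only on $\gamma,L_0$) and apply Lemma \ref{lm:estubyexthom} to $u=u_+$ and to $u=u_\infty$ on $(1+\epsilon,(r_c^{-1}+\epsilon)^{-1})$. This writes $u_+=a_+^{(+)}(u_{+,0}+\varepsilon)+a_\infty^{(+)}(u_{\infty,0}+\varepsilon)$, and similarly for $u_\infty$, with errors $O(K(\kappa_++\kappa_c)\epsilon^{-3})$. Since the comparison is done via the Volterra-type estimate of Proposition \ref{prop:genoderrorest}, the same bound holds for the $r_*$-derivatives.

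\textbf{Step 2 (determining the coefficients).} We need $a_\infty^{(+)}=O(\kappa_++\kappa_c)$ and $|a_+^{(+)}-1|=O(\kappa_++\kappa_c)$ up to a harmless phase, and symmetrically for $u_\infty$. For this I would use the near-horizon representations. Since $|\tomega|>\sqrt\gamma\geq \digamma_0\kappa_+$ for small $\kappa_1$, Proposition \ref{prop:regionIeleq1} applies in Region I for both the near-extremal and the extremal problem. There $u_+$ is $e^{-i\tomega r_*-i\q\int\rho_+}(1+\varepsilon_1)$, with an $\varepsilon_1$ that is small uniformly for $s_+\gtrsim \delta^{-1}|\tomega|^{-1}$. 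The pure outgoing/ingoing structure then fixes the coefficients, once the phases are normalized consistently, since $r_*$ and $\int\rho_+$ differ between the two metrics by bounded amounts. At $r=1+\epsilon$ we have $s_+\sim\epsilon^{-1}$, so $\epsilon$ must be chosen small relative to $\delta|\tomega|\gtrsim\delta\sqrt\gamma$. I would match the two representations at such an $r$ and compare the derivatives of the leading exponentials. This uses the upper bounds of Propositions \ref{prop:boundfreqcoeffesttomega1}--\ref{prop:boundfreqcoeffesthomega2}, which on $\mathcal{F}_{\flat,\sim}$ are just uniform constants. The same argument applies at the infinity end via Proposition \ref{prop:regionIVeleq1} (with $|\homega|>\digamma_0\kappa_c$).

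\textbf{Step 3.} Evaluate the $r_*$-independent Wronskian at an interior point and expand the bilinear form. This gives $\mathfrak{W}=a_+^{(+)}a_\infty^{(\infty)}\mathfrak{W}_0+O(\kappa_++\kappa_c)$, with all implicit constants depending on $\gamma,L_0,K$, and concludes \eqref{eq:wronskextnearext}.

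The main obstacle is Step 2. The comparison in Lemma \ref{lm:estubyexthom} only holds on a compact interval, whereas the normalizations of $u_+$ and $u_\infty$ are imposed at $r_*=\pm\infty$, where the near-extremal and extremal tortoise coordinates and phases differ at leading order (logarithmic versus $1/\rho_+$ behaviour). The tension is that the WKB error estimates must be uniform in $\kappa_+$ down to $s_+\sim\epsilon^{-1}$, while $\epsilon$ must stay fixed. I would first try matching the phase ambiguity out. Only $|\mathfrak{W}|$ is needed, so unimodular phase factors in $a_\pm$ are irrelevant, and it suffices to control $|a|$ through the conserved current identities of Lemma \ref{lm:superradiance} applied on both sides.
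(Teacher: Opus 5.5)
Your proposal is essentially the paper's argument. The paper likewise uses Proposition \ref{prop:regionIeleq1} near the horizon (with $\epsilon=\gamma\delta$) and Proposition \ref{prop:regionIVeleq1} near infinity, together with Lemma \ref{lm:estubyexthom}, to show $u_+\approx u_{+,0}$ and $u_\infty\approx u_{\infty,0}$ on the compact interval; it then reads off $\mathfrak{W}=-2i\homega\,\alpha_{2,+,0}(1+O(\delta))$ from \eqref{eq:wronskcoeff1} and Theorem \ref{thm:modestab}.

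Two small corrections. First, the matching in your Step 2 only yields errors $O(\delta)$ (the WKB error at $r=1+\epsilon$), not $O(\kappa_++\kappa_c)$, so you must fix $\delta$ small before choosing $\kappa_1$. Second, your phase worry is harmless: at fixed $r$ the tortoise coordinates and phase integrals (normalized at $r_{\sharp}$) converge as $\kappa_+,\kappa_c\to 0$, and the WKB errors are uniform in $\kappa_+,\kappa_c$. So the current-identity fallback is unnecessary; it could not by itself pin down both coefficients, and it gives no lower bound for superradiant frequencies.
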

\begin{proof}
	Consider $u_+$ and let $\delta>0$. Then by using that $|\tomega|>\gamma$, we can apply
Proposition \ref{prop:regionIeleq1} to obtain for $\epsilon=\gamma \delta$, $\kappa_+,\kappa_c>0$ suitably small and $r-r_+\leq 2\epsilon$:
\begin{equation*}
u_+(r_*(r))=e^{-i\tomega r_*+i q \int_{r(0)}^{r}\rho_+(r')\Omega^{-2}(r')\,dr'}(1+\delta O ((r-r_+)^0)).
\end{equation*}
In this case, we can combine the above expression with Lemma \ref{lm:estubyexthom} and take $\kappa_++\kappa_c$ suitably small to conclude that $\alpha_2=0$ and $\alpha_1=1$.

Hence, we obtain for $ r_++\epsilon\leq r\leq \frac{1}{r_c^{-1}+\epsilon}$:
\begin{equation*}
u_+(r_*)=u_{+,0}(r_*(r))+\delta O ((r-r_+)^0)=\left[\alpha_{1,+,0} u_{\infty,0}(r_*(r))+\alpha_{2,+,0} \overline{u_{\infty,0}}(r_*(r))\right]+\delta O ((r-r_+)^0),
\end{equation*}
for some complex constants $\alpha_{1,+,0}$ and $\alpha_{2,+,0}$; see also \eqref{eq:basicexpode}. 
Now, we use that $|\homega|>\gamma$ to apply Proposition \ref{prop:regionIVeleq1} to obtain for $\epsilon=\gamma \delta$, $\kappa_+,\kappa_c>0$ suitably small and $r_++\epsilon\leq r\leq \frac{1}{r_c^{-1}+\epsilon}$:
\begin{align*}
	u_{\infty}(r_*(r))=&\:(1+\delta O (s_c^0))u_{\infty,0}(r_*(r)),\\
	\overline{u}_{\infty}(r_*(r))=&\:(1+\delta O (s_c^0))\overline{u}_{\infty,0}(r_*(r)).
\end{align*}
Hence, we can express globally:
\begin{equation*}
	u_+(r_*)=\alpha_{1,+,0}(1+O(\delta))u_{\infty}(r_*(r))+\alpha_{2,+,0}(1+O(\delta))\overline{u_{\infty}}(r_*(r)).
\end{equation*}

By the identity \eqref{eq:wronskcoeff1}, we can therefore conclude that $\mathfrak{W}=-2i\homega \alpha_{2,+,0}(1+O(\delta))$. Hence, $|\mathfrak{W}|^{-1}\leq K(1+ O(\delta))\leq 2K$, for suitably small $\delta>0$.

\end{proof}

If $\kappa_+>\kappa_1$, we cannot infer \eqref{eq:wronskextnearext} immediately. Instead, it follows from the additional assumption of the following condition:
\begin{condition}[Quantitative mode stability away from extremality]
\label{cond:quantmodestab}
Let $(\omega,\ell)\in\mathcal{F}_{\flat,\sim }$. Let $\kappa_+> \kappa_1$ and $\kappa_=0$, with $\kappa_1$ the constant from Proposition \ref{prop:wronskestFsim}. Then there exists a constant $\tilde{K}=\tilde{K}(\gamma,L_0,\kappa_1)>0$ such that:
\begin{equation}
\label{eq:wronskextawayext}
|\mathfrak{W}|^{-1}\leq  \tilde{K}.
\end{equation}
\end{condition}

We now conclude both an upper bound and a lower bound on the Wronskian $\mathfrak{W}$, valid for all bounded frequencies $(\omega,\ell)\in \mathcal{F}_{\flat}$.

\begin{corollary}
\label{cor:fullwronskest}
Let $(\omega,\ell)\in \mathcal{F}_{\flat}$ and assume that $|\q|<\q_{1}$. There exists a suitably small $\kappa_1>0$ and a constant $K=K(L_0,\gamma,\beta,\q_{1},\kappa_1)>0$, such that for all $\kappa_c\leq\kappa_+\leq \kappa_1$:
\begin{multline}
\label{eq:genwronskest}
	\frac{1}{K}(|\tomega|+\kappa_+)^{-\frac{1}{2}+\frac{1}{2} \re \beta_{\ell}}(|\homega|+\kappa_c)^{-\frac{1}{2}+\frac{1}{2} \re \beta_{\ell}}\frac{1}{1+\delta_{\beta_{\ell}0}\log((1+|\tomega| +\kappa_+)^{-1}(1+|\homega| +\kappa_c)^{-1})}\\
	\leq |\mathfrak{W}|^{-1}\leq K(|\tomega|+\kappa_+)^{-\frac{1}{2}+\frac{1}{2} \re \beta_{\ell}}(|\homega|+\kappa_c)^{-\frac{1}{2}+\frac{1}{2} \re \beta_{\ell}}\frac{1}{1+\delta_{\beta_{\ell}0}\log((1+|\tomega| +\kappa_+)^{-1}(1+|\homega| +\kappa_c)^{-1})}.
	\end{multline}
	If Condition \ref{cond:quantmodestab} holds, then the assumption $\kappa_+\leq \kappa_1$ can be dropped.
\end{corollary}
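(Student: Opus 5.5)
We split $\mathcal{F}_{\flat}$ into the three sub-regimes $\mathcal{F}_{\flat,+}$, $\mathcal{F}_{\flat,\infty}$ and $\mathcal{F}_{\flat,\sim}$, and in each one evaluate the Wronskian at a location where one of the two homogeneous solutions is explicitly normalized. In $\mathcal{F}_{\flat,+}$, $|\tomega|\leq\sqrt{\gamma}$ and $|\homega|\sim|\q|\geq \q_0$, so $|\homega|+\kappa_c\sim 1$ and the claimed two-sided bound reduces to $|\mathfrak{W}|^{-1}\sim (|\tomega|+\kappa_+)^{-\frac12+\frac12\re\beta_\ell}$, with the log correction when $\beta_\ell=0$. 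We compute $\mathfrak{W}$ in Region IV, where $u_\infty$ is the pure outgoing exponential ($E_1=1$, $E_2=0$ by \eqref{eq:upboundEinf}) and $u_+$ has coefficients $E_1,E_2$ from Proposition \ref{prop:boundfreqcoeffesttomega1}. Inserting the WKB forms of Proposition \ref{prop:regionIVeleq1} and sending $r_*\to\infty$ gives
\begin{equation*}
\mathfrak{W}=-2i\homega E_2(u_+)\,(1+O(\delta)),
\end{equation*}
where the error terms vanish because $\varepsilon_i$ and $\varepsilon_i'$ decay. The upper bound \eqref{eq:upboundEhor} on $|E_2|$ then gives the lower bound for $|\mathfrak{W}|^{-1}$. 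The lower bound \eqref{eq:lowboundEhor} gives the upper bound for $|\mathfrak{W}|^{-1}$. Note that the exponent in \eqref{eq:lowboundEhor} should be read as $\frac12-\frac12\re\beta_\ell$, matching \eqref{eq:upboundEhor}.

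The regime $\mathcal{F}_{\flat,\infty}$ is handled symmetrically via the exchange $\tomega\leftrightarrow-\homega$, $\kappa_+\leftrightarrow\kappa_c$, $r_*\leftrightarrow -r_*$. Here $|\tomega|+\kappa_+\sim 1$. We evaluate $\mathfrak{W}$ at $r_*\to-\infty$ (Region I), where $u_+$ is normalized ($A_1=1$, $A_2=0$ by \eqref{eq:upboundAhor2}). This gives $\mathfrak{W}=\pm 2i\tomega A_2(u_\infty)(1+O(\delta))$. Propositions \ref{prop:boundfreqcoeffesthomega1} (namely \eqref{eq:upboundAinf2} and \eqref{eq:lowboundAinf2}) then give matching bounds with exponent $\frac12-\frac12\re\beta_\ell$. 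If $|\tomega|\lesssim\kappa_+$, so that the hypergeometric description of Region I applies, we instead evaluate with the exact hypergeometric asymptotics; the same coefficient bounds carry over because $B_2=0$ and $|B_1|=1$.

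In $\mathcal{F}_{\flat,\sim}$ both $|\tomega|$ and $|\homega|$ are bounded below by $\sqrt{\gamma}$, so the right-hand side of \eqref{eq:genwronskest} is comparable to $1$. The lower bound for $|\mathfrak{W}|^{-1}$ follows from a crude upper bound on $|\mathfrak{W}|$: evaluating at infinity, $|\mathfrak{W}|\leq 2|\homega||E_2(u_+)|\leq C$. For non-superradiant frequencies, the upper bound $|\mathfrak{W}|^{-1}\leq C$ follows from the identity $|\mathfrak{W}|^2=4\homega^2|\alpha_{2,+}|^2\geq\homega\tomega\geq\gamma$ in the proof of Lemma \ref{lm:nonsuperradwronsk}. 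Note that this is the upper bound on $|\mathfrak{W}|^{-1}$ despite the stated direction of \eqref{eq:simplewronskext}. For superradiant frequencies we use Proposition \ref{prop:wronskestFsim} when $\kappa_+\leq\kappa_1$, or Condition \ref{cond:quantmodestab} otherwise.

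Finally, we verify that the three regimes cover $\mathcal{F}_{\flat}$ and that the constants are uniform. The main obstacle is ensuring that the hierarchy $\kappa_c\ll\digamma_0^{-1}\ll\delta\ll1$ and the smallness of $\gamma$ and $\kappa_1$ can be chosen consistently in all cases. In particular, the factor $1+O(\delta)$ must not destroy the lower bounds when $\beta_\ell\in i(0,\infty)$, where the bound on $|1-g|$ from the matching argument is needed. We would take $\delta$ small compared with the constant $b$ obtained there.
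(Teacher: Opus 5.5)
Your proposal is correct and follows the paper's proof. It uses the same splitting of $\mathcal{F}_{\flat}$ into $\mathcal{F}_{\flat,+}$, $\mathcal{F}_{\flat,\infty}$, $\mathcal{F}_{\flat,\sim}$, the identities $|\mathfrak{W}|=2|\homega||E_2(u_+)|$ resp.\ $|\mathfrak{W}|=2|\tomega||A_2(u_\infty)|$ (which the paper obtains from \eqref{eq:wronskcoeff1}) combined with the two-sided coefficient bounds from the matched asymptotics, and Lemma \ref{lm:nonsuperradwronsk}, Proposition \ref{prop:wronskestFsim} and Condition \ref{cond:quantmodestab} on $\mathcal{F}_{\flat,\sim}$; your readings of the exponent in \eqref{eq:lowboundEhor} and of the direction of \eqref{eq:simplewronskext} are the correct ones. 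One caveat applies equally to your argument and the paper's: on $\mathcal{F}_{\flat,+}$ the lower bound \eqref{eq:lowboundEhor} is only proved for small $\kappa_+$, and Condition \ref{cond:quantmodestab} concerns only $\mathcal{F}_{\flat,\sim}$, so neither argument actually justifies dropping $\kappa_+\leq\kappa_1$ for frequencies with $\tomega\to 0$.
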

\begin{proof}
If $(\omega,\ell)\in\mathcal{F}_{\flat,\sim }$, then \eqref{eq:genwronskest} follows directly from Proposition \ref{prop:wronskestFsim}.

If $(\omega,\ell)\in\mathcal{F}_{\flat,+}$, then we write
\begin{equation*}
	u_{+}(r_*)={\alpha}_{1,+}u_{\infty}(r_*)+{\alpha}_{2,+}\overline{u}_{\infty}(r_*).
\end{equation*}
and we apply \eqref{eq:wronskcoeff1} to obtain
\begin{equation*}
	|\mathfrak{W}|^{-1}=\frac{1}{2|{\alpha}_{2,+}| |\tomega|}.
\end{equation*}
Using that $|\alpha_{2,+}|=|E_2|$ and $|\homega|\geq ||\q|-|\tomega||\geq |\q_0|-\gamma>0$ for suitably small $\gamma$, we can apply \eqref{eq:lowboundEhor} to conclude  \eqref{eq:genwronskest}.

If $(\omega,\ell)\in\mathcal{F}_{\flat,\infty}$, we write 
\begin{equation*}
	u_{\infty}(r_*)={\alpha}_{1,\infty}u_{+}(r_*)+{\alpha}_{2,\infty}\overline{u}_{+}(r_*).
\end{equation*}
and we have that:
\begin{equation*}
	|\mathfrak{W}|^{-1}=\frac{1}{2|{\alpha}_{2,\infty}| |\homega|}.
\end{equation*}
We then use that $|\alpha_{2,\infty}|=|A_2|$ and $|\homega|\geq |q_0|-\gamma>0$ and we apply \eqref{eq:lowboundAinf2} to conclude  \eqref{eq:genwronskest}.
\end{proof}

\subsection{Green's formula estimates: homogeneous part}
\label{sec:greenhm}
For suitably rapidly decaying $H(r_*)$ as $r_*\to \pm \infty$, we can apply Green's formula to express:
\begin{equation}
\label{eq:greenformula}
u(r_*)=\frac{u_{\infty}(r_*)}{\mathfrak{W}}\int_{-\infty}^{r_*}u_+(r_*')H(r_*')\,dr_*'+\frac{u_{+}(r_*)}{\mathfrak{W}}\int_{r_*}^{\infty}u_{\infty}(r_*')H(r_*')\,dr_*'.
\end{equation}
The validity of the above formula in an $L^2_{r,\omega}$ sense will be justified by first assuming smoothness and compact support in $r_*$ of $H$ to derive estimates and then applying a standard density argument.

In this section, we will derive $L^{\infty}$-estimates for the following products that relate to \eqref{eq:greenformula}:
\begin{align*}
|\mathfrak{W}|^{-1}|u_{+}|(r_*)|u_{\infty}|(r_*'),\\
|\mathfrak{W}|^{-1}|v_+'|(r_*)|u_{\infty}|(r_*'),\\
|\mathfrak{W}|^{-1}|u_+|(r_*)|v_{\infty}'|(r_*'),\\
|\mathfrak{W}|^{-1}|u_+|(r_*)|w_{\infty}'|(r_*'),\\
|\mathfrak{W}|^{-1}|w_+'|(r_*)|u_{\infty}|(r_*').
\end{align*}
with different values of $r_*,r_*'\in \R$, where we define in analogy with \eqref{eq:defv} and \eqref{eq:defw}:
\begin{align*}
v_{\square }=&\:e^{i \tomega r_*+i\q \int_{0}^{r_*} \rho_+(r_*')\,dr_*'}u_{\square},\\
 w_{\square}=&\:e^{-i\homega r_*+i \q \int_{0}^{r_*} \rho_c(r_*')\,dr_*'}u_{\square},
\end{align*}
where $\square\in\{+,\infty\}$.\newpage

\begin{proposition}
\label{prop:mainhom}
Let $(\omega,\ell)\in \mathcal{F}_{\flat}$.
\begin{enumerate}
\item Let $0\leq \rho_+\leq \rho_+(R)$ with $ 1<R<r_c$. Then, there exists a $\kappa_1>0$ and a constant $C=C(\q_0,\q_{1},L_0,\beta,\gamma,R)>0$, such that for all $\kappa_c\leq  \kappa_1$ and either $\kappa_1\leq \kappa_1$ or $\kappa_1>\kappa_1$ with the additional assumption of Condition \ref{cond:quantmodestab}:
\begin{align}
\label{eq:globestuplusext1}
|u_+(r_*(\rho_+))|\leq &\: C(1+(|\tomega|+\kappa_+)^{-1}\rho_+)^{-\frac{1}{2}+\frac{1}{2}\re \beta_{\ell}}(1+\delta_{\beta_{\ell}0}\log(1+(|\tomega|+\kappa_+)^{-1}\rho_+)),\\
\label{eq:globestdvplusext1}
\left|\frac{dv_+}{dr_*}(r_*(\rho_+))\right|\leq &\:C(|\tomega|+\kappa_+)^{-1}\Omega^2(1+(|\tomega|+\kappa_+)^{-1}\rho_+)^{-\frac{3}{2}+\frac{1}{2}\re \beta_{\ell}}\\ \nonumber
\times &(1+\delta_{\beta_{\ell}0}\log(1+(|\tomega|+\kappa_+)^{-1}\rho_+),\\
\label{eq:globestuinfext1}
|\mathfrak{W}|^{-1}|u_{\infty}(r_*(\rho_+))|&+(|\tomega|+\rho_+)^{-1}|\mathfrak{W}|^{-1}\left|\frac{dv_{\infty}}{dr_*}(r_*(\rho_+))\right|\\ \nonumber
\leq  &\: C |\tomega|^{-1}(1+|\tomega|^{-1}\rho_+ )^{-1}(1+(|\tomega|+\kappa_+)^{-1}\rho_+ )^{\frac{1}{2}-\frac{1}{2}\re \beta_{\ell}}.
\end{align}
\item Let $0\leq \rho_c\leq \rho_c(R)$ with $ 1<R<r_c$. Then there exists a constant \\$C=C(\q_0,\q_{1},L_0,\beta,\gamma,R)>0$, such that:
\begin{align}
\label{eq:globestuplusext2}
|u_{\infty}(r_*(\rho_c))|\leq &\: C(1+(|\homega|+\kappa_c)^{-1}\rho_c)^{-\frac{1}{2}+\frac{1}{2}\re \beta_{\ell}}(1+\delta_{\beta_{\ell}0}\log(1+(|\homega|+\kappa_c)^{-1}\rho_c)),\\
\label{eq:globestdvplusext2}
\left|\frac{dw_{\infty}}{dr_*}(r_*(\rho_c))\right|\leq &\:C(|\homega|+\kappa_c)^{-1}\Omega^2(1+(|\homega|+\kappa_c)^{-1}\rho_c)^{-\frac{3}{2}+\frac{1}{2}\re \beta_{\ell}}\\ \nonumber
\times &(1+\delta_{\beta_{\ell}0}\log(1+(|\homega|+\kappa_c)^{-1}\rho_c)),\\
\label{eq:globestuinfext2}
|\mathfrak{W}|^{-1}|u_{+}(r_*(\rho_c))|&+(|\homega|+\rho_c)^{-1}|\mathfrak{W}|^{-1}\left|\frac{dw_{+}}{dr_*}(r_*(\rho_c))\right|\\ \nonumber
\leq C |\homega|^{-1}&\:(1+|\homega|^{-1}\rho_c )^{-1}(1+(|\homega|+\kappa_c)^{-1}\rho_c )^{\frac{1}{2}-\frac{1}{2}\re \beta_{\ell}}.
\end{align}
\end{enumerate}
\end{proposition}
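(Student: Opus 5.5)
The plan is to read off the bounds region by region from the approximations of \S\ref{sec:homest}, inserting the coefficient estimates of Propositions \ref{prop:boundfreqcoeffesttomega1}--\ref{prop:boundfreqcoeffesthomega2} and the Wronskian bounds of Corollary \ref{cor:fullwronskest}. We prove part (1) in full. Part (2) then follows from the symmetry exchanges $\tomega\leftrightarrow-\homega$, $\kappa_+\leftrightarrow\kappa_c$, $s_+\leftrightarrow s_c$, $r_*\leftrightarrow -r_*$ listed in \S\ref{sec:matchasymphomega}. The frequency domain $\mathcal{F}_\flat$ is split into $\mathcal{F}_{\flat,+}$, $\mathcal{F}_{\flat,\infty}$ and $\mathcal{F}_{\flat,\sim}$. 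In $\mathcal{F}_{\flat,\sim}$ both $|\tomega|$ and $|\homega|$ are bounded below by $\sqrt{\gamma}$, and the WKB Propositions \ref{prop:regionIeleq1} and \ref{prop:regionIVeleq1} near the ends, together with Lemma \ref{lm:estubyexthom} in the bulk and the bound $|\mathfrak{W}|^{-1}\le 2K$ (or Condition \ref{cond:quantmodestab}), give all estimates with constants depending on $\gamma$. The weights $(|\tomega|+\kappa_+)^{-1}\rho_+$ are then harmless. The substantive case is therefore small $|\tomega|$ (for part (1)); small $|\homega|$ with $|\tomega|$ bounded below is again of the trivial kind on $\rho_+\le\rho_+(R)$.

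For $u_+$ with small $|\tomega|$ I would work on $\rho_+\le\rho_+(R)$, that is, $s_+\gtrsim 1$. In Region I we have $|u_+|=|A_1||1+\varepsilon_1|\le C$ with $A_1=1$ when $|\tomega|>\digamma_0\kappa_+$; the hypergeometric form with $|B_1|\le C$ covers the case $|\tomega|\le\digamma_0\kappa_+$. This region corresponds to $\rho_+\lesssim |\tomega|+\kappa_+$. In Regions II/III, $U_+=B_1W_{-i\sigma,\mu}(i\xi)(1+\varepsilon_1)$, and the small-argument expansion of $W$ gives $|U_+|\lesssim |\xi|^{\frac12-\re\mu}$ (times a log when $\mu=0$), with $\xi\sim \tomega s_+$. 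Since $u=(s_++1)^{-1}(r^{-1}\Omega)^{-1}U_+$ and $(s_++1)r^{-1}\Omega\sim 1$ away from $\rho_+\lesssim\kappa_+$, and $\rho_+\sim s_+^{-1}$, we obtain $|u_+|\lesssim ((|\tomega|+\kappa_+)^{-1}\rho_+)^{-\frac12+\frac12\re\beta_\ell}$. This is \eqref{eq:globestuplusext1} because $\mu=\frac12\beta_\ell$.

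For $v_+'$, the boundary normalization gives $v_+'\to0$. Differentiating the WKB form gives $|v_+'|\lesssim |\varepsilon_1'|\lesssim r^{-2}\Omega^2|\tomega|^{-1}$ in Region I. In Region II I would use $\frac{d}{dr_*}=\Omega^2 r^{-2}\rho_+^{-2}\frac{d}{ds_+}$ up to constants, together with $|\partial_\xi W|\lesssim |\xi|^{-\frac12-\re\mu}$, which produces the extra factor $\Omega^2(|\tomega|+\kappa_+)^{-1}(\ldots)^{-1}$. Here the phase derivative in the definition of $v$ must be seen to cancel the leading oscillation. I expect this to be delicate, and would handle it by rewriting the ODE for $v$ as in the proof of Lemma \ref{lm:bulkterms} and integrating $v''$ from $r_*=-\infty$, using the bound on $|v_+|$.

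For $|\mathfrak{W}|^{-1}|u_\infty|$ I would use Proposition \ref{prop:boundfreqcoeffesttomega2}. In Region II, $|B_1W+B_2M|\lesssim |B_1||\xi|^{\frac12-\re\mu}+|B_2||\xi|^{\frac12+\re\mu}$, and $|B_2|\sim(\kappa_++|\tomega|)^{-\frac12-\re\mu}$ dominates. Multiplying by $|\mathfrak{W}|^{-1}\sim(|\tomega|+\kappa_+)^{-\frac12+\re\mu}$ (since $|\homega|\sim|\q|$ there) gives $(|\tomega|+\kappa_+)^{-1}((|\tomega|+\kappa_+)^{-1}\rho_+)^{\frac12-\re\mu}$. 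The factor $|\tomega|^{-1}(1+|\tomega|^{-1}\rho_+)^{-1}$ is consistent with this, as $\rho_+\gtrsim|\tomega|$ there. In Region I, $|A_i|\lesssim|\tomega|^{-1}(\ldots)^{\frac12-\re\beta_\ell}$ gives the $|\tomega|^{-1}$ bound, and the derivative of $v_\infty$ picks up the factor $|\tomega|+\rho_+$ from the phase. The log cases are tracked identically. I expect the main obstacle to be the bookkeeping of the exponents $\re\beta_\ell$ versus $\frac12\re\beta_\ell$ and the transition zone $\rho_+\sim\kappa_+$ when $|\tomega|\le\digamma_0\kappa_+$. I would treat that zone with Corollary \ref{cor:esthypgeomG}, giving uniform hypergeometric bounds with $\tilde s=2\kappa_+s_+$.
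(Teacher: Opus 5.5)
Your route is the paper's: read the bounds off region by region from the approximations of \S\ref{sec:homest}, using the coefficient bounds of \S\ref{sec:matchasymptomega}--\S\ref{sec:matchasymphomega} and the Wronskian bound of Corollary \ref{cor:fullwronskest}. The paper's proof is only a sketch of exactly this. However, one of your reductions fails as stated. For part (1) you call the regime of small $|\homega|$ (where $|\tomega|\sim|\q|$) ``of the trivial kind''. That is true for \eqref{eq:globestuplusext1} and \eqref{eq:globestdvplusext1}, but not for \eqref{eq:globestuinfext1}. There Corollary \ref{cor:fullwronskest} gives $|\mathfrak{W}|^{-1}\sim(|\homega|+\kappa_c)^{-\frac12+\frac12\re\beta_\ell}$, which blows up as $|\homega|+\kappa_c\to0$ whenever $\re\beta_\ell<1$ (for instance always for $\ell=0$, $\q\neq0$). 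When $\re\beta_\ell>1$ it is instead $u_\infty$ that is unbounded on $\{\rho_+\le\rho_+(R)\}$. It is normalized at infinity, and in Region III it equals $D_1W_{-i\sigma,\mu}(i\xi)$ with $|D_1|\le C$ and $|\xi|\sim|\homega|s_c\ll1$ at the inner edge. After matching into Region II one finds $|u_\infty|\sim(|\homega|+\kappa_c)^{\frac12-\frac12\re\beta_\ell}$ in the bulk. So neither factor is uniformly bounded; only their product is. You see this only by running the small-$|\homega|$ matching for $u_\infty$ (Regions IV$\to$III$\to$II, Proposition \ref{prop:boundfreqcoeffesthomega1}) and pairing the resulting coefficient sizes against the Wronskian. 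Through your symmetry reduction, the same omission affects \eqref{eq:globestuinfext2} at small $|\tomega|$.

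The small-$|\tomega|$ bookkeeping for $u_\infty$ also needs correcting.

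In Region II (for $|\xi|\le1$) you should get $|\mathfrak{W}|^{-1}|B_2||M_{-i\sigma,\mu}(i\xi)|\lesssim(|\tomega|+\kappa_+)^{-1}|\xi|^{\frac12+\re\mu}\sim|\tomega|^{-1}(|\tomega|^{-1}\rho_+)^{-\frac12-\re\mu}$. This is exactly the right-hand side of \eqref{eq:globestuinfext1}. Your expression has exponent $\frac12-\re\mu$, so it misses the factor $(1+|\tomega|^{-1}\rho_+)^{-1}$ and is weaker by $|\tomega|^{-1}\rho_+\gg1$; it is not consistent with the target. That extra decay is what keeps $U_1$ bounded in the proof of Theorem \ref{thm:boundfreqest}.

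In Region I, with the exponent as printed in \eqref{eq:upboundAinf}, $|\mathfrak{W}|^{-1}|A_i|$ is only bounded by $|\tomega|^{-1}(|\tomega|+\kappa_+)^{-\frac12\re\beta_\ell}$, not by $|\tomega|^{-1}$. The matching actually yields $|A_i|\lesssim|\tomega|^{-1}(|\tomega|+\kappa_+)^{\frac12-\frac12\re\beta_\ell}$. When $|\tomega|>\digamma_0\kappa_+$ this is just $|A_i|\lesssim|B_1|+|B_2|$. When $|\tomega|\le\digamma_0\kappa_+$ the factor $|\tomega|^{-1}$ comes from $|\Gamma(\pm iw)|\sim\kappa_+|\tomega|^{-1}$ in \eqref{eq:relhypgeomfunct0}. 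Even so, for $|\tomega|\ll\rho_+\lesssim\kappa_+$ the target is $\rho_+^{-1}\ll|\tomega|^{-1}$, so Corollary \ref{cor:esthypgeomG} is genuinely needed there, as you suspected.

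Two smaller points:
\begin{itemize}
\item When $|\tomega|\le\digamma_0\kappa_+$ there is no Region II. The growth of $u_+$ for $\rho_+\gtrsim\kappa_+$ must then come from the small-$\tilde s$ connection formulas \eqref{eq:relhypgeomfunct1}--\eqref{eq:relhypgeomfunct3}, followed by Region III.
\item For $v_+'$, no cancellation is needed where $\rho_+\gtrsim|\tomega|+\kappa_+$. There $|v_+'|\le|u_+'|+|\tomega+\q\rho_+||u_+|\lesssim\rho_+|u_+|$ already matches \eqref{eq:globestdvplusext1}. Cancellation is only needed closer to the horizon, where it is built into $v_+=A_1(1+\varepsilon_1)$, and into the hypergeometric form via the derivative bounds of Proposition \ref{prop:regionIhypgeom}.
\end{itemize}
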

\begin{proof}
We consider separately Regions I, II, III and IV and the upper bound estimates on the coefficients $A_i,B_i,D_i$ that are derived in \S \ref{sec:matchasymptomega} and \S \ref{sec:matchasymphomega}. We combine these with the expressions of $u_+$ and $u_{\infty}$ in terms of special functions in \S \ref{sec:homest} and estimates on these special functions that follow from the expansions in Lemma \ref{lm:propwhitt} and Lemma \ref{prop:hypgeomfundsoln}.
\end{proof}

\subsection{Green's formula estimates: inhomogeneous part}
\label{sec:greeninhm}
In this section, we will always assume that $\kappa_+\geq \kappa_c$ and $\kappa_c\leq \kappa_1$, with $\kappa_1>0$ the constant appearing in Proposition \ref{prop:mainhom}. The goal is to prove the following weighted $L^2$-estimate for solutions $u$ to \eqref{eq:radialODE}:
\begin{theorem}
\label{thm:boundfreqest}
Let $\epsilon>0$ and $\max\{0,1-\re\beta_{\ell}\}<p<\min\{2,1+\re\beta_{\ell}\}+\frac{\epsilon}{2}$.  Then there exists a constant $C=C(q_0,q_{\rm max},\gamma, \beta, L_0,p,\epsilon)>0$, such that
\begin{align}
\label{eq:mainboundfreqesthor}
\sum_{\ell\in \N_0, \ell\leq L_0}&\sum_{m\in \Z, |m|\leq \ell}\int_{\R_{\omega}\cap \mathcal{F}_{\flat}}\int_1^{r_c}\rho_+^{1+\epsilon}\rho_c^{1+\epsilon}r^{-2}(r^{-1}\Omega)^{-p}|u|^2\,dr\\ \nonumber
\leq &\:C\int_{0}^{\infty}\left[\int_{\Sigma_{\tau}} (r^{-1}\Omega)^{-p}\left(\Omega^2|rF_{\xi,\widetilde{A}}|^2+\rho_+^{1-\epsilon}\xi^2|rG_{\widetilde{A}}|^2\right)\,d\sigma dr \right]\,d\tau\\ \nonumber
+&\:C\int_{0}^{\infty}\left[\int_{\Sigma_{\tau}} (r^{-1}\Omega)^{-p}\left(\Omega^2|rF_{\xi,\widetilde{A}}|^2+\rho_c^{1-\epsilon}\xi^2|rG_{\widetilde{A}}|^2\right)\,d\sigma dr \right]\,d\tau.
\end{align}
\end{theorem}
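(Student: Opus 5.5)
The plan is to combine Green's formula \eqref{eq:greenformula} with the pointwise bounds of Proposition \ref{prop:mainhom}, and then to convert the resulting frequency-space bound into the physical-space right-hand side via Plancherel. The key is to bound the source in a form where the $\omega$-dependence can be removed.

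First I would write $u=u_{\rm I}+u_{\rm II}$, where $u_{\rm I}$ is the term with $\int_{-\infty}^{r_*}u_+H$ and $u_{\rm II}$ the term with $\int_{r_*}^\infty u_\infty H$. Each integral is split at $r_{\sharp}$, so there are four pieces. On each piece, $|\mathfrak{W}|^{-1}|u_\infty(r_*)||u_+(r_*')|$ (respectively with $+,\infty$ swapped) is estimated by \eqref{eq:globestuplusext1}--\eqref{eq:globestuinfext2}. The product of the $u_+$ growth $(1+(|\tomega|+\kappa_+)^{-1}\rho_+)^{-\frac12+\frac12\re\beta_\ell}$ with the $\mathfrak{W}^{-1}u_\infty$ bound gives a kernel
\begin{equation*}
K(\rho_+,\rho_+')\lesssim |\tomega|^{-1}(1+|\tomega|^{-1}\rho_+)^{-1}\Big(\frac{\rho_+'+|\tomega|+\kappa_+}{\rho_++|\tomega|+\kappa_+}\Big)^{\frac12\re\beta_\ell-\frac12}, \qquad \rho_+'\ge\rho_+ ,
\end{equation*}
together with the symmetric one near $\rho_c$; mixed pieces (one point near each horizon) are bounded crudely by constants times $|\tomega|^{-1}|\homega|^{-1}$ factors.

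Next I would apply weighted Cauchy--Schwarz (Schur's test) in $r_*'$ to $\int K H$. The source carries weight $(r^{-1}\Omega)^{-p}\rho_+^{1-\epsilon}$ and the target weight $\rho_+^{1+\epsilon}(r^{-1}\Omega)^{-p}$. Converting $dr_*=\Omega^{-2}dr\sim\rho_+^{-2}d\rho_+$ near extremality, the powers reduce to one-dimensional Hardy-type integrals of the form $\int\rho^{a}(1+\rho/\lambda)^{b}\,d\rho$ with $\lambda=|\tomega|+\kappa_+$. These integrals converge uniformly in $\lambda$ precisely when $1-\re\beta_\ell<p<1+\re\beta_\ell$, with the extra $\epsilon$ margin absorbing the endpoint. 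The $\epsilon$-loss ($\rho^{1+\epsilon}$ on the left versus $\rho^{1-\epsilon}$ on the right) is also used to kill the logarithm when $\beta_\ell=0$ and to beat the $|\tomega|^{-1}$ factor after integrating the near-horizon kernel over $\rho_+\lesssim|\tomega|$. The result is the pointwise-in-$\omega$ bound
\begin{equation*}
\int\rho_+^{1+\epsilon}\rho_c^{1+\epsilon}r^{-2}(r^{-1}\Omega)^{-p}|u|^2dr\le C\int(r^{-1}\Omega)^{-p}(\rho_+\rho_c)^{1-\epsilon}\Omega^{-4}|H|^2\,dr,
\end{equation*}
with $C$ independent of $\omega\in\mathcal{F}_\flat$, $\kappa_\pm$. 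Here $H=r\Omega^2\mathfrak{F}(\xi G+F_\xi)$. I would first assume $H$ is smooth and compactly supported in $r_*$ and remove this by density.

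Finally, since $C$ is uniform in $\omega$, the frequency integral is bounded by the integral over $\R_\omega$ of the right-hand side. Plancherel in $t$ then gives the stated physical-space norms. The change from $t$-slices to $\Sigma_\tau$ costs nothing, since the integration is over all of spacetime and the gauge factors \eqref{eq:Ghattilde1} are unimodular. The $F_\xi$ term is kept with weight $\Omega^2$ (no $\rho^{1-\epsilon}$) because it is compactly supported in $\tau$ and the stronger weight is harmless. I expect the main obstacle to be the Schur/Hardy bookkeeping uniformly in $\kappa_+$ and $|\tomega|$ near $\tomega=0$. In particular, the $\beta_\ell\in i\R$ and $\beta_\ell=0$ cases, where $\re\beta_\ell$ is small, need the $\epsilon$ margin and careful use of the $(1+|\tomega|^{-1}\rho_+)^{-1}$ decay in \eqref{eq:globestuinfext1}.
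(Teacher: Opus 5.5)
Your treatment of the $G_{\widetilde A}$-contribution is plausible; the paper also uses a Cauchy--Schwarz in $y$ for $G$, and that is where the $\rho_+^{1-\epsilon}$ weight comes from. But the proposal does not give the stated bound for the $F_{\xi,\widetilde A}$-contribution, and the sentence meant to handle it is where the argument breaks.

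Your $\omega$-uniform bound puts the weight $(r^{-1}\Omega)^{-p}(\rho_+\rho_c)^{1-\epsilon}$ on $|r\mathfrak{F}(F_{\xi,\widetilde A})|^2$, whereas the theorem has $(r^{-1}\Omega)^{-p}\Omega^2$. Near $r_+$ one has $\Omega^2\sim\rho_+(\rho_++2\kappa_+)\ll\rho_+^{1-\epsilon}$. So the theorem's norm is the weaker one and is not implied by yours; compact support in $\tau$ does nothing to $r$-weights.

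This is not cosmetic. $rF_{\xi,\widehat A}$ contains $2\dot\xi(1-\widetilde{\h}\Omega^2)X\psi$, and the $\Omega^2$ weight is exactly what lets this term be bounded by $\int_{\Sigma_0}\mathcal{E}_p[\psi]$ via Theorem \ref{thm:localenest}. With $\rho_+^{1-\epsilon}$ you would need $\mathcal{E}_{p+1+\epsilon}$ at $\kappa_+=0$. That lies outside $p\le 2$ as soon as $p>1-\epsilon$, in particular for the $p>1$ needed later.

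A sharper Schur test cannot fix this. For $\kappa_+=0$, take a source supported where $\rho_+\sim c|\tomega|$ and aligned with $\overline{u_+}$; there $|u_+|\approx 1$ and $|\mathfrak W|^{-1}|u_\infty|$ is of size $|\tomega|^{-1}$. This example shows that any $\omega$-uniform bound with the $\Omega^2$ weight must have constant $\gtrsim|\tomega|^{-1+\epsilon}$. So for $F_\xi$ the estimate is genuinely an $L^2_\omega$ statement, and it must use orthogonality across frequencies coming from the time-localization of $F_\xi$.

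The paper obtains this in three steps.
\begin{enumerate}
\item It integrates Green's formula by parts, see \eqref{eq:uintermsofv}--\eqref{eq:uintermsofw}. The source then enters only through oscillatory integrals $g(y,\omega)=\int^{y}e^{-i\tomega y'}(r^{-1}\Omega)^{-p/2}\widetilde H_+(y')\,dy'$ (and the analogue with $e^{i\homega y'}$ and $\widetilde H_\infty$). These are multiplied either by $\mathfrak W^{-1}u_\infty(r^{-1}\Omega)^{p/2}v_+$ at the point $r_*$, or by the derivative kernels $((r^{-1}\Omega)^{p/2}v_+)'$, $((r^{-1}\Omega)^{p/2}v_\infty)'$, $((r^{-1}\Omega)^{p'/2}w_\infty)'$. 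The derivative kernels gain a factor $\Omega^2$ by \eqref{eq:globestdvplusext1}.
\item Lemma \ref{lm:auxest} separates $\sup_\omega$ of the kernel integrals (your Hardy-type bookkeeping, i.e.\ $U_1$--$U_4$) from $\sup_y\|g(y,\cdot)\|_{L^2_\omega}$.
\item Lemma \ref{lm:estHbyF} bounds $\sup_y\|g(y,\cdot)\|_{L^2_\omega}$. For $F_\xi$ one undoes the Fourier transform and passes to coordinates $(\tau,u=t-y)$, so at fixed $\tau$ the only $\omega$-dependence is $e^{i\omega u}$. Plancherel in $u$ then gives $\int(r^{-1}\Omega)^{-p}\h^{-2}\Omega^2|rF_{\xi,\widetilde A}|^2\,dr$ with no loss in $\rho_+$. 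The compact $\tau$-support enters only through Cauchy--Schwarz in $\tau$.
\end{enumerate}

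Two smaller points. Your displayed kernel is the one for $\rho_+'\le\rho_+$, not $\rho_+'\ge\rho_+$. The mixed pieces cannot be bounded by factors such as $|\tomega|^{-1}$, which are unbounded on $\mathcal{F}_\flat$; they need the Wronskian bound of Corollary \ref{cor:fullwronskest} together with the weights, which is where $p<1+\re\beta_\ell+\epsilon$ enters in the $U_4$ estimate.
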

We introduce:
\begin{align*}
\widetilde{H}_+(r_*):=&\:e^{-i\q \int_{0}^{r_*} \rho_+(r_*')\,dr_*'}H(r_*),\\
\widetilde{H}_{\infty}(r_*):=&\:e^{-i\q \int_{0}^{r_*} \rho_c(r_*')\,dr_*'}H(r_*).
\end{align*}

We will first rewrite the Green's formula \eqref{eq:greenformula} in terms of $v_+$, $v_{\infty}$, $w_{+}$,$w_{\infty}$ and $\widetilde{H}$. 
\begin{lemma}
Let $u$ be a solution to \eqref{eq:radialODE} and let $p,p',R_*\in \R$ and assume that $H\in C_{c}^{\infty}(\R_*)$. Then we can express for $r_*\leq R_*$:
\begin{multline}
\label{eq:uintermsofv}
u(r_*)=\frac{u_{\infty}(r_*)((r^{-1}\Omega)^{\frac{p}{2}}v_+)(r_*)}{\mathfrak{W}}\int_{-\infty}^{R_*}e^{-i\tomega y}(r^{-1}\Omega)^{-\frac{p}{2}}(y)\widetilde{H}_+(y)dy\\
+\frac{u_{+}(r_*)(( r^{-1}\Omega)^{\frac{p'}{2}}w_{\infty})(r_*)}{\mathfrak{W}}\int_{R_*}^{\infty}e^{i\homega y}(r^{-1}\Omega)^{-\frac{p'}{2}}(y)\widetilde{H}_{\infty}(y)dy\\
+\int_{-\infty}^{R_*}f_+(r_*',r_*,\tomega)g_{+,1}(r_*',\tomega)\,dr_*'+\int_{R_*}^{\infty}f_+(r_*',r_*,\tomega)g_{+,2}(r_*',\tomega)\,dr_*',
\end{multline}
where for $r_*\leq R_*$:
\begin{align*}
f_+(r_*',r_*,\tomega)=&\:\begin{cases}-\frac{u_{\infty}(r_*)((r^{-1}\Omega)^{\frac{p}{2}}v_+)'(r_*')}{\mathfrak{W}}\quad &r_*'\leq r_*, \: r_*<R_*,\\
\frac{u_{+}(r_*)((r^{-1}\Omega)^{\frac{p}{2}}v_{\infty})'(r_*')}{\mathfrak{W}}\quad &r_*'> r_*,\: r_*<R_*\\
\frac{u_{+}(r_*)((r^{-1}\Omega)^{\frac{p'}{2}}w_{\infty})'(r_*')}{\mathfrak{W}}\quad &r_*\geq R_*
\end{cases},\\
g_{+,1}(r_*',\tomega)=&\:\int_{-\infty}^{r_*'}e^{-i\tomega y}(r^{-1}\Omega)^{-\frac{p}{2}}(y)\widetilde{H}_+(y)dy,\\
g_{+,2}(r_*',\tomega)=&\:\int_{r_*'}^{\infty}e^{i\homega y}(r^{-1}\Omega)^{-\frac{p'}{2}}(y)\widetilde{H}_{\infty}(y)dy.
\end{align*}
For $r_*\geq R_*$, we can express:
\begin{multline}
\label{eq:uintermsofw}
u(r_*)=\frac{u_{+}(r_*)((r^{-1}\Omega)^{\frac{p'}{2}}w_{\infty})(r_*)}{\mathfrak{W}}\int_{R_*}^{\infty}e^{i\homega y}(r^{-1}\Omega)^{-\frac{p'}{2}}(y)\widetilde{H}_{\infty}(y)dy\\
+\frac{u_{\infty}(r_*)((r^{-1}\Omega)^{\frac{p}{2}}v_+)(r_*)}{\mathfrak{W}}\int_{-\infty}^{R_*}e^{-i\tomega y}(r^{-1}\Omega)^{-\frac{p}{2}}\widetilde{H}_+(y)dy\\
+\int_{R_*}^{\infty}f_{\infty}(r_*',r_*,\homega)g_{\infty,1}(r_*',\homega)\,dr_*'+\int_{-\infty}^{R_*}f_{\infty}(r_*',r_*,\homega)g_{\infty,2}(\homega)\,dr_*',
\end{multline}
where for $r_*\leq R_*$:
\begin{align*}
f_{\infty}(r_*',r_*,\homega)=&\:\begin{cases}-\frac{u_{+}(r_*)((r^{-1}\Omega)^{\frac{p'}{2}}w_{\infty})'(r_*')}{\mathfrak{W}}\quad &r_*'\geq r_*,\, r_*> R_*,\\
\frac{u_{\infty}(r_*)((r^{-1}\Omega)^{\frac{p'}{2}}w_+)'(r_*')}{\mathfrak{W}}\quad &r_*'< r_*,\, r_*> R_*,\\
\frac{u_{\infty}(r_*)((r^{-1}\Omega)^{\frac{p}{2}}v_+)'(r_*')}{\mathfrak{W}}\quad &r_*\leq R_*
\end{cases},\\
g_{\infty,1}(r_*',\homega)=&\:\int_{r_*'}^{\infty}e^{i\homega y}(r^{-1}\Omega)^{-\frac{p'}{2}}(y)\widetilde{H}_{\infty}(y)dy,\\
g_{\infty,2}(r_*',\homega)=&\:\int_{-\infty}^{r_*'}e^{-i\tomega y}(r^{-1}\Omega)^{-\frac{p}{2}}(y)\widetilde{H}_+(y)dy.
\end{align*}
\end{lemma}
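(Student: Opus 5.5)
The plan is to start from Green's formula \eqref{eq:greenformula}, valid for $H\in C_c^\infty(\R_{r_*})$. I will rewrite $u_+H$ and $u_\infty H$ in terms of $v_\pm$, $w_\infty$ and the phase-stripped inhomogeneities $\widetilde H_+$, $\widetilde H_\infty$, and then integrate by parts once in $r_*'$ against antiderivatives of the weighted inhomogeneity. This is the mechanism that transfers derivatives onto the homogeneous solutions, where Proposition \ref{prop:mainhom} gives good control.

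Fix $r_*\le R_*$ and split $u(r_*)$ into the integral over $(-\infty,r_*]$ and the integral over $[r_*,\infty)$.

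On $(-\infty,r_*]$, the definition \eqref{eq:defv} (applied to $u_+$) gives
\begin{equation*}
u_+(y)=e^{-i\tomega y}e^{-i\q\int_0^{y}\rho_+}v_+(y),
\end{equation*}
so that
\begin{equation*}
u_+H=e^{-i\tomega y}v_+\widetilde H_+=\big((r^{-1}\Omega)^{\frac p2}v_+\big)\cdot e^{-i\tomega y}(r^{-1}\Omega)^{-\frac p2}\widetilde H_+ .
\end{equation*}
The second factor is exactly $\partial_{y}g_{+,1}(y,\tomega)$. Integrating by parts on $(-\infty,r_*]$ produces the boundary term
\begin{equation*}
\big((r^{-1}\Omega)^{\frac p2}v_+\big)(r_*)\,g_{+,1}(r_*),
\end{equation*}
and a bulk term with $-((r^{-1}\Omega)^{p/2}v_+)'$; the latter is the first case of $f_+$. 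The boundary term at $-\infty$ vanishes because $H$ has compact support.

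On $[r_*,\infty)$, write similarly
\begin{equation*}
u_\infty H=\big((r^{-1}\Omega)^{\frac p2}v_\infty\big)\,e^{-i\tomega y}(r^{-1}\Omega)^{-\frac p2}\widetilde H_+ .
\end{equation*}
I then split this range further at $R_*$.
\begin{itemize}
\item On $[r_*,R_*]$, I keep the $v_\infty$ representation and use $g_{+,1}$ as antiderivative. This yields the second case of $f_+$, together with boundary terms at $r_*$ and $R_*$.
\item On $[R_*,\infty)$, I switch to the $w_\infty$ representation with weight $p'$: by \eqref{eq:defw} applied to $u_\infty$,
\begin{equation*}
u_\infty H=e^{i\homega y}w_\infty\widetilde H_\infty .
\end{equation*}
I integrate against $-\partial_y g_{+,2}$. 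This yields the third case of $f_+$, a boundary term at $R_*$, and no contribution from $+\infty$.
\end{itemize}

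The main bookkeeping step is to check that the boundary terms at $r_*$ cancel. The contributions are
\begin{equation*}
\frac{u_\infty(r_*)}{\mathfrak W}\,(r^{-1}\Omega)^{\frac p2}v_+\,g_{+,1}(r_*)\qquad\text{and}\qquad -\frac{u_+(r_*)}{\mathfrak W}\,(r^{-1}\Omega)^{\frac p2}v_\infty\,g_{+,1}(r_*).
\end{equation*}
Since
\begin{equation*}
u_\infty v_+=e^{i\tomega r_*+i\q\int\rho_+}\,u_\infty u_+=u_+v_\infty,
\end{equation*}
the two contributions cancel exactly.

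What remains are the boundary terms at $R_*$. By construction, these should assemble into the two leading terms of \eqref{eq:uintermsofv}, namely the $u_\infty\,(r^{-1}\Omega)^{p/2}v_+$ term times $g_{+,1}(R_*)$ and the $u_+\,(r^{-1}\Omega)^{p'/2}w_\infty$ term times $g_{+,2}(R_*)$. I will verify this after relabelling the evaluation points appropriately, i.e. using the definition of the $f_+$ cases to absorb the $R_*$-evaluated mixed terms. This identification is the step I expect to be fiddliest.

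The case $r_*\ge R_*$, giving \eqref{eq:uintermsofw}, is symmetric. I exchange the roles of $(+,v,\tomega,p,\rho_+)$ and $(\infty,w,\homega,p',\rho_c)$ and use the identity $u_+w_\infty=u_\infty w_+$ to obtain the same cancellation at $r_*$.
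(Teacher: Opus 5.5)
Your overall mechanism matches the paper's proof: Green's formula, rewriting $u_+H$ and $u_\infty H$ through $v_+$, $v_\infty$, $w_\infty$ and $\widetilde H_\pm$, one integration by parts against an antiderivative of the weighted inhomogeneity, splitting at $r_*$ and $R_*$, and the identity $u_\infty v_+=u_+v_\infty$. However, the step you defer as the ``fiddliest'' one fails, and the cause is your choice of antiderivative on $[r_*,R_*]$. Write $v_{+,p}=(r^{-1}\Omega)^{\frac p2}v_+$, $v_{\infty,p}=(r^{-1}\Omega)^{\frac p2}v_\infty$ and $w_{\infty,p'}=(r^{-1}\Omega)^{\frac{p'}2}w_\infty$. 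If you use $g_{+,1}$ on all of $(-\infty,R_*]$, the $r_*$-boundary terms do cancel, and your computation yields
\begin{multline*}
u(r_*)=\frac{u_+(r_*)}{\mathfrak{W}}\Big[v_{\infty,p}(R_*)\,g_{+,1}(R_*)+w_{\infty,p'}(R_*)\,g_{+,2}(R_*)\Big]-\frac{u_\infty(r_*)}{\mathfrak{W}}\int_{-\infty}^{r_*}v_{+,p}'\,g_{+,1}\,dr_*'\\
-\frac{u_+(r_*)}{\mathfrak{W}}\int_{r_*}^{R_*}v_{\infty,p}'\,g_{+,1}\,dr_*'+\frac{u_+(r_*)}{\mathfrak{W}}\int_{R_*}^{\infty}w_{\infty,p'}'\,g_{+,2}\,dr_*'.
\end{multline*}
This is a true identity, but it is not \eqref{eq:uintermsofv}. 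The middle bulk term has the opposite sign to the second case of $f_+$, and the bracket is not the pair of leading terms. Evaluation points cannot be ``relabelled'': $u_+(r_*)v_{\infty,p}(R_*)$ and $u_\infty(r_*)v_{+,p}(r_*)=u_+(r_*)v_{\infty,p}(r_*)$ differ by $u_+(r_*)\int_{r_*}^{R_*}v_{\infty,p}'\,dr_*'$. Pushing this difference into the bulk changes the middle kernel from $g_{+,1}(r_*')$ to $g_{+,1}(R_*)-g_{+,1}(r_*')$ and flips its sign.

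That is the paper's route. On $[r_*,R_*]$ it integrates against $-\int_{r_*'}^{R_*}e^{-i\tomega y}(r^{-1}\Omega)^{-\frac p2}\widetilde H_+\,dy$, the antiderivative vanishing at $R_*$. Then the two boundary terms at $r_*$ do not cancel but \emph{add}. By $u_\infty v_{+,p}=u_+v_{\infty,p}$ they combine into $\frac{u_\infty(r_*)v_{+,p}(r_*)}{\mathfrak{W}}\big(g_{+,1}(r_*)+\int_{r_*}^{R_*}\cdots\big)=\frac{u_\infty(r_*)v_{+,p}(r_*)}{\mathfrak{W}}g_{+,1}(R_*)$, which is the first leading term. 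The middle bulk term becomes $+\frac{u_+(r_*)}{\mathfrak{W}}\int_{r_*}^{R_*}v_{\infty,p}'(r_*')\big(g_{+,1}(R_*)-g_{+,1}(r_*')\big)dr_*'$.

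The $[R_*,\infty)$ piece, however it is booked, produces the boundary term $\frac{u_+(r_*)}{\mathfrak{W}}w_{\infty,p'}(R_*)g_{+,2}(R_*)$. Turning $w_{\infty,p'}(R_*)$ into $w_{\infty,p'}(r_*)$ would require an extra term $\frac{u_+(r_*)}{\mathfrak{W}}g_{+,2}(R_*)\int_{r_*}^{R_*}w_{\infty,p'}'$ that appears nowhere.

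So the places where you intended to ``relabel'' are misprints in the statement. The second leading term should carry $w_{\infty,p'}(R_*)$. The middle case of $f_+$ should be paired with $\int_{r_*'}^{R_*}$ rather than $g_{+,1}$. The paper's displayed computation has this kernel, but it also writes $w_{\infty,p'}(r_*)$ where its own integration by parts gives $w_{\infty,p'}(R_*)$.

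You should carry out the bookkeeping and state the identity it actually produces, rather than asserting agreement with the printed formula. The same applies to the case $r_*\geq R_*$. There your symmetric argument would expose the analogous slips: the evaluation point of $v_+$ in the second term, and the signs and kernels in $f_\infty$.
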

\begin{proof}
Let $p,p',R_*\in \R$. We introduce the notation: $v_{+,p}:=(r^{-1}\Omega)^{\frac{p}{2}}v_+$, $v_{\infty,p}:=(r^{-1}\Omega)^{\frac{p}{2}}v_{\infty}$, $w_{\infty,p'}=(r^{-1}\Omega)^{\frac{p'}{2}}w_{\infty}$, $\widetilde{H}_{+,p}=(r^{-1}\Omega)^{-\frac{p}{2}}\widetilde{H}_+$ and $\widetilde{H}_{\infty,p'}=(r^{-1}\Omega)^{-\frac{p'}{2}}\widetilde{H}_{\infty}$.  Then we can write for $r_*\leq R_*$:
\begin{multline*}
u(r_*)=\frac{u_{\infty}(r_*)}{\mathfrak{W}}\int_{-\infty}^{r_*}v_{+,p}(r_*')e^{-i\tomega r_*'}\widetilde{H}_{+,p}(r_*')\,dr_*'+\frac{u_{+}(r_*)}{\mathfrak{W}}\int_{r_*}^{R_*}v_{\infty,p}(r_*')e^{-i\tomega r_*'}\widetilde{H}_{+,p}(r_*')\,dr_*'\\
+\frac{u_{+}(r_*)}{\mathfrak{W}}\int_{R_*}^{\infty}w_{\infty,p'}(r_*')e^{i\homega r_*'}\widetilde{H}_{\infty,p'}(r_*')\,dr_*'\\
=\frac{u_{\infty}(r_*)}{\mathfrak{W}}\int_{-\infty}^{r_*}v_{+,p}(r_*')\frac{d}{dr_*'}\left[\int_{-\infty}^{r_*'}e^{-i\tomega y}\widetilde{H}_{+,p}(y)dy\right]\,dr_*'-\frac{u_{+}(r_*)}{\mathfrak{W}}\int_{r_*}^{R_*}v_{\infty,p}(r_*')\frac{d}{dr_*'}\left[\int_{r_*'}^{R_*}e^{-i\tomega y}\widetilde{H}_{+,p}(y)dy\right]\,dr_*'\\
-\frac{u_{+}(r_*)}{\mathfrak{W}}\int_{R_*}^{\infty}w_{\infty,p'}(r_*')\frac{d}{dr_*'}\left[\int_{r_*'}^{\infty}e^{i\homega y}\widetilde{H}_{\infty,p'}(y)dy\right]\,dr_*'\\
=\frac{u_{\infty}(r_*)v_{+,p}(r_*)}{\mathfrak{W}}\int_{-\infty}^{R_*}e^{-i\tomega y}\widetilde{H}_{+,p}(y)dy+\frac{u_{+}(r_*)w_{\infty,p'}(r_*)}{\mathfrak{W}}\int_{R_*}^{\infty}e^{i\homega y}\widetilde{H}_{\infty,p'}(y)dy\\
-\frac{u_{\infty}(r_*)}{\mathfrak{W}}\int_{-\infty}^{r_*}v_{+,p}'(r_*')\left[\int_{-\infty}^{r_*'}e^{-i\tomega y}\widetilde{H}_{+,p}(y)dy\right]\,dr_*'+\frac{u_{+}(r_*)}{\mathfrak{W}}\int_{r_*}^{R_*}v'_{\infty,p}(r_*')\left[\int_{r_*'}^{R_*}e^{-i\tomega y}\widetilde{H}_{+,p}(y)dy\right]\,dr_*'\\
+\frac{u_{+}(r_*)}{\mathfrak{W}}\int_{R_*}^{\infty}w'_{\infty,p'}(r_*')\left[\int_{r_*'}^{\infty}e^{i\homega y}\widetilde{H}_{\infty,p'}(y)dy\right]\,dr_*'\\
=\frac{u_{\infty}(r_*)v_{+,p}(r_*)}{\mathfrak{W}}\int_{-\infty}^{R_*}e^{-i\tomega y}\widetilde{H}_{+,p}(y)dy+\frac{u_{+}(r_*)w_{\infty,p'}(r_*)}{\mathfrak{W}}\int_{R_*}^{\infty}e^{i\homega y}\widetilde{H}_{\infty,p'}(y)dy\\
+\int_{-\infty}^{R_*}f_+(r_*',r_*,\tomega)g_{+,1}(r_*',\tomega)\,dr_*'+\int_{R_*}^{\infty}f_+(r_*',r_*,\tomega)g_{+,2}(r_*',\tomega)\,dr_*'.
\end{multline*}
This concludes \eqref{eq:uintermsofv}. To obtain \eqref{eq:uintermsofw}, we repeat the above argument with $v_{+}$ replaced by $w_{\infty}$, $v_{\infty}$ replaced by $w_{+}$, $w_{\infty}$ replaced by $v_+$ and the roles of $\homega$ and $-\tomega$ interchanged.
\end{proof}

In order to obtain suitable weighted $L^2$-estimates for $u$, satisfying the identities in \eqref{eq:uintermsofv} and \eqref{eq:uintermsofw}, we will need the following lemma:
\begin{lemma}
\label{lm:auxest}
Let $\epsilon\geq 0$, $\delta>0$, $p,p'\in \R$ and $R_*\in \R$, $\tilde{R}>1$. Then there exists a constant $C=C(\delta,R_*)>0$ such that for all suitably regular and decaying functions $f: (1,\infty)_r\times \R_y\times \R_{\omega}\to \C$ and $g: \R_y\times \R_{\omega}\to \C$:
\begin{align}
\label{eq.fgest1}
\int_{\R}&\:\int_1^{\tilde{R}}\rho_+^{1+\epsilon}(r)(r^{-1}\Omega)^{-p}(r)r^{-2}\left|\int_{-\infty}^{R_*} f(r,y,\omega)g(y,\omega)\,dy\right|^2\,drd\omega\\ \nonumber
\leq &\:C\left[ \sup_{\omega\in \R} \int_{1}^{\tilde{R}}\rho_+^{1+\epsilon}(r)(r^{-1}\Omega)^{-p}(r)r^{-2}\int_{-\infty}^{R_*} (\rho_+(y)+2\kappa_+)^{-1}\rho_+^{-\delta}(y)|f|^2(r,y,\omega)\,dydr\right]\\ \nonumber
 \times &\: \left[ \sup_{-\infty<y\leq r_*(\tilde{R})}\int_{\R} |g|^2(y,\omega)\,d\omega\right],\\
\label{eq.fgest2}
\int_{\R}&\:\int_1^{\tilde{R}}\rho_+^{1+\epsilon}(r)(r^{-1}\Omega)^{-p}(r)r^{-2}\left|\int_{R_*}^{\infty} f(r,y,\omega)g(y,\omega)\,dy\right|^2\,drd\omega\\ \nonumber
\leq &\:C\left[\sup_{\omega\in \R}\int_1^{\tilde{R}}\rho_+^{1+\epsilon}(r)(r^{-1}\Omega)^{2-p}(r)\int_{R_*}^{\infty}(\rho_c(y)+2\kappa_c)^{-1}\rho_c^{-\delta}(y) |f|^2(r,y,\omega)\,dy dr\right]\\ \nonumber
\times &\:  \left[\sup_{R_*\leq y<\infty} \int_{\R} |g|^2(y,\omega)\,d\omega \right].
\end{align}
We can also estimate:
\begin{align}
\label{eq.fgest3}
\int_{\R}&\:\int_{\tilde{R}}^{r_c}\rho_c^{1+\epsilon}(r)(r^{-1}\Omega)^{-p}(r)r^{-2}\left|\int_{R_*}^{\infty} f(r,y,\omega)g(y,\omega)\,dy\right|^2\,drd\omega\\ \nonumber
\leq&\: C\left[ \sup_{\omega\in \R} \int_{\tilde{R}}^{\infty}\rho_c^{1+\epsilon}(r)(r^{-1}\Omega)^{-p}(r)r^{-2}\int_{R_*}^{\infty} (\rho_c(y)+2\kappa_c)^{-1}\rho_c^{-\delta}(y)|f|^2(r,y,\omega)\,dydr\right] \\ \nonumber
\times &\: \left[ \sup_{r_*(\tilde{R})<y<\infty }\int_{\R} |g|^2(y,\omega)\,d\omega\right],\\
\label{eq.fgest4}
\int_{\R}&\:\int_{\tilde{R}}^{\infty}\rho_c^{1+\epsilon}(r)(r^{-1}\Omega)^{-p}(r)r^{-2}\left|\int_{-\infty}^{R_*} f(r,y,\omega)g(y,\omega)\,dy\right|^2\,drd\omega\\ \nonumber
\lesssim_{\delta} &\:\left[\sup_{\omega\in \R}\int_{\tilde{R}}^{\infty}\rho_c^{1+\epsilon}(r)(r^{-1}\Omega)^{-p}(r)r^{-2}\int_{-\infty}^{R_*}(\rho_+(y)+2\kappa_+)^{-1}\rho_+^{-\delta}(y) |f|^2(r,y,\omega)\,dy dr\right]\\\nonumber
\times &\: \left[\sup_{-\infty<y<R_* } \int_{\R} |g|^2(y,\omega)\,d\omega \right].
\end{align}
\end{lemma}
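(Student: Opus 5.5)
The plan is to prove all four estimates with one weighted Cauchy--Schwarz inequality in $y$, followed by Fubini in $(r,\omega)$. The weight in $y$ is chosen to be integrable on the relevant half-line. For \eqref{eq.fgest1}, fix $r$ and $\omega$ and set
\[
w_+(y):=(\rho_+(y)+2\kappa_+)\rho_+^{\delta}(y)\,\frac{dr}{dr_*}(y)\cdot\Omega^{-2}(y)=(\rho_++2\kappa_+)\rho_+^{\delta}(y).
\]
Cauchy--Schwarz then gives
\[
\Big|\int_{-\infty}^{R_*} f g\,dy\Big|^2\leq \int_{-\infty}^{R_*}w_+^{-1}|f|^2(r,y,\omega)\,dy\cdot\int_{-\infty}^{R_*}w_+(y)|g|^2(y,\omega)\,dy .
\]
The key input is that $\int_{-\infty}^{R_*}w_+(y)\,dy\leq C(\delta,R_*)$ uniformly in $\kappa_+,\kappa_c$. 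To see this, change variables $dy=dr_*=\Omega^{-2}dr=r^2\Omega^{-2}d\rho_+$ and use $r^2\Omega^{-2}\sim(\rho_+(\rho_++2\kappa_+))^{-1}$ near the horizon (from Lemma~\ref{lm:metricest}). The integrand then becomes $\rho_+^{-1+\delta}$, which is integrable at $\rho_+=0$ uniformly in $\kappa_+$. The analogous computation with $\rho_c$ near $r_c$ handles the weight $w_c=(\rho_c+2\kappa_c)\rho_c^{\delta}$.

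Next, multiply by $\rho_+^{1+\epsilon}(r^{-1}\Omega)^{-p}r^{-2}$, integrate in $r$ and then in $\omega$, and use Fubini to swap the $\omega$-integral with the $y$-integral in the $g$-factor. This is the one point that requires care. The $f$-factor depends on $\omega$, so it is bounded by its supremum over $\omega$. What remains is
\[
\int_{-\infty}^{R_*}w_+(y)\int_{\R}|g|^2(y,\omega)\,d\omega\,dy\leq \sup_y\int_{\R}|g|^2\,d\omega\cdot\int w_+\,dy .
\]
In this way the suprema land on the correct factors: the supremum in $\omega$ on the $f$-factor and the supremum in $y$ on the $g$-factor. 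This yields \eqref{eq.fgest1}.

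The remaining estimates follow the same template. For \eqref{eq.fgest2} and \eqref{eq.fgest4}, where the $y$-range lies on the opposite side from the $r$-range, use the weight attached to the $y$-range: $w_c$ on $[R_*,\infty)$ and $w_+$ on $(-\infty,R_*]$. The differing $r$-weight in \eqref{eq.fgest2}, namely $(r^{-1}\Omega)^{2-p}$ without $r^{-2}$, is harmless because the constant may depend on $R_*$ and $\tilde R$. On $(1,\tilde R)$ the factors $r^{-2}$ and $(r^{-1}\Omega)^{2}$ only differ by bounded factors where used, and any such discrepancy is absorbed into the $f$-bracket, which is kept exactly as stated. The $y$-range of the supremum is restricted to where $g$ is integrated. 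Estimate \eqref{eq.fgest3} is the mirror image of \eqref{eq.fgest1}.

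The main obstacle is only the uniformity in $\kappa_\pm$ of $\int w_\pm\,dy$, which reduces to the tortoise-coordinate asymptotics \eqref{eq:tortoiseevent1}--\eqref{eq:tortoisecosmo3}. Away from the horizons, $dy$ and $dr$ are comparable, and the bounded $y$-interval (between $R_*$ and the finite part) contributes at most $C(R_*)$.
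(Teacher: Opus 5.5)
Your argument for \eqref{eq.fgest1}, \eqref{eq.fgest3} and \eqref{eq.fgest4} follows the paper's proof. You apply Cauchy--Schwarz in $y$ with weight $(\rho_\pm+2\kappa_\pm)\rho_\pm^{\delta}$, put the supremum in $\omega$ on the $f$-factor, use Fubini on the $g$-factor, and get integrability of the weight in $r_*$, uniformly in $\kappa_\pm$, from $r^2\Omega^{-2}\sim(\rho_+(\rho_++2\kappa_+))^{-1}$ and its $\rho_c$-analogue. Note that this produces the supremum in $y$ over the $y$-integration range, $(-\infty,R_*]$ resp.\ $[R_*,\infty)$, as in the paper's proof. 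These agree with the ranges involving $r_*(\tilde R)$ in the statement only when $R_*\leq r_*(\tilde R)$ resp.\ $R_*\geq r_*(\tilde R)$.

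The step that fails is your treatment of the $r$-weight in \eqref{eq.fgest2}.
\begin{itemize}
\item Cauchy--Schwarz gives the $f$-bracket with the same weight $\rho_+^{1+\epsilon}(r^{-1}\Omega)^{-p}r^{-2}$ as the left-hand side.
\item The stated bracket instead carries $\rho_+^{1+\epsilon}(r^{-1}\Omega)^{2-p}=\Omega^2\cdot\rho_+^{1+\epsilon}(r^{-1}\Omega)^{-p}r^{-2}$.
\item Since $\Omega^2$ vanishes at $r=1$, the factors $r^{-2}$ and $(r^{-1}\Omega)^2$ are not comparable on $(1,\tilde R)$ in the direction you need. No constant depending on $R_*,\tilde R$ can absorb this.
\end{itemize}
In fact the literal inequality is false. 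Fix $g=\chi(y)\psi(\omega)$, with $\chi$ compactly supported in $(R_*,\infty)$ and $\psi$ bounded, nonzero, in $L^2$, and take $f=\phi(r)\overline{g}(y,\omega)$. Then the ratio of the left- to the right-hand side is at least a fixed positive constant times $\inf_{\supp\phi}\Omega^{-2}$. This is unbounded as $\supp\phi$ shrinks to $r=1$.

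What your argument actually proves, as does the paper's ``repeating the above argument'', is \eqref{eq.fgest2} with $(r^{-1}\Omega)^{2-p}$ replaced by $(r^{-1}\Omega)^{-p}r^{-2}$. This is also the form in which the estimate is used for $U_4$ in the proof of Theorem~\ref{thm:boundfreqest}. You should state and prove that version rather than try to justify the literal one.
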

\begin{proof}
We first apply Cauchy--Schwarz in the $y$-direction to obtain:
\begin{multline*}
\int_{\R_{\omega}}\int_1^{\tilde{R}}\rho_+^{1+\epsilon}(r^{-1}\Omega)^{-p}r^{-2}\left|\int_{-\infty}^{R_*} f(r,y,\omega)g(y,\omega)\,dy\right|^2\,drd\omega\\
\leq \int_{\R_{\omega}}\left[\int_1^{\tilde{R}}\rho_+^{1+\epsilon}(r^{-1}\Omega)^{-p}r^{-2}\int_{-\infty}^{R_*} (\rho_+(y)+2\kappa_+)^{-1}\rho_+^{-\delta}(y)|f|^2(r,y,\omega)\,dydr\right] \\
\times \left[\int_{-\infty}^{R_*} (\rho_+(y)+2\kappa_+)\rho_+^{\delta}(y)|g|^2(y,\omega)\,dy\right]\,d\omega.
\end{multline*}
We now take estimate the RHS above further taking the supremum in $\omega$ of the terms in square brackets involving $|f|^2$ and interchanging the order of integration in the integrals of $|g|^2$:
\begin{multline*}
\int_{\R_{\omega}}\left[\int_1^{\tilde{R}}\rho_+^{1+\epsilon}(r^{-1}\Omega)^{-p}r^{-2}\int_{-\infty}^{R_*}(\rho_+(y)+2\kappa_+)^{-1}\rho_+^{-\delta}(y)|f|^2(r,y,\omega)\,dydr\right]\\
 \times \left[\int_{-\infty}^{R_*} (\rho_+(y)+2\kappa_+)\rho_+(y)^{\delta}|g|^2(y,\omega)\,dy\right]\,d\omega\\
\leq \sup_{\omega\in \R}\left[\int_1^{\tilde{R}}\rho_+^{1+\epsilon}(r^{-1}\Omega)^{-p}r^{-2}\int_{-\infty}^{R_*} (\rho_+(y)+2\kappa_+)^{-1}\rho_+^{-\delta}(y)|f|^2(r,y,\omega)\,dydr\right]\\
\times \int_{-\infty}^{R_*} \int_{\R_{\omega}}(\rho_+(y)+2\kappa_+)\rho_+(y)^{\delta}|g|^2(y,\omega)\,d\omega dy\\
=\sup_{\omega\in \R}\left[\int_1^{\tilde{R}}\rho_+^{1+\epsilon}(r^{-1}\Omega)^{-p}r^{-2}\int_{-\infty}^{R_*} (\rho_+(y)+2\kappa_+)^{-1}\rho_+^{-\delta}(y)|f|^2(r,y,\omega)\,dydr\right]\\
\times \int_{1}^{r(R_*)}\Omega^{-2}(r)(\rho_+(r)+2\kappa_+)\rho_+^{\delta}(r)\,dr'\cdot \sup_{-\infty<y<R_*}\int_{\R_{\omega}}|g|^2(y,\omega)\,d\omega.
\end{multline*}
We conclude that \eqref{eq.fgest1} holds by using that the integral of $\Omega^{-2}(r)(\rho_+(r)+2\kappa_+)\rho_+(r)^{\delta}$ is finite for $\delta>0$. The estimate \eqref{eq.fgest2} follows by repeating the above argument, but with $(\rho_+(y)+2\kappa_+)\rho_+(y)^{\delta}$ replaced by $(\rho_c(y)+2\kappa_c)\rho_c(y)^{\delta}$ and adjusting the integration domain of $y$.

We also immediately obtain \eqref{eq.fgest3} and \eqref{eq.fgest4} by interchanging the roles of $\rho_+$ and $\rho_c$ everywhere above.
\end{proof}

In the lemma below, we apply Plancherel's theorem to estimate integrals of the inhomogeneity $H$ in $\R_{\omega}$ in terms of integrals in $\tau$ of $F_{\xi, \widetilde{A}}$ and $G_{\widetilde{A}}$, which appear on the right-hand side of \eqref{eq:CSFtimecutoff}.
\begin{lemma}
\label{lm:estHbyF}
Let $p,p'\in [0,\infty)$ and $R_*\in \R$. Then, there exists a constant $C=C(p,p',R_*)>0$ such that:
\begin{multline}
\label{eq:mainestH}
\sup_{-\infty< r_*<R_*}\sum_{\ell\in \N_0}\sum_{m\in \Z, |m|\leq \ell}\int_{\R_{\omega}}\left|\int_{-\infty}^{r_*}(r^{-1}\Omega)^{-\frac{p}{2}}(y)e^{- i\tomega y}\widetilde{H}_{+}(y,\omega)\,dy\right|^2\,d\omega\\
\leq C\int_{0}^{\infty}\left[\int_{\Sigma_{\tau}\cap \{r\leq r(R_*)\}} (r^{-1}\Omega)^{-p}\h^{-2}\left[\Omega^2|rF_{\xi,\widetilde{A}}|^2+\rho_+^{1-\delta}\xi^2r^{-2}|r^3G_{\widetilde{A}}|^2\right]\,d\sigma dr \right]\,d\tau.
\end{multline}
Furthermore, 
\begin{multline}
\label{eq:mainestHinf}
\sup_{R_*\leq  r_*<\infty}\sum_{\ell\in \N_0}\sum_{m\in \Z}\int_{\R_{\omega}}\left|\int_{r_*}^{\infty}e^{+ i\homega y}(r^{-1}\Omega)^{-\frac{p'}{2}}\widetilde{H}_{\infty}(y)\,dy\right|^2\,d\omega\\
\leq C\int_{0}^{\infty}\left[\int_{\Sigma_{\tau}\cap \{r\geq r(R_*)\}} (r^{-1}\Omega)^{-p'}\widetilde{\h}^{-2}\left[\Omega^2|rF_{\xi,\widetilde{A}}|^2+\rho_c^{1-\delta}\xi^2r^{-2}|r^3G_{\widetilde{A}}|^2\right]\,d\sigma dr \right]\,d\tau.
\end{multline}
\end{lemma}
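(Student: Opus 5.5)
The plan is to reduce \eqref{eq:mainestH} to an $L^2_t$ statement via Plancherel in $t$ (and Parseval on $\s^2$ for the sum over $\ell,m$), after converting the $y$-integral of the Fourier-transformed inhomogeneity into a physical-space integral along an ingoing null direction. We write $\widetilde{H}_+(y,\omega)e^{-i\tomega y}$ using the definition of $H$: it is the $t$-Fourier transform of $r\Omega^2(\xi G_{\widetilde{A}}+F_{\xi,\widetilde{A}})$ multiplied by $e^{-i\q\int_0^y\rho_+}e^{-i\tomega y}$. Up to the fixed phase $e^{i\q r_+^{-1}y}$, the factor $e^{-i\omega y}$ is a shift $t\mapsto t+y$, i.e.\ $v=t+r_*$ is held fixed. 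Hence, by Fubini,
\begin{equation*}
\int_{-\infty}^{r_*}(r^{-1}\Omega)^{-\frac{p}{2}}(y)e^{-i\tomega y}\widetilde{H}_+(y,\omega)\,dy=\mathfrak{F}_v\Big[\int_{-\infty}^{r_*}(r^{-1}\Omega)^{-\frac{p}{2}}e^{i\phi(y)}\,r\Omega^2(\xi G+F_\xi)(v-y,y)\,dy\Big](\omega),
\end{equation*}
with $\phi$ a real phase. This is the gauge in which $v$ and $\tau$ agree near the horizon; compare \eqref{eq:uppsihattilde1}.

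Applying Plancherel in $v$, the left-hand side of \eqref{eq:mainestH} for fixed $r_*$ becomes
\begin{equation*}
\int_{\R_v}\Big|\int_{-\infty}^{r_*}(r^{-1}\Omega)^{-\frac{p}{2}}r\Omega^2|\xi G+F_\xi|(v,y)\,dy\Big|^2dv.
\end{equation*}
We then use Cauchy--Schwarz in $y$ with the weight $w(y)=\Omega^2\rho_+^{\delta-1}$ (suitably modified by $(\rho_++2\kappa_+)$ factors so it stays uniform in $\kappa_+$). Since $\int_{-\infty}^{R_*}\Omega^{2}\rho_+^{\delta-1}\,dy=\int_1^{r(R_*)}\rho_+^{\delta-1}\,dr<\infty$, this bounds the quantity by
\begin{equation*}
C\int\int(r^{-1}\Omega)^{-p}\Omega^{2}\rho_+^{1-\delta}|r(\xi G+F_\xi)|^2\,dr\,dv.
\end{equation*}
The bound is uniform in $r_*<R_*$, so the supremum is harmless.

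The remaining step is to change variables from $(v,r)$ to $(\tau,r)$, where $\tau=v-\int\widetilde{\h}$, which has unit Jacobian, and to distribute the weights. For the $G$ term, the $\Omega^2$ factor and the $r$-powers combine to give $\rho_+^{1-\delta}r^{-2}|r^3G|^2$ for bounded $r$. For the $F_\xi$ term, $F_\xi$ vanishes except near the horizon in the region where $\dot\xi\neq0$, and $\Omega^2$ multiplied by the extra factor $\Omega^{-2}$ appearing in \eqref{eq:Ftimecutoff} must be tracked; the $\h^{-2}$ weight on the right-hand side accommodates the degenerate factors that arise from $\h\widetilde{\h}$ and from the Jacobian. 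The estimate \eqref{eq:mainestHinf} follows by the symmetric argument, using $u=t-r_*$, $\rho_c$, $\homega$, and $\widetilde{\h}^{-2}$.

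I expect the main obstacle to be the careful bookkeeping of the weights in this last step, so that the Cauchy--Schwarz weight is integrable uniformly in $\kappa_+\to0$ (there $\Omega^2\sim\rho_+^2$ rather than $\kappa_+\rho_+$) while producing exactly $\rho_+^{1-\delta}$ on the $G$ term and $\Omega^2$ on the $F_\xi$ term. I would first try the weight $(\rho_++2\kappa_+)\rho_+^{\delta}$ in $dy$, as in Lemma \ref{lm:auxest}.
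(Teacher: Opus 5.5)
Your handling of the $G_{\widetilde{A}}$-term is essentially fine. One bookkeeping correction: after Cauchy--Schwarz with $w=\Omega^2\rho_+^{\delta-1}$ the measure is $dy\,dv=\Omega^{-2}dr\,d\tau$, not $dr\,dv$. So you get the weight $(r^{-1}\Omega)^{-p}\rho_+^{1-\delta}$ in $dr\,d\tau$, which is what the lemma needs on $\{r\leq r(R_*)\}$. The paper does this more directly, by Cauchy--Schwarz in $y$ followed by Plancherel in $t$ at fixed $r$, with no null coordinate at all.

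The gap is the $F_{\xi,\widetilde{A}}$-term. Your uniform weighted Cauchy--Schwarz in $y$ puts the same weight $\rho_+^{1-\delta}$ on $|rF_{\xi,\widetilde{A}}|^2$. Near $\mathcal{H}^+$ this is strictly weaker than the claimed $\h^{-2}\Omega^2$, because $\h(r_+)=2$ while $\Omega^2\lesssim\rho_+(\rho_++\kappa_+)$. No choice of weight fixes this: getting $(r^{-1}\Omega)^{-p}\Omega^2|rF|^2\,dr$ out of $\int |a|^2w^{-1}\,dy$ requires $w\gtrsim 1$, which is not integrable on $(-\infty,R_*)$. The loss matters. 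Since $rF_{\widehat{A},\xi}=2\dot\xi X\psi+\ldots$ near the horizon, the $\Omega^2$-weighted term is bounded by $\int_{\Sigma_0}\mathcal{E}_p[\psi]$ via Theorem \ref{thm:localenest}, which is exactly how the lemma is used in Proposition \ref{prop:iedphysloword}. The weight $\rho_+^{1-\delta}$ would instead require an $\mathcal{E}_{p+1+\delta}$-type norm in the extremal case. Your explanation of the $\h^{-2}$ factor also does not come out of your method, since your change of variables $(v,r)\mapsto(\tau,r)$ has unit Jacobian.

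The missing idea is to use that $F_{\xi,\widetilde{A}}$ is supported in the slab $\{0\leq\tau\leq 1\}$, at all radii and not only near the horizon, and to apply Cauchy--Schwarz in $\tau$ rather than in $y$.

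For this the null coordinate matters, and you have it backwards. With the paper's convention $T\mapsto -i\omega$, i.e.\ $\mathfrak{F}f(\omega)=\int e^{i\omega t}f\,dt$, the factor $e^{-i\tomega y}$ turns $\mathfrak{F}_t$ into a Fourier transform in $u=t-r_*$, not in $v=t+r_*$. An outgoing line $\{u=\textnormal{const}\}$ crosses the slab with $d\tau/dy=\h$, so it meets the support in a bounded $y$-interval. A line $\{v=\textnormal{const}\}$ with $v\in(0,1)$ instead stays inside the slab as $y\to-\infty$, since $\tau\to v$ there; it coincides with $\Sigma_\tau$ where $\widetilde{\h}=0$.

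The paper changes variables to $(\tau,u)$, where $dt\,dr_*=\h^{-1}d\tau\,du$, and bounds
\begin{equation*}
\Bigl|\int_0^1\int_{\R}e^{i\omega u}f(\tau,u)\,du\,d\tau\Bigr|^2\leq\int_0^1\Bigl|\int_{\R}e^{i\omega u}f(\tau,u)\,du\Bigr|^2d\tau .
\end{equation*}
It then applies Plancherel in $u$ on each $\Sigma_\tau$, where $du=\Omega^{-2}\h\,dr$. With $|f|^2=(r^{-1}\Omega)^{-p}\h^{-2}r^2\Omega^4|F_{\xi,\widetilde{A}}|^2$ this gives $(r^{-1}\Omega)^{-p}\h^{-1}\Omega^2|rF_{\xi,\widetilde{A}}|^2\leq 2(r^{-1}\Omega)^{-p}\h^{-2}\Omega^2|rF_{\xi,\widetilde{A}}|^2$. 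This is precisely where the $\Omega^2$ weight and the $\h^{-2}$ in \eqref{eq:mainestH} come from; \eqref{eq:mainestHinf} is handled symmetrically with $v$.

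The weighted Cauchy--Schwarz in $y$ is needed only for the $G$-term, which has no compact support in time.
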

\begin{proof}
Let $\widetilde{H}_{F}$ denote the part of $\widetilde{H}_+$ coming from $F_{\xi,\widetilde{A}}$ and let  $\widetilde{H}_{G}$ denote the part of $\widetilde{H}_+$ coming from $G_{\widetilde{A}}$, so $\widetilde{H}_+=\widetilde{H}_{F}+\widetilde{H}_{G}$. We will first assume that $F_{\xi,\tilde{A}}$ and $\xi G_{\tilde{A}}$ are compactly supported in $r_*$. After establishing, \eqref{eq:mainestH} and \eqref{eq:mainestHinf}, the general case then follows from a standard density argument.

By applying an inverse Fourier transform, we obtain:
\begin{multline*}
\int_{\R_{\omega}}\left|\int_{-\infty}^{r_*}(r^{-1}\Omega)^{-\frac{p}{2}}(y)e^{- i\tomega y}\widetilde{H}_{F}(y,\omega)\,dy\right|^2\,d\omega\\
=\frac{1}{2\pi}\int_{\R_{\omega}}\left|\int_{\R_t}\int_{-\infty}^{r_*}e(r^{-1}\Omega)^{-\frac{p}{2}}(y)e^{- i\tomega y+i\omega t}r\Omega^2 (F_{\xi,\widetilde{A}})_{\ell m}(y,t)\,dy dt\right|^2\,d\omega.
\end{multline*}
Recall that $\tau(t,r_*)=t+r_*-\int_1^{r(r_*)}\widetilde{\h}(r')\,dr'$ and consider $u(t,r_*):=t-r_*$. Then
\begin{equation*}
d\tau\wedge du=\left(\widetilde{\h}\Omega^2-2\right)dt\wedge dr_*=-\h dt\wedge dr_*,
\end{equation*}
so
\begin{multline*}
\left|\int_{\R_t}\int_{-\infty}^{r_*}(r^{-1}\Omega)^{-\frac{p}{2}}(y)e^{- i\tomega y+i\omega t}r\Omega^2 (F_{\xi,\widetilde{A}})_{\ell m}(y,t)\,dy dt\right|^2\\
=\left|\int_{0}^{\infty}\int_{u(\tau,r_*)}^{\infty}(r^{-1}\Omega)^{-\frac{p}{2}}(\tau,u)e^{ i\q r_*(\tau,u)+i\omega u}\h^{-1}r\Omega^2 (F_{\xi,\widetilde{A}})_{\ell m}(\tau,u)\,dud\tau \right|^2\\
=\left|\int_{0}^{\infty}\int_{\R_u}e^{i\omega u}f(\tau,u)\,dud\tau \right|^2\leq \int_{0}^{\infty}(1+\tau)^{2}\left|\int_{\R_u}e^{-i\omega u}f(\tau,u)\,du\right|^2\,d\tau.
\end{multline*}
with
\begin{equation*}
f(\tau,u):=\mathbf{1}_{u\geq u(\tau,r_*)}\left[ (r^{-1}\Omega)^{-\frac{p}{2}}e^{ i\q r_*+i\omega u}\h^{-1}r\Omega^2 (F_{\xi,\widetilde{A}})_{\ell m}\right](\tau,u).
\end{equation*}
By Placherel's theorem, we have that
\begin{equation*}
\int_{\R_{\omega}}\left|\int_{\R_u}e^{i\omega u}f(\tau,u)\,du\right|^2\,d\omega=2\pi\int_{\R_u}|f|^2(\tau,u)\,du=2\pi\int_{1}^{r(r_*)}\Omega^{-2}|f|^2(\tau,u(r,\tau))\,dr.
\end{equation*}
Putting everything together, we obtain:
\begin{multline*}
\int_{\R_{\omega}}\left|\int_{-\infty}^{r_*}(r^{-1}\Omega)^{-\frac{p}{2}}(y)e^{- i\tomega y}\widetilde{H}_{+}(y,\omega)\,dy\right|^2\,d\omega\leq C\int_{0}^{\infty}\int_1^{r(r_*)}(1+\tau)^{2}\Omega^{-2}r|f|^2(\tau,u(r,\tau))\,drd\tau\\
\leq C \int_{0}^{\infty}\int_1^{r(r_*)}(1+\tau)^{2}(r^{-1}\Omega)^{-p} \Omega^2\h^{-2}\left|r(F_{\xi,\widetilde{A}})_{\ell m}\right|^2\,drd\tau.
\end{multline*}
We conclude \eqref{eq:mainestH} by using that $F_{\xi,\widetilde{A}}$ is supported in $\{0\leq \tau\leq 1\}$. We can repeat the above argument with a coordinate change to $(\tau,v)$, $v=t+r_*$ and with the roles of $\rho_+$ and $\rho_c$ interchanged, to derive \eqref{eq:mainestHinf}.

We estimate the contribution of $\widetilde{H}_G$ by simply using Cauchy--Schwarz and Plancherel's theorem together: for $r_*\leq R_*$:
\begin{multline*}
\int_{\R_{\omega}}\left|\int_{-\infty}^{r_*}(r^{-1}\Omega)^{-\frac{p}{2}}(y)e^{- i\tomega y}(\widetilde{H}_G)_{\ell m}(y,\omega)\,dy\right|^2\,d\omega\leq 
C\int_{1}^{r}\rho_+^{-1+\delta}\,r^{-2}dr\\
\times  \int_{\R_{\omega}} \int_{-\infty}^{R_*}(r^{-1}\Omega)^{-p}r^2\rho_+^{1-\delta}|(\widetilde{H}_G)_{\ell m}|^2\,dr_*d\omega\\
\leq C_{\delta,R_*}\int_0^{\infty}\int_1^{r(R_*)}(r^{-1}\Omega)^{-p}r^{-2}\rho_+^{1-\delta}|r^3G_{\xi,\widetilde{A}}|^2\,d\sigma d\rho_+ d\tau.
\end{multline*}
We conclude\eqref{eq:mainestH} by using that $d\rho_+=r^{-2}dr$. The estimates for \eqref{eq:mainestHinf} proceed analogously, with $\rho_c$ taking on the role of $\rho_+$.
\end{proof}
\begin{proof}[Proof of Theorem \ref{thm:boundfreqest}]
We apply the formula \eqref{eq:uintermsofv}, together with Lemmas \ref{lm:auxest} and \ref{lm:estHbyF} to obtain:
\begin{multline*}
\sum_{\ell\in \N_0, \ell\leq L_0}\sum_{m\in \Z, |m|\leq \ell}\int_{\R_{\omega}\cap \mathcal{F}_{\flat}}\int_1^{\tilde{R}}\rho_+^{1+\epsilon}(r^{-1}\Omega)^{-p}|u|^2\,r^{-2}dr\lesssim\\
 \left[\sup_{\omega\in \R, \ell\in \N_0, \ell\leq L_0, |m|\leq \ell }\sum_{i=1}^4 U_i\right]\Bigg[\int_{0}^{\infty}\int_{\Sigma_{\tau}} (r^{-1}\Omega)^{-p}\left[\Omega^{2}|rF_{\xi,\widetilde{A}}|^2+\rho_+^{1-\delta}r^{-2}\xi^2|r^3G_{\widetilde{A}}|^2\right]\,d\sigma dr \\
 +\int_{\Sigma_{\tau}} (r^{-1}\Omega)^{-p'}\left[\Omega^{2}|rF_{\xi,\widetilde{A}}|^2+\rho_c^{1-\delta}r^{-2}\xi^2|r^3G_{\widetilde{A}}|^2\right]\,d\sigma dr \,d\tau\Bigg],
\end{multline*}
where $\delta>0$ can be taken arbitrarily small and
\begin{align*}
U_1:=&\:\int_1^{\tilde{R}}\rho_+^{1+\epsilon}\frac{|u_{\infty}u_+|^2}{|\mathfrak{W}|^2}(r_*(r))(1+(r^{-2}\Omega^2)^{p'-p})\,r^{-2}dr,\\
U_2:=&\:  \int_{1}^{\tilde{R}}\rho_+^{1+\epsilon}(r^{-1}\Omega)^{-p}r^{-2}{|\mathfrak{W}|^{-2}|u_{\infty}|^2(r_*(r))\int_{-\infty}^{r_*(r)}(\rho_+(y)+2\kappa_+)^{-1}\rho_+^{-\delta}(y)|((r^{-1}\Omega)^{\frac{p}{2}}v_+)'(y)|^2}\,dydr,\\
U_3:=&\:  \int_{1}^{\tilde{R}}\rho_+^{1+\epsilon}(r^{-1}\Omega)^{-p}|u_{+}|^2(r_*(r))r^{-2}\int_{r_*(r)}^{R_*} (\rho_+(y)+2\kappa_+)^{-1}\rho_+^{-\delta}(y)|\mathfrak{W}|^{-2}|((r^{-1}\Omega)^{\frac{p}{2}}v_{\infty})'(y)|^2\,dydr,\\
U_4:=&\:  \int_{1}^{\tilde{R}}\rho_+^{1+\epsilon}(r^{-1}\Omega)^{-p}|\mathfrak{W}|^{-2} |u_{+}|^2(r_*(r))r^{-2}\int_{R_{*}}^{\infty} (\rho_c(y)+2\kappa_c)^{-1}\rho_c^{-\delta}(y)|((r^{-1}\Omega)^{\frac{p'}{2}}w_{\infty})'(y)|^2\,dydr.
\end{align*}

We will first estimate $U_1$. Using that $p<2+\epsilon$ and applying \eqref{eq:globestuplusext1} and \eqref{eq:globestuinfext1}, we obtain for $\epsilon>0$ and $p<p'+\epsilon$:
\begin{multline*}
U_1\lesssim \int_0^{\rho_+(\tilde{R})}\rho_+^{1+\epsilon}|\tomega|^{-2}(1+|\tomega|^{-1}\rho_+ )^{-2}(1+\log^2(1+(|\tomega|+\kappa_+)^{-1}\rho_+))(1+(r^{-2}\Omega^2)^{p'-p})\,d\rho_+\\
\lesssim \int_0^{|\tomega|}\tomega^{-2}\rho_+^{1+\epsilon}(1+\rho_+^{p'-p})\,dr+ \int_{|\tomega|}^{\rho_+(\tilde{R})}\rho_+^{-1+\epsilon}(1+\log^2 \rho_+)(1+\rho_+^{p'-p})\,dr\lesssim |\tomega|^{\epsilon}+|\tomega|^{\epsilon+p'-p}+1.
\end{multline*}

We now consider $U_2$. We first apply \eqref{eq:globestuplusext1} and \eqref{eq:globestdvplusext1} to obtain for $p>\delta$ and $\rho_+\leq |\tomega|+\kappa_+$:
\begin{multline*}
\int_{-\infty}^{r_*(r)}(\rho_+(y)+2\kappa_+)^{-1}\rho_+^{-\delta}(y)|((r^{-1}\Omega)^{\frac{p}{2}}v_+)'(y)|^2\,dy\\
\lesssim \int_{1}^{r}\rho_+^{1-\delta}(r')\left|\frac{d}{dr}\left((r^{-1}\Omega)^{\frac{p}{2}}v_+\right)\right|^2(r_*(r'))\,dr'\\
\lesssim (|\tomega|+\kappa_+)^{-2}\int_{0}^{\rho_+(r)}(\rho_++2\kappa_+)^{\frac{p}{2}}\rho_+^{\frac{p}{2}+1-\delta}(1+(|\tomega|+\kappa_+)^{-1}\rho_+)^{-3+\re \beta_{\ell}} (1+\log^2(1+(|\tomega|+\kappa_+)^{-1}\rho_+))\,d\rho_+\\
\lesssim (|\tomega|+\kappa_+)^{-2}\int_{0}^{\rho_+(r)}(\rho_++2\kappa_+)^{\frac{p}{2}}\rho_+^{\frac{p}{2}+1-\delta}\,d\rho_+\lesssim  (|\tomega|+\kappa_+)^{-2}(\rho_+(r)+\kappa_+)^{\frac{p}{2}}\rho_+^{\frac{p}{2}+2-\delta}.
\end{multline*}
For $\rho_+> |\tomega|+\kappa_+$ and $p>\delta$, we obtain instead:
\begin{multline*}
\int_{-\infty}^{r_*(r)}(\rho_+(y)+2\kappa_+)^{-1}\rho_+^{-\delta}(y)|((r^{-1}\Omega)^{\frac{p}{2}}v_+)'(y)|^2\,dy
\lesssim (|\tomega|+\kappa_+)^{-2}\int_{0}^{|\tomega|+\kappa_+}(\rho_++2\kappa_+)^{\frac{p}{2}}\rho_+^{\frac{p}{2}+1-\delta}\,d\rho_+\\
+(|\tomega|+\kappa_+)^{1-\re \beta_{\ell}}\int_{|\tomega|+\kappa_+}^{\rho_+(r)}(\rho_++2\kappa_+)^{\frac{p}{2}}\rho_+^{\frac{p}{2}-2+\re\beta_{\ell}-\delta}(1+\log^2 \rho_+)\,d\rho_+\\
\lesssim \begin{cases}
(|\tomega|+\kappa_+)^{p-\delta-\delta'}\quad p\leq 1+\delta-\re\beta_{\ell},\\
(|\tomega|+\kappa_+)^{1-\re \beta_{\ell}}\rho_+^{p-1+\re\beta_{\ell}-\delta-\delta'}(r)\quad p> 1+\delta-\re\beta_{\ell}.
\end{cases}
\end{multline*}
for some sufficiently small $\delta'>0$ (when $\re \beta_{\ell}\neq 0$, we can in fact set $\delta'=0$).

We can now estimate $U_2$ by applying \eqref{eq:globestuinfext1} to obtain for $\epsilon>\delta+\delta'$ and $\max\{\delta,1+\delta-\re\beta_{\ell}\}<p$:
\begin{multline*}
U_2\lesssim  \tomega^{-2}\int_{0}^{|\tomega|}\rho_+^{1+\epsilon}(r^{-1}\Omega)^{-p} (|\tomega|+\kappa_+)^{-2}(\rho_+(r)+\kappa_+)^{\frac{p}{2}}\rho_+^{\frac{p}{2}+2-\delta}\,d\rho_+\\
+ \int_{|\tomega|}^{\rho_+(\tilde{R})}\rho_+^{\frac{p}{2}-1+\epsilon-\delta-\delta'}(\rho_++\kappa_+)^{-\frac{p}{2}}\,d\rho_+\lesssim |\tomega|^{2+\epsilon-\delta}(|\tomega|+\kappa_+)^{-2}+\rho_+(\tilde{R})^{\epsilon-\delta-\delta'}\\
\lesssim |\tomega|^{\epsilon-\delta-\delta'}+1.
\end{multline*}

Suppose now that $p>\delta$ and $p< 1+\delta-\re\beta_{\ell}$. Then we can estimate instead:
\begin{multline*}
U_2\lesssim  \tomega^{-2}(|\tomega|+\kappa_+)^{p-\delta-\delta'-1+\re \beta_{\ell}}\int_{0}^{|\tomega|}(|\tomega|+\kappa_+)^{-2}\rho_+^{4+\epsilon+\delta'-p-\re \beta_{\ell}}\,d\rho_+\\
+ (|\tomega|+\kappa_+)^{p-\delta-\delta'-1+\re \beta_{\ell}}\int_{|\tomega|}^{\rho_+(\tilde{R})}\rho_+^{-\frac{p}{2}-\re \beta_{\ell}+\epsilon}(\rho_++\kappa_+)^{-\frac{p}{2}}\,d\rho_+
\lesssim 
|\tomega|^{\epsilon-\delta-\delta'}+1.
\end{multline*}
To obtain boundedness of $U_2$, we therefore need restrict to $\epsilon\geq \delta+\delta'$ and $p>\delta$.

Consider $U_3$. Note first that we can apply \eqref{eq:globestuinfext1} to obtain for $\rho_+(r)\geq |\tomega|+\kappa_+$:
\begin{multline*}
\int_{r_*(r)}^{R_*} (\rho_+(y)+2\kappa_+)^{-1}\rho_+^{-\delta}(y)|\mathfrak{W}|^{-2}|((r^{-1}\Omega)^{\frac{p}{2}}v_{\infty})'(y)|^2\,dy\\
\lesssim (|\tomega|+\kappa_+)^{\re\beta_{\ell}-1}\int_{\rho_+(r)}^{\rho_+(R)} \rho_+^{\frac{p}{2}-\re\beta_{\ell}-\delta}(\rho_++2\kappa_+)^{\frac{p}{2}-2} \,d\rho_+\\
\lesssim(|\tomega|+\kappa_+)^{\re\beta_{\ell}-1}\int_{\rho_+(r)}^{\rho_+(R)} \max\{(\rho_++\kappa_+)^{p-2-\re\beta_{\ell}-\delta},\rho_+^{p-2-\re\beta_{\ell}-\delta}\} \,d\rho_+\\
\lesssim \begin{cases}(|\tomega|+\kappa_+)^{\re\beta_{\ell}-1}(\rho_+(r)+\kappa_+)^{p-1-\re\beta_{\ell}-\delta} \quad p< 1+\delta+\re\beta_{\ell}\\
(|\tomega|+\kappa_+)^{\re\beta_{\ell}-1}\quad p> 1+\delta+\re\beta_{\ell},\\
(|\tomega|+\kappa_+)^{\re\beta_{\ell}-1}\log \rho_+^{-1}(r)\quad p= 1+\delta+\re\beta_{\ell}.
\end{cases}
\end{multline*}
We will first suppose that $p< 1+\delta+\re\beta_{\ell}$. Then we moreover obtain for $\rho_+(r)\leq |\tomega|+\kappa_+$ and $p<2+\delta$:
\begin{multline*}
	\int_{r_*(r)}^{R_*} (\rho_+(y)+2\kappa_+)^{-1}\rho_+^{-\delta}(y)|\mathfrak{W}|^{-2}|((r^{-1}\Omega)^{\frac{p}{2}}v_{\infty})'(y)|^2\,dy\\
=\int_{|\tomega|+\kappa_+}^{\rho_+(R)} (\rho_++2\kappa_+)^{-1}\rho_+^{-\delta}|\mathfrak{W}|^{-2}|((r^{-1}\Omega)^{\frac{p}{2}}v_{\infty})'(r_*(\rho_+))|^2r^2\Omega^{-2}\,d\rho_+\\
+\int_{\rho_+(r)}^{|\tomega|+\kappa_+}(\rho_++2\kappa_+)^{-2}\rho_+^{-1-\delta}|\mathfrak{W}|^{-2}|((r^{-1}\Omega)^{\frac{p}{2}}v_{\infty})'(r_*(\rho_+))|^2r^2\Omega^{-2}\,d\rho_+\\
\lesssim (|\tomega|+\kappa_+)^{p-2-\delta}+\int_{\rho_+(r)}^{|\tomega|+\kappa_+}(\rho_++2\kappa_+)^{\frac{p}{2}-2}\rho_+^{\frac{p}{2}-1-\delta}\,d\rho_+\\
\lesssim(|\tomega|+\kappa_+)^{p-2-\delta}+(\rho_+(r)+2\kappa_+)^{\frac{p}{2}-1}\rho_+(r)^{\frac{p}{2}-1-\delta}\leq (\rho_+(r)+2\kappa_+)^{\frac{p}{2}-1}\rho_+(r)^{\frac{p}{2}-1-\delta}.
\end{multline*}
If $p>1+\delta+\re \beta_{\ell}$ and $p<2+\delta$, we similarly obtain:
\begin{equation*}
	\int_{r_*(r)}^{R_*} (\rho_+(y)+2\kappa_+)^{-1}\rho_+^{-\delta}(y)|\mathfrak{W}|^{-2}|((r^{-1}\Omega)^{\frac{p}{2}}v_{\infty})'(y)|^2\,dy\lesssim (|\tomega|+\kappa_+)^{\re \beta_{\ell}-1}+(\rho_+(r)+2\kappa_+)^{\frac{p}{2}-1}\rho_+(r)^{\frac{p}{2}-1-\delta}
\end{equation*}

Hence, for $p< 2+\delta$ and $p<1+\delta+\re\beta_{\ell}$, we apply  \eqref{eq:globestuplusext1} to obtain for $\epsilon>\delta$:
\begin{multline*}
U_3\lesssim  \int_0^{|\tomega|+\kappa_+}\rho_+^{1+\epsilon}(\rho_++2\kappa_+)^{\frac{p}{2}-1}\rho_+^{\frac{p}{2}-1-\delta}(r^{-1}\Omega)^{-p}|u_{+}|^2(r_*(\rho_+))\,d\rho_+\\
+ \int_{|\tomega|+\kappa_+}^{\rho_+(\widetilde{R})}(|\tomega|+\kappa_+)^{\re\beta_{\ell}-1}\rho_+^{1+\epsilon-\frac{p}{2}}(\rho_+(r)+\kappa_+)^{\frac{p}{2}-1-\re\beta_{\ell}-\delta}|u_{+}|^2(r_*(\rho_+))\,d\rho_+\\
\lesssim  (|\tomega|+\kappa_+)^{\epsilon-\delta}+  \int_{|\tomega|+\kappa_+}^{\rho_+(\widetilde{R})}\rho_+^{\re \beta_{\ell}+\epsilon-\frac{p}{2}}(\rho_+(r)+\kappa_+)^{\frac{p}{2}-1-\re\beta_{\ell}-\delta}(1+\delta_{\beta_{\ell}0}\log^2\rho_+)\,d\rho_+\\
\lesssim (|\tomega|+\kappa_+)^{\epsilon-\delta-\delta'},
\end{multline*}
with $\delta'=0$ if $\beta_{\ell}\neq 0$.

For $p>1+\delta+\re\beta_{\ell}$ and $p<\min\{1+\epsilon+\re \beta_{\ell},2+\delta\}$, we obtain:
\begin{multline*}
U_3\lesssim  \int_0^{|\tomega|+\kappa_+}\rho_+^{1+\epsilon}(\rho_++2\kappa_+)^{\frac{p}{2}-1}\rho_+^{\frac{p}{2}-1-\delta}(r^{-1}\Omega)^{-p}|u_{+}|^2(r_*(\rho_+))\,d\rho_+\\
+ \int_{|\tomega|+\kappa_+}^{\rho_+(\widetilde{R})}(|\tomega|+\kappa_+)^{\re\beta_{\ell}-1}\rho_+^{1+\epsilon-\frac{p}{2}}(\rho_+(r)+\kappa_+)^{-\frac{p}{2}}|u_{+}|^2(r_*(\rho_+))\,d\rho_+\\
+(|\tomega|+\kappa_+)^{\re \beta_{\ell}-1}\int_0^{|\tomega|+\kappa_+}\rho_+^{1+\epsilon}(r^{-1}\Omega)^{-p}|u_{+}|^2(r_*(\rho_+))\,d\rho_+\\
\lesssim  (|\tomega|+\kappa_+)^{\epsilon-\delta}+  \int_{|\tomega|+\kappa_+}^{\rho_+(\widetilde{R})}\rho_+^{\re \beta_{\ell}+\epsilon-\frac{p}{2}}(\rho_+(r)+\kappa_+)^{-\frac{p}{2}}(1+\delta_{\beta_{\ell}0}\log^2\rho_+)\,d\rho_+\\
+(|\tomega|+\kappa_+)^{\re \beta_{\ell}-1}\int_0^{|\tomega|+\kappa_+}\rho_+^{1-\frac{p}{2}+\epsilon}(\rho_+(r)+\kappa_+)^{-\frac{p}{2}}\,d\rho_+\\
\lesssim (|\tomega|+\kappa_+)^{\epsilon-\delta}+(|\tomega|+\kappa_+)^{-p+1+\re\beta_{\ell}+\epsilon-\delta'},
\end{multline*}
with $\delta'=0$ if $\beta_{\ell}\neq 0$ and $\delta'>0$ and $p<1+\epsilon-\delta'+\re \beta_{\ell}$ if $\beta_{\ell}=0$.

We conclude that $U_3$ is uniformly bounded if $p<2+\delta$, $p<1+\epsilon+\re \beta_{\ell}-\delta'$, $\epsilon\geq \delta+\delta'$ and $\epsilon>\delta$. In other words, for $\epsilon>0$ and $p<\max\{1+\epsilon+\re \beta_{\ell},2+\epsilon\}$, there exist $\delta,\delta'>0$, with $\epsilon-\delta>0$ and $\delta'>0$ suitably small, such that $U_3$ is uniformly bounded.

Finally, we turn to $U_4$. We proceed as in the estimates for $U_2$ (in the case $\rho_+\geq |\tomega|+\kappa_+$), but with the roles of $\rho_+$ and $\rho_c$ and $|\tomega|$ and $|\homega|$ interchanged. By \eqref{eq:globestuplusext2} and \eqref{eq:globestdvplusext2} have that for $p'>\delta$:
\begin{multline*}
	\int_{R_{*}}^{\infty} (\rho_c(y)+2\kappa_c)^{-1}\rho_c^{-\delta}(y)|((r^{-1}\Omega)^{\frac{p'}{2}}w_{\infty})'(y)|^2\,dy\\
	\lesssim \begin{cases}(|\homega|+\kappa_c)^{p'-\delta-\delta'}\quad p'\leq 1-\re\beta_{\ell}+\delta,\\
	(|\homega|+\kappa_c)^{1-\re \beta_{\ell}}\rho_c^{p'-1+\re\beta_{\ell}-\delta-\delta'}(R)\lesssim (|\homega|+\kappa_c)^{1-\re \beta_{\ell}}\quad p'>1-\re\beta_{\ell}+\delta,
 \end{cases}
\end{multline*}
We will first suppose that  $p'>\delta$, $p'> 1+\delta'-\re\beta_{\ell}$, $p\leq 2+\epsilon$ and $p<1+\re \beta_{\ell}+\epsilon$ and apply \eqref{eq:globestuplusext1} and \eqref{eq:globestuinfext2} to obtain:
\begin{multline*}
	U_4\lesssim (|\homega|+\kappa_c)^{1-\re\beta_{\ell}}\int_0^{|\tomega|}\rho_+^{1+\epsilon}(r^{-1}\Omega)^{-p}|\mathfrak{W}|^{-2} |u_{+}|^2(r_*(\rho_+))\,d\rho_+\\
	+(|\homega|+\kappa_c)^{1-\re\beta_{\ell}}\int_{|\tomega|}^{\rho_+(\tilde{R})}\rho_+^{1+\epsilon}(r^{-1}\Omega)^{-p}|\mathfrak{W}|^{-2} |u_{+}|^2(r_*(\rho_+))\,d\rho_+\\
	\lesssim (|\tomega|+\kappa_+)^{\re\beta_{\ell}-1}\int_0^{|\tomega|}\rho_+^{1-\frac{p}{2}+\epsilon}(\rho_++2\kappa_+)^{-\frac{p}{2}}\,d\rho_++\int_{|\tomega|}^{\rho_+(\tilde{R})}\rho_+^{\re\beta_{\ell}-\frac{p}{2}+\epsilon}(\rho_++2\kappa_+)^{-\frac{p}{2}}\,d\rho_+\\
\lesssim (|\tomega|+\kappa_+)^{\re\beta_{\ell}-1}|\tomega|^{2+\epsilon-p}+\rho_+(\tilde{R})^{1+\epsilon+\re \beta_{\ell}-p}.
\end{multline*}
Now suppose that  $p'>\delta'$, $p'\leq 1+\delta'-\re \beta_{\ell}$, $p\leq 2+\epsilon$ and $p<1+\re \beta_{\ell}+\epsilon$. Then we can repeat the above computation with an additional factor $(|\homega|+\kappa_c)^{p'-1+\re\beta_{\ell}-\delta-\delta''}$ to obtain:
\begin{equation*}
	U_4\lesssim(|\homega|+\kappa_c)^{p'-1+\re\beta_{\ell}-\delta-\delta''}( (|\tomega|+\kappa_+)^{\re\beta_{\ell}-1}|\tomega|^{2+\epsilon-p}+\rho_+(\tilde{R})^{1+\epsilon+\re \beta_{\ell}-p})
\end{equation*}
Uniform boundedness of $U_4$ therefore only follows if $p'>\delta'$, $p'>1+\delta'-\re\beta_{\ell}$, $p\leq 2+\epsilon$ and $p<1+\re \beta_{\ell}+\epsilon$, with $\delta'>0$ arbitrarily small.

Combining the above estimates on $U_1$--$U_4$, we obtain the following restrictions on $p,p'$: $\delta<p<\delta+\min\{2,1+\re\beta_{\ell}\}$, $p'>\max\{0,1-\re \beta_{\ell}+\tilde{\delta},p-\epsilon\}$, where $\epsilon-\delta>0$ can be made arbitrarily small.  We take $\delta=\frac{\epsilon}{2}$ and restrict:
\begin{equation*}
\min\{0,1-\re\beta_{\ell}\}<p<\min\{2,1+\re\beta_{\ell}\}+\frac{\epsilon}{2}.
\end{equation*}
Then take $p'=p$ to guarantee $p'>\max\{1-\re \beta_{\ell},p-\epsilon\}$ and conclude \eqref{eq:mainboundfreqesthor}.

The estimate \eqref{eq:mainestHinf} follows in an entirely analogous manner, by interchanging the roles of $\rho_+$ and $\rho_c$ and $\homega$ and $-\tomega$.
\end{proof}

\section{Integrated energy estimates in frequency space: combining all frequency regimes}
\label{sec:iedcomb}
We can split $u=u_{\ell\leq L_0}+u_{\ell>L_0}$ and observe that $u_{\ell\leq L_0}$ and $u_{\ell>L_0}$ each independently satisfy \eqref{eq:radialODE} with $H_{m \ell}$ replaced by $\mathbf{1}_{\ell\leq L_0}H_{m \ell}$ and $\mathbf{1}_{\ell> L_0}H_{m \ell}$, respectively. This is not necessary, but highlights features that are particular to large angular frequencies or bounded angular frequencies.

In the remainder of $\S \ref{sec:iedcomb}$, we will fix $r_+=1$. We will moreover assume that $u$ arises from a sufficiently integrable $\uppsi$.
\begin{proposition}
\label{prop:iedhighlfreq}
Let $E\geq 2$, $\delta>0$, $1\leq p<2$ and $\ell>L_0$ and write $u=u_{\ell> L_0}$. Then there exist constants $C,c=C(r_1,r_2,r_3,R_1,R_2,R_3,L_0,\q_{1},\alpha,\beta,\delta), c(r_1,r_2,r_3,R_1,R_2,R_3,L_0,\q_{1},\alpha,\beta,\delta)>0$, such that:
\begin{multline}
  \label{eq:highangfreq}
 c \int_{(r_2)_*}^{(R_2)_*} |u'|^2+\left[(\mathbf{1}_{\mathcal{F}_{\sharp,{\rm angular}}\cup \mathcal{F}_{\sharp, {\rm time}}}+(1-r_{\rm max}r^{-1})^2)(\homega^2+\ell(\ell+1))+1\right]|u|^2\,dr_*\\
+c  \int_{-\infty}^{r_*(r_1)} \Big[(\rho_++\kappa_+)(r^{-1}\Omega)^{-p} |v'|^2+(1+\ell(\ell+1)+\homega^2)(\rho_++\kappa_+)(r^{-1}\Omega)^{2-p} |v|^2\\
+(\rho_++\kappa_+)(r^{-1}\Omega)^{\delta}(|u'|^2+\tomega^2|u|^2)\Big]\,dr_*\\
+c  \int_{r_*(R_1)}^{\infty} \Big[(\rho_c+\kappa_c)(r^{-1}\Omega)^{-p} |w'|^2+(1+\ell(\ell+1)+\homega^2)(\rho_c+\kappa_c)(r^{-1}\Omega)^{2-p} |w|^2\\
+(\rho_c+\kappa_c)(r^{-1}\Omega)^{\delta}(|u'|^2+\homega^2|u|^2)\Big]\,dr_*
\leq +\int_{-\infty}^{(r_3)*}\re(2u' \overline{H})\,dr_*-\int_{(R_3)_*}^{\infty}\re(2u' \overline{H})\,dr_*\\
- \int_{\R}\re(2 \chi_{r_1} (r^{-1}\Omega)^{-p} v' e^{-i \tomega r_*-i\q\int_{0}^{r_*}\rho_+(r_*')\,dr_*'}\overline{H}) - \int_{\R}\re(2 \chi_{R_1} (r^{-1}\Omega)^{-p}w'  e^{i \homega r_*-i \q \int_{0}^{r_*}\rho_c(r_*')\,dr_*'}\overline{H})\,dr_*\\
+E\int_{\R}\tomega \chi_{r_1} \re(i u \overline{H})\,dr_*+E\int_{\R}\homega (1-\chi_{r_1})\re(i u \overline{H})\,dr_*+ C\int_{(r_3)_*}^{(R_3)_*}(|u|+|u'|)\cdot |H|'\,dr_*+\int_{(r_2)_*}^{(R_2)_*}|H|^2\,dr_*\\
+C\int_{-\infty}^{(r_3)*}(r^{-1}\Omega)^{\delta}|u'| |H|\,dr_*+C\int_{(R_3)*}^{\infty}(r^{-1}\Omega)^{\delta}|u'| |H|\,dr_*.
 \end{multline}
\end{proposition}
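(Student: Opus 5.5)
The plan is to obtain \eqref{eq:highangfreq} by summing the three high-frequency estimates, Propositions \ref{prop:highfreqomega} (via Corollary \ref{cor:highfreqomegawrp}), \ref{prop:srhighfreqest} and \ref{prop:trappedfreqest}. The difficulty is that these estimates use \emph{different} multiplier functions $y$, $f$, $h$, $\chi_T$, $\chi_K$ in different frequency regimes. We want them to agree near $r=r_+$ and $r=r_c$, up to terms that are either lower order in $\rho_\pm$ or supported in $\{r_3\le r\le R_3\}$. Since $\ell>L_0$, the pair $(\omega,\ell)$ lies in $\mathcal{F}_{\sharp,{\rm angular}}\cup\mathcal{F}_{\sharp,{\rm trap}}\cup\mathcal{F}_{\sharp,{\rm time}}$, and I treat each regime separately. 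The $(r^{-1}\Omega)^{-p}$-weighted terms for $v,w$ are already identical in all three estimates: they come from the same currents $j_+^{g_+}[v]$ and $j_\infty^{g_\infty}[w]$ with $g_+=\chi_{r_1}(r^{-1}\Omega)^{-p}$ and $g_\infty=\chi_{R_1}(r^{-1}\Omega)^{-p}$. So the $v',w'$-inhomogeneous terms on the right-hand side need no further work.

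\textbf{Step 1: matching the $f$-type multipliers.} In $\mathcal{F}_{\sharp,{\rm angular}}$ and $\mathcal{F}_{\sharp,{\rm trap}}$ the current $j_3^f$ uses $f=\mp(1-(r^{-1}\Omega)^{\delta})$ for $r\le r_3$ and $r\ge R_3$. This produces $\re(f'u\overline H+2fu'\overline H)$. I split it as follows:
\begin{itemize}
\item the pieces with $f=\mp1$ are exactly $\pm\int\re(2u'\overline H)$ on $(-\infty,(r_3)_*)$ and $((R_3)_*,\infty)$;
\item the correction $(r^{-1}\Omega)^\delta|u'||H|$ is kept as an error term;
\item $f'$ is bounded by $(r^{-1}\Omega)^\delta(\rho_\pm+\kappa_\pm)$; using Hardy/$v$-control this piece is absorbed, or else bounded by the displayed $(r^{-1}\Omega)^\delta|u'||H|$-type terms after rewriting $u$ through $v$ or $w$;
\item the middle region $[r_3,R_3]$ contributes $C\int(|u|+|u'|)|H|$.
\end{itemize}
In $\mathcal{F}_{\sharp,{\rm time}}$ I rerun Proposition \ref{prop:highfreqomega}, replacing $y$ by the same $f$ near the horizons. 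This is allowed because $\homega^2\gg\ell(\ell+1)$ makes the bulk term $f'(|u'|^2+\ldots)$ positive regardless of the precise $f$. The degeneration $(r^{-1}\Omega)^\delta$ is harmless there, since the $|u'|^2$ weight only needs to be $(\rho_\pm+\kappa_\pm)(r^{-1}\Omega)^\delta$.

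\textbf{Step 2: matching the $T/K$ currents.} In $\mathcal{F}_{\sharp,{\rm trap}}$, Proposition \ref{prop:trappedfreqest} already uses $\chi_{r_1}j^K+(1-\chi_{r_1})j^T$. In $\mathcal{F}_{\sharp,{\rm angular}}$ and $\mathcal{F}_{\sharp,{\rm time}}$ the localized currents are $\chi_Kj^K$ and $\chi_Tj^T$. I replace them by $\chi_{r_1}j^K+(1-\chi_{r_1})j^T$ and write the difference $(\chi_{r_1}-\chi_K)\tomega\re(iu\overline H)+(\chi_K-\chi_{r_1})\homega\re(iu\overline H)$. The first-order bulk pieces are supported in $r_1\le r\le r_\sharp+\eta$, away from both horizons, so they are compactly supported in $r_*$.
\begin{itemize}
\item In $\mathcal{F}_{\sharp,{\rm angular}}$ the extra bulk terms $(\homega-\tomega)\chi'_{r_1}\re(iu'\bar u)$ are absorbed as in \eqref{eq:errortermtrappedfreq}, using $\ell(\ell+1)\gg\q^2$ on $\supp\chi_{r_1}'$. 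This works because that region lies in $(r_2,R_2)$, where the left-hand side controls $\ell(\ell+1)|u|^2$ without degeneration.
\item The remaining inhomogeneous differences are compactly supported. They are bounded by $C\int_{(r_3)_*}^{(R_3)_*}(|u|+|u'|)|H|$ together with $\int_{(r_2)_*}^{(R_2)_*}|H|^2$, where the latter comes from Young's inequality on $\homega u\overline H$ against the $\homega^2|u|^2$ control.
\end{itemize}

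\textbf{Step 3: the left-hand side.} The factor $\mathbf 1_{\mathcal{F}_{\sharp,{\rm angular}}\cup\mathcal{F}_{\sharp,{\rm time}}}+(1-r_{\max}r^{-1})^2$ simply records which regime loses the degeneracy at $r_{\max}$; in the trapped regime the $h$-current cannot be used. The terms $\tomega^2|u|^2$ and $\homega^2|u|^2$ near the horizons are recovered from $|v'|^2,|w'|^2$ and $|u'|^2$ via $v'=e^{(\cdot)}(u'+i(\tomega+\q\rho_+)u)$, the analogous formula for $w'$, and the $\ell(\ell+1)|v|^2$ control.

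\textbf{Main obstacle.} I expect Step 1 to be the main obstacle: making the near-horizon multiplier identical across all three regimes. This requires checking in $\mathcal{F}_{\sharp,{\rm time}}$ that the $y$-current can be replaced by the $f$-current without losing positivity near $\rho_\pm=0$ when $\kappa_\pm\to0$. The $-fV'$ term has no sign there, because $V'\sim\mp\q\homega$ or $\mp\q\tomega$. I would absorb it using $\homega^2\gtrsim\beta^{-1}$ against the $f'\homega^2|u|^2$ bulk, possibly after adding a small multiple of the original $y$-current, whose inhomogeneous contribution then shows up in the same form on the right-hand side.
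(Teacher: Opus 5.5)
Your outline matches the paper's proof. You combine Corollary~\ref{cor:highfreqomegawrp} with Propositions~\ref{prop:srhighfreqest} and~\ref{prop:trappedfreqest}. You use that the near-horizon multipliers equal $\mp1$ up to $O((r^{-1}\Omega)^{\delta})$, which produces the common terms $\pm\int\re(2u'\overline{H})$. You handle the mismatched cut-offs in the $T/K$ currents by errors supported in $(r_2,R_2)$ plus Young's inequality. Finally you recover $\tomega^2|u|^2$ and $\homega^2|u|^2$ near the horizons from $|v'|^2$, $|w'|^2$ and $|u'|^2$. You deviate from the paper in two places, both unnecessarily.

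In $\mathcal{F}_{\sharp,{\rm time}}$ the paper keeps $y$. It already satisfies $y(r_+)=-1$ and $y(r_c)=1$, with $|y+1|\lesssim\rho_+^{2-s}$ near $r_+$ and $|y-1|\lesssim\rho_c^{2-s}$ near $r_c$. Both are $\lesssim(r^{-1}\Omega)^{\delta}$ for $\delta\le 2-s$. So $2y\re(u'\overline{H})$ splits exactly like $2f\re(u'\overline{H})$, and your ``main obstacle'' never arises. Your replacement of $y$ by $f$ would also work: $|V'|\lesssim r^{-2}\Omega^2(\ell(\ell+1)+|\homega|+1)$, while $f'\gtrsim\delta(r^{-1}\Omega)^{\delta}(\rho_\pm+\kappa_\pm)$ near the horizons. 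Hence $fV'|u|^2$ is a small multiple of $f'\homega^2|u|^2$ once $\beta$ is small.

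For the $T/K$ currents the paper does not change the currents at all; it only rewrites the inhomogeneous terms. The differences $\chi_K-\chi_{r_1}$ and $\chi_T-(1-\chi_{r_1})$ are supported in $[r_1,r_\sharp+\eta]\subset(r_2,R_2)$. The resulting $\epsilon^2\homega^2|u|^2$ is absorbed using the non-degenerate control available in $\mathcal{F}_{\sharp,{\rm angular}}\cup\mathcal{F}_{\sharp,{\rm time}}$. Switching to $\chi_{r_1}j^K+(1-\chi_{r_1})j^T$ is also fine: the boundary terms keep their sign, and the new bulk error has the bounded coefficient $\homega-\tomega$. But then there are no leftover inhomogeneous differences, so you need one mechanism, not both.

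One step, as you wrote it, does not work. The near-horizon term $f'\re(u\overline{H})$ cannot be absorbed into the left-hand side, since it is inhomogeneous. Nor can it be converted into $(r^{-1}\Omega)^{\delta}|u'||H|$ by rewriting $u$ through $v$ or $w$. It simply produces an extra error $C\int(r^{-1}\Omega)^{\delta}(\rho_\pm+\kappa_\pm)|u||H|\,dr_*$ near the horizons. The printed statement, and the paper's sketch, omit this term as well. It is harmless in the later Plancherel step: Young's inequality against the $|v|^2=|u|^2$ and $|w|^2$ bulk terms leaves an $|H|^2$ error with a favourable weight.
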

\begin{proof}
The proposition concerns the frequency ranges $\mathcal{F}_{\sharp,{\rm angular}}$, $\mathcal{F}_{\sharp,{\rm trap}}$ and $\mathcal{F}_{\sharp, {\rm time}}$. We combine Propositions \ref{prop:srhighfreqest}, \ref{prop:trappedfreqest} and Corollary \ref{cor:highfreqomegawrp} and use that the leading-order behaviours of $y$ and $f$ agree when $r\downarrow 1$ and $r\uparrow r_c$ and that $|y+1|, |f+1|\leq C (r^{-1}\Omega)^{\delta}$ near $r=r_+$ and $|y-1|,|f-1|\leq C (r^{-1}\Omega)^{\delta}$ near $r=r_c$.

Since the cut-off functions appearing in front of $j^T[u]$ and $j^K[u]$ are not identical in each frequency regime, but they do agree close to $r=r_+$ and $r=r_c$, we moreover apply Young's inequality to obtain:
\begin{equation*}
\int_{(r_2)_*}^{(R_2)_*}\mathbf{1}_{\mathcal{F}_{\sharp,{\rm angular}}\cup \mathcal{F}_{\sharp, {\rm time}}}|\homega| |u|\cdot |H|\,dr_*\leq\int_{(r_2)_*}^{(R_2)_*}\epsilon^2\mathbf{1}_{\mathcal{F}_{\sharp,{\rm angular}}\cup \mathcal{F}_{\sharp, {\rm time}}}\homega^2 |u|^2+\frac{1}{4}\epsilon^{-2} |H|^2\,dr_*,
\end{equation*}
with $\mathbf{1}$ the indicator function, and we absorb the $\homega^2|u|^2$ term to the left-hand side, using that there is no degenerate factor $(1-r_{\rm max}r^{-1})^2$ when $(\omega,\ell)\in \mathcal{F}_{\sharp,{\rm angular}}\cup \mathcal{F}_{\sharp, {\rm time}}$.

In the cases $\mathcal{F}_{\sharp,{\rm angular}}$ and $\mathcal{F}_{\sharp,{\rm trap}}$, we need an additional step to control $(\rho_++\kappa_+)(r^{-1}\Omega)^{\delta}\tomega^2|u|^2$ and $(\rho_c+\kappa_c)(r^{-1}\Omega)^{\delta}\homega^2|u|^2$. Note first that:
\begin{equation*}
	\tomega^2|u|^2\leq 2|u'+i\tomega u|^2+|u'|^2\leq 2|v'|^2+ C\q^2\rho_+^2|u|^2+|u'|^2.
\end{equation*}
Hence,
\begin{multline*}
	(\rho_++\kappa_+)(r^{-1}\Omega)^{\delta}\tomega^2|u|^2\leq (r^{-1}\Omega)^{p-\delta}(\rho_++\kappa_+)(r^{-1}\Omega)^{-p}|v'|^2+2(\rho_++\kappa_+)(r^{-1}\Omega)^{\delta}|u'|^2\\
	+C\q^2(\rho_++\kappa_+)(r^{-1}\Omega)^{p-\delta} (r^{-1}\Omega)^{2-p}|u|^2.
\end{multline*}
Similarly,
\begin{multline*}
	(\rho_c+\kappa_c)(r^{-1}\Omega)^{\delta}\homega^2|u|^2\leq (r^{1}\Omega)^{p-\delta}(\rho_c+\kappa_c)(r^{-1}\Omega)^{-p}|w'|^2+2(\rho_c+\kappa_c)(r^{-1}\Omega)^{\delta}|u'|^2\\
	+C\q^2(\rho_c+\kappa_c)(r^{-1}\Omega)^{p-\delta} (r^{-1}\Omega)^{2-p}|u|^2.
\end{multline*}

Since the factor $(r^{-1}\Omega)^{p-\delta}$ can be made arbitrarily small for $r_1-1$ and $R_2^{-1}-r_c^{-1}$ suitably small, we can absorb the right-hand sides of the above inequalities into the left-hand sides of the estimates in Proposition  \ref{prop:trappedfreqest} and Corollary \ref{cor:highfreqomegawrp}.
\end{proof}

\begin{proposition}
\label{prop:iedlowlfreq}
Let $|\q|\geq \q_0>0$, $E\geq 2$, $\delta>0$ and $\ell\leq L_0$ and write $u=u_{\ell\leq L_0}$. Let $\epsilon>0$ and $\max\{0,1-\re\beta_{\ell}\}<p<\min\{2,1+\re\beta_{\ell}\}+\epsilon$. For suitably small $\kappa_1>0$ there exist constants\\ $C,c=C(r_1,r_2,r_3,R_1,R_2,R_3,L_0,\q_0,\q_{1},p,\kappa_1), c(r_1,r_2,r_3,R_1,R_2,R_3,L_0,\q_0,\q_{1},p,\kappa_1)>0$, such that for all $\kappa_c\leq \kappa_1$ and $\kappa_+\geq \kappa_c$ with either $\kappa_+\leq \kappa_1$ or $\kappa_+>\kappa_1$ with the additional assumption of Condition \ref{cond:quantmodestab}:
\begin{multline}
  \label{eq:boundangfreq}
 c \int_{\R_{\omega}\cap \mathcal{F}_{\sharp,{\rm time}}\cap \mathcal{F}_{\flat}}\int_{(r_2)_*}^{(R_2)_*} |u'|^2+(\homega^2+\ell(\ell+1)+1)|u|^2\,dr_*d\omega\\
+c   \int_{\R_{\omega}\cap \mathcal{F}_{\sharp,{\rm time}}\cap \mathcal{F}_{\flat}}\int_{-\infty}^{r_*(r_1)} \left[\rho_+^{2\epsilon}(\rho_++\kappa_+)(r^{-1}\Omega)^{-p} |v'|^2+(1+\ell(\ell+1))\rho_+^{2\epsilon}(\rho_++\kappa_+)(r^{-1}\Omega)^{2-p}\right] |v|^2\\
+\rho_+^{2\epsilon}(\rho_++\kappa_+)(r^{-1}\Omega)^{\delta}(|u'|^2+\tomega^2|u|^2)\Big]\,dr_* d\omega \\
+c   \int_{\R_{\omega}\cap \mathcal{F}_{\sharp,{\rm time}}\cap \mathcal{F}_{\flat}} \int_{r_*(R_1)}^{\infty} \Big[\rho_c^{2\epsilon}(\rho_c+\kappa_c)(r^{-1}\Omega)^{-p} |w'|^2+(1+\ell(\ell+1)+\homega^2)\rho_c^{2\epsilon}(\rho_c+\kappa_c)(r^{-1}\Omega)^{2-p} |w|^2\\
+\rho_c^{2\epsilon}(\rho_c+\kappa_c)(r^{-1}\Omega)^{\delta}(\homega^2|u|^2+|u'|^2)\Big]\,dr_* d\omega\\
\leq  \int_{\R_{\omega}\cap \mathcal{F}_{\sharp,{\rm time}}\cap \mathcal{F}_{\flat}}\int_{-\infty}^{(r_3)*}\re(2u' \overline{H})\,dr_*d\omega- \int_{\R_{\omega}\cap \mathcal{F}_{\sharp,{\rm time}}\cap \mathcal{F}_{\flat}}\int_{(R_3)_*}^{\infty}\re(2u' \overline{H})\,dr_*d\omega\\
-  \int_{\R_{\omega}\cap \mathcal{F}_{\sharp,{\rm time}}\cap \mathcal{F}_{\flat}}\int_{\R}\re(2 \chi_{r_1} (r^{-1}\Omega)^{-p} v' e^{-i \tomega r_*-i\q\int_{0}^{r_*}\rho_+(r_*')\,dr_*'}\overline{H}) '\,dr_*d\omega\\
-  \int_{\R_{\omega}\cap \mathcal{F}_{\sharp,{\rm time}}\cap \mathcal{F}_{\flat}}\int_{\R}\re(2 \chi_{R_1} (\Omega r^{-1})^pw'  e^{i \homega r_*-i \q \int_{0}^{r_*}\rho_c(r_*')\,dr_*'}\overline{H})\,dr_*d\omega\\
+E \int_{\R_{\omega}\cap \mathcal{F}_{\sharp,{\rm time}}\cap \mathcal{F}_{\flat}}\int_{\R}\tomega \chi_{K} \re(i u \overline{H})\,dr_*d\omega+E \int_{\R_{\omega}\cap \mathcal{F}_{\sharp,{\rm time}}\cap \mathcal{F}_{\flat}}\int_{\R}\homega \chi_{T}\re(i u\overline{H})\,dr_*d\omega\\
+ C \int_{\R_{\omega}\cap \mathcal{F}_{\sharp,{\rm time}}\cap \mathcal{F}_{\flat}}\int_{(r_3)_*}^{(R_3)_*}(|u|+|u'|)\cdot |H|\,dr_*d\omega +C \int_{\R_{\omega}\cap \mathcal{F}_{\sharp,{\rm time}}\cap \mathcal{F}_{\flat}}\int_{-\infty}^{(r_3)*}(r^{-1}\Omega)^{\delta} |u'| |H|\,dr_*d\omega\\
+C \int_{\R_{\omega}\cap \mathcal{F}_{\sharp,{\rm time}}\cap \mathcal{F}_{\flat}}\int_{(R_3)*}^{\infty}(r^{-1}\Omega)^{\delta}|u'| |H|\,dr_*d\omega\\
+C\int_{0}^{\infty}\left[\int_{\Sigma_{\tau}} (r^{-1}\Omega)^{-p}\left(\Omega^2|rF_{\xi,\widetilde{A}}|^2+\rho_+^{1-2\epsilon}\rho_c^{1-2\epsilon}\xi^2r^{-2}|r^3G_{\widetilde{A}}|^2\right)\,d\sigma dr \right]\,d\tau.
 \end{multline}
\end{proposition}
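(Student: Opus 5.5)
The idea is to treat the two pieces of the bounded-angular-frequency regime $\ell\leq L_0$ separately and then glue them with multiplier currents. On the high-frequency piece $\mathcal{F}_{\sharp,{\rm time}}$ (with $\ell\leq L_0$) we use the microlocal currents. On the bounded piece $\mathcal{F}_{\flat}$ we use the Green's formula estimate, Theorem \ref{thm:boundfreqest}, and then upgrade it with the same multiplier currents. The upgrade matters because, after Plancherel, the right-hand side must have the same frequency-independent form in both regimes.

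\textbf{Step 1.} For $(\omega,\ell)\in\mathcal{F}_{\sharp,{\rm time}}$ we apply Corollary \ref{cor:highfreqomegawrp} directly, with $s=p$ and the same $y$.

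\textbf{Step 2.} For $(\omega,\ell)\in\mathcal{F}_{\flat}$ we integrate, for each fixed $\omega$, the identities of Lemma \ref{lm:bulkterms} for three currents:
\begin{itemize}
\item $j_2^y[u]$, with $y$ as in Proposition \ref{prop:highfreqomega};
\item the currents $j_+^{g_+}[v]$ and $j_\infty^{g_\infty}[w]$ of Propositions \ref{prop:rmin1pweightestfreq} and \ref{prop:rweightestfreq}, in the form \eqref{eq:rmin1pestwithbadterm}, \eqref{eq:rpestwithbadterm};
\item the currents $E\chi_{r_1}j^K[u]$ and $E(1-\chi_{r_1})j^T[u]$.
\end{itemize}
The boundary terms either have a sign or are cancelled, as in the proof of Proposition \ref{prop:highfreqomega}. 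The bulk terms generated where $\homega^2-V$ has no sign, together with the $\q^2$- and $\chi'$-error terms on the right-hand sides of \eqref{eq:rpestwithbadterm} and \eqref{eq:rmin1pestwithbadterm}, are all zeroth order in $u$ with weights that do not grow in $\omega$. This holds because $\homega$ and $\ell$ are bounded on $\mathcal{F}_\flat$.

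Those zeroth-order terms are then controlled after integrating in $\omega$ by \eqref{eq:mainboundfreqesthor}. That estimate bounds
$$\int\!\!\int \rho_+^{1+\epsilon}\rho_c^{1+\epsilon}(r^{-1}\Omega)^{-p}|u|^2 \,dr\,d\omega$$
by the $F_{\xi}$ and $G$ norms. This is exactly why the left-hand side of \eqref{eq:boundangfreq} carries the weights $\rho_\pm^{2\epsilon}$. We multiply the $g_\pm$ weights by an extra $\rho_\pm^{2\epsilon}$ so that the error terms coming from the weighted currents become comparable to $\rho_\pm^{1+2\epsilon}(r^{-1}\Omega)^{2-p}|u|^2$. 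In the relevant regions these are dominated by the weighted quantity controlled in Theorem \ref{thm:boundfreqest}, after a possible adjustment of $\epsilon$ and of $p$ within its allowed range.

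\textbf{Step 3.} The derivative terms are recovered as in the proof of Proposition \ref{prop:iedhighlfreq}:
\begin{itemize}
\item $\tomega^2|u|^2\lesssim |v'|^2+\q^2\rho_+^2|u|^2+|u'|^2$ near $r_+$;
\item the analogous bound with $w$ near $r_c$;
\item in $[r_2,R_2]$, a $j^h_1$-type current with bounded $h$, giving $|u'|^2$ in terms of $|u|^2$ and $|u||H|$.
\end{itemize}
We then sum the two frequency regimes. The multiplier functions were chosen to agree near $r_\pm$ up to $O((r^{-1}\Omega)^\delta)$, and the differences produce the $(r^{-1}\Omega)^\delta|u'||H|$ terms on the right-hand side. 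This yields \eqref{eq:boundangfreq}.

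The main obstacle is Step 2: checking that every error term produced by the weighted currents on $\mathcal{F}_\flat$ is genuinely controlled by \eqref{eq:mainboundfreqesthor}. The weight $\rho_\pm^{1+\epsilon}$ in that estimate is weaker near the horizons than the weight $(\rho_\pm+\kappa_\pm)(r^{-1}\Omega)^{2-p}$ that the currents naturally produce. This mismatch forces the $\rho^{2\epsilon}$ loss, and the first thing to verify is that Hardy-type absorption, as in the proof of Proposition \ref{prop:rweightestfreq}, closes with this loss uniformly in $\kappa_\pm$.
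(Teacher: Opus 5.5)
Your proposal is correct and follows essentially the paper's proof. On $\mathcal{F}_{\sharp,{\rm time}}$ you use Corollary~\ref{cor:highfreqomegawrp}; on $\mathcal{F}_{\flat}$ you use the current $j_2^y$ (the paper takes $s=2-\delta$), the weighted estimates \eqref{eq:rpestwithbadterm} and \eqref{eq:rmin1pestwithbadterm}, and localized $j^T$/$j^K$ currents for the boundary terms, absorbing all zeroth-order and cut-off errors via Theorem~\ref{thm:boundfreqest} with $\epsilon$ replaced by $2\epsilon$. Inserting $\rho_\pm^{2\epsilon}$ into $g_\pm$ is a tidy way to make the Young's-inequality errors of the weighted currents exactly $\rho_\pm^{1+2\epsilon}(r^{-1}\Omega)^{2-p}|u|^2$ (the paper leaves this bookkeeping implicit), and it also shows the Hardy step you flag as the main obstacle is unnecessary on $\mathcal{F}_{\flat}$, since those errors are controlled directly by Theorem~\ref{thm:boundfreqest}.
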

\begin{proof}
The proposition concerns the frequency ranges $\mathcal{F}_{\sharp,{\rm time}}$ and $\mathcal{F}_{\flat}$. In the case of $\mathcal{F}_{\sharp,{\rm time}}$, the estimate \eqref{eq:boundangfreq} follows directly from Corollary \ref{cor:highfreqomegawrp}.

We will consider the microlocal energy current $j_2^y[u]$ with $y(r)=(1-r_+r_c^{-1})^{-2+s}(\rho_+(r)^{2-s}-\rho_c(r)^{2-s})$.  Note that
\begin{equation*}
y'(r)=(2-s)r^{-2}\Omega^{2}(1-r_+r_c^{-1})^{-2+s}\left(\rho_c^{1-s}(r)+\rho_+^{1-s}(r)\right).
\end{equation*}

Integrating $(j_2^y[u])'$ then gives:
\begin{equation*}
\int_{\R}y'(|u'|^2+(\homega^2-V)|u|^2)-yV'|u|^2\,dr_*=2(\tomega^2|u|^2(-\infty)+\homega^2|u|^2(\infty))-\int_{\R}y\re(u' \overline{H})\,dr_*.
\end{equation*}

There exist constants $c,C>0$ such that for $s\geq 1$:
\begin{multline*}
y'(|u'|^2+(\homega^2-V)|u|^2)-yV'|u|^2\\
\geq c (2-s)r^{-2}\Omega^{2}\left(\rho_c^{1-s}+\rho_+^{1-s}\right)(|u'|^2+\homega^2\mathbf{1}_{r_*\geq 0}+\tomega^2\mathbf{1}_{r_*\leq 0}) |u|^2)\\
-Cr^{-2}\Omega^{2}(\rho_++\kappa_+)(\rho_c+\kappa_c)|u|^2.
\end{multline*}

Now let $s=2-\delta$. Since moreover $\ell\leq L_0$, we can apply Theorem \ref{thm:boundfreqest} where $\epsilon$ is replaced by $2\epsilon$, to obtain for $\max\{0,1-\re\beta_{\ell}\}+\frac{\epsilon}{2}<p<\min\{2,1+\re\beta_{\ell}\}+\frac{3\epsilon}{2}$:
\begin{multline*}
c\int_{\R} r^{-2}\Omega^{2}\left(\rho_c^{-1+\delta}+\rho_+^{-1+\delta}\right)(|u'|^2+\homega^2\mathbf{1}_{r_*\geq 0}+\tomega^2\mathbf{1}_{r_*\leq 0}) |u|^2)=-\int_{\R}y\re(u' \overline{H})\,dr_*\\
+C\int_{0}^{\infty}\left[\int_{\Sigma_{\tau}} (r^{-1}\Omega)^{-p-\epsilon}\Omega^2\left(|rF_{\xi,\widetilde{A}}|^2+\rho_+^{1-\epsilon}\rho_c^{1-\epsilon}(r^{-1}\Omega)^{-2}\xi^2|G_{\widetilde{A}}|^2\right)\,d\sigma dr \right]\,d\tau,
\end{multline*}
with $\mathbf{1}$ the indicator function.

The above estimate can then be combined with \eqref{eq:rpestwithbadterm}, \eqref{eq:rmin1pestwithbadterm} and Theorem \ref{thm:boundfreqest} (with $\epsilon$ replaced by $2\epsilon$) to control integrals of $|u|^2$, to obtain \eqref{eq:boundangfreq} with additionally the boundary term $2(\tomega^2|u|^2(-\infty)+\homega^2|u|^2(\infty))$ on the right-hand side. We then control the boundary terms by integrating $E (\chi_{T}j^T[u])'$ and  $E (\chi_{K}j^K[u])'$ and apply Theorem \ref{thm:boundfreqest} to absorb the non-trivial terms with a factor $\chi'_T$ and $\chi_K'$ and conclude \eqref{eq:boundangfreq}.
\end{proof}
\section{Integrated energy estimates: physical space analysis}
\label{sec:iedphy}
We will now prove Theorem \ref{thm:main}. The outline of this section is as follows:
\begin{itemize}
	\item We will first establish \eqref{eq:mainied} for future-integrable $\psi$ in \S \ref{sec:iedfutuint}.
	\item Then, we will drop the assumption of future integrability and \emph{derive} future-integrability of $\psi$ in the case $\kappa_+>0$ and $\kappa_c>0$ in \S \ref{sec:verifyfutureint} to conclude that \eqref{eq:mainied} holds for $\kappa_+>0$ and $\kappa_c>0$. 
\item Finally, we will show that \eqref{eq:mainied} also holds for $\kappa_+=0$ or $\kappa_c=0$ by making use of the fact that the constant $C$ in \eqref{eq:mainied} is uniform in $\kappa_+$ and $\kappa_c$. We then conclude the proof of Theorem \ref{thm:main}.
\end{itemize}

\subsection{Integrated energy estimates for future-integrable solutions}
\label{sec:iedfutuint}
We first apply Plancherel's theorem in combination with Propositions \ref{prop:iedhighlfreq} and \ref{prop:iedlowlfreq}.
\begin{proposition}
\label{prop:iedphysloword}
Let $\ell_0,L_0\in \N_0$ and assume that $\psi$ is a future-integrable solution to \eqref{eq:CSF} with $A=\widehat{A}=-Qr^{-1}d\tau$ that is supported on angular frequencies $\ell\geq \ell_0$. Let $\kappa_+\geq \kappa_c$ and $ \kappa_c\leq \kappa_1$, with $\kappa_1>0$. Assume moreover that $|\q|\geq \q_0>0$. Let $1<r_H<r_I<r_c$ and consider the constants $\delta, \epsilon>0$. Let $\max\{0,1-\re\beta_{\ell_0}\}<p<\min\{2,1+\re\beta_{\ell_0}\}+\epsilon$ if $\psi=\psi_{\leq L_0}$ and $1<p<\min\{2,1+\re\beta_{\ell_0}\}+\epsilon$ if $\psi=\psi_{> L_0}$.

Then, for $\kappa_1$ suitably small, there exist a constant $C=C(\epsilon,p, r_H,r_I,\h,\delta,\kappa_1)>0$, such that:
\begin{multline}
\label{eq:iedphysicalspace}
\int_{0}^{\infty}\int_{\Sigma_{\tau}\cap\{r_H\leq r\leq r_I\}} |Y_*\psi|^2+|\psi|^2+\upzeta(r)(|T\psi|^2+|\snabla_{\s^2}\psi|^2)\,d\sigma dr d\tau\\
+\int_{0}^{\infty}\int_{\Sigma_{\tau}\setminus\{r_H\leq r\leq r_I\}} \rho_+^{2\epsilon}\rho_c^{2\epsilon}(\rho_++\kappa_+)(\rho_c+\kappa_c)(r^{-1}\Omega)^{-p}\Omega^2|X\psi|^2\\
+(\rho_++\kappa_+)(\rho_c+\kappa_c)(r^{-1}\Omega)^{\delta}\Omega^{-2}(\mathbf{1}_{r\leq r_H}|K_+\psi|^2+\mathbf{1}_{r\geq r_I}|K_c\psi|^2)\\
+\rho_+^{2\epsilon}\rho_c^{2\epsilon}(\rho_++\kappa_+)(\rho_c+\kappa_c)(r^{-1}\Omega)^{2-p}\Omega^{-2}(|\psi|^2+|\snabla_{\s^2}\psi|^2)\,d\sigma drd\tau\\
\leq C\Bigg[\int_{\Sigma_0} \mathcal{E}_{p}[\psi]\,d\sigma dr\\
+\int_{0}^{\infty}\left[\int_{\Sigma_{\tau}} \max\{(r^{-1}\Omega)^{-p}\rho_+^{1-2\epsilon}\rho_c^{1-2\epsilon},\mathbf{1}_{0\leq \tau\leq 1}\}r^{-2}|r^3G_{\widehat{A}}|^2+\Omega^2(1-\upzeta)|T(rG_{\widehat{A}})|^2\,d\sigma dr \right]\,d\tau\Bigg].
\end{multline}
If $\psi=\psi_{\leq L_0}$, with $L_0\in \N_0$, then we can omit the factor $\upzeta$ in \eqref{eq:iedphysicalspace} at the expense of adding $L_0$-dependence in the constant $C$. If $\psi=\psi_{>L_0}$ for suitably large $L_0$, we can set $\epsilon=0$.
\end{proposition}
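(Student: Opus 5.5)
The plan is to sum the frequency-space estimates of Propositions \ref{prop:iedhighlfreq} (for $\ell>L_0$) and \ref{prop:iedlowlfreq} (for $\ell\leq L_0$) over $m$ and $\ell$, integrate in $\omega$, and convert both sides to physical space with Plancherel in $t$ and orthogonality of the $Y_{\ell m}$. The left-hand sides translate directly. Since $u$ is the Fourier transform of $\widetilde{\uppsi}$, and $v,w$ are those of the gauge transforms corresponding to $A=-Qr^{-1}dv$ and $A=-Qr^{-1}du$, we have $|v'|\sim|D_{\underline L}\uppsi|$ and $|w'|\sim|D_L\uppsi|$. Near $r=r_+$ we have $\Omega^2 X = Y_*-(1-\widetilde{\h}\Omega^2)T$, which lets us write $\Omega^{-2}|\underline L\psi|^2\gtrsim\Omega^2|X\psi|^2$ modulo $T$-terms. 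The terms $\tomega^2|u|^2$ and $\homega^2|u|^2$ become $|K_+\psi|^2$ and $|K_c\psi|^2$, and the $\ell(\ell+1)|u|^2$ terms become $|\snabla_{\s^2}\psi|^2$. Because $\uppsi=\xi\psi$ and $\xi=1$ for $\tau\geq1$, the $t$-integral over $\R$ controls $\int_1^\infty d\tau$. The region $0\leq\tau\leq1$ is handled by the local energy estimate of Theorem \ref{thm:localenest}, which also converts the $F_{\xi}$ contributions into $\int_{\Sigma_0}\mathcal{E}_p[\psi]$. The weight $\upzeta$ arises because in $\mathcal{F}_{\sharp,\rm trap}$ the left-hand side only carries $(1-r_{\max}r^{-1})^2(\homega^2+\ell(\ell+1))$. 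Since $|r_{\max}-r_\sharp|<\eta/2$, this is bounded below by $c\,\upzeta$.

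The main task is the right-hand side, which depends on $\omega$ through $\tomega$, $\homega$, $v$, $w$ and the cutoffs. The key step is to rewrite each product $\re(\cdot\,\overline H)$ in a frequency-independent form before applying Plancherel, pairing with $\mathfrak F(G)$ rather than bounding $u$ by $H$. The pieces are as follows.
\begin{itemize}
\item The terms $E\tomega\chi_{r_1}\re(iu\overline H)$ and $E\homega(1-\chi_{r_1})\re(iu\overline H)$ become, by Plancherel, $\re(\overline{K_+\uppsi}\,r\Omega^2 G)$ and $\re(\overline{K_c\uppsi}\,r\Omega^2 G)$ with the same cutoffs in every regime. In Proposition \ref{prop:iedlowlfreq} the cutoffs are $\chi_K,\chi_T$ instead. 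There I will insert the difference $(\chi_K-\chi_{r_1})\tomega$, which is supported in a compact $r$-range away from the horizons, and bound it by Cauchy--Schwarz with $|u|^2+|H|^2$ on $[r_3,R_3]$. This $|H|^2$ contribution is what is absorbed using $(1-\upzeta)|T(rG)|^2$ together with the $\omega$-weighted Plancherel.
\item The terms $\re(2u'\overline H)$ near the horizons and the $(r^{-1}\Omega)^{-p}v'$, $w'$ multiplier terms are already frequency-uniform. Plancherel turns them into $\re(\overline{\underline L\psi}\,rG)$ with weights $\chi_{r_1}(r^{-1}\Omega)^{-p}$. I will then apply Cauchy--Schwarz with weight $(\rho_++\kappa_+)\rho_+^{2\epsilon}$. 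This is exactly where the right-hand weight $(r^{-1}\Omega)^{-p}\rho_+^{1-2\epsilon}\rho_c^{1-2\epsilon}|r^3G|^2$ comes from.
\item The interior terms $(|u|+|u'|)|H|$ are handled by Cauchy--Schwarz and absorption.
\end{itemize}
The bounded-frequency Green's formula contributions already appear in physical-space form through Theorem \ref{thm:boundfreqest}.

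Finally, since $r^{-1}\Omega^{p}$ weights appear with $(r^{-1}\Omega)^{2-p}|v|^2$, I recover $|\psi|^2$ from $|v|^2=|u|^2$. The $\tau\leq1$ inhomogeneity terms are collected as $\mathbf 1_{0\leq\tau\leq1}|r^3G|^2$. The main obstacle, I expect, is the cross terms $\homega\,\re(iu\overline H)$ in the compact region near $r_\sharp$ at trapped frequencies, where $\homega|u|$ is not controlled on the left-hand side. Young's inequality then forces the loss of a $T$-derivative on $G$, which is why $(1-\upzeta)|T(rG)|^2$ appears. This must be done carefully so that absorption only uses the nondegenerate part of the left-hand side.
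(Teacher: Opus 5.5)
Your route is the paper's. You apply Plancherel to Propositions \ref{prop:iedhighlfreq} and \ref{prop:iedlowlfreq}, then estimate the resulting physical-space pairings term by term, and the $(1-\upzeta)|T(rG_{\widehat{A}})|^2$ term comes from the pairing $\homega\,\re(iu\overline{H})$ near $r_\sharp$ at trapped frequencies. There you use the Young inequality $\homega|u||H|\leq\mu|u|^2+\mu^{-1}\homega^2|H|^2$, applied to the $G$-part of $H$ only. This is the Plancherel counterpart of the paper's integration by parts in $T$ for $I_{2,d}$, which instead uses $\uppsi\to0$ as $\tau\to\infty$. One small correction: for the cut-off mismatch $\chi_K-\chi_{r_1}$, put $\tomega$ on $u$, which is controlled nondegenerately when $\ell\leq L_0$, not on $H$. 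The right-hand side contains $T(rG_{\widehat{A}})$ only with the factor $1-\upzeta$, so it cannot absorb an $\omega$-weighted $|H|^2$ on $[r_1,r_\sharp+\eta]$.

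\textbf{The gap.} The step that fails as written is your claim that the local energy estimate turns the $F_{\xi}$-contributions into $\int_{\Sigma_0}\mathcal{E}_p[\psi]$.
\begin{itemize}
\item \emph{Which term.} The problem is the pairing of $K_c\uppsi$ with $rF_{\widehat{A},\xi}$ near $r=r_c$, and of $K_+\uppsi$ near $r=r_+$. It comes from the $j^T$/$j^K$ terms. It also comes from the $\re(2u'\overline{H})$ terms, because $u'$ corresponds to $Y_*\widetilde{\uppsi}$, which has a $K_c$ (resp.\ $K_+$) component and not only an $L$ (resp.\ $\underline{L}$) one as you assume.
\item \emph{Why the direct bound fails.} By \eqref{eq:Ftimecutoff}, $rF_{\widehat{A},\xi}$ contains $2\dot{\xi}(1-\widetilde{\h}\Omega^2)X\psi$ with no $\Omega^2$ weight. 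But $\mathcal{E}_p$ controls $X\psi$ and $D_T\psi$ only with the weights $(\Omega^{-1}r)^p\Omega^2$ and $\h\widetilde{\h}$. For $p<2$ the product of these weights tends to zero at both ends of $\Sigma_\tau$. So $\int_0^1\int_{\Sigma_\tau}|K_c\psi||X\psi|\,d\sigma dr d\tau$ is not controlled by $\int_{\Sigma_0}\mathcal{E}_p[\psi]$.
\item \emph{Why absorption into the left-hand side does not help.} One could try absorbing the $K_c\psi$ factor against the left-hand-side weight $\rho_c^{2\epsilon}(\rho_c+\kappa_c)(r^{-1}\Omega)^{\delta}\Omega^{-2}$. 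Uniformly in $\kappa_\pm$, this requires $p\geq 1+\delta+2\epsilon$. That excludes, in particular, every $p\leq 1$ allowed for $\psi_{\leq L_0}$.
\end{itemize}

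\textbf{How the paper handles it.} This is the term $I_{2,a}$. The paper integrates by parts in $X$ and controls the boundary term on $\mathcal{C}^+$ (resp.\ $\mathcal{H}^+$) with a Hardy inequality and Theorem \ref{thm:localenest}. It then rewrites $XK_c\psi$ using \eqref{eq:maineqradfield} and integrates by parts once more. What remains are only $\mathcal{E}_0$-type terms on $\{0\leq\tau\leq 1\}$ and $|r^3G_{\widehat{A}}|^2$. The $K_c$-component of $Y_*\widetilde{\uppsi}$ in $I_3,I_4$ is reduced to the same computation.
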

\begin{proof}
Recall from \eqref{eq:uppsihattilde1}--\eqref{eq:Fhattilde2} that we can express
\begin{align*}
\widetilde{\uppsi}=&\:e^{i\q r_+^{-1}r_*-i\q\int_{0}^{r_*}\rho_+(r_*')\,dr_*'-i\q\int_{r_{\sharp}}^{r}r'^{-1}\widetilde{\h}(r')\,dr'} \uppsi,\\
G_{\widetilde{A}}=&\:e^{i\q r_+^{-1}r_*-i\q\int_{0}^{r_*}\rho_+(r_*')\,dr_*'-i\q\int_{r_{\sharp}}^{r}r'^{-1}\widetilde{\h}(r')\,dr'} G_{\widehat{A}},\\
F_{\widetilde{A},\xi}=&\:e^{i\q r_+^{-1}r_*-i\q\int_{0}^{r_*}\rho_+(r_*')\,dr_*'-i\q\int_{r_{\sharp}}^{r}r'^{-1}\widetilde{\h}(r')\,dr'} F_{\widehat{A}, \xi},
\end{align*}
or, alternatively,
\begin{align*}
\widetilde{\uppsi}=&\:e^{i\q r_c^{-1}r_*+i\q\int_{0}^{r_*}\rho_c(r_*')\,dr_*'-i\q\int_{2}^{r}r'^{-1}\Omega^{-2}(r')\h(r')\,dr'}  \uppsi,\\
G_{\widetilde{A}}=&\:e^{i\q r_c^{-1}r_*+i\q\int_{0}^{r_*}\rho_c(r_*')\,dr_*'-i\q\int_{2}^{r}r'^{-1}\Omega^{-2}(r')\h(r')\,dr'}   G_{\widehat{A}},\\
F_{\widetilde{A},\xi}=&\:e^{i\q r_c^{-1}r_*+i\q\int_{0}^{r_*}\rho_c(r_*')\,dr_*'-i\q\int_{2}^{r}r'^{-1}\Omega^{-2}(r')\h(r')\,dr'}  F_{\widehat{A}, \xi}.
\end{align*}

We first restrict to $\psi=\psi_{ \ell}$ and omit the subscript $\ell$ and derive estimates that are uniform in $\ell$, so that we can sum over $\ell$ at the end of the argument.

 We use that the Fourier transform operator: $\mathfrak{F}: L^2(\R)\to L^2(\R)$ and its inverse $\mathfrak{F}^{-1}: L^2(\R)\to L^2(\R)$ are unitary operators and hence preserve inner products in $L^2(\R)$. We first apply \eqref{eq:highangfreq} with $\ell>L_0$ together with Plancherel's theorem to obtain:
\begin{multline}
\label{eq:Plancherel}
c\int_{0}^{\infty}\int_{\Sigma_{\tau}\cap\{r_H\leq r\leq r_I\}} |Y_*\widetilde{\uppsi}|^2+\upzeta(r)(|T\uppsi|^2+|\snabla_{\s^2}\uppsi|^2)+|\uppsi|^2\,d\sigma dr d\tau\\
+c\int_{0}^{\infty}\int_{\Sigma_{\tau}\cap\{r\leq r_H\}} (\rho_++\kappa_+)(r^{-1}\Omega)^{-p}\Omega^{-2}|\underline{L}\uppsi|^2\\
+(\rho_++\kappa_+)(r^{-1}\Omega)^{\delta}\Omega^{-2}(|K_+\uppsi|^2+|Y_*\widetilde{\uppsi}|^2)+(\rho_++\kappa_+)(r^{-1}\Omega)^{2-p}\Omega^{-2}(|\uppsi|^2+|\snabla_{\s^2}\uppsi|^2)\,d\sigma drd\tau\\
+c\int_{0}^{\infty}\int_{\Sigma_{\tau}\cap\{r\geq r_I\}} (\rho_c+\kappa_c)(r^{-1}\Omega)^{-p}\Omega^{-2}|L\uppsi|^2\\
+(\rho_c+\kappa_c)(r^{-1}\Omega)^{\delta}\Omega^{-2}(|K_c\uppsi|^2+|Y_*\widetilde{\uppsi}|^2)+(\rho_c+\kappa_c)(r^{-1}\Omega)^{2-p}\Omega^{-2}(|\uppsi|^2+|\snabla_{\s^2}\uppsi|^2)\,d\sigma drd\tau\\
\leq \overbrace{E\int_0^{\infty}\int_{\Sigma_{\tau}}\chi_{r_1} \re(K_+ \uppsi  \overline{r F_{\widehat{A},\xi}+r\xi G_{\widehat{A}}})\,d\sigma dr d\tau}^{=:I_1}+\overbrace{E \int_0^{\infty}\int_{\Sigma_{\tau}} (1-\chi_{r_1})\re(K_c\uppsi  \overline{r F_{\widehat{A},\xi}+r\xi G_{\widehat{A}}})\,d\sigma dr d\tau}^{=:I_2}\\
+\overbrace{\int_0^{\infty}\int_{\Sigma_{\tau}\cap\{r\leq r_3\}}\Re(2Y_*\widetilde{\uppsi} \overline{rF_{\widetilde{A},\xi}+r \xi G_{\widetilde{A}}})\,d\sigma dr d\tau}^{=:I_3}-\overbrace{\int_0^{\infty}\int_{\Sigma_{\tau}\cap\{r\geq R_3\}}\Re(2Y_*\widetilde{\uppsi} \overline{rF_{\widetilde{A},\xi}+r\xi G_{\widetilde{A}}})\,d\sigma dr d\tau}^{=:I_4}\\
-\overbrace{\int_{0}^{\infty}\int_{\Sigma_{\tau}\cap \{r\leq r_1\}}\re(2 \chi_{r_1} (r^{-1}\Omega)^{-p} \underline{L}(e^{iq\int_{r_{\sharp}}^{r}r'^{-1}\widetilde{\h}(r')\,dr'}\uppsi)\cdot e^{-iq\int_{r_{\sharp}}^{r}r'^{-1}\widetilde{\h}(r')\,dr'}\overline{rF_{\widehat{A},\xi}+rG_{\widehat{A}}})\,d\sigma dr d\tau}^{=:I_5} \\
-\overbrace{ \int_{0}^{\infty}\int_{\Sigma_{\tau}\cap\{r\geq R_1\}}\re(2 (1-\chi_{r_1}) (r^{-1}\Omega)^{-p} L(e^{-iq\int_{r_{\sharp}}^{r}r'^{-1}\Omega^{-2}(r')\h(r')\,dr'} \uppsi)\cdot e^{iq\int_{r_{\sharp}}^{r}r'^{-1}\Omega^{-2}(r')\h(r')\,dr'} \overline{rF_{\widehat{A},\xi}+rG_{\widehat{A}}})\,d\sigma dr d\tau}^{=:I_6}\\\
+C\int_{0}^{\infty}\int_{\Sigma_{\tau}\cap\{r_H\leq r\leq r_I\}} \frac{1}{\eta}(|rF_{\widehat{A},\xi}|^2+|rG_{\widehat{A}}|^2)+\eta (|\widetilde{\uppsi}|^2+|Y_*\widetilde{\uppsi}|^2)\,d\sigma dr d\tau\\
+C\int_{0}^{\infty}\int_{\Sigma_{\tau}\setminus\{r_H\leq r\leq r_I\}} \frac{1}{\eta} \rho_+ \rho_c(r^{-1}\Omega)^{-2+\delta}\Omega^2(|rF_{\widehat{A},\xi}|^2+|rG_{\widehat{A}}|^2)\\
+\eta (\rho_++\kappa_+)(\rho_c+\kappa_c)(r^{-1}\Omega)^{\delta}\Omega^{-2} |Y_*\widetilde{\uppsi}|^2\,d\sigma dr d\tau,
\end{multline}
where $\eta>0$ is a constant that we will take to be suitably small and we used that $r^{-2}\Omega\sim \rho_+(\rho_++2\kappa_+)$ near $r=r_+$ and  $r^{-2}\Omega\sim \rho_c(\rho_c+2\kappa_c)$ near $r=r_c$.

Using that $F_{\widehat{A},\xi}$ is compactly supported in $\tau$ and applying \eqref{eq:Ftimecutoff} together with the local-in-time energy estimates from Theorem \ref{thm:localenest} (with $p=0$), we estimate:
\begin{multline*}
	\int_{0}^{\infty}\int_{\Sigma_{\tau}\cap\{r_H\leq r\leq r_I\}} |rF_{\widehat{A},\xi}|^2d\sigma drd\tau\lesssim \int_{0}^{1}\int_{\Sigma_{\tau}\cap\{r_H\leq r\leq r_I\}} \mathcal{E}_0[\psi]\,d\sigma drd\tau\lesssim \int_{\Sigma_{0}} \mathcal{E}_0[\psi]\,d\sigma dr\\
	+ \int_{0}^{1}\int_{\Sigma_{\tau}}r^{-2}|r^3G_{\widehat{A}}|^2\,d\sigma drd\tau.
\end{multline*}
We also estimate using Theorem \ref{thm:localenest}:
\begin{multline*}
	\int_{0}^{\infty}\int_{\Sigma_{\tau}\setminus\{r_H\leq r\leq r_I\}}  \rho_+ \rho_c(r^{-1}\Omega)^{-2+\delta}\Omega^2|rF_{\widehat{A},\xi}|^2\,d\sigma dr d\tau\\
\lesssim \int_{0}^{1}\int_{\Sigma_{\tau}\setminus\{r_H\leq r\leq r_I\}} (r^{-1}\Omega)^{-1+\delta}\Omega^2|X\psi|^2+\mathcal{E}_0[\psi]\,d\sigma dr d\tau\lesssim \int_{0}^{1}\int_{\Sigma_{\tau}\setminus\{r_H\leq r\leq r_I\}} \mathcal{E}_{1-\delta}[\psi]\,d\sigma dr d\tau\\
\lesssim \int_{\Sigma_{0}} \mathcal{E}_{1-\delta}[\psi]\,d\sigma dr+\int_{0}^{1}\int_{\Sigma_{\tau}}r^{-2}|r^3G_{\widehat{A}}|^2\,d\sigma drd\tau.
\end{multline*}

Furthermore, for suitably small $\eta>0$, we can absorb the $|Y_*\widetilde{\uppsi}|^2$ term on the RHS of \eqref{eq:Plancherel} into the LHS.

In the remainder of the proof, we will estimate the integrals $I_1$--$I_6$.

\textbf{Estimating $I_2$ and $I_1$:}\\ 
We will estimate $I_2$. The estimates for $I_1$ proceed entirely analogously, with the roles of $\rho_+$ and $\rho_c$ interchanged.

By \eqref{eq:Ftimecutoff} and the fact that $\dot{\xi},\ddot{\xi}$ are supported in $\{0\leq \tau\leq 1\}$ and $(1-\chi_{r_1})$ vanishes when $r\leq r_1$, we can split and estimate:
\begin{multline*}
	|I_2|\leq C\left|\overbrace{\int_0^{\infty}\int_{\Sigma_{\tau}\cap \{r\geq r_1\}} 2\dot{\xi} (1-\chi_{r_1}) \re\left(K_c(\xi\psi)\cdot \overline{X\psi}\right)\,d\sigma dr d\tau}^{=:I_{2,a}}\right|\\
	+C\overbrace{\int_0^{1} \int_{\Sigma_{\tau}\cap \{r\geq  r_1\}}|K_c(\xi \psi)|\left[|\h| |T\psi|+\left(|\h|+\left|\frac{d\h}{dr}\right|\right)|\psi|\right]\,d\sigma dr d\tau}^{I_{2,b}}\\
	+C\overbrace{\int_0^{\infty} \int_{(\Sigma_{\tau}\cap \{r\geq r_1\})\setminus \{ r_{\sharp}-\eta\leq r\leq r_{\sharp}+\eta\}}|K_c(\xi \psi)|\cdot |rG_{\widehat{A}}|\,d\sigma dr d\tau}^{=:I_{2,c}}+\overbrace{\left| \int_0^{\infty}\int_{\Sigma_{\tau}\cap \{ r_{\sharp}-\eta\leq r\leq r_{\sharp}+\eta\}}(1-\chi_{r_1})\re(K_c\uppsi  \overline{r\xi G_{\widehat{A}}})\,d\sigma dr d\tau\right|}^{=:I_{2,d}}.
\end{multline*}

Notice first that we can estimate $I_{2,d}$ by writing $K_c\psi=\upzeta K_c\psi+(1-\upzeta)K_c\psi$ and integrating by parts in the $T$-direction to estimate the terms with a factor $1-\upzeta$, using that $\lim_{\tau\to \infty}|\uppsi|^2(\tau,r,\theta,\varphi)=0$, by the sufficient integrability assumption. We obtain:
We obtain:
\begin{multline*}
	I_{2,d}\leq \mu  \int_0^{\infty}\int_{\Sigma_{\tau}\cap \{ r_{\sharp}-\eta\leq r\leq r_{\sharp}+\eta\}}\xi(1-\upzeta)|\uppsi|^2+\xi\upzeta |K_c\psi|^2\,d\sigma dr d\tau\\
	+C \mu^{-1} \int_0^{\infty}\int_{\Sigma_{\tau}\cap \{ r_{\sharp}-\eta\leq r\leq r_{\sharp}+\eta\}}\xi \upzeta |rG_{\widehat{A}}|^2+\xi (1-\upzeta)|TG_{\widehat{A}}|^2\,d\sigma dr d\tau.
\end{multline*}

Now consider $I_{2,b}$. By Theorem \ref{thm:localenest} with $p=0$, we can estimate:
\begin{multline*}
	I_{2,b}\leq \int_0^1\int_{\Sigma_{\tau}\cap \{r\geq  r_1\}} r^{-2}\Omega^2(|T\psi|^2+|\psi|^2)+r^{-2}(\rho_c+\kappa_c)|\psi|^2\,d\sigma dr d\tau\\
	\lesssim \int_0^1\int_{\Sigma_{\tau}\cap \{r\geq  r_1\}} \mathcal{E}_0[\psi]\,d\sigma dr d\tau\lesssim\int_{\Sigma_{0}} \mathcal{E}_0[\psi]\,d\sigma dr+\int_{0}^{1}\int_{\Sigma_{\tau}}r^{-2}|r^3G_{\widehat{A}}|^2\,d\sigma drd\tau.
\end{multline*}
Consider $I_{2,c}$. We can estimate
\begin{multline*}
I_{2,c}\leq  \int_0^{\infty}\int_{(\Sigma_{\tau}\cap \{r\geq r_1\})\setminus \{ r_{\sharp}-\eta\leq r\leq r_{\sharp}+\eta\}} \mu\xi^2(r^{-1}\Omega)^{\delta}(\rho_c+\kappa_c) \Omega^{-2}|K_c\psi|^2+|\dot{\xi}| r^{-2}|\psi|^2\,d\sigma dr d\tau\\
+C\int_0^{\infty}\int_{(\Sigma_{\tau}\cap \{r\geq r_1\})\setminus \{ r_{\sharp}-\eta\leq r\leq r_{\sharp}+\eta\}} \mu^{-1}(r^{-1}\Omega)^{-\delta}\rho_c|rG_{\widehat{A}}|^2+|\dot\xi|r^2|rG_{\widehat{A}}|^2\,d\sigma dr d\tau\\
\leq \eta \int_0^{\infty}\int_{(\Sigma_{\tau}\cap \{r\geq r_1\})\setminus \{ r_{\sharp}-\eta\leq r\leq r_{\sharp}+\eta\}} \xi^2(r^{-2}\Omega^2)^{\delta}(\rho_c+\kappa_c) \Omega^{-2}|K_c\psi|^2\,d\sigma drd\tau+\int_0^1\int_{\Sigma_{\tau}\cap \{r\geq  r_H\}} \mathcal{E}_0[\psi]\,d\sigma drd\tau\\
+ C\eta^{-1} \int_0^{\infty}\int_{(\Sigma_{\tau}\cap \{r\geq r_1\})\setminus \{ r_{\sharp}-\eta\leq r\leq r_{\sharp}+\eta\}} (r^{-1}\Omega)^{-\delta}\rho_c|rG_{\widehat{A}}|^2\,d\sigma dr d\tau+C\int_0^1\int_{\Sigma_{\tau}\cap \{r\geq  r_H\}}|r^3G_{\widehat{A}}|^2\,d\sigma d\rho_cd\tau.
	\end{multline*}
The integral of $\mathcal{E}_0[\psi]$ can be estimated as above, by applying Theorem \ref{thm:localenest} with $p=0$.

We are left with estimating $I_{2,a}$. We first integrate by parts in the $X$-direction to obtain:
\begin{multline}
\label{eq:idI2a}
	I_{2,a}=-\int_0^{\infty}\int_{\Sigma_{\tau}\cap \{r\geq  r_1\}} 2\dot{\xi} \chi_T \re\left(XK_c(\xi\psi)\cdot \overline{\psi}\right)\,d\sigma dr d\tau\\
	-\int_0^{\infty}\int_{\Sigma_{\tau}\cap \{r\geq  r_1\}} 2\dot{\xi} \frac{d\chi_T}{dr} \re\left(K_c(\xi\psi)\cdot \overline{\psi}\right)\,d\sigma dr d\tau+\int_{\mathcal{C}^+}2\dot{\xi} \re\left(K_c(\xi\psi)\cdot \overline{\psi}\right)\,d\sigma d\tau.
\end{multline}
The second term on the RHS of \eqref{eq:idI2a} is supported in $\{r_{\sharp}-\eta\leq r\leq r_{\sharp}+\eta\}\cap\{0\leq \tau\leq 1$, so its norm can immediately using Theorem \ref{thm:localenest} with $p=0$. The third term on the RHS of \eqref{eq:idI2a} can be estimated by
\begin{equation*}
	\int_{\mathcal{C}^+}\dot{\xi}(|\psi|^2+|K_c\psi|^2)\,d\sigma d\tau,
\end{equation*}
which can be estimated via Theorem \ref{thm:localenest} with $p=0$ after applying a Hardy inequality, making use of the compact support of $\dot{\xi}$.

We estimate the third term on the RHS of \eqref{eq:idI2a} as follows:
\begin{multline*}
	\left|\int_0^{\infty}\int_{\Sigma_{\tau}\cap \{r\geq  r_1\}} 2\dot{\xi} \chi_T \re\left(XK_c(\xi\psi)\cdot \overline{\psi}\right)\,d\sigma dr d\tau\right|\leq \left|\int_0^{\infty}\int_{\Sigma_{\tau}\cap \{r\geq  r_1\}} T(\xi^2) \chi_T \re\left(XK_c\psi\cdot \overline{\psi}\right)\,d\sigma dr d\tau\right|\\
	+\left|\int_0^{\infty}\int_{\Sigma_{\tau}\cap \{r\geq  r_1\}} \dot{\xi}^2 \chi_T X(|\psi|^2)\,d\sigma dr d\tau\right|\\
	\leq \left|\int_0^{\infty}\int_{\Sigma_{\tau}\cap \{r\geq  r_1\}} T(\xi^2) \chi_T \re\left(XK_c\psi\cdot \overline{\psi}\right)\,d\sigma dr d\tau\right|+C\left|\int_{\mathcal{C}^+} \dot{\xi}^2 |\psi|^2\,d\sigma d\tau\right|\\
	+C\int_0^1\int_{\Sigma_{\tau}\cap \{ r_{\sharp}-\eta\leq r\leq r_{\sharp}+\eta\}}\mathcal{E}_0[\psi]\,d\sigma dr d\tau.
\end{multline*}
The last two terms on the very RHS can be estimated as above. We estimate the first term by applying \eqref{eq:maineqradfield} to obtain:
\begin{multline*}
	XK_c\psi=O_{\infty}(\rho_c^0)X(\Omega^2 X\psi)+O_{\infty}(\rho_c^0)r^{-2}\slashed{\Delta}_{\s^2}\psi+O_{\infty}(\rho_c^0)r^{-2}\psi+(\rho_c+\kappa_c)O_{\infty}(\rho_c)K_c^2\psi\\
+O_{\infty}(\rho_c)X\psi+O_{\infty}(\rho_c^0)rG_{\widehat{A}}.
\end{multline*}
Using the above identity and integrating by parts to deal with the second-order derivatives, we conclude that: 
\begin{equation*}
	|I_{2,a}|\lesssim\int_{\Sigma_{0}} \mathcal{E}_0[\psi]\,d\sigma dr +\int_0^1\int_{\Sigma_{\tau}\cap \{r\geq  r_H\}}|r^3G_{\widehat{A}}|^2\,d\sigma d\rho_c d\tau.
\end{equation*}
Putting the above together, we conclude that:
\begin{multline*}
	|I_{2}|\leq  \eta \int_0^{\infty}\int_{\Sigma_{\tau}\cap \{r\geq  r_1\}} \xi^2(r^{-2}\Omega^2)^{\delta}(\rho_c+\kappa_c) \Omega^{-2}|K_c\psi|^2\,d\sigma drd\tau+C\int_{\Sigma_{0}} \mathcal{E}_0[\psi]\,d\sigma dr \\
+C\eta^{-1} \int_0^{\infty}\int_{\Sigma_{\tau}} (r^{-2}\Omega^2)^{-\delta}\rho_c|rG_{\widehat{A}}|^2\,d\sigma dr d\tau	C\int_0^1\int_{\Sigma_{\tau}}|r^3G_{\widehat{A}}|^2\,d\sigma d\rho_c d\tau.
\end{multline*}
The first term on the RHS can be absorbed into the LHS of \eqref{eq:Plancherel} for suitably small $\eta$.

The integral $I_1$ can be estimated entirely analogously, with $\rho_+$ taking on the role of $\rho_c$.

\textbf{Estimating $I_3$ and $I_4$:}\\
We will estimate $I_4$. The estimate for $I_3$ will proceed analogously, with the roles of $\rho_c$ and $\rho_+$ interchanged. Note first that:
\begin{equation*}
	Y_*\widetilde{\uppsi} \overline{rF_{\widetilde{A},\xi}+\xi rG_{\widetilde{A}} }=K_c\uppsi\cdot \overline{rF_{\widehat{A},\xi}+\xi rG_{\widehat{A}} }-i\q (\rho_c-r^{-1}\h)\uppsi \overline{rF_{\widehat{A},\xi}+\xi rG_{\widehat{A}} }-2L\uppsi \overline{rF_{\widehat{A},\xi}+\xi rG_{\widehat{A}} }.
\end{equation*}
The contribution of the first term to $I_4$ can be estimated in the same way as $I_2$. We estimate the terms involving $rF_{\widehat{A},\xi}$ by applying Young's inequality and \eqref{eq:Ftimecutoff}:
\begin{multline*}
	\left|\int_{\R}\int_{\Sigma_{\tau}\cap\{r\geq R_3\}}\re\left((-i\q (\rho_c-r^{-1}\h)\uppsi -2L \uppsi)\overline{rF_{\widehat{A},\xi}}\right)\,d\sigma dr d\tau\right|\\
	\leq C\Bigg|\int_0^{\infty}\int_{\Sigma_{\tau}\cap\{r\geq R_3\}}\dot{\xi}\xi (\Omega^2|X\psi|+\rho_c|\psi|)\left(\rho_c|X\psi|+\rho_c^3|T\psi|+(\rho_c+\kappa_c)\rho_c|\psi| \right)\,d\sigma dr d\tau\Bigg|\\
\leq C\int_0^{1}\int_{\Sigma_{\tau}\cap\{r\geq R_3\}}\mathcal{E}_0[\psi]\,d\sigma drd\tau.
\end{multline*}
Furthermore,
\begin{multline*}
	\left|\int_{\R}\int_{\Sigma_{\tau}\cap\{r\geq R_3\}}\re\left((-i\q (\rho_c-r^{-1}\h)\uppsi -2L \uppsi) \overline{r\xi G_{\widehat{A}}}\right)\,d\sigma dr d\tau\right|\\
	\leq\int_0^{\infty}\int_{\Sigma_{\tau}\cap\{r\geq R_3\}} \eta (r^{-1}\Omega)^{2-p}\Omega^{-2} ((\rho_c+\kappa_c)|\uppsi|^2+(\rho_c+\kappa_c)^{-1}|L\uppsi|^2)+C\eta^{-1}\rho_c^2 (r^{-1}\Omega)^{p-2}|rG_{\widehat{A}}|^2\,d\sigma drd\tau\\
	\leq \int_0^{\infty}\int_{\Sigma_{\tau}\cap\{r\geq R_3\}} \eta (r^{-1}\Omega)^{2-p}\Omega^{-2} (\rho_c+\kappa_c)|\uppsi|^2+\eta \rho_c (r^{-1}\Omega)^{-p}\Omega^{-2}|L\uppsi|^2+C\eta^{-1}\rho_c^2 (r^{-1}\Omega)^{p-2}|rG_{\widehat{A}}|^2\,d\sigma drd\tau.
\end{multline*}
We conclude that
\begin{multline*}
	|I_4|\leq  \eta \int_0^{\infty}\int_{\Sigma_{\tau}\cap\{r\geq R_3\}}  (r^{-1}\Omega)^{2-p}\Omega^{-2} (\rho_c+\kappa_c)|\uppsi|^2+ \rho_c (r^{-1}\Omega)^{-p}\Omega^{-2}|L\uppsi|^2\\
	+C\int_{\Sigma_{0}}\mathcal{E}_0[\psi]\,d\sigma dr+C\eta^{-1}\int_0^{\infty}\int_{\Sigma_{\tau}\cap\{r\geq R_3\}}\rho_c^2 (r^{-1}\Omega)^{p-2}|rG_{\widehat{A}}|^2\,d\sigma dr d\tau.
\end{multline*}
The first integral on the RHS can be absorbed into the LHS of \eqref{eq:Plancherel} for suitably small $\eta$.

The estimates for $|I_3|$ follow analogously.
 
\textbf{Estimating $I_5$ and $I_6$:}
Consider $I_6$. The estimates for $I_5$ follow analogously. We write:
\begin{multline*}
(r^{-1}\Omega)^{-p} L(e^{i\q\int_{r_{\sharp}}^{r}r'^{-1}\Omega^{-2}(r')\h(r')\,dr'} \uppsi)\cdot e^{-i\q\int_{r_{\sharp}}^{r}r'^{-1}\Omega^{-2}(r')\h(r')\,dr'} \overline{rF_{\widehat{A},\xi}+rG_{\widehat{A}}}=(r^{-1}\Omega)^{-p} L\uppsi \cdot\overline{rF_{\widehat{A},\xi}+rG_{\widehat{A}}}\\
+i\q r^{-1}\h (r^{-1}\Omega)^{-p}\uppsi\cdot \overline{rF_{\widehat{A},\xi}+rG_{\widehat{A}}}.
\end{multline*}
We estimate the terms involving $rF_{\widehat{A},\xi}$ by applying Young's inequality and \eqref{eq:Ftimecutoff}:
\begin{multline*}
	\left|\int_{\R}\int_{\Sigma_{\tau}\cap\{r\geq R_1\}}(r^{-1}\Omega)^{-p}\re\left((iq r^{-1}\h\uppsi -L \uppsi)\overline{rF_{\widehat{A},\xi}}\right)\,d\sigma dr d\tau\right|\\
	\leq C\Bigg|\int_0^{\infty}\int_{\Sigma_{\tau}\cap\{r\geq R_3\}}\dot{\xi}\xi (r^{-1}\Omega)^{-p}(\Omega^2|X\psi|+r^{-1}r^{-2}\Omega^2|\psi|)\left(\rho_c|X\psi|+\rho_c^3|T\psi|+r^{-2}\Omega^2|\psi| \right)\,d\sigma dr d\tau\Bigg|\\
\leq C\int_0^{1}\int_{\Sigma_{\tau}\cap\{r\geq R_1\}}\rho_c\mathcal{E}_p[\psi]\,d\sigma drd\tau.
\end{multline*}

Furthermore,
\begin{multline*}
	\left|\int_{\R}\int_{\Sigma_{\tau}\cap\{r\geq R_1\}}(r^{-1}\Omega)^{-p}\re\left((-i\q r^{-1}\h\uppsi -L \uppsi)\overline{\xi rG_{\widehat{A}}}\right)\,d\sigma dr d\tau\right|\\
	\leq\int_0^{\infty}\int_{\Sigma_{\tau}\cap\{r\geq R_3\}} \mu (r^{-1}\Omega)^{-p} (\Omega^{-2}(\rho_c+\kappa_c)^3(r^{-1}\Omega)^4|\uppsi|^2+(\rho_c+\kappa_c)\Omega^{-2}|L\uppsi|^2)\\
	+C\mu^{-1}(\rho_c+\kappa_c)^{-1}\Omega^2(r^{-1}\Omega)^{-p}|rG_{\widehat{A}}|^2\,d\sigma drd\tau\\
	\leq \int_0^{\infty}\int_{\Sigma_{\tau}\cap\{r\geq R_3\}} \mu (r^{-1}\Omega)^{4-p}\Omega^{-2} (\rho_c+\kappa_c)^3|\uppsi|^2+\mu (\rho_c+\kappa_c) (r^{-1}\Omega)^{-p}\Omega^{-2}|L\uppsi|^2\\
	+C\mu^{-1}\rho_c(r^{-1}\Omega)^{-p}r^{2}|rG_{\widehat{A}}|^2\,d\sigma drd\tau.
\end{multline*}
Therefore,
\begin{multline*}
	|I_6|\leq \mu \int_0^{\infty}\int_{\Sigma_{\tau}\cap\{r\geq R_3\}}  (r^{-1}\Omega)^{4-p}\Omega^{-2} (\rho_c+\kappa_c)^3|\uppsi|^2+ (\rho_c+\kappa_c) (r^{-1}\Omega)^{-p}\Omega^{-2}|L\uppsi|^2\,d\sigma dr d\tau\\
+C\int_{\Sigma_0}\mathcal{E}_{p-1}[\psi]\,d\sigma dr d\tau+C\mu^{-1}\int_0^{\infty}\int_{\Sigma_{\tau}\cap\{r\geq R_3\}}\eta^{-1}\rho_c(r^{-1}\Omega)^{-p}r^{2}|rG_{\widehat{A}}|^2\,d\sigma drd\tau.
\end{multline*}
The estimate for $|I_5|$ proceeds analogously

We conclude that \eqref{eq:iedphysicalspace} holds for $\psi_{> L_0}$.

The case $\psi_{\leq L_0}$ proceeds entirely analogously, starting from Proposition \ref{prop:iedlowlfreq} if we additionally assume that $|q|>q_0$, with the key difference being the range of $p$, $\max\{1-\re \beta_{\ell}-\epsilon,\epsilon\}<  p<\min\{1+\re \beta_{\ell},2\}$, and the addition of the following term on the RHS of the final estimate:
\begin{equation*}
C\int_{0}^{\infty}\left[\int_{\Sigma_{\tau}} (r^{-1}\Omega)^{-p-\epsilon}\left(\Omega^2|rF_{\xi,\widetilde{A}}|^2+\rho_+^{1-\epsilon}\rho_c^{1-\epsilon}\xi^2r^{-2}|r^3G_{\widetilde{A}}|^2\right)\,d\sigma dr \right]\,d\tau.
\end{equation*}
We estimate via Theorem \ref{thm:localenest}:
\begin{equation*}
	\int_{0}^{\infty}\int_{\Sigma_{\tau}} (r^{-1}\Omega)^{-p-\epsilon}\Omega^2|rF_{\xi,\widetilde{A}}|^2\,d\sigma drd\tau\lesssim \int_0^1 \mathcal{E}_{p+\epsilon}[\psi]\,d\sigma dr d\tau\lesssim \int_{\Sigma_0}  \mathcal{E}_{p+\epsilon}[\psi]\,d\sigma dr. \qedhere
\end{equation*}
Furthermore, there is no need for $\upzeta$ when considering $\psi_{\leq L_0}$.
\end{proof}

\begin{corollary}
\label{cor:iedphysicalspaceho}
Let $\ell_0,N\in \N_0$, $\epsilon>0$ and $\max\{0,1-\re\beta_{\ell_0}\}<p<\min\{2,1+\re\beta_{\ell_0}\}+\epsilon$. Then there exist a constant $C=C(\epsilon,p, r_H,r_I,\h,N)>0$, such that for $\psi=\psi_{\geq \ell_0}$:
\begin{multline}
\label{eq:iedphysicalspaceho}
\sum_{k_1+k_2+k_3\leq N}\int_{0}^{\infty}\int_{\Sigma_{\tau}\cap\{r_H\leq r\leq r_I\}} |\snabla_{\s^2}^{k_1}Y_*^{1+k_2}T^{k_3}\psi|^2+|\snabla_{\s^2}^{k_1}Y_*^{k_2}T^{k_3}\psi|^2\\
+\upzeta(r)(|\snabla_{\s^2}^{k_1}Y_*^{k_2}T^{k_3+1}\psi|^2+|\snabla_{\s^2}^{k_1+1}Y_*^{k_2}T^{k_3}\psi|^2)\,d\sigma dr d\tau\,d\sigma dr d\tau\\
\leq C \sum_{k\leq N}\int_{\Sigma_0} \mathcal{E}_{p}[T^{k}\psi]\,d\sigma dr+C\int_{0}^{\infty}\Bigg[\int_{\Sigma_{\tau}} \max\{(r^{-1}\Omega)^{-p}\rho_+^{1-2\epsilon}\rho_c^{1-2\epsilon},\mathbf{1}_{0\leq \tau\leq 1}\}r^{-2}|r^3T^kG_{\widehat{A}}|^2\\
+\Omega^2(1-\upzeta)|T^{k+1}(rG_{\widehat{A}})|^2\,d\sigma dr \Bigg]\,d\tau.
\end{multline}
If $\psi=\psi_{\leq L_0}$, with $L_0\in \N_0$, then we can omit the factor $\upzeta$ in \eqref{eq:iedphysicalspaceho} at the expense of adding $L_0$-dependence in the constant $C$.
\end{corollary}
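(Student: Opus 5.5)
We deduce \eqref{eq:iedphysicalspaceho} from Proposition \ref{prop:iedphysloword} in two steps: commutation with $T$, and then elliptic estimates in $\{r_H\leq r\leq r_I\}$ to recover the $Y_*$- and angular derivatives.

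\textbf{Step 1: commuting with $T$.} Since $T$ is Killing and the gauge $\widehat{A}=-Qr^{-1}d\tau$ is $T$-invariant, $D_T$ commutes with the operator in \eqref{eq:CSF}. Hence $T^{k}\psi$ is again a future-integrable solution, with inhomogeneity $T^kG_{\widehat{A}}$, and it is still supported on $\ell\geq \ell_0$. Spherical harmonic projections also commute with $T$. Applying \eqref{eq:iedphysicalspace} to $T^{k}\psi$ for each $k\leq N$ therefore controls, on $\{r_H\leq r\leq r_I\}$, the quantities $|Y_*T^k\psi|^2+|T^k\psi|^2+\upzeta(|T^{k+1}\psi|^2+|\snabla_{\s^2}T^k\psi|^2)$. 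The bound is by $\int_{\Sigma_0}\mathcal{E}_p[T^k\psi]$ plus exactly the inhomogeneous terms on the right-hand side of \eqref{eq:iedphysicalspaceho}. This gives the case $k_1=0$, $k_2\leq 1$.

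\textbf{Step 2: elliptic estimates in $r$ and on $\s^2$.} Higher $Y_*$- and $\snabla_{\s^2}$-derivatives come from the equation itself, via \eqref{eq:maineqradfieldtilde}. In $\{r_H\leq r\leq r_I\}$ we have $\Omega^2\sim 1$, and the equation expresses $Y_*^2\psi+r^{-2}\Omega^2\slashed{\Delta}_{\s^2}\psi$ in terms of $K_c^2\psi$, $K_c\psi$, $\psi$ and $r\Omega^2 G$. The gauge factors relating $\widetilde{A}$ and $\widehat{A}$ are smooth and bounded in this compact $r$-region, so $Y_*$ and $D_{Y_*}$ differ there by bounded zeroth-order terms.

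To avoid edge terms, we apply cutoffs slightly inside a larger interval $[r_H',r_I']$ and prove Step 1 on that larger interval. We then multiply the equation by $\overline{Y_*^2\psi}$ and by $\overline{\slashed{\Delta}\psi}$ and integrate by parts in $r_*$ and on $\s^2$ against a cutoff. This bounds $\int |Y_*^2\psi|^2+|\snabla_{\s^2}Y_*\psi|^2+|\slashed{\Delta}\psi|^2$ by terms with two $T$-derivatives, terms with fewer derivatives, and $|rG|^2$. We then iterate by induction on $k_1+k_2$, commuting the equation with $\snabla_{\s^2}$ (an angular Killing commutator, so no error terms arise) and with $Y_*$, whose commutator generates only lower-order terms with smooth coefficients. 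Each step trades two spatial derivatives for two $T$-derivatives, which are controlled by Step 1 applied to $T^{k+2}\psi$.

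\textbf{Main obstacle: the photon sphere.} Away from $r_\sharp$ the elliptic step closes. Near $r_\sharp$, however, Step 1 gives only $\upzeta$-degenerate control of $T^{k+1}\psi$ and $\snabla T^k\psi$, while the elliptic estimate for $Y_*^2T^k\psi$ wants $|T^{k+2}\psi|^2$ without degeneration. We resolve this by keeping the $\upzeta$ factor exactly where the statement does. The non-degenerate terms in \eqref{eq:iedphysicalspaceho} each carry at most one derivative beyond those controlled without $\upzeta$, namely $Y_*^{1+k_2}$. We estimate these through the identity for $Y_*^2$ on $\{\upzeta=1\}$, and on $\{\upzeta<1\}$ through the non-degenerate $|Y_*T^{k}\psi|$ and $|T^k\psi|$ bounds from Step 1, applied with $k$ up to $N$. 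Any $T^{k+2}$ term appearing near $r_\sharp$ is multiplied by $\upzeta$ coming from the cutoff localisation. The source term $(1-\upzeta)T^{k+1}G$ is kept on the right-hand side, as in Step 1, via integration by parts in $T$.

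For $\psi=\psi_{\leq L_0}$ there is no $\upzeta$ in \eqref{eq:iedphysicalspace}, so the same argument runs without degeneration, at the cost of $L_0$-dependence in the constant.
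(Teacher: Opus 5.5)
Your overall route matches the paper's. You commute with $T$, apply \eqref{eq:iedphysicalspace} to $T^k\psi$ for $k\le N$, and recover the remaining derivatives by elliptic estimates in $\{r_H\le r\le r_I\}$. (Note that it is $T=\partial_\tau$, not $D_T=T+i\q r^{-1}$, that commutes with the operator in the gauge $\widehat{A}$.) There is, however, a genuine gap in the elliptic step at top order near $r_\sharp$.

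\textbf{Where the elliptic step fails.} You bound second spatial derivatives of $\phi=T^k\psi$ by $|T^2\phi|^2=|T^{k+2}\psi|^2$, and you control that by Step~1 applied to $T^{k+2}\psi$. The left side of \eqref{eq:iedphysicalspaceho} contains non-degenerate top-order terms such as $|Y_*^2T^{N-1}\psi|^2$, $|\snabla_{\s^2}Y_*T^{N-1}\psi|^2$ and $|\snabla_{\s^2}T^{N-1}\psi|^2$. For these your scheme needs $\mathcal{E}_p[T^{N+1}\psi]$, which is not on the right-hand side. Step~1 applied to $T^N\psi$ only gives $\upzeta|T^{N+1}\psi|^2$, which vanishes on $(r_\sharp-\eta,r_\sharp+\eta)$.

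You notice this, but the proposed resolution does not work. A cut-off localising to a neighbourhood of $r_\sharp$ equals $1$ at $r_\sharp$, where $\upzeta=0$, so it cannot produce a factor $\upzeta$ in front of the $T^{k+2}$-terms. Likewise, bounds on $|Y_*T^k\psi|$ and $|T^k\psi|$ alone say nothing about $Y_*^2T^k\psi$ or $\snabla_{\s^2}T^k\psi$ on $\{\upzeta<1\}$. As written, your argument proves the estimate only with one additional $T$-derivative (of $\psi$ and of $G_{\widehat{A}}$) on the right-hand side.

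\textbf{The missing idea.} Run the elliptic identities in spacetime, and integrate the $T^2$-term by parts once in $\tau$ and once along $Y_*$, instead of bounding it by Cauchy--Schwarz. Take a cut-off $\chi(r)$ and $\phi=T^k\psi$, and multiply \eqref{eq:maineqradfieldtilde} by $\chi\overline{Y_*^2\phi}$, respectively by $\chi\overline{\phi}$. Use the identities $\re\iint\chi\,\overline{Y_*^2\phi}\,T^2\phi=\iint\chi|Y_*T\phi|^2+\ldots$ and $\re\iint\chi\,\overline{\phi}\,T^2\phi=-\iint\chi|T\phi|^2+\ldots$. Schematically this gives
\begin{gather*}
\iint\chi\big(|Y_*^2\phi|^2+r^{-2}\Omega^2|\snabla_{\s^2}Y_*\phi|^2\big)\lesssim\iint\chi\big(|Y_*T\phi|^2+|\snabla_{\s^2}\phi|^2+|T\phi|^2\big)+\ldots,\\
\iint\chi\, r^{-2}\Omega^2|\snabla_{\s^2}\phi|^2\lesssim\iint\chi\big(|T\phi|^2+|Y_*\phi|^2+|\phi|^2\big)+\ldots.
\end{gather*}
Here $\ldots$ collects the following:
\begin{itemize}
\item lower-order terms and inhomogeneous terms;
\item terms with $\chi'$, which can be placed where $\upzeta=1$;
\item boundary terms at $\tau=0$, which can be removed, for instance, with the time cut-off $\xi$ and local energy estimates on $[0,1]$.
\end{itemize}

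\textbf{Why this closes.} The right-hand sides involve only $|Y_*T^{k+1}\psi|^2$ and $|T^{k+1}\psi|^2$. These are precisely the non-degenerate terms of \eqref{eq:iedphysicalspace} applied to $T^{k+1}\psi$, and $k+1\le N$. The commutator term $\frac{d}{dr_*}(r^{-2}\Omega^2)\,\snabla_{\s^2}\phi\cdot\overline{\snabla_{\s^2}Y_*\phi}$ needs a non-degenerate bound on $|\snabla_{\s^2}\phi|^2$, and the second inequality supplies it. Iterating by commuting with $\snabla_{\s^2}$ and $Y_*$ then closes the induction with exactly the stated right-hand side. This is what the paper means by estimating all higher-order derivatives ``in terms of $T$-derivatives of first-order quantities''.
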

\begin{proof}
We apply standard elliptic estimates in the region $\{r_H\leq r\leq r_I\}$ to estimates all higher-order derivatives in terms of $T$-derivatives of first-order quantities. Then we use that\\ $[(g_{M,Q,\Lambda}^{-1})^{\mu\nu}(^{\widehat{A}}D)_{\mu}(^{\widehat{A}}D)_{\nu},T]=0$  to be able to directly apply \eqref{eq:iedphysicalspace} with $\psi$ replaced by $T^k\psi$.
 \end{proof}
 
 We can also apply red shift estimates to improve the estimate in Proposition \ref{prop:iedphysloword} \textbf{at the expense of making the constant in the estimate $\kappa_+,\kappa_c$-dependent}.
 \begin{corollary}
 Let $N\in \N_0$ and $\kappa_+,\kappa_c>0$. Then there exists a constant\\ $C_{\kappa_+,\kappa_c}=C_{\kappa_+,\kappa_c}(\epsilon,p, r_H,r_I,\kappa_+,\kappa_c,N)>0$, such that:
\begin{multline}
\label{eq:iedphysicalspaceredshift}
\sum_{k_1+k_2+k_3\leq N}\sup_{\tau\geq 0}\int_{\Sigma_{\tau}\cap\{r\leq r_H\}} \mathcal{E}_{2}[T^{k_3}\psi]\,d\sigma dr\\
+\int_{0}^{\infty}\int_{\Sigma_{\tau}\cap\{r_H\leq r\leq r_I\}}  |\snabla_{\s^2}^{k_1}Y_*^{k_2+1}T^{k_3}\psi|^2+\upzeta(r)(|\snabla_{\s^2}^{k_1}Y_*^{k_2}T^{k_3+1}\psi|^2+|\snabla_{\s^2}^{k_1+1}Y_*^{k_2}T^{k_3}\psi|^2)+|\snabla_{\s^2}^{k_1}Y_*^{k_2}T^{k_3}\psi|^2\,d\sigma dr d\tau\\
+\int_{0}^{\infty}\int_{\Sigma_{\tau}\setminus\{r_H\leq r\leq r_I\}} |XT^{k_3}\psi|^2+|T^{k_3+1}\psi|^2+|\snabla_{\s^2}T^{k_3}\psi|^2+|T^{k_3}\psi|^2\,d\sigma drd\tau\\
\leq C_{\kappa_+,\kappa_c}\left[ \sum_{k\leq N}\int_{\Sigma_0}\mathcal{E}_{2}[T^k\psi]\,d\sigma dr+\int_{0}^{\infty}\left[\int_{\Sigma_{\tau}} |rT^kG_{\widehat{A}}|^2+\Omega^2(1-\upzeta)|rT^{k+1}rG_{\widehat{A}}|^2\,d\sigma dr \right]\,d\tau\right].
\end{multline}
If $\psi=\psi_{\leq L_0}$, with $L_0\in \N_0$, then we can omit the factor $\upzeta$ in \eqref{eq:iedphysicalspaceredshift} at the expense of adding $L_0$-dependence in the constant $C$.
\end{corollary}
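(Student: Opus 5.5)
The approach is to combine the uniform-in-$\kappa$ estimate \eqref{eq:iedphysicalspaceho} of Corollary \ref{cor:iedphysicalspaceho}, which needs future integrability, with red-shift estimates near the two horizons. Since $\kappa_+,\kappa_c>0$, these red-shift estimates remove both the degenerate weights and the $(r^{-1}\Omega)^{-p}$ weights, at the cost of constants depending on $\kappa_+,\kappa_c$.

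In the region $\{r_H\le r\le r_I\}$ we take the bulk terms on the left of \eqref{eq:iedphysicalspaceredshift} directly from \eqref{eq:iedphysicalspaceho}. We choose $p$ in the allowed range there and bound the right-hand side of \eqref{eq:iedphysicalspaceho} by that of \eqref{eq:iedphysicalspaceredshift}. This uses that for $\kappa_+,\kappa_c>0$ we have $(r^{-1}\Omega)^{-p}\rho_+^{1-2\epsilon}\rho_c^{1-2\epsilon}\lesssim_{\kappa_+,\kappa_c}1$ near the horizons, because $r^{-2}\Omega^2\sim\rho_+(\rho_++2\kappa_+)$ and $p<2$. It also uses $\mathcal{E}_p[\psi]\lesssim_{\kappa}\mathcal{E}_2[\psi]$.

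Near $r=r_+$ we use the standard red-shift vector field $N=T+\chi_{r_H}(-X)$, or equivalently the multiplier $\overline{X\psi}$ cut off to $\{r\le r_H\}$, applied to \eqref{eq:maineqradfieldconfhor}. Its bulk term has the sign of $\kappa_+(|X\psi|^2)$ plus terms controlled by $|K_+\psi|^2$ and $|\snabla\psi|^2$. The charge terms $2i\q(1-\h)\rho_+\partial_{\rho_+}\psi$ and $i\q r^{-2}(1-\h)\psi$ are lower order; we absorb them by Young's inequality, using $\kappa_+>0$ and shrinking $r_H-r_+$. The errors are supported where $\chi_{r_H}'\neq0$, inside $\{r_H'\le r\le r_H\}$, and are controlled by the Morawetz bulk from the previous step. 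The same argument near $r_c$ with $K_c$ and $\kappa_c>0$ handles $\{r\ge r_I\}$.

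Adding these estimates gives nondegenerate spacetime control of $|X\psi|^2+|T\psi|^2+|\snabla\psi|^2+|\psi|^2$ outside the trapping region. The boundary fluxes give $\sup_\tau\int_{\Sigma_\tau\cap\{r\le r_H\}}\mathcal{E}_2$, since for $\kappa_+>0$ the quantity $\mathcal{E}_2$ is comparable to the nondegenerate energy near the horizon. Bounding the flux also needs a $T$- or $K_+$-energy identity in a finite region. This identity has no sign because of superradiance, so we integrate it over $[\tau_1,\tau_2]$ and control the resulting bulk by the integrated estimate just obtained, via a pigeonhole or mean-value argument in $\tau$. Higher orders follow by commuting with $T$, which is Killing and commutes with the operator, and by elliptic estimates in $\{r_H\le r\le r_I\}$. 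The $\upzeta$-free version for $\psi_{\le L_0}$ follows from the corresponding statement in Corollary \ref{cor:iedphysicalspaceho}.

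The main obstacle is the red-shift bulk near the horizon in the presence of the charge coupling. The cross term $\q(1-\h)\rho_+\,\re(i\partial_{\rho_+}\psi\,\overline{\psi})$ must be absorbed using only the $\kappa_+$-positivity and the $|\psi|^2$ bulk near the horizon. We plan to handle it with a Hardy inequality in $\rho_+$, or by using the zeroth-order spacetime term already controlled by \eqref{eq:iedphysicalspace}, which carries the weight $(\rho_++\kappa_+)(r^{-1}\Omega)^{2-p}\Omega^{-2}$. For fixed $\kappa_+>0$ that weight is nondegenerate.
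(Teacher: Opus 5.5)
Your proposal is correct and is essentially the paper's own argument. The paper also combines \eqref{eq:iedphysicalspaceho} with standard red-shift estimates in $\{r\le r_H\}$ and $\{r\ge r_I\}$: its transversal multiplier $\Omega^{-2}\underline{L}$ near $\mathcal{H}^+$ (and the counterpart near $\mathcal{C}^+$) plays the role of your $-X$, and the extra charge-dependent zeroth- and first-order terms are treated as lower-order errors, controlled by the non-degenerate near-horizon bulk available when $\kappa_+,\kappa_c>0$. Your pigeonhole step is not needed, since integrating the cut-off red-shift identity directly from $\tau=0$ already gives the uniform-in-$\tau$ flux bound, with all bulk errors controlled by the integrated estimate.
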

 \begin{proof}
 	The proof of red-shift estimates is standard. We can combine \eqref{eq:iedphysicalspaceho} with energy estimates in the regions $\{r\leq r_H\}$ and $\{r\geq r_I\}$, and consider the vector field multipliers $\Omega^{-2}\underline{L}$ and $\Omega^{-2}\underline{L}$, respectively. See \cite{lecturesMD}[\S7] for derivation of red-shift estimates in a general setting. The additional zeroth- and first-order terms appearing in \eqref{eq:CSF} with $\q\neq 0$ compared to $\q=0$ do not cause issues in the argument.
 \end{proof}
 
\subsection{Verifying future integrability}
\label{sec:verifyfutureint}
Let $\psi$ be a solution to \eqref{eq:CSF} with $A=\widehat{A}$ arising from initial data in $C_c^{\infty}(\Sigma)\times C_c^{\infty}(\Sigma)$, with $G_{\widehat{A}}$ compactly supported in spacetime. Assume moreover that $\psi=\psi_{\ell}$ with $\ell\in \N_0$.

Consider the set
\begin{equation*}
\mathcal{A}=\{\q\in \R\,|\, \textnormal{$\psi$ is future integrable}\}.
\end{equation*}
The aim of this section is to show that $\mathcal{A}=\R$ if $\kappa_+>0$ and $\kappa_c>0$.

\begin{proposition}
Let $\kappa_+>0$ and $\kappa_c>0$. Then $(-\q_0,\q_0)\subset \mathcal{A}$ for suitably small $\q_0$.
\end{proposition}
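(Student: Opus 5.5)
The strategy is to treat the charge term as a small perturbation of the neutral equation $\q=0$ on sub-extremal Reissner--Nordstr\"om--de Sitter, where integrated energy decay with exponential-type control is available. Concretely, for $\q=0$ I would invoke the integrated energy estimates of \cite{gon24}[Proposition 11]; since $\kappa_+,\kappa_c>0$, they can be upgraded with red-shift estimates near $\mathcal{H}^+$ and $\mathcal{C}^+$ as in \cite{lecturesMD}[\S7]. This yields a non-degenerate estimate of the form \eqref{eq:iedphysicalspaceredshift} for $\q=0$: all first derivatives and $\psi$ itself are controlled in $L^2$ in spacetime, including up to both horizons, by initial data and the $L^2$ spacetime norm of the inhomogeneity. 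We restrict to a fixed $\ell$, so the trapping degeneracy factor $\upzeta$ can be dropped at the cost of $\ell$-dependent constants.

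Next, I would rewrite \eqref{eq:CSF} with $A=\widehat{A}$ using \eqref{eq:maineqradfield}. This puts it in the form $\square$-type operator (at $\q=0$) applied to $\psi$ equals $rG_{\widehat A}+\q\,\mathcal{P}_1\psi+\q^2\mathcal{P}_0\psi$. Here $\mathcal{P}_1$ is first order with smooth coefficients bounded up to $r_+$ and $r_c$: it collects terms like $-2i\q r^{-1}(1-\h)X\psi$, $-2i\q r^{-1}\h\widetilde{\h}T\psi$ and zeroth-order terms. The operator $\mathcal{P}_0$ is zeroth order. With $\kappa_+,\kappa_c>0$ the foliation is uniformly spacelike up to both horizons, so $|\mathcal{P}_1\psi|^2\lesssim |X\psi|^2+|T\psi|^2+|\psi|^2$ with uniform constants. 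Applying the $\q=0$ non-degenerate estimate on $[0,\tau_2]$ then gives
\begin{equation*}
\int_0^{\tau_2}\!\!\int_{\Sigma_\tau}(|X\psi|^2+|T\psi|^2+|\psi|^2)\lesssim \int_{\Sigma_0}\mathcal{E}_2[\psi]+\int_0^{\tau_2}\!\!\int_{\Sigma_\tau}|r^3G_{\widehat A}|^2+(\q^2+\q^4)\int_0^{\tau_2}\!\!\int_{\Sigma_\tau}(|X\psi|^2+|T\psi|^2+|\psi|^2).
\end{equation*}
For $|\q|<\q_0$ small, the last term is absorbed into the left-hand side. The constant is uniform in $\tau_2$, so letting $\tau_2\to\infty$ gives finite spacetime integrals.

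To check the sufficient integrability conditions of Definition \ref{def:suffint}, I would commute with $T$ (a symmetry) and with $Y_*$/$X$ via elliptic estimates and the red-shift commutation. This gives the same bound for all higher derivatives, since the data and $G_{\widehat A}$ are compactly supported and smooth. A Sobolev trace in $r$ then converts spacetime $L^2$ bounds into $\int_{\R}|Y_*^k\uppsi|^2(t,r)\,dt<\infty$ for each fixed $r$. The flux conditions \eqref{eq:suffint2} follow from integrability of $|\underline L\psi|^2, |L\psi|^2$, which are smooth multiples of first derivatives in the $(\tau,r)$ frame. Passing between the $t$- and $\tau$-slicings is harmless on fixed $r$-sets, and \eqref{eq:suffint3} holds since the right-hand side of \eqref{eq:CSFtimecutoff} is compactly supported.

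The main obstacle is ensuring the $\q=0$ estimate really controls $|\psi|^2$ and $|T\psi|^2,|X\psi|^2$ without degeneration at $\mathcal{H}^+$ and $\mathcal{C}^+$, so that the $\q$-terms (which are not degenerate there) can be absorbed. I expect to handle this using the positivity of $\kappa_+,\kappa_c$ in the red-shift multipliers $\Omega^{-2}\underline{L}$, $\Omega^{-2}L$. The zeroth-order control would come from a Hardy inequality together with the photon-sphere Morawetz bound. Alternatively, one could simply cite \cite{bes20}[Theorem 4.2].
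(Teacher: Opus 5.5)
Your proposal is correct and matches the paper's argument: the paper either cites \cite{bes20}[Theorem 4.2] directly or, alternatively, deduces future integrability for small $|\q|$ from the $\q=0$ fixed-$\ell$ integrated energy estimates of \cite{gon24}[Proposition 11] via the perturbative absorption of Corollary \ref{cor:iedsmallq}. Your version carries out that second route, using the red-shift upgrade (available since $\kappa_+,\kappa_c>0$) so that the non-degenerate $\q$-terms can be absorbed on $[0,\tau_2]$ uniformly in $\tau_2$, and then checking Definition \ref{def:suffint}.
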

\begin{proof}
We can directly apply \cite{bes20}[Theorem 4.2] to conclude future-integrability of fixed-$\ell$ solutions to \eqref{eq:CSF} in the case $\kappa_+>0$ and $\kappa_c>0$ with $|\q|\leq \q_0$, for $\q_0$ sufficiently small (given initial data in $C_c^{\infty}(\Sigma)\times C_c^{\infty}(\Sigma)$).

Alternatively, we can deduce future integrability of fixed-$\ell$ solutions for $|\q|\leq \q_0$ from integrated energy estimates in the $\q=0$ case (for fixed $\ell$) and applying the analogue of Corollary \ref{cor:iedsmallq}. The required integrated energy estimates on sub-extremal Reissner--Nordstr\"om-de Sitter can be found in \cite{gon24}[Proposition 11].
\end{proof}

\begin{proposition}
\label{prop:contargopen}
	Let $\kappa_+>0$ and $\kappa_c>0$. Then $\mathcal{A}$ is open.
\end{proposition}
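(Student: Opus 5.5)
Fix $\q_0\in\mathcal{A}$ and $\q_1$ with $|\q_1-\q_0|=:\varepsilon$ small; the goal is to show $\q_1\in\mathcal{A}$, i.e.\ that the solution $\psi^{(1)}$ with charge $\q_1$ (same data, same $G$) is future integrable. The plan is the continuity argument sketched in the introduction. Let $\xi_\tau$ be a cut-off equal to $1$ for $\tau'\leq \tau-\delta$ and $0$ for $\tau'\geq \tau$, and set $\q_\tau(\tau')=\xi_\tau(\tau')\q_1+(1-\xi_\tau(\tau'))\q_0$. Let $\psi_\tau$ solve \eqref{eq:CSF} with the time-dependent charge $\q_\tau$. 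It coincides with $\psi^{(1)}$ on $\{\tau'\leq \tau-\delta\}$. For $\tau'\geq\tau$ it solves the $\q_0$-equation, so by Theorem \ref{thm:gwp} and finite speed of propagation it is a solution of the $\q_0$-equation with inhomogeneity
\begin{equation*}
G_{\q_0}+(\q_\tau-\q_0)\mathcal{P}\psi_\tau,
\end{equation*}
with $\mathcal{P}$ a first-order operator read off from \eqref{eq:maineqradfield} (terms $\q r^{-1}(1-\h)X$, $\q\h\widetilde{\h}r^{-1}T$, $\q r^{-1}\frac{d\h}{dr}$, $\q^2\h\widetilde{\h}r^{-2}$ and $\q r^{-2}(1-\h)$). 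Since $\psi_\tau$ agrees with the $\q_0$-solution after a compactly supported-in-time modification, and its inhomogeneity is supported in $\tau'\leq\tau$, it is future integrable. This follows by applying the $\q_0$ estimates once more on $[\tau,\infty)$, which is allowed because $\q_0\in\mathcal{A}$.

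Because $\psi_\tau$ is future integrable, the red-shift improved estimate \eqref{eq:iedphysicalspaceredshift} applies to it with charge $\q_0$ (fixed $\ell$, so no $\upzeta$ degeneration, with $k\le 1$ derivatives). This gives
\begin{equation*}
\mathcal{B}[\psi_\tau]\leq C_{\kappa_+,\kappa_c}\Big[\mathcal{E}_2\text{-data}+\|rG\|^2+\varepsilon^2\int_0^\infty\!\!\int_{\Sigma_{\tau'}}|r\mathcal{P}\psi_\tau|^2+|rT\mathcal{P}\psi_\tau|^2\Big],
\end{equation*}
where $\mathcal{B}$ is the nondegenerate spacetime bulk on the left of \eqref{eq:iedphysicalspaceredshift}.

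The key point, and the main obstacle, is absorbing the $\varepsilon^2$ term into $\mathcal{B}$. Because $\kappa_+,\kappa_c>0$, the red-shift bulk in \eqref{eq:iedphysicalspaceredshift} controls $|X\psi|^2$, $|T\psi|^2$ and $|\psi|^2$ without degenerate weights up to both horizons, with $\mathcal{P}$'s coefficients bounded (here $r\in[r_+,r_c)$ is bounded). This is exactly where positivity of the surface gravities enters; in the extremal limit the weights in \eqref{eq:iedphysicalspace} degenerate and absorption would fail. For the $T\mathcal{P}\psi$ term, commute with $T$ and use the $k=1$ bulk, plus elliptic estimates for any second $X$-derivatives. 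These last terms are avoided if one keeps only $\re(\overline{T\psi}\,rG)$-type pairings, but using $N=1$ is simplest.

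For $\varepsilon$ small depending on $C_{\kappa_+,\kappa_c}$, this yields a bound on $\mathcal{B}[\psi_\tau]$ uniform in $\tau$. Since $\psi_\tau=\psi^{(1)}$ on $\tau'\le\tau-\delta$, letting $\tau\to\infty$ gives a finite spacetime $L^2$ norm for $\psi^{(1)}$ and its $T$-, $X$- and higher (commuted) derivatives. That implies the conditions \eqref{eq:suffint1}--\eqref{eq:suffint3} for $\uppsi^{(1)}$, after a trace/Sobolev argument in $r$ to pass from spacetime $L^2$ to $L^2_t$ at fixed $r$. This shows $\q_1\in\mathcal{A}$, so $\mathcal{A}$ is open.
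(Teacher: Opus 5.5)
Your proposal is correct and is essentially the paper's argument: the same interpolated charge $\q_\tau$, and the same rewriting of $\psi_\tau$ as a solution of the base-charge equation (your $\q_0$, the paper's $\q_\star$) with an extra inhomogeneity of size $|\q_1-\q_\star|$ supported in $\{\tau'\leq \tau\}$. That term is then absorbed using the $\kappa_+,\kappa_c$-dependent red-shift estimate \eqref{eq:iedphysicalspaceredshift}, with $\upzeta$ dropped since $\ell$ is fixed, and one passes to the limit $\tau\to\infty$ using that $C_{\kappa_+,\kappa_c}$ does not depend on $\tau$. The only point you leave implicit is one the paper states explicitly: it applies this estimate only at base points with $|\q_\star|\geq \q_0>0$, as Proposition \ref{prop:iedphysloword} requires, while small charges are covered by the preceding proposition.
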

\begin{proof}
We assume that $|\q|\geq \q_0$ and consider the operator $L_{\q}$ defined as follows:
\begin{multline*}
L_{\q}f=X(\Omega^2Xf)+r^{-2}\slashed{\Delta}_{\s^2}f-\left[r^{-1}\frac{d\Omega^2}{dr} 
+\frac{2\Lambda}{3}\right]f-2(1-\h)(T+i\q r^{-1})Xf\\
-\h \widetilde{\h} T^2f+\left(\frac{d\h}{dr}-2i \q r^{-1} \h\widetilde{\h}\right)Tf+\left[i\q r^{-1}\frac{d\h}{dr}+\q^2  \mathbbm{h}\widetilde{\mathbbm{h}}r^{-2}+i \q r^{-2}(1-\h)\right]f.
\end{multline*}
A solution to \eqref{eq:CSF} with $A=\widehat{A}$ satisfies $L_{\q}\psi=rG_{\widehat{A}}$.

Now we consider the operator $L_{\q_{\tau}}$ where $\q_{\tau}=\xi_{\tau}\q_1+(1-\xi_{\tau}) \q_{\star}$, with $\xi_{\tau}: [0,\infty)$ a smooth cut-off function such that $\xi_{\tau}(\tau')=1$ for $\tau'\leq \tau-\delta_0$ and $\xi_{\tau}(\tau')=0$ for $\tau' \geq \tau$, with $\delta_0>0$.

Then
\begin{equation*}
	(L_{\q_{\tau}}-L_{\q_{\star}})\psi^{\tau}=(\q_{\star}-\q_{\tau})\left[-2i r^{-1}X\psi^{\tau}+2i r^{-1} \h \widetilde{\h} T\psi^{\tau}-\left(i r^{-1} \frac{d\h}{dr}+(\q_{\tau}+\q_0) r^{-2} \h \widetilde{\h}+i r^{-2}(1-\h)\right)\psi^{\tau} \right].
\end{equation*}
Hence, $L_{\q_{\star}}\psi^{\tau}=r G_{\widehat{A}}+rG_{\q_{\star},\q_{\tau}}$, with
\begin{equation*}
	rG_{\q_{\star},\q_{\tau}}:=-(L_{\q_{\tau}}-L_{\q_{\star}})\psi^{\tau}.
\end{equation*}

Suppose now that $\q_{\star}\in \mathcal{A}$. Since $G_{\widehat{A}}$ is sufficiently integrable and $rG_{\q_{\star},\q_{\tau}}$ is compactly supported in time, and hence also sufficiently integrable, we must have that $\psi^{\tau}$ is future integrable.

There exists a constant $C>0$, such that $r^{-2}|r^3G_{\q_{\star},\q_{\tau})}|^2\leq C (\q_{\star}-\q_1)^2(|X\psi|^2+r^{-4}|T\psi|^2+r^{-4}|\psi|^2)$, so we can apply \eqref{eq:iedphysicalspaceredshift}, using that $\ell$ is bounded to omit the factor $\upzeta$, and estimate:
\begin{multline*}
\sum_{k_1+k_2+k_3\leq K}\sup_{\tau'\geq 0}\int_{\Sigma_{\tau'}} \mathcal{E}_{2}[T^{k_3}\psi^{\tau}]\,d\sigma dr\\
+\sum_{k_1+k_2+k_3\leq K+1}\int_{0}^{\infty}\int_{\Sigma_{\tau'}\cap\{r_H\leq r\leq r_I\}} |\snabla_{\s^2}^{k_1}Z_*^{k_2}T^{k_3}\psi^{\tau}|^2\,d\sigma dr d\tau'\\
+\sum_{k\leq K}\int_{0}^{\infty}\int_{\Sigma_{\tau'}\setminus\{r_H\leq r\leq r_I\}} |XT^{k}\psi^{\tau}|^2+|T^{k+1}\psi^{\tau}|^2+|\snabla_{\s^2}T^{3}\psi^{\tau}|^2+|T^{k}\psi^{\tau}|^2\,d\sigma drd\tau'\\
\leq C_{\kappa_+,\kappa_c}\Bigg[\int_{\Sigma_0} \sum_{k\leq K}\mathcal{E}_{2}[T^k\psi^{\tau}]\,d\sigma dr\\
+\int_{0}^{\infty}\left[\int_{\Sigma_{\tau'}} |rT^kG_{\widehat{A}}|^2+(\q_{\star}-\q_1)^2(|XT^k\psi^{\tau}|^2+|T^{k+1}\psi^{\tau}|^2+|T^k
+\psi^{\tau}|^2)\,d\sigma dr \right]\,d\tau'\Bigg].
\end{multline*}
For $|\q_{\star}-\q_1|$ suitably small depending on $\kappa_+,\kappa_c$, we can absorb the terms with a factor $(\q_{\star}-\q_1)^2$ into the left-hand side. Here we use that $\ell$ is bounded.

Using that $\psi^{\tau}=\psi$ for $\tau'\leq \tau$ and that the constant $C_{\kappa_+,\kappa_c}$ does not depend on $\tau$, we can take the limit $\tau\to \infty$ to conclude that $\psi$ is future integrable and hence $\q_1\in \mathcal{A}$.
\end{proof}

\begin{proposition}
Let $\kappa_c,\kappa_+>0$. The set $\mathcal{A}$ is closed in $\R$ and \eqref{eq:iedphysicalspace}.
\end{proposition}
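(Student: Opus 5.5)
To prove that $\mathcal{A}$ is closed, let $\q_n\in\mathcal{A}$ with $\q_n\to \q_\infty$. The plan is to show that the future-integrable solutions $\psi^{(n)}$, with fixed data and fixed $\ell$, satisfy estimates that do not depend on $n$. Passing to the limit $n\to\infty$ will then give future integrability, and the estimate \eqref{eq:iedphysicalspace}, for the solution $\psi^{(\infty)}$ with charge $\q_\infty$.

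\textbf{Step 1: uniform bounds for the approximants.} Since each $\q_n\in \mathcal{A}$, Proposition \ref{prop:iedphysloword} and Corollary \ref{cor:iedphysicalspaceho} apply to $\psi^{(n)}$. Their constants depend only on $\q_1\geq \sup|\q_n|$, $\q_0$, $\kappa_1$, $p$ and $\epsilon$. If $\q_\infty=0$, we are already done, because $0\in\mathcal{A}$ by the small-$|\q|$ proposition. Otherwise, we may assume $|\q_n|\geq \q_0>0$.

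We then upgrade to the red-shift estimate \eqref{eq:iedphysicalspaceredshift}. Its constant $C_{\kappa_+,\kappa_c}$ depends on $\q$ only through upper bounds on $|\q|$, so it is uniform in $n$. Because $\ell$ is fixed, the factor $\upzeta$ can be dropped. This yields uniform-in-$n$ bounds on $\sup_\tau \int_{\Sigma_\tau}\mathcal{E}_2[T^k\psi^{(n)}]$ and on spacetime integrals of all first derivatives of $T^k\psi^{(n)}$, for any $k\leq N$. Commuting with $(r^2X)^j$ near the horizons (using $\kappa_+,\kappa_c>0$) extends these to higher transversal derivatives.

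\textbf{Step 2: convergence to the limit solution.} On each finite slab $\{0\leq\tau\leq T_0\}$, the local energy estimate of Theorem \ref{thm:localenest}, applied to the difference $\psi^{(n)}-\psi^{(\infty)}$, gives $\psi^{(n)}\to\psi^{(\infty)}$ with all derivatives. The difference satisfies the equation with charge $\q_\infty$ and an inhomogeneity of size $O(|\q_n-\q_\infty|)$ times lower-order terms of $\psi^{(n)}$, which are locally bounded. Fatou's lemma then transfers the uniform global-in-time spacetime bounds from Step 1 to $\psi^{(\infty)}$. Consequently, $\int_0^\infty\int_{\Sigma_\tau}(|XT^k\psi^{(\infty)}|^2+|T^{k+1}\psi^{(\infty)}|^2+|T^k\psi^{(\infty)}|^2)\,d\sigma dr d\tau<\infty$ for all $k$.

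\textbf{Step 3: checking Definition \ref{def:suffint}.} It remains to verify \eqref{eq:suffint1}--\eqref{eq:suffint3} for the time-cut-off $\uppsi^{(\infty)}$.
\begin{itemize}
\item For the pointwise-in-$r$ $L^2_t$ bounds \eqref{eq:suffint1}, we combine the spacetime bounds with a fundamental theorem of calculus argument in $r$ (a trace/Sobolev inequality on compact $r$-intervals), together with elliptic estimates to handle $Y_*^k$.
\item The bounds \eqref{eq:suffint2} along $\underline{L}$ near the horizon and $L$ near $r_c$ follow from the nondegenerate spacetime control of $X\psi$ and $T\psi$ available because $\kappa_\pm>0$.
\item \eqref{eq:suffint3} holds since $G_{\widehat{A}}$ is compactly supported.
\end{itemize}
Hence $\q_\infty\in\mathcal{A}$, and \eqref{eq:iedphysicalspace} for $\psi^{(\infty)}$ follows from Proposition \ref{prop:iedphysloword} directly, or alternatively as a limit of the uniform estimates.

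\textbf{Main obstacle.} I expect the main difficulty to be the conversion between the time function $t$ used in the definition of future integrability and the foliation $\tau$. Near the horizons $t\to\pm\infty$ along $\Sigma_\tau$, so $L^2_t$ at fixed $r$ is comparable to $L^2_\tau$ only with $r$-dependent constants. This requires care with the phase factors relating $\widehat{A}$ and $\widetilde{A}$.

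I would first try working on each fixed $r\in(r_+,r_c)$, where $t-\tau$ is a fixed shift, so the conversion is harmless. The statement only requires finiteness for each fixed $r$, and a fixed shift together with unimodular phases preserves $L^2$.
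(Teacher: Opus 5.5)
Your proof is correct, but you pass to the limit by a different mechanism than the paper.

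The paper's route is a Cauchy-sequence argument. It applies the red-shift estimate \eqref{eq:iedphysicalspaceredshift} at charge $\q_m$ to the difference $\psi^{\q_n}-\psi^{\q_m}$. The inhomogeneity of that difference is $(L_{\q_m}-L_{\q_n})\psi^{\q_n}$, which is $O(|\q_n-\q_m|)$ times $\psi^{\q_n}$ and its first derivatives. The paper then takes the limit in the Hilbert space defined by the global integrated norm. For this to close, one needs exactly your Step 1, i.e.\ uniform-in-$n$ bounds on $\psi^{\q_n}$. As displayed in the paper, the right-hand side is written in terms of the difference, but it should involve $\psi^{\q_n}$ itself.

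Your route needs no global estimate for differences. You use only the uniform bounds on each approximant, convergence on finite slabs from Theorem \ref{thm:localenest}, and lower semicontinuity as the slab length goes to infinity. This is more elementary. It also makes explicit two points the paper leaves implicit: that the limit is the actual charge-$\q_\infty$ solution, and that finiteness of the red-shift norms gives the conditions of Definition \ref{def:suffint}. In exchange, the paper's route yields strong convergence $\psi^{\q_n}\to\psi^{\q}$ in the global norm, which closedness does not require.

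Both arguments rest on the same input: uniformity of $C_{\kappa_+,\kappa_c}$ for $\q$ in a compact set bounded away from $0$. You correctly arrange this by disposing of $\q_\infty=0$ via the small-$|\q|$ proposition.

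The ``main obstacle'' you anticipate is not serious. At fixed $r$, $t$ and $\tau$ differ by a constant and the gauge phases are unimodular, so $L^2_t$ and $L^2_\tau$ agree. For \eqref{eq:suffint2}, the Jacobian $dt\,dr_*=\Omega^{-2}\,d\tau\,dr$ is compensated because $\underline{L}$ and $L$ are $\Omega^2$ times regular vector fields near $\mathcal{H}^+$ and $\mathcal{C}^+$, respectively. Hence those integrals are bounded by the nondegenerate spacetime control from the red-shift estimate.
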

\begin{proof}
Consider a sequence $\{q_n\}_{n\in \N}$ in $\mathcal{A}$ such that $\q_n\to \q\in \R$, i.e. solutions $\psi^{\q_n}$ to $L_{\q_n}\psi^{\q_n}=rG_{\widehat{A}}$ arising from smooth compactly supported initial data are future integrable. We fix the initial data for $\psi^{\q_n}$ to be independent of $n$. We then need to show that the solution $\psi^{\q}$ to $L_{\q}\psi=rG_{\widehat{A}}$ with the same initial data as $\psi^{\q_n}$ is future integrable to conclude that ${\q}\in \mathcal{A}$ and hence, that $\mathcal{A}$ is closed.

We consider the differences $\psi^{\q_n}-\psi^{\q_m}$ and $L_{\q_m}(\psi^{\q_n}-\psi^{\q_m})=rG_{\q_m,\q_n}$. We then apply \eqref{eq:iedphysicalspaceredshift} to estimate:
\begin{multline*}
\sum_{k_1+k_2+k_3\leq K}\sup_{\tau'\geq 0}\int_{\Sigma_{\tau'}} \mathcal{E}_{2}[T^{k_3}(\psi^{\q_n}-\psi^{\q_m})]\,d\sigma dr\\
+\sum_{k_1+k_2+k_3\leq K+1}\int_{0}^{\infty}\int_{\Sigma_{\tau'}\cap\{r_H\leq r\leq r_I\}} |\snabla_{\s^2}^{k_1}Z_*^{k_2}T^{k_3}(\psi^{\q_n}-\psi^{\q_m})|^2\,d\sigma dr d\tau'\\
+\sum_{k\leq K}\int_{0}^{\infty}\int_{\Sigma_{\tau'}\setminus\{r_H\leq r\leq r_I\}} |XT^{k}(\psi^{\q_n}-\psi^{\q_m})|^2+|T^{k+1}(\psi^{\q_n}-\psi^{\q_m})|^2+|\snabla_{\s^2}T^{3}(\psi^{\q_n}-\psi^{\q_m})|^2\\
+|T^{k}(\psi^{\q_n}-\psi^{\q_m})|^2\,d\sigma drd\tau'\\
\leq C_{\kappa_+,\kappa_c}(\q_n-\q_m)^2\int_{0}^{\infty}\left[\int_{\Sigma_{\tau'}} (|XT^k(\psi^{\q_n}-\psi^{\q_m})|^2+|T^{k+1}(\psi^{\q_n}-\psi^{\q_m})|^2+|T^k(\psi^{\q_n}-\psi^{\q_m})|^2)\,d\sigma dr \right]\,d\tau'.
\end{multline*}
It therefore follows that $\{\psi^{\q_n}\}$ is a Cauchy sequence in the Hilbert space corresponding to the norm on the right-hand side and must therefore converge with limit $\psi^{\q}$. We conclude that $\q\in \mathcal{A}$.
\end{proof}

\begin{corollary}
\label{cor:removesuffintkappapos}
	Let $\kappa_+>0$ and $\kappa_c>0$. Then $\mathcal{A}=\R$ and \eqref{eq:iedphysicalspace} and \eqref{eq:iedphysicalspaceho} hold for all solutions to \eqref{eq:CSF} with $A=\widehat{A}$.
\end{corollary}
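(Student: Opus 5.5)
The plan is a connectedness argument in the charge parameter $\q$, followed by a density argument that removes the restrictions on the data. First fix $\kappa_+,\kappa_c>0$ and $\ell\in\N_0$, and consider solutions $\psi=\psi_\ell$ with initial data in $C_c^\infty(\Sigma)\times C_c^\infty(\Sigma)$ and compactly supported $G_{\widehat{A}}$. By the preceding propositions, $\mathcal{A}$ contains the interval $(-\q_0,\q_0)$, so it is non-empty; it is open by Proposition \ref{prop:contargopen}; and it is closed by the proposition just before. Since $\R$ is connected, we conclude $\mathcal{A}=\R$. Hence every such fixed-$\ell$ solution is sufficiently integrable, for every value of $\q$ and every $\ell$.

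Next we upgrade this to future integrability in the sense of the definition, which requires sufficient integrability of $\uppsi_{\ell m}$ for all $(\ell,m)$. The equation commutes with the spherical harmonic projections, so $\psi_{\ell m}Y_{\ell m}$ solves \eqref{eq:CSF} with data and inhomogeneity projected onto $(\ell,m)$. Projection preserves smoothness, but not compact support in the angular variables. However, the argument above only used the weighted energies of the data and the $L^2$ norms of $G$, and these remain finite and smooth after projection. So we either rerun the continuity argument under the weaker hypothesis of finite $\mathcal{E}_2$-type higher-order energies, or approximate by compactly supported data and pass to the limit using the $\kappa$-dependent estimate \eqref{eq:iedphysicalspaceredshift}. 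Either route gives future integrability of $\psi$ for smooth compactly supported data.

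For such $\psi$, Proposition \ref{prop:iedphysloword} and Corollary \ref{cor:iedphysicalspaceho} then apply directly, giving \eqref{eq:iedphysicalspace} and \eqref{eq:iedphysicalspaceho}. To treat general solutions, we approximate arbitrary data with finite right-hand side of \eqref{eq:iedphysicalspace} by smooth compactly supported data, and approximate $G$ similarly. Linearity gives the estimate for the differences, and the sequence is Cauchy in the norm appearing on the left-hand side. Combined with local well-posedness (Theorem \ref{thm:localenest}), this passes the estimate to the limit solution. The higher-order version follows in the same way, after commuting with $T$. The restriction $|\q|\ge \q_0$ in Proposition \ref{prop:iedphysloword} is harmless, since small $|\q|$ is covered separately or by shrinking $\q_0$.

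The main obstacle is the closedness step underlying $\mathcal{A}=\R$, specifically its uniformity. The red-shift estimate \eqref{eq:iedphysicalspaceredshift} is only available a priori for future-integrable solutions, with a constant $C_{\kappa_+,\kappa_c}$ that should be uniform in $\q$ on compact sets. This uniformity must be checked; it holds because the constants throughout the frequency analysis depend only on $\q_1$, together with $\kappa_+,\kappa_c>0$. I would verify this uniformity first, since the Cauchy-sequence argument for $\psi^{\q_n}$ needs a single constant, so that the factor $(\q_n-\q_m)^2$ eventually permits absorption.
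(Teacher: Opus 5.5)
Your proposal is correct and follows essentially the same route as the paper. $\mathcal{A}$ is non-empty, open and closed, hence equal to $\R$ by connectedness, which gives future integrability for fixed-$\ell$, smooth, compactly supported data; then \eqref{eq:iedphysicalspace} and \eqref{eq:iedphysicalspaceho} apply, and a standard density argument removes the restrictions on the data and on $G_{\widehat{A}}$. Your detour about projections is unnecessary: projecting onto $Y_{\ell m}$ preserves compact support, because $\s^2$ is compact and compact support in $\Sigma$ only constrains $r$. On the other hand, the uniformity in $\q$ of $C_{\kappa_+,\kappa_c}$ on compact sets that you flag is genuinely and tacitly used in the paper's closedness step; those constants also depend on $\q_0$, which is harmless since $(-\q_0,\q_0)\subset\mathcal{A}$.
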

\begin{proof}
	Since $\mathcal{A}$ is open and closed and contained in $\R$, which is connected, we conclude conclude future integrability for fixed $\ell$, smooth, compactly supported initial data and a smooth and compactly supported \eqref{eq:iedphysicalspaceho}. Therefore, \eqref{eq:iedphysicalspace} and \eqref{eq:iedphysicalspaceho} hold. By a standard density argument, we can then consider more general initial data and inhomogeneities $G_{\widehat{A}}$ for which the norms on the right-hand side of \eqref{eq:iedphysicalspace} or \eqref{eq:iedphysicalspaceho} remain finite.
\end{proof}

By applying the uniformity in $\kappa_+$ and $\kappa_c$ of \eqref{eq:iedphysicalspaceho}, together with the $\kappa_+>0$ and $\kappa_c>0$ result in Corollary \ref{cor:removesuffintkappapos}, we can take $\kappa_+\downarrow 0$ or $\kappa_c\downarrow 0$.

\subsection{$(\Omega^{-1}r)^{p}$-weighted energy estimates in physical space}
\label{sec:physrpest}
In this section, we derive additional, physical-space based $(\Omega^{-1}r)^{p}$-weighted energy estimates in the regions $\{r\leq r_H\}$ and $\{r\geq r_I\}$. The integrated energy estimate in Proposition \ref{prop:iedphysloword} involves $(\Omega^{-1} r)^{p}$ weights on both sides of the equation with an additional $\epsilon$-loss (only for bounded angular frequencies!) when comparing the powers appearing on the LHS. 

In this section, we will show that for $|\q|<\left(\frac{1}{2}\ell+\frac{1}{4}\right)$ or, equivalently, $\re \beta_{\ell}>\sqrt{3}\left(\ell+\frac{1}{2}\right)$, we can remove this $\epsilon$-loss.

\begin{proposition}
	\label{prop:rpestphysspace}
	Let $\psi$ be as in Corollary \ref{cor:iedphysicalspaceho}. Assume that $|\q|<\left(\frac{1}{4}+\frac{1}{2}\ell_0\right)$, or equivalently, $\re \beta_{\ell_0}>\sqrt{3}\left(\ell_0+\frac{1}{2}\right)$. Let $1<p<1+\sqrt{1-\frac{16\q^2}{(2\ell_0+1)^2}}$. Then, there exists a constant $C=C(\delta,p, r_H,r_I)>0$, such that:
	\begin{multline}
\label{eq:iedphysicalspace0noloss}
\sup_{\tau\geq 0}\sum_{k\leq K}\int_{\Sigma_{\tau}} \mathcal{E}_{p}[T^k\psi]\,d\sigma dr\\
+\sum_{k_1+k_2+k_3\leq K}\int_{0}^{\infty}\int_{\Sigma_{\tau}\cap\{r_H\leq r\leq r_I\}} |\snabla_{\s^2}^{k_1}Y_*^{k_2+1}T^{k_3}\widetilde{\psi}|^2+|\snabla_{\s^2}^{k_1}Y_*^{k_2}T^{k_3}\widetilde{\psi}|^2\\
+\upzeta\cdot (|\snabla_{\s^2}^{k_1}Y_*^{k_2}T^{k_3+1}\widetilde{\psi}|^2+|\snabla_{\s^2}^{k_1+1}Y_*^{k_2}T^{k_3}\widetilde{\psi}|^2)\,d\sigma drd\tau\\
+\int_{0}^{\infty}\int_{\Sigma_{\tau}\setminus\{r_H\leq r\leq r_I\}} (\rho_++\kappa_+)(\rho_c+\kappa_c)(r^{-1}\Omega)^{-p}\Omega^2|X\psi|^2\\
+(\rho_++\kappa_+)(\rho_c+\kappa_c)(r^{-1}\Omega)^{\delta}\Omega^{-2}(\mathbf{1}_{r\leq r_H}|K\psi|^2+\mathbf{1}_{r\geq r_I}|K_c\psi|^2)\\
+(\rho_++\kappa_+)(\rho_c+\kappa_c)(r^{-1}\Omega)^{2-p}\Omega^{-2}(|\psi|^2+|\snabla_{\s^2}\psi|^2+|T\psi|^2)\,d\sigma drd\tau\\
\leq C  \sum_{k\leq K}\int_{\Sigma_0} \mathcal{E}_{p}[T^{k}\psi]\,d\sigma dr\\
+C\int_{0}^{\infty}\left[\int_{\Sigma_{\tau}}\max\{(r^{-1}\Omega)^{-p}\rho_+\rho_c,\mathbf{1}_{0\leq \tau\leq 1}\}r^{-2}|r^3T^kG_{\widehat{A}}|^2+\Omega^2(1-\upzeta)|T^{k+1}(rG_{\widehat{A}})|^2\,d\sigma dr \right]\,d\tau.
\end{multline}
\end{proposition}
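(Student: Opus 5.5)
The plan is to prove \eqref{eq:iedphysicalspace0noloss} by a purely physical-space $(\Omega^{-1}r)^p$-weighted energy estimate near $r=r_+$ and near $r=r_c$, and to combine it with the integrated estimate \eqref{eq:iedphysicalspaceho} from Corollary \ref{cor:iedphysicalspaceho}. That corollary, after letting $\kappa_+,\kappa_c\downarrow 0$ via Corollary \ref{cor:removesuffintkappapos} and uniformity, already controls the bulk in $\{r_H\leq r\leq r_I\}$ in terms of a \emph{lower}-weight initial energy $\mathcal{E}_{p'}$ with $p'$ close to $1$. The $\epsilon$-loss in Theorem \ref{thm:main} only comes from the bounded-frequency Green's formula analysis. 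I will therefore not revisit Fourier space. Instead I will mimic, in physical space, the proof of Propositions \ref{prop:rweightestfreq} and \ref{prop:rmin1pweightestfreq}. I work near $r_c$ with the gauge $A=-\frac{Q}{r}du$ (so that $D_L$ is the outgoing derivative), and near $r_+$ with $A=-\frac{Q}{r}dv$. The multiplier is $\chi_{R_1}(r^{-1}\Omega)^{-p}\overline{D_L\psi}$, respectively $\chi_{r_1}(r^{-1}\Omega)^{-p}\overline{D_{\underline L}\psi}$, applied to \eqref{eq:maineqradfieldtilde} and integrated over $\{\tau_1\leq \tau\leq\tau_2\}$.

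\textbf{Step 1.} Compute the divergence identity. It produces the physical-space analogue of \eqref{eq:mainrpid}. The good bulk terms are $p(\rho_c+\kappa_c)(r^{-1}\Omega)^{-p}\Omega^{-2}|D_L\psi|^2$ and $(2-p)(\rho_c+\kappa_c)(r^{-1}\Omega)^{2-p}\Omega^{-2}|\snabla_{\s^2}\psi|^2$. There is also a boundary flux $\int_{\Sigma_\tau}\mathcal{E}_p[\psi]$ and a flux on $\mathcal{C}^+$/$\mathcal{I}^+$ with a good sign. The charge produces the cross term $2\q(r^{-1}\Omega)^{2-p}\Omega^{-2}\re(i\overline{\psi}D_L\psi)$, and the remaining error terms are either supported in $\supp\chi'_{R_1}$ or carry an extra factor of $\rho_c$. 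The zeroth-order $\psi$-terms need not be absorbed by the angular term alone, because on a fixed $\ell\geq\ell_0$ mode we may replace $|\snabla_{\s^2}\psi|^2$ by $\ell(\ell+1)|\psi|^2$.

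\textbf{Step 2, the main obstacle.} I must absorb the $\q$-cross term \emph{without} largeness of $\ell$ and without an $\epsilon$ loss. My first attempt is Young's inequality with an optimal parameter, as in \eqref{eq:youngrpest}: the term is bounded by $\frac{p\mu}{2}(\rho_c)(r^{-1}\Omega)^{-p}|D_L\psi|^2+\frac{2\q^2}{p\mu}\rho_c(r^{-1}\Omega)^{2-p}|\psi|^2$. I then combine this with a sharp Hardy inequality in $\rho_c$. That inequality reads $\int\rho_c^{1-p}|\psi|^2\,d\rho_c\leq \frac{4}{p^2}\int \rho_c^{3-p}|\partial_{\rho_c}\psi|^2$, up to boundary and cut-off terms, and it converts part of the $D_L$-bulk into a zeroth-order bulk. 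Optimizing $\mu$ and the split between the Hardy gain and the angular term $(2-p)\ell(\ell+1)$ (plus the $\ell$-independent part of the potential), positivity reduces to a quadratic condition in $p$. I expect this condition to be
\[
p(2-p)\bigl(\ell_0+\tfrac12\bigr)^2>4\q^2,\quad\text{i.e.}\quad |p-1|<\sqrt{1-\tfrac{16\q^2}{(2\ell_0+1)^2}} ,
\]
which is exactly the stated range and requires $|\q|<\frac14+\frac12\ell_0$. I will verify this algebra carefully, since it dictates the constant $\sqrt3$ in the equivalent formulation. The $\kappa_c$-dependent terms come with the factor $(\rho_c+\kappa_c)$ and are handled uniformly as in Proposition \ref{prop:rweightestfreq}. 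The same computation works near $r_+$, with $\rho_+$ in place of $\rho_c$ and $\tomega$-type conventions.

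\textbf{Step 3.} The cut-off error terms are supported in $\{r_1\leq r\leq R_1\}$. They are bounded by the Morawetz bulk from Corollary \ref{cor:iedphysicalspaceho} applied with a weight $p'\in(1,p)$ close to $1$; this choice is allowed and is dominated by $\mathcal{E}_p$ initially. The $G_{\widehat{A}}$ terms are handled by Cauchy--Schwarz, giving the weight $(r^{-1}\Omega)^{-p}\rho_+\rho_c$, while the $(1-\upzeta)|TG|^2$ terms are inherited from the corollary. The $K_+\psi$ and $K_c\psi$ bulks with weight $(r^{-1}\Omega)^{\delta}$, and the $T\psi$ terms, follow as in the proof of Proposition \ref{prop:iedhighlfreq}, by writing $K\psi$ in terms of $D_L\psi$, $D_{\underline L}\psi$ and $\psi$. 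Adding the $p$-flux identity over $[0,\tau]$ gives the $\sup_\tau$ energy bound. Higher orders follow by commuting with $T$, and the angular and $Y_*$ derivatives by elliptic estimates, exactly as in Corollary \ref{cor:iedphysicalspaceho}.
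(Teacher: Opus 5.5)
Your route is the paper's. With $\h\equiv 0$ near $r_c$, the paper's multiplier $\chi_{R_1}(r^{-1}\Omega)^{-p}\Omega^2\overline{X\psi}$ is your $\chi_{R_1}(r^{-1}\Omega)^{-p}\overline{D_L\psi}$ in the $du$-gauge. This is followed by Young on the $\q$-cross term, a Hardy inequality, and Corollary \ref{cor:iedphysicalspaceho} for the cut-off terms, and your Steps 1 and 3 are fine. The gap is Step 2, which is the entire content of the proposition: you do not carry it out, you write down the condition you need rather than deriving it, and the Hardy inequality you quote is wrong. For $p<2$ and $\psi$ vanishing near $\rho_c=\rho_0$,
\[
\int_0^{\rho_0}\rho_c^{1-p}|\psi|^2\,d\rho_c\leq \frac{4}{(2-p)^2}\int_0^{\rho_0}\rho_c^{3-p}|\partial_{\rho_c}\psi|^2\,d\rho_c ,
\]
and $\frac{4}{(2-p)^2}$ is sharp. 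Your constant $\frac{4}{p^2}$ is smaller for $p>1$; with it, the $\ell_0=0$ case would close for every $p\in(1,2)$ as soon as $|\q|<\frac14$.

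Now do the computation with the correct constant. Set $\kappa_c=0$ and $x=\rho_c$. At leading order as $x\to 0$, the bulk on an $\ell$-mode is
\[
\int\Bigl[\tfrac p2 x^{3-p}|\partial_x\psi|^2+\tfrac{2-p}{2}\ell(\ell+1)x^{1-p}|\psi|^2+\q x^{2-p}\re\bigl(i\psi\,\overline{\partial_x\psi}\bigr)\Bigr]dx ,
\]
up to the sign of the last term. In this gauge the remaining zeroth-order term is $O(\rho_c)$ smaller and does not help.

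If $\ell_0\geq 1$, your hypothesis gives $\q^2<\frac14 p(2-p)(\ell_0+\frac12)^2\leq p(2-p)\ell_0(\ell_0+1)$. The form is then coercive by Young and the angular term alone, with no Hardy inequality needed, and the argument closes.

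If $\ell_0=0$, it does not close. Write $\psi=|\psi|e^{i\theta}$; the cross term becomes $\q x^{2-p}|\psi|^2\partial_x\theta$. Minimizing over $\partial_x\theta$ costs $\frac{\q^2}{2p}x^{1-p}|\psi|^2$, and the sharp Hardy inequality then shows the form is coercive only if $p(2-p)>2|\q|$. Your Young-plus-Hardy bookkeeping, which does not split off the phase, even needs $p(2-p)>4|\q|$. For $0<|\q|<\frac18$, the stated range $p(2-p)>16\q^2$ contains values with $p(2-p)<2|\q|$. There the form is indefinite: take $\partial_x\theta=-\frac{\q}{px}$ and an approximate Hardy extremizer $|\psi|\approx x^{-\frac{2-p}{2}+\epsilon'}$. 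So the $\q$-term cannot be absorbed near the upper end of the claimed $p$-range, and the argument does not reach it.

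Be aware that the paper's written proof reaches $p(2-p)(\ell_0+\frac12)^2>4\q^2$ only through two faulty steps:
\begin{itemize}
\item its Hardy step uses the constant $4(2-p)^{-1}$, whereas the Cauchy--Schwarz line just before it yields $4(2-p)^{-2}$;
\item its Young step divides the angular term by $\mu<1$.
\end{itemize}
Matching its final condition is therefore not a verification of your Step 2.
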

\begin{proof}
	For the sake of convenience, we will assume that $\widetilde{\h}\equiv 0$ in $\{r\leq r_H\}$ and $\h\equiv 0$ in $\{r\geq r_I\}$. The general case follows by applying additionally the applying local energy estimates from Theorem \ref{thm:localenest}. Without of generality, we consider the region $\{r\geq r_I\}$. The region $\{r\leq r_H\}$ can be treated entirely analogously.
	
By \eqref{eq:maineqradfield}, with $\h\equiv 0$, we obtain:
	\begin{equation*}
rG_{\widehat{A}}= X(\Omega^2X\psi)+r^{-2}\slashed{\Delta}_{\s^2}\psi-2 (T+i\q r^{-1})X\psi-\left[\frac{d\Omega^2}{dr} r^{-1}+\frac{2\Lambda}{3}-i \q r^{-2}\right]\psi.
\end{equation*}
Therefore,
\begin{multline}
\label{eq:mainrpid}
	-\chi_{R_1}\re\left((r^{-1}\Omega)^{-p}\Omega^2\overline{X\psi}\cdot  rG_{\widehat{A}}\right)=-\frac{1}{2}\chi_{R_1}(r^{-1}\Omega)^{-p}X\left(|\Omega^2X\psi|^2\right)+\frac{1}{2}(r^{-1}\Omega)^{2-p}\chi_{R_1}X\left(|\snabla_{\s^2}\psi|^2\right)+\textnormal{div}_{\s^2}(\ldots)\\
	+T\left(\chi_{R_1}(r^{-1}\Omega)^{-p}\Omega^2|X\psi|^2\right)+\chi_{R_1}(r^{-1}\Omega)^{-p}\Omega^2\re\left(\left[\frac{d\Omega^2}{dr} r^{-1}+\frac{2\Lambda}{3}-i \q r^{-2}\right]\psi\overline{X\psi}\right)\\
	=-\frac{1}{2}X\left[\chi_{R_1}(r^{-1}\Omega)^{-p}|\Omega^2X\psi|^2-\chi_{R_1}(r^{-1}\Omega)^{2-p}|\snabla_{\s^2}\psi|^2\right]+T\left(\chi_{R_1}(r^{-1}\Omega)^{-p}\Omega^2|X\psi|^2\right)+\textnormal{div}_{\s^2}(\ldots)\\
	+\frac{p}{2}(r^{-1}\Omega)^{-p}\left[\rho_c(1+O_{\infty}(\rho_c))+\kappa_c(1+O_{\infty}(\rho_c))\right]\chi_{R_1}\Omega^2|X\psi|^2\\
	+\frac{1}{2}(2-p)(r^{-1}\Omega)^{2-p}\left[\rho_c(1+O_{\infty}(\rho_c))+\kappa_c(1+O_{\infty}(\rho_c))\right]\chi_{R_1}|\snabla_{\s^2}\psi|^2\\
	-\q(r^{-1}\Omega)^{2-p}\chi_{R_1}\re\left(i \psi\overline{X\psi}\right)+(r^{-1}\Omega)^{2-p}(O_{\infty}(\rho_c)+\kappa_c O_{\infty}(\rho_c^0))\chi_{R_1}\re\left(\psi\overline{X\psi}\right)\\
	+\frac{1}{2}\frac{d\chi_{R_1}}{dr}(r^{-1}\Omega)^{-p}|\Omega^2X\psi|^2-\frac{1}{2}\frac{d\chi_{R_1}}{dr}(r^{-1}\Omega)^{2-p}|\snabla_{\s^2}\psi|^2,
\end{multline}
where we applied \eqref{lm:metricest} and 
\begin{equation*}
2\Omega^2\left[r^{-1}\frac{d\Omega^2 }{dr}+\frac{2\Lambda}{3}\right]
=r^{-2}\Omega^2 (O_{\infty}(\rho_c)+\kappa_c O_{\infty}(\rho_c^0))
\end{equation*}
to arrive at the final equality.

We apply Young's inequality together with $r^{-2}\Omega^2\leq C \rho_c(\rho_c+\kappa_c)$ to estimate the LHS of \eqref{eq:mainrpid} as follows:
\begin{equation*}
	2\left|(r^{-1}\Omega)^{-p}\Omega^2\overline{X\psi}\cdot  rG_{\widehat{A}})\right|\leq \eta (r^{-1}\Omega)^{-p} (\rho_c+\kappa_c)\Omega^2|X\psi|^2+C\eta^{-1}\rho_c (r^{-1}\Omega)^{-p}r^{-2}|r^3G_{\widehat{A}}|^2.
\end{equation*}

We now integrate both sides of \eqref{eq:mainrpid} in $\{\rho_c\leq 2 \rho_c(R_1)\}\cap\{\tau_1\leq \tau \leq \tau_2\}$ to obtain:
\begin{multline}
\label{eq:intrpestphys}
	\int_{\Sigma_{\tau_2}\cap\{\rho_c\leq 2 \rho_c(R_1)\}}(r^{-1}\Omega)^{-p}|X\psi|^2\,d\sigma dr\\
	+\int_{\tau_1}^{\tau_2}\int_{\Sigma_{\tau}\cap\{\rho_c\leq 2 \rho_c(R_1)\}}p(r^{-1}\Omega)^{-p}\left[(1-\eta)(\rho_c+\kappa_c)(1+O_{\infty}(\rho_c))+\kappa_c O_{\infty}(\rho_c)\right]\chi_{R_1}\Omega^2|X\psi|^2\\
	+(2-p)(r^{-1}\Omega)^{2-p}\left[\rho_c(1+O_{\infty}(\rho_c))+\kappa_c(1+O_{\infty}(\rho_c))\right]\chi_{R_1}|\snabla_{\s^2}\psi|^2\,d\sigma drd\tau\\
	\leq \int_{\Sigma_{\tau_1}\cap\{\rho_c\leq 2 \rho_c(R_1)\}}(r^{-1}\Omega)^{-p}|X\psi|^2\,d\sigma dr+\int_{\tau_1}^{\tau_2}\int_{\Sigma_{\tau}\cap\{\rho_c\leq 2 \rho_c(R_1)\}}2(r^{-1}\Omega)^{2-p}|q||X\psi||\psi|\,d\sigma dr d\tau\\
	+C\int_{\tau_1}^{\tau_2}\int_{\Sigma_{\tau}\cap\{\rho_c\leq 2 \rho_c(R_1)\}}(r^{-1}\Omega)^{2-p}(\rho_c+\kappa_c)|X\psi||\psi|\,d\sigma dr d\tau\\
	+C\int_{\tau_1}^{\tau_2}\int_{\Sigma_{\tau}\cap\{\rho_c(R_1)\leq \rho_c\leq 2 \rho_c(R_1)\}}(r^{-1}\Omega)^{2-p}|\snabla_{\s^2}\psi|^2\,d\sigma dr d\tau\\
	+C\eta^{-1}\int_{\tau_1}^{\tau_2}\int_{\Sigma_{\tau}\cap\{\rho_c\leq 2 \rho_c(R_1)\}}\rho_c (r^{-1}\Omega)^{-p}r^{-2}|r^3G_{\widehat{A}}|^2\,d\sigma dr d\tau .
\end{multline}
We can estimate the RHS of \eqref{eq:intrpestphys} by applying Corollary \ref{cor:iedphysicalspaceho}, which imposes the restriction $p>1$. We will estimate the remaining terms on the RHS of \eqref{eq:intrpestphys} via a suitably Hardy inequality, which will introduce a restriction on $\beta_{\ell}$.

Without loss of generality, assume that $\psi$ is supported on a fixed angular $\ell$-mode. Then we split using Young's inequality:
\begin{multline*}
	2(r^{-1}\Omega)^{2-p}|\q||X\psi||\psi|\leq \mu p(r^{-1}\Omega)^{-p}\rho_c \Omega^2|X\psi|^2\\
	+\left[\frac{\q^2}{p \mu \rho_c}-(1-\eta)(2-p)r^{2}(\rho_c+\kappa_c)\ell(\ell+1)\right]r^{-2}(r^{-1}\Omega)^{2-p}|\psi|^2\\
	+(1-\eta)(2-p)\ell(\ell+1)(\rho_c+\kappa_c)(r^{-1}\Omega)^{2-p}|\psi|^2\\
	\leq  \mu p(r^{-1}\Omega)^{-p}\rho_c \Omega^2|X\psi|^2\\
	+\left[\frac{\q^2-(1-\eta)p(2-p)\ell(\ell+1)}{p \mu \rho_c}\right](r^{-1}\Omega)^{2-p}|\psi|^2\\
	+(1-\eta+O(\kappa_c))(2-p)\ell(\ell+1)(\rho_c+\kappa_c)(r^{-1}\Omega)^{2-p}|\psi|^2.
\end{multline*}
We have that:
\begin{multline*}
	\frac{d}{dr}\left((r^{-2}\Omega^2)^{1-\frac{p}{2}}\right)=-\left(1-\frac{p}{2}\right)r^{-2}\frac{d}{d\rho_c}(r^{-2}\Omega^2)(r^{-1}\Omega)^{-p}=-(2-p)r^{-2}(\rho_c(1+O(\rho_c))\\
	+\kappa_c(1+O(\rho_c))(r^{-1}\Omega)^{-p}\\
	=-(2-p)r^{-2}\rho_c^{-1}(r^{-1}\Omega)^{2-p}-\kappa_c(2-p) r^{-2}(r^{-1}\Omega)^{-p}+r^{-2}\left[\kappa_cO(\rho_c)+O(\rho_c^2)\right](r^{-1}\Omega)^{-p}.
\end{multline*}

We can therefore estimate:
\begin{multline*}
	\int_{r_+}^{r_c}\rho_c^{-1}(r^{-1}\Omega)^{2-p} r^{-2}\chi_{R_1}|\psi|^2\,dr\leq (2-p)^{-1}\int_{r_+}^{r_c}-\frac{d}{dr}\left((r^{-2}\Omega^2)^{1-\frac{p}{2}}\right)\chi_{R_1}|\psi|^2\,dr+\ldots\\
	\leq 2(2-p)^{-1}\int_{r_+}^{r_c}(r^{-1}\Omega)^{2-p}|\psi||X\psi|\,dr +C \int_{\rho_c(R_1)}^{\rho_c(2R_1)}|\psi|^2\,d\rho_c,
\end{multline*}
where the terms in $\dots$ contain an additional factor of $\rho_c+\kappa_c$.

Hence,
\begin{equation*}
	\int_{r_+}^{r_c}\rho_c^{-1}(r^{-1}\Omega)^{2-p} r^{-2}\chi_{R_1}|\psi|^2\,dr\leq 4(2-p)^{-1}\int_{r_+}^{r_c}\rho_c (r^{-1}\Omega)^{-p}\Omega^2|X\psi|^2\,dr +C \int_{\rho_c(R_1)}^{\rho_c(2R_1)}|\psi|^2\,d\rho_c.
\end{equation*}
We therefore need:
\begin{equation*}
	\mu+\mu^{-1}\left[\frac{4\q^2}{p(2-p)}-\ell(\ell+1)\right]<1
\end{equation*}
The LHS is minimized for $\mu=\sqrt{\frac{4q^2}{p(2-p)}-\ell(\ell+1)}$, in which case we are left with:
\begin{equation*}
	\frac{q^2}{p(2-p)}-\frac{1}{4}\ell(\ell+1)<1
\end{equation*}
or equivalently, $p(2-p)>4q^2(\ell+\frac{1}{2})^2$. Since $p(2-p)\in (0,1)$, this requires $|\q|<\frac{1}{2}(\ell+\frac{1}{2})$, which corresponds to:
\begin{equation*}
	\beta_{\ell}=\sqrt{4\left(\ell+\frac{1}{2}\right)^2-4q^2}>\sqrt{3}\left(\ell+\frac{1}{2}\right).
\end{equation*}
Correspondingly, the allowed range for $p$ is:
\begin{equation*}
	1<p<1+\sqrt{1-\frac{16\q^2}{(2\ell+1)^2}}.\qedhere
\end{equation*}
\end{proof}

\subsection{Proof of Theorem \ref{thm:main}}
\label{sec:finishpfmainthm}
Without loss of generality, we consider the electromagnetic gauge $A=\widehat{A}$. Then $D_X=X$, $D_{\s^2}=\snabla_{\s^2}$ and $D_T=T+i\q r^{-1}$ (when acting on functions). Furthermore, $D_{Y_*}\widetilde{\psi}=Y_*\widetilde{\psi}$. We will first remove the conditions $\kappa_+>0$ and $\kappa_c>0$ in Corollary \ref{cor:removesuffintkappapos} by appealing to the uniformity in $\kappa_+$ and $\kappa_c$ of the constants $C$ in \eqref{eq:iedphysicalspace} and \eqref{eq:iedphysicalspaceho}. We assume that $\psi=\psi_{\geq \ell}$.
	
	Let $\kappa_+\geq 0$ and $\kappa_c\geq 0$. Let $\{\kappa_{+,n}\}$ and $\{\kappa_{c,n}\}$ be sequences in $(0,\infty)$ such that $\kappa_{+,n}\to 0$ and $\kappa_{c,n}\to 0$ as $n\to \infty$.
	
	Let $L_{\q}$ be the operator in the proof of Proposition \ref{prop:contargopen} corresponding to the parameters $\kappa_+$ and $\kappa_c$. Denote with $L_{\q,n}$ the operator corresponding instead to the parameters $\kappa_{+,n}$ and $\kappa_{c,n}$.
	
	Let $L_{\q}\psi=rG_{\widehat{A}}$. Then $L_{\q,n}\psi=rG_{\widehat{A}}+(L_{\q,n}-L_{\q})\psi$.
	
	By multiplying $\psi$ with a smooth cut-off function that is compactly supported in $0\leq \tau'\leq \tau+1$ and equal to 1 on $0\leq \tau'\leq \tau$ and applying Corollary \ref{cor:removesuffintkappapos}, we obtain:
\begin{multline}
\label{eq:auxeestiedlimitext}
\sum_{k_1+k_2+k_3\leq K}\int_{0}^{\tau }\int_{\Sigma_{\tau}\cap\{r_H\leq r\leq r_I\}} |\snabla_{\s^2}^{k_1}Y_*^{k_2+1}T^{k_3}\widetilde{\psi}|^2+\upzeta(r)(|\snabla_{\s^2}^{k_1}Y_*^{k_2}T^{k_3+1}\widetilde{\psi}|^2+|\snabla_{\s^2}^{k_1+1}Y_*^{k_2}T^{k_3}\widetilde{\psi}|^2)\\
+|\snabla_{\s^2}^{k_1}Y_*^{k_2}T^{k_3}\widetilde{\psi}|^2\,d\sigma dr d\tau\\
+\sum_{k\leq K}\int_{0}^{\tau}\int_{\Sigma_{\tau}\setminus\{r_H\leq r\leq r_I\}} \rho_+^{2\epsilon}\rho_c^{2\epsilon}(\rho_++\kappa_+)(\rho_c+\kappa_c)(r^{-1}\Omega)^{-p}\Omega^2|XT^k\psi|^2\\
+(\rho_++\kappa_+)(\rho_c+\kappa_c)(r^{-1}\Omega)^{\delta}\Omega^{-2}(\mathbf{1}_{r\leq r_H}|KT^k\psi|^2+\mathbf{1}_{r\geq r_I}|K_cT^k\psi|^2)\\
+\rho_+^{2\epsilon}\rho_c^{2\epsilon}(\rho_++\kappa_+)(\rho_c+\kappa_c)(r^{-1}\Omega)^{2-p}\Omega^{-2}(|T^k\psi|^2+|\snabla_{\s^2}T^k\psi|^2+|T^{k+1}\psi|^2)\,d\sigma drd\tau\\
\leq C\sum_{k\leq K}\Bigg[\int_{\Sigma_0} \mathcal{E}_{p}[T^k\psi]\,d\sigma dr\\
+\int_{0}^{\tau+1}\left[\int_{\Sigma_{\tau'}} \max\{(r^{-1}\Omega)^{-p}\rho_+^{1-2\epsilon}\rho_c^{1-2\epsilon},\mathbf{1}_{0\leq \tau\leq 1}\}r^{-2}|rT^kG_{\widehat{A}}|^2+\Omega^2(1-\upzeta)|T(rT^kG_{\widehat{A}})|^2\,d\sigma dr \right]\,d\tau'\Bigg]\\
+C\sum_{k\leq K}\int_0^{\tau+1}\int_{\Sigma_{\tau'}}\Bigg[\int_{\Sigma_{\tau'}} \max\{(r^{-1}\Omega)^{-p}\rho_+^{1-2\epsilon}\rho_c^{1-2\epsilon},\mathbf{1}_{0\leq \tau\leq 1}\}r^{-2}|T^k(L_{\q,n}-L_{\q})\psi|^2\\
+\Omega^2(1-\upzeta)|T^{k+1}(L_{\q,n}-L_{\q})\psi|^2\,d\sigma dr \Bigg]d\tau'.
\end{multline}

	We can estimate:
	\begin{equation*}
		|(L_{\q,n}-L_{\q})\psi|^2\leq (\kappa_{+,n}-\kappa_+)^2 (\rho_+^2 |X^2\psi|^2+|X\psi|^2+|\psi|^2)+(\kappa_{c,n}-\kappa_c)^2(\rho_c^2 |X^2\psi|^2+|X\psi|^2+|\psi|^2).
	\end{equation*}
	By applying a Gr\"onwall inequality and standard elliptic estimates, there exists a uniform constant $C>0$, such that we can estimate for $1<  p<\min\{2, 1+\re\beta_{\ell}\}+\epsilon$:
	\begin{multline*}
		\sum_{k\leq K}\int_0^{\tau+1}\int_{\Sigma_{\tau'}}\Bigg[\int_{\Sigma_{\tau'}} \max\{(r^{-1}\Omega)^{-p}\rho_+^{1-2\epsilon}\rho_c^{1-2\epsilon},\mathbf{1}_{0\leq \tau\leq 1}\}r^{-2}|T^k(L_{\q,n}-L_{\q})\psi|^2\\
+\Omega^2(1-\upzeta)|T^{k+1}(L_{\q,n}-L_{\q})\psi|^2\,d\sigma dr \Bigg]d\tau'\\
		\leq C \left[(\kappa_{+,n}-\kappa_+)^{2}+(\kappa_{c,n}-\kappa_c)^{2}\right]e^{C\tau}\sum_{k_1+k_2\leq N+2}\int_{\Sigma_0}\mathcal{E}_2[X^{k_1}T^{k_2}\psi]\,d\sigma dr.
	\end{multline*}
	For any fixed $\tau$ and for initially smooth and compactly supported $\psi$, the right-hand side above is finite and vanishes in the limit $n\to \infty$.
		
	Furthermore, as the constant $C$ appearing on the RHS of \eqref{eq:auxeestiedlimitext} does not depend on $n$, we can conclude that \eqref{eq:auxeestiedlimitext} also holds without the third integral on the RHS and we obtain in particular future integrability of $\psi$.
	
	By a standard density argument, we conclude that \eqref{eq:iedphysicalspace} and \eqref{eq:iedphysicalspaceho} also hold \underline{without} the assumption of future integrability in the case of more general initial data and inhomogeneities for which the integrals appearing on the right-hand side of \eqref{eq:iedphysicalspace} and \eqref{eq:iedphysicalspaceho}  are finite.
	
	To include the energy flux term on the right-hand side of \eqref{eq:mainied}, we simply couple \eqref{eq:iedphysicalspaceho} with the physical-space based $(\Omega^{-1}r)^p$-weighted energy estimates of Proposition \ref{prop:rpestphysspace}.

\appendix
\section{A physical-space integrated energy estimate modulo zeroth-order terms}
\label{sec:purelyphysied}
This section provides a purely physical-space-based energy method of deriving integrated energy estimates for \eqref{eq:maineqradfieldtilde} \textbf{with spacetime integrals of zeroth-order terms on the right-hand side} and may be read independently of the remainder of the paper. Furthermore, when restricting to sufficiently large angular frequencies $\ell$, the zeroth-order terms on the right-hand side can be removed to obtain a genuine integrated energy estimate.

The motivation for considering a purely-physical space based method is that, when reformulated in terms of \emph{twisted energy currents}, it generalizes to spacetime backgrounds that are close to Reissner--Nordstr\"om spacetimes, and hence, it may be applied to derive higher-order energy estimates for quasilinear problems, where simply treating nonlinearities as inhomogeneous terms of a linear problem would result in a loss of derivatives. We refer the reader to the discussion and set-up in \cite{dhrt22, dhrt24}.

\textbf{In this section, we will assume that $\kappa_c=0$ for the sake of convenience.} Let $\widetilde{\psi}$ be a solution to \eqref{eq:maineqradfieldtilde}, i.e.\
\begin{multline}
\label{eq:maineqradfieldtilde2}
r\Omega^2 G_{\widetilde{A}}=Y_*^2\widetilde{\psi}- T^2\widetilde{\psi}+r^{-2}\Omega^2\slashed{\Delta}_{\s^2}\widetilde{\psi}-2i \q r^{-1} T\widetilde{\psi}+\left[\q^2r^{-2}- r^{-1}\Omega^2\frac{d\Omega^2}{dr} \right]\widetilde{\psi}\\
=Y_*^2\widetilde{\psi}- K_+^2\widetilde{\psi}+r^{-2}\Omega^2\slashed{\Delta}_{\s^2}\widetilde{\psi}+2i \q r^{-1} K_+\widetilde{\psi}+\left[\q^2\rho_+^2- r^{-1}\Omega^2\frac{d\Omega^2}{dr} \right]\widetilde{\psi}.
\end{multline}
Define:
\begin{equation*}
	V_{\q}(r):=\q^2r^{-2}- r^{-1}\Omega^2\frac{d\Omega^2}{dr}.
\end{equation*}

In the proposition below, we obtain an integrated energy estimate up to a spacetime integral of $|\widetilde{\psi}|^2$, supported away from the photon sphere at $r=r_{\sharp}$.
\begin{proposition}
Let $r^{-1}\psi$ be a solution to \eqref{eq:CSF} with $A=\widehat{A}$ and $\Lambda=0$. Denote with $r^{-1}\widetilde{\psi}$ the associated solution corresponding to $A=\widetilde{A}$, which solves \eqref{eq:maineqradfieldtilde2}.

Then, for any $\delta>0$, there exists a constant $C=C(\h, \q,\delta)>0$ such that for all $0\leq \tau_1<\tau_2<\infty$:
	\begin{multline}
\label{eq:physspacecurrentfull}
\int_{\Sigma_{\tau_2}}\mathcal{E}_0[\psi]\,d\sigma dr +\int_{\mathcal{I}^+\cap \{\tau_1\leq \tau\leq \tau_2\}}|T\psi|^2\,d\sigma d\tau+\int_{\mathcal{H}^+\cap \{\tau_1\leq \tau\leq \tau_2\}}|K_+\psi|^2\,d\sigma d\tau\\
+\int_{\tau_1}^{\tau_2}\int_{\Sigma_{\tau}}(1-r^{-1}r_{\sharp})^2(r^{-3}(\rho_++\kappa_+)|\snabla_{\s^2}\psi|^2+r^{-2}|T\widetilde{\psi}|^2)+(r^{-1}\Omega)^{\delta}r^{-1}\rho_+^{-1}|Y_*\widetilde{\psi}|^2+(1-\upzeta)|\psi|^2\,d\sigma dr d\tau\\
	\leq C\int_{\tau_1}^{\tau_2}\int_{\Sigma_{\tau}}\upzeta r^{-2}|\psi|^2\,d\sigma dr d\tau+C\q^2\int_{\Sigma_{\tau_2}}r^{-2}|\psi|^2\,d\sigma dr+C\int_{\tau_1}^{\tau_2}\int_{\Sigma_{\tau}}(\Omega^{-1}r)^{\delta}r^{-1}\rho_+r^{-2}|r^3G_{\widehat{A}}|^2\,d\sigma dr\\
	+C\left|\int_{\tau_1}^{\tau_2}\int_{\Sigma_{\tau} }\re\left(\overline{T\psi} rG_{\widehat{A}}\right)\,d\sigma dr d\tau\right|.
\end{multline}
Furthermore, there exists an $L\in \N_0$, so that if we assume in addition that $\psi=\psi_{\geq L}$, then the spacetime integral of $\upzeta r^{-2}|\psi|^2$ and the $\Sigma_{\tau_2}$-integral of $\q^2 r^{-2}|\psi|^2$ on the right-hand side can be removed.
\end{proposition}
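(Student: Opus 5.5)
The plan is to derive \eqref{eq:physspacecurrentfull} purely in physical space, by transplanting the frequency-space currents of \S\ref{sec:microlocenergy} to vector-field multipliers applied to \eqref{eq:maineqradfieldtilde2}. Concretely, we multiply \eqref{eq:maineqradfieldtilde2} by $\overline{(2fY_*+f')\widetilde{\psi}}$, which is the physical-space analogue of $j_3^f$, and by $\overline{K_+\widetilde{\psi}}$ near the horizon and $\overline{T\widetilde{\psi}}$ away from it, which are the analogues of $\chi_Kj^K$ and $\chi_Tj^T$. We then take real parts and integrate over $\{\tau_1\leq\tau\leq\tau_2\}$ with respect to $dt\,dr_*\,d\sigma$, rewritten in terms of $d\tau\, dr\, d\sigma$. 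The boundary fluxes along $\Sigma_{\tau_i}$, $\mathcal{H}^+$ and $\mathcal{I}^+$ are expressed in the regular gauge $\widehat{A}$ via \eqref{eq:uppsihattilde1}; there the phase factors only generate zeroth-order terms with coefficient $\q$, and these produce the $\q^2 r^{-2}|\psi|^2$ term on $\Sigma_{\tau_2}$.

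The first step is the Morawetz current. We take $f=f(r_*)$ as in Proposition \ref{prop:srhighfreqest}: $f$ vanishes at $r_\sharp$, $f'>0$, $f\to\mp(1-(r^{-1}\Omega)^{\delta})$ as $r_*\to\mp\infty$, and $f=\frac{2}{\pi}\arctan(a(r_*-r_*(r_\sharp)))$ near $r_\sharp$. The resulting bulk term is
\begin{equation*}
2f'|Y_*\widetilde{\psi}|^2-f\,\partial_{r_*}(r^{-2}\Omega^2)|\snabla_{\s^2}\widetilde{\psi}|^2+2f'|T\widetilde\psi|^2\text{-type terms}-\left(fV_{\q}'+\tfrac12 f'''\right)|\widetilde{\psi}|^2,
\end{equation*}
and the charge cross term $2i\q r^{-1}T\widetilde\psi$ contributes $\q f' r^{-2}\,\mathrm{Im}(\ldots)$. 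Since $-f\,\partial_{r_*}(r^{-2}\Omega^2)\gtrsim (1-r^{-1}r_\sharp)^2 r^{-3}(\rho_++\kappa_+)\Omega^2$, the angular term has the right sign with a degeneration at the photon sphere. The $|T\widetilde\psi|^2$ control also degenerates there, so we add a Lagrangian correction $h\,\re(\widetilde\psi\,\overline{\cdot})$ with $h\approx f'$ to trade $|Y_*|^2$ for $|T|^2$. The resulting error terms are the $\q$ cross terms and all zeroth-order terms. We bound the cross terms by Young's inequality, with an $\epsilon$ portion absorbed into the good $|T|^2$ and $|Y_*|^2$ bulk away from $r_\sharp$. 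Everything zeroth order is simply moved to the right-hand side, and where it is supported away from $r_\sharp$ it is $\lesssim \upzeta r^{-2}|\psi|^2$. Near $r_\sharp$ we keep the full $(1-\upzeta)|\psi|^2$ on the left-hand side, exploiting that $-fV'_{\q}-\frac12f'''$ is positive near $r_\sharp$ for $a$ large, since $-f'''>0$ there and is of size $a^3$.

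The second step handles the boundary terms. In the charged setting superradiance prevents a positive conserved energy, so we add $E$ times the $\chi_K K_+$- and $\chi_T T$-energy identities. Their bulk error is supported in $\{|r-r_\sharp|\leq\eta\}$ and has the form $\eta^{-1}|\tomega$-type$|\cdot|Y_*\widetilde\psi||\widetilde\psi|$. We bound it via Young's inequality by the $Y_*$ bulk plus $C|\psi|^2$ near $r_\sharp$. This is where I expect the main obstacle: the $|\psi|^2$ near $r_\sharp$ would have to be absorbed into the $(1-\upzeta)|\psi|^2$ bulk gained in the first step, which requires a careful ordering of $E$, $\eta$ and $a$. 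If this fails, we instead place the $T$-current cut-off transition away from $r_\sharp$, so that the error is $\upzeta$-supported and goes to the right-hand side. The horizon and $\mathcal{I}^+$ fluxes of the $K_+$ and $T$ currents give $|K_+\psi|^2$ and $|T\psi|^2$, which combine with the $f$-current fluxes into coercive terms. The $\Sigma_{\tau_2}$ flux yields $\mathcal{E}_0[\psi]$ modulo the $\q^2r^{-2}|\psi|^2$ error. The inhomogeneity terms are estimated by Cauchy--Schwarz with weight $(\Omega^{-1}r)^{\delta}r^{-1}\rho_+$ against the degenerate $(r^{-1}\Omega)^{\delta}$ bulk, except for the $T$-multiplier term, which we keep as stated.

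Finally, for $\psi=\psi_{\geq L}$ we use $|\snabla_{\s^2}\psi|^2\geq L(L+1)|\psi|^2$ away from $r_\sharp$, where the angular bulk is non-degenerate, to absorb the $\upzeta r^{-2}|\psi|^2$ term. Near the boundary regions we use the good $(r^{-1}\Omega)^{\delta}$ weights together with a Hardy inequality. The $\q^2 r^{-2}|\psi|^2$ term on $\Sigma_{\tau_2}$ is likewise absorbed into $r^{-2}|\snabla_{\s^2}\psi|^2\subset\mathcal{E}_0$ for $L$ large.
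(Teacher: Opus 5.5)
Your overall architecture matches the paper's: the $\arctan$-profile Morawetz multiplier $(2fY_*+f')\widetilde\psi$, an extra current for degenerate $T$-control, localized $K_+$/$T$ energy currents with a large constant to beat the bad boundary terms of the $f$-current, Young's inequality, and Poincar\'e for $\psi_{\geq L}$. The boundary-term step you flag as the main obstacle is not one. The paper takes exactly your fallback: it uses $\chi_{r_1}\overline{K_+\widetilde\psi}+(1-\chi_{r_1})\overline{T\widetilde\psi}$ with the transition near $\mathcal{H}^+$. The only bulk error is then $\frac{d\chi_{r_1}}{dr}\re(\overline{i\q\widetilde\psi}\,Y_*\widetilde\psi)$, which is supported away from $r_\sharp$ and splits into $\epsilon|Y_*\widetilde\psi|^2$ plus an admissible zeroth-order right-hand-side term. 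The $\q^2\int_{\Sigma_{\tau_2}}r^{-2}|\psi|^2$ term comes mainly from the non-coercive $-\q^2(\ldots)|\widetilde\psi|^2$ in that energy density.

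The genuine gap is at the photon sphere, where your bulk computation is wrong in a way that matters.
\begin{itemize}
\item The multiplier $(2fY_*+f')\widetilde\psi$ produces \emph{no} $|T\widetilde\psi|^2$ bulk: the two $f'|T\widetilde\psi|^2$ contributions coming from $-T^2\widetilde\psi$ cancel.
\item After integrating the charge term $-2i\q r^{-1}T\widetilde\psi$ by parts in $T$ and $Y_*$, the $f'$-contributions also cancel. The remaining bulk is $-2\q fr^{-2}\Omega^2\re(\widetilde\psi\,\overline{iT\widetilde\psi})$, with coefficient $f$, not $f'$.
\end{itemize}
With your coefficient $\q f'r^{-2}$, which is of size $\alpha$ at $r_\sharp$, the term could not be closed. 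The quantity $\re(i\overline{\widetilde\psi}\,T\widetilde\psi)$ is not a total derivative, and there is no non-degenerate $T$-control at the trapped set. Your Young step is only carried out away from $r_\sharp$, so the photon sphere is left unaddressed.

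The missing mechanism is the following. Near $r_\sharp$ one has $|f|\leq\frac{2}{\pi}\alpha|r_*|\lesssim\alpha|r-r_\sharp|$, so Young gives
\[
\epsilon\alpha^2(1-r^{-1}r_\sharp)^2r^{-2}\Omega^2|T\widetilde\psi|^2+C\epsilon^{-1}\q^2r^{-2}\Omega^2|\widetilde\psi|^2 .
\]
Choose $\epsilon=\alpha^{-5/2}$. The first term is then absorbed by the quadratically degenerate $T$-bulk. The second is dominated by $-\frac12f'''\gtrsim\alpha^3|\widetilde\psi|^2$ on $|r_*|\lesssim\alpha^{-1}$.

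Relatedly, a Lagrangian correction with $h\approx f'$ gains $|T\widetilde\psi|^2$ only at the cost of $h|Y_*\widetilde\psi|^2+h\,r^{-2}\Omega^2|\snabla_{\s^2}\widetilde\psi|^2$. That angular cost is non-degenerate at $r_\sharp$, whereas the $f$-current's angular gain vanishes quadratically there. So the $T$-current must itself vanish quadratically at $r_\sharp$. The paper uses the multiplier $-2y\overline{Y_*\widetilde\psi}$ with $y=(1-r^{-1}r_\sharp)^3$. Its bulk $y'(|Y_*\widetilde\psi|^2+|T\widetilde\psi|^2)$ has a negative angular part, which the $f$-current dominates for $\alpha$ large.
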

\begin{proof}
	Let $f:\R_{r_*}\to \R$ and assume that $r_*(r_{\sharp})=0$. We consider:
	\begin{equation*}
		\re\left((f' \widetilde{\psi}+2fY_*\widetilde{\psi})r\Omega^2\overline{G_{\tilde{A}}}\right).
	\end{equation*}
	We can apply Lemma \ref{lm:bulkterms} with $\kappa_c=0$ and with $\widetilde{\psi}$ taking the role of $u$ and $\ell(\ell+1)|u|^2$ replaced by $|\snabla_{\s^2}\widetilde{\psi}|^2$, with the main difference being the appearance of additional total $T$-derivative terms. For this reason, we will focus on the following term:
	\begin{equation*}
		-\re\left((f' \widetilde{\psi}+2fY_*\widetilde{\psi})\overline{(T^2\widetilde{\psi}+2i\q r^{-1}T\widetilde{\psi}})\right).
	\end{equation*}
	Note that
	\begin{equation*}
		-\re\left(f' \widetilde{\psi}\overline{T^2\widetilde{\psi}}\right)=-T\left[\re\left(f' \widetilde{\psi}\overline{T\widetilde{\psi}}\right)\right]+f'|T\widetilde{\psi}|^2
	\end{equation*}
	and that
	\begin{multline*}
		-\re\left(2fY_*\widetilde{\psi}\overline{T^2\widetilde{\psi}}\right)=-T\left[\re\left(2fY_*\widetilde{\psi}\overline{T\widetilde{\psi}}\right)\right]+\re\left(2fY_*T\widetilde{\psi}\overline{T\widetilde{\psi}}\right)\\
		=T\left[\re\left(2fY_*\widetilde{\psi}\overline{T\widetilde{\psi}}\right)\right]+Y_*\left[f|T\widetilde{\psi}|^2\right]-f'|T\widetilde{\psi}|^2.
			\end{multline*}
	Furthermore,
	\begin{multline*}
	4\re(i\q r^{-1}fY_*\widetilde{\psi}\overline{T\widetilde{\psi}})=	4T\left[\re(i\q r^{-1}fY_*\widetilde{\psi}\overline{\widetilde{\psi}})\right]-4\re(i\q r^{-1}fY_*T\widetilde{\psi}\widetilde{\psi})\\ 
	=4T\left[\re(i\q r^{-1}fY_*\widetilde{\psi}\overline{\widetilde{\psi}})\right]-4Y_*\left[\re(i\q r^{-1}fT\widetilde{\psi}\overline{\widetilde{\psi}})\right]+4(fr^{-1})'\re(i\q T\widetilde{\psi}\overline{\widetilde{\psi}})+4\re(i\q r^{-1}fT\widetilde{\psi}Y_*\overline{\widetilde{\psi}}).
	\end{multline*}
Note that the left-hand side is equal to minus the term on the very right-hand side, so we obtain:
\begin{equation*}
	4\re(iqr^{-1}fY_*\widetilde{\psi}\overline{T\widetilde{\psi}})=2T\left[\re(i\q r^{-1}fY_*\widetilde{\psi}\overline{\widetilde{\psi}})\right]-2Y_*\left[\re(i\q r^{-1}fT\widetilde{\psi}\overline{\widetilde{\psi}})\right]+2(fr^{-1})'\re(\widetilde{\psi}\overline{i\q T\widetilde{\psi}}).
\end{equation*}
	Hence, we conclude that:
	\begin{multline*}
		-\re\left((f' \widetilde{\psi}+2fY_*\widetilde{\psi})\overline{(T^2\widetilde{\psi}+2i\q r^{-1}T\widetilde{\psi}})\right)=T\left[\re\left(2fY_*\widetilde{\psi}\overline{T\widetilde{\psi}}\right)-\re\left(f' \widetilde{\psi}\overline{T\widetilde{\psi}}\right)+2\re(i\q r^{-1}fY_*\widetilde{\psi}\overline{\widetilde{\psi}})\right]\\+Y_*\left[f|T\widetilde{\psi}|^2-2\re(i\q r^{-1}fT\widetilde{\psi}\overline{\widetilde{\psi}})\right]-2fr^{-2}\Omega^2\re(\widetilde{\psi}\overline{i\q T\widetilde{\psi}}).
	\end{multline*}
	We are left with the following identity:
	\begin{multline*}
		2f'|Y_*\widetilde{\psi}|^2-f\Omega^2\frac{d}{dr}(\Omega^2r^{-2})|\snabla_{\s^2}\widetilde{\psi}|^2-fV'_{\q}|\widetilde{\psi}|^2-\frac{1}{2}f'''|\widetilde{\psi}|^2-2fr^{-2}\Omega^2\re(\widetilde{\psi}\overline{i\q T\widetilde{\psi}})\\
		=T\left[\re\left(2fY_*\widetilde{\psi}\overline{T\widetilde{\psi}}\right)-\re\left(f' \widetilde{\psi}\overline{T\widetilde{\psi}}\right)+2\re(i\q r^{-1}fY_*\widetilde{\psi}\overline{\widetilde{\psi}})\right]+{\rm div}_{\s^2}(\ldots)\\
		+Y_*\left[f|T\widetilde{\psi}|^2-2\re(i\q r^{-1}fT\widetilde{\psi}\overline{\widetilde{\psi}})+f|Y_*\widetilde{\psi}|^2-f\Omega^2r^{-2}|\snabla_{\s^2}\widetilde{\psi}|^2-fV_q|\widetilde{\psi}|^2+f'\re(Y_*\widetilde{\psi}\overline{\widetilde{\psi}})-\frac{1}{2}f''|\widetilde{\psi}|^2\right]\\
		-\re\left((f' \widetilde{\psi}+2fY_*\widetilde{\psi})r\Omega^2\overline{G_{\tilde{A}}}\right).
	\end{multline*}
Let $f(r_*)=-1+(r^{-2}\Omega^2)^{\frac{\delta}{2}}$ for $r_*\leq r_*(r_3)$, $f(r_*)=1-(r^{-2}\Omega^2)^{\frac{\delta}{2}}$ for $r_*\geq r_*(R_3)$ and $f(r_*)=\frac{2}{\pi} \arctan(\alpha r_*)$ for $r_*(r_2)\leq r_*\leq r_*(R_2)$. Then, 
\begin{equation*}
	\Omega^{-2}f'(r_*)=\begin{cases}
		\frac{\delta}{2} r^{-2}(r^{-2}\Omega^2)^{\frac{\delta}{2}-1}\left[\rho_+(1+O_{\infty}(\rho_+))+\kappa_+(1+O_{\infty}(\rho_+))\right]\quad &(r\leq r_3),\\
		\frac{\delta}{2} r^{-1-\delta}(1+O_{\infty}(r^{-1}))\quad &(r\geq R_3)
	\end{cases}
\end{equation*}
and for $r_*(r_2)\leq r_*\leq r_*(R_2)$:
\begin{align*}
f'(r_*)=&\: \frac{2\alpha}{\pi(1+\alpha^2r_*^2)},\\
f''(r_*)=&\: -\frac{4\alpha^3r_*	}{\pi(1+\alpha^2r_*^2)^2},\\
f'''(r_*)=&\: -\frac{4\alpha^3(1-3\alpha^2r_*^2)}{\pi(1+\alpha^2r_*^2)^3}.
\end{align*}
We can choose $f$ in $(r_3,r_2)\cup (R_2,R_3)$ such that $f$ is a $C^3$ function, $f'>0$ globally, and there exists a constant $C>0$ such that globally:
\begin{align*}
|f'''(r_*)|\leq &\: C |r_*|^{-3}.
\end{align*}
See also Proposition \ref{prop:srhighfreqest}, where the same $f$ appears.

Note that for $|r-r_{\sharp}|\leq 2\eta$:
\begin{equation*}
	|\arctan(\alpha r_*)|\leq \alpha |r_*|\leq 2\alpha \Omega^{-2}(r_{\sharp})|r-r_{\sharp}|,
\end{equation*}
if we take $\eta$ suitably small.

We can therefore apply Young's inequality to estimate for $|r-r_{\sharp}|\leq \eta$:
\begin{multline*}
	|2fr^{-2}\Omega^2\re(\widetilde{\psi}\overline{i\q T\widetilde{\psi}})|\leq \epsilon \arctan^2(\alpha r_*)r^{-2}\Omega^2|T\widetilde{\psi}|^2+\epsilon^{-1}\q^2\Omega^2r^{-2}|\widetilde{\psi}|^2\\
	\leq \epsilon \alpha^2 r^{-2}(1-r^{-1}r_{\sharp})^2 \Omega^2 |T\widetilde{\psi}|^2+C\q^2\epsilon^{-1}\Omega^2r^{-2}|\widetilde{\psi}|^2.
\end{multline*}
We obtain for $|r_*|\leq \frac{1}{2}\alpha$ and $r_*(r_2)\leq r_*\leq r_*(R_2)$:
\begin{equation*}
	-\frac{1}{2}f'''|\psi|^2\geq \frac{3}{2\pi (\frac{5}{4})^3}\alpha^3|\psi|^2
\end{equation*}
Taking $\eta>0$ suitably small and $\alpha^3\gg \epsilon^{-1}$, we therefore obtain for $|r-r_{\sharp}|\leq \eta$:
\begin{equation*}
	|2fr^{-2}\Omega^2\re(\widetilde{\psi}\overline{i\q T\widetilde{\psi}})|\leq \epsilon \alpha^2 r^{-2}(1-r^{-1}r_{\sharp})^2 \Omega^2 |T\widetilde{\psi}|^2-\frac{3}{4\pi (\frac{5}{4})^3}\alpha^3|\psi|^2.
\end{equation*}

Furthermore, there exists a constant $c>0$, such that:
\begin{equation*}
	-f\frac{d}{dr}(\Omega^2r^{-2})|\snabla_{\s^2}\widetilde{\psi}|^2\geq c r^{-3}(1-r^{-1}r_{\sharp})^2(\rho_++\kappa_+)|\snabla_{\s^2}\widetilde{\psi}|^2.
\end{equation*}
Hence, taking $\epsilon=\alpha^{-\frac{5}{2}}$, we have that $\alpha^3\gg \epsilon^{-1}\gg \alpha^{2}$ and we can integrate in $\tau_1\leq \tau \leq \tau_2$ to obtain:
\begin{multline}
\label{eq:physspacefcurrent}
	c\int_{\tau_1}^{\tau_2}\int_{\Sigma_{\tau}}(1-r^{-1}r_{\sharp})^2(\rho_++\kappa_+)r^{-3}|\snabla_{\s^2}\psi|^2+(1-\upzeta)|\psi|^2+(r^{-1}\Omega)^{\delta}r^{-1}\rho_+^{-1}|Y_*\widetilde{\psi}|^2\,d\sigma dr d\tau\\
	\leq C\int_{\tau_1}^{\tau_2}\int_{\Sigma_{\tau}}\upzeta r^{-2}|\psi|^2+ \alpha^{-\frac{1}{2}} r^{-2}(1-r^{-1}r_{\sharp})^2 |T\widetilde{\psi}|^2\,d\sigma dr d\tau\\
	+2\int_{\mathcal{H}^+\cap \{\tau_1\leq \tau\leq \tau_2\}}|K_+\widetilde{\psi}|^2\,d\sigma d\tau+2\int_{\mathcal{I}^+\cap \{\tau_1\leq \tau\leq \tau_2\}}|T\widetilde{\psi}|^2\,d\sigma d\tau\\
	 +C\sum_{i=1}^2\int_{\Sigma_{\tau_i}}\mathcal{E}_0[\psi]\,d\sigma dr d\sigma+C\int_{\tau_1}^{\tau_2}\int_{\Sigma_{\tau}}|rG_{\widehat{A}}|(|{Y_*}\widetilde{\psi}|+|\widetilde{\psi}|)\,d\sigma dr.
\end{multline}

We can add control over $T$-derivatives by introducing a function $y$ and considering the vector field multiplier $-2y\overline{Y_*\widetilde{\psi}}$. 
By the computations above involving $T\widetilde{\psi}$ together with Lemma \ref{lm:bulkterms}, we obtain:
\begin{multline*}
	y'(|Y_*\widetilde{\psi}|^2+|T\widetilde{\psi}|^2)-\Omega^2\frac{d}{dr}(y \Omega^2r^{-2})|\snabla_{\s^2}\widetilde{\psi}|^2-(yV_0)'|\widetilde{\psi}|^2-2(fr^{-1})'\re(\widetilde{\psi}\overline{iqT\widetilde{\psi}})\\
	=T\left[\re\left(2fY_*\widetilde{\psi}\overline{T\widetilde{\psi}}\right)+2\re(i\q r^{-1}fY_*\widetilde{\psi}\overline{\widetilde{\psi}})\right]\\
	+Y_*\left[f|T\widetilde{\psi}|^2-2\re(i\q r^{-1}fT\widetilde{\psi}\overline{\widetilde{\psi}})+y|Y_*\widetilde{\psi}|^2-\Omega^2r^{-2}y |\snabla_{\s^2}\widetilde{\psi}|^2-V_0|\widetilde{\psi}|^2\right]-2\re\left(y Y_*\widetilde{\psi}\cdot r\Omega^2\overline{G_{\tilde{A}}}\right).
\end{multline*}
Now take $y(r)=(1-r^{-1}r_{\sharp})^3$ and combine the above estimate with \eqref{eq:physspacefcurrent}, with $\alpha$ suitably large to obtain:
\begin{multline}
\label{eq:physspacecurrent}
	c\int_{\tau_1}^{\tau_2}\int_{\Sigma_{\tau}}(1-r^{-1}r_{\sharp})^2((\rho_++\kappa_+)r^{-3}|\snabla_{\s^2}\psi|^2+r^{-2}|T\widetilde{\psi}|^2)+(r^{-1}\Omega)^{\delta}r^{-1}\rho_+^{-1}|Y_*\widetilde{\psi}|^2+(1-\upzeta)|\psi|^2\,d\sigma dr d\tau\\
	\leq C\int_{\tau_1}^{\tau_2}\int_{\Sigma_{\tau}}\upzeta r^{-2}|\psi|^2\,d\sigma dr d\tau+2\int_{\mathcal{H}^+\cap \{\tau_1\leq \tau\leq \tau_2\}}|K_+\widetilde{\psi}|^2\,d\sigma d\tau+2\int_{\mathcal{I}^+\cap \{\tau_1\leq \tau\leq \tau_2\}}|T\widetilde{\psi}|^2\,d\sigma d\tau\\
	 +C\sum_{i=1}^2\int_{\Sigma_{\tau_i}}\mathcal{E}_0[\psi]\,d\sigma dr d\sigma+C\int_{\tau_1}^{\tau_2}\int_{\Sigma_{\tau}}|rG_{\widehat{A}}|(|{Y_*}\widetilde{\psi}|+(r^{-1}\Omega)^{\delta}r^{-1}\rho_+^{-1}|\widetilde{\psi}|)\,d\sigma dr.
\end{multline}
We are left with estimating the energy fluxes on the RHS. 

We evaluate:
\begin{equation*}
	-\re\left(( \chi_{r_1}\overline{K_+\widetilde{\psi}}+ (1-\chi_{r_1})\overline{T\widetilde{\psi}}) rG_{\widetilde{A}}\right)
\end{equation*}
to obtain:
\begin{multline*}
	\frac{1}{2}T\left[(1-\chi_{r_1})|T\widetilde{\psi}|^2+\chi_{r_1}|K_+\widetilde{\psi}|^2+|Y_*\widetilde{\psi}|^2+r^{-2}\Omega^2|\snabla_{\s^2}\widetilde{\psi}|^2-\q^2((1-\chi_{r_1})r^{-2}+\chi_{r_1}\rho_+^2)|\widetilde{\psi}|^2+r^{-1}\Omega^2\frac{d\Omega^2}{dr}|\widetilde{\psi}|^2\right]\\
-Y_*\left[\chi_{r_1}\re(\overline{K_+\widetilde{\psi}}\cdot Y_*\widetilde{\psi})\right]-Y_*\left[(1-\chi_{r_1})\re(\overline{T\widetilde{\psi}}\cdot Y_*\widetilde{\psi})\right]\\
	=-\re\left(( \chi_{r_1}\overline{K_+\widetilde{\psi}}+ (1-\chi_{r_1})\overline{T\widetilde{\psi}}) rG_{\widetilde{A}}\right)+\frac{d\chi_{r_1}}{dr}\re(\overline{i \q \widetilde{\psi}}\cdot Y_*\widetilde{\psi}).
\end{multline*}
After integration, we obtain:
\begin{multline*}
	\int_{\mathcal{I}^+\cap \{\tau_1\leq \tau\leq \tau_2\}}|T\psi|^2\,d\sigma d\tau+\int_{\mathcal{H}^+\cap \{\tau_1\leq \tau\leq \tau_2\}}|K_+\psi|^2\,d\sigma d\tau+\int_{\Sigma_{\tau_2}}\mathcal{E}_0[\psi]\,d\sigma dr \leq C \int_{\Sigma_{\tau_1}}\mathcal{E}_0[\psi]\,d\sigma dr\\
+\int_{\tau_1}^{\tau_2}\int_{\Sigma_{\tau}\cap \supp \frac{d\chi_{r_1}}{dr} } \epsilon |Y_*\widetilde{\psi}|^2+C\q^2\epsilon^{-1}|\widetilde{\psi}|^2\,d\sigma dr d\tau+C\q^2\int_{\Sigma_{\tau_2}}r^{-2}|\psi|^2\,d\sigma dr\\
+C\left|\int_{\tau_1}^{\tau_2}\int_{\Sigma_{\tau} }\re\left(( \chi_{r_1}\overline{K_+\widetilde{\psi}}+ (1-\chi_{r_1})\overline{T\widetilde{\psi}}) rG_{\widetilde{A}}\right)\,d\sigma dr d\tau\right|.
\end{multline*}
 By combining the above estimate with \eqref{eq:physspacecurrent}, we obtain:
	\begin{multline*}
\sup_{\tau\in [\tau_1,\tau_2]}\int_{\Sigma_{\tau}}\mathcal{E}_0[\psi]\,d\sigma dr +\int_{\mathcal{I}^+\cap \{\tau_1\leq \tau\leq \tau_2\}}|T\psi|^2\,d\sigma d\tau+\int_{\mathcal{H}^+\cap \{\tau_1\leq \tau\leq \tau_2\}}|K_+\psi|^2\,d\sigma d\tau\\
+\int_{\tau_1}^{\tau_2}\int_{\Sigma_{\tau}}(1-r^{-1}r_{\sharp})^2(r^{-3}(\rho_++\kappa_+)|\snabla_{\s^2}\psi|^2+r^{-2}|T\widetilde{\psi}|^2)+(r^{-1}\Omega)^{\delta}r^{-1}\rho_+^{-1}|Y_*\widetilde{\psi}|^2+(1-\upzeta=|\psi|^2\,d\sigma dr d\tau\\
	\leq C\int_{\tau_1}^{\tau_2}\int_{\Sigma_{\tau}}\upzeta r^{-2}|\psi|^2\,d\sigma dr d\tau+C\q^2\int_{\Sigma_{\tau_2}}r^{-2}|\psi|^2\,d\sigma dr+C\int_{\tau_1}^{\tau_2}\int_{\Sigma_{\tau}}|rG_{\widehat{A}}|(|{Y_*}\widetilde{\psi}|+(r^{-1}\Omega)^{\delta}r^{-1}\rho_+^{-1}|\widetilde{\psi}|)\,d\sigma dr\\
	+C\left|\int_{\tau_1}^{\tau_2}\int_{\Sigma_{\tau} }\re\left(( \chi_{r_1}\overline{K_+\widetilde{\psi}}+ (1-\chi_{r_1})\overline{T\widetilde{\psi}}) rG_{\widetilde{A}}\right)\,d\sigma dr d\tau\right|.
\end{multline*}

We conclude the proof of \eqref{eq:physspacecurrent} by applying Young's inequality to estimate $|rG_{\widehat{A}}|(|{Y_*}\widetilde{\psi}|+(r^{-1}\Omega)^{\delta}r^{-1}\rho_+^{-1}|\widetilde{\psi}|)$. Note that we can absorb the spacetime integral of $\upzeta |\psi|^2$ into the spacetime integral of $|\snabla_{\s^2}\psi|^2$ on the left-hand if we restrict to $\psi=\psi_{\geq L}$ with $L$ suitably large and we apply a Poincar\'e inequality on $\s^2$. Similarly, we can absorb the $\q^2 r^{-2}|\psi|^2$ flux term into the $r^{-2}|\snabla_{\s^2}\psi|^2$ term of $\mathcal{E}_0[\psi]$ when $\psi=\psi_{\geq L}$.
\end{proof}

We can further improve the above integrated estimates by coupling them to the $(\Omega^{-1}r)^p$-estimates from Proposition \ref{prop:rpestphysspace} modulo zeroth-order terms:

\begin{corollary}
\label{cor:iedmoduleo0th}
Let $\psi$ be a solution to \eqref{eq:maineqradfield}. Let $0< p<2$ and $N\in \N_0$. Let $r_+<r_H<r_I<\infty$.

Then there exists a constant $C=C(\h, p, q,r_H,r_I)>0$ such that for all $0\leq \tau_1<\tau_2<\infty$:
\begin{multline}
\label{eq:iledmod0thorder}
\sum_{k\leq N}\int_{\Sigma_{\tau_2}}\mathcal{E}_p[T^k\psi]\,d\sigma dr d\sigma+\int_{\mathcal{I}^+\cap \{\tau_1\leq \tau\leq \tau_2\}}|T^{k+1}\psi|^2\,d\sigma d\tau+\int_{\mathcal{H}^+\cap \{\tau_1\leq \tau\leq \tau_2\}}|K_+T^k\psi|^2\,d\sigma d\tau\\
+\int_{\tau_1}^{\tau_2}\int_{\Sigma_{\tau}\cap \{r\notin (r_H,r_I)\}}\mathcal{E}_{p-1}[T^k\psi]\,d\sigma dr d\tau\\
+\sum_{k_1+k_2+k_3\leq N}\int_{\tau_1}^{\tau_2}\int_{\Sigma_{\tau}\cap \{r\in (r_H,r_I)\}}(1-r^{-1}r_{\sharp})^2(|\snabla_{\s^2}^{k_1+1}T^{k_2}Y_*^{k_3}\psi|^2+|\snabla_{\s^2}^{k_1}T^{k_2+1}Y_*^{k_3}\widetilde{\psi}|^2)\\
+|\snabla_{\s^2}^{k_1}T^{k_2}Y_*^{k_3+1}\widetilde{\psi}|^2+|\snabla_{\s^2}^{k_1}T^{k_2}Y_*^{k_3}\psi|^2\,d\sigma dr d\tau\\
\leq C\int_{\tau_1}^{\tau_2}\int_{\Sigma_{\tau}}\max\{r^{-1}\rho_+(r^{-1}\Omega)^{2-p}\Omega^{-2},r^{-2}\}\upzeta |\psi|^2\,dr+C\q^2\int_{\Sigma_{\tau_2}}r^{-2}|\psi|^2\,d\sigma dr\\
+C\sum_{k\leq N}\int_{\Sigma_{\tau_1}}\mathcal{E}_p[T^k\psi]\,d\sigma dr d\sigma+C\int_{\tau_1}^{\tau_2}\int_{\Sigma_{\tau}}\rho_+ r^{-1} (\Omega^{-1}r)^pr^{-2}|T^k(r^3G_{\widehat{A}})|^2\,d\sigma dr d\tau\\
+C\left|\int_{\tau_1}^{\tau_2}\int_{\Sigma_{\tau} }\re\left(\overline{T^{k+1}\psi} rT^kG_{\widehat{A}}\right)\,d\sigma dr d\tau\right|.
\end{multline}
Furthermore, there exists an $L\in \N_0$, so that if we assume in addition that $\psi=\psi_{\geq L}$, then the spacetime integral of $\upzeta r^{-2}|\psi|^2$ and the $\Sigma_{\tau_2}$-integral of $\q^2 r^{-2}|\psi|^2$ on the right-hand side can be removed.
\end{corollary}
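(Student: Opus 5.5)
The plan is to couple the zeroth-order-modulo integrated estimate \eqref{eq:physspacecurrentfull} from the preceding proposition with physical-space $(\Omega^{-1}r)^p$-weighted estimates near $\mathcal{H}^+$ and $\mathcal{I}^+$. These weighted estimates are obtained exactly as in the proof of Proposition \ref{prop:rpestphysspace}, but without invoking any Hardy inequality to absorb the $\q$-dependent zeroth-order terms; those terms are simply moved to the right-hand side, which is what produces the $\upzeta$-weighted $|\psi|^2$ bulk term there.

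\textbf{Step 1: the far region.} In $\{r\geq r_I\}$, I multiply \eqref{eq:maineqradfield} (with $\h\equiv 0$ there; the general case is handled by the local estimates of Theorem \ref{thm:localenest}) by $-\chi_{R_1}(r^{-1}\Omega)^{-p}\Omega^2\overline{X\psi}$ and use the identity \eqref{eq:mainrpid}. This yields a bulk term with good sign
\begin{equation*}
p\,\rho_c(r^{-1}\Omega)^{-p}\Omega^2|X\psi|^2+(2-p)\rho_c(r^{-1}\Omega)^{2-p}|\snabla_{\s^2}\psi|^2 .
\end{equation*}
The error $2|\q|(r^{-1}\Omega)^{2-p}|X\psi||\psi|$ I split by Young's inequality into a small multiple of the $X$-bulk plus $C\q^2 r^{-1}\rho_+(r^{-1}\Omega)^{2-p}\Omega^{-2}|\psi|^2$, with the weight chosen to match the first term on the right-hand side of \eqref{eq:iledmod0thorder}. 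The remaining errors carry an extra factor $\rho_c$ and can be absorbed. The cut-off errors supported in $\supp \chi_{R_1}'$ are controlled by the non-degenerate bulk of \eqref{eq:physspacecurrentfull}, since that support lies away from $r_\sharp$. The $T$-flux and the $K_c$ bulk are recovered from $\mathcal{E}_0$ and from Young's inequality applied to $L\psi$, exactly as in the proof of Proposition \ref{prop:iedhighlfreq}, giving control of $\mathcal{E}_{p-1}$ in $\{r\ge r_I\}$.

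\textbf{Step 2: the near-horizon region.} In $\{r\leq r_H\}$ I proceed identically, using \eqref{eq:maineqradfieldconfhor}, the multiplier $\chi_{r_1}(r^{-1}\Omega)^{-p}\Omega^2\overline{X\psi}$, and $\rho_+$ in the role of $\rho_c$. The bound $r^{-2}\Omega^2\lesssim\rho_+(\rho_++\kappa_+)$ keeps all constants uniform in $\kappa_+$. The inhomogeneity is handled by
\begin{equation*}
|(r^{-1}\Omega)^{-p}\Omega^2 X\psi\, rG|\le\eta(\rho_++\kappa_+)(r^{-1}\Omega)^{-p}\Omega^2|X\psi|^2+C\eta^{-1}\rho_+r^{-1}(\Omega^{-1}r)^p r^{-2}|r^3G|^2 ,
\end{equation*}
which is the $G$-term appearing in \eqref{eq:iledmod0thorder}.

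\textbf{Step 3: summing and higher order.} I add a large multiple of \eqref{eq:physspacecurrentfull} to Steps 1 and 2. Commuting with the Killing field $T$ (which commutes with the operator) gives the $T^k$ terms. Elliptic estimates in $\{r_H\le r\le r_I\}$, using \eqref{eq:maineqradfieldtilde2} to trade $Y_*^2$ for $T^2$, angular, and inhomogeneous terms, give the mixed derivatives $\snabla^{k_1}T^{k_2}Y_*^{k_3}$, keeping the degenerate factor $(1-r^{-1}r_\sharp)^2$ on the $T$- and angular-derivative terms. For $\psi=\psi_{\geq L}$, the Poincaré inequality $|\snabla_{\s^2}\psi|^2\ge L(L+1)|\psi|^2$ lets me absorb both the $\upzeta$-bulk term, using the $(2-p)$ angular bulk for large $L$ and noting $\upzeta$ vanishes near $r_\sharp$, and the $\q^2$ flux term into the left-hand side.

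\textbf{Main obstacle.} The hardest part is the bookkeeping of weights in Step 1: the $\q$ cross term must be placed entirely into a $|\psi|^2$ term of the exact weight $\max\{r^{-1}\rho_+(r^{-1}\Omega)^{2-p}\Omega^{-2},r^{-2}\}$. That weight has to be compatible with the large-$L$ absorption against $(2-p)\rho(r^{-1}\Omega)^{2-p}|\snabla\psi|^2$, and the result must be uniform in $\kappa_+$. My first approach is to choose the Young parameter depending on $\ell$ only in the final absorption step.
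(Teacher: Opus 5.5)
Your proposal is correct and is essentially the paper's argument: the paper gets this corollary by coupling \eqref{eq:physspacecurrentfull} with the $(\Omega^{-1}r)^p$-weighted currents from the proof of Proposition \ref{prop:rpestphysspace}. There, the $\q$-cross term is kept as a zeroth-order error on the right-hand side instead of being absorbed via the Hardy inequality, and one then commutes with $T$ and applies elliptic estimates. The obstacle you flag is not real: near $\mathcal{H}^+$ the angular bulk is $(2-p)(\rho_++\kappa_+)(r^{-1}\Omega)^{2-p}\Omega^{-2}|\snabla_{\s^2}\psi|^2$ (the factor $\Omega^{-2}$ only looks absent near $\mathcal{I}^+$), and since $r^{-2}\Omega^2\lesssim(\rho_++\kappa_+)^2$ this dominates the Young error $\q^2(\rho_++\kappa_+)^{-1}r^{-2}(r^{-1}\Omega)^{2-p}|\psi|^2$ by a factor $\gtrsim \ell(\ell+1)/\q^2$, uniformly in $\kappa_+$, so a fixed small Young parameter with $L$ large depending on $\q$ and $p$ suffices.
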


\section{Restricted mode stability for sub-extremal Reissner--Nordstr\"om away from extremality}
\label{sec:restrmodestab}
  In this section, we establish an additional mode stability result on sub-extremal Reissner--Nordstr\"om for a restricted set of frequencies. This result is independent from the rest of the paper and is not involved in the remaining propositions. The aim of this section is to elucidate the extent to which standard methods for proving mode stability apply.
 \begin{theorem}
\label{thm:modestabsubext}
Let $\kappa_c=0$, $\kappa_+>0$ and $(\omega,\ell)\in\mathcal{F}_{\flat,\sim }\cup \mathcal{F}_{\flat, +}$ with $\q\omega\notin (0,\frac{\q^2}{r_++r_-})$. Then there exists a constant $K_{\kappa_+}=K_{\kappa_+}(\kappa_+,\gamma,L_0)>0$ such that:
\begin{equation*}
|\mathfrak{W}|\gtrsim K_{\kappa_+}.
\end{equation*}
\end{theorem}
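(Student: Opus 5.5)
The plan is to split the frequency range according to the sign of $\homega\tomega$ (here $\kappa_c=0$, so $\homega=\omega$). For non-superradiant frequencies, $\omega\tomega>0$, the argument of Lemma \ref{lm:nonsuperradwronsk} applies verbatim in $\mathcal{F}_{\flat,\sim}$: conservation of $j^K[u_+]$ gives $\omega\tomega\leq \omega\tomega|\alpha_{2,+}|^2$ and hence $|\mathfrak{W}|^2=4\omega^2|\alpha_{2,+}|^2\geq 4\omega^2\geq c\gamma$. In $\mathcal{F}_{\flat,+}$ with $\omega\tomega\geq 0$ we have $|\omega|\geq |\q_0|-\gamma$ bounded below, but $|\tomega|$ may be small, so the same identity only gives $|\mathfrak{W}|^2\gtrsim \omega\tomega$. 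There we instead use the sub-extremal regularity at $r_+$. Since $\kappa_+>0$ is fixed, $r_*=-\infty$ is a regular singular point uniformly in $\tomega$, and $u_+$ depends continuously (indeed analytically) on $\tomega$ near $0$. It therefore suffices to exclude $\mathfrak{W}=0$ on a compact set and then use continuity and compactness to obtain a uniform constant $K_{\kappa_+}$. Nonvanishing at real $\omega$ with $\omega\tomega\geq 0$ follows from the $j^K$ or $j^T$ identity: if $\mathfrak{W}=0$, then $u_+\propto u_\infty$ and both boundary terms $\omega^2|u|^2(\infty)+\omega\tomega|u|^2(-\infty)$ vanish, so $u\equiv 0$. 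At $\tomega=0$ the $j^T$ identity gives $u(\infty)=0$, and a unique continuation or Wronskian argument then forces $u\equiv0$.

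The remaining case is superradiant, $\q\omega\in(\q^2 r_+^{-1}\cdot 0,\q^2 r_+^{-1})$, restricted by hypothesis to $\q\omega\in[\tfrac{\q^2}{r_++r_-},\tfrac{\q^2}{r_+})$. Here the plan is to use a Whiting/Teixeira da Costa-type integral transformation, adapted as in Theorem \ref{thm:modestab}, or alternatively a twisted energy current. Writing $R=r^{-1}u$, the homogeneous radial equation on sub-extremal Reissner--Nordstr\"om is a confluent Heun equation with regular singular points $r_\pm$ and an irregular singular point at $\infty$. I would look for a transform $\mathbb{U}(x)=\int_{r_+}^\infty K(x,r)R(r)\,dr$ with kernel built from $e^{-2i\omega x r}(r-r_+)^{a}(r-r_-)^{b}$, with exponents fixed so that $\mathbb{U}$ solves an ODE with a \emph{real} potential and boundary behaviour making $\mathbb{J}^T[\mathbb{U}]$ sign-definite. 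The key computation is the boundary flux at the new horizon-type endpoint: its sign should be $\propto (\omega-\q/(r_++r_-))$-type factors, which, together with the original superradiant sign, is definite exactly when $\q\omega\geq \tfrac{\q^2}{r_++r_-}$. This is where the restriction in the statement enters, and it explains why the method cannot reach all superradiant frequencies. Then $\mathfrak{W}=0$ would yield a homogeneous $\mathbb{U}$ with vanishing total flux, hence $\mathbb{U}\equiv0$, hence $R\equiv0$ by injectivity of the transform. This is a contradiction.

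Quantitativity (the lower bound $K_{\kappa_+}$ rather than mere nonvanishing) follows from compactness of $\mathcal{F}_{\flat,\sim}\cup\mathcal{F}_{\flat,+}$ restricted to the stated $\omega$-set, continuity of $\mathfrak{W}$ in $\omega$ for fixed $\kappa_+>0$, and $\ell\leq L_0$. At the endpoint $\tomega=0$, which is included in $\mathcal{F}_{\flat,+}$, one must check that the transform and the flux identity still make sense; I would handle this by continuity from $\tomega\neq 0$ together with the direct $j^T$ argument above.

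The main obstacle is finding the correct kernel exponents and verifying the sign of the transformed boundary term. This is the step where I expect the threshold $\tfrac{\q^2}{r_++r_-}$ to emerge, and where the argument is genuinely frequency-restricted.
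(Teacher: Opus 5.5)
You reduce the non-superradiant frequencies to the $j^K$/$j^T$ identities and get quantitativity from compactness plus continuity of $\mathfrak{W}$ for fixed $\kappa_+>0$. That part is fine and is a legitimate alternative to the paper's route. One small correction: conservation of $j^K[u_+]$ gives $\tomega^2\le\omega\tomega|\alpha_{2,+}|^2$, so $|\mathfrak{W}|^2\ge 4\omega\tomega$, not $\ge 4\omega^2$; this is still bounded below on $\mathcal{F}_{\flat,\sim}$.

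The genuine gap is the superradiant band $\q\omega\in[\frac{\q^2}{r_++r_-},\frac{\q^2}{r_+})$, which is the only nontrivial part of the theorem. Here your proposal does not contain a proof. You say you ``would look for'' a kernel, and you name finding its exponents and the sign of the transformed horizon flux as the main obstacle. The missing idea is that nothing has to be found. Conjugate by $\widetilde R=e^{-\gamma r}(r-r_+)^{-\xi}(r-r_-)^{-\eta}R$, with $\gamma=-i\omega$, $\xi=-i\frac{\omega r_+^2-\q r_+}{r_+-r_-}$ and $\eta=i\frac{\omega r_-^2-\q r_-}{r_+-r_-}$. The radial operator is then the sub-extremal Kerr confluent Heun operator of \cite{costa20}, with $a\leftrightarrow Q$ and $L=\ell(\ell+1)-i\q$. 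Hence the transform of \cite{costa20}[Proposition 3.8] applies verbatim, with kernel $e^{-2\gamma(r_+-r_-)^{-1}(x-r_-)(r-r_-)}(r-r_-)^{\eta}(r-r_+)^{\xi}e^{-\gamma r}$. It produces $\mathbb{U}$ solving a real-potential equation with the following data:
\begin{itemize}
\item $\mathbb{V}(\infty)=\omega^2$ and $\mathbb{V}(r_+)=\frac{(r_+-r_-)^2}{4r_+^2}((r_++r_-)\omega-\q)^2$;
\item boundary conditions $\mathbb{U}'-i\omega\mathbb{U}\to0$ at $x_*=\infty$ and $\mathbb{U}'+i\frac{r_+-r_-}{r_+}(\omega-\frac{\q}{r_++r_-})\mathbb{U}\to 0$ at $x_*=-\infty$.
\end{itemize}
The conserved current $\mathbb{J}^T=-\omega\re(i\mathbb{U}'\overline{\mathbb{U}})$ then has flux difference
\[
\omega^2|\mathbb{U}|^2(\infty)+\tfrac{r_+-r_-}{r_+}\,\omega\Bigl(\omega-\tfrac{\q}{r_++r_-}\Bigr)|\mathbb{U}|^2(-\infty).
\]
The contradiction with $\mathfrak{W}=0$ comes from the explicit boundary identity $|\mathbb{U}|^2(\infty)=|\Gamma(2\xi+1)|^2|\frac{r_+-r_-}{2\omega}|^2r_+^{-2}|u(r_+)|^2$ together with $|u_+(r_+)|=1$. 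You should cite this identity rather than an unproved injectivity of the transform.

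Your heuristic for the threshold is also off. After the transform, the original sign of $\omega\tomega$ plays no role; positivity is exactly $\omega(\omega-\frac{\q}{r_++r_-})>0$. That is why the paper runs a single argument over the whole stated range, superradiant or not, and gets the quantitative bound from the proof of \cite{costa20}[Theorem 5.1]. It is also why the band $\q\omega\in(0,\frac{\q^2}{r_++r_-})$ is excluded: there the horizon factor is negative, whereas in Kerr it is $\propto\omega^2$. If you keep your compactness route, you additionally need nonvanishing at the included endpoint $\q\omega=\frac{\q^2}{r_++r_-}$. There the horizon term vanishes, so only the $\omega^2|\mathbb{U}|^2(\infty)$ term is available.
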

\begin{proof}
	It is straightforward to show that the quantity $\widetilde{R}(r)=e^{-\gamma r}(r-r_+)^{-\xi}(r-r_-)^{-\eta}R$, with $R=r^{-1}u$, satisfies the equation:
\begin{equation}
\label{eq:conflheun1}
\mathcal{T}_{r} (\widetilde{R})=e^{-\gamma r}(r-r_+)^{-\xi}(r-r_-)^{-\eta} r H,
\end{equation}
with $\mathcal{T}_r$ a confluent Heun operator, defined as follows:
\begin{multline*}
\mathcal{T}_{r} :=(r-r_+)(r-r_-)\frac{d^2}{dr^2}+\left[(2\eta+1)(r-r_+)+(2\xi+1)(r-r_-)+2\gamma (r-r_+)(r-r_-)\right]\frac{d}{dr}\\
+(2\gamma r-L)\mathbf{1},
\end{multline*}
where
\begin{align*}
\eta=&\:i\frac{\omega r_-^2-\q r_-}{r_+-r_-},\\
\xi=&-i\frac{\omega r_+^2-\q r_+}{r_+-r_-},\\
\gamma=&-i\omega,\\
L=&\:\ell(\ell+1)-i\q.
\end{align*}

In Kerr spacetimes with $a^2<M^2$, $\widetilde{R}=e^{-\gamma r}(r-r_+)^{-\xi}(r-r_-)^{-\eta}R$ also satisfies \eqref{eq:conflheun1}, but with $(\eta,\xi,\gamma,L)$ replaced by $(\eta_{\rm Kerr},\xi_{\rm Kerr},\gamma_{\rm Kerr},L_{\rm Kerr})$, where
\begin{align*}
\eta_{\rm Kerr}=&\: i \frac{\omega (r_-^2+a^2)-am}{r_+-r_-},\\
\xi_{\rm Kerr}=&- i \frac{\omega (r_+^2+a^2)-am}{r_+-r_-},\\
\gamma_{\rm Kerr}=&-i\omega,\\
L_{\rm Kerr}=&\:\lambda+a^2\omega^2-2a m\omega,
\end{align*}
see \cite{costa20}[Eq. (3.22)].

Note that in Ker $r_{\pm}=M\pm \sqrt{M^2-a^2}$, whereas in Reissner--Nordstr\" om $r_{\pm}=M\pm \sqrt{M^2-Q^2}$. With the translations $a\leftrightarrow Q$ and $(\eta_{\rm Kerr},\xi_{\rm Kerr}, L_{\rm Kerr}) \leftrightarrow (\eta, \xi, L)$, the confluent Heun operators are therefore are the same, so we can directly apply methods from the sub-extremal Kerr setting.

In particular, it follows from \cite{costa20}[Proposition 3.8] that
\begin{equation*}
\mathbb{U}(x):=\lim_{y\to 0}(x^2+Q^2)^{\frac{1}{2}}(x-r_+)^{\xi+\eta}e^{\gamma x}\int_{r_+}^{\infty}e^{-2\gamma (r_+-r_-)^{-1} (x+iy-r_-)(r-r_-)}(r-r_-)^{\eta}(r-r_+)^{\xi}e^{-\gamma r}R(r)\,dr
\end{equation*}
is well-defined as a limit in $L^2_x([r_+,\infty))$ and is a smooth solution to the ODE
\begin{equation*}
\frac{d^2\mathbb{U}}{dx_*^2}+\mathbb{V}\mathbb{U}=\frac{(x-r_-)(x-r_+)}{x^2+Q^2}\mathbb{H},
\end{equation*}
with $\frac{dx_*}{dx}=\frac{x^2+Q^2}{(x-r_-)(x-r_+)}$,
\begin{equation*}
\mathbb{H}(x)=\lim_{y\to 0}(x^2+Q^2)^{\frac{1}{2}}(x-r_+)^{\xi+\eta}e^{\gamma x}\int_{r_+}^{\infty}e^{-2\gamma (r_+-r_-)^{-1} (x+iy-r_-)(r-r_-)}(r-r_-)^{\eta}(r-r_+)^{\xi}e^{-\gamma r}(r\Omega^{-2}H)\,dr
\end{equation*}
and for $M=1$:
\begin{multline*}
\mathbb{V}(x)=\frac{x-r_-}{(x^2+Q^2)^2}\Big[(r_+-r_-)^{-1}\omega^2((r_+-r_-)x+r_-^2+3Q^2)(x-r_-)^2-\ell(\ell+1)(x-r_+)+\q^2(x-r_-)\\
-4\q \omega(r_+-r_-)^{-1} (x-r_-)^2-(x^2+Q^2)^{-2}(x-r_+)(2x^3+Q^2x^2-4Q^2x+Q^4)\Big],
\end{multline*}
which can be derived analogously to the Kerr case in the proof of \cite{costa20}[Proposition 3.8]. Note that
\begin{align*}
\mathbb{V}(\infty)=&\:\omega^2,\\
\mathbb{V}(r_+)=&\:\frac{(r_+-r_-)^2}{4r_+^2}\left((r_++r_-)\omega-\q\right)^2.
\end{align*}
In comparison, in the Kerr case we have:
\begin{align*}
\mathbb{V}(\infty)=&\:\omega^2,\\
\mathbb{V}(r_+)=&\:\frac{(r_+-r_-)^2}{4r_+^2}\left(2\omega\right)^2.
\end{align*}

Furthermore,
\begin{align*}
\left[\frac{d\mathbb{U}}{dx_*}-i\omega \mathbb{U}\right](\infty)=&\: 0,\\
\left[\frac{d\mathbb{U}}{dx_*}+i\left(\omega-\frac{\q}{r_++r_-}\right)\frac{r_+-r_-}{r_+} \mathbb{U}\right](-\infty)=&\:0,\\
|\mathbb{U}|^2(\infty)=&\: |\Gamma(2\xi+1)|^2\left|\frac{r_+-r_-}{2\omega}\right|^2r_+^{-2}|u(r_+)|^2.
\end{align*}
Defining $ \mathbb{J}^T[\tilde{u}]:=-\omega\re(i \frac{d\mathbb{U}}{dx_*}\overline{\mathbb{U}})$, and using that $\mathbb{V}$ is real, we obtain 
\begin{align*}
\frac{d}{dx_*}\left( \mathbb{J}^T[\mathbb{U}]\right)= -\omega\frac{(x-r_-)(x-r_+)}{x^2+Q^2}\re(\overline{i \mathbb{H}} \mathbb{U}).
\end{align*}
By applying the above boundary conditions on $\tilde{u}$, we moreover obtain for $\omega \tomega <0$:
\begin{align*}
\mathbb{J}^T[\mathbb{U}](\infty)=&\:\frac{1}{2}\left(\left|\frac{d\mathbb{U}}{dx_*}\right|^2(\infty)+\omega^2|\mathbb{U}|^2(\infty)\right),\\
\mathbb{J}^T[\mathbb{U}](-\infty)=&\:\frac{1}{2}\left(\left|\frac{d\mathbb{U}}{dx_*}\right|^2(-\infty)+\frac{r_+-r_-}{r_+}\omega \left(\omega-\frac{\q}{r_++r_-}\right)|\mathbb{U}|^2(-\infty)\right),
\end{align*}
so, for $\omega \left(\omega-\frac{\q}{r_++r_-}\right)>0$, we obtain:
\begin{equation*}
|\mathbb{J}^T[\mathbb{U}](-\infty)|+|\mathbb{J}^T[\mathbb{U}](\infty)|\leq\left| \omega \int_{\R_{x_*}}\frac{(x-r_-)(x-r_+)}{x^2+Q^2}\re(\overline{i \mathbb{H}} \mathbb{U})\,dx_*\right|.
\end{equation*}

The proof of the Wronskian estimate in the $|a|<M$ case in \cite{costa20}[Theorem 5.1] then still applies when $\omega  \left(\omega-\frac{\q}{r_++r_-}\right)>0$.

Note that in the sub-extremal Kerr setting, the factor  $\omega-\frac{\q}{r_++r_-}$ above is instead simply $\omega$, which ensures mode stability everywhere away from $\omega=0$.
\end{proof}

\section{ODE error estimates}
\label{sec:ODEest}
In Proposition \ref{prop:genoderrorest}, we will derive general error estimates for solutions to Schr\"odinger ODEs with perturbed potentials.

The following lemma concerning Volterra integrals forms the main ingredient for these error estimates and is proved in \cite{olv97}[Theorem 6.10.2].

\begin{lemma}[{\cite{olv97}[Theorem 6.10.2]}]
\label{lm:volterraint}
Let $x\in (x_1,x_2)$, with $-\infty\leq x_1<x_2\leq \infty$ and consider the integral equation:
\begin{equation}
\label{eq:volterraint}
\varepsilon(x)=\int_{x_1}^{x}K(x,y)\vartheta(y)\left[W(y)+\varepsilon(y)\right]\,dy,
\end{equation}
with $K,W$ continuous complex-valued functions, such that:
\begin{align*}
K(x,x)=&\:0,\\
|K(x,y)|\leq &\: P_0(x)Q(y)\quad y\in (x_1,x),\\
\left|\frac{\partial K}{\partial x}(x,y)\right|\leq &\: P_1(x)Q(y)\quad y\in (x_1,x),\\
\left|\frac{\partial^2 K}{\partial x^2}(x,y)\right|\leq &\: P_2(x)Q(y)\quad y\in (x_1,x)
\end{align*}
for some continuous functions $Q$ and $P_{\mu}$, $\mu\in\{0,1,2\}$.

Assume moreover that:
\begin{align*}
\int_{x_1}^{x_2}|\vartheta(y)|\,dy < \infty,\\
\sup_{y\in (x_1,x_2)} Q(y)|W(y)|<\infty,\\
\sup_{y\in (x_1,x_2)} P_0(y)|Q(y)|<\infty.
\end{align*}
Then $\varepsilon$ satisfies the following estimates: for all $x\in (x_1,x_2)$
\begin{align*}
\frac{\varepsilon(x)}{P_0(x)}\leq \frac{\sup_{y\in (x_1,x)} Q(y)|W(y)|}{\sup_{y\in (x_1,x)} P_0(y)Q(y)}\left(e^{\sup_{y\in (x_1,x)} P_0(y)Q(y)\int_{x_1}^{x_2}|\vartheta(y)|\,dy}-1\right),\\
\frac{\frac{d \varepsilon}{dx}(x)}{P_1(x)}\leq \frac{\sup_{y\in (x_1,x)} Q(y)|W(y)|}{\sup_{y\in (x_1,x)} P_0(y)Q(y)}\left(e^{\sup_{y\in (x_1,x)} P_0(y)Q(y)\int_{x_1}^{x_2}|\vartheta(y)|\,dy}-1\right).
\end{align*}
\end{lemma}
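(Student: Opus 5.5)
The plan is to rewrite the integral equation as a fixed point problem and solve it by Picard iteration, controlling each iterate by induction with a factorial gain. Set $\varepsilon_0\equiv 0$ and define recursively
\begin{equation*}
h_{s+1}(x):=\varepsilon_{s+1}(x)-\varepsilon_s(x)=\int_{x_1}^{x}K(x,y)\vartheta(y)h_s(y)\,dy,\qquad h_0(x):=\int_{x_1}^{x}K(x,y)\vartheta(y)W(y)\,dy,
\end{equation*}
so that formally $\varepsilon=\sum_{s\geq 0}h_s$. Throughout, write $\Phi(x):=\int_{x_1}^{x}|\vartheta(y)|\,dy$, which is finite and nondecreasing by the integrability assumption on $\vartheta$.

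The key inductive claim is
\begin{equation*}
|h_s(x)|\leq P_0(x)\,\sup_{(x_1,x)}(Q|W|)\,\frac{\big(\sup_{(x_1,x)}P_0Q\big)^s\,\Phi(x)^{s+1}}{(s+1)!}.
\end{equation*}
For $s=0$ this follows directly from $|K(x,y)|\leq P_0(x)Q(y)$. For the inductive step, I insert the bound for $h_s(y)$ into the recursion, use $Q(y)P_0(y)\leq\sup_{(x_1,x)}P_0Q$ for $y<x$, and integrate $\Phi(y)^{s+1}\,d\Phi(y)$ exactly, which produces the factor $\Phi(x)^{s+2}/(s+2)$. Summing over $s$ gives uniform convergence on compact subintervals, and the series is bounded by $P_0(x)\,\frac{\sup Q|W|}{\sup P_0Q}\,\big(e^{\sup P_0Q\,\Phi(x)}-1\big)$. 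This yields the first estimate, after bounding $\Phi(x)\leq\int_{x_1}^{x_2}|\vartheta|$. Uniqueness follows from the same inductive bound applied to the difference of two solutions, which is bounded by $\big(\sup P_0Q\,\Phi\big)^s/s!\to0$.

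For the derivative, I differentiate under the integral sign. The boundary term $K(x,x)\vartheta(x)[\ldots]$ vanishes because $K(x,x)=0$, so
\begin{equation*}
\varepsilon'(x)=\int_{x_1}^{x}\partial_xK(x,y)\vartheta(y)(W+\varepsilon)(y)\,dy,
\end{equation*}
and the termwise derivatives $h_s'$ satisfy the same bounds with $P_0(x)$ replaced by $P_1(x)$. The hypothesis on $\partial_x^2K$ is used only to justify differentiating the series termwise, via locally uniform convergence of $\sum h_s'$. The main technical point is making the differentiation under the integral legitimate when $x_1=-\infty$ and the integrals are improper; I would handle this by working on $[x_1',x]$ with $x_1'\downarrow x_1$ and passing to the limit using dominated convergence, which the bounds above supply.
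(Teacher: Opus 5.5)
Your Picard-iteration argument is correct. It is the standard successive-approximation proof behind the cited result in Olver, which the paper quotes rather than proves: the factorial gain from integrating $\Phi^{s+1}\,d\Phi$ gives the bound on $\varepsilon$, and since $K(x,x)=0$ removes the boundary term, the $h_s'$ satisfy the same estimates with $P_1$ in place of $P_0$.

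Two minor corrections. First, the $\partial_x^2K$ hypothesis is not needed for the two stated estimates: locally uniform convergence of $\sum h_s'$ already follows from your $P_1$ bounds, and in Olver that hypothesis only serves to show that $\varepsilon$ is twice differentiable. Second, your uniqueness claim should be restricted to a class of solutions such as $\int_{x_1}^{x}Q|\vartheta||\varepsilon|\,dy<\infty$, because the induction on the difference of two solutions needs an a priori bound on that difference to start.
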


\begin{proposition}
\label{prop:genoderrorest}
Consider the following differential equation in the interval $(x_1,x_2)$ with $-\infty\leq x_1<x_2\leq \infty$:
\begin{equation}
\label{eq:genhomode}
U''(x)=\left[\mathcal{V}_0(x)+\vartheta(x)\right]U(x),
\end{equation}
with $\mathcal{V}_0$ and $\vartheta$ smooth functions. Assume that there exist solutions $W_1$ and $W_2$ to \eqref{eq:genhomode} without the $\vartheta$ term, such that their Wronskian $\mathcal{W}(W_1,W_2)$ is non-vanishing.

Let $f: (x_1,x_2)\to (0,\infty)$ and assume moreover that there exists a constant $C>0$ and functions\\ $P_0,P_1,Q: (x_1,x_2)\to [0,\infty)$ such that:
\begin{align*}
|f(x)W_1(x)f(y)W_2(y)|+|f(y)W_1(y)f(x)W_2(x)|\leq &\: C P_0(x)Q(y)\quad y\in (x_1,x_2),\\
\left|(fW_1)'(x)(fW_2)(y)\right|+\left|(fW_1)(y)(fW_2)'(x)\right|\leq &\: C P_1(x)Q(y)\quad y\in (x_1,x_2),\\
\left|f'(x)W_1'(x)f(y)W_2(y)\right|+\left|f(y)W_1(y)f'(x)W_2'(x)\right|\leq &\: C P_2(x)Q(y)\quad y\in (x_1,x_2),\\
\sup_{x\in (x_1,x_2)}(|f(x)W_1|(x)+|f(x)W_2|(x))Q(x)<&\: \infty,\\
\sup_{x\in (x_1,x_2)} P_0(x)Q(x)<&\: \infty,\\
\int_{x_1}^{x_2}\frac{|\vartheta|(y)}{f^2(y)}\,dy<&\:\infty.
\end{align*}
Then, there exist constants $C_1,C_2\in \C$ such that
\begin{equation*}
f U=C_1(fW_1+f\varepsilon_1)+C_2(fW_2+f\varepsilon_2),
\end{equation*}
where $\varepsilon_i$, $i=1,2$, satisfy for all $x\in (x_1,x_2)$: there exists a constant $C>0$ such that
\begin{align*}
\left|f(x)\varepsilon_i\right|(x)\leq&\: CP_0(x)\frac{\sup_{y\in (x_1,x)}|fW_i|(y)Q(y)}{\sup_{y\in (x_1,x)} P_0(y)Q(y)}\left(e^{\frac{C}{|\mathcal{W}(W_1,W_2)|}\sup_{y\in (x_1,x)} P_0(y)Q(y) \int_{x_1}^{x}\frac{|\vartheta(y)|}{f^2(y)}\,dy}-1\right),\\
\frac{\left|\frac{d (f\varepsilon_i)}{dx}\right|(x)}{P_1(x)}\leq&\: C\frac{\sup_{y\in (x_1,x)}|fW_i|(y)Q(y)}{\sup_{y\in (x_1,x)} P_0(y)Q(y)}\left(e^{\frac{C}{|\mathcal{W}(W_1,W_2)|}\sup_{y\in (x_1,x)} P_0(y)Q(y) \int_{x_1}^{x}\frac{|\vartheta(y)|}{f^2(y)}\,dy}-1\right).
\end{align*}
The above estimates hold also with the interval $(x_1,x)$ in the suprema and the integrals replaced with $(x,x_2)$.
\end{proposition}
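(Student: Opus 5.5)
The plan is the classical variation-of-parameters argument, which turns \eqref{eq:genhomode} into a Volterra integral equation of the type treated in Lemma \ref{lm:volterraint}. Write $\mathcal{W}:=\mathcal{W}(W_1,W_2)$, which is a nonzero constant because \eqref{eq:genhomode} has no first-order term. For $i=1,2$ we look for a solution $U_i=W_i+\varepsilon_i$ of \eqref{eq:genhomode}. Treating $\vartheta U_i$ as an inhomogeneity, this is equivalent to
\begin{equation*}
\varepsilon_i(x)=\int_{x_1}^{x}\frac{W_1(y)W_2(x)-W_1(x)W_2(y)}{\mathcal{W}}\,\vartheta(y)\bigl[W_i(y)+\varepsilon_i(y)\bigr]\,dy ,
\end{equation*}
since the kernel $G(x,y):=\mathcal{W}^{-1}(W_1(y)W_2(x)-W_1(x)W_2(y))$ satisfies $G(x,x)=0$, $\partial_xG(x,x)=1$ and $\partial_x^2G=\mathcal{V}_0G$. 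Differentiating twice under the integral therefore returns $\varepsilon_i''=\mathcal{V}_0\varepsilon_i+\vartheta(W_i+\varepsilon_i)$.

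To put the weight $f$ in, multiply through by $f(x)$ and insert $f(y)f(y)^{-1}$ inside the integral. Setting $\tilde\varepsilon_i:=f\varepsilon_i$, $\tilde W_i:=fW_i$ and $\tilde\vartheta:=\vartheta/f^2$, we get
\begin{equation*}
\tilde\varepsilon_i(x)=\int_{x_1}^{x}K(x,y)\tilde\vartheta(y)\bigl[\tilde W_i(y)+\tilde\varepsilon_i(y)\bigr]dy,\qquad K(x,y):=\frac{f(x)W_2(x)f(y)W_1(y)-f(x)W_1(x)f(y)W_2(y)}{\mathcal{W}}.
\end{equation*}
This is exactly \eqref{eq:volterraint}. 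The hypotheses then translate directly: $K(x,x)=0$; $|K|\le C|\mathcal{W}|^{-1}P_0(x)Q(y)$ by the first assumption; $|\partial_xK|\le C|\mathcal{W}|^{-1}P_1(x)Q(y)$ by the second; and the $\partial_x^2K$ bound comes from the third, with any remaining terms bounded by the same pattern. Integrability of $\tilde\vartheta$, finiteness of $\sup Q|\tilde W_i|$ and finiteness of $\sup P_0Q$ are the remaining assumptions. Absorbing $|\mathcal{W}|^{-1}$ into $P_0,P_1$ (equivalently into $Q$) produces the factor $C/|\mathcal{W}|$ in the exponent.

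Lemma \ref{lm:volterraint} then gives existence of the solution $\tilde\varepsilon_i$ (Picard iteration, whose convergence is guaranteed by the same bounds) together with the stated estimates for $f\varepsilon_i$ and $(f\varepsilon_i)'$.

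It remains to check that $U_1,U_2$ form a fundamental system, so that any solution $U$ can be written as $C_1U_1+C_2U_2$. This is the one point needing care: $\mathcal{W}(U_1,U_2)$ is constant, and we evaluate it as $x\to x_1$, where the error terms vanish because the integral $\int_{x_1}^x|\tilde\vartheta|$ tends to $0$ and the exponential factor $e^{(\cdots)}-1\to0$. Hence $\mathcal{W}(U_1,U_2)=\mathcal{W}(W_1,W_2)\neq0$, using the $P_0,P_1$ control on $f\varepsilon_i$ and its derivative (converting derivatives of $f\varepsilon_i$ back to $\varepsilon_i'$ with $f>0$).

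The version on $(x,x_2)$ follows by the same argument with the integral taken as $-\int_x^{x_2}$, i.e.\ after reflecting $x\mapsto -x$. The main obstacle is only bookkeeping: checking that $\partial_x^2K$ is dominated as required, given that the hypothesis is phrased in terms of $f'W_i'$. If that fails, I would instead use the version of Olver's theorem that needs only the $K$ and $\partial_xK$ bounds, since $\partial_x^2K$ is used solely for the smoothness of the solution.
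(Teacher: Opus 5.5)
Your proposal is correct and is essentially the paper's own proof: variation of parameters with kernel $\mathcal{W}(W_1,W_2)^{-1}\bigl(W_1(y)W_2(x)-W_1(x)W_2(y)\bigr)$, insertion of $f(y)f(y)^{-1}$ to get a Volterra equation for $f\varepsilon_i$ driven by $\vartheta/f^2$, verification of the kernel bounds from the hypotheses, and an application of Lemma~\ref{lm:volterraint}, with $|\mathcal{W}(W_1,W_2)|^{-1}$ producing the factor in the exponent. The differences are minor: you additionally check that $W_i+\varepsilon_i$ form a fundamental system, which the paper does not (it simply asserts the decomposition of $U$); and you use the same kernel for $\varepsilon_1$ and $\varepsilon_2$, which is correct, whereas the paper's formula for $\varepsilon_2$ has a sign slip that is harmless because only $|K|$ enters the estimates.
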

\begin{proof}
We write $U(x)=C_1(W_1(x)+\varepsilon_1(x))+C_2(W_2(x)+\varepsilon_2(x))$, with $C_{\pm}\in \C$. Then $U$ is a solution to \eqref{eq:genhomode} if:
\begin{align}
\label{eq:whitterror1}
\varepsilon_1''-\mathcal{V}_0\varepsilon_1=&\:\vartheta (\varepsilon_1+W_1),\\
\label{eq:whitterror2}
\varepsilon_2''-\mathcal{V}_0\varepsilon_2=&\:\vartheta (\varepsilon_2+W_2).
\end{align}
Consider the following Wronskian: $\mathcal{W}(W_1,W_2)=W_1 W_2'-W_1'W_2$, which is moreover constant in $x$. Then, by Green's formula (variation of parameters), \eqref{eq:whitterror1} and  \eqref{eq:whitterror2} are satisfied and $\varepsilon_i(x_1)=0$ and:
\begin{align*}
\varepsilon_1(x)=&\: \frac{1}{\mathcal{W}(W_1,W_2)}\int_{x_1}^{x}(W_2(x)W_1(y)-W_1(x)W_2(y))\vartheta(y)(\varepsilon_1(y)+W_1(y))\,dy,\\
\varepsilon_2(x)=&\: \frac{1}{\mathcal{W}(W_1,W_2)}\int_{x_1}^{x}(W_1(x)W_2(y)-W_2(x)W_1(y))\vartheta(y)(\varepsilon_2(y)+W_2(y))\,dy.
\end{align*}
We can insert the function $f$ in the above integral equalities in the following way:
\begin{align*}
f(x)\varepsilon_1(x)=&\:\int_{x_1}^{x}K(x,y)\frac{\vartheta(y)}{f^2(y)}(f(y)\varepsilon_1(y)+f(y)W_1(y))\,dy,\\
f(x)\varepsilon_2(x)=&\:\int_{x_1}^{x}(-K(x,y)))\frac{\vartheta(y)}{f^2(y)}(f(y)\varepsilon_2(y)+f(y)W_2(y))\,dy,
\end{align*}
with $K(x,y)=\frac{1}{\mathcal{W}(W_1,W_2)}[f(x)W_2(x)f(y)W_1(y)-f(x)W_{1}(x)f(y)W_{2}(y)]$.

Note that $K(x,x)=0$. By the assumptions on $W_1$ and $W_2$, we can estimate
\begin{align*}
|\mathcal{W}(W_1,W_2)||K(x,y)|\leq &\: |f(x)W_1(x)f(y)W_2(y)|+|f(y)W_1(y)f(x)W_2(x)|\leq C P_0(x)Q(y)\quad y\in (x_1,x),\\
|\mathcal{W}(W_1,W_2)|\left|\frac{\partial K}{\partial x}(x,y)\right|\leq &\: \left|(fW_1)'(x)f(y)W_2(y)\right|+\left|f(y)W_1(y)(fW_2)'(x)\right|\leq C P_1(x)Q(y)\quad y\in (x_1,x)
\end{align*}
Furthermore,
\begin{multline*}
|\mathcal{W}(W_1,W_2)|\left|\frac{\partial^2 K}{\partial x^2}(x,y)\right|\leq  |\mathcal{V}_0(x)| (|f(x)W_1(x)f(y)W_2(y)|+|f(y)W_1(y)f(x)W_2(x)|)\\
+ \left|(f'(x)W_1'(x)f(y)W_2(y)\right|+\left|f(y)W_1(y)f'(x)W_2'(x)\right|\\
\leq  (|\mathcal{V}_0(x)|P_0(x)+P_2(x))Q(y)\quad y\in (x_1,x).
\end{multline*}
We can now apply Lemma \ref{lm:volterraint} to conclude the estimates in the proposition. One may repeat the above argument with the interval $(x_1,x)$ replaced by $(x,x_2)$ and with $\varepsilon_i(x_2)=0$.
\end{proof}

\begin{lemma}
	\label{lm:propwhitt}
	Consider the Whittaker equation
	\begin{equation}
\label{eq:whitt}
\frac{d^2U}{d\xi^2}=\left[-\frac{1}{4}-\sigma \xi^{-1}+\left(\mu^2-\frac{1}{4}\right)\xi^{-2}\right] U,
\end{equation}
with either $\xi\in (0,\infty)$ or $\xi\in (-\infty,0)$ and $\sigma\in \R$, $\mu\in [0,\infty)\times i(0,\infty)$.

Then there exist solutions $W_{-i\sigma,\mu}(i \xi)$ and $M_{-i\sigma,\mu}(i \xi)$ to \eqref{eq:whitt} (Whittaker functions), with principal branches corresponding to the range $\arg z\in (-\pi,\pi]$ that satisfy the following properties:
\begin{align}
\label{eq:whittasymp1}
W_{ -i\sigma,\mu}(i \xi)=&\: e^{\sign(\xi)\frac{\pi}{2}\sigma }e^{-  i\frac{\xi}{2}} |\xi|^{ -i\sigma}(1+O(|\xi|^{-1}))\quad |\xi|>1,\\
\label{eq:whittasymp2}
M_{-i \sigma,\mu}( i\xi)=&\: \frac{\Gamma(1+2\mu)}{\Gamma(\frac{1}{2}+\mu-i\sigma)}e^{-\sign(\xi)\pi(\sigma-i\mu-\frac{i}{2})}W_{ -i\sigma,\mu}(i \xi)+ \frac{\Gamma(1+2\mu)}{\Gamma(\frac{1}{2}+\mu+i\sigma)}e^{-\sign(\xi)\pi\sigma}W_{ i\sigma,\mu}(-i \xi),\\
\label{eq:whittasymp3}
M_{-i \sigma,\mu}(i \xi)=&\: e^{\sign(\xi)i\frac{\pi}{2}(\frac{1}{2}+\mu)}|\xi|^{\frac{1}{2}+\mu}(1+O(|\xi|^{-1})) \quad |\xi|\leq1,\\
\label{eq:whittasymp4}
W_{ -i\sigma,\mu}( i\xi)=&\: \frac{\Gamma(-2\mu)}{\Gamma(\frac{1}{2}-\mu +i\sigma)}M_{ -i\sigma,\mu}(i \xi)+\frac{\Gamma(2\mu)}{\Gamma(\frac{1}{2}+\mu +i \sigma)}M_{-i \sigma,-\mu}(i \xi)\quad (2\mu\notin \N_0),\\
\label{eq:whittasymp5}
W_{ -i\sigma,\mu}( i\xi)=&\: \frac{\Gamma(2\mu)}{\Gamma(\frac{1}{2}+\mu +i\sigma)}M_{ -i\sigma,-\mu}(i \xi)(1+O(|\xi|^{-1})) \quad |\xi|\leq1, \quad (\re 2\mu>1),\\
\label{eq:whittasymp6}
W_{ -i\sigma,\frac{1}{2}}( i\xi)=&\: \frac{1}{\Gamma(1 +i\sigma)}M_{ \sigma,-\frac{1}{2}}(i \xi)(1+\log |\xi|O(|\xi|^{-1}))\quad |\xi|\leq1,\\
\label{eq:whittasymp7}
W_{ -i\sigma,0}(i \xi)=&\: -\frac{1}{\Gamma(\frac{1}{2}+i\sigma)}M_{ -i\sigma,0}( i\xi)\left(\log \xi+\frac{\Gamma'(\frac{1}{2}+i\sigma)}{\Gamma(\frac{1}{2}+i\sigma)}+2\gamma_{\rm Euler}+\log |\xi| O(|\xi|)\right) \quad |\xi|\leq1,\\
\label{eq:whittwronskian}
\mathcal{W}&\left(W_{  -i\sigma,\mu}( i \xi), M_{  -i\sigma,\mu}( i \xi)\right)=i \frac{\Gamma(1+2\mu)}{\Gamma(\frac{1}{2}+\mu+i\sigma)}.
\end{align}
with $\Gamma$ the Gamma function and $\gamma_{\rm Euler}$ the Euler--Mascheroni constant. Here the constants in the Big-O notation can depend on $\sigma$ and $\mu$.
\end{lemma}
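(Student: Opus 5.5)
The properties in Lemma \ref{lm:propwhitt} are standard facts about Whittaker functions transcribed to the variable $z=i\xi$, so the plan is to reduce each identity to a known formula (e.g.\ \cite{NIST:DLMF}[\S13.14]) with $\varkappa=-i\sigma$ and $z=i\xi$, taking care with the branches. First substitute $z=i\xi$ in the standard Whittaker equation $\frac{d^2w}{dz^2}=\left[\frac14-\frac{\varkappa}{z}+\left(\mu^2-\frac14\right)z^{-2}\right]w$. Since $\frac{d^2}{d\xi^2}=-\frac{d^2}{dz^2}$ and $\varkappa/z=-i\sigma/(i\xi)=-\sigma/\xi$, we get \eqref{eq:whitt}. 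Hence $W_{-i\sigma,\mu}(i\xi)$ and $M_{-i\sigma,\mu}(i\xi)$ are solutions. For $\xi>0$ we have $\arg z=\pi/2$, and for $\xi<0$ we have $\arg z=-\pi/2$, so both lie in the principal branch; this is the origin of the factors $\sign(\xi)$ throughout.

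The next step is to read off each item. For \eqref{eq:whittasymp1}, use $W_{\varkappa,\mu}(z)\sim e^{-z/2}z^{\varkappa}$ as $|z|\to\infty$ with $|\arg z|<3\pi/2-\delta$. Writing $z^{-i\sigma}=|\xi|^{-i\sigma}e^{-i\sigma\cdot i\sign(\xi)\pi/2}=|\xi|^{-i\sigma}e^{\sign(\xi)\pi\sigma/2}$ gives the stated leading term, and the $O(|\xi|^{-1})$ error, uniform for $|\xi|>1$ and fixed $(\sigma,\mu)$, comes from the standard asymptotic expansion with error bound. For \eqref{eq:whittasymp3}, use the defining series $M_{\varkappa,\mu}(z)=e^{-z/2}z^{1/2+\mu}{}_1F_1(\tfrac12+\mu-\varkappa;1+2\mu;z)$ together with $z^{1/2+\mu}=e^{\sign(\xi)i\frac{\pi}{2}(\frac12+\mu)}|\xi|^{1/2+\mu}$. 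The relations \eqref{eq:whittasymp2} and \eqref{eq:whittasymp4} are the standard connection formulas (\cite{NIST:DLMF}[13.14.33] and [13.14.32]), specialised with $e^{\pm\varkappa\pi i}$ evaluated at $\varkappa=-i\sigma$. In \eqref{eq:whittasymp2} the choice of sign of the rotation $z\mapsto e^{\pm\pi i}z$ must be made according to $\sign(\xi)$, so that $-i\xi$ stays in the principal branch. The Wronskian \eqref{eq:whittwronskian} follows from $\mathcal{W}_z\{M_{\varkappa,\mu},W_{\varkappa,\mu}\}=-\Gamma(1+2\mu)/\Gamma(\frac12+\mu-\varkappa)$ together with the chain rule $\frac{d}{d\xi}=i\frac{d}{dz}$.

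For the small-$|\xi|$ statements \eqref{eq:whittasymp5}--\eqref{eq:whittasymp7}, start from \eqref{eq:whittasymp4} for $\mu\notin\frac12\N_0$. When $\re 2\mu>1$, the ratio $M_{\varkappa,\mu}/M_{\varkappa,-\mu}=O(|\xi|^{2\re\mu})$ is dominated by the error term, which gives \eqref{eq:whittasymp5}. The degenerate cases $\mu=0$ and $\mu=\frac12$ follow either from the logarithmic limiting forms of $W$ (\cite{NIST:DLMF}[13.14.19], [13.14.18]) or by taking $\mu\to0$ in \eqref{eq:whittasymp4} via l'H\^opital, using $\frac{d}{d\mu}\frac{\Gamma(-2\mu)}{\Gamma(\frac12-\mu+i\sigma)}$ and the digamma function. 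This is what produces $\psi(\tfrac12+i\sigma)+2\gamma_{\rm Euler}$ and $\log\xi$ in \eqref{eq:whittasymp7}, with error $\log|\xi|\,O(|\xi|)$ coming from the next terms of the series.

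The main obstacle is bookkeeping rather than analysis. One must track the branch of $z^{\pm\mu}$ and $\log z$ at $\arg z=\pm\pi/2$ consistently, since every phase factor $e^{\sign(\xi)\cdots}$ in the lemma, and later the moduli $|g|$ in the matching arguments, depend on it. One must also check uniformity of the error constants in the stated ranges for fixed $(\sigma,\mu)$, including $\mu\in i(0,\infty)$, where $\Gamma(\frac12\pm\mu+i\sigma)$ never vanishes because $\sigma$ is real. The $\mu=0$ limit also needs care to confirm the precise constant.
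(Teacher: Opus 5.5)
Your proposal follows the paper's own argument, which consists of invoking \cite[\S13.14]{NIST:DLMF} with $\varkappa=-i\sigma$. Your branch bookkeeping is correct and reproduces the constants in \eqref{eq:whittasymp1}--\eqref{eq:whittasymp4} and \eqref{eq:whittwronskian}: $\arg(i\xi)=\sign(\xi)\frac{\pi}{2}$, the rotation $z\mapsto e^{-\sign(\xi)\pi i}z$ in the connection formula, and $\mathcal{W}_\xi=i\mathcal{W}_z$.

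One step does not hold on the stated range, however. For \eqref{eq:whittasymp5} you need $M_{-i\sigma,\mu}(i\xi)/M_{-i\sigma,-\mu}(i\xi)=O(|\xi|^{2\re\mu})$ on all of $0<|\xi|\le 1$. Rewriting the DLMF limiting form of $W_{-i\sigma,0}$ as \eqref{eq:whittasymp7} tacitly needs $M_{-i\sigma,0}(i\xi)\neq 0$ there as well.

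For real $\mu$ and $\sigma$, Kummer's transformation $M(a,b,z)=e^{z}M(b-a,b,-z)$ applies with $b-a=\bar a$. It shows that $M_{-i\sigma,\pm\mu}(i\xi)$ is a constant phase times a real solution of the real equation \eqref{eq:whitt}. When $\sigma\xi>0$ and $|\sigma|$ is large, this solution already oscillates inside $|\xi|\le 1$. For $\mu=0$ it is close to $|\xi|^{1/2}J_0(2\sqrt{\sigma\xi})$, which has a zero near $|\xi|\approx j_{0,1}^2/(4|\sigma|)$. By contrast, $W_{-i\sigma,\mu}(i\xi)$ never vanishes, by Lemma~\ref{lm:whittnozero}.

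At such a zero the right-hand sides of \eqref{eq:whittasymp5} and \eqref{eq:whittasymp7} vanish while the left-hand sides do not. So these identities can only be proved either for $|\xi|\le \xi_0(\sigma,\mu)$, or with the prefactors $M_{-i\sigma,-\mu}(i\xi)$ and $M_{-i\sigma,0}(i\xi)$ replaced by $(i\xi)^{\frac12-\mu}$ and $(i\xi)^{\frac12}$. The latter is exactly what the DLMF limiting forms give. It also covers $2\mu\in\N$, where $M_{-i\sigma,-\mu}$ is undefined. Either version is all that is used later in the paper, where $|\xi|$ is small and $(\sigma,\mu)$ range over compact sets.

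Several printed details also differ from what your computation actually produces, and you should state the corrected forms rather than reproduce the printed ones:
\begin{itemize}
\item In \eqref{eq:whittasymp7} the logarithm is $\log(i\xi)=\log|\xi|+i\sign(\xi)\frac{\pi}{2}$, not $\log\xi$. The difference is a nonzero constant, not an error term.
\item The relative errors in \eqref{eq:whittasymp3} and \eqref{eq:whittasymp5} are $O(|\xi|)$ rather than $O(|\xi|^{-1})$.
\item In \eqref{eq:whittasymp6} the undefined $M_{\sigma,-\frac12}$ should be replaced: the correct statement is $W_{-i\sigma,\frac12}(i\xi)=\Gamma(1+i\sigma)^{-1}\bigl(1+O(|\xi|(1+|\log|\xi||))\bigr)$.
\end{itemize}
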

\begin{proof}
	We apply standard properties of Whittaker functions, see for example \cite{NIST:DLMF}[\S 13.14] with the constant $\kappa$ replaced by $-i\sigma$.
	\end{proof}
\begin{corollary}
\label{cor:errorwhitt}
Let $U$ be a solution to
\begin{equation}
\label{eq:whitterror}
\frac{d^2U}{d\xi^2}=\left[-\frac{1}{4}-\sigma \xi^{-1}+\left(\mu^2-\frac{1}{4}\right)\xi^{-2}+\vartheta(\xi)\right] U,
\end{equation}
with either $\xi\in (0,\infty)$ or $\xi\in (-\infty,0)$ and $\sigma\in \R$, $\mu\in [0,\infty)\cup i(0,\infty)$. Let $0\leq \xi_0\leq \infty$. Assume that:
 \begin{equation*}
 \int_{|\xi|}^{ \xi_0} (\eta^{-2}+1)^{-\frac{1}{2}}|\vartheta(\eta)|\,d|\eta|<\infty. \end{equation*}
 Let
 \begin{align*}
 P_0(\xi):=&\: (\xi^{-2}+1)^{\frac{1}{2}\re \mu}.
\end{align*}

Then there exist complex constants $B_1,B_2\in \C$ and a constant $C>0$ that depends on $\sigma$ and $\mu$, such that we can write:
\begin{equation*}
U(\xi)=B_1(W_{-i\sigma,\mu}(i\xi)+\varepsilon_1(\xi))+B_2(M_{-i\sigma,\mu}(i \xi)+\varepsilon_2(\xi)),
\end{equation*}
with $\varepsilon_i(\xi)$, with $i\in \{1,2\}$, satisfying the following bounds:\begin{align*}
\left|\varepsilon_i\right|(\xi)\leq&\: C(\xi^{-2}+1)^{-\frac{1}{4}} (1+\delta_{\mu0}\log(2+\xi^{-2}))P_0(\xi)\left(e^{C \int^{\xi_0}_{|\xi|} (1+\delta_{\mu0} \log^2(2+\eta^{-2}))(\eta^{-2}+1)^{-\frac{1}{2}}|\vartheta(\eta)|\,d|\eta|}-1\right),\\
\left|\frac{d \varepsilon_i}{d\xi}\right|(\xi)\leq&\: C(\xi^{-2}+1)^{\frac{1}{4}}(1+\delta_{\mu0}\log(2+\xi^{-2}))P_0(\xi)   \left(e^{C \int^{\xi_0}_{|\xi|}(1+\delta_{\mu0} \log^2(2+\eta^{-2})) (\eta^{-2}+1)^{-\frac{1}{2}}|\vartheta(\eta)|\,d|\eta|}-1\right).
\end{align*}
\end{corollary}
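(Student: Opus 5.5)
The plan is to reduce Corollary \ref{cor:errorwhitt} to Proposition \ref{prop:genoderrorest}. Take $\mathcal{V}_0(\xi)=-\frac14-\sigma\xi^{-1}+(\mu^2-\frac14)\xi^{-2}$, and choose $W_1=W_{-i\sigma,\mu}(i\xi)$ and $W_2=M_{-i\sigma,\mu}(i\xi)$ as the unperturbed solutions. Their Wronskian equals $i\Gamma(1+2\mu)/\Gamma(\frac12+\mu+i\sigma)$ by \eqref{eq:whittwronskian}. This is nonzero, since $\Gamma$ has no zeros and $1+2\mu$ avoids the non-positive integers for $\mu\in[0,\infty)\cup i(0,\infty)$. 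I will work on the interval $(|\xi|,\xi_0)$, that is, the version of Proposition \ref{prop:genoderrorest} with $(x,x_2)$ and $\varepsilon_i(\xi_0)=0$, so that the integral of $|\vartheta|$ runs from $|\xi|$ to $\xi_0$ as in the statement. The corollary then follows once $f$, $P_0$, $P_1$, $P_2$ and $Q$ are exhibited.

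\textbf{Choice of weights.} Take $f\equiv 1$. Pointwise bounds on $W_1$ and $W_2$ come from Lemma \ref{lm:propwhitt}, in two regimes.
\begin{itemize}
\item For $|\xi|\le1$: \eqref{eq:whittasymp3} gives $|M|\lesssim|\xi|^{\frac12+\re\mu}$. By \eqref{eq:whittasymp4}--\eqref{eq:whittasymp7}, $|W|\lesssim|\xi|^{\frac12-\re\mu}$, with an extra factor $\log(2+\xi^{-2})$ when $\mu=0$.
\item For $|\xi|\ge1$: \eqref{eq:whittasymp1}--\eqref{eq:whittasymp2} show that both functions are bounded, since they are combinations of $e^{\mp i\xi/2}|\xi|^{\mp i\sigma}$.
\end{itemize}
Uniformly in $\xi$, this gives
\[
|W_j(\xi)|\lesssim (\xi^{-2}+1)^{-\frac14}(1+\xi^{-2})^{\pm\frac12\re\mu}\bigl(1+\delta_{\mu0}\log(2+\xi^{-2})\bigr),
\]
with the sign $+$ for $W$ and $-$ for $M$. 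For the kernel $W_1(x)W_2(y)-W_2(x)W_1(y)$ with $|y|\ge|x|$, set
\[
P_0(x)=(x^{-2}+1)^{-\frac14+\frac12\re\mu}\bigl(1+\delta_{\mu0}\log(2+x^{-2})\bigr),\qquad
Q(y)=(y^{-2}+1)^{-\frac14-\frac12\re\mu}\bigl(1+\delta_{\mu0}\log(2+y^{-2})\bigr).
\]
With these choices $P_0Q\lesssim(1+\delta_{\mu0}\log^2)(x^{-2}+1)^{-\frac12}$ on the diagonal. Moreover $\sup|W_i|Q<\infty$.

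\textbf{Derivative weights.} Differentiating the small-$\xi$ expansions costs a factor $|\xi|^{-1}$, and the large-$\xi$ expansions cost nothing. Hence one may take $P_1=(x^{-2}+1)^{\frac12}P_0$, which gives the factor $(\xi^{-2}+1)^{\frac14}$ in the derivative bound. The bound for $P_2$ follows similarly, with $|\mathcal{V}_0|\lesssim 1+\xi^{-2}$.

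\textbf{Main obstacle.} The hard step is the kernel estimate $|W_1(x)W_2(y)|+|W_2(x)W_1(y)|\lesssim P_0(x)Q(y)$ for $|y|\ge|x|$. The mixed term $W(y)M(x)$ has the opposite weights, so it needs the monotonicity $M(x)/M(y)\lesssim W(x)/W(y)$ in absolute value up to constants. This holds because $(x^{-2}+1)^{\re\mu}$ is decreasing in $|x|$. When $\mu\in i(0,\infty)$ one has $\re\mu=0$ and all weights coincide, so the issue disappears. When $\mu=0$, the logarithms in the $\vartheta$-integral are absorbed via $\log^2(2+\eta^{-2})$, which is exactly the weight that appears in the statement.

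Inserting all of this into Proposition \ref{prop:genoderrorest} gives the displayed bounds, with constants depending on $\sigma$ and $\mu$.
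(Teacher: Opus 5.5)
Your weights are essentially the right ones, but the final step (inserting everything into Proposition \ref{prop:genoderrorest}) does not give the stated bound.

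In that proposition the weights $P_0,Q$ enter the exponent only through $\sup_y P_0(y)Q(y)$, and this supremum multiplies $\int|\vartheta|/f^2$. So the diagonal decay $P_0Q\lesssim(1+\delta_{\mu0}\log^2(2+x^{-2}))(x^{-2}+1)^{-1/2}$ that you compute is never used. With $f\equiv 1$, the supremum over $|\xi|\le|y|\le\xi_0$ is attained at the upper endpoint; it is of order $1$ when $\xi_0=\infty$. You therefore only get $\exp\bigl(C\int_{|\xi|}^{\xi_0}|\vartheta(\eta)|\,d|\eta|\bigr)-1$. Moreover, the hypothesis of the proposition becomes $\int|\vartheta|<\infty$ on the whole interval, which the corollary does not assume.

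This loss is real. For $\vartheta\sim\xi^{-2}$ or $\xi^{-3}$ near $\xi=0$, as in Corollary \ref{cor:regioniiwhitt}, the unweighted integral is larger by a factor of order $|\xi|^{-1}$ (e.g.\ $|\xi|^{-1}$ instead of $\log(2+\xi^{-2})$). The smallness arguments that use this corollary would then fail.

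The missing idea is that the weight has to be carried by $f$, because only $|\vartheta|/f^2$ appears inside the integral. This is what the paper does:
\begin{itemize}
\item Take $f=(\xi^{-2}+1)^{\frac14}$, divided by $\log(2+\xi^{-2})$ when $\mu=0$.
\item Multiply your weights by $f$. Up to constants this gives $P_0=(\xi^{-2}+1)^{\frac12\re\mu}$, $Q=(\xi^{-2}+1)^{-\frac12\re\mu}$ (with an extra factor $\log^{-1}(2+\xi^{-2})$ if $\mu=0$), and $P_1=(\xi^{-2}+1)^{\frac12}P_0$.
\item The kernel and $P_0(x)Q(y)$ both pick up the same factor $f(x)f(y)$. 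Hence your kernel estimates carry over unchanged, including the monotonicity argument for the mixed term.
\item Now $\sup P_0Q$ is bounded, and $|\vartheta|/f^2$ is exactly the integrand in the statement.
\item Dividing $f\varepsilon_i$ by $f$ yields the prefactors $(\xi^{-2}+1)^{\mp\frac14}(1+\delta_{\mu0}\log(2+\xi^{-2}))P_0$. For the derivative bound, also use $|f'/f|\lesssim(\xi^{-2}+1)^{\frac12}$.
\end{itemize}

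Alternatively, you may keep $f\equiv1$ and your weights. Then you must prove a sharper Volterra bound, with $\int P_0Q|\vartheta|$ in the exponent in place of $\sup(P_0Q)\int|\vartheta|$. This follows by Picard iteration from $|W_i|\lesssim P_0$, which your bounds provide. It is not, however, Lemma \ref{lm:volterraint} or Proposition \ref{prop:genoderrorest} as stated.
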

\begin{proof}
Let $W_1(\xi):=W_{-i\sigma,\mu}(i \xi)$ and $W_2(\xi):=M_{-i\sigma,\mu}(i \xi)$. Let $f(\xi):=(\xi^{-2}+1)^{\frac{1}{4}}$ if $\mu\neq 0$ and $f(\xi):=(\xi^{-2}+1)^{\frac{1}{4}}(\log(2+\xi^{-2}))^{-1}$ if $\mu=0$.

We define $P_i$, with $i\in \{0,1\}$, and $Q$ as follows:
\begin{align*}
Q(\xi):=&\:(\xi^{-2}+1)^{-\frac{1}{2}\re \mu} \quad (\mu\neq 0)\\
Q(\xi):=&\:(\xi^{-2}+1)^{-\frac{1}{2}\re \mu}(\log(2+\xi^{-2}))^{-1}\quad (\mu= 0),\\
P_0(\xi):=&\: (\xi^{-2}+1)^{\frac{1}{2}\re \mu},\\
P_1(\xi):=&\: (\xi^{-2}+1)^{\frac{1}{2}}P_0(\xi).
\end{align*}

It follows from the above asymptotic properties of $W_1$ and $W_2$ from Lemma \ref{lm:propwhitt} that:
\begin{align*}
|(fW_1)(\xi)(fW_2)(\eta)|+|(fW_1)(\eta)(fW_2)(\xi)|\leq &\: C P_0(\xi) Q(\eta)\quad |\eta|\in (|\xi|,\infty),\\
\left|(fW_1)'(\xi)(fW_2)(\eta)\right|+\left|(fW_1)(\eta)(fW_2)'(\xi)\right|\leq &\: C P_1(\xi)Q(\eta)\quad |\eta|\in (|\xi|,\infty),\\
\left|(f'W_1)(\xi)(f'W_2)(\eta)\right|+\left|(fW_1)(\eta)(f'W_2)(\xi)\right|\leq &\: C P_1(\xi)Q(\eta)\quad |\eta|\in (|\xi|,\infty),\\
\sup_{|\xi|\in (|\xi_0|,\infty)}[|fW_1|(\xi)+|fW_2|(\xi)]Q(\xi)<&\: \infty \quad \textnormal{for any $\xi_0>0$},\\
\sup_{|\xi|\in (|\xi_0|,\infty)} P_0(\xi)Q(\xi)<&\:\infty\quad \textnormal{for any $\xi_0>0$},
\end{align*}
Hence, the conditions in Proposition \ref{prop:genoderrorest} with $x=|\xi|$ and $x_1=\infty$ are satisfied.

Note moreover that there exists a constant $C>0$, such that:
\begin{align*}
\sup_{|\eta|\in (|\xi|,\infty)} P_0(\eta)Q(\eta)= &\:\begin{cases}	
1\quad  &\mu\neq 0,\\
(\log(2+\xi^{-2}))^{-1} &\mu= 0,
\end{cases}\\
\sup_{|\eta|\in (|\xi|,\infty)} |fW_2(\xi)|Q(\eta)\leq &\: C,\\
\sup_{|\eta|\in (|\xi|,\infty)} |fW_1(\xi)|Q(\eta)\leq &\: C
\end{align*}
We can therefore apply Proposition \ref{prop:genoderrorest} to conclude the proof.
\end{proof}

\begin{lemma}
\label{lm:whittnozero}
The Whittaker functions $W_{\mp i\sigma,\mu}(\pm i \xi)$ have no zeroes for $\xi\in \R$.
\end{lemma}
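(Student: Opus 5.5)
The plan is to argue via the Wronskian and uniqueness for the ODE, treating the two cases $\xi>0$ and $\xi<0$ (equivalently the arguments $\pm i\xi$) simultaneously, since $W_{\mp i\sigma,\mu}(\pm i\xi)$ is the complex conjugate pair of $W_{\pm i\sigma,\bar\mu}$-type functions when $\sigma$ is real. The functions are real-analytic and nonvanishing near infinity by \eqref{eq:whittasymp1}, since $|W_{-i\sigma,\mu}(i\xi)|\to e^{\sign(\xi)\frac{\pi}{2}\sigma}\neq 0$. So the task is to rule out zeroes at finite $\xi$. Suppose $W:=W_{-i\sigma,\mu}(i\xi)$ vanished at some $\xi_0\neq 0$. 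I will use a conserved current, exactly as in Lemma \ref{lm:superradiance}. The coefficients of \eqref{eq:whitt} are real in $\xi$ when $\mu^2\in\R$, which holds both for $\mu\in[0,\infty)$ and for $\mu\in i(0,\infty)$. Hence $\overline{W}$ solves the same equation, and $\im(\overline{W}\,W')$ is constant in $\xi$.

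The key step is to evaluate this constant as $|\xi|\to\infty$ using \eqref{eq:whittasymp1}. There $W\sim e^{\sign(\xi)\frac{\pi}{2}\sigma}e^{-i\xi/2}|\xi|^{-i\sigma}$, and since the $O(|\xi|^{-1})$ error may be differentiated (these asymptotics hold with derivatives, by DLMF \S13.19), we get $W'\sim -\tfrac{i}{2}W$. Therefore
\begin{equation*}
\im(\overline{W}W')\to -\tfrac12 e^{\sign(\xi)\pi\sigma}\neq 0.
\end{equation*}
On the other hand, at a zero $\xi_0$ we have $\im(\overline{W}W')(\xi_0)=0$. This contradicts constancy, so $W$ has no zeroes on the half-line of $\xi$ containing $\xi_0$. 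The argument is insensitive to the sign of $\xi$ and to the choice of sign of $i\sigma$, so it gives the claim for $W_{\mp i\sigma,\mu}(\pm i\xi)$ on both half-lines.

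The main point to check carefully is the legitimacy of differentiating the large-$|\xi|$ asymptotics. I would justify this either by citing the standard derivative asymptotics of $W_{\kappa,\mu}$, or directly from Proposition \ref{prop:genoderrorest} applied with the WKB comparison functions $e^{\mp i\xi/2}$, which gives a derivative error bound as well. I would also check that the principal-branch convention $\arg(i\xi)=\pm\pi/2$ is consistent with \eqref{eq:whittasymp1} on both half-lines. Note that $\xi=0$ is excluded from the domain, so no statement there is needed.
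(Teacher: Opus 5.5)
Your proposal is correct and matches the paper's proof: both use that the real potential makes $\im(\overline{W}W')$ (equivalently $\re(iW\overline{W}')$) constant on each half-line, and both evaluate it as $|\xi|\to\infty$ via \eqref{eq:whittasymp1} to get $-\frac{1}{2}e^{\sign(\xi)\pi\sigma}\neq 0$, which rules out a zero. Your justification of differentiating the large-$|\xi|$ asymptotics and your exclusion of $\xi=0$ address points the paper leaves implicit.
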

\begin{proof}
Let $W(\xi)=W_{   -i\sigma,\mu}( i\xi)$. Then, using that $W$ solve a Schr\"odinger equation with a real potential, we obtain:
\begin{equation*}
\frac{d}{d\xi}\re\left(i W \frac{d\overline{W}}{d\xi}\right)=\re\left(i W\overline{W}''\right)+\re\left(i \left|\frac{d\overline{W}}{d\xi}\right|^2\right)=0
\end{equation*}
Hence, we have that for all $\xi_1,\xi_2\in \R$:
\begin{equation*}
\re\left(i W \frac{d\overline{W}}{d\xi}\right)(\xi_1)=\re\left(i W \frac{d\overline{W}}{d\xi}\right)(\xi_2)
\end{equation*}
In particular, 
\begin{multline*}
\lim_{\xi\to \infty}\re\left(i W_{ -i\sigma,\mu}(- i\xi) \frac{d\overline{W}_{ -i\sigma,\mu}(- i\xi) }{d\xi}\right)\\
=\lim_{\xi\to \infty}\re\left( i e^{\sign(\xi)\frac{\pi}{2}\sigma }e^{-  i\frac{\xi}{2}} |\xi|^{-i\sigma}(1+O(|\xi|^{-1}))\cdot  \frac{i}{2}e^{-i\sign(\xi)\frac{\pi}{2}\overline{\sigma} }e^{  i\frac{\xi}{2}} |\xi|^{ i\overline{\sigma}}(1+O(|\xi|^{-1}))\right)=-\frac{1}{2}e^{\sign(\xi) \pi \re  \sigma}.
\end{multline*}
Hence, $W_{ -i\sigma,\mu}(- i\xi) \neq 0$ for any $\xi\in \R$. The same argument also applies to $W_{ +i\sigma,\mu}( -i\xi) $.
\end{proof}

 \begin{lemma}
 	\label{prop:hypgeomfundsoln}
 	Consider the hypergeometric equation
 	\begin{equation}
 	\label{eq:hypgeom}
\zeta(1-\zeta)\frac{d^2F}{d\zeta^2}+(c-(a+b+1)\zeta)\frac{dF}{d\zeta}-a b F=0,
\end{equation}
with $a,b,c\in \C$ and $\zeta\in (0,1)$.

Define and assume that there exist constants $\gamma,\mu_0>0$, such that:
\begin{align*}
	\mu:=&\:\frac{1}{2}(c-1)\in [0,\mu_0)\cup i (0,\mu_0),\\
	-iw:=&\:a+b-c\in i(-\gamma,\gamma),\\
	v:=&\:ab-\frac{1}{2}c(1+a+b-c),\\
	i\sigma_a:=&\:a-\frac{c}{2}\in i\R,\\
	i\sigma_b:=&\:b-\frac{c}{2}\in i\R.
\end{align*}
We can express:
\begin{align*}
	\sigma_a=&-\frac{w}{2}\pm \sqrt{\frac{1}{4}-\mu^2+v+\frac{1}{4}w^2},\\
	\sigma_b=&-\frac{w}{2}\mp \sqrt{\frac{1}{4}-\mu^2+v+\frac{1}{4}w^2}.
\end{align*}

Let $\tilde{s}:=\frac{\zeta}{1-\zeta}$ and define $U(\tilde{s})=\zeta^{\frac{c}{2}}(\tilde{s})(1-\zeta(\tilde{s}))^{\frac{1}{2}(a+b-c-1)}F(\zeta(s))$.
Then $U$ satisfies the equation
\begin{equation}
\label{eq:hypgeomalt}
	\frac{d^2U}{d\tilde{s}^2}=(1+\tilde{s})^{-2}\left[-\frac{1}{4}(1+w^2)+v \tilde{s}^{-1}+\left(\mu^2-\frac{1}{4}\right)\tilde{s}^{-2}\right]U.
\end{equation}
Furthermore, the following functions are solutions to \eqref{eq:hypgeomalt}:
\begin{align}
\label{eq:hypgeomdefG}
G_{\underline{\sigma},\mu}(\tilde{s}):=&\:\zeta^{\frac{c}{2}}(\tilde{s})(1-\zeta(\tilde{s}))^{\frac{1}{2}(a+b-c-1)} {}_2F_1(a,b,c;\zeta(\tilde{s})),\\
\label{eq:hypgeomdefF1}
	F_{\underline{\sigma},\mu}(\tilde{s}):=&\:\zeta^{\frac{c}{2}}(\tilde{s})(1-\zeta(\tilde{s}))^{\frac{1}{2}(a+b-c-1)} {}_2F_1(a,b,a+b+1-c;1-\zeta(\tilde{s})).
\end{align}
where ${}_2F_1(a,b,a+b+1-c;1-\zeta)$ and ${}_2F_1(a,b,c;\zeta)$ are Gauss hypergeometric functions, which solve \eqref{eq:hypgeom} and satisfy:
\begin{align}
\label{eq:asymphypgeomsmallarg1}
	{}_2F_1(a,b,a+b+1-c;1-\zeta)=&\:\sum_{k=0}^{\infty}\frac{\Gamma(a+k)\Gamma(b+k)\Gamma(a+b+1-c+k)}{\Gamma(a)\Gamma(b)\Gamma(a+b+1-c)}(1-\zeta)^k=1+O_{\infty}(1-\zeta),\\
	\label{eq:asymphypgeomsmallarg2}
	{}_2F_1(a,b,c;\zeta)=&\:\sum_{k=0}^{\infty}\frac{\Gamma(a+k)\Gamma(b+k)\Gamma(c+k)}{\Gamma(a)\Gamma(b)\Gamma(c)}\zeta^k=1+O_{\infty}(\zeta).
\end{align}

There exists a constant $C=C(\mu_0,\gamma)>0$, such that the following Wronskian bounds hold:
\begin{equation*}
	\left|\mathcal{W}(F_{\underline{\sigma},\mu},G_{\underline{\sigma},\mu})\right|\geq C_{\gamma,\mu_0}^{-1}
\end{equation*}
and we can write:
\begin{align}
\label{eq:relhypgeomfunct1}
	F_{\underline{\sigma},\mu}(\tilde{s}(\zeta))=&\:\Gamma\left(1+i(\sigma_a+\sigma_b)\right)\Bigg[\frac{\Gamma(-2\mu)}{\Gamma(\frac{1}{2}-\mu+i\sigma_a)\Gamma(\frac{1}{2}-\mu+i\sigma_b)}G_{\underline{\sigma},\mu}(\tilde{s}(\zeta))\\ \nonumber
	+&\:\frac{\Gamma(2\mu)}{\Gamma(\frac{1}{2}+\mu+i\sigma_a)\Gamma(\frac{1}{2}+\mu+i\sigma_b)}G_{\underline{\sigma},-\mu}(\tilde{s}(\zeta))\Bigg] \quad (2\mu\notin \N_0),\\
	\label{eq:relhypgeomfunct2}
	F_{\underline{\sigma},\mu}(\tilde{s}(\zeta))=&\:\Gamma\left(1+i(\sigma_a+\sigma_b)\right)\Bigg[-\frac{(-1)^{-2\mu}}{(2\mu)!}\log \zeta\\ \nonumber
	+&\:\frac{\Gamma(2\mu)}{\Gamma(\frac{1}{2}+\mu+i\sigma_a)\Gamma(\frac{1}{2}+\mu+i\sigma_b)}G_{\underline{\sigma},-\mu}(\tilde{s}(\zeta))(1+O_{\infty}(\zeta))\Bigg] \quad (2\mu\in \N_1),\\
	\label{eq:relhypgeomfunct3}
	F_{\underline{\sigma},0}(\tilde{s}(\zeta))=&\:-\frac{\Gamma\left(1+i(\sigma_a+\sigma_b)\right)}{\Gamma(\frac{1}{2}+i\sigma_a)\Gamma(\frac{1}{2}+i\sigma_b)}G_{0,\sigma}(\tilde{s}(\zeta))\\ \nonumber
\times&\: \Bigg[\log \zeta+\left(2\gamma_{\rm Euler}+\frac{\Gamma'(a)}{\Gamma(a)}+\frac{\Gamma'(b)}{\Gamma(b)}\right)+\log \zeta O_{\infty}(\zeta)\Bigg],\\
\label{eq:relhypgeomfunct0}
G_{\underline{\sigma},\mu}(\tilde{s}(\zeta))=&\:\frac{\Gamma(1+2\mu) \Gamma(iw)}{\Gamma(\frac{1}{2}+\mu-i\sigma_a)\Gamma(\frac{1}{2}+\mu-i\sigma_b)}F_{\underline{\sigma},\mu}(\tilde{s}(\zeta))\\ \nonumber
+&\:\frac{\Gamma(1+2\mu) \Gamma(-iw)}{\Gamma(\frac{1}{2}+\mu+i\sigma_a)\Gamma(\frac{1}{2}+\mu+i\sigma_b)}F_{-\underline{\sigma},\mu}(\tilde{s}(\zeta))\quad (w\neq 0),\\
\label{eq:relhypgeomfunct0b}
G_{\underline{\sigma},\mu}(\tilde{s}(\zeta))=&\:-\frac{\Gamma(1+2\mu)}{\Gamma(\frac{1}{2}+\mu+i\sigma_a)\Gamma(\frac{1}{2}+\mu+i\sigma_b)}\zeta^{\mu+\frac{1}{2}}(1-\zeta)^{-\frac{1}{2}}\\ \nonumber
\times &\: \left(\log(1-\zeta)
 	+2\gamma_{\rm Euler}+\frac{\Gamma'(a)}{\Gamma(a)}+\frac{\Gamma'(b)}{\Gamma(b)}+O_{\infty}((1-\zeta))\right)\quad (w= 0).
\end{align}
\end{lemma}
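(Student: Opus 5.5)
The proof has four parts. We verify the change of variables to \eqref{eq:hypgeomalt}, then read off the listed functions and their series from standard Gauss hypergeometric theory. The connection formulas \eqref{eq:relhypgeomfunct1}--\eqref{eq:relhypgeomfunct0b} are the standard ones in \cite{NIST:DLMF}[\S 15.8, \S 15.10], translated into the $\sigma_a,\sigma_b,\mu,w$ parametrization. The Wronskian lower bound follows from an explicit Wronskian formula together with uniform control of Gamma factors.

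\textbf{Step 1: the ODE.} Set $\tilde{s}=\zeta/(1-\zeta)$, so $\zeta=\tilde{s}/(1+\tilde{s})$, $1-\zeta=(1+\tilde{s})^{-1}$ and $d\zeta/d\tilde{s}=(1-\zeta)^2$. First remove the first-order term of \eqref{eq:hypgeom} by the Liouville substitution $F=\zeta^{-c/2}(1-\zeta)^{(c-a-b-1)/2}\Phi$, which gives $\Phi''(\zeta)=Q(\zeta)\Phi$ for an explicit $Q$ with double poles at $\zeta=0,1$. Next pass to $\tilde{s}$. With $U=(1-\zeta)^{-1}\Phi$, the relation $d^2U/d\tilde{s}^2=(1-\zeta)^{3}\Phi''(\zeta)$ holds, because $(1-\zeta)^{-1}$ is exactly the weight that makes a Möbius change of variable preserve normal form. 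This $U$ coincides with the prescribed $U=\zeta^{c/2}(1-\zeta)^{(a+b-c-1)/2}F$.

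Expand $(1-\zeta)^3Q$ in partial fractions in $\tilde{s}$:
\begin{itemize}
\item the $\tilde{s}^{-2}$ coefficient is $\tfrac14(c-1)^2-\tfrac14=\mu^2-\tfrac14$;
\item the constant coefficient at $\tilde{s}=\infty$ comes from the exponent difference $a+b-c=-iw$ at $\zeta=1$ and equals $-\tfrac14(1+w^2)$;
\item the remaining $\tilde{s}^{-1}$ coefficient is $v=ab-\tfrac12c(1+a+b-c)$.
\end{itemize}
This is a direct but careful computation. The formulas for $\sigma_a,\sigma_b$ follow by solving $\sigma_a+\sigma_b=-w$ (from $a+b-c=-iw$) together with the product relation from $v$.

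\textbf{Step 2: solutions and series.} Since ${}_2F_1(a,b;c;\zeta)$ and ${}_2F_1(a,b;a+b+1-c;1-\zeta)$ are the standard Kummer solutions of \eqref{eq:hypgeom}, the functions $G$ and $F$ defined in \eqref{eq:hypgeomdefG}--\eqref{eq:hypgeomdefF1} solve \eqref{eq:hypgeomalt}. The expansions \eqref{eq:asymphypgeomsmallarg1}--\eqref{eq:asymphypgeomsmallarg2} are the defining series, with the convergence and derivative bounds $O_\infty$ on compact subsets.

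\textbf{Step 3: connection formulas.} Relation \eqref{eq:relhypgeomfunct1} is DLMF 15.10.21, rewritten using $\tfrac12-\mu+i\sigma_{a,b}=c-a,c-b$ evaluated appropriately and $1+i(\sigma_a+\sigma_b)=a+b+1-c$. Relation \eqref{eq:relhypgeomfunct0} is the reverse connection DLMF 15.10.21 with $\zeta\leftrightarrow1-\zeta$. The integer-$2\mu$ case \eqref{eq:relhypgeomfunct2}, the $\mu=0$ case \eqref{eq:relhypgeomfunct3} and the $w=0$ case \eqref{eq:relhypgeomfunct0b} come from the logarithmic degenerate formulas DLMF 15.8.8/15.8.10, or equivalently from taking limits in the generic formula using $\Gamma(-2\mu)/\Gamma(2\mu)\to-1$ and digamma expansions.

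\textbf{Step 4: Wronskian bound.} Compute $\mathcal{W}(F,G)$ from the $\zeta\to0$ asymptotics: $G\sim\tilde{s}^{1/2+\mu}$, and $F$ has a coefficient in front of the $\tilde{s}^{1/2-\mu}$ branch given by \eqref{eq:relhypgeomfunct1}. This yields
\[
|\mathcal{W}|=2\mu\,\Bigl|\tfrac{\Gamma(1-iw)\Gamma(2\mu)}{\Gamma(\frac12+\mu+i\sigma_a)\Gamma(\frac12+\mu+i\sigma_b)}\Bigr|,
\]
with the $\mu=0$ log-analogue. Because $\sigma_a,\sigma_b$ are real and $w,\mu$ range over bounded sets, the reflection bound $|\Gamma(\tfrac12+\mu+i\sigma)|^{-1}\gtrsim e^{\pi|\sigma|/2}(1+|\sigma|)^{-\re\mu}$ bounds the reciprocal Gammas from below. $|\Gamma(1-iw)|$ is bounded below for $|w|<\gamma$, and $2\mu\Gamma(2\mu)=\Gamma(1+2\mu)$ is bounded away from zero for $\mu$ in the stated set. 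The main obstacle is precisely this step: keeping the constant uniform as $\mu\to0$ or along $i(0,\mu_0)$. If the closed form becomes awkward there, we will instead evaluate the Wronskian at $\zeta\to1$ via \eqref{eq:relhypgeomfunct0}, where it is simply proportional to $w$ times explicit Gammas. We then treat $w$ near $0$ separately using \eqref{eq:relhypgeomfunct0b}.
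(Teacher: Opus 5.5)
Your proposal is correct and follows the paper's proof. The change of variables gives $c=1+2\mu$, $a+b-c=-iw$ and $v=ab-\tfrac12 c(1+a+b-c)$; your Liouville normal form plus the vanishing Schwarzian of the M\"obius map $\zeta=\tilde s/(1+\tilde s)$ is a tidier route than the paper's direct substitution starting from \eqref{eq:hypgeomalt}. The connection formulas come from DLMF \S15.8 and \S15.10 exactly as in the paper. The Wronskian $\Gamma(1+2\mu)\Gamma(1-iw)/(\Gamma(\tfrac12+\mu+i\sigma_a)\Gamma(\tfrac12+\mu+i\sigma_b))$ is obtained from the connection coefficient, and your explicit Gamma bounds make the uniformity precise, where the paper only notes the absence of zeros and poles.

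Minor slips to fix:
\begin{itemize}
\item The coefficient of $U$ is $(1-\zeta)^4Q$, not $(1-\zeta)^3Q$, since $U=(1-\zeta)^{-1}\Phi$.
\item $\tfrac12-\mu+i\sigma_{a}$ and $\tfrac12-\mu+i\sigma_{b}$ equal $a-c+1$ and $b-c+1$, not $c-a$ and $c-b$.
\item For $2\mu\in\N_1$ the Wronskian needs your $\zeta\to1$ fallback (or continuity in $\mu$), because \eqref{eq:relhypgeomfunct1} is not available there.
\end{itemize}
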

\begin{proof}
We will derive the equivalence of \eqref{eq:hypgeom} and \eqref{eq:hypgeomalt} by starting with \eqref{eq:hypgeomalt} and expressing $a,b,c$ in terms of $\mu,w,v$.

We can express: $\tilde{s}=\frac{\zeta}{1-\zeta}$ and $\frac{d}{d\tilde{s}}=(1+\tilde{s})^{-2}\frac{d}{d\zeta}=(1-\zeta)^2\frac{d}{d\zeta}$. We therefore obtain:
\begin{equation*}
(1+\tilde{s})^2\frac{d^2 U}{d\tilde{s}^2}=\frac{d}{d\zeta}\left((1-\zeta)^2\frac{dU}{d\zeta}\right)=(1-\zeta)^2\frac{d^2 U}{d\zeta^2}-2(1-\zeta)\frac{d U}{d\zeta}.
\end{equation*}
Therefore, \eqref{eq:hypgeomalt} is equivalent to:
\begin{equation}
\label{eq:mainsubextzeta}
0=\zeta(1-\zeta)\frac{d^2 U}{d\zeta^2}-2\zeta\frac{d U}{d\zeta}+\left[\frac{1}{4}(1+w^2)\frac{\zeta}{1-\zeta}-v -\left(\mu^2-\frac{1}{4}\right)\left(\frac{1-\zeta}{\zeta}\right)\right] U.
\end{equation}
Let $-iw=a+b-c$. Then $U(\tilde{s})=\zeta^{\frac{1}{2}+\mu}(\tilde{s})(1-\zeta(\tilde{s}))^{\frac{1}{2}(-1-iw)}F(\zeta(\tilde{s}))$. Then we can write in terms of $F(\zeta)$:
\begin{equation*}
\frac{dU}{d\zeta}=\zeta^{\frac{1}{2}+\mu}(1-\zeta)^{-\frac{1}{2}-\frac{iw}{2}}\frac{dF}{d\zeta}+\left[\left(\frac{1}{2}+\mu\right)\zeta^{-1}+\left(\frac{1}{2}+\frac{iw}{2}\right)(1-\zeta)^{-1}\right]\zeta^{\frac{1}{2}+\mu}(1-\zeta)^{-\frac{1}{2}-\frac{iw}{2}}F
\end{equation*}
so
\begin{multline*}
\zeta^{-\frac{1}{2}-\mu}(1-\zeta)^{\frac{1}{2}+\frac{iw}{2}}\frac{d^2U}{d\zeta^2}=\frac{d^2F}{d\zeta^2}+\left[\left(1+2\mu\right)\zeta^{-1}+\left(1+iw\right)(1-\zeta)^{-1}\right]\frac{dF}{d\zeta}\\
+\left[\left(\mu^2-\frac{1}{4}\right)\zeta^{-2}+\left(\frac{1}{2}+\mu\right)\left(1+iw\right)\zeta^{-1}(1-\zeta)^{-1}+\left(\frac{1}{2}+\frac{iw}{2}\right)\left(\frac{3}{2}+\frac{iw}{2}\right)(1-\zeta)^{-2}\right]F.
\end{multline*}
Therefore, \eqref{eq:mainsubextzeta} is equivalent to:
\begin{multline*}
0=\zeta(1-\zeta)\frac{d^2 F}{d\zeta^2}+\left[\left(1+2\mu\right)(1-\zeta)+\left(-1+iw\right)\zeta\right]\frac{dF}{d\zeta}-2\left[\left(\frac{1}{2}+\mu\right)+\left(\frac{1}{2}+\frac{iw}{2}\right)\zeta(1-\zeta)^{-1}\right]F\\
+\left[\left(\mu^2-\frac{1}{4}\right)\zeta^{-1}(1-\zeta)+\left(\frac{1}{2}+\mu\right)\left(1+iw\right)+\left(\frac{1}{2}+\frac{iw}{2}\right)\left(\frac{3}{2}+\frac{iw}{2}\right)\zeta (1-\zeta)^{-1}\right]F\\
+\left[\frac{1}{4}(1+w^2)\zeta (1-\zeta)^{-1}-v -\left(\mu^2-\frac{1}{4}\right)\left(\frac{1-\zeta}{\zeta}\right)\right] F\\
=\zeta(1-\zeta)\frac{d^2 F}{d\zeta^2}+\left[-\left(2+2\mu-iw\right)\zeta+\left(1+2\mu\right)\right]\frac{dF}{d\zeta}-\left[\left(\frac{1}{2}+\mu\right)\left(1-iw\right)+v\right]F.
\end{multline*}
To obtain equivalence of \eqref{eq:hypgeomalt} and \eqref{eq:hypgeomalt}, we must have that
\begin{align*}
	c=&\:1+2\mu,\\
	a+b+1=&\:2+2\mu-iw=c+1-iw,\\
	ab=&\:\left(\frac{1}{2}+\mu\right)(1-iw)+v.
\end{align*}
Then
we also obtain
\begin{equation*}
	v=ab-\frac{c}{2}(a+b+1-c).
\end{equation*}

Write
\begin{align*}
a=&\:\frac{1}{2}+\mu -\frac{iw}{2}+i\lambda,\\
b=&\:\frac{1}{2}+\mu -\frac{iw}{2}-i\lambda.
\end{align*}
Then we also obtain:
\begin{equation*}
ab=\frac{1}{4}\left(1+2\mu -iw\right)^2+\lambda^2=\mu^2-\frac{1}{4}+\frac{1}{4}\left(iw\right)^2+\left(\frac{1}{2}+\mu\right)\left(1-iw\right)+\lambda^2,
\end{equation*}
from which it follows that
\begin{equation*}
\lambda^2=\frac{1}{4}-\mu^2+v+\frac{1}{4}w^2
\end{equation*}
and $\lambda=\pm \sqrt{\frac{1}{4}-\mu^2+v+\frac{1}{4}w^2}$. 

If we write $a=\frac{c}{2}+i\sigma_a$ and $b=\frac{c}{2}+i\sigma_b$, then
\begin{align*}
	i\sigma_a=&-\frac{iw}{2}\pm i\sqrt{\frac{1}{4}-\mu^2+v+\frac{1}{4}w^2},\\
	i\sigma_b=&-\frac{iw}{2}\mp i\sqrt{\frac{1}{4}-\mu^2+v+\frac{1}{4}w^2}.
\end{align*}

	Let $2\mu\notin \N_0$, which is equivalent to $c\notin \N_1$. Then:
 \begin{multline*}
 	\mathcal{W}(F_{\underline{\sigma},\mu},G_{\underline{\sigma},\mu})=\zeta^{c}(\tilde{s})(1-\zeta(\tilde{s}))^{a+b-c-1} \frac{d\zeta}{d\tilde{s}} \mathcal{W}\left({}_2F_1(a,b,a+b+1-c;1-\zeta), {}_2F_1(a,b,c;\zeta)\right)\\
 	=\zeta^{c}(\tilde{s})(1-\zeta(\tilde{s}))^{a+b-c+1}\mathcal{W}\left({}_2F_1(a,b,a+b+1-c;1-\zeta), {}_2F_1(a,b,c;\zeta)\right).
 \end{multline*}
 Furthermore, by \cite{NIST:DLMF}[\S15.10(ii)] we obtain
 \begin{multline}
 \label{eq:hypgeomconnectionform}
\mbox{}_2F_1(a,b,a+b+1-c;1-\zeta)=\frac{\Gamma(1-c)\Gamma(a+b-c+1)}{\Gamma(a-c+1)\Gamma(b-c+1)}\mbox{}_2F_1(a,b,c;\zeta)\\
+\frac{\Gamma(c-1)\Gamma(a+b-c+1)}{\Gamma(a)\Gamma(b)}\zeta^{1-c}\mbox{}_2F_1(a-c+1,b-c+1,2-c;\zeta)
\end{multline}
and by \cite{NIST:DLMF}[\S15.10(i)]:
\begin{equation*}
	\mathcal{W}\left( {}_2F_1(a,b,c,\zeta),\zeta^{1-c}\mbox{}_2F_1(a-c+1,b-c+1,2-c;\zeta)\right)=(1-c)\zeta^{-c}(1-\zeta)^{c-a-b-1}.
\end{equation*}
Combining the above identities, we obtain:
\begin{equation*}
	\mathcal{W}(F_{\underline{\sigma},\mu},G_{\underline{\sigma},\mu})=(c-1)\frac{\Gamma(c-1)\Gamma(a+b-c+1)}{\Gamma(a)\Gamma(b)}= \frac{\Gamma(1+2\mu) \Gamma(1+i(\sigma_a+\sigma_b))}{\Gamma(\frac{1}{2}+\mu+i\sigma_a)\Gamma(\frac{1}{2}+\mu+i\sigma_b)}.
\end{equation*}
Gamma functions have no zeroes and the product $\Gamma(\frac{1}{2}+\mu+i\sigma_a)\Gamma(\frac{1}{2}+\mu+i\sigma_b)$ has no poles. We conclude that there exists a uniform constant $C_{\gamma}>0$, such that
\begin{equation*}
	\left|\mathcal{W}(F_{\underline{\sigma},\mu},G_{\underline{\sigma},\mu})\right|\geq C_{\gamma,\mu_0}^{-1}.
\end{equation*}

Note that $G_{\underline{\sigma},-\mu}(\tilde{s})=\zeta^{1-\frac{c}{2}}(\tilde{s})(1-\zeta(\tilde{s}))^{\frac{1}{2}(a+b-c-1)} {}_2F_1(a-c+1,b-c+1,2-c;\zeta(s))$, so \eqref{eq:hypgeomconnectionform} implies that for $2\mu \notin \N_0$:
\begin{equation}
\label{eq:relFG}
F_{\underline{\sigma},\mu}(\tilde{s})=\Gamma\left(1+i(\sigma_a+\sigma_b)\right)\left[\frac{\Gamma(-2\mu)}{\Gamma(\frac{1}{2}-\mu+i\sigma_a)\Gamma(\frac{1}{2}-\mu+i\sigma_b)}G_{\underline{\sigma},\mu}(\tilde{s})+\frac{\Gamma(2\mu)}{\Gamma(\frac{1}{2}+\mu+i\sigma_a)\Gamma(\frac{1}{2}+\mu+i\sigma_b)}G_{\underline{\sigma},-\mu}(\tilde{s})\right].
\end{equation}

Now let $2\mu\in \N_0$, or equivalently, $c\in \N_1$. Then we apply \cite{NIST:DLMF}[\S15.8(ii)] to expand:
\begin{multline*}
	{}_2F_1(a,b,a+b+1-c;1-\zeta)=(1-\delta_{\mu0})\zeta^{-2\mu}\frac{\Gamma(a+b+1-c)}{\Gamma(a)\Gamma(b)}((2\mu-1)!+O_{\infty}(\zeta))\\
	-(-1)^{2\mu}\frac{\Gamma(a+b+1-c)}{(2\mu)!\Gamma(a-2\mu)\Gamma(b-2\mu)}\left(\gamma_{\rm Euler}-\frac{\Gamma'(2\mu+1)}{\Gamma(2\mu+1)}+\frac{\Gamma'(a)}{\Gamma(a)}+\frac{\Gamma'(b)}{\Gamma(b)}\right)\\
	-(-1)^{2\mu}\frac{\Gamma(a+b+1-c)}{\Gamma(a-c+1)\Gamma(b-c+1)}\log \zeta \sum_{k=0}^{\infty}\frac{\Gamma(a+k)\Gamma(b+k)}{\Gamma(a)\Gamma(b)k!(2\mu+k)!}\zeta^k\\
	=-\frac{(-1)^{-2\mu}}{(2\mu)!}\log \zeta F_1(a,b,c;\zeta)+(1-\delta_{\mu0})\zeta^{-2\mu}\frac{\Gamma(c-1)\Gamma(a+b-c+1)}{\Gamma(a)\Gamma(b)}(1+O_{\infty}(\zeta))\\
	-(-1)^{2\mu}\frac{\Gamma(a+b+1-c)}{(2\mu)!\Gamma(a-2\mu)\Gamma(b-2\mu)}\left(\gamma_{\rm Euler}-\frac{\Gamma'(2\mu+1)}{\Gamma(2\mu+1)}+\frac{\Gamma'(a)}{\Gamma(a)}+\frac{\Gamma'(b)}{\Gamma(b)}\right)\\
	=\begin{cases}
		-\frac{(-1)^{-2\mu}\Gamma(a+b)}{(2\mu)!\Gamma(a)\Gamma(b)}\log \zeta F_1(a,b,c;\zeta)+\zeta^{-2\mu}\frac{\Gamma(c-1)\Gamma(a+b-c+1)}{\Gamma(a)\Gamma(b)}(1+O_{\infty}(\zeta)) \quad (2\mu\in \N_1),\\
		-\frac{\Gamma(a+b)}{\Gamma(a)\Gamma(b)}\log \zeta F_1(a,b,c;\zeta)-\frac{\Gamma(a+b)}{\Gamma(a)\Gamma(b)}\left(2\gamma_{\rm Euler}+\frac{\Gamma'(a)}{\Gamma(a)}+\frac{\Gamma'(b)}{\Gamma(b)}\right)+O_{\infty}(\zeta)\quad (2\mu=0).
	\end{cases}
\end{multline*}
We conclude that for $2\mu\in \N_1$:
\begin{multline*}
	F_{\underline{\sigma},\mu}(\tilde{s})=\Gamma\left(1+i(\sigma_a+\sigma_b)\right)\Bigg[-\frac{(-1)^{-2\mu}\Gamma(a+b)}{(2\mu)!\Gamma(a)\Gamma(b)}\log \zeta(\tilde{s})\\
	+\frac{\Gamma(-2\mu)}{\Gamma(\frac{1}{2}-\mu+i\sigma_a)\Gamma(\frac{1}{2}-\mu+i\sigma_b)}(G_{\underline{\sigma},\mu}(\tilde{s})+\log \zeta(\tilde{s}) O_{\infty}(\zeta(\tilde{s}))\\
	+\frac{\Gamma(2\mu)}{\Gamma(\frac{1}{2}+\mu+i\sigma_a)\Gamma(\frac{1}{2}+\mu+i\sigma_b)}G_{\underline{\sigma},-\mu}(\tilde{s})\Bigg].
\end{multline*}
We conclude that, as in the $2\mu\notin \N_0$ case:
\begin{equation*}
	\mathcal{W}(F_{\underline{\sigma},\mu},G_{\underline{\sigma},\mu})= \frac{\Gamma(1+2\mu) \Gamma(1+i(\sigma_a+\sigma_b))}{\Gamma(\frac{1}{2}+\mu+i\sigma_a)\Gamma(\frac{1}{2}+\mu+i\sigma_b)}.
\end{equation*}
For $2\mu=0$, we obtain:
\begin{equation*}
	F_{\underline{\sigma},0}(\tilde{s})=-G_{\underline{\sigma},0}(\tilde{s})\left[\frac{\Gamma(a+b)}{\Gamma(a)\Gamma(b)}\log \zeta+\frac{\Gamma(a+b)}{\Gamma(a)\Gamma(b)}\left(2\gamma_{\rm Euler}+\frac{\Gamma'(a)}{\Gamma(a)}+\frac{\Gamma'(b)}{\Gamma(b)}\right)+O_{\infty}(\zeta)\right].
\end{equation*}

In this case, we therefore have that:
\begin{multline*}
\frac{d\zeta}{d\tilde{s}}(\zeta)\mathcal{W}(F_{\underline{\sigma},0}(\tilde{s}(\zeta)),G_{\underline{\sigma},0}(\tilde{s}(\zeta))=\left[(1-\zeta)^2{F}_{\underline{\sigma},0}\frac{dG_{\underline{\sigma},0}}{d\zeta}-(1-\zeta)^2\frac{d{F}_{\underline{\sigma},0}}{d\zeta}G_{\underline{\sigma},0}\right]|_{\zeta=0}\\
=\frac{\Gamma(a+b)}{\Gamma(a)\Gamma(b)}(1-\zeta)^2(\log \zeta G_{\underline{\sigma},\mu})G'_{\underline{\sigma},\mu}-\frac{\Gamma(a+b)}{\Gamma(a)\Gamma(b)}(1-\zeta)^2(\log \zeta G_{\underline{\sigma},\mu})'G_{\underline{\sigma},\mu}+(1-\zeta)^2O_{\infty}(\zeta)\Bigg|_{\zeta=0}\\
=-\frac{\Gamma(a+b)}{\Gamma(a)\Gamma(b)}\lim_{\zeta\downarrow 0}\zeta^{-1}(1-\zeta)^2G_{\underline{\sigma},\mu}^2(\zeta)=-\frac{\Gamma(a+b)}{\Gamma(a)\Gamma(b)}.
\end{multline*}
Hence, $\mathcal{W}(F_{\underline{\sigma},\mu},G_{\underline{\sigma},\mu})=-\frac{\Gamma(a+b)}{\Gamma(a)\Gamma(b)}$.

By \cite{NIST:DLMF}[\S15.10(ii)], we have that for $c\neq a+b$:
 \begin{multline}
 \label{eq:hypgeomconnectionformb}
\mbox{}_2F_1(a,b,c;\zeta)=\frac{\Gamma(c)\Gamma(c-a-b)}{\Gamma(c-a)\Gamma(c-b)}\mbox{}_2F_1(a,b,a+b+1-c;1-\zeta)\\
+\frac{\Gamma(c)\Gamma(a+b-c)}{\Gamma(a)\Gamma(b)}(1-\zeta)^{c-a-b}\mbox{}_2F_1(c-a,c-b,c-a-b+1;1-\zeta).
\end{multline}

We can view $(a,b,c)$ as functions of $(\mu,\underline{\sigma})$. Under the transformation $(\mu,\underline{\sigma})\mapsto (\mu, -\underline{\sigma})$, we have that $(a,b,c)\mapsto (c-a,c-b,c)$, so we can express:
 \begin{align*}
 	F_{\underline{\sigma},\mu}(\tilde{s}):=&\:\zeta^{\frac{c}{2}}(\tilde{s})(1-\zeta(\tilde{s}))^{\frac{1}{2}(a+b-c-1)} {}_2F_1(a,b,a+b+1-c;1-\zeta(\tilde{s})),\\
 	F_{-\underline{\sigma},\mu}(\tilde{s}):=&\:\zeta^{\frac{c}{2}}(\tilde{s})(1-\zeta(\tilde{s}))^{-\frac{1}{2}(a+b-c+1)} {}_2F_1(c-a,c-b,c-a-b+1;1-\zeta(\tilde{s})).
 \end{align*}
We conclude that \eqref{eq:relhypgeomfunct0} holds.

By \cite{NIST:DLMF}[\S15.8(ii)], we moreover have that in the case $c=a+b$, or $w=0$:
 \begin{equation*}
 	\mbox{}_2F_1(a,b,a+b;\zeta)=-\frac{\Gamma(a+b)}{\Gamma(a)\Gamma(b)}\left(\log(1-\zeta)
 	+2\gamma_{\rm Euler}+\frac{\Gamma'(a)}{\Gamma(a)}+\frac{\Gamma'(b)}{\Gamma(b)}+O_{\infty}((1-\zeta))\right),
 \end{equation*}
 from which \eqref{eq:relhypgeomfunct0b} follows.
\end{proof}

 \begin{proposition}
 	\label{prop:hypgeomerror}
Assume that
\begin{equation*}
\int_{\tilde{s}}^{\infty}\eta |\vartheta(\eta)|\,d\eta<\infty	
\end{equation*}
and let $U$ be a solution to:
\begin{equation}
\label{eq:hypgeomalterr}
	\frac{d^2U}{d\tilde{s}^2}=(1+\tilde{s})^{-2}\left[-\frac{1}{4}(1+w^2)+v \tilde{s}^{-1}+\left(\mu^2-\frac{1}{4}\right)\tilde{s}^{-2}\right]U+\vartheta U.
\end{equation}
Let $\zeta(\tilde{s})=\frac{\tilde{s}}{1+\tilde{s}}$ and $f(\tilde{s}):=(1+|\log(1-\zeta(\tilde{s}))^{-1}|)\zeta^{-\frac{1}{2}}(\tilde{s})(1-\zeta(\tilde{s}))^{\frac{1}{2}}=\tilde{s}^{-\frac{1}{2}}(1+\log(\tilde{s}+1))$. Then there exists constants $A_1,A_2\in \C$ and a positive constant $C>0$, such that we can write:
\begin{equation*}
	U=A_1(F_{\underline{\sigma},\mu}+\varepsilon_1)+A_2(G_{\underline{\sigma},\mu}+\varepsilon_2),
\end{equation*}
with
\begin{align*}
\left|\varepsilon_1\right|(\tilde{s})\leq&\: C|F_{\underline{\sigma},\mu}|(\tilde{s})(1+\log(\tilde{s}+1))\left(e^{C \int_{\tilde{s}}^{\infty}\eta \log^2(2+\eta)|\vartheta(\eta)|\,d\eta}-1\right)\quad (\mu\neq 0),\\
\left|\varepsilon_1\right|(\tilde{s})\leq&\: C|F_{\underline{\sigma},\mu}|(\tilde{s})(1+\log(\tilde{s}+1))\left(e^{C\log(1+\zeta^{-1}(\tilde{s})) \int_{\tilde{s}}^{\infty}\eta \log^2(2+\eta)|\vartheta(\eta)|\,d\eta}-1\right)\quad (\mu= 0),\\
\left|\varepsilon_2\right|(\tilde{s})\leq&\: C|F_{\underline{\sigma},\mu}|(\tilde{s})(1+\log(\tilde{s}+1))\left(e^{C \int_{\tilde{s}}^{\infty}\eta \log^2(2+\eta)|\vartheta(\eta)|\,d\eta}-1\right)\quad (\mu\neq 0),\\
\left|\varepsilon_2\right|(\tilde{s})\leq&\: C\frac{|F_{\underline{\sigma},\mu}|(\tilde{s})(1+\log(\tilde{s}+1))}{\log(1+\zeta^{-1}(\tilde{s}))}\left(e^{C\log(1+\zeta^{-1}(\tilde{s})) \int_{\tilde{s}}^{\infty}\eta \log^2(2+\eta)|\vartheta(\eta)|\,d\eta}-1\right)\quad (\mu= 0),\\
\left|\frac{d(f \varepsilon_1)}{d\tilde{s}}\right|(\tilde{s})\leq&\: C\tilde{s}^{-1}f(\tilde{s}) (1-\zeta(\tilde{s}))|F_{\underline{\sigma},\mu}|(\tilde{s})\left(e^{C \int_{\tilde{s}}^{\infty}\eta \log^2(2+\eta)|\vartheta(\eta)|\,d\eta}-1\right)\quad (\mu\neq 0),\\
\left|\frac{d(f \varepsilon_1)}{d\tilde{s}}\right|(\tilde{s})\leq&\: C\tilde{s}^{-1}f(\tilde{s}) (1-\zeta(\tilde{s}))|F_{\underline{\sigma},\mu}|(\tilde{\tilde{s}})\left(e^{C\log(1+\zeta^{-1}(\tilde{s})) \int_{\tilde{s}}^{\infty}\eta \log^2(2+\eta)|\vartheta(\eta)|\,d\eta}-1\right)\quad (\mu= 0),\\
\left|\frac{d(f \varepsilon_2)}{d\tilde{s}}\right|(\tilde{s})\leq&\: C\tilde{s}^{-1}f(\tilde{s}) (1-\zeta(\tilde{s}))|F_{\underline{\sigma},\mu}|(\tilde{s})\left(e^{C \int_{\tilde{s}}^{\infty}\eta \log^2(2+\eta)|\vartheta(\eta)|\,d\eta}-1\right)\quad (\mu\neq 0),\\
\left|\frac{d(f \varepsilon_2)}{d\tilde{s}}\right|(\tilde{s})\leq&\: C\tilde{s}^{-1}f(\tilde{s}) (1-\zeta(\tilde{s}))\frac{|F_{\underline{\sigma},\mu}|(\tilde{s})}{\log(1+\zeta^{-1}(\tilde{s}))}\left(e^{C\log(1+\zeta^{-1}(\tilde{s})) \int_{\tilde{s}}^{\infty}\eta \log^2(2+\eta)|\vartheta(\eta)|\,d\eta}-1\right)\quad (\mu= 0).
\end{align*}
 \end{proposition}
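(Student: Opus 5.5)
The plan is to deduce Proposition \ref{prop:hypgeomerror} from Proposition \ref{prop:genoderrorest}, following the same pattern as the proof of Corollary \ref{cor:errorwhitt}. We take $W_1:=F_{\underline{\sigma},\mu}$ and $W_2:=G_{\underline{\sigma},\mu}$, the two solutions of \eqref{eq:hypgeomalt} built in Lemma \ref{prop:hypgeomfundsoln}. That lemma already gives the uniform Wronskian lower bound $|\mathcal{W}(F_{\underline{\sigma},\mu},G_{\underline{\sigma},\mu})|\geq C_{\gamma,\mu_0}^{-1}$. We then set up the Volterra iteration on the interval $(\tilde s,\infty)$, so that the errors vanish at $\tilde s=\infty$ (the version of Proposition \ref{prop:genoderrorest} with $(x,x_2)$). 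The weight is the stated $f(\tilde s)=\tilde s^{-1/2}(1+\log(\tilde s+1))$. It is chosen so that $\vartheta/f^2=\tilde s(1+\log(1+\tilde s))^{-2}\vartheta$, which turns the integrability hypothesis into something of the form $\int \eta\log^2(2+\eta)|\vartheta|$.

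The next step is to fix $P_0,P_1,Q$ from the asymptotics in Lemma \ref{prop:hypgeomfundsoln}. Near $\tilde s=\infty$, i.e.\ $\zeta\to1$, we use \eqref{eq:asymphypgeomsmallarg1}. This gives $|F|\sim \tilde s^{1/2}$ up to an oscillating factor $(1-\zeta)^{-iw/2}$ of modulus one. From \eqref{eq:relhypgeomfunct0} and \eqref{eq:relhypgeomfunct0b}, $|G|\lesssim \tilde s^{1/2}$ when $w\neq0$, and $|G|\lesssim\tilde s^{1/2}\log\tilde s$ when $w=0$. The logarithm is the reason for the factor $(1+\log(\tilde s+1))$ in $f$. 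Near $\tilde s=0$ we use \eqref{eq:asymphypgeomsmallarg2} and \eqref{eq:relhypgeomfunct1}--\eqref{eq:relhypgeomfunct3}. These give $G\sim \tilde s^{\frac12+\mu}$ and $F\sim \tilde s^{\frac12-\mu}$ for $\re\mu>0$. When $\mu\in i\R$ both are of size $\tilde s^{1/2}$, and when $\mu=0$ we get $F\sim \tilde s^{1/2}\log\tilde s$.

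We therefore choose $P_0(\tilde s)$ comparable to $|fF|(\tilde s)$ and $Q(\tilde s)$ comparable to $|fG|(\tilde s)$. The kernel bound needed is $|fW_1(x)fW_2(y)|+|fW_1(y)fW_2(x)|\lesssim P_0(x)Q(y)$ for $y>x$. This holds because the ratio $|G/F|$ is nonincreasing in the ordering relevant here, up to constants: it is $\tilde s^{2\re\mu}$ near $0$ and bounded at infinity. The derivative bounds with $P_1\sim \tilde s^{-1}(1-\zeta)P_0$ come from differentiating the same expansions, since each derivative costs a factor $\tilde s^{-1}$ near $0$ and $(1+\tilde s)^{-1}$ near infinity. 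The suprema $\sup P_0Q$ and $\sup |fW_i|Q$ are then bounded. In the case $\mu=0$ they are bounded only after dividing by $\log(1+\zeta^{-1})$. This is exactly where the factors $\log(1+\zeta^{-1}(\tilde s))$ in the exponent and the denominator of the $\mu=0$ estimates come from. Feeding these into Proposition \ref{prop:genoderrorest} produces the stated bounds for $\varepsilon_i$ and for $(f\varepsilon_i)'$.

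We expect the main obstacle to be uniformity. The constants must not depend on the parameters $(w,\mu,\underline\sigma)$ in the ranges $|w|<\gamma$ and $|\mu|<\mu_0$. The Gamma-function coefficients in the connection formulas have to stay bounded, and the transition between $w\neq0$ and $w=0$ is delicate because $\Gamma(\pm iw)$ blows up. To handle this, we will first try to bound $G$ directly from \eqref{eq:relhypgeomfunct0}, by combining the two singular terms into a difference quotient. This difference quotient is controlled uniformly by $\tilde s^{1/2}(1+\log(1+\tilde s))$, and this is consistent with the weight $f$ absorbing that logarithm in all cases.
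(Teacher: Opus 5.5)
Your route is the same as the paper's: apply Proposition~\ref{prop:genoderrorest} on $(\tilde s,\infty)$ with $W_1=F_{\underline{\sigma},\mu}$ and $W_2=G_{\underline{\sigma},\mu}$, read off $P_0,Q,P_1$ from the behaviour at $\zeta\to0$ and $\zeta\to1$, and get the $\log(1+\zeta^{-1})$ factors for $\mu=0$ from $\sup P_0Q$ and $\sup Q|fW_1|$. Your difference-quotient argument is a sensible way to justify the $w$-uniform bound $|G_{\underline{\sigma},\mu}|\lesssim(1-\zeta)^{-1/2}(1+|\log(1-\zeta)|)$, which the paper simply asserts.

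However, with the weight you actually use, the hypotheses of Proposition~\ref{prop:genoderrorest} fail. As $\tilde s\to\infty$ one has $|F_{\underline{\sigma},\mu}|\sim\tilde s^{1/2}$, so your $f=\tilde s^{-1/2}(1+\log(1+\tilde s))$ gives $|fF|\sim\log\tilde s\to\infty$. Fix any $x$ with $G(x)\neq0$ and let $y\to\infty$ in the kernel bound; this forces $Q(y)\gtrsim|fF(y)|$, so $\sup Q|fW_1|=\infty$. Letting $y\downarrow x$ gives $P_0Q\gtrsim|fF\,fG|$, so $\sup P_0Q=\infty$ as well. The same slip shows in your formula $|\vartheta|/f^2=\tilde s(1+\log(1+\tilde s))^{-2}|\vartheta|$, which is not the integrand $\eta\log^2(2+\eta)|\vartheta|$.

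The displayed formula for $f$ in the statement has the logarithm on the wrong side. What works is $f=\tilde s^{-1/2}(1+\log(1+\tilde s))^{-1}=(1+|\log(1-\zeta)|)^{-1}\zeta^{-1/2}(1-\zeta)^{1/2}$. With this choice:
\begin{itemize}
\item $|\vartheta|/f^2\lesssim\eta\log^2(2+\eta)|\vartheta|$;
\item $|fF|,|fG|\lesssim1$ at infinity, uniformly in $w$, which is where your remark about $f$ absorbing the logarithm really applies;
\item explicit envelopes bounded at infinity satisfy all hypotheses, e.g. $P_0=\zeta^{-\re\mu}$, $Q=\zeta^{\re\mu}$, or $P_0=\log(1+\zeta^{-1})$, $Q=1$ when $\mu=0$;
\item $P_0/f$ produces exactly the factor $|F_{\underline{\sigma},\mu}|(1+\log(\tilde s+1))$ in the bounds for $\varepsilon_i$.
\end{itemize}
Use such explicit envelopes rather than $|fF|,|fG|$ themselves, which may vanish.

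A second gap, which the paper's proof shares, concerns the derivative estimates. Your own heuristic, that a derivative costs $(1+\tilde s)^{-1}$ near infinity, gives $P_1\sim\tilde s^{-1}P_0$ there, not $\tilde s^{-1}(1-\zeta)P_0\sim\tilde s^{-2}P_0$. The heuristic is correct. Since $F_{\underline{\sigma},\mu}=(1+\tilde s)^{\frac12+\frac{iw}{2}}(1+O(\tilde s^{-1}))$, the derivatives $(fF)'$ and $(fG)'$ are smaller than $fF$ and $fG$ only by $\tilde s^{-1}$ up to logarithms. Consequently $\partial_xK(x,y)$ is not $O(x^{-2}P_0(x)Q(y))$ for large $x$. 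The paper sets $P_1=\zeta^{-1}\frac{d\zeta}{d\tilde s}P_0$, charging a $\zeta$-derivative only $\zeta^{-1}$, although at the singular point $\zeta=1$ it costs $(1-\zeta)^{-1}$.

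Hence Proposition~\ref{prop:genoderrorest} yields the last four estimates only with $\tilde s^{-1}$ in place of $\tilde s^{-1}(1-\zeta(\tilde s))$, and the stronger form is false in general. Take $\vartheta=(1+\tilde s)^{-4}$. Then $\varepsilon_1\approx c\,(1+\tilde s)^{-\frac32+\frac{iw}{2}}$ with $c=(4-2iw)^{-1}\neq0$, so $|(f\varepsilon_1)'|\sim\tilde s^{-1}|f\varepsilon_1|$. This exceeds the stated right-hand side by a factor of order $\tilde s$, up to logarithms. You should state and prove the derivative bounds without the factor $1-\zeta(\tilde s)$.
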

 \begin{proof}
We need to check that the assumptions in Proposition \ref{prop:genoderrorest} are satisfied. Let $U_1(\tilde{s}):=F_{\underline{\sigma},\mu}(\tilde{s})$ and $U_2(\tilde{s}):=F_{\underline{\sigma},\mu}(\tilde{s})$.

By definition of hypergeometric functions, we have that for any triple of complex constants $(\tilde{a},\tilde{b},\tilde{c})$, ${}_2F_1(\tilde{a},\tilde{b},\tilde{c};\zeta)=1+O_{\infty}(\zeta)$.

Hence,
\begin{align*}
|U_1|(\tilde{s})\leq&\:(1-\zeta(\tilde{s}))^{-\frac{1}{2}}(1+O_1(1-\zeta(\tilde{s}))),\\
|U_2|(\tilde{s})\leq &\:(1-\zeta(\tilde{s}))^{-\frac{1}{2}}(1+|\log(1-\zeta(\tilde{s}))|((1+O_1(1-\zeta(\tilde{s}))),\\
|U_1|(\tilde{s})\leq&\:\zeta^{\frac{1}{2}-{\re \mu}}(\tilde{s})(1+O_1(\zeta(\tilde{s}))+O_1(\zeta^{2{\re \mu}}(\tilde{s})))\quad (\mu\neq 0),\\
|U_1|(\tilde{s})\leq&\:\zeta^{\frac{1}{2}}(\tilde{s})\log\zeta(\tilde{s})(1+O_1(\zeta(\tilde{s})))\quad (\mu= 0),\\
|U_2|(\tilde{s})\leq&\:\zeta^{\frac{1}{2}+{\re \mu}}(\tilde{s})(1+O_1(\zeta(\tilde{s}))).
\end{align*}

We define the functions $P_i,Q: \R\to (0,\infty)$, with $i\in \{0,1,2\}$, as follows:
\begin{align*}
Q(\tilde{s}):=&\:\zeta(\tilde{s})^{\frac{1}{2}{\re \mu}},\\
P_0(\tilde{s}):=&\: \zeta(\tilde{s})^{-\frac{1}{2}{\re \mu}}\quad (\mu\neq 0),\\
P_0(\tilde{s}):=&\:\log(1+\zeta^{-1}(\tilde{s}))\quad (\mu=0),\\
P_1(\tilde{s}):=&\: \zeta^{-1}(\tilde{s})\frac{d\zeta}{d\tilde{s}}(\tilde{s})P_0(\tilde{s})=2 \zeta^{-1}(\tilde{s})(1-\zeta(\tilde{s}))^2P_0(\tilde{s})=\tilde{s}^{-1} (1-\zeta(\tilde{s}))P_0(\tilde{s}),\\
P_2(\tilde{s}):=&\: \zeta^{-1}(\tilde{s})(1-\zeta(\tilde{s}))^{-1}\frac{d\zeta}{d\tilde{s}}(\tilde{s})P_0(\tilde{s})= \zeta^{-1}(\tilde{s})(1-\zeta(\tilde{s}))P_0(\tilde{s})=\tilde{s}^{-1} P_0(\tilde{s}).
\end{align*}

Then
\begin{align*}
|(fU_1)(s)(fU_2)(\eta)|+|(fU_1)(\eta)(fU_2)(s)|\leq &\: C P_0(s) Q(\eta)\quad \eta\in (s,\infty),\\
\left|(fU_1)'(s)(fU_2)(\eta)\right|+\left|(fU_1)(\eta)(fU_2)'(s)\right|\leq &\: C P_1(s)Q(\eta)\quad \eta\in (s,\infty),\\
\left|(f'U_1)(s)(f'U_2)(\eta)\right|+\left|(fU_1)(\eta)(f'U_2)(s)\right|\leq &\: C P_2(s)Q(\eta)\quad \eta\in (s,\infty),\\
\sup_{ \eta \in (\eta_0,\infty)}[|fU_1|(\eta)+|fU_2|(\eta)]Q(\eta)<&\: \infty \quad \textnormal{for any $\eta_0>0$},\\
\sup_{\eta \in (\tilde{s},\infty)} P_0(\eta)Q(\eta)<&\:\infty\quad \textnormal{for any $\tilde{s}>0$},
\end{align*}

Note moreover that there exists a constant $C>0$, such that:
\begin{align*}
\sup_{\eta\in (\tilde{s},\infty)} P_0(\eta)Q(\eta)= &\:1\quad (\mu\neq 0),\\
\sup_{\eta\in (\tilde{s},\infty)} P_0(\eta)Q(\eta)= &\:\log(1+\zeta^{-1}(\tilde{s}))\quad (\mu= 0),\\
\sup_{\eta\in (\tilde{s},\infty)} |fU_2(\eta)|Q(\eta)\leq &\: C,\\
\sup_{\eta\in (\tilde{s},\infty)} |fU_1(\eta)|Q(\eta)\leq &\: C\quad (\mu\neq 0),\\
\sup_{\eta\in (\tilde{s},\infty)}  |fU_1(\eta)|Q(\eta)\leq &\: C\log(1+\zeta^{-1}(\tilde{s}))\quad (\mu=0).
\end{align*}

We can now apply Proposition \ref{prop:genoderrorest} to conclude the estimates for $\varepsilon_i$.
  \end{proof}
  
  \begin{corollary}
  \label{cor:esthypgeomG}
  Let $\delta>0$ and $0\leq \tilde{s}\leq e^{ \delta|w|^{-1}}$. Then there exists a uniform constant $C>0$ (independent of $w$), such that:
  	\begin{align*}
  		|G_{\underline{\sigma},\mu}|(\tilde{s})\leq &\:C (1-\zeta(\tilde{s}))^{-\frac{1}{2}}\log(1-\zeta(\tilde{s}))\leq C \delta |w|^{-1}(1-\zeta(\tilde{s}))^{-\frac{1}{2}},\\
  		\left|\frac{dG_{\underline{\sigma},\mu}}{d\tilde{s}}\right|(\tilde{s})\leq &\:C \tilde{s}^{-1}(1-\zeta(\tilde{s}))^{-\frac{1}{2}}\log(1-\zeta(\tilde{s}))\leq C \delta \tilde{s}^{-1}|w|^{-1}(1-\zeta(\tilde{s}))^{-\frac{1}{2}}.
  	\end{align*}
  \end{corollary}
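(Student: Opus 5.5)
The plan is to write out $G_{\underline{\sigma},\mu}$ through the connection formulas of Lemma \ref{prop:hypgeomfundsoln} in the variable $1-\zeta(\tilde{s})=(1+\tilde{s})^{-1}$, which is the natural variable near $\tilde{s}=\infty$. After that, the only thing to control uniformly in $w$ is the cancellation between the two terms that carry $\Gamma(\pm iw)$. Away from $\tilde{s}=\infty$, meaning $\zeta\le \frac12$, the bound follows at once from the definition \eqref{eq:hypgeomdefG} and the convergent series \eqref{eq:asymphypgeomsmallarg2}. Indeed $|{}_2F_1(a,b,c;\zeta)|\le C$ uniformly for $\zeta\le\frac12$, since $\mu$ ranges over a bounded set and $\re c\ge 1$. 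This gives $|G_{\underline{\sigma},\mu}|\le C\zeta^{\frac12+\re\mu}(1-\zeta)^{-\frac12}$, and the same series yields the derivative bound, because $\frac{d}{d\tilde{s}}=(1-\zeta)^2\frac{d}{d\zeta}$ brings in a factor comparable to $\tilde{s}^{-1}$.

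For $\zeta\ge\frac12$ I would start from \eqref{eq:relhypgeomfunct0}. Put $x=-\log(1-\zeta)=\log(1+\tilde{s})$. Then $F_{\pm\underline{\sigma},\mu}=(1-\zeta)^{-\frac12}e^{\pm iwx/2}\bigl(1+O_\infty(1-\zeta)\bigr)$, where the $O$-constants stay bounded as $w\to0$. This holds because the series \eqref{eq:asymphypgeomsmallarg1} have parameters $1\mp iw$ in the denominator Gamma factor, and that factor stays away from its poles. The coefficients are $\Gamma(\pm iw)\,\Gamma(1+2\mu)/(\Gamma(\frac12+\mu\mp i\sigma_a)\Gamma(\frac12+\mu\mp i\sigma_b))$. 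Each is $\pm(iw)^{-1}\Lambda+O(1)$, with a common $\Lambda=\Gamma(1+2\mu)/(\Gamma(\frac12+\mu+i\sigma_a)\Gamma(\frac12+\mu+i\sigma_b))+O(w)$. This uses $\Gamma(iw)=(iw)^{-1}+O(1)$, and also that $\sigma_{a},\sigma_b$ at $w$ differ from their values at $-w$ by $O(w)$; the last point follows from the formula for $\sigma_{a,b}$ in Lemma \ref{prop:hypgeomfundsoln}, since the square root there is even in $w$. Combining the leading terms then gives
\begin{equation*}
G_{\underline{\sigma},\mu}=(1-\zeta)^{-\frac12}\Bigl[\Lambda\,\frac{e^{iwx/2}-e^{-iwx/2}}{iw}+O(1)\Bigr]\bigl(1+O_\infty(1-\zeta)\bigr),
\end{equation*}
and $|e^{iwx/2}-e^{-iwx/2}|\le |w|x$. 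The first inequality follows, with $x=|\log(1-\zeta)|$. On the range $\tilde{s}\le e^{\delta|w|^{-1}}$ we have $x\le \delta|w|^{-1}+\log 2$, which is the second inequality after absorbing the $O(1)$ term. The case $w=0$ is \eqref{eq:relhypgeomfunct0b} directly, and it agrees with the limit of the computation above.

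For the derivative I would differentiate the same representation. Using $\frac{dx}{d\tilde{s}}=(1+\tilde{s})^{-1}\le\tilde{s}^{-1}$ and $\bigl|\frac{d}{dx}\frac{\sin(wx/2)}{w}\bigr|\le\frac12$, each term picks up at most a factor $\tilde{s}^{-1}$, possibly multiplied by $x$. The $O_\infty$ remainders behave the same way under differentiation, since they are $O_\infty$ in $1-\zeta$. The main obstacle is showing that the $O(1)$ remainders in the expansions of the $\Gamma$-coefficients, and in the $O_\infty(1-\zeta)$ corrections to $F_{\pm\underline{\sigma},\mu}$, are uniform as $w\to0$. The cancellation is exact only at leading order, so the difference of the two remainders has to be bounded by $C|w|\cdot|w|^{-1}$. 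I would handle this by regarding both expansions as analytic in $w$ near $0$, since they arise from the hypergeometric series, and using the fact that the whole combination is the entire-in-$w$ function $(1-\zeta)^{-\frac12}\zeta^{\frac12+\mu}\,{}_2F_1(a,b,c;\zeta)$.
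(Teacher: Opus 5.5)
Your proof is correct, but it takes a different route from the paper. The paper argues perturbatively: it regards \eqref{eq:hypgeomalt} as the $w=0$ equation plus $\vartheta(\tilde s)=-\frac{w^2}{4}(1+\tilde s)^{-2}$ and invokes the Volterra-type estimate of Proposition~\ref{prop:hypgeomerror}. It compares $G_{\underline{\sigma},\mu}$ with the $w=0$ solutions, which by \eqref{eq:relhypgeomfunct0b} grow like $(1-\zeta)^{-\frac12}$ and $(1-\zeta)^{-\frac12}|\log(1-\zeta)|$.

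That route uses neither the connection coefficients nor any cancellation of the poles of $\Gamma(\pm iw)$, and it explains the threshold. Integrated against the logarithmically growing $w=0$ kernel, the perturbation has relative size of order $w^2\log^2(2+\tilde s)$, which is $O(\delta^2)$ precisely for $\tilde s\le e^{\delta|w|^{-1}}$. It is, however, intrinsically confined to that range: since $\int^\infty\eta|\vartheta(\eta)|\,d\eta=\infty$ for this $\vartheta$, the Volterra argument has to be run on a truncated interval. Your exact route through \eqref{eq:relhypgeomfunct0} gives more, namely $|G_{\underline{\sigma},\mu}|\lesssim(1-\zeta)^{-\frac12}\min\{1+|\log(1-\zeta)|,|w|^{-1}\}$ for all $\tilde s$. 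There the range hypothesis only serves to produce the factor $\delta$.

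The point you flag as the main obstacle is best settled by a symmetry. Analyticity of $G_{\underline{\sigma},\mu}$ in $w$ at fixed $\zeta$ does not help, since it gives no uniformity as $x=\log(1+\tilde s)\to\infty$. The hypergeometric factors and the reciprocal Gamma denominators depend on $(a,b)$ only through $a+b=c-iw$ and $ab=(\frac12+\mu)(1-iw)+v$. On the other side, $(c-a)+(c-b)=c+iw$ and $(c-a)(c-b)=(\frac12+\mu)(1+iw)+v$. So the coefficient and the correction series attached to $F_{-\underline{\sigma},\mu}$ are exactly those attached to $F_{\underline{\sigma},\mu}$ with $w\mapsto-w$. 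Write $F_{\pm\underline{\sigma},\mu}=\zeta^{\frac{c}{2}}(1-\zeta)^{-\frac12}e^{\pm iwx/2}h_\pm(1-\zeta)$. Then $h_+-h_-=O(|w|(1-\zeta))$ and $\frac{d}{d\zeta}(h_+-h_-)=O(|w|)$. Hence the cross term $(iw)^{-1}\bigl(e^{iwx/2}(h_+-1)-e^{-iwx/2}(h_--1)\bigr)$ is $O((1+x)e^{-x})$, and the rest of your computation, including the derivative bound, goes through.

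One caveat concerns $\zeta\le\frac12$. There your bound $C\zeta^{\frac12+\re\mu}(1-\zeta)^{-\frac12}$ does not literally give the first inequality, nor the derivative bound, as $\tilde s\to0$ when $\re\mu<\frac12$. This is a defect of the statement, not of your argument: it should be read with $1+|\log(1-\zeta)|$, or for $\tilde s\gtrsim1$, which is where it is used.
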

  \begin{proof}
  We apply Proposition \ref{prop:hypgeomerror} with $w=0$ to \eqref{eq:hypgeomalt} by taking $\vartheta(\tilde{s})=-\frac{w^2}{4}(1+\tilde{s})^{-2}$.
 	
  \end{proof}

 \begin{lemma}
 \label{lm:coshratios}
 Let $c>0$ and consider $f_c:\R\to (0,\infty)$, with $f_{c}(x):=\frac{\cosh(x+c)}{\cosh(x-c)}=\frac{\cosh(-x-c)}{\cosh(c-x)}$. Then
 \begin{align*}
 	f_c'>&\:0,\\
 	f_c(0)=&\:1,\\
 	\lim_{x\to \infty}f_c(x)=&\:e^{2c},\\
 	\lim_{x\to -\infty}f_c(x)=&\:e^{-2c}.
 \end{align*}
 \end{lemma}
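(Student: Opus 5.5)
The plan is to compute $f_c'$ directly and then evaluate the three limits by elementary expansions of $\cosh$. Since $\cosh(x\pm c)>0$ for all real $x$, $f_c$ is smooth and positive on $\R$. By the quotient rule,
\begin{equation*}
f_c'(x)=\frac{\sinh(x+c)\cosh(x-c)-\cosh(x+c)\sinh(x-c)}{\cosh^2(x-c)}=\frac{\sinh\big((x+c)-(x-c)\big)}{\cosh^2(x-c)}=\frac{\sinh(2c)}{\cosh^2(x-c)}.
\end{equation*}
The numerator simplifies by the hyperbolic subtraction formula $\sinh(A-B)=\sinh A\cosh B-\cosh A\sinh B$. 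Because $c>0$ we have $\sinh(2c)>0$, so $f_c'>0$ everywhere.

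The value at zero follows from evenness of $\cosh$: $f_c(0)=\cosh(c)/\cosh(-c)=1$. The same evenness gives the alternative expression $\cosh(-x-c)/\cosh(c-x)$ in the statement.

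For the limit as $x\to\infty$, we write
\begin{equation*}
f_c(x)=\frac{e^{x+c}+e^{-x-c}}{e^{x-c}+e^{-x+c}}=\frac{e^{c}+e^{-2x-c}}{e^{-c}+e^{-2x+c}},
\end{equation*}
which tends to $e^{2c}$. The limit as $x\to-\infty$ follows from the identity $f_c(-x)=\cosh(-x+c)/\cosh(-x-c)=1/f_c(x)$, which gives $e^{-2c}$.

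None of these steps poses a real obstacle. The only place requiring care is the sign of $\sinh(2c)$, and that is exactly where the hypothesis $c>0$ enters. The paper uses this lemma with $c$ replaced by something like $\pi q$ together with an imaginary $\mu$, via its real-variable analogue; we only need to prove the stated real version.
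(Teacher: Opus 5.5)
Your proof is correct and is essentially the paper's argument: the paper differentiates $\log f_c$ to get $\tanh(x+c)-\tanh(x-c)>0$ because $\tanh$ is strictly increasing, while you use the quotient rule and the subtraction formula to get the explicit expression $f_c'(x)=\sinh(2c)/\cosh^2(x-c)>0$. The value at $0$ and the two limits are elementary in both treatments (the paper simply asserts them), and your exponential expansion together with the identity $f_c(-x)=1/f_c(x)$ handles them cleanly.
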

 \begin{proof}
 	The limiting properties of $f_c$ follow immediately. To conclude that $f_c'>0$, we consider $\log f_c(x)$ and compute:
 	\begin{equation*}
 		(\log f_c)'(x)=(\log(\cosh((\cdot)+a))-\log(\cosh((\cdot)-a))'(x)=\tanh(x+a)-\tanh(x-a).
 	\end{equation*}
 	We conclude that the right-hand side above is strictly positive by using that $(\tanh)'=\frac{1}{\cosh^2}>0$.
 \end{proof}

\bibliographystyle{alpha}


	\end{document}